\numberwithin{equation}{section}
\DeclarePairedDelimiter\abs\lvert\rvert
\DeclarePairedDelimiterX\norm[1]\lVert\rVert{
  \ifblank{#1}{\,\cdot\,}{#1}
}
\DeclarePairedDelimiterX\BS[1]{[}{]}{
  
  #1
}
\DeclarePairedDelimiterX\PS[1]{(}{)}{
  
  #1
}
\newcommand\Aver{\mathbb{E}\BS}
\DeclarePairedDelimiterX\Set[1]\{\}{%
  #1}
\newtheorem{thm}{Theorem}[section]
\newtheorem{lemma}[thm]{Lemma}
\theoremstyle{definition}
\newtheorem{definition}[thm]{Definition}
\newtheorem{setting}[thm]{Setting}
\newtheorem{notation}[thm]{Notation}
\theoremstyle{remark}
\newtheorem{remark}[thm]{Remark}
\newtheorem{observation}[thm]{Observation}
\newcommand{\R}{\mathbb{R}}
\newcommand{\C}{\mathbb{C}}
\newcommand{\N}{\mathbb{N}}
\newcommand{\Pro}{\mathbb{P}}
\DeclareMathOperator{\OpW}{Op_\hbar^\mathrm{w}}
\DeclareMathOperator{\im}{Im}
\DeclareMathOperator{\Vol}{Vol}
\DeclareMathOperator{\sgn}{sgn}
\begin{document}

\title{Schr\"odinger evolution in a low-density random potential: annealed convergence to the linear Boltzmann equation for general semiclassical Wigner measures\footnote{Research supported by EPSRC grant EP/S024948/1}}

\author{S{\o}ren Mikkelsen}

   \affil{\small{University of Bath, Department of Mathematical Sciences\\ Bath
BA2 7AY, United Kingdom} }

\date{\today}

\maketitle

\begin{abstract}
We consider solutions of the time-dependent Schr\"odinger equation for a potential localised at the points of a Poisson process. We prove convergence of the phase-space distribution in the annealed Boltzmann-Grad limit to a semiclassical Wigner (or defect) measure and show that it is a solution of the linear Boltzmann equation. Our results hold for a large class of square-integrable initial data associated to Wigner measures, including Langragian states, WKB states and coherent states. This extends important previous work by Eng and Erd{\H o}s.
\end{abstract}

\tableofcontents

\section{Introduction}

For over a century, the {\em linear Boltzmann equation} has served as a central model for macroscopic particle transport in low-density matter \cite{Lorentz}. There are now several rigorous derivations of this integro-differential equation, starting from a single-particle Schr\"odinger equation with a suitably scaled random potential. These include the ground-breaking studies by Spohn \cite{MR471824} and by Erd{\H o}s and Yau \cite{MR1744001}, which concern the weak-coupling limit with potential rescaled by a multiplicative constant but otherwise fixed. We will here investigate the  scaling considered in Eng and Erd{\H o}s' important paper \cite{MR2156632}, which corresponds to the {\em low-density} or {\em Boltzmann-Grad limit} as in Lorentz' original work \cite{Lorentz} in the context of classical dynamics, cf. also \cite{MR496299,PhysRev.185.308,MR725107}.

Eng and Erd{\H o}s investigate the Schr\"odinger equation in dimension $d\geq 3$ restricted to a box with sides $L$, for a potential that is the superposition of $N$ identical potentials localised at uniformly distributed points in the box. The relevant limits are $N,L\to\infty$ so that the particle density $\rho=N/L^d$ is fixed, followed by the small wave length limit $\hbar\to 0$ where the localisation length of each potential is taken to be of the same order. The latter assumption leads to a Boltzmann equation whose collision kernel is given by the full quantum mechanical $T$-matrix, rather than the first-order Born approximation as in the case of the weak-coupling limit. The convergence to the linear Boltzmann equation is established for Husimi functions of the solution of the Schr\"odinger equation, whose initial condition is a specific Lagrangian state with sufficiently  smooth weight and linear phase.

The main novelty of the results in the present work is two fold. Firstly we will allow a significantly more general class of $L^2$ initial states than \cite{MR2156632}, and formulate the convergence in terms of semiclassical Wigner measures instead of Husimi functions. This larger class will even include singular initial states. In the case of the weak-coupling limit, where the potential is given by a random field, the extension to such general initial conditions was given by Breteaux in \cite{MR3330163}.  

Secondly we remove the box and work with the Schr\"odinger equation in $\R^d$, assuming that the potential is localised at the points of a Poisson process of intensity $\rho$. The advantage of this setup over the box is that it allows not only a discussion of the annealed limit (convergence on average over the scatterer configuration) but also the quenched limit (almost sure convergence) which we will discuss in a forthcoming paper. 
In the classical dynamics setting, this is precisely the distinction between the works of Gallavotti  \cite{PhysRev.185.308}, Spohn \cite{MR496299} (annealed) on one hand and Boldrighini, Bunimovich and Sinai  \cite{MR725107}  (quenched) on the other. As in the work of Eng and Erd{\H o}s we will restrict our attention to the case $d=3$. However most of our estimates hold for all $d\geq3$.
In Section~\ref{sec_comparision} we will give an outline of the paper and a comparison to previous works on this type of problem.

It should be noted that, although the linear Boltzmann equation breaks down in the case of periodic potentials, the same low-density scaling limit as in the random setting leads to new limit processes in both the classical and quantum setting. For details see \cite{MR2726643, MR2811599,marklof2021kinetic} and \cite{MR4026550,MR4292484}, respectively. However if damping is introduced into the microscopic equations convergence to the linear Boltzmann equation holds even for the periodic case. For details see \cite{Griffin_damped}.

\subsection{Schr\"odinger equation in a low density potential}
Our setup is as follows. We consider the Hamilton operator 
\begin{equation}\label{Hami:def}
	H_{\hbar,\lambda} = -\frac{\hbar^2}{2}\Delta + \lambda W_r(x),
\end{equation}
where $\hbar>0$ is a scaling parameter (Planck's  constant measured in suitable units), $\Delta $ is the standard Laplacian, $\lambda\geq 0$ is the coupling constant, and the potential $W_r(x)$ is the superposition of single site potentials $V_{r,x_j}$ located at the points $x_j$  of a locally finite set $\mathcal{X}$,
\begin{equation*}
	W_r(x)=  \sum_{x_j\in\mathcal{X}} V_{r,x_j}(x) .
\end{equation*}
More precisely,
\begin{equation}\label{def_potential_int}
V_{r,z}(x)=  V\bigg(\frac{x-r^{1-1/d} z}{r}\bigg)
\end{equation}
is the $r$-scaled single-site potential $V$ localised at the point $r^{1-1/d} z$, with $V$ a non-negative function in the Schwartz class $\mathcal{S}(\R^d)$. 

Throughout this paper the point set $\mathcal{X}$ is assumed to be a realisation of a Poisson point process with constant intensity $\rho$. We will implicitly assume we have a sufficiently rich probability space $(\Omega,\Sigma,\Pro)$ on which the point process is defined. The scaling of the scatterer locations $x_j$ by $r^{1-1/d} $ relative to the potential scaling by $r$ ensures that the mean free path length remains constant as $r\to 0$. This scaling corresponds precisely to the Boltzmann-Grad limit. 
In this setting, for every given $\hbar\in(0,\hbar_0]$, the Hamiltonian $H_{\hbar,\lambda}$ is defined as the unique self-adjoint Friedrichs extension of the associated form $\Pro$-almost surely (see Section \ref{sec:prelim} for details). 

The $T$-operator $T(E)$ for the quantum mechanical scattering in the single-site potential  $V$ is defined as the limit
\begin{equation}\label{def_t_operator}
	 T(E)=\lim_{\gamma\to 0_+} T^\gamma(E), \qquad    T^\gamma(E)=\lambda V + \lambda^2 V \frac{1}{E-(-\frac{1}{2}\Delta+\lambda V)+i\gamma} V ,
\end{equation}
with $V$ denoting the multiplication operator by the function $V$.
An important quantity is the kernel of $T(E)$ in momentum representation with $E=\frac{1}{2}p^2$, which we denote by $\hat{T}(p,q)$. This ``$T$-matrix'' is related to the scattering matrix $S(p,q)$ by the relation
\begin{equation*}
	S(p,q) = \delta(p-q) - 2\pi i \,\delta(\tfrac{1}{2} p^2- \tfrac{1}{2} q^2)\, \hat{T}(p,q) .
\end{equation*} 
The unitarity of $S$ implies the ``optical theorem''
 \begin{equation}\label{EQ:Optical_theorem}
	\im \hat{T}(q,q) = \pi \int_{\R^d} \delta( \tfrac{1}{2} p^2-\tfrac{1}{2}q^2)| \hat{T}(p,q)|^2 \,dp.
\end{equation} 

As in the work of Eng and Erd{\H o}s \cite{MR2156632}, we will assume that $r$ is of the same order as $\hbar$ as this will allow us to observe the full quantum mechanical scattering process in the limit. To simplify notation we set in the following (without loss of generality) $\hbar=r$.

We consider solutions $\varphi(t,x)$ of the Schr\"odinger equation
 \begin{equation}\label{intial_val_pro}
\begin{cases}
	i\hbar \partial_t \varphi(t,x) = H_{\hbar,\lambda} \varphi(t,x)
	\\
	\varphi(0,x) = \varphi_0(x)
\end{cases}
\end{equation}
with initial data $\varphi_0\in L^2(\R^d)$.
The solution of this initial value problem can be written as 
\begin{equation}\label{Sch:sol}
	\varphi(t,x) =  U_{\hbar,\lambda}(t) \varphi_0(x), \qquad U_{\hbar,\lambda}(t) = e^{-it \hbar^{-1}H_{\hbar,\lambda}} .
\end{equation}
A key point of our work is to choose $\hbar$-dependent initial states $\varphi_0$, whose phase space distribution converges to a limit measure as $\hbar\to 0$. We refer to the measures that can arise in this way as {\em semiclassical Wigner measures}.

\subsection{Semiclassical Wigner measures}

Denote by $\OpW(a)$ the Weyl-quantised $\hbar$-pseudo-differential operator corresponding to the classical observable, or symbol, $a\in\mathcal{S}(\R^{2d})$. That is,
\begin{equation}\label{WeylPDO}
	\OpW(a)\psi(x) = \frac{1}{(2\pi\hbar)^d} \int_{\R^{2d}} e^{i\hbar^{-1}\langle x-y,p \rangle} a(\tfrac{x+y}{2},p) \psi(y) \, dydp,
\end{equation}
for $\psi\in\mathcal{S}(\R^d)$; cf.~\cite{MR897108,MR2952218}. When the symbol $a$ is assumed to be in $\mathcal{S}(\R^{2d})$ we have from \cite[Theorem~4.21]{MR2952218} that the operator $\OpW(a)$ extends to a bounded operator on $L^2(\R^d)$. In particular, the operator norm is uniformly bounded in $\hbar$, i.e.
\begin{equation*}
	\sup_{\hbar\in(0,\hbar_0]}\norm{\OpW(a)}_{\mathrm{op}} < \infty.
\end{equation*}

A {\em semiclassical Wigner measure} $\mu$ (also known as {\em defect measure}) associated to a family of states $\{\psi_\hbar\}_{\hbar}$ is a Radon measure on $\R^{2d}$ such that, for all $a\in\mathcal{S}(\R^{2d})$,
  \begin{equation}\label{uniqueW}
  	\langle \OpW(a) \psi_{\hbar}, \psi_{\hbar} \rangle \to \int_{\R^{2d}} a \, d\mu 
  \end{equation}
 along some subsequence of $\hbar\to 0$. Note that the standard assumption in the literature is usually ``for all $a\in C_0^\infty(\R)$'', but we will here need the stronger ``for all $a\in\mathcal{S}(\R^{2d})$''. The Wigner measure may in general depend on the subsequence and not be unique. We will ensure uniqueness in the following by considering suitable subsequences $\{\psi_\hbar\}_{\hbar\in I}$. We say a semiclassical family is {\em uniform} in $\mathcal{H}_\hbar^{n}(\R^d)$ if
\begin{equation*}
\sup_{\hbar\in I} \| \psi_\hbar \|_{\mathcal{H}_\hbar^{n}(\R^d)}<\infty,
\end{equation*}
 where $n\in\N$ and the space $\mathcal{H}_\hbar^{n}(\R^d)$ is the semiclassical Sobolev space and is given by  
 \begin{equation*}
 	\mathcal{H}_\hbar^{n}(\R^d) = \Big\{f\in L^2(\R^d)\, \Big| \, \sup_{|\alpha|\leq n} \int_{\R^d} |(-i\hbar\partial_x)^{\alpha} f(x)|^2 \,dx <\infty \Big\}.
 \end{equation*}
The following table shows examples of uniform semiclassical families $\{\psi_\hbar\}_{\hbar\in(0,1]}$ and their (unique) Wigner measures $\mu$; cf. \cite[pp.~102-104]{MR2952218}: 

\begin{center}
\begin{tabular}{ l | c | c   }
\hline
Type & $\psi_\hbar(x)$ & $d\mu(x,p)$ \\
\hline
Lagrangian & $w(x)\exp(i\hbar^{-1}\langle x,p_0\rangle)$ & $|w(x)|^2  \delta(p-p_0) \,dx\, dp$  \\
Lagrangian & $\hbar^{-d/2} w(\hbar^{-1} (x-x_0))$ & $\delta(x-x_0)  |\hat w(p)|^2 \,dx\, dp$  \\
WKB & $w(x)\exp(i\hbar^{-1}S(x)) $ & $|w(x)|^2 \delta(p-\partial_x  S(x)) \,dx\, dp$ \\
Coherent & $(\pi\hbar)^{-d/4} \exp(i\hbar^{-1}\langle x-x_0,p_0 +i2(x-x_0)\rangle)$ & $\delta(x-x_0)  \delta(p-p_0) \,dx\, dp$ \\
\hline
\end{tabular}
\end{center}
For the Lagrangian states we assume $w\in \mathcal{H}_1^{5d+5}(\R^d)$; for the WKB state we assume in addition that $w$ and $S$ are smooth and $w$ has compact support.
\subsection{The linear Boltzmann equation}

The  linear Boltzmann equation with initial data $f_0$ is given by
 \begin{equation}\label{intial_val_pro_boltzmanneqq}
\begin{cases}\displaystyle
	 \partial_t f(t,x,p) +\langle p, \nabla_x f(t,x,p) \rangle = \int_{\R^d} [\Sigma(p,q)f(t,x,q) - \Sigma(q,p) f(t,x,p)] \, dq
	\\
	f(0,x,p) = f_0(x,p),
\end{cases}
\end{equation}
where $\Sigma(p,q)$ is the collision kernel of the underlying microscopic single-site scattering process. The physical interpretation of $f(t,x,p)$ is that of a particle density in position $x\in\R^d$ and momentum $p\in\R^d$ at time $t$ of a gas of non-interacting particles in a low-density medium. 

The solution of the linear Boltzmann equation \eqref{intial_val_pro_boltzmanneqq} can be expressed as the collision series
 \begin{equation}\label{collision_series_expansion}
 	 f(t,x,p) = \sum_{n=0}^\infty  f^{(n)}(t,x,p),
 \end{equation}
where
\begin{equation}
	\begin{aligned}
	f^{(n)}(t,x,q_0)
	=   \int_{[0,t]_{\leq}^n}  \int_{\R^{nd}}   & \prod_{i=1}^{n} \Sigma(q_i,q_{i-1})  
	  \prod_{i=1}^{n+1}  e^{-(s_{i-1}-s_i) \Sigma_{\text{\rm tot}}(q)}  
	  \\  
	  &\times f_0(x- tq_0 -\sum_{i=1}^{n}s_{i}  (q_{i}-q_{i-1}) ,q_{n}) \,  d\boldsymbol{q}_{1,n}d\boldsymbol{s}_{n,1},
	  \end{aligned}
\end{equation}
with the convention $s_0=t$, $s_{n+1}=0$, the notation
\begin{equation}
	[0,t]_{\leq}^n = \{ s_n \leq \cdots \leq s_1 \,|\, s_i \in [0,t] \},
\end{equation}
and the total scattering cross section 
\begin{equation}
	\Sigma_{\text{\rm tot}}(q) = \int_{\R^d} \Sigma(p,q) \, dp .
\end{equation}
The expansion \eqref{collision_series_expansion} defines a semigroup $\{\mathcal{L}_t\}_{t\geq 0}$ of linear operators $\mathcal{L}_t :L^1(\R^{2d})\to L^1(\R^{2d})$ so that $f(t,x,p)=\mathcal{L}_t f_0(x,p)$. An important property of $\mathcal{L}_t$ is that it preserves mass, i.e., $\| \mathcal{L}_t f\|_1=\| f\|_1$ for every measurable $f\geq 0$. We define the adjoint $\mathcal{L}_t^*$ by
\begin{equation}
\int a(x,p) f(t,x,p) \,dx\, dp = \int [\mathcal{L}_t^* a](x,p) f_0(x,p)\, dx\, dp .
\end{equation}
Given a Borel measure $\mu_0$ on $\R^{2d}$, we define the measure $\mu_t$ by
\begin{equation}
\int a(x,p) \, d\mu_t(x,p) = \int [\mathcal{L}_t^* a](x,p) \, d\mu_0(x,p) .
\end{equation}
We will say {\em the family of measures $\{\mu_t\}_{t\geq 0}$ is a solution of the linear Boltzmann equation with initial data $\mu_0$.}

\subsection{Main theorems}

Define the norm
\begin{equation*}
	 \norm{ f}_{p,q,n} = \max\bigg( \sup_{\beta,|\epsilon|\leq n} \norm{\langle x \rangle^{\beta}\partial^\epsilon f}_{L^p(\R^d)},  \sup_{\beta,|\epsilon|\leq n} \norm{\langle x \rangle^{\beta}\partial^\epsilon f}_{L^q(\R^d)} \bigg).
\end{equation*}

{\em The standing assumption throughout this paper is that $\mathcal{X}$ is a realisation of a Poisson point process with intensity $\rho$, the single-site potential $V \in\mathcal{S}(\R^d)$ is non-negative, and for a constant $C_d$ only depending on the dimension the coupling constant satisfies 
\begin{equation} \label{assump_1}
\lambda  < \frac{1}{2   C_d \norm{ \hat{V}}_{1,\infty,5d+5}}.
\end{equation}
}

The probability and expectation of the Poisson process will be denoted by $\mathbb{P}$ and $\mathbb{E}$, respectively.
Under our assumptions, the Hamiltonian \eqref{Hami:def} is well defined as the unique self-adjoint Friedrichs extension of the associated form $\Pro$-almost surely (see Section \ref{sec:integrability} for details). 

\begin{thm}\label{main_thm1}
Let $d=3$ and suppose $\{\varphi_\hbar\}_{\hbar\in I}$ is a uniform semiclassical family in $\mathcal{H}_\hbar^{5d+5}(\R^3)$ with Wigner measure $\mu_0$,
and let $\varphi_{\hbar}(t)=U_{\hbar,\lambda}(t) \varphi_{\hbar}$.
 Then, for any $t>0$, $a\in\mathcal{S}(\R^{2d})$, we have that
  \begin{equation}
\mathbb{E}	\langle \OpW(a)  \varphi_{\hbar}(t), \varphi_{\hbar}(t) \rangle \to \int_{\R^{2d}} a(x,p) \, d\mu_t(x,p) , 
  \end{equation}
for $\hbar\to 0$ in $I$, where $\mu_t$ is a solution of the linear Boltzmann equation with initial data $\mu_0$ and collision kernel 
\begin{equation}\label{Sigma:def}
	\Sigma(p,q) = (2\pi)^{d+1} \rho\,  \delta( \tfrac{1}{2} p^2-\tfrac{1}{2}q^2)| \hat{T}(p,q)|^2 .
\end{equation}
\end{thm}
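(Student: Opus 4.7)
The plan is to expand both propagators in the bilinear form $\mathbb{E}\langle \OpW(a)\varphi_\hbar(t),\varphi_\hbar(t)\rangle$ via the Duhamel series in powers of the random potential, take the Poisson expectation term by term, identify the diagrams that survive the Boltzmann--Grad limit, and match them to the collision series \eqref{collision_series_expansion} for the linear Boltzmann equation with kernel $\Sigma$ as in \eqref{Sigma:def}. Each resulting term becomes a highly oscillatory integral over collision times and intermediate momenta, amenable to stationary-phase analysis.

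As a first step, I would truncate the Duhamel series at order $N$. Using the smallness \eqref{assump_1} together with unitarity and a Born-type bound, the tail should be controlled by $(C\lambda)^N$ uniformly in $\hbar$, so it can be made small independently of $\hbar$ by taking $N$ large. Each retained term involves a product $\prod V_{r,x_j}$ indexed by a finite tuple of scatterers in $\mathcal{X}$, and the moment formula for Poisson processes factorises the expectation as a sum over set partitions of the scatterer labels, each block of size $k$ contributing a factor $\rho \int \prod_{j\in\mathrm{block}} V_{r,z}(\cdot)\,dz$. I would then isolate the ladder diagrams, where each cluster has exactly one insertion from the ket and one from the bra, and separately bound the rest (crossings, recollisions, clusters with three or more insertions). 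For these non-ladder diagrams, the Schwartz decay of $\hat V$ combined with the Poisson volume factors should yield a gain of a positive power of $\hbar$, sufficient to beat the combinatorial number of diagrams at fixed $N$. Resumming, at a single ladder vertex, the self-interactions in which one scatterer is revisited arbitrarily often by each of the two paths upgrades the bare potential $\lambda V$ to the full $T$-matrix \eqref{def_t_operator}, producing the kernel $|\hat T(p,q)|^2$ of \eqref{Sigma:def}; summing the empty-ladder ``bubble'' diagrams along the free segments gives, through the optical theorem \eqref{EQ:Optical_theorem}, the damping factors $e^{-(s_{i-1}-s_i)\Sigma_{\mathrm{tot}}(q)}$. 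At each order $n$ the ladder therefore converges to $\langle \OpW(b_n)\varphi_\hbar,\varphi_\hbar\rangle +o(1)$ for a Schwartz symbol $b_n$, and after resummation in $n\le N$ the total symbol approximates $\mathcal{L}_t^* a$; the Wigner-measure property \eqref{uniqueW} then yields $\int a\,d\mu_t=\int \mathcal{L}_t^* a\,d\mu_0$.

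The principal obstacle is the uniform control of the non-ladder contribution: a naive term-by-term estimate does not suffice because the number of diagrams grows combinatorially in $N$, so the $\hbar$-gain from stationary phase must be extracted sharply enough---tracking the precise oscillatory structure together with the Schwartz seminorms of $\hat V$ and of the symbol---to sum over all diagrams. A secondary difficulty, specific to the Wigner-measure formulation, is ensuring that $\mathcal{L}_t^* a$ (or a sufficiently good Schwartz approximation of it) is a valid test symbol for the limit measure $\mu_0$; this is the reason the definition \eqref{uniqueW} is strengthened from $C_0^\infty(\R^{2d})$ to $\mathcal{S}(\R^{2d})$, and it is reflected in the high regularity index $5d+5$ appearing in the hypotheses on $\hat V$ and on the semiclassical Sobolev norm of $\{\varphi_\hbar\}$.
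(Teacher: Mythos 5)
The decisive gap is your truncation step. You stop the Duhamel expansion at a fixed order $N$ and claim the tail is $O((C\lambda)^N)$ uniformly in $\hbar$, but in the Boltzmann--Grad scaling $\lambda$ is not a small expansion parameter for the multiple-scattering series: the smallness condition \eqref{assump_1} only guarantees absolute convergence of the single-site Born series for $\hat T$, while each complete collision event contributes an $O(1)$ amplitude (of order $(2\pi)^d\rho t\,\hat T$), and summability of the collision series comes from the $1/k!$ of the time simplex, not from powers of $\lambda$. Worse, the remainder after $N$ Duhamel iterations carries the explicit prefactor $\lambda/\hbar$ from $i\hbar\partial_t$ and the full (only unitary) propagator $U_{\hbar,\lambda}$, so its $L^2$ norm cannot be bounded by unitarity plus a ``Born-type bound'' at fixed $N$: in this paper the corresponding truncated terms are of size $C^{k_0}/(\hbar\,k_0!)$ plus $\tau_0^{-(k_0-c)}\hbar^{-1}$ corrections (see e.g.\ Lemma~\ref{LE:Truncated_Without_re_bound_Mainterm_1} and Lemma~\ref{LE:Convergence_of_duhamel_expansion}), which is why the truncation depth must grow with $\hbar$, $k_0=|\log\hbar|/(10\log|\log\hbar|)$, and why the expansion must be combined with a stopping rule in the collision pattern (stop at the third recollision) and a time-division into $\tau_0=\hbar^{-1/3}$ slices, as in Lemma~\ref{duhamel_expansion_lemma}. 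A plain order-$N$ truncation with $N$ fixed cannot close: the $\hbar^{-1}$ in the remainder is only beaten by $1/k_0!$ and $\tau_0^{-k_0}$ with $k_0\to\infty$.

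Once $N$ is forced to grow like $k_0(\hbar)$, your second estimate also needs strengthening: it is no longer enough that the crossing/recollision diagrams gain a power of $\hbar$ ``at fixed $N$''; one must beat combinatorics that grow with $k_0$, which is exactly where the paper's refined pairing estimates (a factor $k$ rather than $k!$ in the crossing bounds, cf.\ Remark~\ref{remark_mod_eng_erdos_1}) and the choice of $k_0$ so that $\hbar\,C^{k_0}|\log\hbar|^{k_0+O(1)}\to0$ enter. Your identification of the limiting ladder contribution is in the right spirit and matches the paper -- internal revisits of a single scatterer resum to the $T$-matrix, the empty ladders produce the damping via the optical theorem \eqref{EQ:Optical_theorem}, and the limit is tested against $\mu_0$ through Schwartz symbols -- but even there the paper needs the intermediate regularisations (complex energy $\gamma>0$, extension of the internal time integrals to infinity, and a Gaussian regularisation of the energy delta) to make the termwise limits rigorous; these are not optional refinements but what makes the oscillatory integrals and the composed symbol $\mathcal{L}_t^*a$ tractable.
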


We restate Theorem \ref{main_thm1} in the Heisenberg picture, where one considers the time evolution of observables rather than wave functions. The Heisenberg evolution of an initial observable $A_\hbar=\OpW(a)$ is given by
\begin{equation}
A_\hbar(t)= U_{\hbar,\lambda}(t)\, A_\hbar\, U_{\hbar,\lambda}(-t) . 
\end{equation}

\begin{thm}\label{main_thm2}
Let $d=3$ and suppose $a\in\mathcal{S}(\R^{2d})$ with $A_\hbar(t)$ beeing the Heisenberg evolution of $A_\hbar=\OpW(a)$.
Then, for any $t>0$, and every uniform semiclassical family $\{\varphi_\hbar\}_{\hbar\in I}$ in $\mathcal{H}_\hbar^{5d+5}(\R^3)$ with Wigner measure $\mu_0$, we have that
  \begin{equation}
 \mathbb{E}	\langle A_\hbar(t) \varphi_{\hbar}, \varphi_{\hbar} \rangle \to  \int_{\R^{2d}} f(t,x,p) \, d\mu_0(x,p) , 
  \end{equation}
for $\hbar\to 0$ in $I$, where $f(t,x,p)$ is a solution of the linear Boltzmann equation with initial data $f(0,x,p)=a(x,p)$ and collision kernel \eqref{Sigma:def}.
\end{thm}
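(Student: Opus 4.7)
The plan is to deduce Theorem~\ref{main_thm2} from Theorem~\ref{main_thm1} using the time-reversal symmetry of the Schr\"odinger equation, since the two statements are twin results linked by the Schr\"odinger--Heisenberg duality. Splitting $a$ into its real and imaginary parts, I may assume $a\in\mathcal{S}(\R^{2d})$ is real-valued.

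Unitarity of $U_{\hbar,\lambda}(t)$ gives
\begin{equation*}
\langle A_\hbar(t)\varphi_\hbar,\varphi_\hbar\rangle = \langle \OpW(a)\,U_{\hbar,\lambda}(-t)\varphi_\hbar,\,U_{\hbar,\lambda}(-t)\varphi_\hbar\rangle.
\end{equation*}
Since $W_r$ is real, the Schr\"odinger flow intertwines with complex conjugation: $U_{\hbar,\lambda}(-t)\psi = \overline{U_{\hbar,\lambda}(t)\overline{\psi}}$. A direct computation from~\eqref{WeylPDO} also shows that $\overline{\OpW(a)\psi} = \OpW(\check{a})\overline{\psi}$ for real $a$, where $\check{a}(x,p) := a(x,-p)$. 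Combining these identities yields
\begin{equation*}
\langle A_\hbar(t)\varphi_\hbar,\varphi_\hbar\rangle = \langle \OpW(\check{a})\,U_{\hbar,\lambda}(t)\overline{\varphi_\hbar},\,U_{\hbar,\lambda}(t)\overline{\varphi_\hbar}\rangle.
\end{equation*}

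The conjugate family $\{\overline{\varphi_\hbar}\}_{\hbar\in I}$ is again uniform in $\mathcal{H}_\hbar^{5d+5}(\R^3)$ with the same bound, and has Wigner measure equal to the push-forward $\check{\mu}_0$ of $\mu_0$ under $(x,p)\mapsto(x,-p)$ (immediate from~\eqref{WeylPDO}). Theorem~\ref{main_thm1} therefore gives
\begin{equation*}
\mathbb{E}\langle A_\hbar(t)\varphi_\hbar,\varphi_\hbar\rangle \longrightarrow \int_{\R^{2d}} \check{a}\,d\check{\mu}_t = \int_{\R^{2d}} \mathcal{L}_t^*\check{a}(x,-p)\,d\mu_0(x,p),
\end{equation*}
where $\check{\mu}_t$ is the linear Boltzmann evolution of $\check{\mu}_0$.

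It remains to verify the pointwise identity $\mathcal{L}_t^*\check{a}(x,-p) = \mathcal{L}_t a(x,p) = f(t,x,p)$. This uses two symmetries of the collision kernel~\eqref{Sigma:def}: detailed balance $\Sigma(p,q) = \Sigma(q,p)$, which makes the collision operator self-adjoint so that $\mathcal{L}_t$ and $\mathcal{L}_t^*$ differ only by the sign of the transport term $p\cdot\nabla_x$; and parity $\Sigma(p,q) = \Sigma(-p,-q)$. Both are consequences of the time-reversal identity $\hat{T}(p,q) = \hat{T}(-q,-p)$ for the $T$-matrix (obtained from the Born series using reality of $V$ and the symmetry of the free resolvent kernel), evaluated on the energy shell enforced by $\delta(\tfrac12 p^2 - \tfrac12 q^2)$. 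A short calculation then shows that $g(t,x,p):=\mathcal{L}_t^*\check{a}(x,-p)$ satisfies the same transport--collision equation as $\mathcal{L}_t a$ with the same initial datum $a$, so the two agree. The argument requires no new estimates beyond Theorem~\ref{main_thm1}; the only real point that needs care is this symmetry bookkeeping, all of which stems from the reality of the single-site potential~$V$.
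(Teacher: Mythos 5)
Your proposal does not engage with the substance of the theorem: it is circular relative to the paper's logical structure. In the paper, Theorem~\ref{main_thm1} has no independent proof; immediately after the two statements the paper notes that they are related by unitarity and by the duality between $\mathcal{L}_t^*a$ and $\mathcal{L}_t a$, declares them equivalent, and then proves \emph{only} Theorem~\ref{main_thm2}, directly. The entire content of that proof is the hard analysis of Sections 3--12: the truncated Duhamel expansion of $U_{\hbar,\lambda}(-t)$ (Proposition~\ref{duhamel_expansion_lemma}), the $L^2$-estimates on ladder, recollision and truncated terms (Sections~\ref{Sec:tech_est_0_recol}--\ref{Sec:tech_est_truncated_recol_2}), the vanishing of the remainder and of the recollision contributions with the choices $k_0\sim|\log\hbar|/(10\log|\log\hbar|)$, $\tau_0=\hbar^{-1/3}$ (Section~\ref{Sec:con_error_term}), the $\gamma$-regularisation (Section~\ref{Sec:reg_main_op}), and the identification of the limit of the ladder terms with the collision series, using the Born series for $\hat T^\gamma$ and the optical theorem (Section~\ref{Sec:limit_main_terms}), all assembled in Section~\ref{sec:proof_main}. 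Deducing Theorem~\ref{main_thm2} from Theorem~\ref{main_thm1} merely runs the paper's equivalence remark in the opposite direction; since the only route to Theorem~\ref{main_thm1} in the paper is through Theorem~\ref{main_thm2}, your argument reduces the statement to itself and proves nothing new. For the proposal to stand you would have to supply an independent proof of Theorem~\ref{main_thm1}, which is exactly the analysis you are trying to avoid.

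A secondary point on the symmetry bookkeeping itself: from the time-reversal identity $\hat T(p,q)=\hat T(-q,-p)$ (valid for real $V$) you obtain only the combined symmetry $\Sigma(p,q)=\Sigma(-q,-p)$, not detailed balance $\Sigma(p,q)=\Sigma(q,p)$ and parity $\Sigma(p,q)=\Sigma(-p,-q)$ separately, since $V$ is not assumed even. The identity you actually need, $[\mathcal{L}_t^*\check a](x,-p)=[\mathcal{L}_t a](x,p)$, can still be salvaged: the combined symmetry handles the gain term, and equality of the two total cross sections $\int\Sigma(p,q)\,dq=\int\Sigma(q,p)\,dq$ follows from the two-sided optical theorem (unitarity of $S$ gives both $SS^*=I$ and $S^*S=I$). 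So that part of your argument is reparable, but it is peripheral; the essential gap is that the theorem's proof is the direct semiclassical analysis, none of which is present or replaced in your proposal.
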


By unitarity,
\begin{equation}
\langle A_\hbar(t) \varphi_{\hbar}, \varphi_{\hbar} \rangle = \langle \OpW(a)  \varphi_{\hbar}(-t), \varphi_{\hbar}(-t) \rangle,
\end{equation}
so Theorem \ref{main_thm2} considers the reverse time direction of Theorem \ref{main_thm1} in the quantum evolution. Likewise, the limit evolutions in Theorems \ref{main_thm1} and Theorem \ref{main_thm2} are governed by $\mathcal{L}_t^* a$ and $\mathcal{L}_t a$, respectively. Thus the two theorems are in fact equivalent, and we will only give the proof of one of them: Theorem \ref{main_thm2}.

A natural question would be to ask if the ``smoothness'' of the semiclassical family is really needed for the result to be true. In the next theorem we will see that if the semiclassical family is  ``well approximated'' this assumption on smoothness can be relaxed. What we precisely mean by  ``well approximated''  will be made precise in the statement of the theorem.   
Firstly recall that a sequence of finite Borel measures $\{\mu_n\}_{n\in\N}$ on $\R^d$ converges weakly to a finite Borel measure $\mu$ on $\R^d$ if for all bounded continuous functions  $f$ it holds that
\begin{equation}
	\int f \,d\mu_n \rightarrow \int f \,d\mu \qquad\text{as $n\rightarrow \infty$.}
\end{equation}
With this we are now ready to state the generalisation.
\begin{thm}\label{Main_thm_gene}
Suppose $d=3$ and let $\{\varphi_\hbar\}_{\hbar\in I}$ uniform semiclassical family in $L^2(\R^d)$ with Wigner measure $\mu_0$.  Assume that for every $n\in\N$ there exists an uniform semiclassical family $\{\varphi_\hbar^n\}_{\hbar\in I}$ in $\mathcal{H}^{5d+5}_\hbar(\R^d)$ with Wigner measure $\mu_0^n$ such that
\begin{equation}
	\lim_{n\rightarrow \infty} \sup_{\hbar\in I} \norm{\varphi_\hbar-\varphi_\hbar^n}_{L^2(\R^d)} = 0
\end{equation}
and the sequence $\{\mu_0^n\}$ converges weakly to $\mu_0$. Then the assertion of Theorem~\ref{main_thm1} and Theorem~\ref{main_thm2} holds for the semiclassical family $\{\varphi_\hbar\}_{\hbar\in I}$.
\end{thm}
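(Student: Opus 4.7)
The plan is to reduce the statement to Theorem~\ref{main_thm2}, which we take as representative since, as remarked before that theorem, the two theorems are equivalent. Fix $a\in\mathcal{S}(\R^{2d})$, $t>0$, set $A_\hbar=\OpW(a)$ and $f:=\mathcal{L}_t a$. For each $n\in\N$ insert the approximating smooth family via the triangle inequality
\[
\bigl|\mathbb{E}\langle A_\hbar(t)\varphi_\hbar,\varphi_\hbar\rangle - \smallint f\,d\mu_0\bigr|
\;\leq\; \mathrm{E}^{(1)}_{\hbar,n} + \mathrm{E}^{(2)}_{\hbar,n} + \mathrm{E}^{(3)}_n,
\]
where $\mathrm{E}^{(1)}_{\hbar,n}=\bigl|\mathbb{E}\langle A_\hbar(t)\varphi_\hbar,\varphi_\hbar\rangle - \mathbb{E}\langle A_\hbar(t)\varphi_\hbar^n,\varphi_\hbar^n\rangle\bigr|$, $\mathrm{E}^{(2)}_{\hbar,n}=\bigl|\mathbb{E}\langle A_\hbar(t)\varphi_\hbar^n,\varphi_\hbar^n\rangle - \smallint f\,d\mu_0^n\bigr|$ and $\mathrm{E}^{(3)}_n=\bigl|\smallint f\,d\mu_0^n - \smallint f\,d\mu_0\bigr|$.

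For $\mathrm{E}^{(1)}_{\hbar,n}$, unitarity of $U_{\hbar,\lambda}(t)$ gives $\|A_\hbar(t)\|_{\mathrm{op}}=\|\OpW(a)\|_{\mathrm{op}}$, which is uniformly bounded in $\hbar$ by a constant $C_a$ (the Calderón--Vaillancourt type estimate recalled after \eqref{WeylPDO}). Sesquilinearity and Cauchy--Schwarz, combined with the uniform $L^2$ bounds on both families, yield
\[
\mathrm{E}^{(1)}_{\hbar,n}\;\leq\; C_a\bigl(\sup_{\hbar\in I}\|\varphi_\hbar\|+\sup_{\hbar\in I}\|\varphi_\hbar^n\|\bigr)\sup_{\hbar\in I}\|\varphi_\hbar-\varphi_\hbar^n\|,
\]
which vanishes as $n\to\infty$ uniformly in $\hbar\in I$ by the approximation hypothesis. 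For $\mathrm{E}^{(2)}_{\hbar,n}$, the family $\{\varphi_\hbar^n\}_{\hbar\in I}$ is uniform in $\mathcal{H}^{5d+5}_\hbar(\R^d)$ with Wigner measure $\mu_0^n$, so Theorem~\ref{main_thm2} applies for each fixed $n$ and yields $\mathrm{E}^{(2)}_{\hbar,n}\to 0$ as $\hbar\to 0$.

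For $\mathrm{E}^{(3)}_n$ I would invoke the weak convergence $\mu_0^n\rightharpoonup\mu_0$ together with the fact that $f$ is bounded and continuous on $\R^{2d}$. Boundedness is immediate from duality and the mass preservation of $\mathcal{L}_t$ on nonnegative $L^1$ functions: $\|f\|_\infty\leq\|a\|_\infty$. For continuity I would inspect the collision series \eqref{collision_series_expansion}; in each term $f^{(n)}$ the Schwartz symbol $a$ appears composed with an affine continuous function of $(x,q_0)$ and of the auxiliary variables $(s_i,q_i)$, so dominated convergence gives continuity of each $f^{(n)}$, while a Poisson-type bound $\|f^{(n)}\|_\infty\leq \|a\|_\infty (t\|\Sigma_{\mathrm{tot}}\|_\infty)^n/n!$ secures uniform convergence of the series to a continuous limit. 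Combining the three bounds in the standard way---first choose $n$ large so that $\mathrm{E}^{(1)}_{\hbar,n}+\mathrm{E}^{(3)}_n<\varepsilon/2$ uniformly in $\hbar$, then send $\hbar\to 0$ in $I$ to force $\mathrm{E}^{(2)}_{\hbar,n}<\varepsilon/2$---completes the proof.

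The main technical obstacle is the continuity of $f=\mathcal{L}_t a$ on $\R^{2d}$. Because of the energy-conserving $\delta$-factor in $\Sigma(p,q)$ in \eqref{Sigma:def}, each intermediate momentum integration in $f^{(n)}$ is in fact a surface integral on a sphere whose radius depends continuously on the previous momentum, and the parametric continuity requires a careful use of the smoothness and decay of $|\hat T(p,q)|^2$ in both arguments, together with the boundedness of $\Sigma_{\mathrm{tot}}$. Both properties are by-products of the $T$-matrix bounds established in earlier sections under the running assumptions on $V$ and $\lambda$, and their invocation is the only substantive analytic input beyond the approximation scheme itself.
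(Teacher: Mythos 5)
Your argument is correct and follows essentially the same route as the paper: the same three-term triangle-inequality decomposition, Cauchy--Schwarz with the uniform bound on $\norm{\OpW(a)}_{\mathrm{op}}$ for the first term, Theorem~\ref{main_thm2} applied to $\{\varphi_\hbar^n\}$ for the second, and weak convergence of $\mu_0^n$ combined with boundedness and continuity of $f(t,\cdot,\cdot)$ (obtained from the collision series and the $T$-matrix bounds, as in Observation~\ref{Sec13_obs}) for the third. The only cosmetic difference is your bound $\norm{f^{(n)}}_\infty\leq \norm{a}_\infty (t\norm{\Sigma_{\mathrm{tot}}}_\infty)^n/n!$ (using the optical theorem to bound $\Sigma_{\mathrm{tot}}$) versus the paper's bound via $\sup_p\int|\hat T(p,q)|\,dq$, which is immaterial.
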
   
A proof of this theorem will be given in Section~\ref{sec:proof_main_gen}. In this section we also give two examples of such ``well approximated'' semiclassical families. That such semiclassical families exists solidify the intuition that in this set up solutions to the linear Boltzmann equation are the universal limits no matter the initial conditions. 
\subsection{Outline of the paper and comparison to previous works}\label{sec_comparision} 
The remainder of the paper develops the various pieces needed for the proof of  Theorem~\ref{main_thm2}. From here on we will mostly be working in $\R^d$ for $d\geq3$ since most of our results are valid in all dimensions greater than or equal to $3$.  In Section~\ref{sec:prelim}  we firstly fix the choice of Fourier transform and calculate the momentum representation of two operators, which we will use frequently in the following sections. Next we state and prove some integral estimates, which is fundamental in this approach. Lastly we prove that our Hamiltonian \eqref{Hami:def} is well defined under the stated assumptions on the potential, for almost every realisation of the Poisson process and some results on expectation of random phases.

The methods used in the proof of Theorem~\ref{main_thm2} are based on Duhamel expansions of the Schr\"odinger evolution in terms of multiple collision integrals. 
 We will not do a ``full'' expansion but truncate the expansion for certain collision patterns. This type of truncation was introduced by Erd{\H o}s and Yau in their study of the weak-coupling limit \cite{MR1744001}. For the expansions we will follow the methodology of Eng and Erd{\H o}s \cite{MR2156632}, but we need to expand to higher order in recollisions. We need these higher order terms due to our weakened assumptions on our initial states and to obtain improved error bounds. The specifics of the expansion and how we truncate it are given in Section~\ref{sec:duhamel_exp}. The main result in Section~\ref{sec:duhamel_exp} is Proposition~\ref{duhamel_expansion_lemma}, which provides the explicit expansion derived from the Duhamel principle. The section is structured such that we first define the functions and operators that arise in the expansion and establish some of their key properties. Then  Proposition~\ref{duhamel_expansion_lemma} is stated and proven. The expansion we consider have a different form than the expansion in \cite{MR2156632}. The two forms are equivalent and we can pass from one to the other by a change of variables in the time variables. This change of variables is given in Observation~\ref{obs_form_I_op_kernel}. We state the expansion in this form as it is convenient for determining the limit, but we use the other expression for the a priori estimates. 

A large part of this work is to estimate the different terms arising from the expansion. This is the content of  \Cref{Sec:tech_est_0_recol,Sec:tech_est_1_recol,Sec:tech_est_2_recol,Sec:tech_est_truncated_recol_1,Sec:tech_est_truncated_recol_2}.
Due to the presence of the additional terms in the Duhamel expansion and the general initial states we can not directly apply the estimates from \cite{MR2156632}. Instead we have modified the techniques and arguments used in \cite{MR2156632} to obtain estimates for the different terms in the expansions. The main point of the modification is that we a priori need less derivatives of the initial states. Secondly we also obtain improved bounds from the modifications. One of these improvements is that for some estimates we get a factor $k$ compared to a factor $k!$ in  \cite{MR2156632}. 
The modification is explained in more details in Remark~\ref{remark_mod_eng_erdos_1}. Moreover, we will in \Cref{Sec:tech_est_truncated_recol_1,Sec:tech_est_truncated_recol_2} use a different technique compared to \cite{MR2156632} for estimating some of the terms in the recollision error. 

Since we are working in the framework of semiclassical/Wigner measures we need different techniques to prove the actual convergence of the main terms in the expansion compared to \cite{MR2156632}. Especially we can not use that we are in a box of finite volume when we do regulations/renormalisations of the main terms. However, we will also use some of the techniques from \cite{MR2156632} used to obtain the estimates on the terms in the Duhamel expansion from the previous sections. This is the content of \Cref{Sec:reg_main_op,Sec:con_error_term,Sec:limit_main_terms}.  

 In Section~\ref{Sec:reg_main_op} we prove two lemmas where we ``regularise'' the main terms in the expansion. The first one can be thought of as going to complex energies for the $T$-operator; the second is to remove some of the $\hbar$-dependence from the main terms. It is in this section where we start working with the form of the operators used in the statement of Proposition~\ref{duhamel_expansion_lemma}. In \cite{MR2156632} they also need to regularise or renormalise as they call it. In both cases this is done to get the main terms of the expansion into a form for which convergence can be proven.   
 
 In Section~\ref{Sec:con_error_term} we prove that the error terms do indeed go to zero in the semiclasical limit. The proof of the first Lemma is done by collecting the previous established norm bounds and verifying that under suitable choice of two numbers the error term convergence to zero. This is the same as in \cite{MR2156632}. In the second Lemma we prove that the ``main'' terms with recollisions is of lower order and vanish in the semicalssical limit. Here we again use the modified techniques from  \cite{MR2156632} and some new techniques to handle certain terms.    

Section~\ref{Sec:limit_main_terms}  is devoted to working out semiclassical limit for the main terms in the expansion. This is done in a number of different steps including a further regularisation. In Section~\ref{sec:proof_main} we collect all previous results into a proof of Theorem~\ref{main_thm2}.

\subsection*{Acknowledgement}
I would like to thank Professor Jens Marklof for numerous discussions on this subject and for introducing me to it. I would also like to thank Professor L\'aszl\'o Erd{\H o}s for valuable feedback on a previous version of this paper.

\section{Preliminaries}\label{sec:prelim}

\subsection{Fourier transform and momentum representation}\label{momentum_rep_apx}

We will use the following normalisation of the semiclassical Fourier transform and its inverse,
\begin{equation*}
	\begin{aligned}
	\mathcal{F}_\hbar\varphi(p) &= \int_{\R^d} e^{-i\hbar^{-1}\langle x,p\rangle}\varphi(x) \, dx
	\\
	\mathcal{F}_\hbar^{-1}\psi(x) &= \frac{1}{(2\pi\hbar)^d} \int_{\R^d} e^{i\hbar^{-1}\langle x,p\rangle}\psi(p) \, dp,
	\end{aligned}
\end{equation*} 
for $\varphi$ and $\psi \in L^2(\R^d)$. In the case $\hbar=1$ we will use the notation $\hat{\varphi}=\mathcal{F}\varphi=\mathcal{F}_1\varphi$. In the following we will also drop the subscript on the integrals if it is just ``the right power'' of $\R$. For some specific cases we will keep it though like in the statement of Lemma~\ref{LE:Fourier_trans_resolvent}, which is only valid in $3$-dimensions. 

For an operator $A$ we define its momentum representation by $\hat{A} =\mathcal{F}_\hbar A \mathcal{F}_\hbar^{-1} $. We will here give three examples of kernels in momentum representation. We will later in the analysis be using all three. We have in the sense of distributions that the kernels in momentum representation are
\begin{equation*}
	\begin{aligned}
	\mathcal{F}_\hbar W_\hbar \mathcal{F}_\hbar^{-1}(p,q) = \frac{1}{(2\pi)^{d}} \sum_{x_j\in\mathcal{X}} e^{-i\hbar^{-1/d}\langle x_j,p-q \rangle} \hat{V}(p-q),
	\end{aligned}
\end{equation*}
\begin{equation*}
	\begin{aligned}
	\mathcal{F}_\hbar e^{-it\hbar^{-1}(-\frac{\hbar^2}{2}\Delta)} \mathcal{F}_\hbar^{-1}(p,q) 
	=  \delta(p-q) e^{-it\hbar^{-1}\frac{1}{2}q^2}
	\end{aligned}
\end{equation*}
and 
\begin{equation*}
	\begin{aligned}
	K_{\hat{A}_\hbar}(p,q)
	 =\frac{1}{(2\pi\hbar)^{d}} \mathcal{F}_\hbar [a(\cdot,\tfrac{p+q}{2})](p-q).
	\end{aligned}
\end{equation*}
Lastly we have the following Lemma for which a proof can be found in e.g. \cite{MR3969938}.
\begin{lemma}\label{LE:Fourier_trans_resolvent}
Let $z\in\C$ be given such that $\im(z)\neq0$. Then we have
 \begin{equation}
 	\frac{1}{(2\pi)^3} \int_{\R^3} \frac{e^{i\langle x,p\rangle}}{\frac{1}{2} p^2 - z } \,dp = \frac{e^{i|x|\sqrt{2z}}}{2\pi |x|},
 \end{equation}
 where it is the branch of the square root giving a positive imaginary part. 
\end{lemma}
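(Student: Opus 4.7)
The plan is to exploit the rotational symmetry of the integrand and reduce the three-dimensional Fourier integral to a one-dimensional contour integral.

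First, I would pass to spherical coordinates $(r,\theta,\varphi)$ on $\R^{3}$ with polar axis aligned along $x$, so that $\langle x,p\rangle = |x|\,r\cos\theta$ with $r=|p|$. The $\varphi$-integration is trivial and produces a factor $2\pi$. The $\theta$-integration is elementary after the substitution $u=\cos\theta$ and yields
\begin{equation*}
\int_{0}^{\pi} e^{i|x|r\cos\theta}\sin\theta\,d\theta = \frac{2\sin(|x|r)}{|x|r}.
\end{equation*}
Combining these reduces the original integral to
\begin{equation*}
\frac{1}{(2\pi)^{3}}\cdot\frac{4\pi}{|x|}\int_{0}^{\infty}\frac{r\sin(|x|r)}{\tfrac{1}{2}r^{2}-z}\,dr.
\end{equation*}

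Next, since the integrand on the right is even in $r$, I would extend the $r$-integration to all of $\R$ (with a factor $\tfrac{1}{2}$) and decompose $\sin(|x|r)=\tfrac{1}{2i}(e^{i|x|r}-e^{-i|x|r})$. The denominator factors as $\tfrac{1}{2}(r-\sqrt{2z})(r+\sqrt{2z})$, giving simple poles at $\pm\sqrt{2z}$. Because $\im z\neq 0$, the chosen branch of $\sqrt{2z}$ has strictly positive imaginary part, so $+\sqrt{2z}$ lies in the open upper half-plane and $-\sqrt{2z}$ in the open lower half-plane; in particular the real-line integrals are absolutely convergent.

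Then I would evaluate each of the two integrals by contour deformation. For the $e^{i|x|r}$-contribution, close the contour by a large semicircle in the upper half-plane (where this exponential decays), which captures only the pole at $+\sqrt{2z}$ with residue $e^{i|x|\sqrt{2z}}$. For the $e^{-i|x|r}$-contribution, close in the lower half-plane (negatively oriented), capturing the pole at $-\sqrt{2z}$ whose residue is again $e^{i|x|\sqrt{2z}}$. Together these contribute $2\pi i\,e^{i|x|\sqrt{2z}}-(-2\pi i)\,e^{i|x|\sqrt{2z}}=4\pi i\,e^{i|x|\sqrt{2z}}$, and multiplying through by the accumulated prefactor $\tfrac{1}{8\pi^{2} i|x|}$ yields the stated formula.

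The only genuine subtlety is the branch choice: one must verify that $\im z\neq 0$ guarantees a well-defined $\sqrt{2z}$ with $\im\sqrt{2z}>0$, that both poles are off the real axis so the contour integrals are well defined, and that $\im\sqrt{2z}>0$ is exactly what makes $e^{i|x|\sqrt{2z}}$ the decaying outgoing-wave choice. Once this is set, the estimate on the large semicircular arcs is routine (Jordan's lemma applied to $r\mapsto r/(\tfrac{1}{2}r^{2}-z)$ against the decaying exponentials), and the conclusion follows.
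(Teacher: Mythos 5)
Your argument is correct, and it is the standard self-contained proof: the paper itself does not prove this lemma but defers to a reference (it states that a proof can be found in \cite{MR3969938}), so your residue computation supplies what the paper omits. I checked the bookkeeping: the angular integration gives $\tfrac{4\pi}{|x|}\int_0^\infty \tfrac{r\sin(|x|r)}{\tfrac12 r^2-z}\,dr$, symmetrization and the exponential splitting give the prefactor $\tfrac{1}{8\pi^2 i|x|}$, the residues at $\pm\sqrt{2z}$ are both equal to $e^{i|x|\sqrt{2z}}$, and the two contour contributions combine to $4\pi i\, e^{i|x|\sqrt{2z}}$, which indeed yields $\tfrac{e^{i|x|\sqrt{2z}}}{2\pi|x|}$. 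One small inaccuracy: after splitting $\sin(|x|r)$ into exponentials, the two real-line integrals are \emph{not} absolutely convergent (the integrands decay only like $1/|r|$), and indeed the original three-dimensional integral is not absolutely convergent either, since the integrand decays like $|p|^{-2}$ in $\R^3$; both must be read as improper (or distributional) limits. This does not damage the proof — Jordan's lemma, which you invoke, is exactly the tool for such conditionally convergent oscillatory integrals — but you should drop the phrase "absolutely convergent" and instead note that the identity is to be understood in the improper/regularized sense in which the kernel is used, with the radial reduction serving as the regularization.
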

 \subsection{Integral estimates}
In the following the set of natural numbers (excluding zero) is denoted $\N$, and we also set $\N_0 = \{0\}\cup\N$. For a collection of vectors $x_l, x_{l+1}, \dots,x_j$ and scalars $s_l,\dots,s_j$ we define.
\begin{equation*}
	 \boldsymbol{x}_{l,j} = (x_l,  \dots,x_j),\qquad \boldsymbol{x}_{l,j}^{+}=\sum_{i=l}^j x_i ,\qquad \boldsymbol{sx}_{l,j}^{+}=\sum_{i=l}^j s_ix_i .
\end{equation*}
Furthermore, will use the convention that for $m<n$ that
\begin{equation*}
	\prod_{i=n}^m a_i =1 \qquad\text{and}\qquad \sum_{i=n}^m a_i =0.
\end{equation*}
The following lemma is a version of the classical stationary phase argument. We have here included a proof since we will be using the same type of argumentation in later proofs.
   \begin{lemma}\label{app_quadratic_integral_tech_est}
  Let $n\in\N$  and $f\in\mathcal{S}(\R^{nd})$. Suppose $Q(\boldsymbol{t})$ is an $nd\times nd$ diagonal matrix with block form 
  \begin{equation*}
  	Q(\boldsymbol{t}) = \begin{pmatrix}
	t_1 I_d & 0 & 0 &0
	\\
	0 & t_2 I_d &0&0
	\\
	\vdots & 0 & \ddots &0
	\\
	0 &\cdots & 0 &t_n I_d
	\end{pmatrix},
  \end{equation*}
  with $t_i\in\R$ and $I_d$ is the $d\times d$ identity matrix and finally let $\boldsymbol{y}\in\R^{nd}$. Then   
  \begin{equation*}
  	\begin{aligned}
  	\MoveEqLeft \Big|\int f(x_1,\dots,x_n) e^{i\frac{1}{2}\langle Q(\boldsymbol{t}) \boldsymbol{x},\boldsymbol{x}\rangle} e^{i\langle \boldsymbol{x},\boldsymbol{y}\rangle}  \,d\boldsymbol{x} \Big| 
	\\
	\leq {}& \tilde{C}_d^{n}\prod_{i=1}^n \frac{1}{\max(1, | t_i |)^\frac{d}{2}} \sup_{|\alpha_1|,\dots,|\alpha_n|\leq d+1} \norm{  \partial_{x_1}^{\alpha_1}\cdots \partial_{x_n}^{\alpha_n} f(x_1,\dots,x_n)}_{L^1(\R^{nd})},
	\end{aligned}
  \end{equation*}
  where the constant is only dependent on the dimension $d$ and is given by
  \begin{equation*}
   \tilde{C}_d= \left(  \frac{\sqrt{\pi}  d (d+1)^{(d+1)} \Gamma(\frac{d}{2})}{\Gamma(\frac{d}{2}+1) \Gamma(\frac{d+1}{2})} \right).
  \end{equation*}
  \end{lemma}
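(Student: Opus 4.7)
The plan is to exploit the block-diagonal structure of $Q(\boldsymbol t)$, which decouples the phase as $\tfrac{1}{2}\langle Q\boldsymbol x,\boldsymbol x\rangle+\langle\boldsymbol x,\boldsymbol y\rangle=\sum_{i=1}^{n}\bigl(\tfrac{t_i}{2}|x_i|^2+\langle x_i,y_i\rangle\bigr)$. I would partition the indices $\{1,\dots,n\}=A\sqcup B$ with $A=\{i:|t_i|\le 1\}$ and $B=\{i:|t_i|>1\}$. For $i\in A$ the target factor $\max(1,|t_i|)^{-d/2}$ equals $1$ and is absorbed freely, so the only work is to generate a factor $|t_i|^{-d/2}$ for each $i\in B$ from the oscillatory Gaussian.

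For each $i\in B$, I would complete the square
\[
\tfrac{t_i}{2}|x_i|^2+\langle x_i,y_i\rangle=\tfrac{t_i}{2}\bigl|x_i+y_i/t_i\bigr|^2-\tfrac{|y_i|^2}{2t_i},
\]
translate $u_i=x_i+y_i/t_i$ (which leaves every $L^1$-norm of $f$ and its derivatives invariant), and drop the unimodular prefactor $e^{-i|y_i|^2/(2t_i)}$. Fixing the $A$-variables and applying Fubini, the $\boldsymbol x_B$-integral reduces to a product-type Fresnel integral $\int h(\boldsymbol u_B)\prod_{i\in B}e^{it_i|u_i|^2/2}\,d\boldsymbol u_B$, with $h$ depending parametrically on $\boldsymbol x_A$. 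Plancherel, combined with the explicit Fourier transform $\mathcal{F}[e^{it|u|^2/2}](p)=(2\pi/|t|)^{d/2}e^{-i\pi d\sgn(t)/4}e^{-i|p|^2/(2t)}$ (obtained by analytic continuation of the real Gaussian), together with $|e^{-i|p|^2/(2t)}|=1$, then yields
\[
\Bigl|\int h(\boldsymbol u_B)\prod_{i\in B}e^{it_i|u_i|^2/2}\,d\boldsymbol u_B\Bigr|\le \prod_{i\in B}(2\pi|t_i|)^{-d/2}\,\|\widehat h\|_{L^1(d\boldsymbol p_B)},
\]
where $\widehat h$ is the partial Fourier transform in the $\boldsymbol u_B$ variables.

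The final step is to replace $\|\widehat h\|_{L^1}$ by the supremum of $L^1$-norms of derivatives of $h$ in each block variable. I would use the identities $p_i^{\alpha_i}\widehat h(\boldsymbol p_B)=(-i)^{|\alpha_i|}\widehat{\partial_{u_i}^{\alpha_i}h}(\boldsymbol p_B)$ and $|\widehat g|\le\|g\|_{L^1}$ to deduce $|p_i^{\alpha_i}\widehat h|\le\|\partial_{u_i}^{\alpha_i}h\|_{L^1(du_i)}$, combine with an elementary majorisation of the form $(1+|p_i|)^{d+1}\le (d+1)^{d+1}\bigl(1+|p_{i,1}|^{d+1}+\cdots+|p_{i,d}|^{d+1}\bigr)$, and integrate the product $\prod_{i\in B}(1+|p_i|)^{-(d+1)}$ via spherical coordinates using the Beta identity $\int_0^\infty r^{d-1}(1+r)^{-(d+1)}\,dr=B(d,1)=1/d$ and the surface-area formula $|S^{d-1}|=2\pi^{d/2}/\Gamma(d/2)$. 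A Fubini exchange then integrates trivially in $\boldsymbol x_A$ and swaps the sup over derivative multi-indices with the $L^1$-norm, producing the claimed product of factors $\max(1,|t_i|)^{-d/2}$.

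The hard part is not any of the individual steps but the constant bookkeeping: one must carry the supremum through the iterated Fubini carefully and recognise the precise constant, invoking $\Gamma(d/2)/\Gamma(d/2+1)=2/d$ and $\Gamma(1/2)=\sqrt\pi$ to rewrite the product of dimension-dependent factors into the exact form of $\tilde C_d$ stated in the lemma. The oscillatory-integral content itself is a single Plancherel computation.
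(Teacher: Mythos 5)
Your proposal is correct and follows essentially the same route as the paper: the split of indices by $|t_i|\lessgtr 1$, the use of the explicit Fourier transform of the Fresnel factor to extract $(2\pi|t_i|)^{-d/2}$ for the large-$|t_i|$ blocks, and the trading of decay in the dual variables for at most $d+1$ derivatives of $f$ per block are exactly the paper's argument — your Plancherel-plus-Fourier-weight step is just the dual formulation of the paper's insertion of the operator $L =\prod_{i\in J_1}\frac{1-i\langle p_i,\nabla_{x_i}\rangle}{1+|p_i|^2}$ followed by integration by parts, and completing the square in $y_i$ versus carrying the linear phase is cosmetic. The only point to be careful about is the last swap: since $\int \sup \ge \sup \int$, you must keep the finite sum over the multi-indices through the $\boldsymbol{x}_A$-integration and only then bound each term by $\sup_{|\alpha_1|,\dots,|\alpha_n|\le d+1}\norm{\partial_{x_1}^{\alpha_1}\cdots\partial_{x_n}^{\alpha_n} f}_{L^1(\R^{nd})}$ (this is how the paper orders it, and it is why the combinatorial factors such as $(d+1)^{d+1}$ appear); your resulting constant then need not equal $\tilde C_d$ exactly, but it is dominated by it, which proves the lemma.
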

  \begin{proof}
  We start by dividing our integral into different cases depending on the absolute value of the $t$'s. We divide into the cases where $|t_i|\geq1$ and  $|t_i|<1$. This gives us $2^n$ different cases to consider and we denote each corresponding set by $B_l$. For each case the indices for which $|t_i|\geq1$ will be denoted by $J_1$ and the remaining will be in $J_2$. With this in place we write the integral as
    \begin{equation}\label{App_quadratic_integral_tech_est_1}
  	\int f(x_1,\dots,x_n) e^{i\frac{1}{2}\langle Q(\boldsymbol{t}) \boldsymbol{x},\boldsymbol{x}\rangle} e^{i\langle \boldsymbol{x},\boldsymbol{y}\rangle}  \,d\boldsymbol{x} = \sum_{l=1}^{2^n} \int \boldsymbol{1}_{B_l}(\boldsymbol{t}) f(x_1,\dots,x_n) e^{i\frac{1}{2}\langle Q(\boldsymbol{t}) \boldsymbol{x},\boldsymbol{x}\rangle}e^{i\langle \boldsymbol{x},\boldsymbol{y}\rangle}  \,d\boldsymbol{x}.
  \end{equation}
  We will decompose $Q(\boldsymbol{t})$ into two different matrices depending on which indices is in $J_1$ or $J_2$. By assumption we have that
  \begin{equation}\label{App_decom_Q_mat}
  	\begin{aligned}
  	 \langle Q(\boldsymbol{t}) \boldsymbol{x},\boldsymbol{x}\rangle 
	= \sum_{i=1}^n t_i^d x_i^2 
	= \sum_{i\in J_1}t_i^d x_i^2   + \sum_{i\in J_2}t_i^d x_i^2  
	=\langle Q_1(\boldsymbol{t}) \boldsymbol{x}_{J_1},\boldsymbol{x}_{J_1}\rangle+\langle Q_2(\boldsymbol{t}) \boldsymbol{x}_{J_2},\boldsymbol{x}_{J_2}\rangle ,
 	 \end{aligned}
  \end{equation}
  where $Q_1(\boldsymbol{t}) $ and $Q_2(\boldsymbol{t}) $ are diagonal square matrices. In particular we have that
  \begin{equation*}
  	|\det[Q_1(\boldsymbol{t})] |= \prod_{i\in J_1} |t_i|^d \geq1.
  \end{equation*}
  Hence using decomposition \eqref{App_decom_Q_mat} and writing the function $e^{i\frac{1}{2}\langle Q_1(\boldsymbol{t}) \boldsymbol{x},\boldsymbol{x}\rangle}$ as the inverse Fourier transform of its Fourier transform we get that
      \begin{equation*}
      \begin{aligned}
  	\int \boldsymbol{1}_{B_l}(\boldsymbol{t}) f(x_1,\dots,x_n) e^{i\frac{1}{2}\langle Q(\boldsymbol{t}) \boldsymbol{x},\boldsymbol{x}\rangle} \,d\boldsymbol{x}
	 ={}&\prod_{i\in J_1} \frac{1}{(2\pi t_i)^\frac{d}{2}} e^{\frac{i\pi}{4}\sgn(Q_1(\boldsymbol{t}))} \int \boldsymbol{1}_{B_l}(\boldsymbol{t}) f(x_1,\dots,x_n) e^{-i\frac{1}{2}\langle Q_1(\boldsymbol{t})^{-1} \boldsymbol{p}_{J_1},\boldsymbol{p}_{J_1}\rangle}
	 \\
	 &
	 \times e^{i\frac{1}{2}\langle Q_2(\boldsymbol{t}) \boldsymbol{x}_{J_2},\boldsymbol{x}_{J_2}\rangle}e^{i\langle \boldsymbol{p}_{J_1}+\boldsymbol{y}_{J_1} ,\boldsymbol{x}_{J_1}\rangle} e^{i\langle \boldsymbol{x}_{J_2},\boldsymbol{y}_{J_2}\rangle} \,d\boldsymbol{p}d\boldsymbol{x} .
	\end{aligned}
  \end{equation*}
  To ensure integrability in the $p$ variables we define the differential operator
  \begin{equation*}
	 L =\prod_{i\in J_1} \frac{1-i\langle p_i  , \nabla_{x_i}\rangle}{1+|p_i |^2}.
\end{equation*}
We observe that $L e^{i\langle \boldsymbol{p}_{J_1} ,\boldsymbol{x}_{J_1}\rangle}=e^{i\langle \boldsymbol{p}_{J_1} ,\boldsymbol{x}_{J_1}\rangle}$.  Using this, we insert $L$, $d+1$ times, and integrate by parts. This gives us the  estimate
 \begin{equation}\label{App_quadratic_integral_tech_est_2}
      \begin{aligned}
  	\MoveEqLeft  \Big| \int \boldsymbol{1}_{B_l}(\boldsymbol{t}) f(x_1,\dots,x_n) e^{i\frac{1}{2}\langle Q(\boldsymbol{t}) \boldsymbol{x},\boldsymbol{x}\rangle} e^{i\langle \boldsymbol{x},\boldsymbol{y}\rangle} \,d\boldsymbol{x} \Big|
	 \leq \sum_{|\alpha_1|,\dots,|\alpha_{j_1}|\leq d+1} \boldsymbol{1}_{B_l}(\boldsymbol{t}) \prod_{i\in J_1} \frac{C(\alpha_i)}{(2\pi | t_i |)^\frac{d}{2}}  
	 \\
	 &
	 \times  \int | \partial_{\boldsymbol{x}_{J_1}}^{\boldsymbol{\alpha}_{J_1}} f(x_1,\dots,x_n)| \prod_{i\in J_1} \frac{|p_i ^{\alpha_i} |}{(1+|p_i |^2)^{d+1}} \,d\boldsymbol{p}_{J_1}d\boldsymbol{x},
	\end{aligned}
  \end{equation}
  where we have used the notation  $C(\alpha)= \binom{d+1}{(\alpha,d+1-\abs{\alpha})}$. We note that
  \begin{equation}\label{App_quadratic_integral_tech_est_3}
	\begin{aligned}
	 \int \frac{|p_i^{\alpha_{i}}|}{(1+|p_i|^2)^{d+1}} \, dp 
	 \leq \int \frac{1}{(1+|p_i|^2)^{\frac{d+1}{2}}} \, dp_i
	 = \frac{\pi^\frac{d}{2} d}{\Gamma(\frac{d}{2}+1)} \int_{0}^\infty \frac{r^{d-1}}{(1+r^2)^{\frac{d+1}{2}}} \,dr =  \frac{\pi^\frac{d+1}{2} d \Gamma(\frac{d}{2})}{2\Gamma(\frac{d}{2}+1) \Gamma(\frac{d+1}{2})}.
  	\end{aligned}
\end{equation}
  Combining \eqref{App_quadratic_integral_tech_est_2} and \eqref{App_quadratic_integral_tech_est_3} we obtain the estimate
  \begin{equation}\label{App_quadratic_integral_tech_est_4}
      \begin{aligned}
  	\MoveEqLeft  \Big| \int \boldsymbol{1}_{B_l}(\boldsymbol{t}) f(x_1,\dots,x_n) e^{i\frac{1}{2}\langle Q(\boldsymbol{t}) \boldsymbol{x},\boldsymbol{x}\rangle} e^{i\langle \boldsymbol{x},\boldsymbol{y}\rangle}\,d\boldsymbol{x}  \Big|
	 \leq \left(  \frac{\sqrt{\pi} d \Gamma(\frac{d}{2})}{2\Gamma(\frac{d}{2}+1) \Gamma(\frac{d+1}{2})} \right)^{j_1}
	 \\
	 &
	 \times \sup_{|\alpha_1|,\dots,|\alpha_n|\leq d+1} \norm{  \partial_{x_1}^{\alpha_1}\cdots \partial_{x_n}^{\alpha_n} f(x_1,\dots,x_n)}_{L^1(\R^{nd})}  \sum_{|\alpha_1|,\dots,|\alpha_{j_1}|\leq d+1} \boldsymbol{1}_{B_l}(\boldsymbol{t}) \prod_{i\in J_1} \frac{C(\alpha_i)}{ | t_i |^\frac{d}{2}}.  
	\end{aligned}
  \end{equation}
To obtain the desired estimate we first note for all $l$ that 
  \begin{equation}\label{App_quadratic_integral_tech_est_5}
      \begin{aligned}
  	\MoveEqLeft   \boldsymbol{1}_{B_l}(\boldsymbol{t}) \prod_{i\in J_1} \frac{1}{ | t_i |^\frac{d}{2}} =  \prod_{i=1}^n \frac{1}{\max(1, | t_i |)^\frac{d}{2}}. 
	\end{aligned}
  \end{equation}
Secondly we note that
 \begin{equation}\label{App_quadratic_integral_tech_est_6}
	\begin{aligned}
	 \sum_{|\alpha_{1}|,\dots,|\alpha_{j_1}|\leq d+1}  \prod_{i=1}^{j_1}   \binom{d+1}{(\alpha_{i},d+1-\abs{\alpha_{i}})} 
	=   \prod_{i=1}^{j_1}  \sum_{|\alpha_{i}| \leq d+1}   \binom{d+1}{(\alpha_{i},d+1-\abs{\alpha_{i}})} 
	=(d+1)^{(d+1)j_1}.
	\end{aligned}
\end{equation}
  Combing \eqref{App_quadratic_integral_tech_est_4}, \eqref{App_quadratic_integral_tech_est_5} and \eqref{App_quadratic_integral_tech_est_6} we get the estimate
    \begin{equation*}
      \begin{aligned}
  	\MoveEqLeft  \Big| \int \boldsymbol{1}_{B_l}(\boldsymbol{t}) f(x_1,\dots,x_n) e^{i\frac{1}{2}\langle Q(\boldsymbol{t}) \boldsymbol{x},\boldsymbol{x}\rangle} e^{i\langle \boldsymbol{x},\boldsymbol{y}\rangle} \,d\boldsymbol{x} \Big|
	\\
	& \leq \left(  \frac{\tilde{C}_d}{2}\right)^{n}
	   \prod_{i=1}^n \frac{1}{\max(1, | t_i |)^\frac{d}{2}} \sup_{|\alpha_1|,\dots,|\alpha_n|\leq d+1} \norm{  \partial_{x_1}^{\alpha_1}\cdots \partial_{x_n}^{\alpha_n} f(x_1,\dots,x_n)}_{L^1(\R^{nd})}.  
	\end{aligned}
  \end{equation*}
  which combined with \eqref{App_quadratic_integral_tech_est_1} gives us the stated estimate.
    \end{proof}
\begin{lemma}\label{LE:gamma_reg_time_int}
	Let $\gamma\in(0,1]$ be given. Then the following estimate holds
	\begin{equation*}
		\int_0^\infty \frac{1-e^{-\gamma t}}{\max(1,t^{\frac{3}{2}})} \,dt \leq 5 \gamma^{\frac{1}{3}}.
	\end{equation*}
\end{lemma}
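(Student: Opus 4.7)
The plan is to split the integration domain according to where the weight $\max(1,t^{3/2})$ switches behavior and where the elementary bound $1-e^{-\gamma t}\le\min(\gamma t,1)$ is used in its most economical form. Writing
\[
\int_0^\infty \frac{1-e^{-\gamma t}}{\max(1,t^{3/2})}\,dt = \int_0^1 (1-e^{-\gamma t})\,dt + \int_1^\infty \frac{1-e^{-\gamma t}}{t^{3/2}}\,dt,
\]
the first piece is estimated by $1-e^{-\gamma t}\le \gamma t$, giving a bound of $\gamma/2$.

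For the tail integral I would introduce a threshold $T\ge 1$ and split further at $t=T$. On $[1,T]$ I again use $1-e^{-\gamma t}\le \gamma t$, producing $\gamma\int_1^T t^{-1/2}\,dt\le 2\gamma T^{1/2}$. On $[T,\infty)$ I use the trivial bound $1-e^{-\gamma t}\le 1$, producing $\int_T^\infty t^{-3/2}\,dt = 2T^{-1/2}$. The choice $T=1/\gamma$ (which is $\ge 1$ because $\gamma\in(0,1]$) balances the two terms and yields $2\gamma^{1/2}+2\gamma^{1/2}=4\gamma^{1/2}$.

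Combining the two pieces gives
\[
\int_0^\infty \frac{1-e^{-\gamma t}}{\max(1,t^{3/2})}\,dt \le \tfrac{1}{2}\gamma + 4\gamma^{1/2} \le 5\gamma^{1/2},
\]
where in the last step I used $\gamma\le\gamma^{1/2}$ for $\gamma\in(0,1]$. Finally, since $\gamma\in(0,1]$ implies $\gamma^{1/2}\le\gamma^{1/3}$, the claimed inequality follows.

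There is no real obstacle here; the argument is essentially a one-dimensional optimisation. The only point worth being careful about is ensuring that the threshold $T=1/\gamma$ is compatible with the restriction $T\ge 1$, which is guaranteed by the hypothesis $\gamma\in(0,1]$, and tracking the constants tightly enough to land at $5$ (rather than at a slightly larger value) in the final estimate.
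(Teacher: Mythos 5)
Your proof is correct and follows essentially the same strategy as the paper: split the integral at a $\gamma$-dependent threshold and use $1-e^{-\gamma t}\le \gamma t$ below it and $1-e^{-\gamma t}\le 1$ above it. The only difference is the threshold: the paper takes $T=\gamma^{-2/3}$, which lands directly on $3\gamma^{1/3}+2\gamma^{1/3}=5\gamma^{1/3}$, whereas your optimised choice $T=1/\gamma$ actually yields the sharper bound of order $\gamma^{1/2}$, from which the stated estimate follows since $\gamma^{1/2}\le\gamma^{1/3}$ for $\gamma\in(0,1]$.
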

\begin{proof}
We start by setting $T=\gamma^{-\frac{2}{3}}$ and write the integral as
\begin{equation}\label{EQ:gamma_reg_time_int_1} 
	\int_0^\infty \frac{1-e^{-\gamma t}}{\max(1,t^{\frac{3}{2}})} \,dt  = \int_0^T \frac{1-e^{-\gamma t}}{\max(1,t^{\frac{3}{2}})} \,dt  + \int_T^\infty \frac{1-e^{-\gamma t}}{\max(1,t^{\frac{3}{2}})} \,dt. 
\end{equation}
For the first integral we use the estimate $1-e^{-\gamma t} \leq \gamma t$ and obtain that
\begin{equation}\label{EQ:gamma_reg_time_int_2} 
	 \int_0^T \frac{1-e^{-\gamma t}}{\max(1,t^{\frac{3}{2}})} \,dt  \leq  \gamma T  \int_0^T \frac{1}{\max(1,t^{\frac{3}{2}})}\,dt \leq  \gamma T  \int_0^\infty \frac{1}{\max(1,t^{\frac{3}{2}})}\,dt = 3 \gamma T = 3 \gamma^{\frac{1}{3}}.
\end{equation}
For the second integral we use the trivial bound $1-e^{-\gamma t} \leq 1$ and that $T\geq1$ and obtain that
\begin{equation}\label{EQ:gamma_reg_time_int_3} 
	\int_T^\infty \frac{1-e^{-\gamma t}}{\max(1,t^{\frac{3}{2}})} \,dt  \leq \int_T^\infty \frac{1}{t^{\frac{3}{2}}} \,dt  =2 T^{-\frac{1}{2}} = 2 \gamma^{\frac{1}{3}}.
\end{equation}
The stated estimate is obtained from combining \eqref{EQ:gamma_reg_time_int_1}, \eqref{EQ:gamma_reg_time_int_2} and \eqref{EQ:gamma_reg_time_int_3}. 
\end{proof}
We will in our analysis also need the following integral estimates contained in the next two lemmas. The proofs of both lemmas are given in the Appendix.
\begin{lemma}\label{LE:int_posistion}
Let $y\in\R^d$ be given. Then the following estimates holds
 \begin{equation}
 	\int_{\R^d}  \frac{1}{\langle x \rangle^{d+1} |x-y|} \, dx \leq \frac{C}{\langle y \rangle^{\frac{1}{2}}}.
 \end{equation}
\end{lemma}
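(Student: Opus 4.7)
The plan is to reduce to an elementary region decomposition. First I split on whether $|y|$ is small or large: the rate $\langle y\rangle^{-1/2}$ only requires nontrivial work for $|y|$ large, while for $|y|$ bounded a crude bound suffices.

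\emph{Case 1: $|y|\le 2$.} Here $\langle y\rangle^{-1/2}\ge 5^{-1/4}$, so it is enough to prove a uniform bound. I would split $\R^d=\{|x-y|\le 1\}\cup\{|x-y|>1\}$. On the first set $\langle x\rangle^{-(d+1)}\le 1$ and the singularity $1/|x-y|$ is locally integrable, giving a constant. On the second set $1/|x-y|\le 1$ and $\langle x\rangle^{-(d+1)}$ is integrable on $\R^d$. Hence the whole integral is bounded by a dimensional constant.

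\emph{Case 2: $|y|> 2$.} Now I split $\R^d$ into three regions:
\begin{equation*}
A_1=\{|x|\le |y|/2\},\qquad A_2=\{|x-y|\le |y|/2\},\qquad A_3=\R^d\setminus(A_1\cup A_2).
\end{equation*}
On $A_1$ the reverse triangle inequality gives $|x-y|\ge |y|/2$, so
\begin{equation*}
\int_{A_1}\frac{dx}{\langle x\rangle^{d+1}|x-y|}\le \frac{2}{|y|}\int_{\R^d}\frac{dx}{\langle x\rangle^{d+1}}\le \frac{C}{|y|}.
\end{equation*}
On $A_2$ we have $|x|\ge |y|/2$, hence $\langle x\rangle^{d+1}\ge (|y|/2)^{d+1}$, and integrating the local singularity in polar coordinates around $y$ on a ball of radius $|y|/2$ yields
\begin{equation*}
\int_{A_2}\frac{dx}{\langle x\rangle^{d+1}|x-y|}\le \frac{C}{|y|^{d+1}}\cdot |y|^{d-1}=\frac{C}{|y|^{2}}.
\end{equation*}
On $A_3$ both $|x|>|y|/2$ and $|x-y|>|y|/2$ hold; in particular $1/|x-y|\le 2/|y|$, so
\begin{equation*}
\int_{A_3}\frac{dx}{\langle x\rangle^{d+1}|x-y|}\le \frac{2}{|y|}\int_{\R^d}\frac{dx}{\langle x\rangle^{d+1}}\le \frac{C}{|y|}.
\end{equation*}
Summing the three contributions we get a bound $C/|y|$ for $|y|>2$, which is stronger than $C/\langle y\rangle^{1/2}$. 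Combining with Case 1, the claimed estimate follows with a dimensional constant.

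There is no real obstacle: the argument is a textbook split into ``$x$ near $0$'', ``$x$ near $y$'', and ``$x$ far from both''. The only minor care is to ensure the decomposition covers $\R^d$ (it does, since if $|x|>|y|/2$ and $|x-y|>|y|/2$ both fail, then at least one of $A_1,A_2$ contains $x$). Note that the actual decay we obtain is $\langle y\rangle^{-1}$ for $d\ge 2$; the statement only asks for $\langle y\rangle^{-1/2}$, presumably because that is what later estimates in the paper require.
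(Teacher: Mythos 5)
Your proof is correct, but it takes a different route from the paper's. The paper splits $\R^d$ into the ball $B(y,\sqrt{|y|})$ and its complement: on the complement it uses $|x-y|^{-1}\le |y|^{-1/2}$, and on the ball it passes to spherical coordinates centred at the origin and exploits $r^2+|y|^2+2|y|rz\ge(|y|-r)^2\ge |y|$ for $r\le\sqrt{|y|}$, which is exactly why the exponent $1/2$ appears in the statement — the paper's decomposition cannot do better than $\langle y\rangle^{-1/2}$. Your three-region split (near $0$, near $y$, far from both) is the more standard one; each region is handled by an elementary pointwise bound plus either $\int\langle x\rangle^{-(d+1)}dx<\infty$ or the local integrability of $|x-y|^{-1}$ (valid since $d\ge 3$ in this paper), and it yields the stronger decay $\langle y\rangle^{-1}$ with no angular integrals at all. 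Since $\langle y\rangle^{-1}\le\langle y\rangle^{-1/2}$, your argument proves the lemma as stated, and in fact shows it is not sharp; the paper only ever uses the $-1/2$ rate, so nothing downstream changes. The only point worth flagging is the one you already noted: the $A_2$ estimate $\int_{|z|\le|y|/2}|z|^{-1}dz\simeq|y|^{d-1}$ needs $d\ge 2$, which is harmless here.
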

\begin{lemma}\label{LE:resolvent_int_est}
Let $\zeta\in(0,1/2)$ be given. Then the following estimates holds
 \begin{equation}\label{LE:resolvent_int_est_EQ_1}
 	\sup_{p\in\R^d}\int_\R  \frac{1}{\langle \nu \rangle |\frac{1}{2} p^2+\nu +i\zeta|} \, d\nu \leq C| \log(\zeta)|.
 \end{equation}
 \begin{equation}\label{LE:resolvent_int_est_EQ_2}
 	\sup_{\nu\in\R}\int_{\R^d}  \frac{\langle \nu\rangle }{\langle p \rangle^{d+1} |\frac{1}{2} p^2+\nu +i\zeta|} \, dp \leq C| \log(\zeta)|.
 \end{equation}
 \begin{equation}\label{LE:resolvent_int_est_EQ_3}
	\sup_{q\in\R^d,\nu\in\R}  \int_{\R^d}  \frac{1}{| p- q |}  \frac{\langle \nu\rangle}{ \langle p \rangle^{d+1}|\frac{1}{2} p^2+\nu+i\zeta|}
	\, dp \leq C |\log(\zeta)|.
\end{equation}
  \begin{equation}\label{LE:resolvent_int_est_EQ_4}
 	\sup_{q\in\R^d,\nu\in\R}\int_{\R^d}  \frac{1 }{\langle p-q \rangle^{d+1} |\frac{1}{2} p^2+\nu +i\zeta|} \, dp \leq C| \log(\zeta)|.
 \end{equation}
 Where all constants are independent of $\zeta$, but they depend on the dimension.
 \end{lemma}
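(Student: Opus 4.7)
\smallskip
The plan is to prove each of the four estimates by isolating the resolvent singularity at $\tfrac{1}{2}p^2+\nu=0$ (regularised on scale $\zeta$) from the algebraic weights, and to show that $|\log\zeta|$ is the sole divergence that appears. The basic ingredient throughout is the pointwise bound $|\tfrac{1}{2}p^2+\nu+i\zeta|^{-1}\le\min(\zeta^{-1},|\tfrac{1}{2}p^2+\nu|^{-1})$ combined with the elementary fact $\int_{|u|\le 1}|u+i\zeta|^{-1}\,du\le C|\log\zeta|$, which is the only place the logarithm ever enters.

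\smallskip
For \eqref{LE:resolvent_int_est_EQ_1} I fix $p$, set $a=\tfrac{1}{2}p^2$, and substitute $u=\nu+a$ so that $\langle\nu\rangle^{-1}=\langle u-a\rangle^{-1}$. Splitting at $|u|=1$: on $|u|\ge 1$ use $|u+i\zeta|^{-1}\le|u|^{-1}$ and check that $\int_{|u|\ge 1}|u|^{-1}\langle u-a\rangle^{-1}\,du$ is bounded uniformly in $a\ge 0$ by a short case analysis across the regions $|u|\le a/2$, $a/2\le|u|\le 2a$, and $|u|\ge 2a$; on $|u|\le 1$ the weight $\langle u-a\rangle^{-1}$ is bounded by $1$ and $\int_{|u|\le 1}|u+i\zeta|^{-1}\,du$ yields the logarithm.

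\smallskip
For \eqref{LE:resolvent_int_est_EQ_2} I pass to polar coordinates and substitute $s=r^2/2$ in the radial integration to arrive at $\int_0^\infty \langle\nu\rangle F(s)|s+\nu+i\zeta|^{-1}\,ds$ with $F(s)=(2s)^{(d-2)/2}(1+2s)^{-(d+1)/2}\le C\min(1,s^{-3/2})$ (the $s^{-3/2}$ decay uses $d\ge 3$), and then repeat the splitting strategy of \eqref{LE:resolvent_int_est_EQ_1}. The delicate case, and the main technical obstacle, is $\nu<0$ with $|\nu|$ large: the resonant point $s=|\nu|$ then lies in the decaying tail of $F$, where $F(|\nu|)\lesssim|\nu|^{-3/2}$ exactly absorbs the factor $\langle\nu\rangle$, so that the local resonant contribution is bounded by $\langle\nu\rangle F(|\nu|)|\log\zeta|\lesssim|\nu|^{-1/2}|\log\zeta|$. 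It is precisely at this step that the exponent $d+1$ of the weight $\langle p\rangle^{-(d+1)}$ is needed in full strength.

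\smallskip
Estimates \eqref{LE:resolvent_int_est_EQ_3} and \eqref{LE:resolvent_int_est_EQ_4} are then reduced to \eqref{LE:resolvent_int_est_EQ_2} by splitting the $p$-integration at $|p-q|=1$. On $|p-q|\ge 1$ both $|p-q|^{-1}$ and $\langle p-q\rangle^{-(d+1)}$ are bounded, and the statements follow directly from \eqref{LE:resolvent_int_est_EQ_2}. On $|p-q|\le 1$ one uses $\langle p\rangle\asymp\langle q\rangle$, together with the spherical mean bound $\int_{S^{d-1}}|r\omega-q|^{-1}\,d\omega\le C_d/\max(r,|q|)$ (which in $d=3$ is the exact identity $4\pi/\max(r,|q|)$), to carry out the angular integration; the remaining radial integral is then treated by the substitution $s=r^2/2$ and the split-at-distance-1 argument used for \eqref{LE:resolvent_int_est_EQ_1}. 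In all four cases the logarithm arises solely from a single localised integration of $|u+i\zeta|^{-1}$, which is what was to be shown.
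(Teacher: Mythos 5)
Your treatment of \eqref{LE:resolvent_int_est_EQ_1}, \eqref{LE:resolvent_int_est_EQ_2} and \eqref{LE:resolvent_int_est_EQ_3} is sound and is essentially the paper's mechanism in a different packaging: the logarithm comes from one localized integral of $|u+i\zeta|^{-1}$, the weight $\langle p\rangle^{-(d+1)}$ (giving the $s^{-3/2}$ tail after $s=r^2/2$) absorbs the factor $\langle\nu\rangle$ when the resonant point sits in the tail, and \eqref{LE:resolvent_int_est_EQ_3} is handled by the angular mean bound $\int_{S^{d-1}}|r\omega-q|^{-1}\,d\omega\le C_d/\max(r,|q|)$, exactly as in the paper's appendix. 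These parts are fine.

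The genuine gap is in your reduction of \eqref{LE:resolvent_int_est_EQ_4} to \eqref{LE:resolvent_int_est_EQ_2}. On the region $\{|p-q|\ge 1\}$ you propose to bound $\langle p-q\rangle^{-(d+1)}$ by a constant and invoke \eqref{LE:resolvent_int_est_EQ_2}; but the integrand of \eqref{LE:resolvent_int_est_EQ_4} contains no factor $\langle p\rangle^{-(d+1)}$, so after discarding $\langle p-q\rangle^{-(d+1)}$ — the only source of decay in $p$ — you are left with $\int_{\R^d}|\tfrac12 p^2+\nu+i\zeta|^{-1}\,dp$, which diverges (the integrand decays only like $|p|^{-2}$). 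The weight in \eqref{LE:resolvent_int_est_EQ_4} is centred at $q$, not at the origin, and this is precisely what makes the estimate nontrivial: translating $p\mapsto p+q$ turns the resolvent into $\tfrac12 p^2+\langle p,q\rangle+\tfrac12|q|^2+\nu+i\zeta$, so the cross term prevents any direct appeal to \eqref{LE:resolvent_int_est_EQ_2}. The dangerous configuration is $\nu<0$ with $\sqrt{-2\nu}\approx|q|$, where the resonant sphere $\{|p|=\sqrt{-2\nu}\}$ passes through a neighbourhood of $q$ in which $\langle p-q\rangle^{-(d+1)}$ gives no smallness while the sphere itself has large area. To close this you need an argument that genuinely couples the shifted weight to the resonant sphere: either the route the paper takes (substitute $p\mapsto p-q$, integrate the angle against $q$, and run a case analysis in $\nu$ and $|q|$), or, staying closer to your setup, a co-area/layer argument in $t=\bigl||p|-\sqrt{-2\nu}\bigr|$ using the uniform bound $\sup_{r,q}\int_{\{|p|=r\}}\langle p-q\rangle^{-(d+1)}\,d\sigma(p)\le C$ for the surface integrals of the shifted weight, which again produces $C|\log\zeta|$ after one radial integration of $(R\,t+\zeta)^{-1}$. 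As written, however, the step "follows directly from \eqref{LE:resolvent_int_est_EQ_2}" is false for \eqref{LE:resolvent_int_est_EQ_4}.
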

The last integral estimate we will need is contained in the following Lemma and a proof of this Lemma can be found in \cite[Proposition 2.3]{MR2156632}. For sake of completness a proof can also be found in the appendix.
\begin{lemma}\label{LE:est_res_combined}
Let $\zeta\in(0,1/2)$ be given. Then the following estimate holds
 \begin{equation}
 	\sup_{\nu,\tilde{\nu}\in\R}\int_{\R^d}   \frac{\langle \nu\rangle \langle \tilde{\nu} \rangle }{\langle p \rangle^{d+1} \langle p-q \rangle^{d+1} |\frac{1}{2} p^2+\nu +i\zeta|  |\frac{1}{2} (p-q)^2+\tilde{\nu} +i\zeta|} \, dp \leq \frac{ C\log(\zeta)^2}{|q| + \zeta},
 \end{equation}
 where the constant is independent of $\zeta$ and $q$.
\end{lemma}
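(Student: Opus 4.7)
The bound couples two resolvent factors centered at the displaced momenta $0$ and $q$, and the factor $\frac{1}{|q|+\zeta}$ quantifies the geometric separation between the singular spheres $\{|p|^2 = -2\nu\}$ and $\{|p-q|^2 = -2\tilde{\nu}\}$. Set $A = \tfrac{1}{2}p^2+\nu+i\zeta$ and $B = \tfrac{1}{2}(p-q)^2+\tilde{\nu}+i\zeta$. I would treat the regimes $|q|\le\zeta$ and $|q|>\zeta$ separately.

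\textbf{Regime 1: $|q|\le\zeta$.} Here $|q|+\zeta\asymp\zeta$, so the target reduces to $C(\log\zeta)^2/\zeta$. Apply Cauchy--Schwarz in $p$ to decouple the two resolvents:
\begin{equation*}
I \;\le\; \Bigl(\int\frac{\langle\nu\rangle^2\,dp}{\langle p\rangle^{2(d+1)}|A|^2}\Bigr)^{1/2}\Bigl(\int\frac{\langle\tilde{\nu}\rangle^2\,dp}{\langle p-q\rangle^{2(d+1)}|B|^2}\Bigr)^{1/2}.
\end{equation*}
Each factor is bounded by $C|\log\zeta|/\sqrt{\zeta}$ by splitting the $p$-integration into a thin shell around the singular sphere (where $|A|\asymp\zeta$, the shell volume being absorbed by the $\langle\nu\rangle$ weight to produce the $1/\sqrt{\zeta}$) and its complement (which yields a logarithm by essentially the same spherical-coordinate argument that underlies Lemma~\ref{LE:resolvent_int_est}(2)).

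\textbf{Regime 2: $|q|>\zeta$.} The target is $C(\log\zeta)^2/|q|$. Partition $\R^d$ into $R_1=\{|p|\le|q|/3\}$, $R_2=\{|p-q|\le|q|/3\}$, and $R_3$ the complement. In $R_1$ use $\langle p-q\rangle^{-(d+1)}\le C|q|^{-(d+1)}$ to pull out a factor $|q|^{-(d+1)}$; then bound $1/|B|$ by splitting the $\tilde{\nu}$-range into the ``singular'' slice where the $\tilde{\nu}$-sphere meets $R_1$ (controlled by the surface area $|q|^{d-1}$ together with a local resolvent estimate) and its complement (where $|B|\asymp|q|^2+|\tilde{\nu}|$), and control the $A$-integral by Lemma~\ref{LE:resolvent_int_est}(2). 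Region $R_2$ is handled symmetrically by swapping $p\leftrightarrow p-q$. In $R_3$, where both $|p|$ and $|p-q|$ exceed $|q|/3$ and the two singular spheres are geometrically separated, I would use the partial-fraction identity
\begin{equation*}
\frac{1}{AB}=\frac{1}{B-A}\Bigl(\frac{1}{A}-\frac{1}{B}\Bigr),\qquad B-A=-\langle p,q\rangle+\tfrac{1}{2}|q|^2+\tilde{\nu}-\nu,
\end{equation*}
which is linear in $p$ with gradient $-q$ of magnitude $|q|$. Decomposing $p=t\hat{q}+p_\perp$ and substituting $u=t|q|$ in the parallel direction produces the Jacobian factor $1/|q|$; the remaining $u$-integration against $1/|A|$ (or $1/|B|$) yields one logarithm via Lemma~\ref{LE:resolvent_int_est}(1), and the transverse integration against $\langle p\rangle^{-(d+1)}\langle p-q\rangle^{-(d+1)}$ yields the second via Lemma~\ref{LE:resolvent_int_est}(4).

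\textbf{Main obstacle.} The delicate step is the analysis in $R_3$. The algebraic cancellation inside $\frac{1}{A}-\frac{1}{B}$ must be preserved in a neighborhood of the hyperplane $\{B-A=0\}$: bounding it crudely by $\frac{1}{|A|}+\frac{1}{|B|}$ would produce a non-integrable singularity at this hyperplane and destroy the $1/|q|$ gain. The remedy is to retain the cancellation explicitly---using that $|\tfrac{1}{A}-\tfrac{1}{B}|$ has an integrable bound in the normal direction to the hyperplane thanks to the $i\zeta$ regularization---before performing the change of variables that extracts the $1/|q|$. The two factors of $\log\zeta$ in the final bound arise from the two one-dimensional resolvent-type integrations just described; the bookkeeping of these logarithmic factors, together with the uniformity in $\nu,\tilde{\nu}$, is the most error-prone part of the argument.
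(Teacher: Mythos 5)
Your Regime 2 analysis has a genuine gap at exactly the point you flag as the main obstacle, and the proposed remedy does not close it. Note that $B-A=\tfrac12(p-q)^2-\tfrac12 p^2+\tilde\nu-\nu$ contains \emph{no} $i\zeta$ (the regularizations cancel), so $1/|B-A|$ is a bare $1/|\text{linear}|$ singularity: after your substitution $u=t|q|$ the term $\tfrac{1}{|B-A|}\cdot\tfrac{1}{|A|}$ has a non-integrable $1/|u-c|$ singularity at the hyperplane, so the partial-fraction pieces cannot be integrated separately. Your fix --- ``retain the cancellation in $\tfrac1A-\tfrac1B$'' --- is circular: $\bigl|\tfrac1A-\tfrac1B\bigr|=\tfrac{|B-A|}{|A|\,|B|}$ exactly, so keeping the cancellation merely reconstitutes the original integrand $\tfrac{1}{|A||B|}$ near the hyperplane, and the $i\zeta$ gives no help there. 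What is actually needed near the set where both spheres (and the hyperplane) are simultaneously almost singular is a joint estimate exploiting transversality of the two spheres, and that is precisely the content of the lemma, not something the decomposition supplies. The same issue leaks into $R_1$, $R_2$: there you propose to sup one resolvent factor and integrate the other (``control the $A$-integral by Lemma \ref{LE:resolvent_int_est}''), but when $\zeta<|q|\ll1$ both singular spheres can meet $R_1$ (even tangentially, e.g. radii $|q|/3$ and $2|q|/3$ with centers $0$ and $q$), and sup-one/integrate-the-other gives at best $O(|\log\zeta|/\zeta)$ or $O(|\log\zeta|/q^2)$ on such configurations, which is not $\lesssim\log(\zeta)^2/|q|$ in that range.

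For contrast, the paper avoids the joint-singularity problem entirely with one change of variables: in spherical coordinates about the direction of $q$, substitute $\theta\mapsto\tfrac12 r^2+\tfrac12|q|^2-r|q|\theta$, whose Jacobian produces the factor $1/|q|$ and, crucially, makes the second resolvent (and the weight $\langle p-q\rangle^{-d-1}$) depend only on the new variable, so after extending the integration range Fubini splits the integral into two independent one-dimensional resolvent integrals, each contributing one $|\log\zeta|$ via Lemma \ref{LE:resolvent_int_est}; the complementary bound $C|\log\zeta|/\zeta$ comes from bounding one resolvent factor pointwise by $C/\zeta$, and the two bounds combine since $(|q|+\zeta)^{-1}\lesssim\min(|q|^{-1},\zeta^{-1})$. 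If you want to salvage your route, you would need, in place of the partial-fraction step, an honest two-shell transversality estimate near the intersection (tangency included), at which point the paper's single substitution is both simpler and uniform in all parameters. Your Regime 1 is fine, though the crude pointwise bound just described already settles it.
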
  
\begin{remark}
For applications of this Lemma we will use it in the following form
 \begin{equation}
 	\sup_{\nu,\tilde{\nu}\in\R}\int_{\R^d}   \frac{\langle \nu\rangle \langle \tilde{\nu} \rangle }{\langle p \rangle^{d+1} \langle p-q \rangle^{d+1} |\frac{1}{2} p^2+\nu +i\zeta|  |\frac{1}{2} (p-q)^2+\tilde{\nu} +i\zeta|} \, dp \leq \frac{ C\log(\zeta)^2}{|q|},
 \end{equation}
where this bound is only true almost everywhere with respect to the Lebesgue measure.
\end{remark}
\subsection{Almost sure local integrability of the potential and expectation of random phases} \label{sec:integrability}
We are considering an operator of the form $H_{\hbar,\lambda}=-\hbar^2\Delta +\lambda W_\hbar$, where the potential is of the form 
\begin{equation*}
	W_\hbar(x) =\sum_{x_j\in\mathcal{X}} V(\hbar^{-1}x-\hbar^{-1/d}x_j),
\end{equation*}
where $\mathcal{X}$ is a Possion point process with intensity measure $\rho dx$ and $V$ is a measurable function. A first natural question is weather this operator is well defined and self-adjoint. One approach is to establish that  $W_\hbar$ is in $L^1_{loc}(\R^d)$ $\Pro$-almost surely and is bounded from below. Then we can define the operator $\Pro$-almost surely as a Friedrichs extension of the quadratic form given by 
\begin{equation*}
  \mathfrak{h}[f,g] = \int \hbar^2\sum_{i=1}^d \partial_{x_i}f(x) \overline{\partial_{x_i}g(x)}  + \lambda W_\hbar(x)f(x)\overline{g(x)}\;dx, \qquad f,g \in \mathcal{D}(\mathfrak{h}),
\end{equation*}
where
\begin{equation*}
  \mathcal{D}(\mathfrak{h}) = \left\{ f\in L^2(\R^d) | \int_{\R^d} \abs{p}^2 \abs{\hat{f}(p)}^2 \;dp<\infty \text{ and } \int \abs{W_\hbar(x)}\abs{f(x)}^2 \;dx <\infty \right\}.
\end{equation*}
In fact this Friedrichs extension will  be self-adjoint and unique; cf. \cite{MR0493420}.

For a start assume $V(\hbar^{-1}(x-\hbar^{1-1/d}\cdot))$ to be in $L^1(\R^d,\rho dx)$, uniformly for all $x \in \R^d$. By Campbell's theorem we have that
\begin{equation*}
	\Aver{|W_\hbar(x)|} \leq \Aver{\sum_{x_j\in\mathcal{X}} |V(\hbar^{-1}x-\hbar^{-1/d}x_j)|  } = \rho \int |V(\hbar^{-1}x-\hbar^{-1/d}y)| dy<\infty .
\end{equation*}
Then by the Markov inequality we get for $\varepsilon>0$ that
\begin{equation*}
	\Pro(|W_\hbar(x)| > \varepsilon^{-1}) \leq \varepsilon \Aver{|W_\hbar(x)|}.
\end{equation*}
This gives us that the probability of the total potential  being pointwise finite is $1$ as long as the single site potential is integrable. This implies that the potential $\Pro$-almost surely will be finite on any countable collection of points. However, in the case of stationary Poisson point process (as we consider) the potential will not be in $L^\infty(\R^d)$ almost surely. To see this let $\{U_n\}_{n\in\N}$ be a collection of disjoint sets with volume one such that $\R^d=\cup_{n\in\N} U_n$. For a $k \in \N$ consider the  probability
\begin{equation*}
	\Pro( \exists n \in \N \, : \#(\mathcal{X}\cap U_n)>k) = 1- \Pro(  \#(\mathcal{X}\cap U_n)\leq k  \, \forall n \in \N).
\end{equation*}
Since we assume that the sets have volume $1$ we get for all $n \in \N$ that
\begin{equation*}
	\Pro( \#(\mathcal{X}\cap U_n)\leq k) =e^{-\rho} \sum_{j=0}^k \frac{\rho^j}{j!} <1.
\end{equation*}
Moreover, we assumed the sets are disjoint, hence we get by independence that
\begin{equation*}
	\Pro( \#(\mathcal{X}\cap U_n)\leq k \,  \forall n \in \N) = \lim_{l\rightarrow \infty} \big(e^{-\rho} \sum_{j=0}^k \frac{\rho^j}{j!} \big)^l =0.
\end{equation*}
Hence for every $k \in \N$ we have that
\begin{equation*}
	\Pro( \exists n \in \N \, : \#(\mathcal{X}\cap U_n)>k) = 1.
\end{equation*}
This will imply that we cannot $\Pro$-almost surely have a uniformly bound on the potential. This is also the reason we assume our single site potentials is nonnegative, since otherwise we cannot verify almost surely that the full potential $W_\hbar$ will be bounded from below. We will in the following lemma prove that the full potential will be in $L^1_{loc}(\R^d)$ almost surely. 
\begin{lemma}\label{Properties_Wr}
Let $\mathcal{X}$ be a Poisson point process with intensity $\rho$ and let $V \in \mathcal{S}(\R^d)$. For $\hbar \in (0,1]$ define the function $W_\hbar(x)$ by
\begin{equation}
	W_\hbar(x)= \sum_{x_j\in\mathcal{X}} V(\hbar^{-1}x-\hbar^{-1/d}x_j).
\end{equation}
Then for given $\hbar$ we have that $W_\hbar(x) \in L^1_{loc}(\R^d)$ $\Pro$-almost surely.
\end{lemma}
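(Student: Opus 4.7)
The plan is to show that $\int_K |W_\hbar(x)|\,dx < \infty$ $\Pro$-almost surely for any compact $K \subset \R^d$, and then use countable exhaustion of $\R^d$.

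First I would bound the expectation of $\int_K |W_\hbar(x)|\,dx$. By Tonelli's theorem and Campbell's formula for the Poisson point process with intensity measure $\rho\,dy$,
\begin{equation*}
\Aver{\int_K |W_\hbar(x)|\,dx}
\leq \int_K \Aver{\sum_{x_j\in\mathcal{X}} |V(\hbar^{-1}x - \hbar^{-1/d}x_j)|}\,dx
= \rho \int_K \int_{\R^d} |V(\hbar^{-1}x - \hbar^{-1/d}y)|\,dy\,dx.
\end{equation*}
The change of variables $z = \hbar^{-1/d}y$ (with Jacobian $\hbar\,dz$) transforms the inner integral into $\hbar \int_{\R^d} |V(\hbar^{-1}x - z)|\,dz = \hbar \|V\|_{L^1(\R^d)}$, which is independent of $x$. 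Since $V \in \mathcal{S}(\R^d) \subset L^1(\R^d)$, this yields
\begin{equation*}
\Aver{\int_K |W_\hbar(x)|\,dx} \leq \rho \hbar \|V\|_{L^1(\R^d)} |K| < \infty.
\end{equation*}

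Since the expectation is finite, the nonnegative random variable $\int_K |W_\hbar(x)|\,dx$ is finite on an event $\Omega_K$ with $\Pro(\Omega_K)=1$. To upgrade from a fixed $K$ to local integrability, take $K_n = \overline{B(0,n)}$ and set $\Omega_0 = \bigcap_{n\in\N} \Omega_{K_n}$, so that $\Pro(\Omega_0) = 1$. On $\Omega_0$, the potential $W_\hbar$ is integrable over every ball $B(0,n)$; since any compact set is contained in some $B(0,n)$, this gives $W_\hbar \in L^1_{loc}(\R^d)$ on $\Omega_0$.

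There is no real obstacle here beyond handling the rescaling factors correctly: the key observation is that Campbell's formula reduces the random sum to an integral against the (constant) intensity, and the substitution $z = \hbar^{-1/d}y$ produces a bound independent of $x$, which allows integration over the compact set $K$ to contribute only a factor $|K|$. Then countable exhaustion converts the finite-expectation bound on each $K_n$ into $\Pro$-a.s.\ local integrability.
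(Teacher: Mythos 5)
Your proof is correct, and it takes a genuinely different route from the paper. You bound the \emph{expectation} of the local $L^1$ norm directly: Tonelli plus Campbell's formula give
\begin{equation*}
	\Aver*{\int_K |W_\hbar(x)|\,dx} \leq \rho\hbar \norm{V}_{L^1(\R^d)} |K| < \infty,
\end{equation*}
so the nonnegative random variable $\int_K|W_\hbar|\,dx$ is a.s.\ finite for each fixed compact $K$, and a countable exhaustion by balls yields a single full-probability event on which $W_\hbar\in L^1_{loc}(\R^d)$. The paper instead argues pathwise: it first shows, again via Campbell, that the single random quantity $\sum_{x_j\in\mathcal{X}}\langle\hbar^{-1/d}x_j\rangle^{-d-1}$ is a.s.\ finite, and then uses the pointwise Schwartz decay $|V(y)|\leq C\langle y\rangle^{-d-1}$ together with a Peetre-type inequality to bound $\int_{B(0,R)}|W_\hbar|\,dx$ by an explicit $R$-dependent constant times that sum, so that one a.s.\ event works for every $R$ without any further intersection. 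Your argument is shorter and needs only $V\in L^1$, not pointwise decay; the paper's buys a deterministic, quantitative bound of the local $L^1$ norm in terms of a fixed functional of the point configuration, which is the kind of pathwise control that is convenient for quenched-type considerations. Both are complete proofs of the stated lemma.
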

\begin{proof}
First we observer by use of Campbell's theorem and for any $\hbar>0$ that
\begin{equation}
	\Aver{\sum_{x_j\in\mathcal{X}} \langle  \hbar^{-1/d}x_j\rangle^{-d-1}} =\hbar  \int \langle x\rangle^{-d-1}\rho \,dx <\infty,
\end{equation}
where we have used the notation $\langle x\rangle = (1+|x|^2)^{1/2}$. This implies that 
\begin{equation*}
\sum_{x_j\in\mathcal{X}} \langle  \hbar^{-1/d}x_j\rangle^{-d-1} < \infty \qquad\text{$\Pro$-a.s.}
\end{equation*}
To prove that $W_\hbar(x)\in L^1_{loc}(\R^d)$ $\Pro$-almost surely, it is sufficient to prove that 
\begin{equation}
	\int_{B(0,R)} |W_\hbar(x)| \, dx <\infty \qquad\text{$\Pro$-a.s.}
\end{equation}
for every $R>0$, where the $\Pro$-null set is independent of $R$. Let $R>0$ be given. For $x \in B(0,\hbar^{-1}R)$ we have that
\begin{equation}
	\begin{aligned}
	\langle \hbar^{-1/d} x_j \rangle &=( 1+ |\hbar^{-1/d}x_j-x+x|^2)^{\frac{1}{2}} 
	\\
	&\leq \sqrt{2}(\langle \hbar^{-1/d}x_j-x  \rangle  +\hbar^{-1}R) \leq \sqrt{2}(1+\hbar^{-1}R)\langle \hbar^{-1/d}x_j-x \rangle.
	\end{aligned}
\end{equation}
This implies that
\begin{equation}
 	\frac{\langle \hbar^{-1/d} x_j \rangle}{\sqrt{2}(1+\hbar^{-1} R)} \leq \langle \hbar^{-1/d}x_j-x \rangle. 
\end{equation}
With this bound we have that
\begin{equation*}
	\begin{aligned}
 	\int_{B(0,R)} |W_\hbar(x)| \, dx  
	 \leq {}& \hbar^{d} \sum_{x_j \in \mathcal{X}}   \int_{B(0,\hbar^{-1}R)} |V(x-\hbar^{-1/d}x_j)| \, dx
	 \\
	 \leq {}& \sqrt{2}(1+\hbar^{-1} R)^{d+1} C \Vol(B(0,R))\sum_{x_j \in \mathcal{X} }  \frac{1}{\langle \hbar^{-1/d} x_j \rangle^{d+1}}
	 <\infty \quad\text{$\Pro$-a.s.}
	 \end{aligned}
\end{equation*}
where we have that the $\Pro$-null set is independent of $R$. This concludes the proof.  
\end{proof}   
Our next lemma is a result on the average of a random phase. To state the Lemma we will need some additional notation to keep track of the combinatorics involved in this problem. We will by  $\mathcal{S}_n$ denote the set of permutations of $n$ elements and we let
\begin{equation} \label{orderset_def}
	\mathcal{A}(k,n) = \{ \boldsymbol{\sigma} \in \{1,\dots,k\}^n \, |\, \sigma_1<\sigma_2<\cdots<\sigma_n \}.
\end{equation}
We can now state the lemma.
 \begin{lemma}\label{LE:Exp_ran_phases}
  Let $k_1,k_2\in\N$ such that $k_1\geq k_2$, $n\in\{0,\dots,k_2\}$, $\sigma^1\in\mathcal{A}(k_1,n)$ and $\sigma^2\in\mathcal{A}(k_2,n)$. Suppose $f\in L^2(\R^{(k_1+1)d})$, $g\in L^2(\R^{(k_2+1)d})$ and let $\kappa$ be in $\mathcal{S}_n$. Then
  \begin{equation}\label{Exp_ran_phases_eq}
  	\begin{aligned}
  	\MoveEqLeft \mathbb{E} \Big[ \sum_{(\boldsymbol{x},\tilde{\boldsymbol{x}})\in \mathcal{X}_{\neq}^{k_1+k_2}}  \prod_{i=1}^n \rho^{-1}\delta(x_{\sigma_i^1}- \tilde{x}_{\sigma_{\kappa(i)}^2})  \int \prod_{m=1}^{k_1} e^{ -i   \langle  \hbar^{-1/d}x_m,p_{m}-p_{m-1} \rangle} \prod_{m=1}^{k_2} e^{ i   \langle  \hbar^{-1/d}\tilde{x}_m,q_{m}-q_{m-1} \rangle}  f(\boldsymbol{p})  g(\boldsymbol{q}) \, d\boldsymbol{p}  d\boldsymbol{q}  \Big]
	\\
	&=\left(\rho\hbar(2\pi)^d\right)^{k_1+k_2-n}  \int
	\Lambda_n(\boldsymbol{p},\boldsymbol{q},\sigma^1,\sigma^2,\kappa)   f(\boldsymbol{p})  g(\boldsymbol{q}) \, d\boldsymbol{p}   d\boldsymbol{q} ,
	\end{aligned}
  \end{equation}
  where the function $\Lambda_n(\boldsymbol{p},\boldsymbol{q},\sigma^1,\sigma^2,\kappa)$ is given by
  \begin{equation*}
  	\begin{aligned}
  	\MoveEqLeft \Lambda_n(\boldsymbol{p},\boldsymbol{q},\sigma^1,\sigma^2,\kappa) 
	\\
	&= \prod_{i=1}^n \delta(p_{\sigma^1_{i-1}}- q_{\sigma^2_{\kappa(i)-1}} - p_{\sigma^1_{n}}+ q_{\sigma^2_{n}} - l^{\kappa}_{i}(\boldsymbol{q}_{\sigma^2}))   
	 \prod_{i=1}^{n+1} \prod_{m=\sigma^1_{i-1}+1}^{\sigma^1_{i}-1} \delta(p_m-p_{\sigma^1_{i-1}})  \prod_{m=\sigma^2_{i-1}+1}^{\sigma^2_{i}-1} \delta(q_m-q_{\sigma^1_{i-1}}),
	\end{aligned}
  \end{equation*}
  where we have used the notation $l^{\kappa}_{i}(\boldsymbol{q}_{\sigma^2})= \sum_{j=i+1}^{n+1} ( q_{\sigma^2_{\kappa(j)-1}}-q_{\sigma^2_{\kappa(j-1)}} )$.
    \end{lemma}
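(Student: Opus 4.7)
The natural approach is to apply Campbell's factorial-moment formula for Poisson point processes,
\begin{equation*}
\mathbb{E}\Big[\sum_{(y_1,\ldots,y_m)\in\mathcal{X}^m_{\neq}} F(y_1,\ldots,y_m)\Big] = \rho^m \int_{\R^{md}} F(\boldsymbol{y})\,d\boldsymbol{y},
\end{equation*}
with $m = k_1 + k_2$. Interpreting the statement distributionally in $(\boldsymbol{p},\boldsymbol{q})$ tested against the smooth functions $f, g$, the left-hand side of \eqref{Exp_ran_phases_eq} becomes a Lebesgue integral over $\boldsymbol{x},\tilde{\boldsymbol{x}} \in \R^{(k_1+k_2)d}$ with overall prefactor $\rho^{k_1+k_2}$. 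I would then use the $n$ factors $\rho^{-1}\delta(x_{\sigma^1_i}-\tilde{x}_{\sigma^2_{\kappa(i)}})$ to substitute $\tilde{x}_{\sigma^2_{\kappa(i)}} = x_{\sigma^1_i}$, eliminating $n$ of the position variables and reducing the prefactor to $\rho^{k_1+k_2-n}$.

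Next I would carry out the remaining $k_1+k_2-n$ position integrals using
\begin{equation*}
\int_{\R^d} e^{\pm i\hbar^{-1/d}\langle x, k\rangle}\,dx = \hbar(2\pi)^d\,\delta(k),
\end{equation*}
which produces the full prefactor $(\rho\hbar(2\pi)^d)^{k_1+k_2-n}$ together with a product of momentum delta functions. These deltas split naturally into three groups. For each non-coincidence variable $x_m$ with $m\notin\{\sigma^1_i\}_i$, the integral contributes $\delta(p_m-p_{m-1})$; regrouping across the blocks determined by $(\sigma^1_{i-1},\sigma^1_i)$ (with the conventions $\sigma^1_0 = 0$ and $\sigma^1_{n+1}=k_1+1$) yields the block-telescoping factor $\prod_{i=1}^{n+1}\prod_{m=\sigma^1_{i-1}+1}^{\sigma^1_i - 1}\delta(p_m - p_{\sigma^1_{i-1}})$ appearing in $\Lambda_n$. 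The analogous argument for the $\tilde{x}$-variables produces the corresponding $q$-block factors. Finally, at each coincidence position $x_{\sigma^1_i}=\tilde{x}_{\sigma^2_{\kappa(i)}}$ both an $x$-phase and a $\tilde{x}$-phase contribute, and the integral yields $\delta(p_{\sigma^1_i} - p_{\sigma^1_i - 1} - q_{\sigma^2_{\kappa(i)}} + q_{\sigma^2_{\kappa(i)-1}})$.

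The remaining task is a purely algebraic rewriting of these $n$ coincidence deltas into the specific form stated. Within the support of the block-telescoping deltas one has $p_{\sigma^1_i - 1} = p_{\sigma^1_{i-1}}$ and $q_{\sigma^2_{\kappa(i)} - 1} = q_{\sigma^2_{\kappa(i)-1}}$, so the coincidence constraints simplify to $p_{\sigma^1_i} - p_{\sigma^1_{i-1}} = q_{\sigma^2_{\kappa(i)}} - q_{\sigma^2_{\kappa(i)-1}}$ for $i = 1,\ldots,n$. Summing these equations from index $i$ to $n$ expresses $p_{\sigma^1_{i-1}} - p_{\sigma^1_n}$ as an explicit linear combination of the $q$'s, and a direct index check (using the convention $\kappa(n+1) = n+1$, so that $q_{\sigma^2_{\kappa(n+1)-1}} = q_{\sigma^2_n}$) confirms this combination equals $q_{\sigma^2_{\kappa(i)-1}} - q_{\sigma^2_n} + l^\kappa_i(\boldsymbol{q}_{\sigma^2})$, giving exactly the stated delta. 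I expect this last combinatorial rearrangement --- keeping the many indices straight --- to be the main (though routine) technical obstacle; the probabilistic content is confined to the single invocation of Campbell's formula, and the Fourier calculation is standard.
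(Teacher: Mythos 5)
Your proposal is correct and follows essentially the same route as the paper: Campbell's formula for the factorial moments, elimination of the coincidence positions via the $\rho^{-1}\delta$ factors, Fourier integration of the remaining positions to produce the momentum deltas, and a final telescoping to reach the stated form of $\Lambda_n$. The only (cosmetic) difference is that the paper performs the telescoping by unimodular changes of variables in the position variables before integrating, whereas you integrate first and then rearrange the resulting coincidence deltas on the support of the block deltas; both steps are equivalent since the relevant linear maps have determinant $\pm 1$.
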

    \begin{remark}\label{RE:LE:Exp_ran_phases}
    Assume we are in the setting of Lemma~\ref{LE:Exp_ran_phases}. For any $\kappa\in\mathcal{S}_n$ we have that
    \begin{equation*}
    q_{\sigma^2_{\kappa(1)-1}} + l^{\kappa}_{1}(\boldsymbol{q}_{\sigma^2}) = q_0.
    \end{equation*}
    In the case where $\kappa=\mathrm{id}$ we have that $l^{\kappa}_{i}(\boldsymbol{q}_{\sigma^2}) =0$ for all $i$. For all other $\kappa$, there is at least one $i$ such that $l^{\kappa}_{i}(\boldsymbol{q}_{\sigma^2}) \neq 0$.
    
    We will later also need a slightly different version of the Lemma, where the function $\Lambda_n(\boldsymbol{p},\boldsymbol{q},\sigma^1,\sigma^2,\kappa)$ will be given by
      \begin{equation*}
  	\begin{aligned}
  	\MoveEqLeft \Lambda_n(\boldsymbol{p},\boldsymbol{q},\sigma^1,\sigma^2,\kappa) 
	\\
	&= \prod_{i=1}^n \delta(p_{\sigma^1_{i}}- q_{\sigma^2_{\kappa(i)}} - p_{\sigma^1_{1}}+ q_{\sigma^2_{1}} -\tilde{l}^{\kappa}_{i}(\boldsymbol{q}))   
	 \prod_{i=1}^{n+1} \prod_{m=\sigma^1_{i-1}+1}^{\sigma^1_{i}-1} \delta(p_m-p_{\sigma^1_{i}})  \prod_{m=\sigma^2_{i-1}+1}^{\sigma^2_{i}-1} \delta(q_m-q_{\sigma^1_{i}}),
	\end{aligned}
  \end{equation*}
    where we have used the notation $\tilde{l}^{\kappa}_{i}(\boldsymbol{q})= \sum_{j=1}^{i} ( q_{{\kappa(j-1)}}-q_{{\kappa(j)-1}} )$. To obtain this other statement the final change of variables in the following proof needs to be change to $x_n\mapsto x_n$ and $x_i\mapsto x_i-x_{i+1}$ for all $i\in\{1,\dots,n-1\}$. 
    \end{remark}
  \begin{proof}
  	We will in the proof use the notation $\mathcal{T}(\mathcal{X},k_1,k_2)$ for the integral we take expectation over in the lefthand side of \eqref{Exp_ran_phases_eq}. Using Campbell's theorem we get that the average is given by
\begin{equation*}
  	\begin{aligned}
  	\MoveEqLeft \Aver{\mathcal{T}(\mathcal{X},k_1,k_2)} 
	\\
	= {}& \rho^{k_1+k_2-n}  \int   \prod_{i=1}^n \delta(x_{\sigma_i^1}- \tilde{x}_{\sigma_{\tau(i)}^2}) \prod_{m=1}^{k_1} e^{ -i   \langle  \hbar^{-1/d}x_m,p_{m}-p_{m-1} \rangle} 
	 \prod_{m=1}^{k_2} e^{ i   \langle  \hbar^{-1/d}\tilde{x}_m,q_{m}-q_{m-1} \rangle}  f(\boldsymbol{p})  g(\boldsymbol{q}) \, d\boldsymbol{x} d\boldsymbol{\tilde{x}}d\boldsymbol{p}   d\boldsymbol{q}.  
	\end{aligned}
  \end{equation*}
	We can write the phase functions just involving $\boldsymbol{x}$ in the following way
	\begin{equation*}
		\begin{aligned}
	 \prod_{m=1}^{k_1} e^{ -i   \langle  \hbar^{-1/d}x_m,p_{m}-p_{m-1} \rangle} = \prod_{i=1}^{n+1}  e^{ -i  \hbar^{-1/d}  \langle  x_{\sigma^1_i},p_{\sigma^1_i} \rangle}e^{ i  \hbar^{-1/d}  \langle  x_{\sigma^1_{i-1}+1},p_{\sigma^1_{i-1}} \rangle} \prod_{m=\sigma^1_{i-1}+1}^{\sigma^1_{i}-1} e^{ -i  \hbar^{-1/d}  \langle  x_m-x_{m+1},p_{m} \rangle},
	\end{aligned}
	\end{equation*}
	where we have used the conventions that $\sigma^1_0=0$ and $\sigma^1_{n+1}=k_1$. In the cases where $\sigma^1_n=k_1$ the last factor in the product over $i$ is identical 1. The phases functions involving $\boldsymbol{\tilde{x}}$ can be written in an analogous way.
	Next  for all $i\in\{1,\dots,n+1\}$ we do the following change of variables
 \begin{equation*}
 \begin{aligned}
 	&x_{\sigma^1_i}\mapsto x_{\sigma^1_i} \quad\text{and}\quad  x_m\mapsto x_m - x_{m+1} \quad\text{for all $m\in\{ \sigma^1_{{i}-1}+1,\dots, {\sigma^1_{i}-1} \}$} 
	\\ 
	&\tilde{x}_{\sigma^2_i}\mapsto \tilde{x}_{\sigma^2_i} \quad\text{and}\quad  \tilde{x}_m\mapsto \tilde{x}_m -\tilde{ x}_{m+1} \quad\text{for all $m\in\{ \sigma^2_{{i}-1}+1,\dots, {\sigma^2_{i}-1} \}$} .
	\end{aligned}
 \end{equation*}
After this change of variables we evaluate all integrals in $\boldsymbol{\tilde{x}}$ and all $\boldsymbol{x}$ except $x_{\sigma_i^1}$ for all $i\in\{1,\dots,n\}$. This gives us the following expression
\begin{equation*}
  	\begin{aligned}
  	\Aver{\mathcal{T}(\mathcal{X},k_1,k_2)} 
	= \frac{(\hbar\rho(2\pi)^d)^{k_1+k_2-n}}{(2\pi)^{dn}}  \int  & \prod_{i=1}^n  e^{ -i   \langle  x_{i},p_{\sigma^1_i} -p_{\sigma^1_{i-1}}- q_{\sigma^2_{\kappa(i)}} +q_{\sigma^2_{\kappa(i)-1}} \rangle} \prod_{i=1}^{n+1} \Big\{ \prod_{m=\sigma^1_{i-1}+1}^{\sigma^1_{i}-1} \delta(p_m-p_{\sigma^1_{i-1}})
	\\
	&\times  \prod_{m=\sigma^2_{i-1}+1}^{\sigma^2_{i}-1} \delta(q_m-q_{\sigma^1_{i-1}})\Big\}  f(\boldsymbol{p})  g(\boldsymbol{q}) \, d\boldsymbol{x}  d\boldsymbol{p}   d\boldsymbol{q} , 
	\end{aligned}
  \end{equation*}
where we have done a relabelling and rescaling of the remaining $x$ variables. To obtain the stated form in the Lemma we do the change of variables $x_1\mapsto x_1$ and $x_i\mapsto x_i - x_{i-1}$ for $i  \in \{2,\dots,n\}$.  This yields
\begin{equation*}
  	\begin{aligned}
  	\MoveEqLeft \Aver{\mathcal{T}(\mathcal{X},k_1,k_2)} 
	= \frac{(\hbar\rho(2\pi)^d)^{k_1+k_2-n}}{(2\pi)^{dn}}  \int   \prod_{i=1}^n  e^{ i   \langle  x_{i},p_{\sigma^1_{i-1}}- q_{\sigma^2_{\kappa(i)-1}} - p_{\sigma^1_{n}}+ q_{\sigma^2_{n}} - l^{\kappa}_{i}(\boldsymbol{q}_{\sigma^2}) \rangle} 
	\\
	&\times\prod_{i=1}^{n+1} \prod_{m=\sigma^1_{i-1}+1}^{\sigma^1_{i}-1} \delta(p_m-p_{\sigma^1_{i-1}})  \prod_{m=\sigma^2_{i-1}+1}^{\sigma^2_{i}-1} \delta(q_m-q_{\sigma^1_{i-1}})  f(\boldsymbol{p})  g(\boldsymbol{q}) \, d\boldsymbol{x}  d\boldsymbol{p}  d\boldsymbol{q}, 
	\end{aligned}
  \end{equation*}
where we have used the notation $l^{\kappa}_{i}(\boldsymbol{q}_{\sigma^2})= \sum_{j=i+1}^{n+1} ( q_{\sigma^2_{\kappa(j)-1}}-q_{\sigma^2_{\kappa(j-1)}} )$. To obtain the result we evaluate the remaining integrals in $x$. This concludes the proof.  
  \end{proof}
Before the next Lemma we will need the following notation. We set
\begin{equation*}
\mathcal{X}_{\neq}^k = \{ (x_1,\ldots,x_k)\in \mathcal{X}^k :  \text{$x_i\neq x_j$ for all $i\neq j$}\}.   
\end{equation*} 
\begin{lemma}\label{LE:crossing_dom_ladder}
Let $k\in\N$ and $f$ a functions such that $f\in L^2(\R^{(k+1)d})$.  Then 
\begin{equation*}
	\begin{aligned}
	 \mathbb{E}\Big[ \big\lVert \sum_{\boldsymbol{x}\in \mathcal{X}_{\neq}^{k}}  f(\boldsymbol{x},y)\big\rVert_{L^2(\R^d_y)}^2 \Big] \leq 2 \sum_{n=0}^{k} \binom{k}{n} n!  \sum_{\sigma\in\mathcal{A}(k,n)}  \mathbb{E}\Big[\sum_{(\boldsymbol{x},\boldsymbol{\tilde{x}})\in \mathcal{X}_{\neq}^{2k}}  \prod_{i=1}^n \frac{\delta(x_{\sigma_i}- \tilde{x}_{\sigma_{i}}) }{\rho} \int_{\R^d} f(\boldsymbol{x},y)\overline{f(\boldsymbol{\tilde{x}},y)} \, dy \Big]. 
	\end{aligned}
\end{equation*}
\end{lemma}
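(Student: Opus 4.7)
The plan is to expand the modulus squared, classify summands by the coincidence pattern between $\boldsymbol{x}$ and $\boldsymbol{\tilde{x}}$, and bound each off-diagonal term by two diagonal ones via the elementary inequality $|ab|\leq\tfrac{1}{2}(|a|^2+|b|^2)$. First, by Tonelli, I bring the $L^2_y$-norm inside the expectation and expand the modulus squared to obtain
\begin{equation*}
\mathbb{E}\Big[\bigl\lVert \sum_{\boldsymbol{x}\in\mathcal{X}_{\neq}^{k}} f(\boldsymbol{x},\cdot)\bigr\rVert^2_{L^2(\R^d_y)}\Big]=\mathbb{E}\Big[\sum_{\boldsymbol{x},\boldsymbol{\tilde{x}}\in\mathcal{X}_{\neq}^{k}}\int_{\R^d} f(\boldsymbol{x},y)\overline{f(\boldsymbol{\tilde{x}},y)}\,dy\Big].
\end{equation*}
Any pair $(\boldsymbol{x},\boldsymbol{\tilde{x}})\in(\mathcal{X}_{\neq}^k)^2$ is uniquely specified by a number $n\in\{0,\dots,k\}$ of coincidences together with ordered sets $\sigma,\tau\in\mathcal{A}(k,n)$ recording which indices coincide on each side and a bijection $\pi\in\mathcal{S}_n$ giving the matching (so $x_{\sigma_i}=\tilde{x}_{\tau_{\pi(i)}}$), plus $2k-n$ distinct points of $\mathcal{X}$ filling the remaining slots. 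Applying Campbell's formula thus reorganises the expectation as
\begin{equation*}
\sum_{n=0}^{k}\sum_{\sigma,\tau\in\mathcal{A}(k,n)}\sum_{\pi\in\mathcal{S}_n} T(n,\sigma,\tau,\pi),
\end{equation*}
where $T(n,\sigma,\tau,\pi):=\mathbb{E}\bigl[\sum_{(\boldsymbol{x},\boldsymbol{\tilde{x}})\in\mathcal{X}_{\neq}^{2k}}\prod_{i=1}^{n}\rho^{-1}\delta(x_{\sigma_i}-\tilde{x}_{\tau_{\pi(i)}})\int f(\boldsymbol{x},y)\overline{f(\boldsymbol{\tilde{x}},y)}\,dy\bigr]$.

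Next, I bring each $T(n,\sigma,\tau,\pi)$ into a form that makes the reduction transparent. Integrating out the delta constraints and setting $u_l=x_{\sigma_l}$ for $l=1,\dots,n$, Fubini yields
\begin{equation*}
T(n,\sigma,\tau,\pi)=\rho^{2k-n}\int F_\sigma(\boldsymbol{u},y)\,\overline{F_\tau(u_{\pi^{-1}(1)},\dots,u_{\pi^{-1}(n)},y)}\,d\boldsymbol{u}\,dy,
\end{equation*}
where $F_\sigma(\boldsymbol{u},y)$ denotes $f(\cdot,y)$ with its $\sigma$-slots fixed to $\boldsymbol{u}$ and the remaining $k-n$ slots integrated out, and analogously for $F_\tau$. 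In particular $T(n,\sigma,\sigma,\mathrm{id})=\rho^{2k-n}\lVert F_\sigma\rVert_{L^2}^2\geq 0$, and since permuting the arguments of $F_\tau$ leaves its $L^2$-norm invariant, applying $|ab|\leq\tfrac{1}{2}(|a|^2+|b|^2)$ pointwise in $(\boldsymbol{u},y)$ and integrating gives the key estimate
\begin{equation*}
\bigl|T(n,\sigma,\tau,\pi)\bigr|\leq \tfrac{1}{2}\bigl[T(n,\sigma,\sigma,\mathrm{id})+T(n,\tau,\tau,\mathrm{id})\bigr].
\end{equation*}

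Finally I sum over $(\sigma,\tau,\pi)$. For each fixed $\sigma$ there are $\lvert\mathcal{A}(k,n)\rvert\,\lvert\mathcal{S}_n\rvert=\binom{k}{n}n!$ choices of $(\tau,\pi)$, and symmetrically for $\tau$, so
\begin{equation*}
\sum_{\sigma,\tau,\pi}\bigl|T(n,\sigma,\tau,\pi)\bigr|\leq \binom{k}{n}n!\sum_{\sigma\in\mathcal{A}(k,n)} T(n,\sigma,\sigma,\mathrm{id}),
\end{equation*}
which summed over $n$ yields the claimed inequality (in fact even with the sharper constant $1$ in place of $2$). The delicate point is the bookkeeping around the matching $\pi$: the key observation is that after integrating out the deltas, $\pi$ only relabels the arguments of one factor and is therefore invisible to the $L^2$-norms, which is what reduces the general crossing patterns to the diagonal ones on the right-hand side.
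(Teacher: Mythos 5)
Your proof is correct and follows essentially the same route as the paper's: the same partition of the double sum by coincidence pattern $(n,\sigma^1,\sigma^2,\kappa)$, the same reduction of each crossing term to the two corresponding ladder terms via $|ab|\le\tfrac12(|a|^2+|b|^2)$ (the paper uses $ab\le a^2+b^2$, which is why its constant is $2$ rather than your $1$), and the same $\binom{k}{n}\,n!$ count over the second pattern and the matching. The only cosmetic difference is that you make the factorization explicit via Campbell's formula and a permutation change of variables, whereas the paper phrases it through independence of the Poisson points and expectations over the disjoint sets of non-shared points.
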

\begin{proof}
From writing out the $L^2$-norm we get that
\begin{equation}\label{EQ:crossing_dom_ladder_1}
	\big\lVert \sum_{\boldsymbol{x}\in \mathcal{X}_{\neq}^{k}}  f(\boldsymbol{x},y)\big\rVert_{L^2(\R^d_y)}^2 = \sum_{n=0}^k \sum_{\sigma^1,\sigma^2\in\mathcal{A}(k,n)} \sum_{\kappa\in\mathcal{S}_n} \sum_{(\boldsymbol{x},\boldsymbol{\tilde{x}})\in \mathcal{X}_{\neq}^{2k}} \prod_{i=1}^n \frac{\delta(x_{\sigma_i}- \tilde{x}_{\sigma_{\kappa(i)}}) }{\rho} \int_{\R^d} f(\boldsymbol{x},y)\overline{f(\boldsymbol{\tilde{x}},y)} \, dy,
\end{equation}
where we are summing over all possible ways we can pair the two collision series. By linearity of the average we will consider a fixed $n$, $\sigma^1$, $\sigma^2$ and $\kappa$. Moreover, we will denote by $\mathcal{X}_{\sigma^i}$ the points of the point process who's index is given by the values of $\sigma^i$. Independence of the points of the point process gives us that
\begin{equation}\label{EQ:crossing_dom_ladder_2}
	\begin{aligned}
	\MoveEqLeft \mathbb{E}\Big[\sum_{(\boldsymbol{x},\boldsymbol{\tilde{x}})\in \mathcal{X}_{\neq}^{2k}} \prod_{i=1}^n \frac{\delta(x_{\sigma_i^1}- \tilde{x}_{\sigma_{\kappa(i)}^2}) }{\rho} \int_{\R^d} f(\boldsymbol{x},y)\overline{f(\boldsymbol{\tilde{x}},y)} \, dy \Big] 
	\\
	&\leq   \mathbb{E}_{\mathcal{X}_\sigma} \Big[  \int_{\R^d} \big| \mathbb{E}_{\mathcal{X}^{\neq}_{k}\setminus\mathcal{X}_{\sigma^1}} [ f(\boldsymbol{x},y)]    \mathbb{E}_{\mathcal{X}^{\neq}_{k}\setminus\mathcal{X}_{\sigma^2}} [\overline{f(\boldsymbol{\tilde{x}},y)} ] \big| \,dy \Big],
	\end{aligned}
\end{equation}	
where we have used that the left hand side is a positive number and moved the absolute value under the integrals to obtain the inequality. Using that for $a,b\geq0$ we have the inequality $ab\leq a^2+b^2$ we obtain that
\begin{equation}\label{EQ:crossing_dom_ladder_3}
	\begin{aligned}	
	\MoveEqLeft  \mathbb{E}_{\mathcal{X}_\sigma} \Big[  \int_{\R^d} \big| \mathbb{E}_{\mathcal{X}^{\neq}_{k}\setminus\mathcal{X}_{\sigma^1}} [ f(\boldsymbol{x},y)]    \mathbb{E}_{\mathcal{X}^{\neq}_{k}\setminus\mathcal{X}_{\sigma^2}} [\overline{f(\boldsymbol{\tilde{x}},y)} ] \big| \,dy \Big]
	\\
	\leq{}&  \mathbb{E}_{\mathcal{X}_\sigma} \big[  \int_{\R^d} \big| \mathbb{E}_{\mathcal{X}^{\neq}_{k-n}\setminus\mathcal{X}_{\sigma^1}} \big[ f(\boldsymbol{x},y)  \big]  \big|^2 +  \big| \mathbb{E}_{\mathcal{X}^{\neq}_{k-n}\setminus\mathcal{X}_{\sigma^2}}\big[\overline{ f(\boldsymbol{x},y)}  \big] \big|^2\,dy \big].
	\end{aligned}
\end{equation}
Using independence of the points in the point process again we get that
\begin{equation}\label{EQ:crossing_dom_ladder_4}
	\begin{aligned}	
	\MoveEqLeft  \mathbb{E}_{\mathcal{X}_\sigma} \big[  \int_{\R^d} \big| \mathbb{E}_{\mathcal{X}^{\neq}_{k-n}\setminus\mathcal{X}_{\sigma^1}} \big[ f(\boldsymbol{x},y)  \big]  \big|^2 +  \big| \mathbb{E}_{\mathcal{X}^{\neq}_{k-n}\setminus\mathcal{X}_{\sigma^2}}\big[\overline{ f(\boldsymbol{x},y))}  \big] \big|^2\,dy \big]
	\\
	&= \sum_{j=1}^2  \mathbb{E}\Big[\sum_{(\boldsymbol{x},\boldsymbol{\tilde{x}})\in \mathcal{X}_{\neq}^{2k}} \prod_{i=1}^n \frac{\delta(x_{\sigma_i^j}- \tilde{x}_{\sigma_{i}^j}) }{\rho} \int_{\R^d} f(\boldsymbol{x},y)\overline{f(\boldsymbol{\tilde{x}},y)} \, dy \Big]. 
	\end{aligned}
\end{equation}
From combining \eqref{EQ:crossing_dom_ladder_2}, \eqref{EQ:crossing_dom_ladder_3} and  \eqref{EQ:crossing_dom_ladder_4} we obtain the estimate
\begin{equation}\label{EQ:crossing_dom_ladder_5}
	\begin{aligned}
	\MoveEqLeft \mathbb{E}\Big[\sum_{(\boldsymbol{x},\boldsymbol{\tilde{x}})\in \mathcal{X}_{\neq}^{2k}} \prod_{i=1}^n \frac{\delta(x_{\sigma_i^1}- \tilde{x}_{\sigma_{\kappa(i)}^2}) }{\rho} \int_{\R^d} f(\boldsymbol{x},y)\overline{f(\boldsymbol{\tilde{x}},y)} \, dy \Big] 
	\\
	&\leq   \sum_{j=1}^2  \mathbb{E}\Big[\sum_{(\boldsymbol{x},\boldsymbol{\tilde{x}})\in \mathcal{X}_{\neq}^{2k}}\prod_{i=1}^n \frac{\delta(x_{\sigma_i^j}- \tilde{x}_{\sigma_{i}^j}) }{\rho} \int_{\R^d} f(\boldsymbol{x},y)\overline{f(\boldsymbol{\tilde{x}},y)} \, dy \Big].
	\end{aligned}
\end{equation}	
Using that the number of elements in $\mathcal{S}_n$ is $n!$ and the number of elements in $\mathcal{A}(k,n)$ is $\binom{k}{n}$ combined with \eqref{EQ:crossing_dom_ladder_1} and \eqref{EQ:crossing_dom_ladder_5} we obtain the stated estimate.
\end{proof}
\section{Duhamel expansion}\label{sec:duhamel_exp}
We start this section by introducing the heuristics of the Duhamel expansion and then introduce functions and operators that will appear naturally in our expansion and establish basic properties of these.
We will then perform the Duhamel expansion. The convergence of the expansion will rely on some technical estimates that will be proved in the following sections. The convergence of the expansion in the limit will be proven as a part of the proof for our main theorem. The difficulties in obtaining the convergence in the limit is that we need uniform bounds in $\hbar$ or bounds that goes to zero as $\hbar$ goes to zero. If we only needed the convergence for $\hbar$ fixed, it would be ``easier'' to obtain the convergence but the error terms would divergence as $\hbar$ goes to zero.

 The expansions builds on the fundamental theorem of calculus which gives us the Duhamel formula 
\begin{equation*}
	\begin{aligned}
	U_{\hbar,\lambda}(-t)U_{\hbar,0}(t) -I
	&=   \frac{i\lambda}{\hbar}  \int_0^t U_{\hbar,\lambda}(-t_1) W_\hbar U_{\hbar,0}(-t_1)   \, d\boldsymbol{t}_{1}
	\\
	&=   \frac{i\lambda}{\hbar}  \int_0^t U_{\hbar,\lambda}(-t_1) U_{\hbar,0}(-t_1)   W_\hbar^{t_1}   \, d\boldsymbol{t}_{1},
	\end{aligned}
\end{equation*}
where we have used the notation $ W_\hbar^{t_1}  =  U_{\hbar,0}(t_1)  W_\hbar U_{\hbar,0}(-t_1)$. We will obtain our expansion by iterating the above procedure. We separate terms into diagonal and off-diagonal forms, e.g.
\begin{equation*}
	W_\hbar^{t_2} W_\hbar^{t_1} =  \sum_{x\in\mathcal{X}} V_{\hbar,x}^{t_2}  \sum_{x\in\mathcal{X}} V_{\hbar,x}^{t_1} =  \sum_{x\in\mathcal{X}} V_{\hbar,x}^{t_2}V_{\hbar,x}^{t_1} +  \sum_{(x_1,x_2)\in\mathcal{X}_{\neq}^2} V_{\hbar,x_2}^{t_2} V_{\hbar,x_1}^{t_1}.
\end{equation*}
In each step of our iteration we will preform this type of sorting, which produces all possible ``collision series'' up to that point. This is simply a list over the potentials we encounter, where we allow for repetitions. When consecutive  potentials are located at the same point we call it ``internal scattering''. We will in our expansion consider these as one collision event and when the next potential is located at another point  as the next collision with a number of internal scatterings. We will denote the number of internal scatterings at the $i$'th collision by $\alpha_i$. For us a recollision will be if a potential at some location appears again non consecutively.     

When we do these expansions we will continuously keep track of the collision series. If certain collision terms appear we will stop the expansion. To be precise we will stop when the third recollision appears in a collision series. But if we just do this for the full operator $U_{\hbar,\lambda}(-t)$ we will not be able to control the error terms. So we will combine this with a time division argument. That is we write the operator $U_{\hbar,\lambda}(-t)$ as a product of the following form  
\begin{equation*}
	U_{\hbar,\lambda}(-t) = \prod_{\tau=1}^{\tau_0}U_{\hbar,\lambda}(-\tfrac{t}{\tau_0}), 
\end{equation*}
then we will do the expansion for each operator in the product. This will produce some main terms and some error terms.  Heuristically we will get that
\begin{equation*}
	U_{\hbar,\lambda}(-\tfrac{t}{\tau_0}) = \sum \mathcal{I} + \mathcal{R}_1, 
\end{equation*}
where $I$ is the sum over our main term and $\mathcal{R}_1$ is the reminder. The explicit expressions for these terms will be given below. For the ``next'' $U_{\hbar,\lambda}(-\tfrac{t}{\tau_0})$ we only expand if we have the main terms to the right. So we obtain that 
\begin{equation*}
	U_{\hbar,\lambda}(-\tfrac{2t}{\tau_0}) = \sum \mathcal{I} +\mathcal{R}_2+  U_{\hbar,\lambda}(-\tfrac{1}{\tau_0}) \mathcal{R}_1. 
\end{equation*}
Iterating this process we obtain an expression of the form 
\begin{equation*}
	U_{\hbar,\lambda}(-t) = \sum \mathcal{I} +\sum_{\tau=1}^{\tau_0}  U_{\hbar,\lambda}(-\tfrac{t(\tau_0-\tau}{\tau_0}) \mathcal{R}_\tau.
\end{equation*}
The reminder/error terms $\mathcal{R}_\tau$ will not only contain terms where we have observed a third recollision but also terms, where we have seen more than $k_0$ different collision centers in the time interval $[0,\frac{t}{\tau_0}]$.  We will later choose the numbers $k_0$ and $\tau_0$ to be $\hbar$ depended such that we can control the reminder/error terms. 

As stated above we need some notation to  keep track of these collision series and expansions. We will in the following subsection define the functions and operators, which we will use to keep track of the expansions and prove some basic properties of these.
\subsection{Definitions and properties of functions and operators for Duhamel expansion} 
We start by defining a function that will appear in the kernel for the operator associated to the internal scattering events.
\begin{definition}\label{def_reg_operator}
For $m \in \N$, $s \in [0,\infty]$, $\gamma\geq0$ and $V \in \mathcal{S}(\R^d)$ we define
\begin{equation*}
	\begin{aligned}
	 \MoveEqLeft \Psi_{m}^\gamma(p_m,p_0,s ; V)
	\\
	&= \frac{1}{(2\pi)^{dm}}   \int_{\R_{+}^{m-1}}  \int  \boldsymbol{1}_{[0,s]}\big(  \boldsymbol{t}_{1,m-1}^{+}\big) 
  \hat{V}(p_m-p_{m-1})   
	 \prod_{i=1}^{m-1}  e^{i t_i \frac{1}{2} (p_{i}^2-p_{0}^2+i\gamma)}\hat{V}(p_{i}-p_{i-1})\, d\boldsymbol{p}_{1,m-1}  d\boldsymbol{t}.
	\end{aligned}
\end{equation*}
\end{definition}
The  parameter $\gamma$ should be viewed as a regularisation parameter, which we will need later. We will in the following also use the convention that $ \Psi_{m}^0= \Psi_{m}$. In later estimates it will be convenient to write the function  $\Psi_{m}^\gamma(p_m,p_0,s ; V)$ as
 \begin{equation} \label{obs_con_form_psi}
	\begin{aligned}
	 \Psi_{m}^\gamma(p_m,p_0,s ; V)
	={}&    \int_{\R_{+}^{m-1}}  \int  \boldsymbol{1}_{[0,s]}\big(  \boldsymbol{t}_{1,m-1}^{+}\big) 
  \hat{\mathcal{V}}_{m}(p_m,p_0,\boldsymbol{p}_{1,m-1})   
  	e^{- \frac{1}{2} \gamma\boldsymbol{t}_{1,m-1}^{+}}
	    \prod_{i=1}^{m-1} e^{i \frac{1}{2} t_{i}(p_i^2-p_0^2)} \, d\boldsymbol{p}_{1,m-1}  d\boldsymbol{t},
	\end{aligned}
\end{equation}
where the functions  $\hat{\mathcal{V}}_{m}(p_m,p_0,\boldsymbol{p}_{1,m-1})$ is defined by
\begin{equation}\label{obs_con_form_psi_f}
	\hat{\mathcal{V}}_{m}(p_{m},p_0,\boldsymbol{p}_{1,m-1}) = \frac{1}{(2\pi)^{dm}}  \begin{cases}
	   \hat{V}(p_{m}-p_{0}) & \text{if $m=1$,} 
	 \\
	  \hat{V}(p_{m}-p_{m-1}) \hat{V}(p_{1} -p_{0} ) \prod_{i=2}^{m-1}  \hat{V}(p_{i}-p_{i-1}) & \text{if $m>1$.} 
	  \end{cases}
\end{equation}
 \begin{lemma}\label{psi_m_gamma_est}
For $V \in \mathcal{S}(\R^d)$, $m \in \N$, $\hbar \in (0,\hbar_0]$, $\gamma\geq0$  and $0\leq s$, we have
%
\begin{equation*}
	 \int |  \Psi_{m}^\gamma(p_m,p_0,s ; V)| \,dp_{0/m} \leq   \frac{C_d^{m-1}}{(2\pi)^d}  \norm{  \hat{V}}_{1,\infty,2d+2}^m,
\end{equation*}
\begin{equation*}
	\sup_{p_m,p_0\in\R^d} |  \Psi_{m}^\gamma(p_m,p_0,s ; V)|  \leq   \frac{C_d^{m-1}}{(2\pi)^d}   \norm{  \hat{V}}_{1,\infty,2d+2}^m,
\end{equation*}
where $\,dp_{0/m}$ means either the integral with respect to $p_0$ or with respect to $p_m$. The bound is uniform for $\hbar \in [0,\hbar_0]$ and $\gamma\geq0$.  
\end{lemma}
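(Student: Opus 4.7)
The plan is to reduce both estimates to a single bound on the oscillatory integral
\[
I(\boldsymbol{t},p_0,p_m)=\frac{1}{(2\pi)^{dm}}\int \hat V(p_m-p_{m-1})\prod_{i=1}^{m-1}e^{i t_i p_i^2/2}\hat V(p_i-p_{i-1})\,d\boldsymbol{p}_{1,m-1},
\]
obtained after pulling $\prod_i e^{-it_i p_0^2/2}$ out as a unimodular factor and bounding $e^{-\gamma t_i/2}\leq 1$. Dropping the indicator $\boldsymbol{1}_{[0,s]}(\boldsymbol{t}_{1,m-1}^+)$ in absolute value, it suffices to prove
\[
\sup_{p_0,p_m}|I(\boldsymbol{t},p_0,p_m)|\leq A_d^{m-1}\prod_{i=1}^{m-1}\frac{1}{\max(1,|t_i|)^{d/2}}\frac{\|\hat V\|_{1,\infty,2d+2}^m}{(2\pi)^{dm}},
\]
and the same with $\sup_{p_0,p_m}$ replaced by the $L^1$-norm in $p_0$ or $p_m$.

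The pointwise bound is a direct application of Lemma~\ref{app_quadratic_integral_tech_est} with $Q(\boldsymbol{t})=\mathrm{diag}(t_1 I_d,\ldots,t_{m-1}I_d)$ and amplitude $f(\boldsymbol{p})=\hat V(p_m-p_{m-1})\prod_{i=1}^{m-1}\hat V(p_i-p_{i-1})$. The key input is the $L^1$-norm of mixed derivatives of $f$: each $\partial_{p_i}^{\alpha_i}$ acts, via Leibniz, on the two neighbouring factors $\hat V(p_{i+1}-p_i)$ and $\hat V(p_i-p_{i-1})$, producing at most $C_d^{m-1}$ terms in each of which every factor $\hat V(p_j-p_{j-1})$ carries a derivative of total order $\leq \alpha_{j-1}+\alpha_j\leq 2(d+1)=2d+2$. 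Performing the change of variables $q_i=p_i-p_{i-1}$ for $i=1,\ldots,m-1$, the typical Leibniz term becomes
\[
\prod_{i=1}^{m-1}(\partial^{\beta_i}\hat V)(q_i)\cdot(\partial^{\beta_m}\hat V)\bigl(p_m-p_0-\boldsymbol{q}_{1,m-1}^+\bigr);
\]
bounding the last factor in $L^\infty$ and the others in $L^1$ yields $\|\hat V\|_{1,\infty,2d+2}^m$ uniformly in $(p_0,p_m)$, which combined with Lemma~\ref{app_quadratic_integral_tech_est} gives the claimed pointwise bound.

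For the $L^1$-in-$p_m$ (respectively $p_0$) estimate, the same Leibniz expansion and change of variables $q_i=p_i-p_{i-1}$ is used, but the $p_m$-integration is moved inside and handled by the further substitution $u=p_m-p_0-\boldsymbol{q}_{1,m-1}^+$ (respectively $u=p_m-p_0-\boldsymbol{q}_{1,m-1}^+$ with $p_0$ free); this converts the $L^\infty$ norm of the terminal factor into its $L^1$ norm, still controlled by $\|\hat V\|_{1,\infty,2d+2}$, while the inner $\boldsymbol{q}$-integrals give the remaining $L^1$ factors. Thus the pre-integration bound with $\prod\max(1,|t_i|)^{-d/2}$ persists after integration over $p_0$ or $p_m$. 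Finally, Fubini and the elementary estimate
\[
\int_0^\infty\frac{dt}{\max(1,t)^{d/2}}=1+\frac{2}{d-2}\qquad (d\geq 3)
\]
make the outer $\boldsymbol{t}$-integral finite and independent of $s$ and $\gamma$; collecting the $(2\pi)^{-d(m-1)}$ factor and all constants (which depend only on $d$) into a new constant $C_d$ yields both claimed inequalities. The only mild obstacle is bookkeeping: one must verify that in the $L^1$-in-$p_0/p_m$ case the same factor $\hat V(p_m-p_{m-1})$ appears in \emph{every} Leibniz term (with a derivative of order at most $2d+2$), so that the change of variables absorbing $p_0$ or $p_m$ always produces an $L^1$-norm rather than an $L^\infty$-norm; this is automatic because $p_m$ (resp.\ $p_0$) enters $f$ only through the first (resp.\ last) $\hat V$-factor.
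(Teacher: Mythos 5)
Your argument is correct and takes essentially the same route as the paper's proof: bound $e^{-\gamma t_i/2}\le 1$, drop the indicator, apply Lemma~\ref{app_quadratic_integral_tech_est} to the integral in $p_1,\dots,p_{m-1}$, control the amplitude by Leibniz and the telescoping substitution $q_i=p_i-p_{i-1}$ to obtain $\norm{\hat{V}}_{1,\infty,2d+2}^m$ (one factor in $L^\infty$ for the sup bound, all in $L^1$ after integrating in $p_0$ or $p_m$), and finish with $\int_0^\infty \max(1,t)^{-d/2}\,dt = d/(d-2)$ for $d\ge 3$. The only (inconsequential) slip is in the final remark: $p_0$ enters only through the first factor $\hat{V}(p_1-p_0)$ and $p_m$ only through the last, not the other way around, which is exactly what makes your substitution work.
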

 \begin{proof}
We will prove the bound for the case where we integrate with respect to $p_m$. The other case is proven analogously. We have by the definition of $ \Psi_{m}^\gamma$ (Definition~\ref{def_reg_operator}) that
\begin{equation*}
	\begin{aligned}
	 \int | \Psi_{m}^\gamma(p_m,p_0,s ; V)| \,dp_m
	\leq  \frac{1}{(2\pi)^{dm}} \int  \int_{\R_{+}^{m-1}} \Big|  \int \prod_{i=1}^{m}  \hat{V}(p_{i}-p_{i-1})   
 \prod_{i=1}^{m-1}  e^{i t_i \frac{1}{2} p_{i}^2}\, d\boldsymbol{p}_{1,m-1} \Big| d\boldsymbol{t} dp_m.
	\end{aligned}
\end{equation*}
We note here the integral over the $p$'s are of the same form as in Lemma~\ref{app_quadratic_integral_tech_est}. Hence we get the estimate
\begin{equation}\label{bound_psi_gam_1}
	\begin{aligned}
	\int_{\R^d} | \Psi_{m}^\gamma(p_m,p_0,s ; V)| \,dp_m
	\leq \frac{(d-2)^{m-1}}{d^{m-1} (2\pi)^d}   C_d^{m-1} \norm{ \hat{V}}_{1,\infty,2d+2}^m \int_{\R_{+}^{m-1}}   \prod_{i=1}^{m-1} \frac{1}{\max(1, | t_i |)^\frac{d}{2}}  d\boldsymbol{t}_{m-1,1}.
	\end{aligned}
\end{equation}
Moreover we have that
\begin{equation}\label{bound_psi_gam_2}
	\begin{aligned}
	 \int_{\R_{+}^{m-1}} \prod_{i=1}^{m-1} \frac{1}{\max(1, | t_i |)^\frac{d}{2}}  d\boldsymbol{t}_{m-1,1} = \left( \frac{d}{d-2} \right)^{m-1}.
  	\end{aligned}
\end{equation}
By combining \eqref{bound_psi_gam_1} and   \eqref{bound_psi_gam_2} we obtain the first estimate. The second estimate is proven analogously. This concludes the proof.
\end{proof}
 \begin{remark}\label{born_series_convergence_remark}
 Recall that the $T$-operator $T^\gamma(E)$ is defined as the $\gamma\to 0_+$ limit of 
\begin{equation*}
	T^\gamma(E) = \lambda V +\lambda^2 V \frac{1}{E-(-\frac{1}{2}\Delta+\lambda V)+i\gamma} V.
\end{equation*}
From the resolvent formalism we obtain the formal series expansion given by
\begin{equation*}
	T^\gamma(E) = \lambda V \sum_{n=0}^\infty \big[ \lambda  \frac{1}{E-(-\frac{1}{2}\Delta)+i\gamma} V \big]^n.
\end{equation*}
We are interested in the kernel of the above expression in momentum representation and we denote this kernel by $\hat{T}^\gamma(p,p_0) $, where we have set $E=\frac{1}{2}p_0^2$. When we do the same for the kernels in the formal expansion we observe that the kernel of the $n$'th operator is given by
\begin{equation*}
	i^{n-1} \Psi_{n}^\gamma(p,p_0,\infty ; V).
\end{equation*}
From this we get the formal expression for the kernel of the operator in momentum representation is given by
\begin{equation}\label{born_series}
	\hat{T}^\gamma(p,p_0) =-i  \sum_{n=1}^\infty (i\lambda)^{n} \Psi_{n}^\gamma(p,p_0,\infty ; V).
\end{equation}
Lemma~\ref{psi_m_gamma_est} give us, under the condition $\lambda C_d  \norm{  \hat{V}}_{1,\infty,2d+2} <1$,  that the series in \eqref{born_series} converges in the strong operator topology,  uniformly for all $\gamma\geq 0$.
 \end{remark}
The next definition is of the operators which describes the internal collisions.
 \begin{definition}\label{def_reg_operator_2}
For $m \in \N$, $s_1 \geq s_2 \geq 0$, $\hbar>0$, $z\in\R^d$ and $V \in \mathcal{S}(\R^d)$ we define $\Theta_{m}(s_{1},{s}_{2},z;V,\hbar)$ as the operator 
\begin{equation*}
	\begin{aligned}
 \frac{1}{\hbar} \int_{\R_{+}^{m-1}} \boldsymbol{1}_{[0,\hbar^{-1}(s_{1}-s_{2})]}\big(\boldsymbol{t}_{1,m-1}^+\big) U_{\hbar,0}(-s_{2}) V_{\hbar,z} \Big\{ \prod_{i=1}^{m-1} U_{\hbar,0}(-t_{i}) V_{\hbar,z}\Big\}  U_{\hbar,0}(\boldsymbol{t}_{1,m-1}^+ +s_2)
	   \, d\boldsymbol{t}_{1,m-1},
	\end{aligned}
\end{equation*}
where $V_{\hbar,z}$ is defined in \eqref{def_potential_int}. 
\end{definition}
\begin{remark}\label{Re:Kernel_of_theta}
Note that the kernel of $\Theta_{m}(s_{1},{s}_{2},z;V,\hbar)$ in momentum representation is given by
\begin{equation}\label{EQ:ef_reg_operator_2}
	\begin{aligned}
	 (p_{m},p_{0}) \mapsto{} \frac{1}{\hbar} e^{-i\hbar^{-1/d}\langle z,p_{m}-p_{0} \rangle} 
  e^{i s_{2}\hbar^{-1} \frac{1}{2} (p_{m}^2-p_{0}^2)} \Psi_{m}(p_{m},p_{0},\hbar^{-1}(s_{1}-s_{2}); V),
  	\end{aligned}
\end{equation}
where the functions $\Psi_{m}$ is defined in Definition~\ref{def_reg_operator}. Using this observation, Lemma~\ref{psi_m_gamma_est} and the Schur test we obtain that
\begin{equation*}
	\norm{\Theta_{m}(s_{1},{s}_{2},z;V,\hbar)}_{\mathrm{op}} \leq \hbar^{-1} C_d^{m-1} \sup_{|\beta|\leq2d+2} \norm{\partial^\beta \hat{V}}_{L^1(\R^d)}^{m},
\end{equation*}
where $\norm{}_{\mathrm{op}}$, is the operator norm. We will in the following for $\gamma\geq0$ use the notation $\Theta_{m}^\gamma(s_{1},{s}_{2},z;V,\hbar)$ for operators where the kernels have the same form as in \eqref{EQ:ef_reg_operator_2} with $ \Psi_{m}$ replaced by $ \Psi_{m}^\gamma$.
\end{remark} 
We will now start to define the operators associated to  fully expanded and truncated terms. We start with terms without recollisions. 
 \begin{definition}\label{functions_for_exp_def}
Let $H_{\hbar,\lambda} = -\frac{\hbar^2}{2}\Delta + \lambda W_\hbar(x)$, where  $W_\hbar$ and $\lambda$  satisfy Assumption~\eqref{assump_1}, $\hbar \in (0,\hbar_0]$ and  let $ U_{\hbar,\lambda}(t)=e^{-it\hbar^{-1}H_{\hbar,\lambda}}$.
For any $k \in \N$, $\gamma\geq0$ and $\boldsymbol{x} \in \mathcal{X}_{\neq}^k$ we define the operators
\begin{equation*}
	\begin{aligned}
	\mathcal{I}_{0,0}^\gamma(k,\boldsymbol{x},t;\hbar) = \sum_{\alpha\in \N^k} (i\lambda)^{|\alpha|}  \int_{[0,t]_{\leq}^k}   \prod_{m=1}^k \Theta_{\alpha_m}^\gamma(s_{m-1},{s}_{m},x_m;V,\hbar)\, d\boldsymbol{s}_{k,1}U_{\hbar,0}(-t).
	  \end{aligned}
\end{equation*}
We will use the convention $\mathcal{I}_{0,0}^0(k,\boldsymbol{x},t;\hbar)=\mathcal{I}_{0,0}(k,\boldsymbol{x},t;\hbar)$. For a number $k_0\in\N$ and $\boldsymbol{x} \in \mathcal{X}_{\neq}^{k_0+1}$ we define the operator 
\begin{equation*}
	\begin{aligned}
	\mathcal{E}^{k_0}_{0,0}(\boldsymbol{x},t;\hbar) =  \sum_{\alpha\in \N^{k_0} \times\{1\}}  (i\lambda)^{|\alpha|} \int_{0}^{t}  U_{\hbar,\lambda}(-s_{k_0+1}) U_{\hbar,0}(s_{k_0+1})  \tilde{\mathcal{E}}^{k_0}(\boldsymbol{x},s_{k_0+1},\alpha,t;\hbar) \, ds_{k_0+1},
	  \end{aligned}
\end{equation*}
where
\begin{equation*}
	\begin{aligned}
	\tilde{\mathcal{E}}^{k_0}_{0,0}(\boldsymbol{x},s_{k_0+1},\alpha,t;\hbar) =   \int_{[0,t]_{\leq}^{k_0}}   \boldsymbol{1}_{[s_{k_0+1},t]}(s_{k_0})   \prod_{m=1}^{k_0+1} \Theta_{\alpha_m}(s_{m-1},{s}_{m},x_m;V,\hbar)\, d\boldsymbol{s}_{k_0,1}U_{\hbar,0}(-t).
	  \end{aligned}
\end{equation*}
\end{definition}
\begin{remark}
Note that the operators just defined is indeed bounded operators. To see this use Remark~\ref{Re:Kernel_of_theta} and Assumption~\eqref{assump_1} to obtain the the sum congresses absolutely in the space of bounded operators. The same will be true for the operators defined in Definition~\ref{functions_for_exp_rec_def} and Definition~\ref{def_recol_error}. 
\end{remark}
Before we define the operators associated fully expanded and truncated terms, where we have recollisions, we need some notation to keep track of in which position we have the recollision. We do this by defining the following classes of maps. Firstly the class describing one recollision. 
\begin{definition}
	For any $k\geq3$ in $\N$ we let $\mathcal{Q}_{k,1,1}$ be the set of maps $\iota:\{1,\dots,k\}\mapsto\{1,\dots,k-1\}$ such that
	\begin{equation*}
		\iota(m) =
		\begin{cases}
		m & \text{if $m< m_2$}
		\\
		m_1 &\text{if $m= m_2$}
		\\
		m-1 & \text{if $m> m_2$},
		\end{cases}
	\end{equation*}
	where $m_1\in\{1,\dots,k-2\}$ and $m_2\in\{m_1+2,\dots,k\}$. Moreover, to each map $\iota$ we associate $\iota^{*}$ defined by
	\begin{equation*}
		\iota^{*}(m) =
		\begin{cases}
		m & \text{if $m< m_2$}
		\\
		m + 1 &\text{if $m\geq m_2$}.
		\end{cases}
	\end{equation*}
\end{definition}
For two recollisions we can either recollide two times at the same position or we can have two recollision with different potentials. We will distinguish these two scenarios by defining two diffrent classes of maps. The fist class is when we recollide twice at the same scatterer
\begin{definition}
	For any $k\geq5$ in $\N$ we let $\mathcal{Q}_{k,2,1}$  be the set of maps  $\iota:\{1,\dots,k\}\mapsto\{1,\dots,k-2\}$ such that  	%
	\begin{equation*}
		\iota(m) =
		\begin{cases}
		m & \text{if $m< m_2$}
		\\
		m_1 &\text{if $m= m_2$}
		\\
		m-1 & \text{if $ m_2<m<m_3$}
		\\
		m_1 &\text{if $m= m_3$}
		\\
		m-2 & \text{if $ m>m_3$},
		\end{cases}
	\end{equation*}
	where $m_1\in\{1,\dots,k-4\}$, $m_2\in\{m_1+2,\dots,k-2\}$ and $m_3\in\{m_2+2,\dots,k\}$. Moreover, to each map $\iota$ we associate $\iota^{*}$ defined by
	\begin{equation*}
		\iota^{*}(m) =
		\begin{cases}
		m & \text{if $m< m_2$}
		\\
		m + 1 &\text{if $ m_2 \leq m< m_3$}
		\\
		m + 2 &\text{if $ m \geq m_3$}.
		\end{cases}
	\end{equation*}
\end{definition}
\begin{definition}
	For any $k\geq4$ in $\N$ we let $\mathcal{Q}_{k,2,2}$ be the set of maps $\iota:\{1,\dots,k\}\mapsto\{1,\dots,k-2\}$ such that 
	\begin{equation*}
		\iota(m) =
		\begin{cases}
		m & \text{if $m< m_{1,2}$}
		\\
		m_{1,1} &\text{if $m= m_{1,2}$}
		\\
		m-1 & \text{if $ m_{1,2}<m<m_{2,2}$}
		\\
		m_{2,1} &\text{if $m= m_{2,2}$}
		\\
		m-2 & \text{if $ m>m_{2,2}$},
		\end{cases}
	\end{equation*}
	where $m_{1,1}\in\{1,\dots,k-2\}$, $m_{1,2}\in\{m_{1,1}+2,\dots,k-2\}$, $m_{2,1}\in\{1,\dots,k-2\}\setminus\{m_{1,1},m_{1,2} \}$ and $m_{2,2}\in\{\min(m_{2,1}+2,m_{1,2}+1),\dots,k\}$.  Moreover, to each map $\iota$ we associate $\iota^{*}$ defined by
	\begin{equation*}
		\iota^{*}(m) =
		\begin{cases}
		m & \text{if $m< m_{1,2}$}
		\\
		m + 1 &\text{if $ m_{1,2} \leq m< m_{2,2}$}
		\\
		m + 2 &\text{if $ m \geq m_{2,2}$}.
		\end{cases}
	\end{equation*}
\end{definition}
\begin{remark}\label{Remark_grow_of_Q}
In the following analysis we will need to understand how many maps are in $Q_{k,i,j}$ for $(i,j)\in\{(0,0),(1,1),(2,1),(2,2)\}$ and each $k$, where $Q_{k,0,0}=\{\mathrm{id}\}$. By counting the maps in each set we get that
\begin{equation*}
	\#Q_{k,i,j} = 
	\begin{cases}
	1 & \text{$(i,j)=(0,0)$ and $k\geq1$}
	\\
	\frac{(k-2)(k-1)}{2} & \text{$(i,j)=(1,1)$ and $k\geq3$}
	\\
	\frac{(k-4)(k-3)(k-2)}{6} & \text{$(i,j)=(2,1)$ and $k\geq5$}
	\\
	\frac{3k^4-30k^2+117k^2-210k+144}{24} & \text{$(i,j)=(2,2)$ and $k\geq4$},
	\end{cases}
\end{equation*}
where $\#Q_{k,i,j} $ is notation for the number of elements in the set $Q_{k,i,j}$.
\end{remark}
For the operators associated with fully expanded and truncated terms with recollisions we have the following definition: 
 \begin{definition}\label{functions_for_exp_rec_def}
Let $H_{\hbar,\lambda} = -\frac{\hbar^2}{2}\Delta + \lambda W_\hbar(x)$, where  $W_\hbar$ and $\lambda$  satisfy Assumption~\eqref{assump_1}, $\hbar \in (0,\hbar_0]$ and  let $ U_{\hbar,\lambda}(t)=e^{-it\hbar^{-1}H_{\hbar,\lambda}}$. For $(i,j)\in\{(1,1),(2,1),(2,2)\}$,  $k\geq k_{ij} $ in $\N$, $\boldsymbol{x} \in \mathcal{X}_{\neq}^{k-i}$	 and $\iota\in\mathcal{Q}_{k,i,j}$ we define the operators
\begin{equation*}
	\begin{aligned}
	\mathcal{I}_{i,j}(k,\boldsymbol{x},\iota,t;\hbar) =  \sum_{\alpha\in \N^k} (i\lambda)^{|\alpha|} \int_{[0,t]_{\leq}^k}   \prod_{m=1}^k \Theta_{\alpha_m}(s_{m-1},{s}_{m},x_{\iota(m)};V,\hbar)\, d\boldsymbol{s}_{k,1}U_{\hbar,0}(-t),
	  \end{aligned}
\end{equation*}
where $k_{11}=3$, $k_{21}=5$ and $k_{22}=4$. For a number $(i,j)\in\{(1,1),(2,1),(2,2)\}$, $k_0\geq5$ in $\N$, $\boldsymbol{x} \in \mathcal{X}_{\neq}^{k_0+1-i}$ and $\iota\in\mathcal{Q}_{k_0+1,i,j}$ we define the operator 
\begin{equation*}
	\begin{aligned}
	\mathcal{E}_{i,j}^{k_0}(\boldsymbol{x},\iota,t;\hbar) = \sum_{\alpha\in \N^{k_0} \times\{1\}}(i\lambda)^{|\alpha|}    \int_{0}^{t}  U_{\hbar,\lambda}(-s_{k_0}) U_{\hbar,0}(s_{k_0})  \tilde{\mathcal{E}}_{i,j}^{k_0}(\boldsymbol{x},s_{k_0},\alpha,\iota,t;\hbar) \, ds_{k_0},
	  \end{aligned}
\end{equation*}
where
\begin{equation*}
	\begin{aligned}
	\ \tilde{\mathcal{E}}_{i,j}^{k_0}(\boldsymbol{x},s_{k_0},\alpha,\iota,t;\hbar) =   \int_{[0,t]_{\leq}^{k_0}}   \boldsymbol{1}_{[s_{k_0},t]}(s_{k_0-1})   \prod_{m=1}^{k_0+1} \Theta_{\alpha_m}(s_{m-1},{s}_{m},x_{\iota(m)};V,\hbar)\, d\boldsymbol{s}_{k_0-1,1}U_{\hbar,0}(-t).
	  \end{aligned}
\end{equation*}
\end{definition}
Next definition is of the truncated terms where we have encountered a third recollision.
\begin{definition}\label{def_recol_error}
Let $H_{\hbar,\lambda} = -\frac{\hbar^2}{2}\Delta + \lambda W_\hbar(x)$, where  $W_\hbar$ and $\lambda$  satisfy Assumption~\eqref{assump_1}, $\hbar \in (0,\hbar_0]$ and  let $ U_{\hbar,\lambda}(t)=e^{-it\hbar^{-1}H_{\hbar,\lambda}}$. For $j\in\{1,2\}$, any $k,k_1 \in \N_0$, $\boldsymbol{x} \in \mathcal{X}_{\neq}^{k_1+k-2}$ and $\iota\in\mathcal{Q}_{k_1+k,2,j}$, where we assume $k_1+k\geq6-j$ we define the operators
\begin{equation*}
	\begin{aligned}
	\MoveEqLeft \mathcal{E}_{2,j}^{\mathrm{rec}}(k,k_1,\boldsymbol{x},t;\hbar) 
	\\
	={}&  \hbar^{-1} \sum_{\alpha\in \N^k}(i\lambda)^{|\alpha|+1}    \int_{0}^t   U_{\hbar,\lambda}(-s_{k+1}) U_{\hbar,0}(s_{k+1})\sum_{l=1,\iota(l)\neq k_1+k}^{k_1+k-2}   \tilde{\mathcal{E}}_{2,j}^{\mathrm{rec}}(s_{k+1},k,\boldsymbol{x},\alpha,l,t,\iota;\hbar) \,ds_{k+1},
	\end{aligned}
\end{equation*}
where
\begin{equation*}
	\begin{aligned}
	\MoveEqLeft \tilde{\mathcal{E}}_{2,j}^{\mathrm{rec}}(s_{k+1},k,\boldsymbol{x},\alpha,l,t,\iota;\hbar)
	\\
	={}&    \int_{[0,t]_{\leq}^{k}} \boldsymbol{1}_{[s_{k+1},t]}(s_k)   V^{s_{k+1}}_{\hbar, x_{l}} \prod_{m=1}^k \Theta_{\alpha_m}(s_{m-1},{s}_{m},x_{\iota(m+k_1)};V,\hbar)  \, d\boldsymbol{s}_{k,1}U_{\hbar,0}(-t). 
	\end{aligned}
\end{equation*}
\end{definition}
As described above we will not only preform one Duhamel expansion but do a time division and for each part we will preform the duhamel expansion. These last two definitions will define the different error operators that will appear when combining our stopping rules and the time division. 
\begin{definition}\label{def_remainder_k_0}
Assume we are in the same setting as Definition~\ref{functions_for_exp_def} and Definition~\ref{functions_for_exp_rec_def}. For numbers $\tau_0\in\N$, $k_0\geq5$ and $(i,j)\in\{(0,0),(1,1),(2,1),(2,2)\}$ we define the operator
\begin{equation*}
	\begin{aligned}
	\mathcal{R}_{1,i,j}^{k_0}(\hbar) =
	  \sum_{\iota\in\mathcal{Q}_{k_0+1,i,j}}\sum_{\boldsymbol{x}_1\in \mathcal{X}_{\neq}^{k_0+1-i}}  \mathcal{E}_{i,j}^{k_0}(\boldsymbol{x}_1,\iota,\tfrac{t}{\tau_0};\hbar) .
	 \end{aligned}
\end{equation*}
For $\tau\in\{2,\dots,\tau_0\}$ we define the operator
\begin{equation*}
	\begin{aligned}
\MoveEqLeft \mathcal{R}_{\tau,i,j}^{k_0}(N;\hbar) 
	 =  \sum_{\iota\in\mathcal{Q}_{k_0+1,i,j}}   \sum_{\boldsymbol{x}\in \mathcal{X}_{\neq}^{k_0+1-i}}  \mathcal{E}_{i,j}^{k_0}(\boldsymbol{x},\iota,\tfrac{t}{\tau_0};\hbar) U_{\hbar,0}(-\tfrac{\tau-1}{\tau_0}t;\hbar) 
	 \\
	 &+ \sum_{k_2=1}^{k_0} \sum_{\iota\in\mathcal{Q}_{k_0+k_2+1,i,j}}   \sum_{(\boldsymbol{x}_1,\boldsymbol{x}_2)\in \mathcal{X}_{\neq}^{k_0+k_2+1-i}}  \mathcal{E}_{i,j}^{k_0}(\boldsymbol{x}_1,\iota,\tfrac{t}{\tau_0};\hbar) \mathcal{I}_{i,j}(k_2,\boldsymbol{x}_2,\iota,\tfrac{\tau-1}{\tau_0}t;\hbar) 
	 \\
	 &+  \sum_{k_2=1}^{k_0} \sum_{\iota\in\mathcal{Q}_{k_0+k_2+1,i,j}}   \sum_{(\boldsymbol{x}_1,\boldsymbol{x}_2)\in \mathcal{X}_{\neq}^{k_0+k_2+1-i}}  \mathcal{E}_{i,j}^{k_0+1}((x_{2,k_2},\boldsymbol{x}_1),\iota,\tfrac{t}{\tau_0};\hbar) \mathcal{I}_{i,j}(k_2,\boldsymbol{x}_2,\iota,\tfrac{\tau-1}{\tau_0}t;\hbar) 
	 \\
	 &+ \sum_{k_1=1}^{k_0} \sum_{k_2=k_0-k_1+1}^{k_0} \sum_{\iota\in\mathcal{Q}_{k_1+k_2,i,j}} 
	  \sum_{(\boldsymbol{x}_1,\boldsymbol{x}_2)\in \mathcal{X}_{\neq}^{k_1+k_2-i}}  \mathcal{I}_{i,j}(k_1,\boldsymbol{x}_1,\iota,\tfrac{t}{\tau_0};\hbar)  \mathcal{I}_{i,j}(k_2,\boldsymbol{x}_2,\iota,\tfrac{\tau-1}{\tau_0}t;\hbar)
	  \\
	  &+  \sum_{k_1=1}^{k_0} \sum_{k_2=k_0-k_1+1}^{k_0} \sum_{\iota\in\mathcal{Q}_{k_1+k_2,i,j}} 
	  \sum_{(\boldsymbol{x}_1,\boldsymbol{x}_2)\in \mathcal{X}_{\neq}^{k_1+k_2-i}}  \mathcal{I}_{i,j}(k_1+1,(x_{2,k_2},\boldsymbol{x}_1),\iota,\tfrac{t}{\tau_0};\hbar)  \mathcal{I}_{i,j}(k_2,\boldsymbol{x}_2,\iota,\tfrac{\tau-1}{\tau_0}t;\hbar), 
	 \end{aligned}
\end{equation*}
where we use the convention that $\mathcal{Q}_{n,0,0}=\{\mathrm{id}\}$ for all $n\in\N$.
\end{definition}
\begin{definition}\label{def_recol_reminder}
Assume we are in the same setting as Definition~\ref{def_recol_error}. For numbers $\tau_0\in\N$ and $k_0\geq 5$ in $\N$  we define the operator
\begin{equation*}
	\mathcal{R}_1^{\mathrm{rec}}(k_0;\hbar)  = \sum_{j=1}^2
	  \sum_{k_1=6-j}^{k_0} \sum_{\iota\in\mathcal{Q}_{k_1},2,j}   \sum_{\boldsymbol{x}_1\in \mathcal{X}_{\neq}^{k_1-2}}  \mathcal{E}_{2,j}^{\mathrm{rec}}(k_1,0,\boldsymbol{x}_1,\iota,\tfrac{t}{\tau_0};\hbar).
\end{equation*}
For $\tau\in\{2,\dots,\tau_0\}$ we define the operators
\begin{equation*}
	\begin{aligned}
	 \mathcal{R}_\tau^{\mathrm{rec}}(k_0;\hbar) ={}&\sum_{j=1}^2 \sum_{k_1=1}^{k_0} \sum_{k_2=0}^{k_0} \sum_{\iota\in\mathcal{Q}_{k_1+k_2,2,j}}  \boldsymbol{1}_{\{k_1+k_2\geq6-j\}} 
	 \\
	 &\times  \sum_{(\boldsymbol{x}_1,\boldsymbol{x}_2)\in \mathcal{X}_{\neq}^{k_1+k_2-2}}  \mathcal{E}_{2,j}^{\mathrm{rec}}(k_1,k_2,\boldsymbol{x}_1,\iota,\tfrac{t}{\tau_0};\hbar) \mathcal{I}_{2,j}^{\mathrm{rec}}(k_2,\boldsymbol{x}_2,\iota,\tfrac{\tau-1}{\tau_0}t;\hbar)
	 \\
	 &+ \sum_{j=1}^2 \sum_{k_1=1}^{k_0} \sum_{k_2=0}^{k_0} \sum_{\iota\in\mathcal{Q}_{k_1+k_2,2,j}}  \boldsymbol{1}_{\{k_1+k_2\geq6-j\}} 
	 \\
	 &\times  \sum_{(\boldsymbol{x}_1,\boldsymbol{x}_2)\in \mathcal{X}_{\neq}^{k_1+k_2-2}}  \mathcal{E}_{2,j}^{\mathrm{rec}}(k_1+1,k_2,(x_{2,k_2},\boldsymbol{x}_1),\iota,\tfrac{t}{\tau_0};\hbar) \mathcal{I}_{2,j}^{\mathrm{rec}}(k_2,\boldsymbol{x}_2,\iota,\tfrac{\tau-1}{\tau_0}t;\hbar).
	\end{aligned}
\end{equation*}
\end{definition}
\subsection{The Duhamel expansion}
\begin{lemma}\label{duhamel_expansion_lemma} 
Let $H_{\hbar,\lambda} = -\frac{\hbar^2}{2}\Delta + \lambda W_\hbar(x)$, where $\hbar \in (0,\hbar_0]$,  $W_\hbar$ and $\lambda$  satisfy Assumption~\eqref{assump_1} and  let $ U_{\hbar,\lambda}(t)=e^{-it\hbar^{-1}H_{\hbar,\lambda}}$. Then for any $\tau_0\in\N$ and $k_0\in\N$ we have that
\begin{equation*}
	 U_{\hbar,\lambda}(-t) 
	= U_{\hbar,0}(-t) +  \sum_{i=0}^2 \sum_{j=\min(1,i)}^i   \sum_{k=k_{ij}}^{k_0} \sum_{\iota\in\mathcal{Q}_{k,i,j}}   \sum_{\boldsymbol{x}\in\mathcal{X}_{\neq}^{k-i}}      \mathcal{I}_{i,j}(k,\boldsymbol{x},t;\hbar) + \mathcal{R}(\tau_0,k_0:\hbar),
\end{equation*}
where $k_{00}=1$, $k_{11}=3$, $k_{21}=5$, $k_{22}=4$ and
\begin{equation*}
	\begin{aligned}
	 \mathcal{R}(\tau_0,k_0:\hbar) = \sum_{\tau=1}^{\tau_0} U_{\hbar,\lambda}(-\tfrac{\tau_0-\tau}{\tau_0}t) \big[  \mathcal{R}_\tau^{\mathrm{rec}}(k_0;\hbar)+ \sum_{i=0}^2 \sum_{j=\min(1,i)}^i \mathcal{R}_{\tau,i,j}^{k_0}(N;\hbar)  \big]. 
	\end{aligned}
\end{equation*}
The operators $\mathcal{I}_{i,j}(k,\boldsymbol{x},\alpha,t;\hbar) $ are defined in Definition~\ref{functions_for_exp_def} and \ref{functions_for_exp_rec_def}. The remainder operators are defined in Definition~\ref{def_remainder_k_0} and \ref{def_recol_reminder} respectively.
\end{lemma}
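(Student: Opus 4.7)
The approach is to iterate Duhamel's formula and systematically organize the resulting Dyson series by the combinatorial structure of the sequence of scatterer hits, with appropriate truncation rules. I would begin by establishing a one-slice version corresponding to $\tau_0 = 1$. Starting from $U_{\hbar,\lambda}(-t) = U_{\hbar,0}(-t) + \frac{i\lambda}{\hbar}\int_0^t U_{\hbar,\lambda}(-t_1) W_\hbar U_{\hbar,0}(t_1-t)\,dt_1$, iterate by substituting the same formula for $U_{\hbar,\lambda}(-t_1)$ on the right-hand side. Each iteration introduces a new potential insertion $W_\hbar = \sum_{z\in\mathcal{X}} V_{\hbar,z}$, producing a sum over scatterer sequences. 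Maximal runs of $\alpha_m$ consecutive hits at a single scatterer $z_m$ are absorbed into the operator $\Theta_{\alpha_m}$ of Definition~\ref{def_reg_operator_2}, after integrating out the internal collision times and summing over the run length $\alpha_m$; by Lemma~\ref{psi_m_gamma_est} and Assumption~\eqref{assump_1}, this sum converges absolutely in operator norm, so no truncation in $\alpha$ is needed.

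The reduced sequence of distinct-scatterer hits $(z_1,\dots,z_k)$ with $z_m \neq z_{m+1}$ is then classified according to its recollision pattern: the maps $\iota \in \mathcal{Q}_{k,i,j}$ parametrize collision histories with $i$ recollisions ($i \in \{0,1,2\}$) of structural type $j$. I would stop the Duhamel iteration under either of two rules: (i) the third recollision appears, generating $\mathcal{E}_{2,j}^{\mathrm{rec}}$-type remainders; (ii) the $(k_0{+}1)$-th distinct scatterer is encountered, generating $\mathcal{E}_{i,j}^{k_0}$-type remainders. In either case the unexpanded $U_{\hbar,\lambda}$ from the last Duhamel step is left as a left factor on the error term, matching the structure of Definitions~\ref{functions_for_exp_def} and~\ref{def_recol_error}. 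The untruncated terms, after summing over the multiplicities $\alpha$, assemble into the $\mathcal{I}_{i,j}(k,\boldsymbol{x},t;\hbar)$ operators on the full interval $[0,t]$. This establishes the $\tau_0 = 1$ case.

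For general $\tau_0$, I would proceed by induction. Writing $U_{\hbar,\lambda}(-t) = U_{\hbar,\lambda}(-t/\tau_0) \cdot U_{\hbar,\lambda}(-(\tau_0-1)t/\tau_0)$, apply the inductive hypothesis to the right factor and the one-slice expansion to the left factor, then distribute. A product of two main terms $\mathcal{I}_{i,j}(k_1,\boldsymbol{x}_1,t/\tau_0)\,\mathcal{I}_{i,j}(k_2,\boldsymbol{x}_2,(\tau_0-1)t/\tau_0)$ can be combined into a single main term over $[0,t]$ provided the total number of distinct scatterers and recollisions stays within the allowed budget; otherwise it enters $\mathcal{R}_{\tau_0,i,j}^{k_0}$ as in Definition~\ref{def_remainder_k_0}. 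Any error $\mathcal{E}$ produced in the new slice, paired with any term inherited from the inductive step, contributes to $\mathcal{R}$. The main obstacle will be the combinatorial bookkeeping at the slice boundary, namely: (a) when the last scatterer of the accumulated main coincides with the first scatterer of the new slice, the total scatterer count drops by one, which is exactly why the ``$(x_{2,k_2},\boldsymbol{x}_1)$'' terms appear in $\mathcal{R}_{\tau,i,j}^{k_0}$; and (b) the index restriction $k_2 \geq k_0 - k_1 + 1$ isolates precisely those compositions whose total distinct-scatterer count exceeds $k_0$. Verifying that the five classes of terms in $\mathcal{R}_{\tau,i,j}^{k_0}$ together with the terms in $\mathcal{R}_\tau^{\mathrm{rec}}$ exactly cover every case arising from the distribution is the essential combinatorial content of the proof; once this correspondence is set up, the identity is obtained by matching terms.
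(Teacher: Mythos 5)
Your overall strategy is the same as the paper's: a one-slice Duhamel expansion with the two stopping rules (third recollision, or a $(k_0{+}1)$-th distinct scatterer), absorption of consecutive same-site hits into the $\Theta_{\alpha_m}$ operators with the $\alpha$-sum controlled by Assumption~\eqref{assump_1}, and then assembly of the $\tau_0$ slices with the boundary bookkeeping for a repeated scatterer at the slice interface. The paper carries out the multi-slice step as an explicit iteration (expand slice $1$, then slice $2$ in front of the accumulated main terms, and so on), which is just the unrolled form of your induction, so the route is essentially identical. (One cosmetic point the paper itself flags: strictly one should also truncate the sum over $\alpha\in\N^k$ and carry a corresponding error; absolute convergence makes this harmless, as you note.)

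There is, however, one place where your inductive step, as written, would not land on the stated identity. You say you apply the one-slice expansion to the left factor $U_{\hbar,\lambda}(-t/\tau_0)$ and then distribute, and that ``any error $\mathcal{E}$ produced in the new slice, paired with any term inherited from the inductive step, contributes to $\mathcal{R}$.'' But the remainder $\mathcal{R}(\tau_0,k_0;\hbar)$ contains no cross terms of new-slice errors with previously generated remainders: by Definitions~\ref{def_remainder_k_0} and~\ref{def_recol_reminder}, the slice-$\tau$ errors and budget overflows are paired only with fully expanded main operators $\mathcal{I}_{i,j}$ from the earlier slices, and for $\tau<\tau_0$ the \emph{unexpanded} propagator $U_{\hbar,\lambda}(-\tfrac{\tau_0-\tau}{\tau_0}t)$ stands in front of $\mathcal{R}_\tau$. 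So if you expand the new slice in front of the inherited remainder terms, you produce terms that simply are not in the claimed $\mathcal{R}$. The fix is the paper's rule: at each slice one performs the Duhamel expansion only on the piece whose right-hand side consists entirely of main terms, and leaves $U_{\hbar,\lambda}(-t/\tau_0)$ untouched in front of the remainders inherited from the inductive hypothesis (equivalently, resum those cross terms, using that the one-slice expansion is an identity). With that single correction your induction reproduces the stated decomposition exactly.
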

The proof of this Lemma is essentially a carefully done Duhamel expansion, where we at all time keep track of the different collision series. However we do combine this with a time division argument. This combination is the driving force between the slightly more complicated error terms appearing. 
\begin{proof}
From the self adjointness of $H_{\hbar,\lambda} = -\frac{\hbar^2}{2}\Delta + \lambda W_\hbar(x)$ we have that  $U_{\hbar,\lambda}(-t)$ is a semigroup parametrised by $t$. Hence we have the following decomposition
\begin{equation}\label{EQ:Duhamel_proof_eq_0}
	U_{\hbar,\lambda}(-t) = \prod_{\tau=1}^{\tau_0} U_{\hbar,\lambda}(-\tfrac{t}{\tau_0}) = U_{\hbar,\lambda}(-\tfrac{t(\tau_0-1)}{\tau_0}) U_{\hbar,\lambda}(-\tfrac{t}{\tau_0}).
\end{equation}
In the following we let $t_0=\frac{t}{\tau_0}$. From preforming a carefull Duhamel expansion as described in the start of this section we obtain the formal exspansion
\begin{equation}\label{EQ:Duhamel_proof_eq_1}
	\begin{aligned}
	U_{\hbar,\lambda}(-t_0) = U_{\hbar,0}(-t_0) +  \sum_{i=0}^2 \sum_{j=\min(1,i)}^i   \sum_{k=k_{ij}}^{k_0} \sum_{\iota\in\mathcal{Q}_{k,i,j}}   \sum_{\boldsymbol{x}\in\mathcal{X}_{\neq}^{k-i}} \mathcal{T}(\iota,k,\boldsymbol{x},t_0;\hbar)+ \mathcal{D}_1^1(t_0) + \mathcal{D}_2^1(t_0),
	\end{aligned}
\end{equation}
where the different terms $\mathcal{T}(\iota,k,\boldsymbol{x},t_0;\hbar)$, $\mathcal{D}_1^1(t_0)$ and $\mathcal{D}_2^1(t_0)$ are given by
\begin{equation}\label{EQ:Duhamel_proof_eq_4.5}
\mathcal{T}(\iota,k,\boldsymbol{x},t_0;\hbar) =  \sum_{\alpha\in\N^k}   \left( \frac{i\lambda}{\hbar}\right)^{|\alpha|} 
	 \int_{[0,t_0]_{\leq}^{|\alpha|}}   
	   \prod_{m=1}^k \prod_{i=\beta_{m-1} +1}^{\beta_{m}}  V^{t_i}_{\hbar,x_{\iota(m)}}  \, d\boldsymbol{t}_{|\alpha|,1} U_{\hbar,0}(-t_0).
\end{equation}
\begin{equation}\label{EQ:Duhamel_proof_eq_1.2}
	\begin{aligned}
	    \mathcal{D}_1^1(t_0) ={}&  \sum_{i=0}^2 \sum_{j=\min(1,i)}^i    \sum_{\iota\in\mathcal{Q}_{k_0+1,i,j}}   \sum_{\boldsymbol{x}\in\mathcal{X}_{\neq}^{k_0-i+1}}  
	   \\
	   &\times \sum_{\alpha\in\N^{k_0}\times\{1\}}   \left( \frac{i\lambda}{\hbar}\right)^{|\alpha|} 
	 \int_{[0,t_0]_{\leq}^{|\alpha|}}   U_{\hbar,\lambda}(-t_{|\alpha|}) U_{\hbar,0}(t_{|\alpha|})  
	   \prod_{m=1}^{k_0} \prod_{i=\beta_{m-1} +1}^{\beta_{m}}  V^{t_i}_{\hbar,x_{\iota(m)}}  \, d\boldsymbol{t}_{|\alpha|,1}U_{\hbar,0}(-t_0).
	   \end{aligned}
\end{equation}
and
\begin{equation}\label{EQ:Duhamel_proof_eq_1.3}
	\begin{aligned}
	   \mathcal{D}_2^1(t_0) ={}&  \sum_{j=1}^2 \sum_{k=6-j}^{k_0} \sum_{\iota\in\mathcal{Q}_{k,2,j}}   \sum_{\boldsymbol{x}\in \mathcal{X}_{\neq}^{k-2}} \sum_{i=1,i\neq \iota(k)}^{k-2} 
	  \\
	  &\times
	  \sum_{\alpha\in\N^k}   \left( \frac{i\lambda}{\hbar}\right)^{|\alpha|+1} 
	 \int_{[0,t_0]_{\leq}^{|\alpha|}}     U_{\hbar,\lambda}(-t_{|\alpha|}) U_{\hbar,0}(t_{|\alpha|})  V^{t_{|\alpha|}}_{\hbar,x_i} 
	   \prod_{m=1}^k \prod_{i=\beta_{m-1} +1}^{\beta_{m}}  V^{t_i}_{\hbar,x_{\iota(m)}}  \, d\boldsymbol{t}_{|\alpha|,1}U_{\hbar,0}(-t_0).
	\end{aligned}
\end{equation}
Where we in the expression for the terms $\mathcal{T}(\iota,k,\boldsymbol{x},t_0;\hbar)$, $\mathcal{D}_1^1(t_0)$ and $\mathcal{D}_2^1(t_0)$ have used the notation introduced in this section, $k$ is the number of different scatterers we see, the entries of $\alpha$ record the number of internal collisions for each scatterer, and finally the sum over $\boldsymbol{x}$ is the sum over all possible configurations with different potentials. To compress the above expression we have introduced the notation $\beta \in \N_0^{k+1}$ to be given by $\beta_0=0$ and $\beta_i = \sum_{j=1}^{i} \alpha_i$. 

It is here where we should have introduced an additional error that truncated the sum over $\alpha\in\N^k$.  We have chosen not to do this to simplify notation and presentation. Doing this more carefull expansion do not give rise to any technical difficulties. 

For the two terms $\mathcal{D}_1^1(t_0)$ and $\mathcal{D}_2^1(t_0)$ we make the following change of variables $t_{\beta_m} \mapsto s_{m,\beta_m} $ and $ t_i-t_{i+1}\mapsto s_{m,i}$  for $i\in\{\beta_{m-1}+1,\dots,\beta_m-1\}$ $m\in\{1,\dots,k\}$. With this change of variables we have for the term $ \mathcal{D}_1^1(t_0)$ 
\begin{equation}\label{EQ:Duhamel_proof_eq_2}
	\begin{aligned}
	 \mathcal{D}_1^1(t_0)  ={}&  \sum_{\alpha\in\N^{k_0}\times\{1\}}   \left( \frac{i\lambda}{\hbar}\right)^{|\alpha|}  \int_{[0,t_0]^{|\alpha|}}  \boldsymbol{1}_{[0,t]^k_{\leq}}(s_{k,\beta_k},\dots,s_{1,\beta_1}) U_{\hbar,\lambda}(-s_{k,\beta_k}) U_{\hbar,0}(s_{k,\beta_k}) 
	\\
	&\phantom{\sum_{\alpha\in\N^{k_0}\times\{1\}}  }{} 
	\times \Big[\prod_{m=1}^k \boldsymbol{1}_{[0,s_{m-1,\beta_{m-1}}-s_{m,\beta_{m}}]} \big(\sum_{i=\beta_{m-1}+1}^{\beta_m-1}s_{m,i}\big) \Big\{ \prod_{i=\beta_{m-1} +1}^{\beta_{m}} U_{\hbar,0}(-s_{m,\beta_i})) V_{\hbar,x_m}\Big\}   
	  \\
	  &
	  \phantom{\sum_{\alpha\in\N^{k_0}\times\{1\}}  }{} \times \Big\{\prod_{i=\beta_{m-1} +1}^{\beta_{m}}U_{\hbar,0}(s_{m,i})) \Big\}  \Big]   \, \prod_{m=1}^kd\boldsymbol{s}_{m,(\beta_{m-1}+1,\beta_m)}U_{\hbar,0}(-t_0)
	  \\
	  ={}& \mathcal{E}^{k_0}_{0,0}(\boldsymbol{x},t_0;\hbar), 
	\end{aligned}
\end{equation}
where we have used Definition~\ref{def_reg_operator_2} and Definition~\ref{functions_for_exp_def} to recognise different terms. For $\mathcal{D}_2^1(t_0)$  we have that 
\begin{equation}\label{EQ:Duhamel_proof_eq_3}
	\begin{aligned}
	   \mathcal{D}_2^1(t_0) ={}& \sum_{i=1,i\neq \iota(k)}^{k-2}  \sum_{\alpha\in\N^k}   \left( \frac{i\lambda}{\hbar}\right)^{|\alpha|+1}  \int_{[0,t_0]^{|\alpha|}}  \boldsymbol{1}_{[0,t]^k_{\leq}}(s_{k,\beta_k},\dots,s_{1,\beta_1}) U_{\hbar,\lambda}(-s_{k,\beta_k}) U_{\hbar,0}(s_{k,\beta_k})  V^{t_{|\alpha|}}_{\hbar,x_i} 
	\\
	&\phantom{\sum_{\alpha\in\N^{k_0}\times\{1\}}  }{} 
	\times \Big[\prod_{m=1}^k \boldsymbol{1}_{[0,s_{m-1,\beta_{m-1}}-s_{m,\beta_{m}}]} \big(\sum_{i=\beta_{m-1}+1}^{\beta_m-1}s_{m,i}\big) \Big\{ \prod_{i=\beta_{m-1} +1}^{\beta_{m}} U_{\hbar,0}(-s_{m,\beta_i})) V_{\hbar,x_m}\Big\}   
	  \\
	  &
	  \phantom{\sum_{\alpha\in\N^{k_0}\times\{1\}}  }{} \times \Big\{\prod_{i=\beta_{m-1} +1}^{\beta_{m}}U_{\hbar,0}(s_{m,i})) \Big\}  \Big]   \, \prod_{m=1}^kd\boldsymbol{s}_{m,(\beta_{m-1}+1,\beta_m)}U_{\hbar,0}(-t_0)
	  \\
	  ={}&\mathcal{E}_{2,j}^{\mathrm{rec}}(k,0,\boldsymbol{x},t_0;\hbar), 
	\end{aligned}
\end{equation}
where we have used Definition~\ref{def_reg_operator_2} and Definition~\ref{def_recol_error} to recognise different terms. Combining \eqref{EQ:Duhamel_proof_eq_0}--\eqref{EQ:Duhamel_proof_eq_3} we obtain that
\begin{equation*}
	\begin{aligned}
	U_{\hbar,\lambda}(-t) = {}& U_{\hbar,\lambda}(-\tfrac{t(\tau_0-1)}{\tau_0})\Big[ U_{\hbar,0}(-t_0)+  \sum_{i=0}^2 \sum_{j=\min(1,i)}^i   \sum_{k=k_{ij}}^{k_0} \sum_{\iota\in\mathcal{Q}_{k,i,j}}   \sum_{\boldsymbol{x}\in\mathcal{X}_{\neq}^{k-i}} \mathcal{T}(\iota,k,\boldsymbol{x},t_0;\hbar) \Big]
	\\
	&+  U_{\hbar,\lambda}(-\tfrac{t(\tau_0-1)}{\tau_0})\big[  \mathcal{R}_1^{\mathrm{rec}}(k_0;\hbar)+ \sum_{i=0}^2 \sum_{j=\min(1,i)}^i \mathcal{R}_{1,i,j}^{k_0}(N;\hbar)  \big] ,
	\end{aligned}
\end{equation*}
where we have used Definition~\ref{def_remainder_k_0} and Definition~\ref{def_recol_reminder} to recognise terms.
The next step is to iterate the argument we have just done. Firstly we write $U_{\hbar,\lambda}(-\tfrac{t(\tau_0-1)}{\tau_0})$ as $U_{\hbar,\lambda}(-\tfrac{t(\tau_0-2)}{\tau_0})U_{\hbar,\lambda}(-\tfrac{t}{\tau_0})$ and then we do the same Duhamel expansion of $U_{\hbar,\lambda}(-\tfrac{t}{\tau_0})$ as we have already done. However this time we also need to keep the collision series to the right in mind. Especially we have to be carefully if the point we ended the first expansion with is the one we start with again. Making this Duhamel expansion we obtain that
\begin{equation}\label{EQ:Duhamel_proof_eq_5}
	\begin{aligned}
	\MoveEqLeft U_{\hbar,\lambda}(-t_0)\Big[ U_{\hbar,0}(-t_0)+  \sum_{i=0}^2 \sum_{j=\min(1,i)}^i   \sum_{k=k_{ij}}^{k_0} \sum_{\iota\in\mathcal{Q}_{k,i,j}}   \sum_{\boldsymbol{x}\in\mathcal{X}_{\neq}^{k-i}} \mathcal{T}(\iota,k,\boldsymbol{x},t_0;\hbar) \Big]	
	\\
	&=   U_{\hbar,0}(-2t_0)+ \mathcal{D}_1^2(t_0) + \mathcal{D}_2^2(t_0) +\mathcal{D}_3^2(t_0) + \mathcal{D}_4^2(t_0),
	\end{aligned}
\end{equation}
where the terms $\mathcal{D}_1^2(t_0)$, $\mathcal{D}_2^2(t_0)$, $\mathcal{D}_3^2(t_0)$ and $\mathcal{D}_4^2(t_0)$ are given by
\begin{equation*}
	\begin{aligned}
	\mathcal{D}_1^2(t_0) ={} &\sum_{i=0}^2 \sum_{j=\min(1,i)}^i  \sum_{k=k_{ij}}^{k_0}  \sum_{\iota\in\mathcal{Q}_{k,i,j}}   \sum_{(\boldsymbol{x}^1,\boldsymbol{x}^2)\in\mathcal{X}_{\neq}^{k-i}}   \sum_{k_1+k_2=k}\sum_{\alpha\in\N^{k_1}} \sum_{\tilde{\alpha}\in\N_0} \left( \frac{i\lambda}{\hbar}\right)^{|\alpha| +\tilde{\alpha}} 
	\\
	&\times \int_{[0,t_0]_{\leq}^{|\alpha| + \tilde{\alpha}}}   
	   \prod_{m=1}^{k_1} \prod_{i=\beta_{m-1} +1}^{\beta_{m}}  V^{t_{i+\tilde{\alpha}}}_{\hbar,x_{\iota(m)}}   \prod_{i=1}^{\tilde{\alpha}}  V^{t_{i}}_{\hbar,x^2_{\iota(k_2)}}  \, d\boldsymbol{t}_{|\alpha|+\tilde{\alpha},1}U_{\hbar,0}(-t_0)  \mathcal{T}(\iota,k_2,\boldsymbol{x}^2,t_0;\hbar),
	   \end{aligned}
\end{equation*}
\begin{equation*}
	\begin{aligned}
	  \mathcal{D}_2^2(t_0)={} &  \sum_{i=0}^2 \sum_{j=\min(1,i)}^i  \sum_{k_1=1}^{k_0} \sum_{k_2=k_0-k_1+1}^{k_0} \sum_{\iota\in\mathcal{Q}_{k_1+k_2,i,j}} 
	  \sum_{(\boldsymbol{x}_1,\boldsymbol{x}_2)\in \mathcal{X}_{\neq}^{k_1+k_2-i}} \sum_{\alpha\in\N^{k_1}} \sum_{\tilde{\alpha}\in\N_0} \left( \frac{i\lambda}{\hbar}\right)^{|\alpha| +\tilde{\alpha}} 
	  \\
	&\times \int_{[0,t_0]_{\leq}^{|\alpha| + \tilde{\alpha}}}   
	   \prod_{m=1}^{k_1} \prod_{i=\beta_{m-1} +1}^{\beta_{m}}  V^{t_{i+\tilde{\alpha}}}_{\hbar,x_{\iota(m)}}   \prod_{i=1}^{\tilde{\alpha}}  V^{t_{i}}_{\hbar,x^2_{\iota(k_2)}}  \, d\boldsymbol{t}_{|\alpha|+\tilde{\alpha},1}U_{\hbar,0}(-t_0)  \mathcal{T}(\iota,k_2,\boldsymbol{x}^2,t_0;\hbar),
	 \end{aligned}
\end{equation*}
\begin{equation*}
	\begin{aligned}
	\mathcal{D}_3^2(t_0)={}&  \sum_{i=0}^2 \sum_{j=\min(1,i)}^i   \sum_{k_2=0}^{k_0}   \sum_{\iota\in\mathcal{Q}_{k_0+1+k_2,i,j}}   \sum_{(\boldsymbol{x}^1,\boldsymbol{x}^2)\in\mathcal{X}_{\neq}^{k_0+k_2-i+1}}  \sum_{\alpha\in\N^{k_0}\times\{1\}}  \sum_{\tilde{\alpha}\in\N_0} \left( \frac{i\lambda}{\hbar}\right)^{|\alpha| +\tilde{\alpha}}  \int_{[0,t_0]_{\leq}^{|\alpha|+\tilde{\alpha}}}   U_{\hbar,\lambda}(-t_{|\alpha|})
	\\
	&\times  U_{\hbar,0}(t_{|\alpha|})  
	   \prod_{m=1}^{k_0}  \prod_{i=\beta_{m-1} +1}^{\beta_{m}}  V^{t_{i+\tilde{\alpha}}}_{\hbar,x_{\iota(m)}}   \prod_{i=1}^{\tilde{\alpha}}  V^{t_{i}}_{\hbar,x^2_{\iota(k_2)}}  \, d\boldsymbol{t}_{|\alpha|+\tilde{\alpha},1}U_{\hbar,0}(-t_0)  \mathcal{T}(\iota,k_2,\boldsymbol{x}^2,t_0;\hbar),
	\end{aligned}
\end{equation*}
and
\begin{equation*}
	\begin{aligned}
	  &\mathcal{D}_4^2(t_0)=  \sum_{j=1}^2 \sum_{k_1=1}^{k_0}  \sum_{k_2=0}^{k_0} \boldsymbol{1}_{\{k_1+k_2\geq 6-j\}}  \sum_{\iota\in\mathcal{Q}_{k_1+k_2,2,j}}   \sum_{(\boldsymbol{x}^1,\boldsymbol{x}^2)\in \mathcal{X}_{\neq}^{k_1+k_2-2}} \sum_{i=1,i\neq \iota(k)}^{k_1+k_2-2}   \sum_{\alpha\in\N^{k_1}}  \sum_{\tilde{\alpha}\in\N_0} \left( \frac{i\lambda}{\hbar}\right)^{|\alpha| +\tilde{\alpha}+1} 
	  \\
	  &\times
	 \int_{[0,t_0]_{\leq}^{|\alpha|+\tilde{\alpha}}}     U_{\hbar,\lambda}(-t_{|\alpha|}) U_{\hbar,0}(t_{|\alpha|})  V^{t_{|\alpha|}}_{\hbar,x_i} 
	     \prod_{i=\beta_{m-1} +1}^{\beta_{m}}  V^{t_{i+\tilde{\alpha}}}_{\hbar,x_{\iota(m)}}   \prod_{i=1}^{\tilde{\alpha}}  V^{t_{i}}_{\hbar,x^2_{\iota(k_2)}}  \, d\boldsymbol{t}_{|\alpha|+\tilde{\alpha},1}U_{\hbar,0}(-t_0) \mathcal{T}(\iota,k_2,\boldsymbol{x}^2,t_0;\hbar),
	\end{aligned}
\end{equation*}
where we have sorted the fully expanded terms in those such that $k_1+k_2\leq k_0 $ and $k_1+k_2> k_0$.  For the term $\mathcal{D}_1^2(t_0)$ we make the change of variables $t_i\mapsto t_i+t_0$ for all $i$ in the integrals in the function $\mathcal{T}$. This gives us
\begin{equation}\label{EQ:Duhamel_proof_eq_6}
	\begin{aligned}
	\mathcal{D}_1^2(t_0)	 ={}& \sum_{k_1+k_2=k}\sum_{\alpha\in\N^{k_1}} \sum_{\tilde{\alpha}\in\N_0} \left( \frac{i\lambda}{\hbar}\right)^{|\alpha| +\tilde{\alpha}} 
	 \int_{[0,t_0]_{\leq}^{|\alpha| + \tilde{\alpha}}} \prod_{m=1}^{k_1} \prod_{i=\beta_{m-1} +1}^{\beta_{m}}  V^{t_{i+\tilde{\alpha}}}_{\hbar,x_{\iota(m)}}   
	 \prod_{i=1}^{\tilde{\alpha}}  V^{t_{i}}_{\hbar,x^2_{\iota(k_2)}}\, d\boldsymbol{t}_{|\alpha|+\tilde{\alpha},1}
	 \\
	 &\times \sum_{\alpha\in\N^{k_2}}   \left( \frac{i\lambda}{\hbar}\right)^{|\alpha|} 
	 \int_{[t_0,2t_0]_{\leq}^{|\alpha|}}   
	   \prod_{m=1}^{k_2} \prod_{i=\beta_{m-1} +1}^{\beta_{m}}  V^{t_i}_{\hbar,x^2_{\iota(m)}}  \, d\boldsymbol{t}_{|\alpha|,1}U_{\hbar,0}(-2t_0) 
	   \\
	   ={}&  \sum_{\alpha\in\N^{k}}  \left( \frac{i\lambda}{\hbar}\right)^{|\alpha|}  \int_{[0,2t_0]_{\leq}^{|\alpha|}}   
	   \prod_{m=1}^{k_2} \prod_{i=\beta_{m-1} +1}^{\beta_{m}}  V^{t_i}_{\hbar,x^2_{\iota(m)}}  \, d\boldsymbol{t}_{|\alpha|,1}U_{\hbar,0}(-2t_0) =\mathcal{T}(\iota,k,(\boldsymbol{x}^1,\boldsymbol{x}^2),2t_0;\hbar),
	   \end{aligned}
\end{equation}
where we have sorted the terms in a particular way before writing the product of the integrals as one integral and again we have that  $\mathcal{T}(\iota,k,(\boldsymbol{x}^1,\boldsymbol{x}^2),2t_0;\hbar)$ is given by \eqref{EQ:Duhamel_proof_eq_4.5}. 

For the last three terms $\mathcal{D}_2^2(t_0)$, $\mathcal{D}_3^2(t_0)$ and $\mathcal{D}_4^2(t_0)$ we make the following type of change of variables $t_{\beta_m} \mapsto s_{m,\beta_m} $ and $ t_i-t_{i+1}\mapsto s_{m,i}$  for $i\in\{\beta_{m-1}+1,\dots,\beta_m-1\}$ $m\in\{1,\dots,k\}$, where the indices will be diffrent if $\tilde{\alpha}\neq0$. We do this for both the integrals in $\mathcal{T}$ and the ones obtained from the recent Duhamel expansion. Then by recognising different terms using the definitions made above, \eqref{EQ:Duhamel_proof_eq_5} and \eqref{EQ:Duhamel_proof_eq_6} we obtain that
\begin{equation*}
	\begin{aligned}
	U_{\hbar,\lambda}(-t) = {}& U_{\hbar,\lambda}(-\tfrac{t(\tau_0-2)}{\tau_0})\Big[ U_{\hbar,0}(-2t_0)+  \sum_{i=0}^2 \sum_{j=\min(1,i)}^i   \sum_{k=k_{ij}}^{k_0} \sum_{\iota\in\mathcal{Q}_{k,i,j}}   \sum_{\boldsymbol{x}\in\mathcal{X}_{\neq}^{k-i}} \mathcal{T}(\iota,k,\boldsymbol{x},2t_0;\hbar)U_{\hbar,0}(-2t_0) \Big]
	\\
	&+  \sum_{\tau=1}^{2} U_{\hbar,\lambda}(-\tfrac{t(\tau_0-\tau)}{\tau_0})\big[  \mathcal{R}_\tau^{\mathrm{rec}}(k_0;\hbar)+ \sum_{i=0}^2 \sum_{j=\min(1,i)}^i \mathcal{R}_{\tau,i,j}^{k_0}(N;\hbar)  \big]. 
	\end{aligned}
\end{equation*}
By iterating the above arguments we obtain the following equality
\begin{equation*}
	\begin{aligned}
	U_{\hbar,\lambda}(-t) = {}&  U_{\hbar,0}(-t)+  \sum_{i=0}^2 \sum_{j=\min(1,i)}^i   \sum_{k=k_{ij}}^{k_0} \sum_{\iota\in\mathcal{Q}_{k,i,j}}   \sum_{\boldsymbol{x}\in\mathcal{X}_{\neq}^{k-i}} \mathcal{T}(\iota,k,\boldsymbol{x},t;\hbar)
	\\
	&+  \sum_{\tau=1}^{\tau_0} U_{\hbar,\lambda}(-\tfrac{t(\tau_0-\tau)}{\tau_0})\big[  \mathcal{R}_\tau^{\mathrm{rec}}(k_0;\hbar)+ \sum_{i=0}^2 \sum_{j=\min(1,i)}^i \mathcal{R}_{\tau,i,j}^{k_0}(N;\hbar)  \big]. 
	\end{aligned}
\end{equation*}
To obtain the stated form we again make the change of variables $t_{\beta_m} \mapsto s_{m,\beta_m} $ and $ t_i-t_{i+1}\mapsto s_{m,i}$  for $i\in\{\beta_{m-1}+1,\dots,\beta_m-1\}$ $m\in\{1,\dots,k\}$ for the integrals in $\mathcal{T}(\iota,k,\boldsymbol{x},t;\hbar)U_{\hbar,0}(-t)$. This concludes the proof.
\end{proof}
\section{Terms without recollisions}\label{Sec:tech_est_0_recol}
We will in this section give the first of a number of technical estimates which gives that the Duhamel expansion do converge in appropriate sense. We will here focus on the terms where we have not observed a recollision.
\subsection{Estimates for fully expanded terms}
 \begin{lemma}\label{expansion_aver_bound_Mainterm}
Assume we are in the setting of Definition~\ref{functions_for_exp_def} and let $\varphi \in \mathcal{H}^{2d+2}_\hbar(\R^d)$ then
\begin{equation*}
	\begin{aligned}
	\MoveEqLeft \mathbb{E}\Big[\big\lVert \sum_{\boldsymbol{x}\in\mathcal{X}_{\neq}^{k}}\mathcal{I}_{0,0}(k,\boldsymbol{x},t;\hbar) \varphi\big\rVert_{L^2(\R^d)}^2\Big] \leq   \frac{ C^k}{k!} \norm{\varphi}_{L^2(\R^d)}^2  
	+ \hbar k C^k |\log(\tfrac{\hbar}{t})|^{k+3}  \norm{\varphi}_{\mathcal{H}^{2d+2}_\hbar(\R^d)}^2,
	\end{aligned}
\end{equation*}
where the constant $C$ depends on $\rho$, $t$, the single site potential $V$ and the coupling constant $\lambda$. In particular, $ \sum_{\boldsymbol{x}\in\mathcal{X}_{\neq}^{k}}\mathcal{I}_{0,0}(k,\boldsymbol{x},t;\hbar) \varphi\in L^2(\R^d)$ $\Pro$-almost surely. 
\end{lemma}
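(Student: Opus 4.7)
The plan is to combine Lemma~\ref{LE:crossing_dom_ladder}, which dominates the expectation by a sum over ``pairings'' of the two copies of the collision sequence, with Lemma~\ref{LE:Exp_ran_phases}, which carries out the Poisson--Campbell averaging. I would first pass to the momentum representation via Remark~\ref{Re:Kernel_of_theta}, so that the scatterer dependence appears purely as a product of phases $\prod_{m=1}^k e^{-i\hbar^{-1/d}\langle x_m,p_m-p_{m-1}\rangle}$ multiplied by an $\boldsymbol{x}$-independent amplitude carrying the $\Psi_{\alpha_m}$ factors, the kinetic-energy phases, and the simplex indicator. Applying Lemma~\ref{LE:crossing_dom_ladder} then reduces the task to bounding, for each $0\le n\le k$ and each $\sigma\in\mathcal{A}(k,n)$, the paired sum in which the right-copy scatterers at positions $\sigma_i$ are identified with the left-copy scatterers at the same positions; this lemma has already eliminated the genuine ``crossings'' (the $\kappa\neq\mathrm{id}$ case of Lemma~\ref{LE:Exp_ran_phases}) at the price of the combinatorial factor $\binom{k}{n}n!$.

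Feeding this into Lemma~\ref{LE:Exp_ran_phases} with $\kappa=\mathrm{id}$ produces an overall prefactor $(\rho\hbar(2\pi)^d)^{2k-n}$ together with delta functions that match momentum transfers of paired positions across the two copies and force momentum conservation at every unpaired position. The estimate then splits according to whether $n=k$ (the ``ladder'') or $n<k$. In the ladder case, every collision in the left copy coincides with its partner on the right, the two-copy time integration reduces to a single simplex integral of volume $t^k/k!$, the momentum integrals are bounded using Lemma~\ref{psi_m_gamma_est} for each $\Psi_{\alpha_m}$, and the smallness condition \eqref{assump_1} on $\lambda$ sums the Born series in $\alpha\in\N^k$ absolutely; this yields the $C^k/k!\cdot\norm{\varphi}_{L^2}^2$ contribution without any regularity requirement on $\varphi$.

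For $n<k$ the dominant contribution comes from $n=k-1$, since lower values of $n$ produce strictly higher powers of $\hbar$. Here I would replace the simplex of time integrals by a $\nu$-contour representation, writing each kinetic-energy phase via an inverse Fourier transform of resolvents $(\tfrac{1}{2}p^2+\nu+i\zeta)^{-1}$ with regularising parameter $\zeta\sim\hbar/t$, and then iteratively integrate out the crossing momenta using the bounds of Lemmas~\ref{LE:resolvent_int_est}--\ref{LE:est_res_combined}. Each application contributes a single $|\log(\hbar/t)|$ factor, which accounts for the exponent $k+3$ in the final bound, while the $\langle p\rangle^{-d-1}$ decay required to apply those lemmas is paid for by $2d+2$ semiclassical derivatives of $\varphi$ (roughly $d+1$ derivatives for each copy that couples through the crossing).

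The main obstacle will be obtaining the linear prefactor $k$ (rather than $k!$) in the non-ladder contribution. Indeed, the combinatorial factor $\binom{k}{k-1}(k-1)!=k!$ from Lemma~\ref{LE:crossing_dom_ladder} would naively cancel the simplex $1/k!$ to leave $O(1)$, and an $\hbar$ gain alone would not suffice. The improvement announced in Remark~\ref{remark_mod_eng_erdos_1} requires organising the iterated momentum--time estimates so that only one distinguished unpaired position triggers the $\hbar$-gain (contributing the linear factor $k$) while the remaining $k-1$ collisions retain their simplex structure and supply a $(k-1)!$ that cancels against the combinatorial factor. Carrying this through uniformly in $\sigma$ and keeping the resulting bound summable over $\alpha\in\N^k$ is the delicate part of the argument.
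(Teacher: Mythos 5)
Your overall architecture (momentum representation, Campbell averaging via Lemma~\ref{LE:Exp_ran_phases}, Fourier/stationary-phase bounds for ladder-type terms, resolvent bounds with $\zeta\sim\hbar/t$ for the rest) matches the paper's toolbox, but your very first reduction is the wrong move for this particular lemma, and the rest of the plan inherits the damage. The paper does \emph{not} invoke Lemma~\ref{LE:crossing_dom_ladder} in the proof of Lemma~\ref{expansion_aver_bound_Mainterm}; it keeps the full sum over $(\sigma^1,\sigma^2,\kappa)$ and treats the two classes separately: for $\kappa=\mathrm{id}$ (all $0\le n\le k$, not just $n=k$) the Fourier method of Lemma~\ref{app_quadratic_integral_tech_est} gives the bound $C^{|\alpha|+|\tilde\alpha|}\tfrac{(\rho t)^{2k-n}}{k!(k-n)!}\norm{\varphi}_{L^2}^2$ with no derivatives and no power of $\hbar$, while for $\kappa\neq\mathrm{id}$ the $\nu$-representation yields one factor of $\hbar$ and $|\log(\hbar/t)|^{n+3}$ at the cost of the $\mathcal{H}^{2d+2}_\hbar$ norm, with denominator $(k-1)!(k-n)!$; the factor $k$ then comes from $n!/(k-1)!\le k$ when the $n!$ permutations are summed. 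If you instead apply Lemma~\ref{LE:crossing_dom_ladder} first, every surviving (diagonal) term gets multiplied by $\binom{k}{n}n!=k!/(k-n)!$, and since the diagonal bound is $\tfrac{(\rho t)^{2k-n}}{k!(k-n)!}\norm{\varphi}_{L^2}^2$, the product for $n$ near $k$ is of size $(\rho t)^{2k-n}/((k-n)!)^2$ with \emph{no} $1/k!$ left. The leading term $\tfrac{C^k}{k!}\norm{\varphi}_{L^2}^2$ of the statement is therefore irrecoverable along your route, and this factorial is not cosmetic: it is exactly what makes $\sum_{k\le k_0}\sqrt{C^k/k!}$ uniformly bounded in Lemma~\ref{LE:unf_norm_bound_without_rec}.

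The second, independent problem is your claim that among the dominated terms ``lower values of $n$ produce strictly higher powers of $\hbar$,'' which you use to justify extracting the $\hbar$-gain from diagonal $n<k$ terms by the resolvent method. This is false: after Campbell's theorem an unpaired scatterer in a diagonal pairing only produces forward-scattering constraints $\delta(p_m-p_{m-1})$ (see $\Lambda_n$ with $\kappa=\mathrm{id}$), and the powers of $\hbar$ cancel identically for every $n$, as reflected in the paper's $\hbar$-free diagonal bound. The single factor of $\hbar$ (together with the logarithms and the need for $\mathcal{H}^{2d+2}_\hbar$ regularity) lives \emph{exclusively} in the genuinely crossed pairings $\kappa\neq\mathrm{id}$ — precisely the terms that Lemma~\ref{LE:crossing_dom_ladder} has converted into diagonal ones and thereby destroyed. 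So within your framework neither the first nor the second term of the asserted estimate can be produced; the delicate point you correctly flag (getting $k$ instead of $k!$, cf.\ Remark~\ref{remark_mod_eng_erdos_1}) is resolved in the paper by estimating the crossings directly, not by dominating them. Lemma~\ref{LE:crossing_dom_ladder} is reserved for the recollision error terms of \Cref{Sec:tech_est_truncated_recol_1,Sec:tech_est_truncated_recol_2}, where extra powers of $\hbar$ come from other sources and factorial precision is not needed.
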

\begin{observation}\label{obs_form_I_op_kernel}
The function $\mathcal{I}_{0,0}(k,\boldsymbol{x},t;\hbar) \varphi(x)$ can be expressed in two different ways firstly by Definition~\ref{def_reg_operator_2} and Definition~\ref{functions_for_exp_def}  we have an explicit formula for the kernel of the operator $\mathcal{I}_{0,0}(k,\boldsymbol{x},t;\hbar)$. Hence we get that
\begin{equation*}
	\begin{aligned}
 	\MoveEqLeft \mathcal{I}_{0,0}(k,\boldsymbol{x},t;\hbar) \varphi(x) 
	=  \frac{1}{(2\pi\hbar)^d\hbar^{k}}  \sum_{\alpha\in \N^k} (i\lambda)^{|\alpha|}   \int_{[0,t]_{\leq}^k} \int_{\R^{(k+1)d}}
	 e^{ i   \langle  \hbar^{-1}x,p_{k} \rangle} \Big\{\prod_{m=1}^k  e^{ -i   \langle  \hbar^{-1/d}x_m,p_{m}-p_{m-1} \rangle}  
	\\
	&\times \Psi_{\alpha_m}(p_{m},p_{m-1},\hbar^{-1}(s_{m-1}-s_{m}); V)  e^{i  s_{m} \frac{1}{2}\hbar^{-1} (p_{m}^2-p_{m-1}^2)}  \, \Big\}  e^{i  t \frac{1}{2}\hbar^{-1} p_0^2} \hat{\varphi}(\tfrac{p_0}{\hbar}) \,d\boldsymbol{p}_{0,k} d\boldsymbol{s}_{k,1}.
	\end{aligned}
\end{equation*}
Secondly we can consider the original expression for the operator $ \mathcal{I}_{0,0}(k,\boldsymbol{x},t;\hbar)$ obtained from the Duhamel expansion before any change of variables in time. Instead of preforming the change of variables in time as in the proof of the Lemma we can preform the change of variables $t_n\mapsto t_n$ and $t_i\mapsto \hbar^{-1}(t_i-t_{i+1})$ for all $i\in\{1,\dots,n-1\}$. From this change of variables and a relabelling of the the variables we obtain the form
\begin{equation*}
	\begin{aligned}
 	\MoveEqLeft \mathcal{I}_{0,0}(k,\boldsymbol{x},t;\hbar) \varphi(x) 
	\\
	={}&  \frac{1}{(2\pi\hbar)^d\hbar^k}  \sum_{\alpha\in \N^k} (i\lambda)^{|\alpha|}   \int_{\R_{+}^j} \int_{\R^{(j+1)d}} \boldsymbol{1}_{[0,t]}( \boldsymbol{s}_{1,k}^{+}+ \hbar\boldsymbol{t}_{1,a_k}^{+})
	 e^{ i   \langle  \hbar^{-1}x,p_{k} \rangle}  \prod_{i=1}^{a_k} e^{i  t_{i} \frac{1}{2} \eta_{i}^2} \Big\{\prod_{m=1}^k  e^{ -i   \langle  \hbar^{-1/d}x_m,p_{m}-p_{m-1} \rangle}  
	\\
	&\times e^{i  s_{m}\hbar^{-1} \frac{1}{2} p_{m}^2}   \hat{\mathcal{V}}_{\alpha_m}(p_m,p_{m-1},\boldsymbol{\eta} )  \Big\}    e^{i  \hbar^{-1}(t- \boldsymbol{s}_{1,k}^{+}- \hbar\boldsymbol{t}_{1,a_k}^{+}) \frac{1}{2} p_0^2} \hat{\varphi}(\tfrac{p_0}{\hbar}) \, d\boldsymbol{\eta}_{1,a_k}d\boldsymbol{p}_{0,k}   d\boldsymbol{t}_{1,a_k} d\boldsymbol{s}_{1,k},
	\end{aligned}
\end{equation*}
where the functions $\hat{\mathcal{V}}_{\alpha_m}$ is defined in \eqref{obs_con_form_psi_f} and the numbers $a_m$ is given by
\begin{equation*}
a_m = \sum_{i=1}^m \alpha_i -m.
\end{equation*}
 The depends on the variables $\boldsymbol{\eta}$ of functions $\hat{\mathcal{V}}_{\alpha_m}$ is only $(\eta_{a_{m-1}+1},\dots,\eta_{a_m})$. This other way of expressing the function $\mathcal{I}_{0,0}(k,\boldsymbol{x},t;\hbar) \varphi(x)$ can also be used when we act with the other operators defined in the previous section. Observe that by preforming the right change of variables we can easily switch between the to ways of writing the operator.
\end{observation}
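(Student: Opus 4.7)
The plan is to derive the first expression directly from the operator definition and then obtain the second by unfolding $\Psi_{\alpha_m}$ and performing an explicit change of variables in the resulting time integrations.

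For the first expression, I start from Definition~\ref{functions_for_exp_def}: $\mathcal{I}_{0,0}(k,\boldsymbol{x},t;\hbar) = \sum_{\alpha\in\N^k}(i\lambda)^{|\alpha|}\int_{[0,t]_\leq^k}\prod_{m=1}^{k}\Theta_{\alpha_m}(s_{m-1},s_m,x_m;V,\hbar)\,d\boldsymbol{s}_{k,1}\,U_{\hbar,0}(-t)$ with the convention $s_0:=t$. Remark~\ref{Re:Kernel_of_theta} gives the momentum-space kernel of each $\Theta_{\alpha_m}$ as $\hbar^{-1}e^{-i\hbar^{-1/d}\langle x_m,p_m-p_{m-1}\rangle}e^{is_m\hbar^{-1}\tfrac{1}{2}(p_m^2-p_{m-1}^2)}\Psi_{\alpha_m}(p_m,p_{m-1},\hbar^{-1}(s_{m-1}-s_m);V)$, while $U_{\hbar,0}(-t)$ has momentum kernel $\delta(p-q)e^{it\hbar^{-1}\tfrac{1}{2}q^2}$. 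Composing these $k+1$ kernels, applying to $\varphi$ through its Fourier transform, and inverting the semiclassical Fourier transform (which supplies $(2\pi\hbar)^{-d}$) yields the first displayed formula; the $k$ factors $\hbar^{-1}$ from the $\Theta$-kernels assemble into the stated $\hbar^{-k}$.

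For the second expression, I unfold each $\Psi_{\alpha_m}(p_m,p_{m-1},\hbar^{-1}(s_{m-1}-s_m);V)$ appearing in the first formula using~\eqref{obs_con_form_psi}, which introduces $\alpha_m-1$ internal momenta $\eta_i^{(m)}$ and $\alpha_m-1$ internal times $\tau_i^{(m)}\in\R_+$ constrained by $\boldsymbol{\tau}^{(m),+}\leq\hbar^{-1}(s_{m-1}-s_m)$, together with the factor $\hat{\mathcal{V}}_{\alpha_m}(p_m,p_{m-1},\boldsymbol{\eta}^{(m)})$ of~\eqref{obs_con_form_psi_f} and the phase $\prod_i e^{i\tau_i^{(m)}\tfrac{1}{2}(\eta_i^{(m),2}-p_{m-1}^2)}$. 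Concatenating the $k$ clusters yields $a_k=|\alpha|-k$ internal time variables, matching the $\boldsymbol{t}_{1,a_k}$ of the target formula. I then replace the ordered external times $(s_1^{\mathrm{old}},\ldots,s_k^{\mathrm{old}})\in[0,t]_\leq^k$ by new variables $s_k^{\mathrm{new}}:=s_k^{\mathrm{old}}$ and $s_m^{\mathrm{new}}:=(s_m^{\mathrm{old}}-s_{m+1}^{\mathrm{old}})-\hbar\boldsymbol{\tau}^{(m+1),+}$ for $m<k$; the map is affine-triangular in $\boldsymbol{s}^{\mathrm{old}}$ with unit Jacobian, the $\Psi$-indicators for $m\geq 2$ reduce to $s_{m-1}^{\mathrm{new}}\geq 0$ (automatic from the new integration domain $\R_+^k$), and the $m=1$ indicator translates into the single global constraint $\boldsymbol{1}_{[0,t]}(\boldsymbol{s}_{1,k}^{+}+\hbar\boldsymbol{t}_{1,a_k}^{+})$ once one verifies the identity $\boldsymbol{s}^{\mathrm{new},+}+\hbar\boldsymbol{\tau}^{+}=s_1^{\mathrm{old}}+\hbar\boldsymbol{\tau}^{(1),+}$.

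The hardest step is the phase reorganization, which fixes the precise form of the new external times. The total phase splits as the $U_{\hbar,0}(-t)$-contribution $+t\hbar^{-1}\tfrac{1}{2}p_0^2$, the external contributions $\sum_m s_m^{\mathrm{old}}\hbar^{-1}\tfrac{1}{2}(p_m^2-p_{m-1}^2)$ from the $\Theta$-kernels, and the internal contributions $\sum_m\sum_i \tau_i^{(m)}\tfrac{1}{2}(\eta_i^{(m),2}-p_{m-1}^2)$ from the unfolded $\Psi_{\alpha_m}$. Telescoping the $p_m^2$-terms across $m$ and absorbing the $\Psi$-internal piece $-\boldsymbol{\tau}^{(m+1),+}p_m^2$ into the adjacent external-gap coefficient via $s_m^{\mathrm{new}}\hbar^{-1}=(s_m^{\mathrm{old}}-s_{m+1}^{\mathrm{old}})\hbar^{-1}-\boldsymbol{\tau}^{(m+1),+}$ produces exactly $\prod_m e^{is_m^{\mathrm{new}}\hbar^{-1}\tfrac{1}{2}p_m^2}$ together with the residual initial-flight factor $e^{i\hbar^{-1}(t-\boldsymbol{s}^{\mathrm{new},+}-\hbar\boldsymbol{t}^+)\tfrac{1}{2}p_0^2}$; the pure internal-momentum part $\prod_i e^{it_i\tfrac{1}{2}\eta_i^2}$ survives unchanged. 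Matching notation yields the second displayed expression. The main subtlety lies precisely in this telescoping---every external $p_m^2$ must end up paired with exactly the right new time variable, and the leftover $p_0^2$-coefficient must match the ``initial flight'' of the second form---and the closing remark that one may ``easily switch'' between the two forms is the statement that this unit-Jacobian substitution is invertible, so the computation can be reversed to recover the first form from the second.
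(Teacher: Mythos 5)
Your proposal is correct and follows essentially the route the paper intends: the paper offers no formal proof of this observation beyond naming the change of variables, and your computation (composing the $\Theta$- and free-evolution kernels for the first form, then unfolding each $\Psi_{\alpha_m}$ via \eqref{obs_con_form_psi} and passing to the gap variables $s_m^{\mathrm{new}}=s_m^{\mathrm{old}}-s_{m+1}^{\mathrm{old}}-\hbar\boldsymbol{\tau}^{(m+1),+}$ with unit Jacobian) is exactly the verification being asserted. Your telescoping of the quadratic phases and the identity $\boldsymbol{s}^{\mathrm{new},+}+\hbar\boldsymbol{\tau}^{+}=s_1^{\mathrm{old}}+\hbar\boldsymbol{\tau}^{(1),+}$ correctly account for both the per-$m$ phase factors and the residual $p_0^2$ flight term, so the bookkeeping closes.
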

\begin{proof}
When we  take the $L^2$-norm we will get two sums over the set $\mathcal{X}_{\neq}^k$. In order to control this double sum we will divide into cases depending on how many points $\boldsymbol{x}$ and $\boldsymbol{\tilde{x}}$ have in common and where they are placed. This yields
\begin{equation}\label{EQ:exp_aver_b_Mainterm_1}
	\big\lVert \sum_{\boldsymbol{x}\in\mathcal{X}_{\neq}^{k}}\mathcal{I}_{0,0}(k,\boldsymbol{x},t;\hbar) \varphi\big\rVert_{L^2(\R^d)}^2
	=\sum_{n=0}^k \sum_{\sigma^1,\sigma^2\in\mathcal{A}(k,n)} \sum_{\kappa\in\mathcal{S}_n}\sum_{\alpha,\tilde{\alpha}\in\N^k} (i\lambda)^{|\alpha|}(-i\lambda)^{|\tilde{\alpha}|}\mathcal{T}(n,\alpha,\tilde{\alpha},\sigma^1,\sigma^2,\kappa),
\end{equation}
where the numbers $\mathcal{T}(n,\alpha,\tilde{\alpha},\sigma^1,\sigma^2,\kappa)$ are given by
\begin{equation}\label{proof-exp_aver_bound_main2}
	\mathcal{T}(n,\alpha,\tilde{\alpha},\sigma^1,\sigma^2,\kappa)
	= \sum_{(\boldsymbol{x},\boldsymbol{\tilde{x}})\in \mathcal{X}_{\neq}^{2k}}  \prod_{i=1}^n \frac{\delta(x_{\sigma_i^1}- \tilde{x}_{\sigma_{\kappa(i)^2}})}{\rho}  \int_{\R^d}  \mathcal{I}(k,\boldsymbol{x},\alpha,t;\hbar) \varphi(x)\overline{ \mathcal{I}(k,\boldsymbol{\tilde{x}},\tilde{\alpha},t;\hbar) \varphi(x)}  \,dx,
\end{equation}
where $ \mathcal{I}(k,\boldsymbol{x},\alpha,t;\hbar) \varphi$ is $\mathcal{I}_{0,0}(k,\boldsymbol{x},t;\hbar) \varphi$ for a fixed $\alpha$.
We will divide into the two different cases $\kappa=\mathrm{id}$ and $\kappa\neq\mathrm{id}$. For both cases we will use the second expression for the kernel given in Observation~\ref{obs_form_I_op_kernel}.  We start with the first case. Here we have that
\begin{equation*}
	\begin{aligned}
	\MoveEqLeft \mathcal{T}(n,\alpha,\tilde{\alpha},\sigma^1,\sigma^2,\mathrm{id})
	= \sum_{(\boldsymbol{x},\boldsymbol{\tilde{x}})\in \mathcal{X}_{\neq}^{2k}}  \frac{1}{(2\pi\hbar)^{d}\hbar^{2k}}\prod_{i=1}^n \rho^{-1} \delta(x_{\sigma_i^1}- \tilde{x}_{\sigma_{i}^2})   \int_{\R_{+}^{2j}} \int  \delta(p_k-q_k)   
	\\
	&\times \boldsymbol{1}_{[0,t]}( \boldsymbol{s}_{1,k}^{+}+\hbar \boldsymbol{t}_{1,a_k}^{+})  \boldsymbol{1}_{[0,t]}( \boldsymbol{\tilde{s}}_{1,k}^{+}+\hbar \boldsymbol{\tilde{t}}_{1,\tilde{a}_k}^{+})  \prod_{i=1}^{a_k} e^{i \frac{1}{2}t_{i}  \eta_{i}^2}  \prod_{i=1}^{\tilde{a}_k} e^{-i \frac{1}{2} \tilde{t}_{i}  \xi_{i}^2} \Big\{  \prod_{m=1}^k  e^{ -i   \langle  \hbar^{-1/d}x_m,p_{m}-p_{m-1} \rangle}  
	\\
	&\times  e^{i \hbar^{-1} s_{m} \frac{1}{2} p_{m}^2}  \hat{\mathcal{V}}_{\alpha_m}(p_m,p_{m-1},\boldsymbol{\eta} )      e^{ i   \langle  \hbar^{-1/d}\tilde{x}_m,q_{m}-q_{m-1} \rangle}  e^{-i  \hbar^{-1}\tilde{s}_{m} \frac{1}{2} q_{m}^2}  \overline{\hat{\mathcal{V}}_{\tilde{\alpha}_m}(q_m,q_{m-1},\boldsymbol{\xi} ) } \Big\} 
	\\
	&\times   e^{i \hbar^{-1}  (t- \boldsymbol{s}_{1,k}^{+}- \hbar\boldsymbol{t}_{1,a_k}^{+}) \frac{1}{2} p_0^2} e^{-i  \hbar^{-1}(t- \boldsymbol{\tilde{s}}_{1,k}^{+}- \hbar\boldsymbol{\tilde{t}}_{1,\tilde{a}_k}^{+}) \frac{1}{2} q_0^2} 
	 \hat{\varphi}(\tfrac{p_0}{\hbar})  \hat{\varphi}(\tfrac{q_0}{\hbar})   \,d\boldsymbol{\eta}d\boldsymbol{p} \boldsymbol{\xi}d\boldsymbol{q}   d\boldsymbol{t} d\boldsymbol{s}  d\boldsymbol{\tilde{t}} d\boldsymbol{\tilde{s}},
	\end{aligned}
\end{equation*}
where we have evaluated the integral in $x$. We observe that this has the form as the expressions considered in Lemma~\ref{LE:Exp_ran_phases} and applying this Lemma yields
\begin{equation*}
	\begin{aligned}
	\mathbb{E}&[ \mathcal{T}(n,\alpha,\tilde{\alpha},\sigma^1,\sigma^2,\mathrm{id})]
	= \frac{\left(\rho(2\pi)^d \right)^{2k-n}  }{(2\pi\hbar)^d \hbar^n}     \int_{\R_{+}^{|\alpha|+|\tilde{\alpha}|}}  \int  \delta(p_k-q_k) \Lambda_n(\boldsymbol{p},\boldsymbol{q},\sigma^1,\sigma^2,\mathrm{id}) \prod_{i=1}^{\tilde{a}_k} e^{-i \frac{1}{2} \tilde{t}_{i}  \xi_{i}^2}
	\\
	&\times  \boldsymbol{1}_{[0,t]}( \boldsymbol{s}_{1,k}^{+}+\hbar \boldsymbol{t}_{1,a_k}^{+})  \boldsymbol{1}_{[0,t]}( \boldsymbol{\tilde{s}}_{1,k}^{+}+\hbar \boldsymbol{\tilde{t}}_{1,\tilde{a}_k}^{+}) \prod_{i=1}^{a_k} e^{i \frac{1}{2}t_{i}  \eta_{i}^2}  \prod_{m=1}^k  \Big\{  
	 e^{i \hbar^{-1} s_{m} \frac{1}{2} p_{m}^2}  \hat{\mathcal{V}}_{\alpha_m}(p_m,p_{m-1},\boldsymbol{\eta} )   e^{-i \hbar^{-1} \tilde{s}_{m} \frac{1}{2} q_{m}^2}  
	 \\
	 &\times   \overline{\hat{\mathcal{V}}_{\tilde{\alpha}_m}(q_m,q_{m-1},\boldsymbol{\xi} ) } \Big\} 
	  e^{i  \hbar^{-1}(t- \boldsymbol{s}_{1,k}^{+}-\hbar \boldsymbol{t}_{1,a_k}^{+}) \frac{1}{2} p_0^2} e^{-i  \hbar^{-1}(t- \boldsymbol{\tilde{s}}_{1,k}^{+}-\hbar \boldsymbol{\tilde{t}}_{1,\tilde{a}_k}^{+}) \frac{1}{2} q_0^2} 
	 \hat{\varphi}(\tfrac{p_0}{\hbar})  \hat{\varphi}(\tfrac{q_0}{\hbar})   \,d\boldsymbol{\eta}d\boldsymbol{p} \boldsymbol{\xi}d\boldsymbol{q}   d\boldsymbol{t} d\boldsymbol{s}  d\boldsymbol{\tilde{t}} d\boldsymbol{\tilde{s}}.
	\end{aligned}
\end{equation*}
Recalling that the function $\Lambda_n(\boldsymbol{p},\boldsymbol{q},\sigma^1,\sigma^2,\mathrm{id}) $ is given by the expression
  \begin{equation*}
  	\begin{aligned}
  	 \Lambda_n(\boldsymbol{p},\boldsymbol{q},\sigma^1,\sigma^2,\mathrm{id}) 
	=  \prod_{i=1}^n \delta(p_{\sigma^1_{i-1}}- q_{\sigma^2_{i-1}} - p_{\sigma^1_{n}}+ q_{\sigma^2_{n}} )   
	 \prod_{i=1}^{n+1} \prod_{m=\sigma^1_{i-1}+1}^{\sigma^1_{i}-1} \delta(p_m-p_{\sigma^1_{i-1}})  \prod_{m=\sigma^2_{i-1}+1}^{\sigma^2_{i}-1} \delta(q_m-q_{\sigma^1_{i-1}}).
	\end{aligned}
  \end{equation*}
We get by evaluating all integrals in $\boldsymbol{q}$ and $\boldsymbol{p}$ except $p_{\sigma_i^1}$ for all $i$  the bound 
\begin{equation}\label{proof-exp_aver_bound_main3}
	\begin{aligned}
	\MoveEqLeft |\Aver{\mathcal{T}(n,\alpha,\tilde{\alpha},\sigma^1,\sigma^2,\mathrm{id})}|
	\leq \frac{((2\pi)^d \rho)^{2k-n} }{(2\pi\hbar)^{d} \hbar^n} \int_{\R_{+}^{|\alpha|+|\tilde{\alpha}|}}  \boldsymbol{1}_{[0,t]}( \boldsymbol{s}_{1,k}^{+}+\hbar \boldsymbol{t}_{1,a_k}^{+})  \boldsymbol{1}_{[0,t]}( \boldsymbol{\tilde{s}}_{1,k}^{+}+\hbar \boldsymbol{\tilde{t}}_{1,\tilde{a}_k}^{+})
	\\
	&\times   \Big| \int e^{i  \frac{1}{2} \langle Q(\boldsymbol{t}) \boldsymbol{\eta}, \boldsymbol{\eta}\rangle }   e^{-i  \frac{1}{2} \langle Q(\boldsymbol{\tilde{t}}) \boldsymbol{\xi}, \boldsymbol{\xi}\rangle }    \prod_{i=1}^{n}          e^{i  (\boldsymbol{s}_{\sigma^1_i,\sigma^1_{i+1}-1}^{+} - \boldsymbol{\tilde{s}}_{\sigma_i^2,\sigma_{i+1}^2-1}^{+} - \hbar\boldsymbol{t}_{a_{\sigma^1_i}+1,a_{\sigma_{i+1}^1}}^{+} +\hbar\boldsymbol{\tilde{t}}_{\tilde{a}_{\sigma^2_i}+1,\tilde{a}_{\sigma^2_{i+1}}}^{+}  ) \frac{1}{2}\hbar^{-1} p_{i}^2} 
	\\
	&\times    \mathcal{G}(\boldsymbol{p},\boldsymbol{\eta},\boldsymbol{\xi},\sigma^1,\sigma^2,\alpha,\tilde{\alpha})
	    e^{-i  (\boldsymbol{s}_{\sigma^1_1,k}^{+}- \boldsymbol{\tilde{s}}_{\sigma_1^2,k}^{+} -\hbar\boldsymbol{t}_{1,a_{\sigma_1^1}}^{+} +\hbar\boldsymbol{\tilde{t}}_{1,\tilde{a}_{\sigma^2_1}}^{+}) \frac{1}{2}\hbar^{-1} p_{0}^2}    |\hat{\varphi}(\tfrac{p_0}{\hbar}) |^2  \,d\boldsymbol{p} d\boldsymbol{\eta}d\boldsymbol{\xi} \Big|  d\boldsymbol{t}  d\boldsymbol{\tilde{t}} d\boldsymbol{\tilde{s}} d\boldsymbol{s},
	\end{aligned}
\end{equation}
where the numbers $a_m$, $\tilde{a}_m$ are the ones associated to $\alpha$ and $\tilde{\alpha}$ respectively as introduced above. The function $\mathcal{G}(\boldsymbol{p},\boldsymbol{\eta},\boldsymbol{\xi},\sigma^1,\sigma^2,\alpha,\tilde{\alpha})$ is given by
\begin{equation}\label{EQ_def_G_4.1}
	\begin{aligned}
	\MoveEqLeft \mathcal{G}(\boldsymbol{p},\boldsymbol{\eta},\boldsymbol{\xi},\sigma^1,\sigma^2,\alpha,\tilde{\alpha}) 
	\\
	={}&  \prod_{i=1}^n  \hat{\mathcal{V}}_{\alpha_{\sigma^1_i}}(p_i,p_{i-1},\boldsymbol{\eta}) \overline{\hat{\mathcal{V}}_{\tilde{\alpha}_{\sigma^1_i}}(p_i,p_{i-1},\boldsymbol{\xi})}  
	 \prod_{i=0}^n  \prod_{m=\sigma^1_i +1}^{\sigma^1_{i+1} -1} \hat{\mathcal{V}}_{\alpha_m}(p_i,p_i,\boldsymbol{\eta})   \prod_{m=\sigma^2_i +1}^{\sigma^2_{i+1} -1} \overline{\hat{\mathcal{V}}_{\tilde{\alpha}_m}(p_i,p_i,\boldsymbol{\xi})}.
	\end{aligned}
\end{equation}
The matrix $Q(\boldsymbol{t}) $ is given by
  \begin{equation*}
  	Q(\boldsymbol{t}) = \begin{pmatrix}
	t_1 I_d & 0 & 0 &0
	\\
	0 & t_2 I_d &0&0
	\\
	\vdots & 0 & \ddots &0
	\\
	0 &\cdots & 0 &t_{a_{k}} I_d
	\end{pmatrix}.
  \end{equation*}
We will now divide into different cases depending on the absolute value of $t$'s and  $\tilde{t}$'s. This is the same type of division done in the proof of Lemma~\ref{app_quadratic_integral_tech_est}. Again we divide into the cases where $|t_i|\geq1$ and  $|t_i|<1$ and similar for $\tilde{t}$. This gives us $2^{a_{k}+\tilde{a}_k}$ different cases to consider and we denote each corresponding set by $B_l$. For each case the indices for which $|t_i|\geq1$ will be denoted by $J_1$ and the remaining will be in $J_2$ the indices for which $|\tilde{t}_i|\geq1$ will be denoted by $\tilde{J}_1$ and the remaining will be in $\tilde{J}_2$. As in the proof of Lemma~\ref{app_quadratic_integral_tech_est} we will write the parts of the quadratic exponential with indices in $J_1$ and $\tilde{J}_1$ as a  inverse Fourier transform of its Fourier transform. To ensure integrability we define certain differential operators and preform integration by parts. After this division and preforming this integration by parts we obtain the estimate
   \begin{equation}\label{EQ:exp_aver_b_Mainterm_3}
   	\begin{aligned}
	\MoveEqLeft  |\Aver{\mathcal{T}(n,\alpha,\tilde{\alpha},\sigma^1,\sigma^2,\mathrm{id})}|
	 \leq \frac{((2\pi)^d\hbar \rho)^{2k-n}}{(2\pi\hbar)^{d} }  \sum_{l=1}^{2^{a_{k}+\tilde{a}_k}}   \sum_{|\epsilon_1|,\dots,|\epsilon_{j_1}|\leq d+1} \sum_{|\nu_1|,\dots,|\nu_{\tilde{j}_1}|\leq d+1}    \int_{\R_{+}^{|\alpha|+|\tilde{\alpha}|}} \boldsymbol{1}_{B_l}(\boldsymbol{t},\boldsymbol{\tilde{t}})
	\\
	&\times  \boldsymbol{1}_{[0,t]}( \boldsymbol{s}_{1,k}^{+}+\hbar \boldsymbol{t}_{1,a_k}^{+})  \boldsymbol{1}_{[0,t]}( \boldsymbol{\tilde{s}}_{1,k}^{+}+\hbar \boldsymbol{\tilde{t}}_{1,\tilde{a}_k}^{+}) 
	  \prod_{i\in J_1} \frac{C(\epsilon_i)}{(2\pi t_i))^\frac{d}{2} }
	   \prod_{i\in \tilde{J}_1} \frac{C(\nu_i)   }{(2\pi \tilde{t}_i)^\frac{d}{2}}  |\mathcal{I}_n(\boldsymbol{t},\boldsymbol{\tilde{t}}) |  d\boldsymbol{t}  d\boldsymbol{\tilde{t}} d\boldsymbol{\tilde{s}}d\boldsymbol{s},
	\end{aligned}
\end{equation}
 where we have used the notation  $C(\nu)= \binom{d+1}{(\nu,d+1-\abs{\nu})}$ and
 \begin{equation*}
	\begin{aligned}
	\MoveEqLeft \mathcal{I}_n(\boldsymbol{t},\boldsymbol{\tilde{t}}) 
	=    \int   
	  e^{-i  \frac{1}{2} \langle Q_{k_1,J_1}^{-1}(\boldsymbol{t})  \boldsymbol{x}_{J_1} -2\boldsymbol{\eta}_{J_1}, \boldsymbol{x}_{J_1}\rangle }  
	 e^{i  \frac{1}{2} \langle Q_{k_1,J_2}(\boldsymbol{t}) \boldsymbol{\eta}_{J_2}, \boldsymbol{\eta}_{J_2}\rangle }    e^{i  \frac{1}{2} \langle Q_{k_2,\tilde{J}_1}^{-1}(\boldsymbol{\tilde{t}}) \boldsymbol{y}_{\tilde{J}_1} +2 \boldsymbol{\xi}_{\tilde{J}_1}, \boldsymbol{y}_{\tilde{J}_1}\rangle } e^{i  \frac{1}{2} \langle Q_{k_2,\tilde{J}_2}(\boldsymbol{\tilde{t}}) \boldsymbol{\xi}_{\tilde{J}_2}, \boldsymbol{\xi}_{\tilde{J}_2}\rangle } 
	 \\
	 &\times  \prod_{i\in \tilde{J}_1} \frac{|y_i ^{\nu_i} |}{(1+|y_i |^2)^{d+1}}   
	  \prod_{i\in J_1} \frac{|x_i ^{\epsilon_i} |}{(1+|x_i |^2)^{d+1}}    \prod_{i=1}^{n}         e^{i  (\boldsymbol{s}_{\sigma^1_i,\sigma^1_{i+1}-1}^{+} - \boldsymbol{\tilde{s}}_{\sigma_i^2,\sigma_{i+1}^2-1}^{+} - \hbar\boldsymbol{t}_{a_{\sigma^1_i}+1,a_{\sigma_{i+1}^1}}^{+} +\hbar\boldsymbol{\tilde{t}}_{\tilde{a}_{\sigma^2_i}+1,\tilde{a}_{\sigma^2_{i+1}}}^{+}  ) \frac{1}{2}\hbar^{-1} p_{i}^2} 
	\\
	&\times  \big[\partial_{\boldsymbol{\xi}_{\tilde{J}_1}}^{\boldsymbol{\nu}} \partial_{\boldsymbol{\eta}_{J_1}}^{\boldsymbol{\epsilon}}  \mathcal{G}(\boldsymbol{p},\boldsymbol{\eta},\boldsymbol{\xi},\sigma^1,\sigma^2,\alpha,\tilde{\alpha}) \big]
	e^{-i  (\boldsymbol{s}_{\sigma^1_1,k}^{+}- \boldsymbol{\tilde{s}}_{\sigma_1^2,k}^{+} -\hbar\boldsymbol{t}_{1,a_{\sigma_1^1}}^{+} +\hbar\boldsymbol{\tilde{t}}_{1,\tilde{a}_{\sigma^2_1}}^{+}) \frac{1}{2}\hbar^{-1} p_{0}^2}   |\hat{\varphi}(\tfrac{p_0}{\hbar}) |^2  \, d\boldsymbol{x}d\boldsymbol{y}d\boldsymbol{p} d\boldsymbol{\eta} d\boldsymbol{\xi}.
	\end{aligned}
\end{equation*}
We will first estimate $|\mathcal{I}_n(\boldsymbol{t},\boldsymbol{\tilde{t}})|$. To do this we start by for all $i\in\{1,\dots,n\}$ preforming the following change of variables
 \begin{equation*}
 \begin{aligned}
 	&\eta_j \mapsto \eta_j-p_i \quad\text{for all $j\in\{a_{\sigma^1_{i}-1}+1,\dots,a_{\sigma^1_{i+1}-1} \}$} 
	\\ 
	&\xi_j \mapsto \xi_j-p_i \quad\text{for all $j\in\{\tilde{a}_{\sigma^2_{i}-1}+1,\dots,\tilde{a}_{\sigma^2_{i+1}-1} \}$}.
	\end{aligned}
 \end{equation*}
 This yields
  \begin{equation*}
	\begin{aligned}
	\MoveEqLeft  \mathcal{I}_n(\boldsymbol{t},\boldsymbol{\tilde{t}}) =   \int
	  e^{-i  \frac{1}{2} \langle Q_{k_1,J_1}^{-1}(\boldsymbol{t})  \boldsymbol{x}_{J_1} -2\boldsymbol{\eta}_{J_1}, \boldsymbol{x}_{J_1}\rangle }  
	 e^{i  \frac{1}{2} \langle Q_{k_2,\tilde{J}_1}^{-1}(\boldsymbol{\tilde{t}}) \boldsymbol{y}_{\tilde{J}_1} +2 \boldsymbol{\xi}_{\tilde{J}_1}, \boldsymbol{y}_{\tilde{J}_1}\rangle } e^{i  \frac{1}{2} \langle Q_{k_1,J_2}(\boldsymbol{t}) \boldsymbol{\eta}_{J_2}, \boldsymbol{\eta}_{J_2}\rangle }   e^{i  \frac{1}{2} \langle Q_{k_2,\tilde{J}_2}(\boldsymbol{\tilde{t}}) \boldsymbol{\xi}_{\tilde{J}_2}, \boldsymbol{\xi}_{\tilde{J}_2}\rangle }  
	 \\
	 &\times \prod_{i\in J_1} \frac{|x_i ^{\epsilon_i} |}{(1+|x_i |^2)^{d+1}}
	 \prod_{i\in \tilde{J}_1} \frac{|y_i ^{\nu_i} |}{(1+|y_i |^2)^{d+1}}     \prod_{i=1}^{n}          e^{i  (\boldsymbol{s}_{\sigma^1_i,\sigma^1_{i+1}-1}^{+} - \boldsymbol{\tilde{s}}_{\sigma_i^2,\sigma_{i+1}^2-1}^{+} +\hbar l^1_i(\boldsymbol{t}, \boldsymbol{\tilde{t}})  ) \frac{1}{2}\hbar^{-1} p_{i}^2} e^{i\langle p_i, l^2_i(\boldsymbol{t}, \boldsymbol{\tilde{t}},\boldsymbol{x},\boldsymbol{y},\boldsymbol{\eta},\boldsymbol{\xi}) \rangle} 
	\\
	&\times \big[\partial_{\boldsymbol{\xi}_{\tilde{J}_1}}^{\boldsymbol{\nu}} \partial_{\boldsymbol{\eta}_{J_1}}^{\boldsymbol{\epsilon}}  \mathcal{G}(\boldsymbol{p},\boldsymbol{\eta},\boldsymbol{\xi},\sigma^1,\sigma^2,\alpha,\tilde{\alpha}) \big]
	  e^{-i  (\boldsymbol{s}_{\sigma^1_1,k}^{+}- \boldsymbol{\tilde{s}}_{\sigma_1^2,k}^{+}  +\hbar l^1_0(\boldsymbol{t}, \boldsymbol{\tilde{t}}) ) \frac{1}{2}\hbar^{-1} p_{0}^2}   |\hat{\varphi}(\tfrac{p_0}{\hbar}) |^2  \, d\boldsymbol{x}d\boldsymbol{y}d\boldsymbol{p} d\boldsymbol{\eta}d\boldsymbol{\xi},
	\end{aligned}
\end{equation*}
 where $l_{i}^j$ is a linear function for all $i$ and $j$. By applying Lemma~\ref{app_quadratic_integral_tech_est} in the variables $p_1,\dots,p_n$ we get the estimate
  \begin{equation}\label{EQ:exp_aver_b_Mainterm_4}
	\begin{aligned}
	& |\mathcal{I}_n(\boldsymbol{t},\boldsymbol{\tilde{t}})| 
	 \leq  \tilde{C}_d^{n}    \int\prod_{i\in \tilde{J}_1} \frac{|y_i ^{\nu_i} |}{(1+|y_i |^2)^{d+1}}   \, d\boldsymbol{y}
	\prod_{i=1}^{n} \frac{1}{\max(1,\hbar^{-1} | \boldsymbol{s}_{\sigma^1_i,\sigma^1_{i+1}-1}^{+} - \boldsymbol{\tilde{s}}_{\sigma_i^2,\sigma_{i+1}^2-1}^{+} +\hbar l^1_i(\boldsymbol{t}, \boldsymbol{\tilde{t}})) |)^\frac{d}{2}} 
	  \\
	&\times  \int  \prod_{i\in J_1} \frac{|x_i ^{\epsilon_i} |}{(1+|x_i |^2)^{d+1}}\,d\boldsymbol{x} \sup_{|\gamma_1|,\dots,|\gamma_n|\leq d+1}  \int \big| \partial_{p_1}^{\gamma_1}\cdots \partial_{p_n}^{\gamma_n} \partial_{\boldsymbol{\xi}_{\tilde{J}_1}}^{\boldsymbol{\nu}} \partial_{\boldsymbol{\eta}_{J_1}}^{\boldsymbol{\epsilon}}  \mathcal{G}(\boldsymbol{p},\boldsymbol{\eta},\boldsymbol{\xi},\sigma^1,\sigma^2,\alpha,\tilde{\alpha})      |\hat{\varphi}(\tfrac{p_0}{\hbar}) |^2 \big|  \, d\boldsymbol{p} d\boldsymbol{\eta}d\boldsymbol{\xi}.
	\end{aligned}
\end{equation} 
We will now estimate each pice of this expression separately. Firstly we have that
  \begin{equation}\label{EQ:exp_aver_b_Mainterm_5}
	\begin{aligned}
	\MoveEqLeft   \int_{\R^{2k}_{+}}\prod_{i=1}^{n} \frac{ \boldsymbol{1}_{[0,t]}( \boldsymbol{s}_{1,k}^{+}+\hbar \boldsymbol{t}_{1,a_k}^{+})  \boldsymbol{1}_{[0,t]}( \boldsymbol{\tilde{s}}_{1,k}^{+}+\hbar \boldsymbol{\tilde{t}}_{1,\tilde{a}_k}^{+}) }{\max(1,\hbar^{-1} | \boldsymbol{s}_{\sigma^1_i,\sigma^1_{i+1}-1}^{+} - \boldsymbol{\tilde{s}}_{\sigma_i^2,\sigma_{i+1}^2-1}^{+} +\hbar l^1_i(\boldsymbol{t}, \boldsymbol{\tilde{t}})) |)^\frac{d}{2}}   \, d\boldsymbol{\tilde{s}} d\boldsymbol{s}
	\\
	&\leq \frac{\hbar^n t^{2k-n}}{k!(k-n)!} \left( \int_{\R} \frac{1}{\max(1, | s |)^\frac{d}{2}} \,ds \right)^n = \left(\frac{2d}{d-2}\right)^n  \frac{\hbar^n t^{2k-n}}{k!(k-n)!}.
	\end{aligned}
\end{equation} 
By arguing as in the proof of Lemma~\ref{app_quadratic_integral_tech_est} we get that 
   \begin{equation}\label{EQ:exp_aver_b_Mainterm_6}
   	\begin{aligned}
	\MoveEqLeft  \sum_{l=1}^{4^{a_{k}}}   \sum_{|\epsilon_1|,\dots,|\epsilon_{j_1}|\leq d+1} \sum_{|\nu_1|,\dots,|\nu_{\tilde{j}_1}|\leq d+1}  \int_{\R^{a_k+\tilde{a}_k}_{+}}
	  \boldsymbol{1}_{B_l}(\boldsymbol{t},\boldsymbol{\tilde{t}}) \prod_{i\in J_1} \frac{C(\epsilon_i)}{(2\pi t_i))^\frac{d}{2} }
	   \prod_{i\in \tilde{J}_1} \frac{C(\nu_i)   }{(2\pi \tilde{t}_i)^\frac{d}{2}}  d\boldsymbol{t}  d\boldsymbol{\tilde{t}}   
	   \\
	   &\times \int    \prod_{i\in J_1} \frac{|x_i ^{\epsilon_i} |}{(1+|x_i |^2)^{d+1}}\,d\boldsymbol{x} \int \prod_{i\in \tilde{J}_1} \frac{|y_i ^{\nu_i} |}{(1+|y_i |^2)^{d+1}}   \, d\boldsymbol{y}
	   \\
	   \leq{}&  \tilde{C}_d^{a_k+\tilde{a}_k}\left(\frac{d}{d-2}\right)^{a_k+\tilde{a}_k}.
	\end{aligned}
\end{equation}
Lastly using the definition of the functions $\mathcal{G}(\boldsymbol{p},\boldsymbol{\eta},\boldsymbol{\xi},\sigma^1,\sigma^2,\alpha,\tilde{\alpha})$, equation~\eqref{EQ_def_G_4.1}, and our assumptions on the single site potential $V$ we get that
 \begin{equation}\label{EQ:exp_aver_b_Mainterm_7}
	\begin{aligned}
	\MoveEqLeft   \sup_{|\gamma_1|,\dots,|\gamma_n|\leq d+1}  \int  \big| \partial_{p_1}^{\gamma_1}\cdots \partial_{p_n}^{\gamma_n}    \partial_{\boldsymbol{\xi}_{\tilde{J}_1}}^{\boldsymbol{\nu}} \partial_{\boldsymbol{\eta}_{J_1}}^{\boldsymbol{\epsilon}}  \mathcal{G}(\boldsymbol{p},\boldsymbol{\eta},\boldsymbol{\xi},\sigma^1,\sigma^2,\alpha,\tilde{\alpha})         |\hat{\varphi}(\tfrac{p_0}{\hbar}) |^2 \big|  \, d\boldsymbol{p}d\boldsymbol{\eta}d\boldsymbol{\xi}
	 \\
	 \leq {}& 2^{(d+1)(j-k)} 4^{(d+1)n} (2\pi\hbar)^d \norm{\varphi}_{L^2(\R^d)}^2 \left(\frac{\norm{\hat{V}}_{1,\infty,3d+3}}{(2\pi)^d} \right)^{|\alpha| + |\tilde{\alpha}|} ,
	\end{aligned}
\end{equation}
where the numbers $2^{(d+1)(j-k)} 4^{(d+1)n}$ comes from multiple application of the product rule. From combining \cref{EQ:exp_aver_b_Mainterm_3,EQ:exp_aver_b_Mainterm_4,EQ:exp_aver_b_Mainterm_5,EQ:exp_aver_b_Mainterm_6,EQ:exp_aver_b_Mainterm_7} we obtain the estimate
   \begin{equation}\label{EQ:exp_aver_b_Mainterm_8}
   	\begin{aligned}
	| \Aver{\mathcal{T}(n,\alpha,\tilde{\alpha},\sigma^1,\sigma^2,\mathrm{id})}|
	 \leq  C_d^{a_k+\tilde{a}_k+n}  \norm{\varphi}_{L^2(\R^d)}^2 \left(\norm{\hat{V}}_{1,\infty,3d+3}\right)^{|\alpha|+|\tilde{\alpha}|}  \frac{ (\rho t)^{2k-n}}{k!(k-n)!},
	\end{aligned}
\end{equation}
where the constant $C_d$ is only depending on the dimension.  Next we turn to the case where $\kappa\neq\mathrm{id}$. For this case there exists a smallest index $i$ such that $\kappa(i)\neq i$. We will in the following denote this index by $i^{*}$.  Again we use the second expression for the kernel given in Observation~\ref{obs_form_I_op_kernel}, but this time we insert the function $f(\boldsymbol{s},\boldsymbol{\tilde{s}})$ defined by
\begin{equation*}
	f(\boldsymbol{s},\boldsymbol{\tilde{s}}) = \boldsymbol{1}_{[0,\hbar^{-1}t]}\big( \sum_{i=1,i\neq i^{*}}^k s_i \big) \boldsymbol{1}_{[0,\hbar^{-1}t]}\big( \sum_{i=1,i \notin \sigma^2}^k \tilde{s}_i \big).
\end{equation*}  
We then get that
\begin{equation*}
	\begin{aligned}
	\MoveEqLeft \mathcal{T}(n,\alpha,\tilde{\alpha},\sigma^1,\sigma^2,\kappa)
	= \sum_{(\boldsymbol{x},\boldsymbol{\tilde{x}})\in \mathcal{X}_{\neq}^{2k}}  \frac{1}{(2\pi\hbar)^{d}}\prod_{i=1}^n \rho^{-1} \delta(x_{\sigma_i^1}- \tilde{x}_{\sigma_{\kappa(i)}^2})   \int_{\R_{+}^{|\alpha|+|\tilde{\alpha}|}}    \int \delta(p_k-q_k)    \boldsymbol{1}_{[0,\hbar^{-1}t]}( \boldsymbol{s}_{1,k}^{+}+ \boldsymbol{t}_{1,a_k}^{+})
	\\
	\times&  \boldsymbol{1}_{[0,\hbar^{-1}t]}( \boldsymbol{\tilde{s}}_{1,k}^{+}+ \boldsymbol{\tilde{t}}_{1,\tilde{a}_k}^{+}) f(\boldsymbol{s},\boldsymbol{\tilde{s}}) \prod_{i=1}^{a_k} e^{i \frac{1}{2}t_{i}  \eta_{i}^2}  \prod_{i=1}^{\tilde{a}_k} e^{-i \frac{1}{2} \tilde{t}_{i}  \xi_{i}^2} \Big\{  \prod_{m=1}^k  e^{ -i   \langle  \hbar^{-1/d}x_m,p_{m}-p_{m-1} \rangle}  
	 e^{i  s_{m} \frac{1}{2} p_{m}^2}  \hat{\mathcal{V}}_{\alpha_m}(p_m,p_{m-1},\boldsymbol{\eta} )   
	 \\
	 \times&   e^{ i   \langle  \hbar^{-1/d}\tilde{x}_m,q_{m}-q_{m-1} \rangle}  e^{-i  \tilde{s}_{m} \frac{1}{2} q_{m}^2}  \overline{\hat{\mathcal{V}}_{\tilde{\alpha}_m}(q_m,q_{m-1},\boldsymbol{\xi} ) } \Big\}  
	  e^{i  (\hbar^{-1}t- \boldsymbol{s}_{1,k}^{+}- \boldsymbol{t}_{1,a_k}^{+}) \frac{1}{2} p_0^2} e^{-i  (\hbar^{-1}t- \boldsymbol{\tilde{s}}_{1,k}^{+}- \boldsymbol{\tilde{t}}_{1,\tilde{a}_k}^{+}) \frac{1}{2} q_0^2} 
	 \\
	 \times& \hat{\varphi}(\tfrac{p_0}{\hbar})  \hat{\varphi}(\tfrac{q_0}{\hbar})   \,d\boldsymbol{\eta}d\boldsymbol{p} \boldsymbol{\xi}d\boldsymbol{q}   d\boldsymbol{t} d\boldsymbol{s}  d\boldsymbol{\tilde{t}} d\boldsymbol{\tilde{s}},
	\end{aligned}
\end{equation*}
where we also have made a rescaling in some of the time integrals compared to before. We note the we can freely insert $f$ due to the two characteristic functions already there. We now introduce the variables $s_0$ and $\tilde{s}_0$ and get that
\begin{equation*}
	\begin{aligned}
	\MoveEqLeft \mathcal{T}(n,\alpha,\tilde{\alpha},\sigma^1,\sigma^2,\kappa)
	=  \sum_{(\boldsymbol{x},\boldsymbol{\tilde{x}})\in \mathcal{X}_{\neq}^{2k}} \frac{1}{(2\pi\hbar)^{d}} \prod_{i=1}^n \rho^{-1} \delta(x_{\sigma_i^1}- \tilde{x}_{\sigma_{\kappa(i)}^2})   \int_{\R_{+}^{|\alpha|+|\tilde{\alpha}|+2}} \int  \delta(p_k-q_k) \delta(\tfrac{t}{\hbar}- \boldsymbol{s}_{0,k}^{+}- \boldsymbol{t}_{1,a_k}^{+}) 
	\\
	\times &  \delta(\tfrac{t}{\hbar}- \boldsymbol{\tilde{s}}_{0,k}^{+}- \boldsymbol{\tilde{t}}_{1,\tilde{a}_k}^{+})  \prod_{i=1}^{a_k} e^{i \frac{1}{2} t_{i}  \eta_{i}^2 }  \prod_{i=1}^{\tilde{a}_k} e^{-i \frac{1}{2} \tilde{t}_{i}  \xi_{i}^2}  \prod_{m=1}^k  \Big\{e^{ -i   \langle  \hbar^{-1/d}x_m,p_{m}-p_{m-1} \rangle}  
	 e^{i  s_{m} \frac{1}{2} p_{m}^2}    \hat{\mathcal{V}}_{\alpha_m}(p_m,p_{m-1},\boldsymbol{\eta} ) 
	 \\
	 \times &    e^{ i   \langle  \hbar^{-1/d}\tilde{x}_m,q_{m}-q_{m-1} \rangle} e^{-i  \tilde{s}_{m} \frac{1}{2} q_{m}^2}   \overline{\hat{\mathcal{V}}_{\tilde{\alpha}_m}(q_m,q_{m-1},\boldsymbol{\xi} ) } \Big\}    e^{i  s_0 \frac{1}{2} p_0^2}e^{-i \tilde{s}_0 \frac{1}{2} q_0^2} 
	 \hat{\varphi}(\tfrac{p_0}{\hbar})  \hat{\varphi}(\tfrac{q_0}{\hbar})   f(\boldsymbol{s},\boldsymbol{\tilde{s}})  \,d\boldsymbol{\eta}d\boldsymbol{p}\boldsymbol{\xi}d\boldsymbol{q}   d\boldsymbol{t} d\boldsymbol{s}  d\boldsymbol{\tilde{t}}d\boldsymbol{\tilde{s}}.
	\end{aligned}
\end{equation*}
 We define the functions $\zeta(t)$ for $t\geq0$ to be
 \begin{equation*}
 	\zeta(t) = \frac{1}{\max(1,t)}.
 \end{equation*}
  Using this function we insert the two functions $e^{(\hbar^{-1}t- \boldsymbol{s}_{0,k}^{+}- \boldsymbol{t}_{1,a_k}^{+})\zeta(\hbar^{-1}t)}$ and e$^{(\hbar^{-1}t- \boldsymbol{\tilde{s}}_{0,k}^{+}- \boldsymbol{\tilde{t}}_{1,\tilde{a}_k}^{+})\zeta(\hbar^{-1}t)}$ as they are identical $1$ on our domain of integration. We will in the following use the convention $\zeta(\hbar^{-1}t)=\zeta$. So from inserting the functions we obtain
\begin{equation*}
	\begin{aligned}
	\MoveEqLeft \mathcal{T}(n,\alpha,\tilde{\alpha},\sigma^1,\sigma^2,\kappa)
	= \sum_{(\boldsymbol{x},\boldsymbol{\tilde{x}})\in \mathcal{X}_{\neq}^{2k}} \frac{e^{2\hbar^{-1}t\zeta}}{(2\pi)^2(2\pi\hbar)^{d}}\prod_{i=1}^n \rho^{-1} \delta(x_{\sigma_i^1}- \tilde{x}_{\sigma_{\kappa(i)}^2})   \int_{\R_{+}^{2(j+1)}} f(\boldsymbol{s},\boldsymbol{\tilde{s}}) 
	 \int \delta(p_k-q_k)  
	 \\
	 \times& \prod_{i=1}^{a_k} e^{i t_{i} ( \frac{1}{2}   \eta_{i}^2+\nu +i\zeta )}   \prod_{i=1}^{\tilde{a}_k} e^{- i  \tilde{t}_{i}( \frac{1}{2}   \xi_{i}^2-\tilde{\nu} - i\zeta )}
	 \prod_{m=1}^k \Big\{  e^{i  s_{m} ( \frac{1}{2} p_{m}^2+\nu+i\zeta)}
	 e^{ -i   \langle  \hbar^{-1/d}x_m,p_{m}-p_{m-1} \rangle}   \hat{\mathcal{V}}_{\alpha_m}(p_m,p_{m-1},\boldsymbol{\eta} ) 
	\\
	\times& e^{ i   \langle  \hbar^{-1/d}\tilde{x}_m,q_{m}-q_{m-1} \rangle} e^{-i  \tilde{s}_{m} (\frac{1}{2} q_{m}^2-\tilde{\nu}-i\zeta)}   \overline{\hat{\mathcal{V}}_{\tilde{\alpha}_m}(q_m,q_{m-1},\boldsymbol{\xi} ) } \Big\}  e^{i  s_0 (\frac{1}{2} p_0^2+\nu+i\zeta)} e^{-i \tilde{s}_0 (\frac{1}{2} q_0^2-\tilde{\nu}-i\zeta)}    e^{-i\hbar^{-1}t\tilde{\nu}}  e^{-i\hbar^{-1}t\nu}   
	\\
	\times&  
	 \hat{\varphi}(\tfrac{p_0}{\hbar})  \hat{\varphi}(\tfrac{q_0}{\hbar})   \,d\boldsymbol{\eta}d\boldsymbol{p} \boldsymbol{\xi}d\boldsymbol{q} d\nu d\tilde{\nu}   d\boldsymbol{t} d\boldsymbol{s}  d\boldsymbol{\tilde{t}} d\boldsymbol{\tilde{s}},
	\end{aligned}
\end{equation*}
where we have also written the two delta functions $\delta(\tfrac{t}{\hbar}- \boldsymbol{s}_{0,k}^{+}- \boldsymbol{t}_{1,a_k}^{+})$ and $\delta(\tfrac{t}{\hbar}- \boldsymbol{\tilde{s}}_{0,k}^{+}- \boldsymbol{\tilde{t}}_{1,a_k}^{+})$ as Fourier transforms of $1$. We observe that this has the form as the expressions considered in Lemma~\ref{LE:Exp_ran_phases} and applying this Lemma yields
\begin{equation}
	\begin{aligned}
	\MoveEqLeft \Aver{ \mathcal{T}(n,\alpha,\tilde{\alpha},\sigma^1,\sigma^2,\kappa)}
	=   \frac{ \left(\rho\hbar(2\pi)^d\right)^{2k-n}e^{2\hbar^{-1}t\zeta}}{(2\pi)^2(2\pi\hbar)^{d}}   \int_{\R_{+}^{2(j+1)}} f(\boldsymbol{s},\boldsymbol{\tilde{s}}) \int \delta(p_k-q_k)  \Lambda_n(\boldsymbol{p},\boldsymbol{q},\sigma^1,\sigma^2,\kappa) e^{-i\hbar^{-1}t(\nu+\tilde{\nu})} 
	\\
	\times&   
	 \prod_{i=1}^{a_k} e^{i t_{i} ( \frac{1}{2}   \eta_{i}^2+\nu +i\zeta )}   \prod_{i=1}^{\tilde{a}_k}e^{- i  \tilde{t}_{i}( \frac{1}{2}   \xi_{i}^2-\tilde{\nu} - i\zeta )}
	\prod_{m=1}^k  \Big\{   \hat{\mathcal{V}}_{\alpha_m}(p_m,p_{m-1},\boldsymbol{\eta} ) 
	   e^{i  s_{m} ( \frac{1}{2} p_{m}^2+\nu+i\zeta)}  e^{-i  \tilde{s}_{m} (\frac{1}{2} q_{m}^2-\tilde{\nu}-i\zeta)}  
	   \\
	   \times& \overline{\hat{\mathcal{V}}_{\tilde{\alpha}_m}(q_m,q_{m-1},\boldsymbol{\xi} ) } \Big\}    e^{i  s_0 (\frac{1}{2} p_0^2+\nu+i\zeta)}e^{-i \tilde{s}_0 (\frac{1}{2} q_0^2-\tilde{\nu}-i\zeta)} 
	 \hat{\varphi}(\tfrac{p_0}{\hbar})  \hat{\varphi}(\tfrac{q_0}{\hbar})   \,d\boldsymbol{\eta}d\boldsymbol{p}\boldsymbol{\xi}d\boldsymbol{q} d\nu d\tilde{\nu}   d\boldsymbol{t} d\boldsymbol{s}  d\boldsymbol{\tilde{t}}d\boldsymbol{\tilde{s}}.
	\end{aligned}
\end{equation}
We will again divide into different cases depending on the absolute value of $t$'s and  $\tilde{t}$'s as we did above. We do the same type of argument and obtain the estimate
   \begin{equation}\label{EQ:exp_aver_b_Mainterm_9}
   	\begin{aligned}
	\Aver{ \mathcal{T}(n,\alpha,\tilde{\alpha},\sigma^1,\sigma^2,\kappa)}
	 \leq{}&  \frac{ \left(\rho\hbar(2\pi)^d\right)^{2k-n} e^{2\hbar^{-1}t\zeta}}{(2\pi)^2(2\pi\hbar)^{d}}  \sum_{l=1}^{2^{a_{k}+\tilde{a}_k}}   \sum_{|\epsilon_1|,\dots,|\epsilon_{j_1}|\leq d+1} \sum_{|\nu_1|,\dots,|\nu_{\tilde{j}_1}|\leq d+1} 
	 \\
	 &\times   \int_{\R^{a_k+\tilde{a}_k}_{+}}
	  \boldsymbol{1}_{B_l}(\boldsymbol{t},\boldsymbol{\tilde{t}}) \prod_{i\in J_1} \frac{C(\epsilon_i)}{(2\pi t_i)^\frac{d}{2} }
	  \prod_{i\in \tilde{J}_1} \frac{C(\nu_i)   }{(2\pi \tilde{t}_i)^\frac{d}{2}}  |\mathcal{I}_n(\boldsymbol{t},\boldsymbol{\tilde{t}}) |  d\boldsymbol{t} d\boldsymbol{\tilde{t}} ,
	\end{aligned}
\end{equation}
where
\begin{equation*}
	\begin{aligned}
	\MoveEqLeft \mathcal{I}_n(\boldsymbol{t},\boldsymbol{\tilde{t}})
	=     \int_{\R_{+}^{2(k+1)}} f(\boldsymbol{s},\boldsymbol{\tilde{s}}) \int  \delta(p_k-q_k)   e^{-i\hbar^{-1}t(\nu+\tilde{\nu})} 
	 \Lambda_n(\boldsymbol{p},\boldsymbol{q},\sigma^1,\sigma^2,\kappa) \prod_{m=0}^k e^{i  s_{m} ( \frac{1}{2} p_{m}^2+\nu+i\zeta)} e^{-i  \tilde{s}_{m} (\frac{1}{2} q_{m}^2-\tilde{\nu}-i\zeta)} 
	\\
	\times&   \prod_{i\in \tilde{J}_1} \frac{e^{i\langle y_i, \xi_i  \rangle}  e^{-i  \tilde{t}_{i}^{-1} \frac{1}{2} y_{i}^2}e^{i  \tilde{t}_{i} (\tilde{\nu}+i\zeta)} y_i ^{\nu_i} }{(1+|y_i |^2)^{d+1}}  \prod_{i\in J_1} \frac{e^{i\langle x_i, \eta_i  \rangle}e^{i  t_{i}^{-1}  \frac{1}{2} x_{i}^2 } e^{i  t_{i} (\nu+i\zeta)} x_i ^{\epsilon_i} }{(1+|x_i |^2)^{d+1}} \prod_{i\in J_2} e^{i  t_{i} ( \frac{1}{2} \eta_{i}^2 +\nu+i\zeta)}   \prod_{i\in \tilde{J}_2}e^{-i  \tilde{t}_{i} (\frac{1}{2} \xi_{i}^2-\tilde{\nu}-i\zeta)}  
	  \\
	  \times& 
	   \Big\{  \partial_{\boldsymbol{\xi}_{\tilde{J}_1}}^{\boldsymbol{\nu}} \partial_{\boldsymbol{\eta}_{J_1}}^{\boldsymbol{\epsilon}}  \prod_{m=1}^k    \hat{\mathcal{V}}_{\alpha_m}(p_m,p_{m-1},\boldsymbol{\eta} )  \overline{\hat{\mathcal{V}}_{\tilde{\alpha}_m}(q_m,q_{m-1},\boldsymbol{\xi} ) } \Big\}  \hat{\varphi}(\tfrac{p_0}{\hbar})  \hat{\varphi}(\tfrac{q_0}{\hbar})  
	  \, d\boldsymbol{x} d\boldsymbol{y}  d\boldsymbol{\eta}d\boldsymbol{p} \boldsymbol{\xi}d\boldsymbol{q}d\nu d\tilde{\nu}    d\boldsymbol{s}   d\boldsymbol{\tilde{s}}.
	\end{aligned}
\end{equation*}
Recall that $i^{*}$ was the first index such that $\kappa(i^{*})\neq i^{*}$ and that the function $\Lambda_n(\boldsymbol{p},\boldsymbol{q},\sigma^1,\sigma^2,\kappa)$ is given by
  \begin{equation*}
  	\begin{aligned}
  	\MoveEqLeft\Lambda_n(\boldsymbol{p},\boldsymbol{q},\sigma^1,\sigma^2,\kappa)   
	\\
	 = {}& \prod_{i=1}^n \delta(p_{\sigma^1_{i-1}}- q_{\sigma^2_{\kappa(i)-1}} - p_{\sigma^1_{n}}+ q_{\sigma^2_{n}} - l^{\kappa}_{i}(\boldsymbol{q}_{\sigma^2}))\prod_{i=1}^{n+1} \prod_{m=\sigma^1_{i-1}+1}^{\sigma^1_{i}-1} \delta(p_m-p_{\sigma^1_{i-1}})  \prod_{m=\sigma^2_{i-1}+1}^{\sigma^2_{i}-1} \delta(q_m-q_{\sigma^1_{i-1}}).
	\end{aligned}
  \end{equation*}
We now do the integrals in $s_0$, $s_{\sigma^1_{i^{*}}}$, $\tilde{s}_{\sigma^2_{i}}$ for all $i\in\{0,\dots,n\}$ all $p$ and all $q$ for which the index is not in $\sigma^2$. Note that the integrals we do in $s$ and $\tilde{s}$ variables are exactly the variables the $f(\boldsymbol{s},\boldsymbol{\tilde{s}})$ does not depend on. After evaluating these integrals and moving the absolute value in under the integrals we get the estimate
\begin{equation}\label{EQ:exp_aver_b_Mainterm_10}
	\begin{aligned}
	 \big| \mathcal{I}_n(\boldsymbol{t},\boldsymbol{\tilde{t}})\big|
	\leq{}& \int  \prod_{i\in J_1} \frac{| x_i ^{\epsilon_i}| }{(1+|x_i |^2)^{d+1}} \,d\boldsymbol{x} \int\prod_{i\in \tilde{J}_1} \frac{| y_i ^{\nu_i}| }{(1+|y_i |^2)^{d+1}}  \,d\boldsymbol{y}      \int_{\R_{+}^{2k-n-1}} f(\boldsymbol{s},\boldsymbol{\tilde{s}})  d\boldsymbol{s}   d\boldsymbol{\tilde{s}}
	\\
	&\times   \int |\hat{\varphi}(\tfrac{q_0}{\hbar}) |^2   \frac{1}{|\frac{1}{2} q_0^2-\tilde{\nu}-i\zeta|}\frac{1}{|\frac{1}{2} q_0^2+\nu +i\zeta|}
	  \frac{1}{| \frac{1}{2} (q_{\kappa(i^{*})}+q_{i^{*}-1}-q_{\kappa(i^{*})-1})^2 +\nu+i\zeta|}  
	  \\
	  &\times    \prod_{m=1}^n \frac{1}{ |\frac{1}{2} q_{m}^2-\tilde{\nu}-i\zeta|}
	 \Big| \partial_{\boldsymbol{\xi}_{\tilde{J}_1}}^{\boldsymbol{\nu}} \partial_{\boldsymbol{\eta}_{J_1}}^{\boldsymbol{\epsilon}}   \mathcal{G}(\boldsymbol{q},\boldsymbol{\eta},\boldsymbol{\xi},\sigma^1,\sigma^2,\alpha,\tilde{\alpha})  \Big| 
	  \,d\boldsymbol{\eta}d\boldsymbol{\xi}d\boldsymbol{q} d\nu d\tilde{\nu},
	\end{aligned}
\end{equation}
where we have used that $l_{i^{*}}^\kappa(\boldsymbol{q}) = q_{i^{*}-1}-q_{\kappa(i^{*})-1}$ and $q_{\kappa(1)-1}+l_{1}^\kappa(\boldsymbol{q}) = q_{0}$ and the notation
\begin{equation*}
	\begin{aligned}
	 \mathcal{G}(\boldsymbol{q},\boldsymbol{\eta},\boldsymbol{\xi},\sigma^1,\sigma^2,\alpha,\tilde{\alpha}) ={}&   \prod_{i=1}^n \hat{\mathcal{V}}_{\alpha_{\sigma^1_i}}(q_{\kappa(i)}+l_{i}^\kappa(\boldsymbol{q}),q_{\kappa(i)-1}+l_{i}^\kappa(\boldsymbol{q}),\boldsymbol{\eta})  \overline{\hat{\mathcal{V}}_{\tilde{\alpha}_{\sigma^1_i}}(q_i,q_{i-1},\boldsymbol{\xi})}  
	\\
	&\times  \prod_{i=0}^n  \prod_{m=\sigma^1_i +1}^{\sigma^1_{i+1} -1} \hat{\mathcal{V}}_{\alpha_m}(q_{\kappa(i)}+l_{i}^\kappa(\boldsymbol{q}),q_{\kappa(i)}+l_{i}^\kappa(\boldsymbol{q}),\boldsymbol{\eta})  \prod_{m=\sigma^2_i +1}^{\sigma^2_{i+1} -1} \overline{\hat{\mathcal{V}}_{\tilde{\alpha}_m}(q_i,q_i,\boldsymbol{\xi})}.   
	\end{aligned}
\end{equation*}
Again we wil consider each of the integrals separately. The integrals  over $\boldsymbol{x}$ and $\boldsymbol{y}$ we can estimate with \eqref{EQ:exp_aver_b_Mainterm_6}. For the integrals over $\boldsymbol{s}$ and $\boldsymbol{\tilde{s}}$ we have that
\begin{equation}\label{EQ:exp_aver_b_Mainterm_11}
	\begin{aligned}
	 \int_{\R_{+}^{2k-n-1}} f(\boldsymbol{s},\boldsymbol{\tilde{s}})  d\boldsymbol{s}  d\boldsymbol{\tilde{s}}
	=  \int_{\R_{+}^{2k-n-1}} \boldsymbol{1}_{[0,\hbar^{-1}t]}\big(  \boldsymbol{s}_{1,k-1}^{+} \big) \boldsymbol{1}_{[0,\hbar^{-1}t]}\big( \boldsymbol{\tilde{s}}_{1,k-n}^{+} \big) d\boldsymbol{s}   d\boldsymbol{\tilde{s}} 
	 = \frac{(\hbar^{-1}t)^{2k-n-1}}{(k-1)!(k-n)!}.
	\end{aligned}
\end{equation}
In the estimates of the remaining integrals we will introduce the term  
  \begin{equation*}
  	\frac{\langle q_0\rangle^{4d+4}}{\langle q_0\rangle^{4d+4}} \frac{\langle q_{\kappa(i^{*})}+q_{i^{*}-1}-q_{\kappa(i^{*})-1}\rangle^{d+1}}{\langle q_{\kappa(i^{*})}+q_{i^{*}-1}-q_{\kappa(i^{*})-1}\rangle^{d+1}} \prod_{i=1}^n \frac{\langle q_{i}-q_{i-1}\rangle^{5d+5}}{\langle  q_{i}-q_{i-1}\rangle^{5d+5}}. 
  \end{equation*}
  Firstly we have that
  \begin{equation}\label{EQ:exp_aver_b_Mainterm_12}
	\begin{aligned}
	\MoveEqLeft  \sup_{\boldsymbol{q}}   \int  \langle q_{{\kappa(i^{*})}-1}-q_{\kappa(i^{*})-1}\rangle^{d+1} \prod_{i=1}^n \langle q_{i}-q_{i-1}\rangle^{5d+5}  	 \Big| \partial_{\boldsymbol{\xi}_{\tilde{J}_1}}^{\boldsymbol{\nu}} \partial_{\boldsymbol{\eta}_{J_1}}^{\boldsymbol{\epsilon}} \mathcal{G}(\boldsymbol{q},\boldsymbol{\eta},\boldsymbol{\xi},\sigma^1,\sigma^2,\alpha,\tilde{\alpha}) \Big| \,d\boldsymbol{\eta}d\boldsymbol{\xi}
	 \\
	 &\leq C^{a_k+\tilde{a}_k}  \norm{\hat{V}}_{1,\infty,3d+3}^{|\alpha|+|\tilde{\alpha}|} .
	\end{aligned}
\end{equation}
  Next using that 
    \begin{equation*}
  	\frac{\langle q_{\kappa(i^{*})}+q_{i^{*}-1}-q_{\kappa(i^{*})-1}\rangle^{d+1} \langle q_{\kappa(i^{*})}\rangle^{d+1} \langle q_{\kappa(i^{*})+1}\rangle^{d+1}\langle q_{\kappa(i^{*})-1}\rangle^{d+1}}{\langle q_{\kappa(i^{*})}-q_{\kappa(i^{*})-1}\rangle^{d+1}\langle q_0\rangle^{4d+4}}  \prod_{i=1}^n \frac{1}{\langle  q_{i}-q_{i-1}\rangle^{4d+4}} \leq C.
  \end{equation*}
  We only have to consider the integral
  \begin{equation}\label{EQ:exp_aver_b_Mainterm_13}
	\begin{aligned}
	\MoveEqLeft   \int  \frac{1}{\langle\tilde{\nu}\rangle|\frac{1}{2} q_0^2-\tilde{\nu}-i\zeta|}\frac{1}{\langle{\nu}\rangle|\frac{1}{2} q_0^2+\nu +i\zeta|}   \frac{ \langle q_{\kappa(i^{*})}+q_{i^{*}-1}-q_{\kappa(i^{*})-1}\rangle^{-d-1}}{| \frac{1}{2} (q_{\kappa(i^{*})}+q_{i^{*}-1}-q_{\kappa(i^{*})-1})^2 +\nu+i\zeta|} 
	  \\
	  &\times  \frac{\langle\tilde{\nu}\rangle \langle{\nu}\rangle}{\langle q_{\kappa(i^{*})-1}\rangle^{d+1}\langle q_{\kappa(i^{*})}\rangle^{d+1} \langle q_{\kappa(i^{*})+1}\rangle^{d+1}}   \prod_{m=1}^n \frac{\langle  q_{m}-q_{m-1}\rangle^{-d-1}}{ |\frac{1}{2} q_{m}^2-\tilde{\nu}-i\zeta|}
	\, d\boldsymbol{q} d\nu d\tilde{\nu}.
	\end{aligned}
\end{equation}
  Applying the estimate $\langle q_{\kappa(i^{*})  } - q_{\kappa(i^{*}) -1 } \rangle^{-d-1} \langle q_{\kappa(i^{*}) +1 } - q_{\kappa(i^{*}) } \rangle^{-d-1}\leq1$ we can use Lemma~\ref{LE:est_res_combined} with $q_{\kappa(i^{*})}$ as ``$p$'' and $ q_{\kappa(i^{*})-1}- q_{i^{*}-1}$ as ``$q$''.  This yields the estimate
    \begin{equation}\label{EQ:exp_aver_b_Mainterm_15}
	\begin{aligned}
	\MoveEqLeft   \int_{\R^{nd+2}}    \frac{1}{\langle\tilde{\nu}\rangle|\frac{1}{2} q_0^2-\tilde{\nu}-i\zeta|}\frac{1}{\langle{\nu}\rangle|\frac{1}{2} q_0^2+\nu +i\zeta|}   \frac{ \langle q_{\kappa(i^{*})}+q_{i^{*}-1}-q_{\kappa(i^{*})-1}\rangle^{-d-1}}{| \frac{1}{2} (q_{\kappa(i^{*})}+q_{i^{*}-1}-q_{\kappa(i^{*})-1})^2 +\nu+i\zeta|} 
	  \\
	  &\times  \frac{\langle\tilde{\nu}\rangle \langle{\nu}\rangle}{\langle q_{\kappa(i^{*})-1}\rangle^{d+1} \langle q_{\kappa(i^{*})}\rangle^{d+1} \langle q_{\kappa(i^{*})+1}\rangle^{d+1}}   \prod_{m=1}^n \frac{\langle  q_{m}-q_{m-1}\rangle^{-d-1}}{ |\frac{1}{2} q_{m}^2-\tilde{\nu}-i\zeta|}
	\, d\boldsymbol{q} d\nu d\tilde{\nu}
	\\
	\leq{}& C\log(\zeta)^2\int  \frac{1}{\langle\tilde{\nu}\rangle|\frac{1}{2} q_0^2-\tilde{\nu}-i\zeta|}\frac{1}{\langle{\nu}\rangle|\frac{1}{2} q_0^2+\nu +i\zeta|} \frac{\langle q_{\kappa(i^{*})-1}\rangle^{-d-1}}{| q_{\kappa(i^{*})-1}- q_{i^{*}-1}|} \frac{ \langle q_{\kappa(i^{*})+1}\rangle^{-d-1}}{|\frac{1}{2} q_{\kappa(i^{*})+1}^2-\tilde{\nu}-i\zeta|} 
	  \\
	  &\times   \prod_{m=1}^{\kappa(i^{*})-1} \frac{\langle  q_{m}-q_{m-1}\rangle^{-d-1}}{ |\frac{1}{2} q_{m}^2-\tilde{\nu}-i\zeta|} \prod_{m=\kappa(i^{*})+2}^n \frac{\langle  q_{m}-q_{m-1}\rangle^{-d-1}}{ |\frac{1}{2} q_{m}^2-\tilde{\nu}-i\zeta|}
	\, d\boldsymbol{q}_{1,\kappa(i^{*})-1} d\boldsymbol{q}_{\kappa(i^{*})+1,n} d\nu d\tilde{\nu}.
	\end{aligned}
\end{equation}
  Next we do the integral in $q_{\kappa(i^{*})-1}$.
  By using the estimate $\langle  q_{\kappa(i^{*})-1}-q_{\kappa(i^{*})-2}\rangle^{-d-1}\leq 1$ and Lemma~\ref{LE:resolvent_int_est} we get that
      \begin{equation}\label{EQ:exp_aver_b_Mainterm_15.1}
	\begin{aligned}
	  \int  \frac{\langle q_{\kappa(i^{*})-1}\rangle^{-d-1}}{| q_{\kappa(i^{*})-1}- q_{i^{*}-1}|}  \frac{\langle  q_{\kappa(i^{*})-1}-q_{\kappa(i^{*})-2}\rangle^{-d-1}}{ |\frac{1}{2} q_{\kappa(i^{*})-1}^2-\tilde{\nu}-i\zeta|}
	\, dq_{\kappa(i^{*})-1}
	\leq  C|\log(\zeta)|.
	\end{aligned}
\end{equation}
   For the remaining integrals we apply Lemma~\ref{LE:resolvent_int_est} repeatedly. This combined with \eqref{EQ:exp_aver_b_Mainterm_15} and \eqref{EQ:exp_aver_b_Mainterm_15.1} gives us the estimate    
      \begin{equation}\label{EQ:exp_aver_b_Mainterm_16}
	\begin{aligned}
	\MoveEqLeft   \int    \frac{1}{\langle\tilde{\nu}\rangle|\frac{1}{2} q_0^2-\tilde{\nu}-i\zeta|}\frac{1}{\langle{\nu}\rangle|\frac{1}{2} q_0^2+\nu +i\zeta|}   \frac{ \langle q_{\kappa(i^{*})}+q_{i^{*}-1}-q_{\kappa(i^{*})-1}\rangle^{-d-1}}{| \frac{1}{2} (q_{\kappa(i^{*})}+q_{i^{*}-1}-q_{\kappa(i^{*})-1})^2 +\nu+i\zeta|} 
	  \\
	  &\times  \frac{\langle\tilde{\nu}\rangle \langle{\nu}\rangle}{\langle q_{\kappa(i^{*})-1}\rangle^{d+1} \langle q_{\kappa(i^{*})}\rangle^{d+1} \langle q_{\kappa(i^{*})+1}\rangle^{d+1}}   \prod_{m=1}^n \frac{\langle  q_{m}-q_{m-1}\rangle^{-d-1}}{ |\frac{1}{2} q_{m}^2-\tilde{\nu}-i\zeta|}
	\, d\boldsymbol{q}_{1,n} d\nu d\tilde{\nu}
	\leq C^n |\log(\zeta)|^{n+3}.
	\end{aligned}
\end{equation}
Moreover, we have that
\begin{equation}\label{EQ:exp_aver_b_Mainterm_17}
	\int \langle q_0 \rangle^{4d+4} |\hat{\varphi}(\tfrac{q_0}{\hbar}) |^2 \, dq_0 \leq \hbar^d C \norm{\varphi}_{\mathcal{H}^{2d+2}_\hbar(\R^d)}^2.
\end{equation}
From combining \cref{EQ:exp_aver_b_Mainterm_6,EQ:exp_aver_b_Mainterm_10,EQ:exp_aver_b_Mainterm_9,EQ:exp_aver_b_Mainterm_17,EQ:exp_aver_b_Mainterm_16,EQ:exp_aver_b_Mainterm_12,EQ:exp_aver_b_Mainterm_11} we obtain the estimate
   \begin{equation}\label{EQ:exp_aver_b_Mainterm_18}
   	\begin{aligned}
	\Aver{\mathcal{T}(n,\alpha,\tilde{\alpha},\sigma^1,\sigma^2,\kappa)}  \leq   C_d^{a_k+\tilde{a_k}+n}    \left(\norm{\hat{V}}_{1,\infty,3d+3}\right)^{|\alpha|+|\tilde{\alpha}|}     \frac{\rho(\rho t)^{2k-n-1} \hbar |\log(\zeta)|^{n+3}}{(k-1)!(k-n)!} \norm{\varphi}_{\mathcal{H}^{2d+2}_\hbar(\R^d)}^2,
	\end{aligned}
\end{equation}
where the constant $C_d$ only depends on the dimension. Moreover, we have used that by definition of $\zeta$ we have the estimate $e^{2\hbar^{-1}t\zeta}\leq~C$, where the constant is independent of $t$ and $\hbar$. Combining \eqref{EQ:exp_aver_b_Mainterm_1}, \eqref{EQ:exp_aver_b_Mainterm_8} and \eqref{EQ:exp_aver_b_Mainterm_18} we get the estimate
\begin{equation*}
	\begin{aligned}
	\mathbb{E}\Big[\big\lVert \sum_{\boldsymbol{x}\in\mathcal{X}_{\neq}^{k}}\mathcal{I}_{0,0}(k,\boldsymbol{x},t;\hbar) \varphi\big\rVert_{L^2(\R^d)}^2\Big]
	\leq{}& \sum_{\alpha,\tilde{\alpha}\in\N^k}  C_d^{|\alpha|+|\tilde{\alpha}|-k} \left(\lambda\norm{\hat{V}}_{1,\infty,3d+3}\right)^{|\alpha|+|\tilde{\alpha}|}   \sum_{n=0}^k  \Big[   \frac{ (\rho t)^{2k-n}}{k!(k-n)!} \norm{\varphi}_{L^2(\R^d)}^2 
	\\
	&
	+ \hbar |\log(\zeta)|^{n+3} \frac{\rho(\rho t)^{2k-n-1} n!}{(k-1)!(k-n)!} \norm{\varphi}_{\mathcal{H}^{2d+2}_\hbar(\R^d)}^2 \Big],
	\end{aligned}
\end{equation*}
where we have used that the number of elements in $\mathcal{A}(k,n)$ is bounded by $2^k$ and that the number of elements in $\mathcal{S}_n$ is $n!$. Using our assumptions on the potential and the coupling constant we get that 
\begin{equation*}
	\begin{aligned}
	\MoveEqLeft \mathbb{E}\Big[\big\lVert \sum_{\boldsymbol{x}\in\mathcal{X}_{\neq}^{k}}\mathcal{I}_{0,0}(k,\boldsymbol{x},t;\hbar) \varphi\big\rVert_{L^2(\R^d)}^2\Big]
	\\
	\leq{}& C^k  \sum_{n=0}^k  \Big[   \frac{ (\rho t)^{2k-n}}{k!(k-n)!} \norm{\varphi}_{L^2(\R^d)}^2  
	+ \hbar |\log(\zeta)|^{n+3} \frac{\rho(\rho t)^{2k-n-1} n!}{(k-1)!(k-n)!} \norm{\varphi}_{\mathcal{H}^{2d+2}_\hbar(\R^d)}^2 \Big].
	\end{aligned}
\end{equation*}
Using that $\frac{n!}{(k-1)!}\leq k$, $\zeta \geq \frac{\hbar}{t} $ and evaluating the sum in $n$ we obtain the estimate
\begin{equation*}
	\begin{aligned}
	\mathbb{E}\Big[\big\lVert \sum_{\boldsymbol{x}\in\mathcal{X}_{\neq}^{k}}\mathcal{I}_{0,0}(k,\boldsymbol{x},t;\hbar) \varphi\big\rVert_{L^2(\R^d)}^2\Big]
	\leq   \frac{ C^{k}}{k!} \norm{\varphi}_{L^2(\R^d)}^2  + \hbar k C^k |\log(\tfrac{\hbar}{t})|^{k+3} \norm{\varphi}_{\mathcal{H}^{2d+2}_\hbar(\R^d)}^2,
	\end{aligned}
\end{equation*}
where $C$ depends on $\rho$, $t$, the single site potential $V$ and the coupling constant $\lambda$. This concludes the proof.
\end{proof}
\begin{remark}\label{remark_mod_eng_erdos_1}
As mentioned in the introduction the methodology of the proof just given follows the methodology of Eng and Erd{\H o}s \cite{MR2156632} but is not identical. Firstly we write the expressions different and use other notation. But the main difference in the proofs is that we  use a simultaneous  Fourier based argument to ensure that all integrals in the internal time variables is finite. That is in the notation in the above proof the $t$ and $\tilde{t}$ variables. For these variable a successive type of the same argument is used in \cite{MR2156632}. 

In the cases, where $\kappa\neq\mathrm{id}$ we get from this method the factor $k$ instead of $k!$ as in \cite{MR2156632} in the second term in our error estimate.  

For the case where $\kappa=\mathrm{id}$ we connect the techniques used to obtain the necessary integrability in the $s$ and $\tilde{s}$ variables with the integrability argument for the $t$ and $\tilde{t}$ variables. This ensures that we here do not need any derivatives of our initial data in this case. Moreover, from making this coupling we get the combinatorial factor of size $4^{n(d+1)}$ in most cases. This combinatorial factor comes from our use of integration by parts as each $p_i$ for $i\in\{1,\dots,n\}$  is in four places when we have connected the two arguments. If we do not make this connection the the combinatorial factor will be of the size
\begin{equation*}
	4^{n(d+1)}\prod_{i=1}^n (\sigma_{i+1}^1- \sigma_{i}^1)^{d+1}  (\sigma_{i+1}^2- \sigma_{i}^2)^{d+1},
\end{equation*}
for most cases. Hence for small values of $n$ this will make a difference. Since we use uniform estimate over all cases we have obtained a smaller constant from connecting the arguments. 

One disadvantage from making this ``coupling'' between the estimates is that we then need to preform the Duhamel expansion to higher order in recollisions. 
This is also the reason that we later will use a different argument to prove that the recollision error indeed converges to zero in the limit. 
\end{remark}
\begin{lemma}\label{expansion_aver_bound_Mainterm_2}
  Assume we are in the setting of Definition~\ref{def_remainder_k_0}. Let $\varphi \in \mathcal{H}^{3d+3}_\hbar(\R^d)$ and $\tau_0\in\N$. Then for any
  $\tau\in\{2,\dots,\tau_0\}$
    \begin{equation*}
	\begin{aligned}
	\MoveEqLeft \mathbb{E}\Big[ \big\lVert \sum_{k_1=1}^{k_0} \sum_{k_2=k_0-k_1+1}^{k_0} 
	  \sum_{(\boldsymbol{x}_1,\boldsymbol{x}_2)\in \mathcal{X}_{\neq}^{k_1+k_2}}  \mathcal{I}_{0,0}(k_1,\boldsymbol{x}_1,\iota,\tfrac{t}{\tau_0};\hbar)  \mathcal{I}_{0,0}(k_2,\boldsymbol{x}_2,\iota,\tfrac{\tau-1}{\tau_0}t;\hbar)\varphi\big\rVert_{L^2(\R^d)}^2 \Big]
	  \\
	  &\leq \frac{k_0^3}{\tau_0} \frac{C^{k_0}}{ (k_0+1)!} \norm{\varphi}_{L^2(\R^d)}^2
	 +  \frac{k_0^5\hbar}{\tau_0} C^{k_0} |\log(\tfrac{\hbar}{t})|^{2k_0+4}   \norm{\varphi}_{\mathcal{H}^{3d+3}_\hbar(\R^d)}^2,
	 \end{aligned}
\end{equation*}
where the constant $C$ depends on $\rho$, $t$, the single site potential $V$ and the coupling constant $\lambda$. In particular we have that the function is in  $L^2(\R^d)$ $\Pro$-almost surely. 
  \end{lemma}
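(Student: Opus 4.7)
The plan is to mimic the proof of Lemma~\ref{expansion_aver_bound_Mainterm}, exploiting the observation (see \eqref{EQ:Duhamel_proof_eq_6} in the proof of Lemma~\ref{duhamel_expansion_lemma}) that for $(\boldsymbol{x}_1,\boldsymbol{x}_2) \in \mathcal{X}_{\neq}^{k_1+k_2}$ the composition $\mathcal{I}_{0,0}(k_1,\boldsymbol{x}_1,\tfrac{t}{\tau_0};\hbar)\mathcal{I}_{0,0}(k_2,\boldsymbol{x}_2,\tfrac{\tau-1}{\tau_0}t;\hbar)$ coincides with $\mathcal{I}_{0,0}(k_1+k_2,(\boldsymbol{x}_1,\boldsymbol{x}_2),\tfrac{\tau}{\tau_0}t;\hbar)$ restricted to the time slice where the ordered variables $s_1 > \cdots > s_{k_1+k_2}$ satisfy $s_{k_1} \geq \tfrac{\tau-1}{\tau_0}t \geq s_{k_1+1}$. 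This identity lets me re-use the machinery developed for the single-$\mathcal{I}_{0,0}$ estimate and keeps track of the ``cut'' only through an extra characteristic function in the time integrals.

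First I would apply $\|\sum_{i=1}^N a_i\|_{L^2}^2 \leq N\sum_i \|a_i\|_{L^2}^2$ to pull the double sum over the at most $k_0(k_0+1)/2$ pairs $(k_1,k_2)$ outside of the $L^2$-norm, paying a prefactor $O(k_0^2)$. For each fixed pair I would then follow the $\kappa=\mathrm{id}$ versus $\kappa \neq \mathrm{id}$ dichotomy of Lemma~\ref{expansion_aver_bound_Mainterm}, invoking Lemma~\ref{LE:crossing_dom_ladder} to dominate the crossings by ladder diagrams, Lemma~\ref{LE:Exp_ran_phases} to perform the Poisson expectation, and Lemma~\ref{app_quadratic_integral_tech_est} for the internal time integrals. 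The only new ingredient is the characteristic function confining the $k_1$ latest scattering times to a window of length $t/\tau_0$: in the analogue of \eqref{EQ:exp_aver_b_Mainterm_5} this reduces the time volume by a factor $\tau_0^{-k_1}\leq \tau_0^{-1}$, which is the origin of the $1/\tau_0$ gain in both summands of the claimed bound. In the $\kappa \neq \mathrm{id}$ case I would insert the regularisation $\zeta = \hbar/t$ and apply the resolvent estimates of Lemmas~\ref{LE:resolvent_int_est} and~\ref{LE:est_res_combined} as before, with the difference that the cut introduces an additional ``junction'' momentum index that must be damped by an extra weight $\langle q_{k_1}\rangle^{-(d+1)}$; this costs one further integration by parts and accounts for the stronger Sobolev regularity $\mathcal{H}^{3d+3}_\hbar$ instead of $\mathcal{H}^{2d+2}_\hbar$.

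Combining these steps yields, for each admissible pair, a bound of the form
\begin{equation*}
\frac{C^{k_1+k_2}}{\tau_0(k_1+k_2)!}\|\varphi\|^2_{L^2(\R^d)} \;+\; \frac{\hbar(k_1+k_2)\,C^{k_1+k_2}\,|\log(\hbar/t)|^{k_1+k_2+3}}{\tau_0}\|\varphi\|^2_{\mathcal{H}^{3d+3}_\hbar(\R^d)}.
\end{equation*}
Finally I would sum over $(k_1,k_2)$ subject to $k_0 < k_1+k_2 \leq 2k_0$ and $k_1,k_2 \leq k_0$. For each fixed total $k$ in this range there are at most $k_0$ admissible pairs, so $\sum C^{k_1+k_2}/(k_1+k_2)! \leq k_0\,\tilde C^{k_0}/(k_0+1)!$ after estimating the geometric tail. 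Multiplying by the outer $O(k_0^2)$ prefactor gives the claimed first summand $\tfrac{k_0^3}{\tau_0}\tfrac{C^{k_0}}{(k_0+1)!}$, while absorbing $k_1+k_2 \leq 2k_0$ together with $|\log(\hbar/t)|^{k_1+k_2+3} \leq |\log(\hbar/t)|^{2k_0+4}$ produces the Sobolev summand $\tfrac{k_0^5\hbar}{\tau_0}C^{k_0}|\log(\hbar/t)|^{2k_0+4}$. The main obstacle is handling the junction momentum in the crossing case $\kappa \neq \mathrm{id}$: one must verify that the extra integration by parts at the cut genuinely absorbs the new inverse factor of $|q_{k_1}|$ without introducing any uncontrolled $\hbar^{-1}$ or additional logarithmic powers beyond $|\log(\hbar/t)|^{2k_0+4}$, which is precisely what the higher Sobolev exponent $3d+3$ is designed to accommodate.
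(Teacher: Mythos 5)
Your proposal follows essentially the same route as the paper: pull the double sum out at a cost of $O(k_0^2)$, rewrite the composition as a single expansion with the time variables split between a window of length $t/\tau_0$ and its complement (which is exactly the paper's identity after the shift $s_m\mapsto s_m+\tfrac{t}{\tau_0}$), rerun the $\kappa=\mathrm{id}$/$\kappa\neq\mathrm{id}$ analysis of Lemma~\ref{expansion_aver_bound_Mainterm} with the shortened time volume supplying the $\tau_0^{-1}$, and close with the same combinatorial bookkeeping giving the $k_0^3$ and $k_0^5$ prefactors. The only cosmetic difference is in the crossing case: the paper obtains the need for $\mathcal{H}^{3d+3}_\hbar$ not through an extra integration by parts at the junction, but by using window-adapted regularisations $\zeta_0,\zeta_1$ and a third frequency variable $\tilde\nu_0$ (from the additional delta-function constraint on the second window's total time), whose companion weight $\langle q_0\rangle^{d+1}$ lands on the initial data — the same net effect you anticipate.
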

  \begin{proof}
  Firstly we observe that
  \begin{equation}\label{EQ:Aver_full_ex_0,0_dob1}
	\begin{aligned}
	\MoveEqLeft  \big\lVert \sum_{k_1=1}^{k_0} \sum_{k_2=k_0-k_1+1}^{k_0} 
	  \sum_{(\boldsymbol{x}_1,\boldsymbol{x}_2)\in \mathcal{X}_{\neq}^{k_1+k_2}}  \mathcal{I}_{0,0}(k_1,\boldsymbol{x}_1,\iota,\tfrac{t}{\tau_0};\hbar)  \mathcal{I}_{0,0}(k_2,\boldsymbol{x}_2,\iota,\tfrac{\tau-1}{\tau_0}t;\hbar)\varphi\big\rVert_{L^2(\R^d)}^2 
	  \\
	  &\leq k_0^2  \sum_{k_1=1}^{k_0} \sum_{k_2=k_0-k_1+1}^{k_0} 
	   \big\lVert \sum_{(\boldsymbol{x}_1,\boldsymbol{x}_2)\in \mathcal{X}_{\neq}^{k_1+k_2}}  \mathcal{I}_{0,0}(k_1,\boldsymbol{x}_1,\iota,\tfrac{t}{\tau_0};\hbar)  \mathcal{I}_{0,0}(k_2,\boldsymbol{x}_2,\iota,\tfrac{\tau-1}{\tau_0}t;\hbar)\varphi\big\rVert_{L^2(\R^d)}^2 
	   \\
	   &
	   = k_0^2 \sum_{k=k_0+1}^{2k_0}  \sum_{k_1+k_2=k} 
	   \big\lVert \sum_{(\boldsymbol{x}_1,\boldsymbol{x}_2)\in \mathcal{X}_{\neq}^{k}}  \mathcal{I}_{0,0}(k_1,\boldsymbol{x}_1,\iota,\tfrac{t}{\tau_0};\hbar)  \mathcal{I}_{0,0}(k_2,\boldsymbol{x}_2,\iota,\tfrac{\tau-1}{\tau_0}t;\hbar)\varphi\big\rVert_{L^2(\R^d)}^2.
	 \end{aligned}
\end{equation}
  We now fix $k=k_1+k_2$.
As in the proof of Lemma~\ref{expansion_aver_bound_Mainterm} we get two sums over the set $\mathcal{X}_{\neq}^{k}$ when we take the $L^2$-norm. Again we will divide this double sum into cases depending on how many points they have in common. However this time we also need to keep track of how these points is chosen in relation to the two operators. With this in mind we have that
\begin{equation}\label{EQ:exp_aver_b_er_mainterm_1}
	\begin{aligned}
	\MoveEqLeft  \big\lVert \sum_{(\boldsymbol{x}_1,\boldsymbol{x}_2)\in \mathcal{X}_{\neq}^{k}}  \mathcal{I}_{0,0}(k_1,\boldsymbol{x}_1,\iota,\tfrac{t}{\tau_0};\hbar)  \mathcal{I}_{0,0}(k_2,\boldsymbol{x}_2,\iota,\tfrac{\tau-1}{\tau_0}t;\hbar)\varphi\big\rVert_{L^2(\R^d)}^2 = \sum_{\alpha,\tilde{\alpha}\in\N^k} (i\lambda)^\alpha(-i\lambda)^{\tilde{\alpha}}
	\\
	&\times \sum_{n=0}^k \sum_{\substack{n_1+n_2=n \\ n_1\leq k_1,n_2\leq k_2}} \sum_{\sigma^1\in\tilde{\mathcal{A}}(k_1,n_1,k_2,n_2)} \sum_{\substack{\tilde{n}_1+\tilde{n}_2=n \\ \tilde{n}_1\leq k_1,\tilde{n}_2\leq k_2}} \sum_{\sigma^2\in\tilde{\mathcal{A}}(k_1,\tilde{n}_1,k_2,\tilde{n}_2)}  \sum_{\kappa\in\mathcal{S}_n} \mathcal{T}(n,\sigma^1,\sigma^2,\alpha,\tilde{\alpha},\kappa),
	\end{aligned}
\end{equation}
where the sets $\tilde{\mathcal{A}}(k_1,n_1,k_2,n_2)$ are defined by
\begin{equation} \label{orderset_def_2}
\begin{aligned}
	 \tilde{\mathcal{A}}(k_1,n_1,k_2,n_2) = \{ \boldsymbol{\sigma} \in \{1,\dots,k_1+k_2\}^n \, |\, \sigma_1<\cdots<\sigma_{n_2}\leq k_2 < \sigma_{n_2+1}<\cdots <\sigma_{n_1+n_2} \}.
	\end{aligned}
\end{equation}
The numbers $\mathcal{T}(n,\sigma^1,\sigma^2,\alpha,\tilde{\alpha},\kappa)$ are given by  
\begin{equation*}
	\begin{aligned}
	\MoveEqLeft  \mathcal{T}(n,\sigma^1,\sigma^2,\alpha,\tilde{\alpha},\kappa)
	= \sum_{(\boldsymbol{x},\boldsymbol{\tilde{x}})\in \mathcal{X}_{\neq}^{2k}}  \prod_{i=1}^n \rho^{-1}\delta(x_{\sigma_i^1}- \tilde{x}_{\sigma_{\kappa(i)^2}}) 
	\\
	\times&  \int \mathcal{I}(k_1,\boldsymbol{x}_1,\alpha,\iota,\tfrac{t}{\tau_0};\hbar)  \mathcal{I}(k_2,\boldsymbol{x}_2,\alpha,\iota,\tfrac{\tau-1}{\tau_0}t;\hbar)\varphi(x)\overline{ \mathcal{I}(k_1,\boldsymbol{\tilde{x}}_1,\tilde{\alpha}\iota,\tfrac{t}{\tau_0};\hbar)  \mathcal{I}(k_2,\boldsymbol{\tilde{x}}_2,\tilde{\alpha},\iota,\tfrac{\tau-1}{\tau_0}t;\hbar)\varphi (x)}  \,dx,
	\end{aligned}
\end{equation*}
where again $\mathcal{I}(k_1,\boldsymbol{x}_1,\alpha,\iota,\tfrac{t}{\tau_0};\hbar)$ are defined as $ \mathcal{I}_{0,0}(k_1,\boldsymbol{x}_1,\iota,\tfrac{t}{\tau_0};\hbar)$ for $\alpha$ fixed.  As in the proof of Lemma~\ref{expansion_aver_bound_Mainterm} we consider the case where $\kappa=\mathrm{id}$ and $\kappa\neq\mathrm{id}$ separately. The proof for the first case is analogous to the proof of the same case in Lemma~\ref{expansion_aver_bound_Mainterm}. To see this first note that from the definition of the operators $\mathcal{I}$ we have that
    \begin{equation*}
    	\begin{aligned}
  	 \mathcal{I}(k_1,\boldsymbol{x}_1,\alpha,\iota,\tfrac{t}{\tau_0};\hbar)  \mathcal{I}(k_2,\boldsymbol{x}_2,\alpha,\iota,\tfrac{\tau-1}{\tau_0}t;\hbar)
	 ={}& \int_{[0,\frac{t}{\tau_0}]_{\leq}^{k_1}}   \prod_{m=k_2+1}^{k} \Theta_{\alpha_m}(s_{m-1},{s}_{m},x_m;V,\hbar)\, d\boldsymbol{s}U_{\hbar,0}(-\tfrac{t}{\tau_0})
	\\
	&\times \int_{[0,\frac{\tau-1}{\tau_0}t]_{\leq}^{k_2}}   \prod_{m=1}^{k_2} \Theta_{\alpha_m}(s_{m-1},{s}_{m},x_m;V,\hbar)\, d\boldsymbol{s}U_{\hbar,0}(-\tfrac{\tau-1}{\tau_0}t)
	 \\
	 ={}&    \int_{[\frac{t}{\tau_0},\frac{\tau}{\tau_0}t]_{\leq}^{k_2}} \int_{[0,\frac{t}{\tau_0}]_{\leq}^{k_1}} \prod_{m=1}^{k} \Theta_{\alpha_m}(s_{m-1},{s}_{m},x_m;V,\hbar)\, d\boldsymbol{s}U_{\hbar,0}(-\tfrac{\tau}{\tau_0}t),
	 \end{aligned}
  \end{equation*}
  where we for the last equality we have done the change of variables $s_m \mapsto s_m + \frac{t}{\tau} $ for all $m\in\{1,\dots,k_2\}$. We see that this is almost the same form as the original operator $\mathcal{I}$ expect the split time integral. Hence the proofs will be almost the same with  the difference is in the estimate done in \eqref{EQ:exp_aver_b_Mainterm_5}. For this current case this estimate will be
    \begin{equation*}
	\begin{aligned}
	\MoveEqLeft     \int_{\R^{2k}_{+}} \prod_{i=1}^n 
	\frac{ \boldsymbol{1}_{[0,\frac{t}{\tau_0}]}( \boldsymbol{s}_{1,k_1}^{+}+\hbar \boldsymbol{t}_{1,a_{k_1}}^{+})  
	\boldsymbol{1}_{[0,\frac{t}{\tau_0}]}( \boldsymbol{\tilde{s}}_{1,k_1}^{+}+\hbar \boldsymbol{\tilde{t}}_{1,\tilde{a}_{k_1}}^{+}) }
	 {\max(1,\hbar^{-1} | \boldsymbol{s}_{\sigma^1_i,\sigma^1_{i+1}-1}^{+} - \boldsymbol{\tilde{s}}_{\sigma_i^2,\sigma_{i+1}^2-1}^{+} +\hbar l^1_i(\boldsymbol{t}, \boldsymbol{\tilde{t}})) |)^\frac{d}{2}}  
	 \\
	 &\times \boldsymbol{1}_{[\frac{t}{\tau_0},\frac{\tau t}{\tau_0}]}( \boldsymbol{s}_{k_1+1,k}^{+}+\hbar \boldsymbol{t}_{a_{k_1}+1,a_k}^{+})  
	\boldsymbol{1}_{[\frac{t}{\tau_0},\frac{\tau t}{\tau_0}]}( \boldsymbol{\tilde{s}}_{k_1+1,k}^{+}+\hbar \boldsymbol{\tilde{t}}_{\tilde{a}_{k_1+1},\tilde{a}_k}^{+}) \, d\boldsymbol{\tilde{s}} d\boldsymbol{s}
	\\
	\leq{}& \frac{\hbar^{n} \left(\frac{\tau-1}{\tau_0}t\right)^{2k_2-n_2}  \left(\frac{t}{\tau_0}\right)^{2k_1-n_1}}{k_1!(k_1-n_1)! k_2!(k_2-n_2)!} \left( \int_{\R} \frac{1}{\max(1, | s |)^\frac{d}{2}} \,ds \right)^n 
	\\
	\leq{}& \left(\frac{2d}{d-2}\right)^n  \frac{\hbar^{n} t^{2k-n} \tau_0^{n_1-2k_1}}{k_1!(k_1-n_1)! k_2!(k_2-n_2)!},
	\end{aligned}
\end{equation*} 
where we have used the estimate $\frac{\tau-1}{\tau_0}\leq1$. With this modification one arrives at the estimate
     \begin{equation}\label{EQ:exp_aver_b_er_mainterm_2}
   	\begin{aligned}
	 \Aver{ \mathcal{T}(n,\sigma^1,\sigma^2,\alpha,\tilde{\alpha},\mathrm{id})}
	 \leq  C_d^{a_k+\tilde{a}_k+n}  \norm{\varphi}_{L^2(\R^d)}^2 \left(\norm{\hat{V}}_{1,\infty,3d+3}\right)^{|\alpha|+|\tilde{\alpha}|}  \frac{(\rho t)^{2k-n} \tau_0^{n_2-2k_2}}{k_1!(k_1-n_1)! k_2!(k_2-n_2)!}.
	\end{aligned}
\end{equation}
We now turn to the case when $\kappa\neq\mathrm{id}$. The proof will be analogous to that of the same case in Lemma~\ref{expansion_aver_bound_Mainterm}. Again we will here also denote the smallest $i$ such that $\kappa(i)\neq i$ by $i^{*}$ and we will use the second expressions for the kernels of the operators $\mathcal{I}$ from Observation~\ref{obs_form_I_op_kernel}. Using these we get that
\begin{equation*}
	\begin{aligned}
	\MoveEqLeft \mathcal{T}(n,\sigma^1,\sigma^2,\alpha,\tilde{\alpha},\kappa)
	=  \sum_{(\boldsymbol{x},\boldsymbol{\tilde{x}})\in \mathcal{X}_{\neq}^{2k}}  \frac{1}{(2\pi\hbar)^{d}}\prod_{i=1}^n \rho^{-1} \delta(x_{\sigma_i^1}- \tilde{x}_{\sigma_{\kappa(i)}^2})   \int_{\R_{+}^{|\alpha|+|\tilde{\alpha}|}} \int  \delta(p_k-q_k) 
	\\
	&\times \boldsymbol{1}_{[0,(\tau_1\hbar)^{-1}t]}( \boldsymbol{s}_{1,k_2}^{+}+ \boldsymbol{t}_{1,a_{k_2}}^{+}) \boldsymbol{1}_{[0,(\tau_1\hbar)^{-1}t]}( \boldsymbol{\tilde{s}}_{1,k_2}^{+}+ \boldsymbol{\tilde{t}}_{1,\tilde{a}_{k_2}}^{+})  \boldsymbol{1}_{[0,(\tau_0\hbar)^{-1}t]}( \boldsymbol{s}_{k_2+1,k}^{+}+ \boldsymbol{t}_{a_{k_2}+1,a_{k}}^{+}) 
	\\
	&\times     \boldsymbol{1}_{[0,(\tau_0\hbar)^{-1}t]}( \boldsymbol{\tilde{s}}_{k_2+1,k}^{+}+ \boldsymbol{\tilde{t}}_{\tilde{a}_{k_2}+1,\tilde{a}_{k}}^{+})  e^{i  ((\tau_0\hbar)^{-1}t- \boldsymbol{s}_{k_2+1,k}^{+}- \boldsymbol{t}_{a_{k_2}+1,a_k}^{+}) \frac{1}{2} p_{k_2}^2} e^{-i  ((\tau_0\hbar)^{-1}t- \boldsymbol{\tilde{s}}_{k_2+1,k}^{+}- \boldsymbol{\tilde{t}}_{\tilde{a}_{k_2}+1,\tilde{a}_k}^{+}) \frac{1}{2} q_{k_2}^2} 
	\\
	&\times  \prod_{i=1}^{a_k} e^{i \frac{1}{2} t_{i}  \eta_{i}^2 } \prod_{i=1}^{\tilde{a}_k} e^{-i \frac{1}{2}  \tilde{t}_{i}  \xi_{i}^2}  \Big\{ \prod_{m=1}^k  e^{ -i   \langle  \hbar^{-1/d}x_m,p_{m}-p_{m-1} \rangle}   e^{i  s_{m} \frac{1}{2} p_{m}^2}  \hat{\mathcal{V}}_{\alpha_m}(p_m,p_{m-1},\boldsymbol{\eta} )
	 e^{-i  \tilde{s}_{m} \frac{1}{2} q_{m}^2}  
	 \\
	 &\times e^{ i   \langle  \hbar^{-1/d}\tilde{x}_m,q_{m}-q_{m-1} \rangle} \overline{\hat{\mathcal{V}}_{\tilde{\alpha}_m}(q_m,q_{m-1},\boldsymbol{\xi} ) } \Big\}   
	  e^{i  ((\tau_1\hbar)^{-1}t- \boldsymbol{s}_{1,k_2}^{+}- \boldsymbol{t}_{1,a_{k_2}}^{+}) \frac{1}{2} p_{0}^2} e^{-i  ((\tau_1\hbar)^{-1}t- \boldsymbol{\tilde{s}}_{1,k_2}^{+}- \boldsymbol{\tilde{t}}_{1,\tilde{a}_{k_2}}^{+}) \frac{1}{2} q_{0}^2} 
	\\
	&\times   
	 \hat{\varphi}(\tfrac{p_0}{\hbar})  \hat{\varphi}(\tfrac{q_0}{\hbar})   \,d\boldsymbol{\eta}d\boldsymbol{p} \boldsymbol{\xi}d\boldsymbol{q}  d\boldsymbol{t} d\boldsymbol{s}  d\boldsymbol{\tilde{t}} d\boldsymbol{\tilde{s}},
	\end{aligned}
\end{equation*}
where we have used the notation $\tau_1^{-1}=\frac{\tau-1}{\tau_0}$. The numbers $a_m$ and  $\tilde{a}_m$ are the numbers associated to $\alpha$ and $\tilde{\alpha}$ respectively. Next we introduce three new variables $\tilde{s}_0$, $\tilde{s}_{k+1}$ and depending on $i^{*}$ we introduce $s_0$ or $s_{k+1}$. If $i^*\leq k_2$ we introduce $s_0$ and $s_{k+1}$ otherwise. We will assume $i^*\leq k_2$. The other case is analogous with some change of indices. By introducing these variables we get the following expression
\begin{equation*}
	\begin{aligned}
	\MoveEqLeft \mathcal{T}(n,\sigma^1,\sigma^2,\alpha,\tilde{\alpha},\kappa)
	= \sum_{(\boldsymbol{x},\boldsymbol{\tilde{x}})\in \mathcal{X}_{\neq}^{2k}} \frac{1}{(2\pi\hbar)^{d}}\prod_{i=1}^n \rho^{-1} \delta(x_{\sigma_i^1}- \tilde{x}_{\sigma_{\kappa(i)}^2})   \int_{\R_{+}^{|\alpha|+|\tilde{\alpha}|+3}} \int \delta(p_k-q_k) 
	\\
	&\times \delta(\tfrac{t}{\tau_1\hbar}- \boldsymbol{s}_{0,k_2}^{+}- \boldsymbol{t}_{1,a_{k_2}}^{+})\delta(\tfrac{t}{\tau_1\hbar}- \boldsymbol{\tilde{s}}_{0,k_2}^{+}- \boldsymbol{\tilde{t}}_{1,\tilde{a}_{k_2}}^{+})  \boldsymbol{1}_{[0,(\tau_0\hbar)^{-1}t]}( \boldsymbol{s}_{k_2+1,k}^{+}+ \boldsymbol{t}_{a_{k_2}+1,a_{k}}^{+}) 
	\\
	&\times     \delta(\tfrac{t}{\tau_0\hbar}- \boldsymbol{\tilde{s}}_{k_2+1,k+1}^{+}-\boldsymbol{\tilde{t}}_{\tilde{a}_{k_2}+1,\tilde{a}_{k}}^{+})  e^{i  ((\tau_0\hbar)^{-1}t- \boldsymbol{s}_{k_2+1,k}^{+}- \boldsymbol{t}_{a_{k_2}+1,a_k}^{+}) \frac{1}{2} p_{k_2}^2} e^{-i  \tilde{s}_{k+1} \frac{1}{2} q_{k_2}^2} 
	\\
	&\times  \prod_{i=1}^{a_k} e^{i \frac{1}{2} t_{i}  \eta_{i}^2 } \prod_{i=1}^{\tilde{a}_k} e^{-i \frac{1}{2}  \tilde{t}_{i}  \xi_{i}^2}   \Big\{ \prod_{m=1}^k  e^{ -i   \langle  \hbar^{-1/d}x_m,p_{m}-p_{m-1} \rangle}   e^{i  s_{m} \frac{1}{2} p_{m}^2}  \hat{\mathcal{V}}_{\alpha_m}(p_m,p_{m-1},\boldsymbol{\eta} )
	 e^{-i  \tilde{s}_{m} \frac{1}{2} q_{m}^2}  
	 \\
	 &\times e^{ i   \langle  \hbar^{-1/d}\tilde{x}_m,q_{m}-q_{m-1} \rangle} \overline{\hat{\mathcal{V}}_{\tilde{\alpha}_m}(q_m,q_{m-1},\boldsymbol{\xi} ) } \Big\}   
	  e^{i  s_0 \frac{1}{2} p_{0}^2} e^{-i \tilde{s}_0 \frac{1}{2} q_{0}^2} f(\boldsymbol{s},\boldsymbol{\tilde{s}})  
	 \hat{\varphi}(\tfrac{p_0}{\hbar})  \hat{\varphi}(\tfrac{q_0}{\hbar})   \,d\boldsymbol{\eta}d\boldsymbol{p} \boldsymbol{\xi}d\boldsymbol{q}   d\boldsymbol{t} d\boldsymbol{s}  d\boldsymbol{\tilde{t}} d\boldsymbol{\tilde{s}},
	\end{aligned}
\end{equation*}
where we had inserted the function $f(\boldsymbol{s},\boldsymbol{\tilde{s}})$ before introducing the new variables. The   function $f(\boldsymbol{s},\boldsymbol{\tilde{s}})$ is defined by
\begin{equation*}
	\begin{aligned}
	f(\boldsymbol{s},\boldsymbol{\tilde{s}}) = {}& \boldsymbol{1}_{[0,(\tau_1\hbar)^{-1}t]}\big( \sum_{i=1,i\neq i^{*}}^{k_2} s_i \big) \boldsymbol{1}_{[0,(\tau_1\hbar)^{-1}t]}\big( \sum_{i=1,i \notin \sigma^2}^{k_2} \tilde{s}_i \big) 
	\boldsymbol{1}_{[0,(\tau_0\hbar)^{-1}t]}\big( \sum_{i=k_2+1}^k s_i \big) \boldsymbol{1}_{[0,(\tau_0\hbar)^{-1}t]}\big( \sum_{i=k_2+1,i \notin \sigma^2}^{k} \tilde{s}_i \big).
	\end{aligned}
\end{equation*}
We define the functions $\zeta(t)$ for $t\geq0$ to be
 \begin{equation*}
 	\zeta(t) = \frac{1}{\max(1,t)}.
 \end{equation*}
 We will set $\zeta_i = \zeta((\tau_i\hbar)^{-1}t) $ for $i=0,1$. Using these numbers function we insert the three functions $e^{(\tfrac{t}{\tau_1\hbar}- \boldsymbol{s}_{0,k_2}^{+}- \boldsymbol{t}_{1,a_{k_2}}^{+})\zeta_1}$,   $e^{(\tfrac{t}{\tau_1\hbar}- \boldsymbol{\tilde{s}}_{0,k_2}^{+}- \boldsymbol{\tilde{t}}_{1,\tilde{a}_{k_2}}^{+})\zeta_1}$ and e$^{(\tfrac{t}{\tau_0\hbar}- \boldsymbol{\tilde{s}}_{k_2+1,k+1}^{+}-\boldsymbol{\tilde{t}}_{\tilde{a}_{k_2}+1,\tilde{a}_{k}}^{+})\zeta_0}$ as they are identical $1$ on our domain of integration.
\begin{equation*}
	\begin{aligned}
	\MoveEqLeft \mathcal{T}(n,\sigma^1,\sigma^2,\alpha,\tilde{\alpha},\kappa)
	= \sum_{(\boldsymbol{x},\boldsymbol{\tilde{x}})\in \mathcal{X}_{\neq}^{2k}} \frac{e^{2\frac{t}{\tau_1\hbar}\zeta_1 + \frac{t}{\tau_0\hbar}\zeta_0 }}{(2\pi)^3(2\pi\hbar)^{d}}\prod_{i=1}^n \rho^{-1} \delta(x_{\sigma_i^1}- \tilde{x}_{\sigma_{\kappa(i)}^2})   \int_{\R_{+}^{|\alpha| + |\tilde{\alpha}|+3}} \int \delta(p_k-q_k) 
	\\
	&\times e^{-i\tilde{\nu}_1(\tfrac{t}{\tau_1\hbar}-  \boldsymbol{\tilde{t}}_{1,\tilde{a}_{k_2}}^{+})}  \boldsymbol{1}_{[0,(\tau_0\hbar)^{-1}t]}( \boldsymbol{s}_{k_2+1,k}^{+}+ \boldsymbol{t}_{a_{k_2}+1,a_{k}}^{+}) 
	e^{- (\boldsymbol{t}_{1,a_{k_2}}^{+} +\boldsymbol{\tilde{t}}_{1,\tilde{a}_{k_2}}^{+} )\zeta_1 -\boldsymbol{\tilde{t}}_{\tilde{a}_{k_2}+1,\tilde{a}_{k}}^{+}\zeta_0}
	\\
	&\times e^{-i\nu(\tfrac{t}{\tau_1\hbar}- \boldsymbol{t}_{1,a_{k_2}}^{+})}     e^{-i\tilde{\nu}_0(\tfrac{t}{\tau_0\hbar}- \boldsymbol{\tilde{t}}_{\tilde{a}_{k_2}+1,\tilde{a}_{k}}^{+})}  e^{i  ((\tau_0\hbar)^{-1}t- \boldsymbol{s}_{k_2+1,k}^{+}- \boldsymbol{t}_{a_{k_2}+1,a_k}^{+}) \frac{1}{2} p_{k_2}^2} e^{-i  \tilde{s}_{k+1} (\frac{1}{2} q_{k_2}^2-\tilde{\nu}_0-i\zeta_0)} 
	\\
	&\times \prod_{i=1}^{a_k} e^{i \frac{1}{2} t_{i}  \eta_{i}^2 } \prod_{i=1}^{\tilde{a}_k} e^{-i \frac{1}{2} \tilde{t}_{i}  \xi_{i}^2}    \prod_{m=0}^{k_2} e^{i  s_{m}( \frac{1}{2} p_{m}^2+\nu + i\zeta_1)}  e^{-i  \tilde{s}_{m}( \frac{1}{2} q_{m}^2-\tilde{\nu}_1-i\zeta_1)}  \prod_{m=k_2+1}^{k}e^{i  s_{m} \frac{1}{2} p_{m}^2}  e^{-i  \tilde{s}_{m}( \frac{1}{2} q_{m}^2-\tilde{\nu_0}-i\zeta_0)}  
 	\\
	&\times   \Big\{ \prod_{m=1}^k  e^{ -i   \langle  \hbar^{-1/d}x_m,p_{m}-p_{m-1} \rangle}     \hat{\mathcal{V}}_{\alpha_m}(p_m,p_{m-1},\boldsymbol{\eta} )
	 e^{ i   \langle  \hbar^{-1/d}\tilde{x}_m,q_{m}-q_{m-1} \rangle} \overline{\hat{\mathcal{V}}_{\tilde{\alpha}_m}(q_m,q_{m-1},\boldsymbol{\xi} ) } \Big\}   
	\\
	&\times f(\boldsymbol{s},\boldsymbol{\tilde{s}})  
	 \hat{\varphi}(\tfrac{p_0}{\hbar})  \hat{\varphi}(\tfrac{q_0}{\hbar})   \, d\nu d\tilde{\nu}_1 d\tilde{\nu}_0 d\boldsymbol{\eta}d\boldsymbol{p} \boldsymbol{\xi}d\boldsymbol{q}  d\boldsymbol{t} d\boldsymbol{s} d\boldsymbol{\tilde{t}} d\boldsymbol{\tilde{s}},
	\end{aligned}
\end{equation*}
where we have also written the three delta functions as Fourier transforms of $1$. From here the proof will proceed analogous to the proof of the similar case in Lemma~\ref{expansion_aver_bound_Mainterm}. We first apply Lemma~\ref{LE:Exp_ran_phases}, then we do the argument where we divide into cases depending on the size of the $t$'s and $\tilde{t}$'s. After this we evaluate the integrals in $s_0$, $s_{\sigma^1_{i^{*}}}$, all $\tilde{s}$ with an index in $\sigma^2$, all $p$ and all $q$ that does not have an index in $\sigma^2$. After these operations we obtain the estimate
   \begin{equation}\label{EQ:exp_aver_b_er_mainterm_2.1}
   	\begin{aligned}
	 \Aver{ \mathcal{T}(n,\sigma^1,\sigma^2,\alpha,\tilde{\alpha},\kappa)}
	 \leq {}& C   \frac{ \left(\rho\hbar(2\pi)^d\right)^{2k-n}}{(2\pi)^3(2\pi\hbar)^{d}}  \sum_{l=1}^{2^{a_{k}+\tilde{a}_{k}}}   \sum_{|\epsilon_1|,\dots,|\epsilon_{j_1}|\leq d+1} \sum_{|\nu_1|,\dots,|\nu_{\tilde{j}_1}|\leq d+1}   \int_{\R^{a_{k}+\tilde{a}_{k}}_{+}}
	  \boldsymbol{1}_{B_l}(\boldsymbol{t},\boldsymbol{\tilde{t}}) 
	  \\
	  &\times \prod_{i\in J_1} \frac{C(\epsilon_i)}{(2\pi t_i))^\frac{d}{2} } \prod_{i\in \tilde{J}_1} \frac{C(\nu_i)   }{(2\pi \tilde{t}_i)^\frac{d}{2}}  |\mathcal{I}_n(\boldsymbol{t},\boldsymbol{\tilde{t}}) |  d\boldsymbol{t} d\boldsymbol{\tilde{t}} ,
	\end{aligned}
\end{equation}
where we have used that $e^{2\frac{t}{\tau_1\hbar}\zeta_1 + \frac{t}{\tau_0\hbar}\zeta_0 }\leq C$ and the numbers $\mathcal{I}_n(\boldsymbol{t},\boldsymbol{\tilde{t}}) $ can be estimated by
\begin{equation}\label{EQ:exp_aver_b_er_mainterm_3}
	\begin{aligned}
	\MoveEqLeft \big| \mathcal{I}_n(\boldsymbol{t},\boldsymbol{\tilde{t}})\big|
	\leq \int  \prod_{i\in J_1} \frac{| x_i ^{\epsilon_i}| }{(1+|x_i |^2)^{d+1}} \,d\boldsymbol{x} \int \prod_{i\in \tilde{J}_1} \frac{| y_i ^{\nu_i}| }{(1+|y_i |^2)^{d+1}}  \,d\boldsymbol{y}    \int_{\R_{+}^{2k-n-1}} f(\boldsymbol{s},\boldsymbol{\tilde{s}})  d\boldsymbol{s}  d\boldsymbol{\tilde{s}}
	\\
	&\times      \int |\hat{\varphi}(\tfrac{q_0}{\hbar}) |^2   \frac{1}{|\frac{1}{2} q_0^2-\tilde{\nu}_1-i\zeta_1|}\frac{1}{|\frac{1}{2} q_0^2+\nu +i\zeta_1|}  \frac{1}{|\frac{1}{2} q_{k_1}^2-\tilde{\nu}_2-i\zeta_0|}
	  \\
	  &\times    \frac{1}{| \frac{1}{2} (q_{\kappa(i^{*})}+q_{i^{*}-1}-q_{\kappa(i^{*})-1})^2 +\nu+i\zeta|}    \prod_{m=1}^{n_2} \frac{1}{ |\frac{1}{2} q_{m}^2-\tilde{\nu}_1-i\zeta_1|}  \prod_{m=n_2+1}^n \frac{1}{ |\frac{1}{2} q_{m}^2-\tilde{\nu}_2-i\zeta_0|} 
	\\
	&\times 
	\Big| \partial_{\boldsymbol{\xi}_{\tilde{J}_1}}^{\boldsymbol{\nu}} \partial_{\boldsymbol{\eta}_{J_1}}^{\boldsymbol{\epsilon}} \mathcal{G}(\boldsymbol{q},\boldsymbol{\eta},\boldsymbol{\xi},\sigma^1,\sigma^2,\alpha,\tilde{\alpha}) \Big| 
	  \,d\boldsymbol{\eta}d\boldsymbol{\xi}d\boldsymbol{q} d\nu d\tilde{\nu}_1d\tilde{\nu}_0,
	\end{aligned}
\end{equation}
where
\begin{equation*}
	\begin{aligned}
	 \mathcal{G}(\boldsymbol{q},\boldsymbol{\eta},\boldsymbol{\xi},\sigma^1,\sigma^2,\alpha,\tilde{\alpha}) ={}&  \prod_{i=1}^n  \hat{\mathcal{V}}_{\alpha_{\sigma^1_i}}(q_{\kappa(i)}+l_{i}^\kappa(\boldsymbol{q}),q_{\kappa(i)-1}+l_{i}^\kappa(\boldsymbol{q}),\boldsymbol{\eta})  \overline{\hat{\mathcal{V}}_{\tilde{\alpha}_{\sigma^1_i}}(q_i,q_{i-1},\boldsymbol{\xi})}   
	\\
	&\times \prod_{i=0}^n  \prod_{m=\sigma^2_i +1}^{\sigma^2_{i+1} -1} \overline{\hat{\mathcal{V}}_{\tilde{\alpha}_m}(q_i,q_i,\boldsymbol{\xi})}
	  \prod_{m=\sigma^1_i +1}^{\sigma^1_{i+1} -1} \hat{\mathcal{V}}_{\alpha_m}(q_{\kappa(i)}+l_{i}^\kappa(\boldsymbol{q}),q_{\kappa(i)}+l_{i}^\kappa(\boldsymbol{q}),\boldsymbol{\eta})  .
	\end{aligned}
\end{equation*}
Using the definition of the function $f$ we get that
\begin{equation*}
	\begin{aligned}
	     \int_{\R_{+}^{2k-n-1}} f(\boldsymbol{s},\boldsymbol{\tilde{s}})  d\boldsymbol{s}_{1,k-1}   d\boldsymbol{\tilde{s}}_{1,k-n} = \frac{(\hbar^{-1} t)^{2k-n-1} \tau_0^{n_1-2k_1}}{(k_1-1)!(k_1-n_1)! k_2!(k_2-n_2)!}  .
	\end{aligned}
\end{equation*}
Since we in general could have that $i^{*}> k_2$ we will instead use the estimate
\begin{equation}\label{EQ:exp_aver_b_er_mainterm_4}
	\begin{aligned}
	     \int_{\R_{+}^{2k-n-1}} f(\boldsymbol{s},\boldsymbol{\tilde{s}})  d\boldsymbol{s}_{1,k-1}   d\boldsymbol{\tilde{s}}_{1,k-n} \leq \frac{(\hbar^{-1} t)^{2k-n-1} \tau_0^{n_1+1-2k_1}}{(k_1-1)!(k_1-n_1)! (k_2-1)!(k_2-n_2)!}  ,
	\end{aligned}
\end{equation}
as it cover both cases. In order to estimate the integrals in $q$, $\eta$ and $\xi$ we insert the fraction
  \begin{equation*}
  	\frac{\langle q_0\rangle^{5d+5}}{\langle q_0\rangle^{5d+5}} \frac{\langle q_{\kappa(i^{*})}+q_{i^{*}-1}-q_{\kappa(i^{*})-1}\rangle^{d+1}}{\langle q_{\kappa(i^{*})}+q_{i^{*}-1}-q_{\kappa(i^{*})-1}\rangle^{d+1}} \prod_{i=1}^n \frac{\langle q_{i}-q_{i-1}\rangle^{6d+6}}{\langle  q_{i}-q_{i-1}\rangle^{6d+6}}. 
  \end{equation*}
Note here that we have additional $d+1$ of $\langle q_0\rangle$ and  $\langle  q_{i}-q_{i-1}\rangle$. This is due to the additional $\tilde{\nu}_2$ compared to the the proof of Lemma~\ref{expansion_aver_bound_Mainterm}. We have that
  \begin{equation}\label{EQ:exp_aver_b_er_mainterm_5}
	\begin{aligned}
	\MoveEqLeft  \sup_{\boldsymbol{q}}   \int  \langle q_{{\kappa(i^{*})}}-q_{\kappa(i^{*})-1}\rangle^{d+1} \prod_{i=1}^n \langle q_{i}-q_{i-1}\rangle^{6d+6}  	 \Big| \partial_{\boldsymbol{\xi}_{\tilde{J}_1}}^{\boldsymbol{\nu}} \partial_{\boldsymbol{\eta}_{J_1}}^{\boldsymbol{\epsilon}} \mathcal{G}(\boldsymbol{q},\boldsymbol{\eta},\boldsymbol{\xi},\sigma^1,\sigma^2,\alpha,\tilde{\alpha})  \Big| \,d\boldsymbol{\eta}d\boldsymbol{\xi}
	 \\
	 \leq {}& C^{a_k+\tilde{a}_k}  \left(\norm{\hat{V}}_{1,\infty,4d+4}\right)^{|\alpha|+|\tilde{\alpha}|} .
	\end{aligned}
\end{equation}
Next we have that 
    \begin{equation*}
  	\frac{\langle q_{\kappa(i^{*})}+q_{i^{*}-1}-q_{\kappa(i^{*})-1}\rangle^{d+1} \langle q_{k'}\rangle^{d+1} \langle q_{\kappa(i^{*})-1}\rangle^{d+1} \langle q_{\kappa(i^{*})}\rangle^{d+1} \langle q_{\kappa(i^{*})+1}\rangle^{d+1}}{\langle q_{\kappa(i^{*})}-q_{\kappa(i^{*})-1}\rangle^{d+1}\langle q_0\rangle^{5d+5}}  \prod_{i=1}^n \frac{1}{\langle  q_{i}-q_{i-1}\rangle^{5d+5}} \leq C,
  \end{equation*}
  where the index $k'$ is chosen to be $1$ if $\kappa(i^{*})> k_1$ and $k_1+1$ if  $\kappa(i^{*}) \leq k_1$. We only have to consider the integral
  \begin{equation*}
	\begin{aligned}
	\MoveEqLeft   \int  \frac{1}{\langle\tilde{\nu}_1\rangle|\frac{1}{2} q_0^2-\tilde{\nu}_1-i\zeta_1|}\frac{1}{\langle{\nu}\rangle|\frac{1}{2} q_0^2+\nu +i\zeta_1|}
	 \frac{1}{\langle\tilde{\nu}_2\rangle|\frac{1}{2} q_{k_1}^2-\tilde{\nu}_2-i\zeta_0|}
	   \\
	   &\times  \frac{\langle{\nu}\rangle \langle q_{\kappa(i^{*})}+q_{i^{*}-1}-q_{\kappa(i^{*})-1}\rangle^{-d-1}}{| \frac{1}{2} (q_{\kappa(i^{*})}+q_{i^{*}-1}-q_{\kappa(i^{*})-1})^2 +\nu+i\zeta_1|}  \frac{\langle\tilde{\nu}_1\rangle  \langle\tilde{\nu}_2\rangle }{ \langle q_{k'}\rangle^{d+1} \langle q_{\kappa(i^{*})-1}\rangle^{d+1} \langle q_{\kappa(i^{*})}\rangle^{d+1} \langle q_{\kappa(i^{*})+1}\rangle^{d+1}}
	  \\
	  &\times    \prod_{m=1}^{n_2} \frac{\langle  q_{m}-q_{m-1}\rangle^{-d-1}}{ |\frac{1}{2} q_{m}^2-\tilde{\nu}_1-i\zeta_1|} \prod_{m=n_2+1}^n \frac{\langle  q_{m}-q_{m-1}\rangle^{-d-1}}{ |\frac{1}{2} q_{m}^2-\tilde{\nu}_2-i\zeta_0|}
	\, d\boldsymbol{q}_{1,n} d\nu d\tilde{\nu}_1d\tilde{\nu}_2.
	\end{aligned}
\end{equation*}
  Applying the estimate $\langle q_{\kappa(i^{*}) -1 } - q_{\kappa(i^{*}) -2 } \rangle^{-d-1}\langle q_{\kappa(i^{*})  } - q_{\kappa(i^{*}) -1 } \rangle^{-d-1} \langle q_{\kappa(i^{*}) +1 } - q_{\kappa(i^{*}) } \rangle^{-d-1}\leq1$ we can use Lemma~\ref{LE:est_res_combined} with $q_{\kappa(i^{*})}$ as ``$p$'' and $ q_{\kappa(i^{*})-1}- q_{i^{*}-1}$ as ``$q$''. After applying this lemma we integrate $q_{\kappa(i^{*})-1}$ by using Lemma~\ref{LE:resolvent_int_est}. For all the remaining integrals we also apply Lemma~\ref{LE:resolvent_int_est} repeatedly. This gives us the estimate    
\begin{equation}\label{EQ:exp_aver_b_er_mainterm_6}
	\begin{aligned}
	\MoveEqLeft   \int   \frac{1}{\langle\tilde{\nu}_1\rangle|\frac{1}{2} q_0^2-\tilde{\nu}_1-i\zeta_1|}\frac{1}{\langle{\nu}\rangle|\frac{1}{2} q_0^2+\nu +i\zeta_1|}
	 \frac{1}{\langle\tilde{\nu}_2\rangle|\frac{1}{2} q_{k_1}^2-\tilde{\nu}_2-i\zeta_0|}
	   \\
	   &\times  \frac{\langle{\nu}\rangle \langle q_{\kappa(i^{*})}+q_{i^{*}-1}-q_{\kappa(i^{*})-1}\rangle^{-d-1}}{| \frac{1}{2} (q_{\kappa(i^{*})}+q_{i^{*}-1}-q_{\kappa(i^{*})-1})^2 +\nu+i\zeta_1|}  \frac{\langle\tilde{\nu}_1\rangle  \langle\tilde{\nu}_2\rangle }{ \langle q_{k'}\rangle^{d+1}\langle q_{\kappa(i^{*})}\rangle^{d+1} \langle q_{\kappa(i^{*})+1}\rangle^{d+1}}
	  \\
	  &\times    \prod_{m=1}^{n_2} \frac{\langle  q_{m}-q_{m-1}\rangle^{-d-1}}{ |\frac{1}{2} q_{m}^2-\tilde{\nu}_1-i\zeta_1|} \prod_{m=n_2+1}^n \frac{\langle  q_{m}-q_{m-1}\rangle^{-d-1}}{ |\frac{1}{2} q_{m}^2-\tilde{\nu}_2-i\zeta_0|}
	\, d\boldsymbol{q} d\nu d\tilde{\nu}_1d\tilde{\nu}_2 
	\\
	\leq{}& C^n |\log(\zeta_1)|^{n_2+3} |\log(\zeta_0)|^{n_1+1}.
	\end{aligned}
\end{equation}
Moreover we have that
\begin{equation}\label{EQ:exp_aver_b_er_mainterm_7}
	\int \langle q_0 \rangle^{5d+5} |\hat{\varphi}(\tfrac{q_0}{\hbar}) |^2 \, dq_0 \leq \hbar^d C  \norm{\varphi}_{\mathcal{H}^{3d+3}_\hbar(\R^d)}^2.
\end{equation}
Combing the estimates in \cref{EQ:exp_aver_b_er_mainterm_2.1,EQ:exp_aver_b_er_mainterm_3,EQ:exp_aver_b_er_mainterm_4,EQ:exp_aver_b_er_mainterm_5,EQ:exp_aver_b_er_mainterm_6,EQ:exp_aver_b_er_mainterm_7} we get the estimate
   \begin{equation}\label{EQ:exp_aver_b_er_mainterm_8}
   	\begin{aligned}
	\MoveEqLeft \Aver{ \mathcal{T}(n,\sigma^1,\sigma^2,\alpha,\tilde{\alpha},\kappa)}
	\\
	& \leq   C_d^{a_k+\tilde{a}_k}    \norm{\hat{V}}_{1,\infty,4d+4}^{|\alpha|+|\tilde{\alpha}|}  \tau_0^{n_1+1-2k_1}   \frac{\rho(\rho t)^{2k-n-1} \hbar |\log(\zeta_1)|^{n_2+3} |\log(\zeta_0)|^{n_1+1}}{(k_1-1)!(k_1-n_1)! (k_2-1)!(k_2-n_2)!}  \norm{\varphi}_{\mathcal{H}^{3d+3}_\hbar(\R^d)}^2.
	\end{aligned}
\end{equation}
We observe that
\begin{equation*}
	\begin{aligned}
	&\sum_{\substack{n_1+n_2=n \\ n_1\leq k_1,n_2\leq k_2}} \sum_{\sigma^1\in\tilde{\mathcal{A}}(k_1,n_1,k_2,n_2)} 1\leq 2^k
	\\
	&\sum_{\substack{n_1+n_2=n \\ n_1\leq k_1,n_2\leq k_2}} \sum_{\sigma^1\in\tilde{\mathcal{A}}(k_1,n_1,k_2,n_2)} \frac{1}{(k_1-n_1)! (k_2-n_2)!}\leq \frac{2^k}{(k_1+k_2-n)!}.
	\end{aligned}
\end{equation*}
Using these observations in combination with \cref{EQ:exp_aver_b_er_mainterm_1,EQ:exp_aver_b_er_mainterm_2,EQ:exp_aver_b_er_mainterm_8} we get that
\begin{equation}
	\begin{aligned}
	\MoveEqLeft \big\lVert \sum_{(\boldsymbol{x}_1,\boldsymbol{x}_2)\in \mathcal{X}_{\neq}^{k}}  \mathcal{I}_{0,0}(k_1,\boldsymbol{x}_1,\iota,\tfrac{t}{\tau_0};\hbar)  \mathcal{I}_{0,0}(k_2,\boldsymbol{x}_2,\iota,\tfrac{\tau-1}{\tau_0}t;\hbar)\varphi\big\rVert_{L^2(\R^d)}^2 
	\\
	\leq{}& \sum_{\alpha,\tilde{\alpha}\in\N^k}  (\lambda C_d    \norm{\hat{V}}_{1,\infty,4d+4})^{|\alpha|+|\tilde{\alpha}|}  \sum_{n=0}^k \Big[   \frac{(\rho t)^{2k-n}}{\tau_0^{k_1} k_1! k_2! (k-n)! } \norm{\varphi}_{L^2(\R^d)}^2
	\\
	&  +  \hbar |\log(\tfrac{\hbar\tau_0}{t})|^{n+4} \frac{\rho(\rho t)^{2k-n-1} n!}{\tau_0^{k_1} (k_1-1)! (k_2-1)!(k-n)!}  \norm{\varphi}_{\mathcal{H}^{3d+3}_\hbar(\R^d)}^2 \Big].
	\end{aligned}
\end{equation}
Combining this with \eqref{EQ:Aver_full_ex_0,0_dob1} and using our assumptions on the single site potential $V$ and the coupling constant $\lambda$ we obtain that
  \begin{equation}
	\begin{aligned}
	\MoveEqLeft \mathbb{E}\Big[ \big\lVert \sum_{k_1=1}^{k_0} \sum_{k_2=k_0-k_1+1}^{k_0} 
	  \sum_{(\boldsymbol{x}_1,\boldsymbol{x}_2)\in \mathcal{X}_{\neq}^{k_1+k_2}}  \mathcal{I}_{0,0}(k_1,\boldsymbol{x}_1,\iota,\tfrac{t}{\tau_0};\hbar)  \mathcal{I}_{0,0}(k_2,\boldsymbol{x}_2,\iota,\tfrac{\tau-1}{\tau_0}t;\hbar)\varphi\big\rVert_{L^2(\R^d)}^2 \Big]
	  \\
	   &
	   \leq \frac{k_0^2}{\tau_0} \sum_{k=k_0+1}^{2k_0}  C^k   \sum_{n=0}^k \Big[   \frac{(\rho t)^{2k-n}}{ k! (k-n)! } \norm{\varphi}_{L^2(\R^d)}^2
	 +  \hbar |\log(\tfrac{\hbar\tau_0}{t})|^{n+4} \frac{\rho(\rho t)^{2k-n-1} n!}{ (k-2)! (k-n)!}  \norm{\varphi}_{\mathcal{H}^{3d+3}_\hbar(\R^d)}^2 \Big],
	 \end{aligned}
\end{equation}
where we have used that 
\begin{equation}
	\sum_{k=k_0+1}^{2k_0}  \sum_{k_1+k_2=k}  \frac{1}{k_1! k_2 !} +  \frac{1}{(k_1-1)! (k_2-1) !} = \sum_{k=k_0+1}^{2k_0}  \frac{2^k}{k!} + \frac{2^{k-2}}{(k-2)!},
\end{equation}
since we assume $k_0>1$. Using that $\frac{n!}{(k-2)!}\leq k^2$, evaluating the sum over $n$, maximising each term in the sum over $k$ by plugging in $k_0+1$ or $2k_0$ and then estimating the sum over $k$ by $k_0$ (the number of terms in the sum) we obtain the desired estimate. 
  \end{proof}
What remains to estimate is the terms we get when we start the second expansion with the same point we ended with. Here we have the following result
\begin{lemma}\label{expansion_aver_bound_Mainterm_3}
  Assume we are in the setting of Definition~\ref{def_remainder_k_0}. Let $\varphi \in \mathcal{H}^{3d+3}_\hbar(\R^d)$. Then for any
  $\tau\in\{2,\dots,\tau_0\}$
    \begin{equation*}
	\begin{aligned}
	\MoveEqLeft \mathbb{E}\Big[ \big\lVert \sum_{k_1=1}^{k_0} \sum_{k_2=k_0-k_1+1}^{k_0} 
	  \sum_{(\boldsymbol{x}_1,\boldsymbol{x}_2)\in \mathcal{X}_{\neq}^{k_1+k_2}}  \mathcal{I}_{0,0}(k_1+1,(x_{2,k_2},\boldsymbol{x}_1),\iota,\tfrac{t}{\tau_0};\hbar)  \mathcal{I}_{0,0}(k_2,\boldsymbol{x}_2,\iota,\tfrac{\tau-1}{\tau_0}t;\hbar) \varphi\big\rVert_{L^2(\R^d)}^2 \Big]
	  \\
	   &
	   \leq  \frac{k_0^3}{\tau_0} \frac{C^{k_0}}{ (k_0+1)!} \norm{\varphi}_{L^2(\R^d)}^2
	 +  \frac{k_0^5\hbar}{\tau_0} C^{k_0} |\log(\tfrac{\hbar\tau_0}{t})|^{2k_0+4}   \norm{\varphi}_{\mathcal{H}^{3d+3}_\hbar(\R^d)}^2,
	 \end{aligned}
\end{equation*}
where the constant $C$ depends on $\rho$, $t$, the single site potential $V$ and the coupling constant $\lambda$. In particular we have that the function is in  $L^2(\R^d)$ $\Pro$-almost surely. 
  \end{lemma}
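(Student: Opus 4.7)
The plan is to follow the proof of Lemma~\ref{expansion_aver_bound_Mainterm_2} closely, adjusting for the fact that the first scattering center of the left operator coincides with the last scattering center of the right operator. First, applying the semigroup identity exactly as in the opening of that proof, the product
\begin{equation*}
\mathcal{I}_{0,0}(k_1+1,(x_{2,k_2},\boldsymbol{x}_1),\tfrac{t}{\tau_0};\hbar)  \mathcal{I}_{0,0}(k_2,\boldsymbol{x}_2,\tfrac{\tau-1}{\tau_0}t;\hbar)
\end{equation*}
is rewritten as a single time-ordered integral over $[0,\tfrac{\tau}{\tau_0}t]$ with $k_1+k_2+1$ scattering events supported on the $k_1+k_2$ distinct Poisson points of $(\boldsymbol{x}_1,\boldsymbol{x}_2)$, two consecutive events (the $k_2$th and the $(k_2+1)$th) being located at the common point $x_{2,k_2}$ and straddling the boundary $\tfrac{t}{\tau_0}$. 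A Cauchy-Schwarz reduction analogous to \eqref{EQ:Aver_full_ex_0,0_dob1} absorbs the double sum over $(k_1,k_2)$ into a $k_0^2$ prefactor, leaving me to bound the single norm-squared for each fixed $k=k_1+k_2\in\{k_0+1,\dots,2k_0\}$.

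Expanding the squared $L^2$-norm as a double sum over configurations $(\boldsymbol{x},\tilde{\boldsymbol{x}})$, then splitting by the pairing $\kappa\in\mathcal{S}_n$ and by index sets $\sigma^1,\sigma^2 \in \tilde{\mathcal{A}}(k_1,n_1,k_2,n_2)$ as in \eqref{EQ:exp_aver_b_er_mainterm_1}, I apply Lemma~\ref{LE:Exp_ran_phases} with the coincidence $x_{k_2}=x_{k_2+1}$ built into each configuration as an additional Kronecker identification. In the case $\kappa=\mathrm{id}$, the stationary phase argument in the internal time variables of both $\Theta$ factors runs verbatim, and the extra $\Psi_{\alpha_{k_2+1}}$ coming from the new $\Theta$ is uniformly bounded in $L^\infty$ by Lemma~\ref{psi_m_gamma_est}; its internal time integrations are absorbed by the stationary-phase gain, and the extra $\delta$-function coming from the coincidence removes the momentum variable $p_{k_2+1}-p_{k_2-1}$, producing exactly the factorial bound of the first term in the statement.

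For $\kappa\neq\mathrm{id}$, I introduce the regularizations $\zeta_0,\zeta_1$ as in the second half of the proof of Lemma~\ref{expansion_aver_bound_Mainterm_2}, write the delta functions coming from the time constraints as Fourier transforms, apply the integration-by-parts scheme to gain the required integrability in the internal time variables of $\Theta$, and reduce to an iterated momentum integral of essentially the same shape as \eqref{EQ:exp_aver_b_er_mainterm_3}. The only structural difference is one extra momentum variable attached to the doubled collision at $x_{2,k_2}$; this variable is bounded by a single application of Lemma~\ref{LE:resolvent_int_est}, contributing one additional $|\log\zeta|$ factor that is harmless at the level of the final bound $|\log(\hbar\tau_0/t)|^{2k_0+4}$. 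Lemma~\ref{LE:est_res_combined} handles the first cross-pair between $\sigma^1_{i^*}$ and $\sigma^2_{\kappa(i^*)}$ as before, and repeated application of Lemma~\ref{LE:resolvent_int_est} disposes of the remaining resolvent integrals to produce the second term in the bound.

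The main obstacle is the careful bookkeeping of the doubled collision at $x_{2,k_2}$: the two $\Theta$ factors at that point must be handled as a single Poisson integration, which supplies one extra delta function constraint on the momenta but also one extra internal time variable and one extra $\hbar^{-1}$. These compensate exactly, so that the coupling-constant smallness from \eqref{assump_1} together with the $L^\infty$ bound of Lemma~\ref{psi_m_gamma_est} on the new $\Psi_{\alpha_{k_2+1}}$ absorbs the extra Born-series factor. Summing over the $O(k_0)$ values of $k$, collecting the combinatorial factors from $|\tilde{\mathcal{A}}(k_1,n_1,k_2,n_2)|\leq 2^k$ and $|\mathcal{S}_n|=n!$, and using $n!/(k-2)!\leq k^2$ precisely as in the end of the proof of Lemma~\ref{expansion_aver_bound_Mainterm_2}, yields the stated estimate.
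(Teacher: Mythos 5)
Your overall strategy (the $k_0^2$ Cauchy--Schwarz reduction, the split into pairings $\kappa$, and the reuse of the machinery of Lemma~\ref{expansion_aver_bound_Mainterm_2}) is the same as the paper's, but there is a genuine gap at the one point where this lemma actually differs from Lemma~\ref{expansion_aver_bound_Mainterm_2}: the treatment of the doubled collision at $x_{2,k_2}$ in the ladder terms $\kappa=\mathrm{id}$. You claim that handling the two $\Theta$ factors at the common point as a single Poisson integration "supplies one extra delta function constraint on the momenta'', which together with the $L^\infty$ bound on the extra $\Psi_{\alpha_{k_2+1}}$ compensates the extra internal time variable and the extra $\hbar^{-1}$. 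The accounting is backwards: because the two consecutive phase factors at the common point combine into $e^{-i\hbar^{-1/d}\langle x_{k_2},\,p_{k_2+1}-p_{k_2-1}\rangle}$, the Poisson average produces one delta function \emph{fewer} than for distinct scatterers, and the intermediate momentum $p_{k_2}$ (and likewise $q_{k_2}$ in the conjugate factor) is left as a completely free integration variable. Moreover, even granting your delta function, it would not close the estimate: a momentum delta function does not generate a factor of $\hbar$, so the extra time integration over $s_{k_2}$ would only give $t/\tau_0$, leaving an uncompensated $\hbar^{-1}(t/\tau_0)$ per replica, i.e.\ an overall loss of order $\hbar^{-4/3}t^2$ with the later choice $\tau_0=\hbar^{-1/3}$, which destroys the stated bound.

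The paper's proof closes exactly this gap by a structural device you do not supply: after the change of variables $s_{k_2}\mapsto s_{k_2}-s_{k_2+1}$ the two operators at $x_{k_2}$ are merged into a single operator $\tilde{\Theta}_{\alpha_{k_2}+\alpha_{k_2+1}}$ whose kernel contains an explicit integration over $p_{k_2}$ against the phase $e^{i s_{k_2}\hbar^{-1}\frac12(p_{k_2}^2-p_{k_2-1}^2)}$; the free momentum $p_{k_2}$ is then treated as an internal (``$\eta$''-type) variable and $s_{k_2}$ as an internal (``$t$''-type) time, so that the stationary-phase bound of Lemma~\ref{app_quadratic_integral_tech_est} yields a factor $\max(1,\hbar^{-1}|s_{k_2}|)^{-d/2}$ whose time integration produces precisely one extra power of $\hbar$ per replica, cancelling the extra $\hbar^{-1}$ from the additional $\Theta$ and giving the modified factorials $(k_1+1)!(k_1+1-n_1)!(k_2-1)!(k_2-1-n_2)!$ in the denominator. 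Your $\kappa\neq\mathrm{id}$ paragraph is closer to the mark (the $\nu$-representation with $\zeta_0,\zeta_1$ and one more application of Lemma~\ref{LE:resolvent_int_est} is indeed what the paper does), but without the oscillatory gain in the free variable $p_{k_2}$ in the ladder case, the claim ``these compensate exactly'' is unsubstantiated and the first term of the stated estimate does not follow from your argument as written.
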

\begin{proof}
As In the proof of Lemma~\ref{expansion_aver_bound_Mainterm_2} we have that
  \begin{equation}\label{EQ:Aver_full_ex_same_0,0_dob1}
	\begin{aligned}
	\MoveEqLeft  \big\lVert \sum_{k_1=1}^{k_0} \sum_{k_2=k_0-k_1+1}^{k_0} 
	  \sum_{(\boldsymbol{x}_1,\boldsymbol{x}_2)\in \mathcal{X}_{\neq}^{k_1+k_2}}  \mathcal{I}_{0,0}(k_1+1,(x_{2,k_2},\boldsymbol{x}_1),\iota,\tfrac{t}{\tau_0};\hbar)  \mathcal{I}_{0,0}(k_2,\boldsymbol{x}_2,\iota,\tfrac{\tau-1}{\tau_0}t;\hbar)\varphi\big\rVert_{L^2(\R^d)}^2 
	  \\
	  &
	   \leq k_0^2 \sum_{k=k_0+1}^{2k_0}  \sum_{k_1+k_2=k} 
	   \big\lVert \sum_{(\boldsymbol{x}_1,\boldsymbol{x}_2)\in \mathcal{X}_{\neq}^{k}}  \mathcal{I}_{0,0}(k_1+1,(x_{2,k_2},\boldsymbol{x}_1),\iota,\tfrac{t}{\tau_0};\hbar)  \mathcal{I}_{0,0}(k_2,\boldsymbol{x}_2,\iota,\tfrac{\tau-1}{\tau_0}t;\hbar)\varphi\big\rVert_{L^2(\R^d)}^2.
	 \end{aligned}
\end{equation}
We now fix $k=k_1+k_2$. As in the other proofs we get two sums over the set $\mathcal{X}_{\neq}^{k}$ when we take the $L^2$-norm. Again we will divide this double sum into cases depending on how many points they have in common. As before we also need to keep track of how these points is chosen in relation to the two operators. With this in mind we have that
\begin{equation}\label{EQ:Aver_full_ex_same_0,0_dob2}
	\begin{aligned}
	\MoveEqLeft \big\lVert \sum_{(\boldsymbol{x}_1,\boldsymbol{x}_2)\in \mathcal{X}_{\neq}^{k}}  \mathcal{I}_{0,0}(k_1+1,(x_{2,k_2},\boldsymbol{x}_1),\iota,\tfrac{t}{\tau_0};\hbar)  \mathcal{I}_{0,0}(k_2,\boldsymbol{x}_2,\iota,\tfrac{\tau-1}{\tau_0}t;\hbar)\varphi\big\rVert_{L^2(\R^d)}^2 = \sum_{\alpha,\tilde{\alpha}\in\N^{k+1}} (i\lambda)^\alpha(-i\lambda)^{\tilde{\alpha}}
	\\
	&\times \sum_{n=0}^k \sum_{\substack{n_1+n_2=n \\ n_1\leq k_1+1,n_2\leq k_2-1}} \sum_{\substack{\tilde{n}_1+\tilde{n}_2=n \\ \tilde{n}_1\leq k_1+1,\tilde{n}_2\leq k_2-1}} \sum_{\substack{\sigma^1\in\tilde{\mathcal{A}}(k_1+1,n_1,k_2-1,n_2)\\  \sigma^2\in\tilde{\mathcal{A}}(k_1+1,\tilde{n}_1,k_2-1,\tilde{n}_2)}}  \sum_{\kappa\in\mathcal{S}_n} \mathcal{T}(n,\sigma^1,\sigma^2,\alpha,\tilde{\alpha},\kappa),
	\end{aligned}
\end{equation}
where the sets $\tilde{\mathcal{A}}(k_1,n_1,k_2,n_2)$ is defined in\eqref{orderset_def_2}.
The numbers $\mathcal{T}(n,\sigma^1,\sigma^2,\kappa)$ are given by  
\begin{equation*}
	\begin{aligned}
	\MoveEqLeft  \mathcal{T}(n,\sigma^1,\sigma^2,\alpha,\tilde{\alpha},\kappa)
	= \sum_{(\boldsymbol{x},\boldsymbol{\tilde{x}})\in \mathcal{X}_{\neq}^{2k}} \prod_{i=1}^n \rho^{-1}\delta(x_{\sigma_i^1}- \tilde{x}_{\sigma_{\kappa(i)^2}}) \int \mathcal{I}(k_1+1,(x_{2,k_2},\boldsymbol{x}_1),\alpha,\iota,\tfrac{t}{\tau_0};\hbar) 
	\\
	\times&   \mathcal{I}(k_2,\boldsymbol{x}_2,\alpha,\iota,\tfrac{\tau-1}{\tau_0}t;\hbar)\varphi(x)\overline{ \mathcal{I}(k_1+1,(\tilde{x}_{2,k_2},\boldsymbol{\tilde{x}}_1),\tilde{\alpha},\iota,\tfrac{t}{\tau_0};\hbar)  \mathcal{I}(k_2,\boldsymbol{\tilde{x}}_2,\tilde{\alpha},\iota,\tfrac{\tau-1}{\tau_0}t;\hbar)\varphi (x)}  \,dx,
	\end{aligned}
\end{equation*}
where again $\mathcal{I}(k_1+1,(x_{2,k_2},\boldsymbol{x}_1),\alpha,\iota,\tfrac{t}{\tau_0};\hbar)$ are defined as $ \mathcal{I}_{0,0}(k_1+1,(x_{2,k_2},\boldsymbol{x}_1),\iota,\tfrac{t}{\tau_0};\hbar)$ for $\alpha$ fixed. Again we consider the case where $\kappa=\mathrm{id}$ and $\kappa\neq\mathrm{id}$ separately. The proof for the first case is again almost analogous to the proof of the same case in Lemma~\ref{expansion_aver_bound_Mainterm}. To see this first note that from the definition of the operators $\mathcal{I}$ we have that
 \begin{equation*}
    	\begin{aligned}
  	\MoveEqLeft \mathcal{I}(k_1+1,(x_{2,k_2},\boldsymbol{x}_1),\tilde{\alpha}\iota,\tfrac{t}{\tau_0};\hbar)  \mathcal{I}(k_2,\boldsymbol{\tilde{x}}_2,\tilde{\alpha},\iota,\tfrac{\tau-1}{\tau_0}t;\hbar)
	\\
	 ={}& \int_{[0,\frac{t}{\tau_0}]_{\leq}^{k_1+1}}   \prod_{m=k_2+2}^{k+1} \Theta_{\alpha_m}(s_{m-1},{s}_{m},x_{m-1};V,\hbar)  \Theta_{\alpha_{k_2+1}}(s_{k_2},{s}_{k_2+1},x_{k_2};V,\hbar)\, d\boldsymbol{s}U_{\hbar,0}(-\tfrac{t}{\tau_0})
	\\
	&\times \int_{[0,\frac{\tau-1}{\tau_0}t]_{\leq}^{k_2}}   \prod_{m=1}^{k_2} \Theta_{\alpha_m}(s_{m-1},{s}_{m},x_m;V,\hbar)\, d\boldsymbol{s}U_{\hbar,0}(-\tfrac{\tau-1}{\tau_0}t)
	 \\
	 ={}&    \int_{[\frac{t}{\tau_0},\frac{\tau}{\tau_0}t]_{\leq}^{k_2}} \int_{[0,\frac{t}{\tau_0}]_{\leq}^{k_1+1}} \prod_{m=k_2+2}^{k+1} \Theta_{\alpha_m}(s_{m-1},{s}_{m},x_{m-1};V,\hbar) \Theta_{\alpha_{k_2+1}}(s_{k_2},{s}_{k_2+1},x_{k_2};V,\hbar)
	 \\&\times  \Theta_{\alpha_{k_2}}(s_{k_2-1},{s}_{k_2},x_{k_2};V,\hbar) \prod_{m=1}^{k_2-1} \Theta_{\alpha_m}(s_{m-1},{s}_{m},x_m;V,\hbar)\, d\boldsymbol{s}U_{\hbar,0}(-\tfrac{\tau}{\tau_0}t),
	 \end{aligned}
  \end{equation*}
 where we for the last equality have done the change of variables $s_m \mapsto s_m + \frac{t}{\tau} $ for all $m\in\{1,\dots,k_2\}$. We see that this is almost the same form as the original operator $\mathcal{I}$ expect the split time integral and we have two operators depending on the same position next to each other. %
 Since all ``dependence'' of $s_{k_2}$ is contained in the kernel of $\Theta_{\alpha_{k_2+1}}(s_{k_2},{s}_{k_2+1},x_{k_2};V,\hbar)\Theta_{\alpha_{k_2}}(s_{k_2-1},{s}_{k_2},x_{k_2};V,\hbar)$ we make the change of variables $s_{k_2}\mapsto s_{k_2} -s_{k_2+1}$ this yields
 \begin{equation*}
    	\begin{aligned}
  	\MoveEqLeft \int_{[\frac{t}{\tau_0},\frac{\tau}{\tau_0}t]_{\leq}^{k_2}} \int_{[0,\frac{t}{\tau_0}]_{\leq}^{k_1+1}} \prod_{m=k_2+2}^{k+1} \Theta_{\alpha_m}(s_{m-1},{s}_{m},x_{m-1};V,\hbar) \Theta_{\alpha_{k_2+1}}(s_{k_2},{s}_{k_2+1},x_{k_2};V,\hbar)
	 \\&\times  \Theta_{\alpha_{k_2}}(s_{k_2-1},{s}_{k_2},x_{k_2};V,\hbar) \prod_{m=1}^{k_2-1} \Theta_{\alpha_m}(s_{m-1},{s}_{m},x_m;V,\hbar)\, d\boldsymbol{s}U_{\hbar,0}(-\tfrac{\tau}{\tau_0}t)
	 \\
	 ={}& \int_{\R_{+}^{k+1}} \boldsymbol{1}_{[\frac{t}{\tau_0},\frac{\tau}{\tau_0}t]_{\leq}}(s_{k_2} + s_{k_2+1}, s_{k_2-1},\dots,s_1) \boldsymbol{1}_{[0,\frac{t}{\tau_0}]_{\leq}}( s_{k+1}, s_{k},\dots,s_{k_2+1}) 
	 \\
	 &\times \prod_{m=k_2+2}^{k+1} \Theta_{\alpha_m}(s_{m-1},{s}_{m},x_{m-1};V,\hbar)
	 \tilde{\Theta}_{\alpha_{k_2+1}+\alpha_{k_2}}(s_{k_2-1},s_{k_2},{s}_{k_2+1},x_{k_2};V,\hbar)
	 \\
	 &\times  \prod_{m=1}^{k_2-1} \Theta_{\alpha_m}(s_{m-1},{s}_{m},x_m;V,\hbar)\,  d\boldsymbol{s}U_{\hbar,0}(-\tfrac{\tau}{\tau_0}t),
	 \end{aligned}
  \end{equation*}
  where the kernel of $\tilde{\Theta}_{\alpha_{k_2+1}+\alpha_{k_2}}(s_{k_2-1},s_{k_2},{s}_{k_2+1},x_{k_2};V,\hbar)$ is given by
   \begin{equation*}
	\begin{aligned}
	 (p_{k_{2}+1},p_{k_{2}-1}) \mapsto{}& \frac{1}{\hbar^2} e^{-i\hbar^{-1/d}\langle x_{k_2},p_{k_{2}+1}-p_{k_{2}-1}\rangle} 
  	e^{i s_{k_2+1}\hbar^{-1} \frac{1}{2} (p_{k_2+1}^2-p_{k_2-1}^2)}   \int  e^{i s_{k_2}\hbar^{-1} \frac{1}{2} (p_{k_2}^2-p_{k_2-1}^2)}
 	 \\
 	 &\times  \Psi_{\alpha_{k_2+1}}(p_{k_2+1},p_{k_2},\hbar^{-1}s_{k_2}; V)  \Psi_{\alpha_{k_2}}(p_{k_2},p_{k_2-1},\hbar^{-1}(s_{k_2-1}-s_{k_2+1}-s_{k_2}); V) \, dp_{k_2}.
  	\end{aligned}
\end{equation*}
  With this form we observe that this is the same type of integrals as we considered in the proof of Lemma~\ref{expansion_aver_bound_Mainterm}, where one of the operators is slightly different. 
  To be precise the difference is in that we in this proof treats $p_{k_2}$ as an ``$\eta$''-variable and thereby also will treat $s_{k_2}$ as a ``$t$''-variable. 
 Hence with an argument analogous to that used in the proof of Lemma~\ref{expansion_aver_bound_Mainterm} for the case where $\kappa=\mathrm{id}$ we obtain that
     \begin{equation}\label{EQ:Aver_full_ex_same_0,0_dob3}
   	\begin{aligned}
	 \Aver{ \mathcal{T}(n,\sigma^1,\sigma^2,\alpha,\tilde{\alpha},\mathrm{id})}
	 \leq   \frac{ C_d^{a_k+\tilde{a}_k+n}  \norm{\varphi}_{L^2(\R^d)}^2 \norm{\hat{V}}_{1,\infty,3d+3}^{|\alpha|+|\tilde{\alpha}|}  (\rho t)^{2k-n} \tau_0^{n_2-2k_2}}{(k_1+1)!(k_1+1-n_1)! (k_2-1)!(k_2-1-n_2)!}.
	\end{aligned}
\end{equation}
  Since we have treaded $s_{k_2}$ as a ``$t$'' variable in the notation used in the proof of Lemma~\ref{expansion_aver_bound_Mainterm} the main difference in the arguments is in the estimate obtained in \eqref{EQ:exp_aver_b_Mainterm_5} in the proof of Lemma~\ref{expansion_aver_bound_Mainterm} for this case we use the estimate
    \begin{equation*}
	\begin{aligned}
	\MoveEqLeft     \int_{\R^{2k+2}_{+}} \prod_{i=1}^n 
	\frac{ \boldsymbol{1}_{[0,\frac{t}{\tau_0}]}( \boldsymbol{s}_{1,k_1+1}^{+}+\hbar \boldsymbol{t}_{1,a_{k_1+1}}^{+})  
	\boldsymbol{1}_{[0,\frac{t}{\tau_0}]}( \boldsymbol{\tilde{s}}_{1,k_1+1}^{+}+\hbar \boldsymbol{\tilde{t}}_{1,\tilde{a}_{k_1+1}}^{+}) }
	 {\max(1,\hbar^{-1} | \boldsymbol{s}_{\sigma^1_i,\sigma^1_{i+1}-1}^{+} - \boldsymbol{\tilde{s}}_{\sigma_i^2,\sigma_{i+1}^2-1}^{+} +\hbar l^1_i(\boldsymbol{t}, \boldsymbol{\tilde{t}})) |)^\frac{d}{2}}  
	 \\
	 &\times\frac{ \boldsymbol{1}_{[\frac{t}{\tau_0},\frac{\tau t}{\tau_0}]}( \boldsymbol{s}_{k_1+2,k}^{+}+\hbar \boldsymbol{t}_{a_{k_1}+2,a_k}^{+})  
	\boldsymbol{1}_{[\frac{t}{\tau_0},\frac{\tau t}{\tau_0}]}( \boldsymbol{\tilde{s}}_{k_1+2,k}^{+}+\hbar \boldsymbol{\tilde{t}}_{\tilde{a}_{k_1+2},\tilde{a}_k}^{+}) } 
	{\max(1, \hbar^{-1}| s_{k_2}|)^\frac{d}{2}\max(1, \hbar^{-1}| \tilde{s}_{k_2}|)^\frac{d}{2}} \, d\boldsymbol{\tilde{s}} d\boldsymbol{s}
	\\
	\leq{}& \frac{\hbar^{n+2} \left(\frac{\tau-1}{\tau_0}t\right)^{2k_2-2-n_2}  \left(\frac{t}{\tau_0}\right)^{2k_1 +2-n_1}}{(k_1+1)!(k_1+1-n_1)! (k_2-1)!(k_2-1-n_2)!} \left( \int_{\R} \frac{1}{\max(1, | s |)^\frac{d}{2}} \,ds \right)^{n+2} 
	\\
	\leq{}& \left(\frac{2d}{d-2}\right)^{n+2}  \frac{\hbar^{n+2} t^{2k-n} \tau_0^{n_1-2k_1-2}}{(k_1+1)!(k_1+1-n_1)! (k_2-1)!(k_2-1-n_2)!}.
	\end{aligned}
\end{equation*} 
We now turn to the case when $\kappa\neq\mathrm{id}$. Again this part of the proof will be analogous to that of the same case in Lemma~\ref{expansion_aver_bound_Mainterm}. Again we will here also denote the smallest $i$ such that $\kappa(i)\neq i$ by $i^{*}$ and we will use the second expressions for the kernels of the operators $\mathcal{I}$ from Observation~\ref{obs_form_I_op_kernel}. Using these we get that
\begin{equation*}
	\begin{aligned}
	\MoveEqLeft \mathcal{T}(n,\sigma^1,\sigma^2,\alpha,\tilde{\alpha},\kappa)
	= \sum_{(\boldsymbol{x},\boldsymbol{\tilde{x}})\in \mathcal{X}_{\neq}^{2k}} \frac{1}{(2\pi\hbar)^{d}}\prod_{i=1}^n \rho^{-1} \delta(x_{\sigma_i^1}- \tilde{x}_{\sigma_{\kappa(i)}^2})   \int_{\R_{+}^{|\alpha|+|\tilde{\alpha}|}} \int  \delta(p_k-q_k) 
	\\
	&\times \boldsymbol{1}_{[0,(\tau_1\hbar)^{-1}t]}( \boldsymbol{s}_{1,k_2}^{+}+ \boldsymbol{t}_{1,a_{k_2}}^{+}) \boldsymbol{1}_{[0,(\tau_1\hbar)^{-1}t]}( \boldsymbol{\tilde{s}}_{1,k_2}^{+}+ \boldsymbol{\tilde{t}}_{1,\tilde{a}_{k_2}}^{+})  \boldsymbol{1}_{[0,(\tau_0\hbar)^{-1}t]}( \boldsymbol{s}_{k_2+1,k+1}^{+}+ \boldsymbol{t}_{a_{k_2}+1,a_{k}}^{+}) 
	\\
	&\times     \boldsymbol{1}_{[0,(\tau_0\hbar)^{-1}t]}( \boldsymbol{\tilde{s}}_{k_2+1,k+1}^{+}+ \boldsymbol{\tilde{t}}_{\tilde{a}_{k_2}+1,\tilde{a}_{k}}^{+})  e^{i  ((\tau_0\hbar)^{-1}t- \boldsymbol{s}_{k_2+1,k+1}^{+}- \boldsymbol{t}_{a_{k_2}+1,a_k}^{+}) \frac{1}{2} p_{k_2}^2} 
	 \prod_{i=1}^{a_k} e^{i \frac{1}{2} t_{i}  \eta_{i}^2 } \prod_{i=1}^{\tilde{a}_k} e^{-i \frac{1}{2}  \tilde{t}_{i}  \xi_{i}^2} 
	\\
	&\times e^{-i  ((\tau_0\hbar)^{-1}t- \boldsymbol{\tilde{s}}_{k_2+1,k+1}^{+}- \boldsymbol{\tilde{t}}_{\tilde{a}_{k_2}+1,\tilde{a}_k}^{+}) \frac{1}{2} q_{k_2}^2} 
	  \Big\{ \prod_{m=1}^{k_2-1} e^{ -i   \langle  \hbar^{-1/d}x_m,p_{m}-p_{m-1} \rangle} e^{ i   \langle  \hbar^{-1/d}\tilde{x}_m,q_{m}-q_{m-1} \rangle} \Big\}
	  \\
	  &\times e^{ -i   \langle  \hbar^{-1/d}x_{k_2},p_{k_2+1}-p_{k_2-1} \rangle} e^{ i   \langle  \hbar^{-1/d}\tilde{x}_{k_2},q_{k_2+1}-q_{k_2-1} \rangle} \Big\{ \prod_{m=k_2+1}^{k+1} e^{ -i   \langle  \hbar^{-1/d}x_{m-1},p_{m}-p_{m-1} \rangle} 
	  \\
	  &\times e^{ i   \langle  \hbar^{-1/d}\tilde{x}_m,q_{m}-q_{m-1} \rangle} \Big\} \Big\{ \prod_{m=1}^k  e^{ -i   \langle  \hbar^{-1/d}x_m,p_{m}-p_{m-1} \rangle}   e^{i  s_{m} \frac{1}{2} p_{m}^2}  \hat{\mathcal{V}}_{\alpha_m}(p_m,p_{m-1},\boldsymbol{\eta} )
	 e^{-i  \tilde{s}_{m} \frac{1}{2} q_{m}^2}  
	 \\
	 &\times e^{ i   \langle  \hbar^{-1/d}\tilde{x}_m,q_{m}-q_{m-1} \rangle} \overline{\hat{\mathcal{V}}_{\tilde{\alpha}_m}(q_m,q_{m-1},\boldsymbol{\xi} ) } \Big\}   
	  e^{i  ((\tau_1\hbar)^{-1}t- \boldsymbol{s}_{1,k_2}^{+}- \boldsymbol{t}_{1,a_{k_2}}^{+}) \frac{1}{2} p_{0}^2} e^{-i  ((\tau_1\hbar)^{-1}t- \boldsymbol{\tilde{s}}_{1,k_2}^{+}- \boldsymbol{\tilde{t}}_{1,\tilde{a}_{k_2}}^{+}) \frac{1}{2} q_{0}^2} 
	\\
	&\times   
	 \hat{\varphi}(\tfrac{p_0}{\hbar})  \hat{\varphi}(\tfrac{q_0}{\hbar})   \,d\boldsymbol{\eta}d\boldsymbol{p} \boldsymbol{\xi}d\boldsymbol{q}  d\boldsymbol{t} d\boldsymbol{s}  d\boldsymbol{\tilde{t}} d\boldsymbol{\tilde{s}}.
	\end{aligned}
\end{equation*}
We observe that this is the same form as the same in case in the proof of Lemma~\ref{expansion_aver_bound_Mainterm_2} with more complicated relations between the indices. Keeping this in mind the argument is analogous to that made in the proof of Lemma~\ref{expansion_aver_bound_Mainterm_2}. We then obtain the estimate
  \begin{equation}\label{EQ:Aver_full_ex_same_0,0_dob4}
   	\begin{aligned}
	\MoveEqLeft \Aver{ \mathcal{T}(n,\sigma^1,\sigma^2,\alpha,\tilde{\alpha},\kappa)}
	\\
	& \leq   C_d^{a_k+\tilde{a}_k}    \norm{\hat{V}}_{1,\infty,4d+4}^{|\alpha|+|\tilde{\alpha}|}  \tau_0^{n_1+2-2k_1}   \frac{\rho(\rho t)^{2k-n-1} \hbar |\log(\zeta_1)|^{n_2+3} |\log(\zeta_0)|^{n_1+1}}{k_1!(k_1+1-n_1)! (k_2-2)!(k_2-1-n_2)!}  \norm{\varphi}_{\mathcal{H}^{3d+3}_\hbar(\R^d)}^2.
	\end{aligned}
\end{equation}
From combining \eqref{EQ:Aver_full_ex_same_0,0_dob1},  \eqref{EQ:Aver_full_ex_same_0,0_dob2},  \eqref{EQ:Aver_full_ex_same_0,0_dob3},  \eqref{EQ:Aver_full_ex_same_0,0_dob4} and arguing as in the proof of Lemma~\ref{expansion_aver_bound_Mainterm_2} we obtain  the desired estimate.
This concludes the proof.
\end{proof}
\subsection{Estimates for truncated terms}
We now turn to estimating the truncated terms without recollisions. Our first estimate is given in the following lemma.
\begin{lemma}\label{LE:Truncated_Without_re_bound_Mainterm_1}
Assume we are in the setting of Definition~\ref{functions_for_exp_def} and let $\varphi \in \mathcal{H}^{2d+2}_\hbar(\R^d)$ then
\begin{equation*}
	\begin{aligned}
	 \mathbb{E}\Big[\big\lVert \sum_{\boldsymbol{x}\in\mathcal{X}_{\neq}^{k_0+1}}\mathcal{E}^{k_0}_{0,0}(\boldsymbol{x},\tfrac{t}{\tau_0};\hbar) \varphi\big\rVert_{L^2(\R^d)}^2\Big]
	\leq \frac{C^{k_0}}{\hbar k_0!} \norm{\varphi}_{L^2(\R^d)}^2 +  \frac{k_0}{\tau_0^{k_0-1}\hbar}|\log(\tfrac{\hbar}{t})|^{k_0+3} C^{k_0}  \norm{\varphi}_{\mathcal{H}^{2d+2}_\hbar(\R^d)}^2,
	\end{aligned}
\end{equation*}
where the constant $C$ depends on $\rho$, $t$, the single site potential $V$ and the coupling constant $\lambda$. In particular, $  \sum_{\boldsymbol{x}\in\mathcal{X}_{\neq}^{k_0+1}}\mathcal{E}^{k_0}_{0,0}(\boldsymbol{x},\tfrac{t}{\tau_0};\hbar) \varphi \in L^2(\R^d)$ $\Pro$-almost surely. 
\end{lemma}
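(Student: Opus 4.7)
The plan is to reduce the estimate to an analog of Lemma~\ref{expansion_aver_bound_Mainterm} by exploiting the outer time integration in the definition of $\mathcal{E}^{k_0}_{0,0}$. First, write
\begin{equation*}
\sum_{\boldsymbol{x}\in\mathcal{X}_{\neq}^{k_0+1}}\mathcal{E}^{k_0}_{0,0}(\boldsymbol{x},\tfrac{t}{\tau_0};\hbar) \varphi = \int_0^{t/\tau_0} U_{\hbar,\lambda}(-s) \, G(s;\hbar) \, ds,
\end{equation*}
where $G(s;\hbar) = U_{\hbar,0}(s) \sum_{\alpha\in\N^{k_0}\times\{1\}} (i\lambda)^{|\alpha|} \sum_{\boldsymbol{x}\in\mathcal{X}_{\neq}^{k_0+1}} \tilde{\mathcal{E}}^{k_0}_{0,0}(\boldsymbol{x},s,\alpha,\tfrac{t}{\tau_0};\hbar)\varphi$. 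Applying Minkowski's inequality, the unitarity of $U_{\hbar,\lambda}$, and then Cauchy--Schwarz on the outer $s$-integral gives
\begin{equation*}
\big\lVert \sum_{\boldsymbol{x}\in\mathcal{X}_{\neq}^{k_0+1}}\mathcal{E}^{k_0}_{0,0}(\boldsymbol{x},\tfrac{t}{\tau_0};\hbar) \varphi \big\rVert_{L^2}^2 \leq \tfrac{t}{\tau_0} \int_0^{t/\tau_0} \| G(s;\hbar)\|_{L^2}^2 \, ds.
\end{equation*}

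After taking expectation and using the unitarity of $U_{\hbar,0}(s)$, the problem reduces to estimating $\mathbb{E}[\|G(s;\hbar)\|_{L^2}^2]$ for each fixed $s\in[0,t/\tau_0]$. Observe that, structurally, this quantity is of the same form as the one controlled in Lemma~\ref{expansion_aver_bound_Mainterm} with $k=k_0+1$, up to two modifications: the last multi-index component is frozen at $\alpha_{k_0+1}=1$ (so the last collision has no internal scatterings and the corresponding $\Theta_1$-operator reduces to $\hbar^{-1}U_{\hbar,0}(-s) V_{\hbar,x_{k_0+1}} U_{\hbar,0}(s)$), and there is an additional constraint $\boldsymbol{1}_{[s,t/\tau_0]}(s_{k_0})$ on one of the ordered time variables.

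I would then replay the argument of Lemma~\ref{expansion_aver_bound_Mainterm} essentially verbatim: expand $\|G(s;\hbar)\|_{L^2}^2$ as a double sum over $\mathcal{X}_{\neq}^{k_0+1}$; split into the pairing patterns indexed by $n$, $\sigma^1,\sigma^2\in\mathcal{A}(k_0+1,n)$ and $\kappa\in\mathcal{S}_n$; apply Lemma~\ref{LE:Exp_ran_phases} to evaluate the Poisson expectation; and then distinguish the $\kappa=\mathrm{id}$ case (handled via the stationary-phase bound of Lemma~\ref{app_quadratic_integral_tech_est}, yielding the $\|\varphi\|_{L^2}^2$ part with factorial denominator) from the $\kappa\neq\mathrm{id}$ case (handled via the resolvent regularisation and Lemmas~\ref{LE:resolvent_int_est}, \ref{LE:est_res_combined}, yielding the $\|\varphi\|_{\mathcal{H}^{2d+2}_\hbar}^2$ part with $|\log(\hbar/t)|^{k_0+3}$).

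Finally, integrate the resulting bound over $s\in[0,t/\tau_0]$ and combine with the outer $t/\tau_0$ factor. The $\tau_0^{-(k_0-1)}$ scaling in the Sobolev term arises from the time-simplex volumes being cut from $[0,t]$ to $[0,t/\tau_0]$ (this modifies the analogue of \eqref{EQ:exp_aver_b_Mainterm_5}), while the $\hbar^{-1}$ prefactor in both terms comes from the isolated $\hbar^{-1}$ contributed by the frozen $\Theta_1$-operator, which, after the extra $t/\tau_0$ factor from Cauchy--Schwarz, cannot be absorbed into the usual $\hbar$ cancellation of Lemma~\ref{expansion_aver_bound_Mainterm}. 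The main bookkeeping obstacle is tracking exactly how the indicator $\boldsymbol{1}_{[s,t/\tau_0]}(s_{k_0})$ reshapes the time-volume estimates and the combinatorial factorials, and verifying that the interplay between the outer $s$-integration, the truncation constraint, and the $\hbar^{-1}$ from $\Theta_1$ produces precisely the stated $1/(\hbar k_0!)$ and $k_0/(\tau_0^{k_0-1}\hbar)$ dependencies.
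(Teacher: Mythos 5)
Your argument is essentially the paper's own proof: the same unitarity-plus-Jensen (Cauchy--Schwarz) step on the outer time integral produces the factor $\tfrac{t}{\tau_0}\int_0^{t/\tau_0}\lVert\cdot\rVert_{L^2}^2\,ds_{k_0+1}$, and the inner expectation is then estimated by replaying the proof of Lemma~\ref{expansion_aver_bound_Mainterm} with the frozen last collision ($\alpha_{k_0+1}=1$), the extra indicator, and the time intervals shortened to $[0,t/\tau_0]$, which is exactly what the paper does. The only nuance is that the paper locates the $\tau_0^{-(k_0-1)}$ gain in the crossing-case time integral (the analogue of \eqref{EQ:exp_aver_b_Mainterm_11}, with $k_0-1$ variables confined to $[0,t/\tau_0]$), i.e.\ precisely in the Sobolev term where it appears, rather than in the ladder estimate \eqref{EQ:exp_aver_b_Mainterm_5} you cite; this misattribution is harmless and does not affect the validity of your outline.
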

\begin{proof}
From definition of the operator $\mathcal{E}^{k_0}_{0,0}(\boldsymbol{x},t;\hbar)$ we get that
\begin{equation}\label{EQ:Truncated_Without_re_bound_Mainterm_1_1}
	\begin{aligned}
	\MoveEqLeft \big\lVert \sum_{\boldsymbol{x}\in\mathcal{X}_{\neq}^{k_0+1}} \mathcal{E}^{k_0}_{0,0}(\boldsymbol{x},\tfrac{t}{\tau_0}:\hbar) \varphi\big\rVert_{L^2(\R^d)}^2 
	\\
	&\leq \frac{t}{\tau_0}  \int_{0}^{\frac{t}{\tau_0}} \big\lVert  \sum_{\boldsymbol{x}\in\mathcal{X}_{\neq}^{k_0+1}}  \sum_{\alpha\in \N^{k_0} \times\{1\}}  (i\lambda)^{|\alpha|} \tilde{\mathcal{E}}^{k_0}_{0,0}(\boldsymbol{x},s_{k_0+1},\alpha,\tfrac{t}{\tau_0};\hbar) \varphi \big\rVert_{L^2(\R^d)}^2  \, ds_{k_0+1},
	\end{aligned}
\end{equation}
where we have used the unitarity of $U_{\hbar,\lambda}(-s_{k_0+1})$ and $U_{\hbar,0}(s_{k_0+1})$, Jensen's inequality and the operator $ \tilde{\mathcal{E}}^{k_0}_{0,0}(\boldsymbol{x},s_{k_0+1},\alpha,\frac{t}{\tau_0};\hbar)$ is given by
\begin{equation*}
	\begin{aligned}
	\tilde{\mathcal{E}}^{k_0}_{0,0}(\boldsymbol{x},s_{k_0+1},\alpha,\tfrac{t}{\tau_0};\hbar) =   \int_{[0,\frac{t}{\tau_0}]_{\leq}^{k_0}}   \boldsymbol{1}_{[s_{k_0+1},t]}(s_{k_0})   \prod_{m=1}^{k_0+1} \Theta_{\alpha_m}(s_{m-1},{s}_{m},x_m;V,\hbar)\, d\boldsymbol{s}_{k_0,1}U_{\hbar,0}(-t).
	  \end{aligned}
\end{equation*}
We observe that the right hand side in \eqref{EQ:Truncated_Without_re_bound_Mainterm_1_1} is essentially the same expression as the one considered in the proof of Lemma~\ref{expansion_aver_bound_Mainterm}, where the ``last'' time integral is outside the $L^2$-norm. From arguing as in the proof of Lemma~\ref{expansion_aver_bound_Mainterm} we obtain that
\begin{equation}\label{EQ:Truncated_Without_re_bound_Mainterm_1_2}
	\begin{aligned}
	\MoveEqLeft \int_{0}^{\frac{t}{\tau_0}} \mathbb{E} \Big[ \big\lVert  \sum_{\boldsymbol{x}\in\mathcal{X}_{\neq}^{k_0+1}}  \sum_{\alpha\in \N^{k_0} \times\{1\}}  (i\lambda)^{|\alpha|} \tilde{\mathcal{E}}^{k_0}_{0,0}(\boldsymbol{x},s_{k_0+1},\alpha,\tfrac{t}{\tau_0};\hbar) \varphi \big\rVert_{L^2(\R^d)}^2\Big]  \, ds_{k_0+1}
	\\
	&\leq  \frac{C^{k_0}}{\hbar k_0!} \norm{\varphi}_{L^2(\R^d)}^2 +  \frac{k_0}{\tau_0^{k_0-1}\hbar} |\log(\tfrac{\hbar}{t})|^{k_0+3} C^{k_0}  \norm{\varphi}_{\mathcal{H}^{2d+2}_\hbar(\R^d)}^2,
	\end{aligned}
\end{equation}
where we for the estimate corresponding to in the estimate in \eqref{EQ:exp_aver_b_Mainterm_11} from the proof of Lemma~\ref{expansion_aver_bound_Mainterm} for this case will be integratig $k_0-1$ variables all in belonging to the interval $[0,\frac{t}{\tau_0}]$. By combining \eqref{EQ:Truncated_Without_re_bound_Mainterm_1_1} and \eqref{EQ:Truncated_Without_re_bound_Mainterm_1_2}  we obtain the state estimate and this concludes the proof.
\end{proof}
\begin{lemma}\label{LE:Truncated_Without_re_bound_Mainterm_2}
  Assume we are in the setting of Definition~\ref{def_remainder_k_0}. Let $\varphi \in \mathcal{H}^{3d+3}_\hbar(\R^d)$. Then for any
  $\tau\in\{2,\dots,\tau_0\}$
    \begin{equation*}
	\begin{aligned}
	\MoveEqLeft \mathbb{E}\Big[ \big\lVert \sum_{k_2=1}^{k_0}   \sum_{(\boldsymbol{x}_1,\boldsymbol{x}_2)\in \mathcal{X}_{\neq}^{k_0+k_2+1}}  \mathcal{E}_{0,0}^{k_0}(\boldsymbol{x}_1,\iota,\tfrac{t}{\tau_0};\hbar) \mathcal{I}_{0,0}(k_2,\boldsymbol{x}_2,\iota,\tfrac{\tau-1}{\tau_0}t;\hbar) \varphi\big\rVert_{L^2(\R^d)}^2 \Big]
	  \\
	   &
	   \leq \frac{C^{k_0} k_0^3}{\tau_0^{k_0-2} \hbar k_0!} \norm{\varphi}_{L^2(\R^d)}^2 + \frac{ k_0^6}{\tau_0^{k_0-2}\hbar} |\log(\tfrac{\hbar}{t})|^{2k_0+3} C^{k_0}  \norm{\varphi}_{\mathcal{H}^{3d+3}_\hbar(\R^d)}^2,
	 \end{aligned}
\end{equation*}
and
   \begin{equation*}
	\begin{aligned}
	\MoveEqLeft \mathbb{E}\Big[ \big\lVert \sum_{k_2=1}^{k_0}   \sum_{(\boldsymbol{x}_1,\boldsymbol{x}_2)\in \mathcal{X}_{\neq}^{k_0+k_2+1}}  \mathcal{E}_{0,0}^{k_0}(\boldsymbol{x}_1,\iota,\tfrac{t}{\tau_0};\hbar) \mathcal{I}_{0,0}(k_2,\boldsymbol{x}_2,\iota,\tfrac{\tau-1}{\tau_0}t;\hbar) \varphi\big\rVert_{L^2(\R^d)}^2 \Big]
	  \\
	   &
	   \leq \frac{C^{k_0} k_0^3}{\tau_0^{k_0-2} \hbar k_0!} \norm{\varphi}_{L^2(\R^d)}^2 + \frac{ k_0^6}{\tau_0^{k_0-2}\hbar} |\log(\tfrac{\hbar}{t})|^{2k_0+3} C^{k_0}  \norm{\varphi}_{\mathcal{H}^{3d+3}_\hbar(\R^d)}^2,
	 \end{aligned}
\end{equation*}
where the constant $C$ depends on the single site potential $V$ and the coupling constant $\lambda$. In particular we have that the function is in  $L^2(\R^d)$ $\Pro$-almost surely. 
\end{lemma}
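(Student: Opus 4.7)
The plan is to combine the unitarity–Jensen reduction used in the proof of Lemma~\ref{LE:Truncated_Without_re_bound_Mainterm_1} with the pairing/stationary-phase analysis developed for Lemma~\ref{expansion_aver_bound_Mainterm_2} and Lemma~\ref{expansion_aver_bound_Mainterm_3}. First, using unitarity of $U_{\hbar,\lambda}(-s_{k_0+1})U_{\hbar,0}(s_{k_0+1})$ together with the Cauchy--Schwarz inequality on the outer $s_{k_0+1}\in[0,t/\tau_0]$ integration in the definition of $\mathcal{E}^{k_0}_{0,0}$, I would pull this outer integral outside the $L^2$-norm. This reduces the problem, up to a factor of $t/\tau_0$, to bounding uniformly in $s_{k_0+1}$ the quantity
\begin{equation*}
\mathbb{E}\Big[\big\lVert \sum_{k_2=1}^{k_0}\sum_{(\boldsymbol{x}_1,\boldsymbol{x}_2)} \sum_{\alpha\in\N^{k_0}\times\{1\}}(i\lambda)^{|\alpha|}\,\tilde{\mathcal{E}}^{k_0}_{0,0}(\boldsymbol{x}_1,s_{k_0+1},\alpha,\tfrac{t}{\tau_0};\hbar)\,\mathcal{I}_{0,0}(k_2,\boldsymbol{x}_2,\iota,\tfrac{\tau-1}{\tau_0}t;\hbar)\varphi\big\rVert_{L^2}^2\Big],
\end{equation*}
whose structure is exactly that of the composition in Lemma~\ref{expansion_aver_bound_Mainterm_2} except the left-hand block has a fixed number $k_0$ of scatterers with the additional constraint $s_{k_0}\geq s_{k_0+1}$, and there is an extra factor $\Theta_{\alpha_{k_0+1}}$ contributing $\hbar^{-1}$.

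Next I would expand the squared $L^2$-norm as a sum over pairings indexed by $(n,\sigma^1,\sigma^2,\kappa)$ exactly as in \eqref{EQ:exp_aver_b_er_mainterm_1}, with $n_1\leq k_0$ pairings on the (tight) left block and $n_2\leq k_2$ on the (loose) right block, and split into the cases $\kappa=\mathrm{id}$ and $\kappa\neq\mathrm{id}$. For $\kappa=\mathrm{id}$, one repeats the stationary-phase computation in the internal time variables $\boldsymbol{t},\boldsymbol{\tilde{t}}$ via Lemma~\ref{app_quadratic_integral_tech_est}; the only change in the $\boldsymbol{s},\boldsymbol{\tilde{s}}$ time estimate analogous to \eqref{EQ:exp_aver_b_er_mainterm_2} is that \emph{both} blocks of ordered time integrals on the left are confined to $[0,t/\tau_0]$, producing a factor $\tau_0^{n_1-2k_0}$ together with $(k_0!\,(k_0-n_1)!\,k_2!\,(k_2-n_2)!)^{-1}$. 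For $\kappa\neq\mathrm{id}$ the same insertion of regularising exponentials $e^{-\zeta\boldsymbol{t}^+}$, Fourier representation of the indicator functions as delta functions with dual variables $\nu,\tilde\nu_0,\tilde\nu_1$, and the iterated application of Lemma~\ref{LE:resolvent_int_est} together with Lemma~\ref{LE:est_res_combined} at the crossing index $i^{*}$ yields logarithmic factors $|\log(\hbar\tau_0/t)|^{n_1+n_2+\text{const}}$ and an extra factor $\hbar$, as in \eqref{EQ:exp_aver_b_er_mainterm_8}. The regularity $\mathcal{H}^{3d+3}_\hbar$ is precisely what is needed to close the $\langle q_0\rangle^{5d+5}$ resolvent bookkeeping in that case.

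Summing over $n,n_1,n_2,\sigma^1,\sigma^2,\kappa$ and over $k_2\in\{1,\dots,k_0\}$, and then combining with the prefactor $t/\tau_0$ from Jensen together with the unbalanced $\hbar^{-1}$ from the extra $\Theta$ operator, gives a bound of the form $\hbar^{-1}\tau_0^{n_1-2k_0}$ times the $C^{k_0}/(k_0!)$ (or $C^{k_0}k_0|\log|^{2k_0+3}$) combinatorial content. Bounding $\tau_0^{n_1-2k_0}\leq \tau_0^{-(k_0-2)}$ (using $n_1\leq k_0$ together with $k_0\geq 2$, which is without loss) and absorbing polynomial $k_0$-factors from the various sums over $k_2$, $n$, and pairings into the stated $k_0^3$ and $k_0^6$, one arrives at the first claimed bound. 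The second estimate (involving the extra shared scatterer $x_{2,k_2}$) is proved in the same way, reducing via the change of variables $s_{k_2}\mapsto s_{k_2}-s_{k_2+1}$ and the amalgamation of the two consecutive $\Theta$'s at $x_{k_2}$ into a single enlarged kernel, exactly as in the passage from Lemma~\ref{expansion_aver_bound_Mainterm_2} to Lemma~\ref{expansion_aver_bound_Mainterm_3}.

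The main obstacle is bookkeeping rather than conceptual: tracking simultaneously (a) the $\tau_0^{n_1-2k_0}$ power arising from the \emph{all-tight} left block versus the $(\tau-1)/\tau_0\leq 1$ dominated right block, (b) the $\hbar^{-1}$ imbalance coming from the extra $\Theta$ factor inside $\tilde{\mathcal{E}}^{k_0}_{0,0}$, and (c) the combinatorial factors $k_0^3$, $k_0^6$ from the $k_0$ choices of $k_2$, the $n!$ from $\kappa$, and the $n^{2}$-type loss appearing when converting $n!/(k-2)!$ into a polynomial bound. The $\kappa\neq\mathrm{id}$ resolvent estimate is technically delicate but is a direct rerun of \eqref{EQ:exp_aver_b_er_mainterm_5}--\eqref{EQ:exp_aver_b_er_mainterm_7} with the \emph{same} choice of crossing index $i^{*}$ and the \emph{same} bookkeeping weights $\langle q_0\rangle^{5d+5}\prod\langle q_m-q_{m-1}\rangle^{6d+6}$.
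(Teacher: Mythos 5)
Your proposal follows essentially the same route as the paper: the paper's (very terse) proof likewise pulls the outer $s_{k_0+1}$-integral out of the $L^2$-norm via unitarity and Jensen as in Lemma~\ref{LE:Truncated_Without_re_bound_Mainterm_1}, and then reruns the pairing/crossing analysis of Lemma~\ref{expansion_aver_bound_Mainterm_2} (resp.\ Lemma~\ref{expansion_aver_bound_Mainterm_3} for the shared-scatterer variant), exactly as you describe. Your more explicit bookkeeping of the $\tau_0^{n_1-2k_0}$ power, the extra $\hbar^{-1}$ from the additional $\Theta$ factor, and the polynomial $k_0$-losses is consistent with the stated bounds.
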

\begin{proof}
Firstly by arguing as in the proof of Lemma~\ref{LE:Truncated_Without_re_bound_Mainterm_1} we obtain an expression similar to the one estimated in Lemma~\ref{expansion_aver_bound_Mainterm_2}, but with the last time integral outside of the $L^2$-norm. Then using an argument analogous to that used in the proof of Lemma~\ref{expansion_aver_bound_Mainterm_2} or Lemma~\ref{expansion_aver_bound_Mainterm_3}  we obtain an estimate that put together with the estimate obtained by arguing as  in the proof of Lemma~\ref{LE:Truncated_Without_re_bound_Mainterm_1} gives us the stated estimate.
\end{proof}
\section{Terms with one recollisions}\label{Sec:tech_est_1_recol}
We will in this section continue with the technical estimates which gives that the Duhamel expansion do converge in appropriate sense. We will here focus on the terms where we have observed a single recollision.
\subsection{Estimates for fully expanded terms}
\begin{lemma}\label{expansion_aver_bound_Mainterm_rec}
Assume we are in the setting of Definition~\ref{functions_for_exp_rec_def}. Let  $\iota\in\mathcal{Q}_{k,1,1}$ and let $\varphi \in \mathcal{H}^{2d+2}_\hbar(\R^d)$. Then 
\begin{equation*}
	\begin{aligned}
	\MoveEqLeft \mathbb{E}\Big[\big\lVert \sum_{\boldsymbol{x}\in\mathcal{X}_{\neq}^{k-1}}\mathcal{I}_{1,1}(k,\boldsymbol{x},\iota,t;\hbar) \varphi\big\rVert_{L^2(\R^d)}^2 \Big] \leq   \frac{ C^{k}}{(k-1)!}  \norm{\varphi}_{L^{2}(\R^d)}^2 +  \hbar k |\log(\zeta)|^{k+4} C^k  \norm{\varphi}_{\mathcal{H}^{2d+2}_\hbar(\R^d)}^2,
	\end{aligned}
\end{equation*}
where the constant $C$ depends on the single site potential $V$ and the coupling constant $\lambda$. In particular we have that the function is in  $L^2(\R^d)$ $\Pro$-almost surely. 
\end{lemma}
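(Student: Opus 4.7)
The plan is to mirror the methodology of the proof of Lemma~\ref{expansion_aver_bound_Mainterm}, adapted to the presence of one recollision encoded by the map $\iota\in\mathcal{Q}_{k,1,1}$. As in that proof, I would first expand the squared $L^{2}$-norm into a double sum over $\mathcal{X}_{\neq}^{k-1}\times\mathcal{X}_{\neq}^{k-1}$ and reorganise according to how the two configurations overlap, indexed by $n\in\{0,\dots,k-1\}$, ordered index sets $\sigma^{1},\sigma^{2}\in\mathcal{A}(k-1,n)$ and pairings $\kappa\in\mathcal{S}_{n}$. Writing the kernels of $\mathcal{I}_{1,1}(k,\boldsymbol{x},\iota,t;\hbar)$ in momentum representation in the form given by Observation~\ref{obs_form_I_op_kernel}, but with the scatterers relabelled through $\iota$ so that $x_{\iota(m_{1})}=x_{\iota(m_{2})}$, the point-process expectation is then evaluated via Lemma~\ref{LE:Exp_ran_phases}.

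The new feature, compared with Lemma~\ref{expansion_aver_bound_Mainterm}, is that the recollision $\iota(m_{1})=\iota(m_{2})$ produces, inside each of the two copies of the operator, an additional Fourier-conjugate integration in the repeated scatterer position. Writing this repeated phase as $\exp(-i\hbar^{-1/d}\langle x_{\iota(m_{1})}, p_{m_{1}}-p_{m_{1}-1}+p_{m_{2}}-p_{m_{2}-1}\rangle)$ and performing the Poisson average via Lemma~\ref{LE:Exp_ran_phases} replaces the usual momentum-conservation delta at that index by a delta enforcing $p_{m_{2}}-p_{m_{2}-1}=-(p_{m_{1}}-p_{m_{1}-1})+\text{(coupling with the other copy)}$. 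Thus even in the case $\kappa=\mathrm{id}$ one obtains an effective non-trivial momentum constraint, and consequently the treatment of this case now follows the logic of the $\kappa\neq\mathrm{id}$ branch in the proof of Lemma~\ref{expansion_aver_bound_Mainterm}: introduce the regulariser $\zeta=\zeta(\hbar^{-1}t)$, represent the time cut-offs as Fourier transforms of the indicator, and integrate out internal times to bring resolvent factors $|\tfrac{1}{2}p^{2}+\nu+i\zeta|^{-1}$ into the game.

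The core estimation step is then identical in spirit to \eqref{EQ:exp_aver_b_Mainterm_15}--\eqref{EQ:exp_aver_b_Mainterm_16}: decompose the internal time variables according to $|t_{i}|\gtrless 1$, use Lemma~\ref{app_quadratic_integral_tech_est} for the oscillatory integrals in $\boldsymbol{\eta},\boldsymbol{\xi}$, insert weight factors $\langle q_{0}\rangle$ and $\langle q_{i}-q_{i-1}\rangle$ to tame the momentum integrations, and apply Lemma~\ref{LE:est_res_combined} to the pair of resolvents in which the recollision momentum constraint couples two distinct momenta. Repeated application of Lemma~\ref{LE:resolvent_int_est} then clears the remaining resolvent integrals at a cost $C^{k}|\log\zeta|^{k+\mathrm{const}}$. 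The only change in bookkeeping compared with Lemma~\ref{expansion_aver_bound_Mainterm} is that the $\kappa=\mathrm{id}$ contribution now behaves like the $\kappa\neq\mathrm{id}$ contribution there, which is precisely the reason the clean term $(k-1)!^{-1}\norm{\varphi}_{L^{2}}^{2}$ appears (from the ladder contributions where the recollision indices lie inside a trivially constrained block and conservation holds without introducing resolvents), whereas the genuinely non-ladder part yields the error bound $\hbar k|\log(\zeta)|^{k+4}C^{k}\norm{\varphi}_{\mathcal{H}_{\hbar}^{2d+2}}^{2}$.

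The main obstacle I expect is combinatorial: I need to verify that, uniformly in $\iota\in\mathcal{Q}_{k,1,1}$ (whose cardinality is $O(k^{2})$ by Remark~\ref{Remark_grow_of_Q}) and uniformly in the position of the recollision inside each of the two index sequences $\sigma^{1},\sigma^{2}$, the extra momentum constraint can always be routed through a pair of resolvents to which Lemma~\ref{LE:est_res_combined} applies. This requires choosing $p_{m_{2}}$ or $q_{m_{2}}$ as the integration variable singled out by Lemma~\ref{LE:est_res_combined} and verifying that the companion resolvent (provided by a neighbouring free propagator) carries a linearly independent momentum, exactly as in the analysis around \eqref{EQ:exp_aver_b_Mainterm_13}--\eqref{EQ:exp_aver_b_Mainterm_15.1}. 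The factor $k$ in the final bound (rather than $k!$) arises, as noted in Remark~\ref{remark_mod_eng_erdos_1}, from the simultaneous Fourier-based treatment of all internal-time variables, together with the bound $n!/(k-1)!\leq k$ used after summing in $n$.
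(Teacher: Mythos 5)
Your proposal follows essentially the same route as the paper's proof: expand the $L^2$-norm, average via Lemma~\ref{LE:Exp_ran_phases}, split $\kappa=\mathrm{id}$ further according to whether the recollision indices $m_1<m_2$ lie inside a single unpaired block in \emph{both} copies (handled by the change of variables $\tilde p\mapsto\tilde p+p_{i_1-1}$, $\tilde q\mapsto\tilde q+p_{i_2-1}$ and Lemma~\ref{app_quadratic_integral_tech_est}, which produces the clean $\frac{C^k}{(k-1)!}\norm{\varphi}_{L^2}^2$ term) or are separated by a paired index (the $\zeta$-regularised resolvent method with Lemmas~\ref{LE:est_res_combined} and \ref{LE:resolvent_int_est}, giving the $\hbar$-small term), and treat $\kappa\neq\mathrm{id}$ as in Lemma~\ref{expansion_aver_bound_Mainterm}. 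Only note that your claim that all of $\kappa=\mathrm{id}$ ``follows the logic of the $\kappa\neq\mathrm{id}$ branch'' must be restricted to the crossing sub-case, exactly as your final paragraph does, since running the resolvent method on the block sub-case would not recover the stated $L^2$-only first term.
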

\begin{proof}
As in the proof of Lemma~\ref{expansion_aver_bound_Mainterm} we get two sums over the set $\mathcal{X}_{\neq}^{k-1}$, when we take the $L^2$-norm. Again we divide into different cases depending on how many points $x_m$ and $\tilde{x}_m$ they have in common and where they are placed. This yields
\begin{equation}\label{EQ:exp_aver_b_Mainterm_rec_1}
	\big\lVert \sum_{\boldsymbol{x}\in\mathcal{X}_{\neq}^{k-1}} \mathcal{I}_{1,1}(k,\boldsymbol{x},\iota,t;\hbar) \varphi\big\rVert_{L^2(\R^d)}^2
	=\sum_{\alpha,\tilde{\alpha}\in\N^k}\sum_{n=0}^{k-1} \sum_{\sigma^1,\sigma^2\in\mathcal{A}(k-1,n)} \sum_{\kappa\in\mathcal{S}_n} (i\lambda)^{|\alpha|}(-i\lambda)^{|\tilde{\alpha}|} \mathcal{T}(n,\sigma^1,\sigma^2,\alpha,\tilde{\alpha},\kappa),
\end{equation}
where the numbers $\mathcal{T}(n,\sigma^1,\sigma^2,\alpha,\tilde{\alpha},\kappa)$ are given by
\begin{equation*}
	\begin{aligned}
	\MoveEqLeft \mathcal{T}(n,\sigma^1,\sigma^2,\alpha,\tilde{\alpha},\kappa)
	\\
	={}& \sum_{\boldsymbol{x},\tilde{\boldsymbol{x}}\in \mathcal{X}_{\neq}^{k-1}}    \prod_{i=1}^n \rho^{-1}\delta(x_{\sigma_i^1}- \tilde{x}_{\sigma_{\kappa(i)^2}})  \int \mathcal{I}_{1,1}(k,\boldsymbol{x},\alpha,\iota,t;\hbar) \varphi(x)\overline{ \mathcal{I}_{1,1}(k,\boldsymbol{\tilde{x}},\tilde{\alpha},\iota,t;\hbar) \varphi(x)}  \,dx,
	\end{aligned}
\end{equation*}
where the operators $\mathcal{I}_{1,1}(k,\boldsymbol{x},\alpha,\iota,t;\hbar)$ is defined as $\mathcal{I}_{1,1}(k,\boldsymbol{x},\iota,t;\hbar)$ for a fixed $\alpha$. We will again divide into the two different cases depending on if $\kappa=\mathrm{id}$ or $\kappa\neq\mathrm{id}$. We start with the case where $\kappa=\mathrm{id}$. For this case we divide further into the two cases depending on the relation between the map $\iota$ and the two vectors $\sigma^1$ and $\sigma^2$. First case is if there exist $i_1$ and $i_2$ in $\{1,n+1\}$ such that  following condition is satisfied 
\begin{equation}\label{EQ:exp_aver_b_Mainterm_rec_1.5}
	\sigma_{i_1-1}^1\leq m_1<m_2 < \sigma_{i_1}^1 \quad\text{and}\quad \sigma_{i_2-1}^2 \leq m_1< m_2 < \sigma_{i_2}^2.
\end{equation}
We will in this case use the following expression for the function $\mathcal{I}_{1,1}(k,\boldsymbol{x},\alpha,\iota,t;\hbar)\varphi$ 
\begin{equation*}
	\begin{aligned}
 	\MoveEqLeft\mathcal{I}_{1,1}(k,\boldsymbol{x},\alpha,\iota,t;\hbar) \varphi(x) 
	=  \frac{1}{(2\pi\hbar)^d\hbar^k} \int_{\R_{+}^{|\alpha|}} \int \boldsymbol{1}_{[0,t]}( \boldsymbol{s}_{1,k}^{+}+ \hbar\boldsymbol{t}_{1,a_k}^{+})
	 e^{ i   \langle  \hbar^{-1}x,p_{k} \rangle}  \prod_{i=1}^{a_k} e^{i  t_{i} \frac{1}{2} \eta_{i}^2} \Big\{\prod_{m=1}^k e^{i  s_{m} \hbar^{-1} \frac{1}{2} p_{m}^2}
	\\
	&\times   e^{ -i   \langle  \hbar^{-1/d}x_{\iota(m)},p_{m}-p_{m-1} \rangle}     \hat{\mathcal{V}}_{\alpha_m}(p_m,p_{m-1},\boldsymbol{\eta} )  \Big\}    e^{i  \hbar^{-1}(t- \boldsymbol{s}_{1,k}^{+}- \hbar\boldsymbol{t}_{1,a_k}^{+}) \frac{1}{2} p_0^2} \hat{\varphi}(\tfrac{p_0}{\hbar}) \, d\boldsymbol{\eta} d\boldsymbol{p}   d\boldsymbol{t}d\boldsymbol{s}.
	\end{aligned}
\end{equation*}
We start by making the change of variables $p_{m_2-1}\mapsto p_{m_2-1} - p_{m_2}$ and $p_{m}\mapsto p_{m} - p_{m_2-1}$ for all $m\in\{m_1,\dots,m_2-2\}$, where $m_1$ and $m_2$ are the numbers associated to the map $\iota$. Moreover we also do a relabelling by changing $p_{m_2-1}$ into $\tilde{p}$ and $p_m$ into $p_{m-1}$ for all $m\in\{m_2,\dots,k\}$.This yields the form
\begin{equation}\label{EQ:exp_aver_b_Mainterm_rec_2}
	\begin{aligned}
 	\MoveEqLeft\mathcal{I}_{1,1}(k,\boldsymbol{x},\alpha,\iota,t;\hbar) \varphi(x) 
	=  \frac{1}{(2\pi\hbar)^d\hbar^k} \int_{\R_{+}^{|\alpha|}} \int \boldsymbol{1}_{[0,t]}( \boldsymbol{s}_{1,k}^{+}+\hbar \boldsymbol{t}_{1,a_k}^{+})
	 e^{ i   \langle  \hbar^{-1}x,p_{k-1} \rangle}  e^{i  s_{m_2} \hbar^{-1} \frac{1}{2} p_{m_2-1}^2} 
	 \\
	 &\times  \hat{\mathcal{V}}_{\alpha_{m_2}}(p_{m_2},\tilde{p}+p_{m_2},\boldsymbol{\eta} )  \prod_{i=1}^{a_k} e^{i  t_{i} \frac{1}{2} \eta_{i}^2}  
	  \prod_{m=1}^{k-1}  e^{ -i   \langle  \hbar^{-1/d}x_{m},p_{m}-p_{m-1}\rangle} 
	 \Big\{\prod_{m=1}^{k-1}  
	   e^{i  s_{\iota^{*}(m)} \hbar^{-1} \frac{1}{2} (p_{m}+\pi_{m}^1(\tilde{p}))^2} 
	   \\
	   &\times   \hat{\mathcal{V}}_{\alpha_{\iota^{*}(m)}}(p_m+\pi_{m}^1(\tilde{p}),p_{m-1}+\pi_{m}^2(\tilde{p}),\boldsymbol{\eta} )  \Big\}   e^{i  \hbar^{-1}(t- \boldsymbol{s}_{1,k}^{+}- \hbar\boldsymbol{t}_{1,a_k}^{+}) \frac{1}{2} p_0^2} \hat{\varphi}(\tfrac{p_0}{\hbar}) \, d\boldsymbol{\eta}d\boldsymbol{p}d\tilde{p}   d\boldsymbol{t} d\boldsymbol{s},
	\end{aligned}
\end{equation}
where $\iota^{*}$ is the function associated to $\iota$ and
\begin{equation}\label{EQ:exp_aver_b_Mainterm_rec_3}
	\pi_{m}^1(\tilde{p} ) = \begin{cases}
	\tilde{p} & \text{if $m\in\{ m_1,\dots,m_2-1  \}$}
	\\
	0 &\text{otherwise}.
	\end{cases}
	\quad\text{and}\quad 
	\pi_{m}^2(\tilde{p} ) = \begin{cases}
	\tilde{p} & \text{if $m\in\{ m_1+1,\dots,m_2-1  \}$}
	\\
	0 &\text{otherwise}.
	\end{cases}
\end{equation}
With this expression we can write down $\mathcal{T} (n,\sigma^1,\sigma^2,\alpha,\tilde{\alpha},\mathrm{id})$ and use Lemma~\ref{LE:Exp_ran_phases} to obtain that
\begin{equation}\label{EQ:exp_aver_b_Mainterm_rec_3.5}
	\begin{aligned}
	\MoveEqLeft \Aver{\mathcal{T}(n,\sigma^1,\sigma^2,\alpha,\tilde{\alpha},\mathrm{id}) }=  \frac{(\rho(2\pi)^d)^{2k-n-2}}{(2\pi\hbar)^d\hbar^{n+2}} \int_{\R_{+}^{|\alpha|+|\tilde{\alpha}|}} \int \boldsymbol{1}_{[0,t]}( \boldsymbol{s}_{1,k}^{+}+ \hbar\boldsymbol{t}_{1,a_k}^{+})  \boldsymbol{1}_{[0,t]}( \boldsymbol{\tilde{s}}_{1,k}^{+}+ \hbar\boldsymbol{\tilde{t}}_{1,\tilde{a}_k}^{+})
	 \\
	 &\times 
	  e^{i \hbar^{-1}  s_{m_2} \frac{1}{2} p_{i_1-1}^2}   
	  e^{-i \hbar^{-1}  \tilde{s}_{m_2} \frac{1}{2} p_{i_2-1}^2}   
	   \prod_{i=1}^{a_k} e^{i  t_{i} \frac{1}{2} \eta_{i}^2}  
	    \prod_{i=1}^{\tilde{a}_k} e^{-i  \tilde{t}_{i} \frac{1}{2} \xi_{i}^2}  
	    \prod_{i=1}^{n+1} \big\{ \prod_{m=\sigma_{i}^1+1}^{\sigma_{i}^1-1}  e^{i \hbar^{-1}  s_{\iota^{*}(m)} \frac{1}{2} (p_{i-1}+\pi_m^1(\tilde{p}))^2}   
	 \\
	 &\times 
	  \prod_{m=\sigma_{i-1}^2+1}^{\sigma_{i}^2-1}  e^{-i \hbar^{-1}  \tilde{s}_{\iota^{*}(m)} \frac{1}{2} (p_{i-1}+\pi_m^1(\tilde{q}))^2}   \big\}
	  \prod_{i=1}^{n} e^{i \hbar^{-1}  s_{\iota^{*}(\sigma_i^1)} \frac{1}{2} (p_{i}+\pi_{\sigma_i^1}^1(\tilde{p}))^2} 
	  e^{-i \hbar^{-1}  \tilde{s}_{\iota^{*}(\sigma_i^2)} \frac{1}{2} (p_{i}+\pi_{\sigma_i^2}^1(\tilde{q}))^2}  
	\\
	&\times e^{i  \hbar^{-1}(\boldsymbol{\tilde{s}}_{1,k}^{+}+  \hbar \boldsymbol{\tilde{t}}_{1,a_k}^{+} - \boldsymbol{s}_{1,k}^{+}- \hbar \boldsymbol{t}_{1,a_k}^{+}) \frac{1}{2} p_0^2}   \mathcal{G}(\tilde{p},\tilde{q},\boldsymbol{p},\boldsymbol{\eta},\boldsymbol{\xi},\sigma^1,\sigma^2,\alpha,\tilde{\alpha})  |\hat{\varphi}(\tfrac{p_0}{\hbar})|^2 \, d\boldsymbol{\eta}d\boldsymbol{\xi}d\boldsymbol{p} d\tilde{p}d\tilde{q}   d\boldsymbol{t} d\boldsymbol{s}d\boldsymbol{\tilde{t}} d\boldsymbol{\tilde{s}},
	\end{aligned}
\end{equation}
where we have used our assumption on the two indices $i_1$ and $i_2$ given in \eqref{EQ:exp_aver_b_Mainterm_rec_1.5} and evaluated all integrals involving delta functions. Moreover, we have used the notation
\begin{equation*}
	\begin{aligned}
	\mathcal{G}(\tilde{p},\tilde{q},\boldsymbol{p},\boldsymbol{\eta},\boldsymbol{\xi},\sigma^1,\sigma^2,\alpha,&\tilde{\alpha}) 
	= \hat{\mathcal{V}}_{\alpha_{m_2}}(p_{i_1-1},\tilde{p} +p_{i_1-1},\boldsymbol{\eta} )  \prod_{i=1}^{n+1} \prod_{m=\sigma_{i-1}^1+1}^{\sigma_{i}^1-1}  \hat{\mathcal{V}}_{\alpha_{\iota^{*}(m)}}(p_{i-1}+\pi_m^1(\tilde{p}),p_{i-1}+\pi_m^2(\tilde{p}),\boldsymbol{\eta} ) 
	\\
	&\times \overline{\hat{\mathcal{V}}_{\alpha_{m_2}}(p_{i_2-1},\tilde{q} +p_{i_2-1},\boldsymbol{\xi} )} 
	\prod_{i=1}^{n+1} 
	    \prod_{m=\sigma_{i-1}^2+1}^{\sigma_{i}^2-1}  \overline{\hat{\mathcal{V}}_{\alpha_{\iota^{*}}(m)}(p_{i-1}+\pi_m^1(\tilde{q}),p_{i-1}+\pi_m^2(\tilde{q}),\boldsymbol{\xi} )} 
	  \\
	 &\times \prod_{i=1}^{n}  \hat{\mathcal{V}}_{\alpha_{\iota^{*}}(\sigma_i^1)}(p_{i} + \pi_{\sigma_i^1}^1(\tilde{p}) ,p_{i-1} + \pi_{\sigma_i^1}^2(\tilde{p}),\boldsymbol{\eta} )   
	  \overline{ \hat{\mathcal{V}}_{\alpha_{\iota^{*}}(\sigma_i^2)}(p_{i}+\pi_{\sigma_i^2}^1(\tilde{q}),p_{i-1}+\pi_{\sigma_i^2}^1(\tilde{q}),\boldsymbol{\xi} ) }.   
	\end{aligned}
\end{equation*}
We note that this form is almost identical to the form of the function considered in the proof of Lemma~\ref{expansion_aver_bound_Mainterm}. The main difference is the dependence on the variables $\tilde{p}$ and $\tilde{q}$. 

We note that $\pi_m^1(\boldsymbol{\tilde{p}})\neq0$ for $ \sigma_{i_1-1}^1\leq m < \sigma_{i_1}^1$ and $\pi_m^1(\boldsymbol{\tilde{q}})\neq0$ for $ \sigma_{i_1-1}^2\leq m < \sigma_{i_1}^2$. Hence we preform the following changes of variables $\tilde{p} \mapsto \tilde{p}  + p_{i_1-1}$ and  $\tilde{q} \mapsto \tilde{q}_1 + p_{i_2-1}$. This change of variables ensures that in all quadratic phases we only have one variable. Hence we can argue as in the proof of Lemma~\ref{expansion_aver_bound_Mainterm} to ensure integrability in the $t$'s and $\tilde{t}$'s and case by case use Lemma~\ref{app_quadratic_integral_tech_est} directly for the all variables $p_1,\dots,p_n,\tilde{p},\tilde{q}$. Preforming these arguments we arrive at the bound
   \begin{equation}\label{EQ:exp_aver_b_Mainterm_rec_4}
   	\begin{aligned}
	  |\Aver{\mathcal{T}(n,\sigma^1,\sigma^2,\alpha,\tilde{\alpha},\mathrm{id})}|
	 \leq  C_d^{a_k +\tilde{a}_k+n}  \norm{\varphi}_{L^2(\R^d)}^2 \norm{\hat{V}}_{1,\infty,3d+3}^{|\alpha|+|\tilde{\alpha}|}  \frac{ (\rho t)^{2k-2-n}}{(k-1)!(k-1-n)!}.
	\end{aligned}
\end{equation}
We now turn to the case where such to indices does not exists. For this case we can without loss of generality assume that there exists $i_1$ such that $m_1<\sigma^1_{i_1}<m_2$. We will here use the same expression as above for the function $\mathcal{I}_{1,1}(k,\boldsymbol{x},\alpha,\iota,t;\hbar)\varphi$ but with a rescaling of the time variables $s_1,\dots,s_k$. Hence with the same notation (the functions $\pi^1_m$ and $\pi^2_m$) as above we consider the expression
\begin{equation}\label{EQ:exp_aver_b_Mainterm_rec_5}
	\begin{aligned}
	\MoveEqLeft \Aver{\mathcal{T}(n,\sigma^1,\sigma^2,\alpha,\tilde{\alpha},\mathrm{id}) }=  \frac{(\rho(2\pi)^d\hbar)^{2k-n-2}}{(2\pi\hbar)^d} \int_{\R_{+}^{|\alpha|+|\tilde{\alpha}|}} \int \boldsymbol{1}_{[0,\hbar^{-1}t]}( \boldsymbol{s}_{1,k}^{+}+ \boldsymbol{t}_{1,a_k}^{+})  \boldsymbol{1}_{[0,\hbar^{-1}t]}( \boldsymbol{\tilde{s}}_{1,k}^{+}+ \boldsymbol{\tilde{t}}_{1,\tilde{a}_k}^{+})
	 \\
	 &\times 
	  e^{i   s_{m_2} \frac{1}{2} p_{i_1-1}^2}   
	  e^{-i   \tilde{s}_{m_2} \frac{1}{2} p_{i_2-1}^2}   
	   \prod_{i=1}^{a_k} e^{i  t_{i} \frac{1}{2} \eta_{i}^2}  
	    \prod_{i=1}^{\tilde{a}_k} e^{-i  \tilde{t}_{i} \frac{1}{2} \xi_{i}^2}  
	    \prod_{i=1}^{n+1} \big\{ \prod_{m=\sigma_{i}^1+1}^{\sigma_{i}^1-1}  e^{i   s_{\iota^{*}(m)} \frac{1}{2} (p_{i-1}+\pi_m^1(\tilde{p}))^2}   
	 \\
	 &\times 
	  \prod_{m=\sigma_{i-1}^2+1}^{\sigma_{i}^2-1}  e^{-i \hbar^{-1}  \tilde{s}_{\iota^{*}(m)} \frac{1}{2} (p_{i-1}+\pi_m^1(\tilde{q}))^2}   \big\}
	  \prod_{i=1}^{n} e^{i  s_{\iota^{*}(\sigma_i^1)} \frac{1}{2} (p_{i}+\pi_{\sigma_i^1}^1(\tilde{p}))^2} 
	  e^{-i  \tilde{s}_{\iota^{*}(\sigma_i^2)} \frac{1}{2} (p_{i}+\pi_{\sigma_i^2}^1(\tilde{q}))^2}  
	\\
	&\times e^{i  (\boldsymbol{\tilde{s}}_{1,k}^{+}+  \boldsymbol{\tilde{t}}_{1,a_k}^{+} - \boldsymbol{s}_{1,k}^{+}-  \boldsymbol{t}_{1,a_k}^{+}) \frac{1}{2} p_0^2}   \mathcal{G}(\tilde{p},\tilde{q},\boldsymbol{p},\boldsymbol{\eta},\boldsymbol{\xi},\sigma^1,\sigma^2,\alpha,\tilde{\alpha})  |\hat{\varphi}(\tfrac{p_0}{\hbar})|^2 \, d\boldsymbol{\eta}d\boldsymbol{\xi}d\boldsymbol{p} d\tilde{p}d\tilde{q}   d\boldsymbol{t} d\boldsymbol{s}d\boldsymbol{\tilde{t}} d\boldsymbol{\tilde{s}},
	\end{aligned}
\end{equation}
where we in this case have that $\mathcal{G}$ is given by 
\begin{equation*}
	\begin{aligned}
	\MoveEqLeft \mathcal{G}(\tilde{p},\tilde{q},\boldsymbol{q},\boldsymbol{\eta},\boldsymbol{\xi},\sigma^1,\sigma^2,\alpha,\tilde{\alpha}) = \hat{\mathcal{V}}_{\alpha_{m_2}}(q_{\sigma_{i(m_2)}^1},\tilde{p}+q_{\sigma_{i(m_2)}^1},\boldsymbol{\eta} ) \overline{\hat{\mathcal{V}}_{\tilde{\alpha}_{m_2}}(q_{\sigma_{\tilde{i}(m_2)}^2},\tilde{q}+q_{\sigma_{\tilde{i}(m_2)}^2},\boldsymbol{\xi} )}
	\\
	&\times \prod_{i=1}^n  \overline{\hat{\mathcal{V}}_{\tilde{\alpha}_{\iota^{*}(\sigma^2_i)}}(q_{\sigma^2_i}+\pi_{m}^1(\tilde{q}),q_{\sigma_{i-1}^2}+\pi_{m}^2(\tilde{q}),\boldsymbol{\xi})}  \hat{\mathcal{V}}_{\alpha_{\iota^{*}(\sigma^1_i)}}(q_{\sigma_i^1}+\pi_{m}^1(\tilde{p}),q_{\sigma_{i-1}^1}+\pi_{m}^2(\tilde{p}),\boldsymbol{\eta}) 
	\\
	&\times  \prod_{i=0}^n 
	 \prod_{m=\sigma^2_i +1}^{\sigma^2_{i+1} -1} \overline{\hat{\mathcal{V}}_{\tilde{\alpha}_{\iota^{*}(m)}}(q_{\sigma^2_i}+\pi_{m}^1(\tilde{q}),q_{\sigma^2_i}+\pi_{m}^2(\tilde{q}),\boldsymbol{\xi})} \prod_{m=\sigma^1_i +1}^{\sigma^1_{i+1} -1} \hat{\mathcal{V}}_{\alpha_{\iota^{*}(m)}}(q_{\sigma_i^1}+\pi_{m}^1(\tilde{p}),q_{\sigma_i^1}+\pi_{m}^2(\tilde{p}),\boldsymbol{\eta}),   
	\end{aligned}
\end{equation*}
where $i(m_2)$ is the largest index such that $m_2\geq\sigma^1_{i(m_2)}$ and  $\tilde{i}(m_2)$ is the largest index such that $m_2\geq\sigma^2_{\tilde{i}(m_2)}$.
From this form the proof is similar to the case of $\kappa\neq \mathrm{id}$ in the proof of Lemma~\ref{expansion_aver_bound_Mainterm}. We first insert a function $f(\boldsymbol{s},\boldsymbol{\tilde{s}})$ of some of the time variables as in the other proof. Again it will have the form that the sum over some of the $s$ and $\tilde{s}$ variables is bounded by $\hbar^{-1}t$. For $\boldsymbol{s}$ it will not depend on $s_{m_1}$ and $s_{m_2-1}$. For $\tilde{s}$ it will not depend on $\tilde{s}_{m_2-1}$ and $\tilde{s}_m$ for $m\in \iota^{*}(\sigma^2)$. For the cases, where $m_2-1\in\iota^{*}(\sigma^2)$ it will also not depend on $m_2$. Hence for $\tilde{s}$ there are always $n+1$ variables it does not depend on. The function can be written as
\begin{equation*}
	f(\boldsymbol{s},\boldsymbol{\tilde{s}}) = \boldsymbol{1}_{[0,\hbar^{-1}t]}\big( \sum_{i\in\mathcal{B}} s_i \big) \boldsymbol{1}_{[0,\hbar^{-1}t]}\big( \sum_{i\in\tilde{\mathcal{B}}} s_i \big),
\end{equation*}
where $\mathcal{B}$ and $\tilde{\mathcal{B}}$ are index set as described above. As in the proof of Lemma~\ref{expansion_aver_bound_Mainterm} we introduce the two variables $s_0$ and $\tilde{s}_0$ and thereby also two delta functions. We again use the function $\zeta$ and insert the two functions  $e^{(\hbar^{-1}t- \boldsymbol{s}_{0,k}^{+}- \boldsymbol{t}_{1,a_k}^{+})\zeta(\hbar^{-1}t)}$ and $e^{(\hbar^{-1}t- \boldsymbol{\tilde{s}}_{0,k}^{+}- \boldsymbol{\tilde{t}}_{1,\tilde{a}_k}^{+})\zeta(\hbar^{-1}t)}$, write the delta functions as Fourier transforms of $1$. Then we use Lemma~\ref{LE:Exp_ran_phases} and the standard argument to obtain integrability in the $t$ and $\tilde{t}$ variables and we evaluate all integrals in $\tilde{s}$ and $s$ the the function $f(\boldsymbol{s},\boldsymbol{\tilde{s}})$ does not depend on. We first assume that we are in the case where $m_2-1\in\iota^{*}(\sigma^2)$. Preforming these operations we obtain the estimate
   \begin{equation}\label{EQ:exp_aver_b_Mainterm_rec_8}
   	\begin{aligned}
	 |\Aver{\mathcal{T} (n,\sigma^1,\sigma^2,\alpha,\tilde{\alpha},\mathrm{id})}|
	 \leq {}&    \frac{ \left(\rho\hbar(2\pi)^d\right)^{2k-2-n}e^{2\hbar^{-1}t\zeta}}{(2\pi)^2(2\pi\hbar)^{d}}  \sum_{l=1}^{4^{a_{k}}}   \sum_{|\epsilon_1|,\dots,|\epsilon_{j_1}|\leq d+1} \sum_{|\nu_1|,\dots,|\nu_{\tilde{j}_1}|\leq d+1}   \int_{\R^{a_k+\tilde{a_k}}_{+}}
	  \boldsymbol{1}_{B_l}(\boldsymbol{t},\boldsymbol{\tilde{t}}) 
	  \\
	  &\times \prod_{i\in J_1} \frac{C(\epsilon_i)}{(2\pi t_i)^\frac{d}{2} } \prod_{i\in \tilde{J}_1} \frac{C(\nu_i)   }{(2\pi \tilde{t}_i)^\frac{d}{2}}  |\mathcal{I}_n(\boldsymbol{t},\boldsymbol{\tilde{t}}) |  d\boldsymbol{t}_{1,a_k}  d\boldsymbol{\tilde{t}}_{1,\tilde{a}_k} ,
	\end{aligned}
\end{equation}
where
\begin{equation}\label{EQ:exp_aver_b_Mainterm_rec_9}
	\begin{aligned}
	 \big| \mathcal{I}_n(\boldsymbol{t},\boldsymbol{\tilde{t}})\big|
	\leq{}& \int  \prod_{i\in J_1} \frac{| x_i ^{\epsilon_i}| }{(1+|x_i |^2)^{d+1}} \,d\boldsymbol{x} \int \prod_{i\in \tilde{J}_1} \frac{| y_i ^{\nu_i}| }{(1+|y_i |^2)^{d+1}}  \,d\boldsymbol{y}    \int_{\R_{+}^{2k-n-3}} f(\boldsymbol{s},\boldsymbol{\tilde{s}})  d\boldsymbol{s}   d\boldsymbol{\tilde{s}}  \int    \frac{ |\hat{\varphi}(\tfrac{q_0}{\hbar}) |^2}{|\frac{1}{2} q_0^2-\tilde{\nu}-i\zeta|}
	\\
	&\times    \frac{1}{|\frac{1}{2} q_0^2+\nu +i\zeta|} \frac{1}{| \frac{1}{2} (\tilde{p} +q_{l_1})^2 +\nu+i\zeta|}   \frac{1}{| \frac{1}{2} (\tilde{p} +q_{l_2})^2 +\nu+i\zeta|}  \frac{1}{ |\frac{1}{2} q_{\sigma^2_{m^*}}^2-\tilde{\nu}-i\zeta|} 
	  \\
	  &\times    \prod_{m=1}^n \frac{1}{ |\frac{1}{2} (q_{\sigma^2_m}+\pi_{\sigma^2_m}(\tilde{q}))^2-\tilde{\nu}-i\zeta|}
	 \Big| \partial_{\boldsymbol{\xi}_{\tilde{J}_1}}^{\boldsymbol{\nu}} \partial_{\boldsymbol{\eta}_{J_1}}^{\boldsymbol{\epsilon}}  \mathcal{G}(\tilde{p},\tilde{q},\boldsymbol{q},\boldsymbol{\eta},\boldsymbol{\xi},\sigma^1,\sigma^2,\alpha,\tilde{\alpha}) \Big| 
	  \,d\boldsymbol{\eta}d\boldsymbol{\xi}d\boldsymbol{q} d\tilde{p}d\tilde{q} d\nu d\tilde{\nu}.
	\end{aligned}
\end{equation}
where the indices $l_1<l_2$ and they depend on $\sigma^1$. That the indices $l_1$ and $l_2$ are different is due to the assumption that there exsist $i_1$ such that $m_1<\sigma^1_{i_1}<m_2$. The index $m^{*}$ is chosen such that $\sigma_{m^{*}}=m_2-1$. Again we need to estimate these integrals. Firstly we have that  
\begin{equation}\label{EQ:exp_aver_b_Mainterm_rec_10}
	\begin{aligned}
	     \int_{\R_{+}^{2k-n-3}} f(\boldsymbol{s},\boldsymbol{\tilde{s}})  d\boldsymbol{s}  d\boldsymbol{\tilde{s}} \leq \frac{(\hbar^{-1} t)^{2k-n-3} }{(k-2)!(k-n-1)!} . 
	\end{aligned}
\end{equation}
In order to estimate the integrals in $\tilde{p}$, $\tilde{q}$, $q$, $\eta$ and $\xi$ we insert the fraction
  \begin{equation*}
  	\frac{\langle q_0\rangle^{4d+4}}{\langle q_0\rangle^{4d+4}} \frac{\langle\tilde{q} \rangle^{5d+5}\langle\tilde{p} \rangle^{2d+2}}{\langle\tilde{q} \rangle^{5d+5}\langle \tilde{p} \rangle^{2d+2}} \prod_{i=1}^n \frac{\langle q_{\sigma_i^2}+\pi_{\sigma_i^2}^1(\tilde{q})-q_{\sigma_{i-1}^2} - \pi_{\sigma_i^2}^2(\tilde{q})\rangle^{5d+5}}{\langle q_{\sigma_i^2}+\pi_{\sigma_i^2}^1(\tilde{q})-q_{\sigma_{i-1}^2} - \pi_{\sigma_i^2}^2(\tilde{q})\rangle^{5d+5}}. 
  \end{equation*}
Firstly we have that
  \begin{equation}\label{EQ:exp_aver_b_Mainterm_rec_11}
	\begin{aligned}
	 \sup_{\tilde{p},\tilde{q},\boldsymbol{q}}   \int  &\langle\tilde{q} \rangle^{5d+5}\langle\tilde{p} \rangle^{2d+2}\prod_{i=1}^n \langle q_{\sigma_i^2}+\pi_{\sigma_i}^1(\tilde{q})-q_{\sigma_{i-1}^2} - \pi_{\sigma_i}^2(\tilde{q})\rangle^{5d+5} \Big| \partial_{\boldsymbol{\xi}_{\tilde{J}_1}}^{\boldsymbol{\nu}} \partial_{\boldsymbol{\eta}_{J_1}}^{\boldsymbol{\epsilon}}   \mathcal{G}(\tilde{p},\tilde{q},\boldsymbol{q},\boldsymbol{\eta},\boldsymbol{\xi},\sigma^1,\sigma^2,\alpha,\tilde{\alpha})  \Big|  \,d\boldsymbol{\eta}d\boldsymbol{\xi}	
	 \\
	 &  
	 \leq  C^{a_k+\tilde{a}_k}  \left(\norm{\hat{V}}_{1,\infty,5d+5}\right)^{|\alpha|+|\tilde{\alpha}|} .
	\end{aligned}
\end{equation}
Using that
    \begin{equation}\label{EQ:exp_aver_b_Mainterm_rec_11.5}
  	\frac{\langle\tilde{p} +q_{l_2}\rangle^{d+1}\langle \tilde{p} +q_{\sigma^2_{m^*}}\rangle^{d+1} \langle q_{l_2}\rangle^{d+1}\langle q_{\sigma^2_{m^*}}\rangle^{d+1} }{\langle\tilde{p} \rangle^{2d+2}\langle q_0\rangle^{4d+4}\langle \tilde{q}\rangle^{4d+4}}  \prod_{i=1}^n \frac{1}{\langle q_{\sigma_i^2}+\pi_{\sigma_i^2}^1(\tilde{q})-q_{\sigma_{i-1}^2} - \pi_{\sigma_i^2}^2(\tilde{q})\rangle^{4d+4}} \leq C,
  \end{equation}
what remain to estimate is the integral
\begin{equation}\label{EQ:exp_aver_b_Mainterm_rec_12}
	\begin{aligned}
	\MoveEqLeft   \int    \frac{\langle\nu\rangle^{-1}}{|\frac{1}{2} q_0^2-\tilde{\nu}-i\zeta|}\frac{\langle\tilde{\nu}\rangle^{-1}}{|\frac{1}{2} q_0^2+\nu +i\zeta|} \frac{ \langle\nu\rangle \langle\tilde{p} +q_{l_2}\rangle^{-d-1}}{| \frac{1}{2} (\tilde{p} +q_{l_1})^2 +\nu+i\zeta|} \frac{\langle q_{l_2}\rangle^{-d-1}\langle\tilde{q} \rangle^{-d-1}}{ |\frac{1}{2} q_{\sigma^2_{m^*}}^2-\tilde{\nu}-i\zeta|} 
	  \\
	  &\times   \frac{\langle\tilde{\nu}\rangle \langle q_{\sigma^2_{m^*}}\rangle^{-d-1} \langle \tilde{p} +q_{l_2}\rangle^{-d-1}}{| \frac{1}{2} (\tilde{p} +q_{l_2})^2 +\nu+i\zeta|}    \prod_{m=1}^n \frac{\langle q_{\sigma_i^2}+\pi_{\sigma_i}^1(\tilde{q})-q_{\sigma_{i-1}^2} - \pi_{\sigma_i}^2(\tilde{q})\rangle^{-d-1}}{ |\frac{1}{2} (q_{\sigma^2_m}+\pi_{\sigma^2_m}(\tilde{q}))^2-\tilde{\nu}-i\zeta|}
	  \,d\boldsymbol{q} d\tilde{p}d\tilde{q} d\nu d\tilde{\nu}.
	\end{aligned}
\end{equation}
To estimate this integral we first use Lemma~\ref{LE:est_res_combined} to do the integral in $\tilde{p}$. This gives us two powers of $|\log(\zeta)|$ and the factor $|q_{l_2}-q_{l_1}|^{-1}$. We then use the bound $\leq1$ and do the $q$ integrals starting from $q_n$ and skipping the integral in $q_{\sigma^2_{m^*}}$ using Lemma~\ref{LE:resolvent_int_est}. We then to the integral in $\tilde{q}$ $q_{\sigma^2_{m^*}}$ , $\nu$ and $\tilde{\nu}$ in that order using Lemma~\ref{LE:resolvent_int_est}. This gives us the estimate
\begin{equation}\label{EQ:exp_aver_b_Mainterm_rec_13}
	\begin{aligned}
	\MoveEqLeft   \int    \frac{\langle\nu\rangle^{-1}}{|\frac{1}{2} q_0^2-\tilde{\nu}-i\zeta|}\frac{\langle\tilde{\nu}\rangle^{-1}}{|\frac{1}{2} q_0^2+\nu +i\zeta|} \frac{ \langle\nu\rangle \langle\tilde{p} +q_{l_2}\rangle^{-d-1}}{| \frac{1}{2} (\tilde{p} +q_{l_1})^2 +\nu+i\zeta|} \frac{\langle q_{l_2}\rangle^{-d-1}\langle\tilde{q} \rangle^{-d-1}}{ |\frac{1}{2} q_{\sigma^2_{m^*}}^2-\tilde{\nu}-i\zeta|} 
	  \\
	  &\times   \frac{\langle\tilde{\nu}\rangle \langle q_{\sigma^2_{m^*}}\rangle^{-d-1} \langle \tilde{p} +q_{l_2}\rangle^{-d-1}}{| \frac{1}{2} (\tilde{p} +q_{l_2})^2 +\nu+i\zeta|}    \prod_{m=1}^n \frac{\langle q_{\sigma_i^2}+\pi_{\sigma_i}^1(\tilde{q})-q_{\sigma_{i-1}^2} - \pi_{\sigma_i}^2(\tilde{q})\rangle^{-d-1}}{ |\frac{1}{2} (q_{\sigma^2_m}+\pi_{\sigma^2_m}(\tilde{q}))^2-\tilde{\nu}-i\zeta|}
	  \,d\boldsymbol{q} d\tilde{p}d\tilde{q} d\nu d\tilde{\nu}
	  \leq C |\log(\zeta)|^{n+4}.
	\end{aligned}
\end{equation}
Combining the estimates in \cref{EQ:exp_aver_b_Mainterm_rec_8,EQ:exp_aver_b_Mainterm_rec_9,EQ:exp_aver_b_Mainterm_rec_10,EQ:exp_aver_b_Mainterm_rec_11,EQ:exp_aver_b_Mainterm_rec_11.5,EQ:exp_aver_b_Mainterm_rec_12,EQ:exp_aver_b_Mainterm_rec_13} we arrive at
   \begin{equation}\label{EQ:exp_aver_b_Mainterm_rec_14}
   	\begin{aligned}
	| \Aver{\mathcal{T}(n,\sigma^1,\sigma^2,\alpha,\tilde{\alpha},\mathrm{id})}|
	 \leq {}&   C_d^{a_k+\tilde{a}_k}  \left(\norm{\hat{V}}_{1,\infty,5d+5}\right)^{|\alpha|+|\tilde{\alpha}|} 
	  \hbar |\log(\zeta)|^{n+4} \frac{\rho(\rho t)^{2k-n-3} }{(k-2)!(k-n-1)!} \norm{\varphi}_{\mathcal{H}^{2d+2}_\hbar(\R^d)}^2.
	\end{aligned}
\end{equation}
The case where $m_2-1\notin\iota^{*}(\sigma^2)$ is estimated with an analogous argument and the same estimate as in \eqref{EQ:exp_aver_b_Mainterm_rec_14} is obtained.

We now turn to the case where $\kappa\neq\mathrm{id}$. As in the proof of Lemma~\ref{expansion_aver_bound_Mainterm} we will by $i^{*}$ denote the first number such that $\kappa(i)\neq i$.  For this case we also use the expression for $\mathcal{T}(n,\sigma^1,\sigma^2,\alpha,\tilde{\alpha},\kappa)$ given in \eqref{EQ:exp_aver_b_Mainterm_rec_5}, where the difference is that in the delta function we now have $\tilde{x}_{\sigma_{\kappa(i)}^2}$ instead of $\tilde{x}_{\sigma_{i}^2}$. Again we start by introducing the a function $f(\boldsymbol{s},\boldsymbol{\tilde{s}})$ of some of the time variables as above. Again it will have the form that the sum over some of the $s$ and $\tilde{s}$ variables is bounded by $\hbar^{-1}t$. For $\boldsymbol{s}$ it will not depend on $s_{\iota^{*}(i^{*})}$ and $s_{m_2-1}$. If $\iota^{*}(i^{*})=m_2-1$ it will not depend on $s_{m_2}$ and $s_{m_2-1}$. For $\tilde{s}$ it will not depend on $\tilde{s}_{m_2-1}$ and $\tilde{s}_m$ for $m\in \iota^{*}(\sigma^2)$. For the cases, where $m_2-1\in\iota^{*}(\sigma^2)$ it will also not depend on $s_{m_2}$. From here the argument is analogous to the argument we have just done combined with the arguments in the proof of Lemma~\ref{expansion_aver_bound_Mainterm} for the case, where $\kappa\neq\mathrm{id}$. Redoing the argument we get the bound 
   \begin{equation}\label{EQ:exp_aver_b_Mainterm_rec_15}
   	\begin{aligned}
	| \Aver{\mathcal{T}(n,\sigma^1,\sigma^2,\alpha,\tilde{\alpha},\kappa)}|
	 \leq   C_d^{a_k+\tilde{a}_k}  \norm{\hat{V}}_{1,\infty,5d+5}^{|\alpha|+|\tilde{\alpha}|} 
	  +\hbar |\log(\zeta)|^{n+4} \frac{\rho(\rho t)^{2k-n-3}}{(k-2)!(k-n-1)!} \norm{\varphi}_{\mathcal{H}^{2d+2}_\hbar(\R^d)}^2.
	\end{aligned}
\end{equation}
Combining \Cref{EQ:exp_aver_b_Mainterm_rec_1,EQ:exp_aver_b_Mainterm_rec_4,EQ:exp_aver_b_Mainterm_rec_14,EQ:exp_aver_b_Mainterm_rec_15}, using our assumptions on the coupling constant and the potential and argue as in the previous proofs we obtain the estimate
\begin{equation*}
	\begin{aligned}
	\MoveEqLeft \mathbb{E}\Big[\big\lVert \sum_{\boldsymbol{x}\in\mathcal{X}_{\neq}^{k-1}}\mathcal{I}_{1,1}(k,\boldsymbol{x},\iota,t;\hbar) \varphi\big\rVert_{L^2(\R^d)}^2 \Big] \leq \frac{ C^{k}}{(k-1)!}  \norm{\varphi}_{L^{2}(\R^d)}^2 +  \hbar k |\log(\zeta)|^{k+4} C^k  \norm{\varphi}_{\mathcal{H}^{2d+2}_\hbar(\R^d)}^2,
	\end{aligned}
\end{equation*}
where the constant $C$ depends on $\rho$, $t$, the potential and the coupling constant. This is the desired estimate and this concludes the proof.
\end{proof}
\begin{lemma}\label{expansion_aver_bound_Mainterm_rec_2}
  Assume we are in the setting of Definition~\ref{def_remainder_k_0}. Let $\varphi \in \mathcal{H}^{3d+3}_\hbar(\R^d)$. Then for any
  $\tau\in\{2,\dots,\tau_0\}$
    \begin{equation*}
	\begin{aligned}
	\MoveEqLeft \mathbb{E}\Big[ \big\lVert \sum_{k_1=1}^{k_0} \sum_{k_2=k_0-k_1+1}^{k_0} \sum_{\iota\in\mathcal{Q}_{k_1+k_2,1,1}} 
	  \sum_{(\boldsymbol{x}_1,\boldsymbol{x}_2)\in \mathcal{X}_{\neq}^{k_1+k_2-1}}  \mathcal{I}_{1,1}(k_1,\boldsymbol{x}_1,\iota,\tfrac{t}{\tau_0};\hbar)  \mathcal{I}_{1,1}(k_2,\boldsymbol{x}_2,\iota,\tfrac{\tau-1}{\tau_0}t;\hbar) \varphi\big\rVert_{L^2(\R^d)}^2 \Big]
	  \\
	  &
	  \leq  \frac{k_0^9 C^{k_0}}{\tau_0 (k_0-1)!} \norm{\varphi}_{L^{2}(\R^d)}^2 +   \hbar  \frac{k_0^{12}}{\tau_0} C^{k_0} |\log(\tfrac{\hbar}{t})|^{k_0+8} \norm{\varphi}_{\mathcal{H}^{3d+3}_\hbar(\R^d)}^2 ,
	 \end{aligned}
\end{equation*}
where the constant $C$ depends on $\rho$, $t$, the single site potential $V$ and the coupling constant $\lambda$. In particular we have that the function is in  $L^2(\R^d)$ $\Pro$-almost surely. 
  \end{lemma}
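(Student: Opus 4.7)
The plan is to combine the argument used in Lemma~\ref{expansion_aver_bound_Mainterm_2} (which handled the composition of two $\mathcal{I}_{0,0}$ operators via the time-split decomposition) with the argument of Lemma~\ref{expansion_aver_bound_Mainterm_rec} (which handled a single recollision by introducing the extra momentum variable $\tilde{p}$ through the substitutions $p_{m_2-1}\mapsto p_{m_2-1}-p_{m_2}$ and $p_m\mapsto p_m-p_{m_2-1}$ for $m\in\{m_1,\dots,m_2-2\}$). First, using Cauchy--Schwarz on the finite sums over $k_1,k_2\in\{1,\dots,k_0\}$ and on the sum over $\iota\in\mathcal{Q}_{k_1+k_2,1,1}$, and invoking Remark~\ref{Remark_grow_of_Q} which gives $\#\mathcal{Q}_{k,1,1}\leq k^2/2$, I extract an overall combinatorial prefactor of order $k_0^2\cdot k_0^2\cdot k_0\cdot k_0^2=k_0^7$ (two powers from the $(k_1,k_2)$-sums, two from $\#\mathcal{Q}_{k_1+k_2,1,1}$, one from the diagonal sum $k=k_1+k_2$, and two more from enumerating the subcases of $\iota$ relative to the pairings). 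After fixing $k=k_1+k_2$ and $\iota$, I expand the squared $L^2$-norm as in \eqref{EQ:exp_aver_b_er_mainterm_1} and \eqref{EQ:Aver_full_ex_same_0,0_dob2}, obtaining a sum over the pair $(\sigma^1,\sigma^2)\in\tilde{\mathcal{A}}(k_1,n_1,k_2,n_2)^2$ and $\kappa\in\mathcal{S}_n$ where the indices must respect the time-split into $[0,t/\tau_0]$ and $[t/\tau_0,\tau t/\tau_0]$.

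The core of the proof proceeds by treating the two classical subcases $\kappa=\mathrm{id}$ and $\kappa\neq\mathrm{id}$ separately, exactly as in Lemma~\ref{expansion_aver_bound_Mainterm_rec}, but now each $\Theta$-product is split between the two time intervals as in Lemma~\ref{expansion_aver_bound_Mainterm_2}. For $\kappa=\mathrm{id}$ one further splits according to whether the interval $[m_1,m_2]$ indexing the recollision in $\iota$ lies inside a single $(\sigma^1,\sigma^2)$-block (condition \eqref{EQ:exp_aver_b_Mainterm_rec_1.5}); if so, after the $\tilde{p}$-change of variables the quadratic phases contain only one free momentum per block, Lemma~\ref{app_quadratic_integral_tech_est} applies directly to all of $p_1,\dots,p_n,\tilde{p},\tilde{q}$, and we obtain the ``smooth'' contribution of order $(k_1-1)!^{-1}(k_2-1)!^{-1}\tau_0^{n_2-2k_2}$ times $\|\varphi\|_{L^2}^2$. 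In the remaining ``crossing'' subcase we follow the scheme of the $\kappa\neq\mathrm{id}$ case: insert a cutoff function $f(\boldsymbol{s},\tilde{\boldsymbol{s}})$ that omits precisely those $s$-variables on which the quadratic phase depends trivially after integration, adjoin the auxiliary variables $s_0,\tilde{s}_0,\tilde{s}_{k+1}$ (and $s_{k+1}$ when needed), introduce the two regularisation parameters $\zeta_0=\zeta(t/(\tau_0\hbar))$, $\zeta_1=\zeta(t/(\tau_1\hbar))$ with $\tau_1^{-1}=(\tau-1)/\tau_0$ as in Lemma~\ref{expansion_aver_bound_Mainterm_2}, and write the two time-constraining delta-functions as Fourier transforms of $1$ in dual variables $\nu,\tilde{\nu}_0,\tilde{\nu}_1$.

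After applying Lemma~\ref{LE:Exp_ran_phases} and the standard stationary-phase splitting of Lemma~\ref{app_quadratic_integral_tech_est} in the internal $(\boldsymbol{t},\tilde{\boldsymbol{t}})$-variables, all $p$-integrals and all $s$-integrals not appearing in $f$ can be done explicitly. The $\tilde{\boldsymbol{s}}$-integrations supply the factor $(\hbar^{-1}t)^{2k-n-3}\tau_0^{n_1+2-2k_1}/[(k_1-1)!(k_1-n_1)!(k_2-2)!(k_2-1-n_2)!]$ (the exponent $2k-n-3$ reflecting the two eliminated $\tilde{s}$-variables from the recollision plus the extra $n+1$ variables eliminated by the $\sigma^2$-constraints). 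For the remaining momentum integrals I insert the weight
\[
\tfrac{\langle q_0\rangle^{5d+5}}{\langle q_0\rangle^{5d+5}}\tfrac{\langle\tilde{q}\rangle^{5d+5}\langle\tilde{p}\rangle^{2d+2}}{\langle\tilde{q}\rangle^{5d+5}\langle\tilde{p}\rangle^{2d+2}}\prod_{i=1}^{n}\tfrac{\langle q_{\sigma_i^2}+\pi^1_{\sigma_i^2}(\tilde{q})-q_{\sigma_{i-1}^2}-\pi^2_{\sigma_i^2}(\tilde{q})\rangle^{6d+6}}{\langle q_{\sigma_i^2}+\pi^1_{\sigma_i^2}(\tilde{q})-q_{\sigma_{i-1}^2}-\pi^2_{\sigma_i^2}(\tilde{q})\rangle^{6d+6}},
\]
analogous to the one used in \eqref{EQ:exp_aver_b_Mainterm_rec_11}--\eqref{EQ:exp_aver_b_Mainterm_rec_11.5} but with one extra power of each bracket to absorb the additional resolvent from $\tilde{\nu}_0$ (as in the proof of Lemma~\ref{expansion_aver_bound_Mainterm_2}). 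Then Lemma~\ref{LE:est_res_combined} applied to the $\tilde{p}$-integral produces a factor $|q_{l_2}-q_{l_1}|^{-1}$ times $|\log(\zeta_1)|^2$, and the remaining $q,\tilde{q},\nu,\tilde{\nu}_0,\tilde{\nu}_1$-integrals are estimated by repeated applications of Lemma~\ref{LE:resolvent_int_est}, yielding the bound
\[
C_d^{a_k+\tilde{a}_k}\|\hat{V}\|_{1,\infty,5d+5}^{|\alpha|+|\tilde{\alpha}|}\tau_0^{n_1+2-2k_1}\tfrac{\hbar\rho(\rho t)^{2k-n-3}|\log(\zeta_1)|^{n_2+4}|\log(\zeta_0)|^{n_1+1}}{k_1!(k_1+1-n_1)!(k_2-2)!(k_2-1-n_2)!}\|\varphi\|_{\mathcal{H}^{3d+3}_\hbar}^2.
\]

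The $\kappa\neq\mathrm{id}$ subcase is treated identically, introducing the index $i^{*}$ and using $\iota^{*}(i^{*})$ in place of one of the $\sigma$-indices when determining which $s$-variables survive in $f$. Finally, summing over $n=n_1+n_2$, $(\sigma^1,\sigma^2)$ (which gives $2^{2k}$), $\kappa$ (giving $n!$), $\alpha,\tilde{\alpha}$ (absolute convergence under Assumption~\eqref{assump_1}), and using $n!/(k_1-2)!(k_2-2)!\leq k^4$, evaluating the geometric series in $n$ and bounding the $k$-sum by $k_0$ times its maximum value, I arrive at the stated bound with combinatorial weight $k_0^9$ on the $L^2$-term and $k_0^{12}$ on the $\mathcal{H}^{3d+3}_\hbar$-term. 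The main obstacle is keeping the bookkeeping consistent across the many subcases generated by the interaction of three independent structures: the recollision location $(m_1,m_2)$ encoded in $\iota$, the coincidence pattern $(\sigma^1,\sigma^2,\kappa)$, and the time-split boundary $k_2$. Careful case analysis based on whether $m_1,m_2,\iota^*(i^*)$ lie before or after $k_2$ and inside or outside $\iota^{*}(\sigma^2)$ is required to ensure that at every step the number of $s$-integrals outside $f$ is uniformly controlled, so that the bounds \eqref{EQ:exp_aver_b_er_mainterm_4} and \eqref{EQ:exp_aver_b_Mainterm_rec_10} admit a uniform analogue in the present setting.
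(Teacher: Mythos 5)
Your proposal follows essentially the same route as the paper: a Cauchy--Schwarz/counting reduction using $\#\mathcal{Q}_{k,1,1}\leq Ck^2$, expansion of the squared norm over pairings $(\sigma^1,\sigma^2,\kappa)$ respecting the time split, and then, for each case $\kappa=\mathrm{id}$ (subdivided according to the position of $(m_1,m_2)$ relative to the $\sigma$-blocks) and $\kappa\neq\mathrm{id}$, a combination of the Fourier/stationary-phase argument of Lemma~\ref{expansion_aver_bound_Mainterm_rec} with the two-parameter $(\zeta_0,\zeta_1)$ resolvent argument of Lemma~\ref{expansion_aver_bound_Mainterm_2}, which is exactly what the paper's (sketched) proof does. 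The minor discrepancies in intermediate bookkeeping (the power of $k_0$ in the combinatorial prefactor, the precise log exponents and factorials) are within the slack already present in the paper's own displays and are absorbed into the stated final bound.
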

\begin{proof}
We start by observing that
  \begin{equation}\label{EQ:Aver_full_ex_1,1_dob1_1}
	\begin{aligned}
	\MoveEqLeft  \big\lVert\sum_{k_1=1}^{k_0} \sum_{k_2=k_0-k_1+1}^{k_0} \sum_{\iota\in\mathcal{Q}_{k_1+k_2,1,1}} 
	  \sum_{(\boldsymbol{x}_1,\boldsymbol{x}_2)\in \mathcal{X}_{\neq}^{k_1+k_2-1}}  \mathcal{I}_{1,1}(k_1,\boldsymbol{x}_1,\iota,\tfrac{t}{\tau_0};\hbar)  \mathcal{I}_{1,1}(k_2,\boldsymbol{x}_2,\iota,\tfrac{\tau-1}{\tau_0}t;\hbar) \varphi\big\rVert_{L^2(\R^d)}^2 
	  \\
	   &
	   \leq Ck_0^6 \sum_{k=k_0+1}^{2k_0}  \sum_{k_1+k_2=k} 
	   \sum_{\iota\in\mathcal{Q}_{k,1,1}} 
	   \big\lVert  \sum_{(\boldsymbol{x}_1,\boldsymbol{x}_2)\in \mathcal{X}_{\neq}^{k-1}}  \mathcal{I}_{1,1}(k_1,\boldsymbol{x}_1,\iota,\tfrac{t}{\tau_0};\hbar)  \mathcal{I}_{1,1}(k_2,\boldsymbol{x}_2,\iota,\tfrac{\tau-1}{\tau_0}t;\hbar) \varphi\big\rVert_{L^2(\R^d)}^2,
	 \end{aligned}
\end{equation}
where we have used that the number of elements in $\mathcal{Q}_{k,1,1}$ is bounded by a constant times $k^2$ for all $k$. This follows from Remark~\ref{Remark_grow_of_Q}. As in the proof of Lemma~\ref{expansion_aver_bound_Mainterm_2} we fix a $k=k_1+k_2$  and expand the $L^2$-norm and get that
\begin{equation*}
	\begin{aligned}
	\MoveEqLeft  \big\lVert \sum_{(\boldsymbol{x}_1,\boldsymbol{x}_2)\in \mathcal{X}_{\neq}^{k-1}}  \mathcal{I}_{1,1}(k_1,\boldsymbol{x}_1,\iota,\tfrac{t}{\tau_0};\hbar)  \mathcal{I}_{1,1}(k_2,\boldsymbol{x}_2,\iota,\tfrac{\tau-1}{\tau_0}t;\hbar) \varphi \big\rVert_{L^2(\R^d)}^2 = \sum_{\alpha,\tilde{\alpha}\in\N^k} (i\lambda)^\alpha(-i\lambda)^{\tilde{\alpha}}
	\\
	&\times \sum_{n=0}^{k-1} \sum_{\substack{n_1+n_2=n \\ n_1\leq k_1,n_2\leq k_2}} \sum_{\sigma^1\in\tilde{\mathcal{A}}(k_1,n_1,k_2,n_2)} \sum_{\substack{\tilde{n}_1+\tilde{n}_2=n \\ \tilde{n}_1\leq k_1,\tilde{n}_2\leq k_2}} \sum_{\sigma^2\in\tilde{\mathcal{A}}(k_1,\tilde{n}_1,k_2,\tilde{n}_2)}  \sum_{\kappa\in\mathcal{S}_n} \mathcal{T}(n,\sigma^1,\sigma^2,\alpha,\tilde{\alpha},\kappa),
	\end{aligned}
\end{equation*}
the sets $\tilde{\mathcal{A}}(k_1,n_1,k_2,n_2)$ is defined in \eqref{orderset_def_2}.
The numbers $\mathcal{T}(n,\sigma^1,\sigma^2,\alpha,\tilde{\alpha},\kappa)$ are given by  
\begin{equation*}
	\begin{aligned}
	\MoveEqLeft  \mathcal{T}(n,\sigma^1,\sigma^2,\alpha,\tilde{\alpha},\kappa)
	= \sum_{(\boldsymbol{x},\boldsymbol{\tilde{x}})\in \mathcal{X}_{\neq}^{2k}}  \prod_{i=1}^n \rho^{-1}\delta(x_{\sigma_i^1}- \tilde{x}_{\sigma_{\kappa(i)^2}}) \int_{\R^d} \mathcal{I}(k_1,\boldsymbol{x}_1,\alpha,\iota,\tfrac{t}{\tau_0};\hbar) 
	\\
	\times&   \mathcal{I}(k_2,\boldsymbol{x}_2,\alpha,\iota,\tfrac{\tau-1}{\tau_0}t;\hbar)\varphi(x)\overline{ \mathcal{I}(k_1,\boldsymbol{\tilde{x}}_1,\tilde{\alpha},\iota,\tfrac{t}{\tau_0};\hbar)  \mathcal{I}(k_2,\boldsymbol{\tilde{x}}_2,\tilde{\alpha},\iota,\tfrac{\tau-1}{\tau_0}t;\hbar)\varphi (x)}  \,dx,
	\end{aligned}
\end{equation*}
where again $\mathcal{I}(k_1,\boldsymbol{x}_1,\alpha,\iota,\tfrac{t}{\tau_0};\hbar)$ are defined as $ \mathcal{I}_{1,1}(k_1,\boldsymbol{x}_1,\iota,\tfrac{t}{\tau_0};\hbar)$ for $\alpha$ fixed. Again we distinguish the two cases $\kappa=\mathrm{id}$ and $\kappa\neq\mathrm{id}$. In the first case we will spilt into a further number of cases depending on the relations between $\sigma^1$, $\sigma^2$ and the map $\iota$ as in the proof of Lemma~\ref{expansion_aver_bound_Mainterm_rec}. To estimate each of these cases we make an argument that is a combination of the argument made in the proofs of Lemma~\ref{expansion_aver_bound_Mainterm_2} and Lemma~\ref{expansion_aver_bound_Mainterm_rec}. Making this argument we obtain the estimate
\begin{equation}\label{EQ:Aver_full_ex_1,1_dob1_3}
	\begin{aligned}
	\MoveEqLeft \mathbb{E}\Big[\big\lVert  \sum_{(\boldsymbol{x}_1,\boldsymbol{x}_2)\in \mathcal{X}_{\neq}^{k-1}}  \mathcal{I}_{1,1}(k_1,\boldsymbol{x}_1,\iota,\tfrac{t}{\tau_0};\hbar)  \mathcal{I}_{1,1}(k_2,\boldsymbol{x}_2,\iota,\tfrac{\tau-1}{\tau_0}t;\hbar) \varphi\big\rVert_{L^2(\R^d)}^2 \Big]
	\\
	&\leq C^{2k}\sum_{n=0}^{k-1} \Big[  \frac{ (\rho t)^{2k-2-n}  \norm{\varphi}_{L^{2}(\R^d)}^2 }{(k_1-1)!(k_2-1)!(k-1-n)!} +  \frac{ \hbar |\log(\tfrac{\hbar\tau_0}{t})|^{n+8} \rho(\rho t)^{2k-n-3}n!   \norm{\varphi}_{\mathcal{H}^{3d+3}_\hbar(\R^d)}^2 }{\tau_0^{k_1} (k_1-2)! (k_2-2)!(k-1-n)!}     \Big].
	\end{aligned}
\end{equation}
From combining \eqref{EQ:Aver_full_ex_1,1_dob1_1}, \eqref{EQ:Aver_full_ex_1,1_dob1_3} and arguing as in the proof of Lemma~\ref{expansion_aver_bound_Mainterm_2} we obtain the desired bound. 
This concludes the proof.
\end{proof}
\begin{lemma}\label{expansion_aver_bound_Mainterm_rec_3}
  Assume we are in the setting of Definition~\ref{def_remainder_k_0}. Let $\varphi \in \mathcal{H}^{3d+3}_\hbar(\R^d)$. Then for any
  $\tau\in\{2,\dots,\tau_0\}$
    \begin{equation*}
	\begin{aligned}
	\MoveEqLeft \mathbb{E}\Big[ \big\lVert \sum_{k_1=1}^{k_0} \sum_{k_2=k_0-k_1+1}^{k_0} \sum_{\iota\in\mathcal{Q}_{k_1+k_2,1,1}} 
	  \\
	  &\times \sum_{(\boldsymbol{x}_1,\boldsymbol{x}_2)\in \mathcal{X}_{\neq}^{k_1+k_2-1}}   \mathcal{I}_{1,1}(k_1+1,(x_{2,k_2},\boldsymbol{x}_1),\iota,\tfrac{t}{\tau_0};\hbar)  \mathcal{I}_{1,1}(k_2,\boldsymbol{x}_2,\iota,\tfrac{\tau-1}{\tau_0}t;\hbar)  \varphi\big\rVert_{L^2(\R^d)}^2 \Big]
	  \\
	   \leq{}& \frac{k_0^9 C^{k_0}}{\tau_0 (k_0-1)!} \norm{\varphi}_{L^{2}(\R^d)}^2 +   \hbar  \frac{k_0^{12}}{\tau_0} C^{k_0} |\log(\tfrac{\hbar}{t})|^{k_0+8} \norm{\varphi}_{\mathcal{H}^{3d+3}_\hbar(\R^d)}^2, 
	 \end{aligned}
\end{equation*}
where the constant $C$ depends on the single site potential $V$ and the coupling constant $\lambda$. In particular we have that the function is in  $L^2(\R^d)$ $\Pro$-almost surely. 
  \end{lemma}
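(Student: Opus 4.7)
The proof will follow the template established by the previous three lemmas in this section, combining the three ingredients that appear individually in Lemma~\ref{expansion_aver_bound_Mainterm_3}, Lemma~\ref{expansion_aver_bound_Mainterm_rec} and Lemma~\ref{expansion_aver_bound_Mainterm_rec_2}. First, I will pull the sums over $k_1,k_2,\iota$ out of the $L^2$-norm by Cauchy-Schwarz, picking up at most a polynomial factor $k_0^{3}\cdot\#\mathcal{Q}_{k_1+k_2,1,1}\lesssim k_0^{5}$ from Remark~\ref{Remark_grow_of_Q}, and then fix $k=k_1+k_2\in\{k_0+1,\ldots,2k_0\}$ together with a map $\iota\in\mathcal{Q}_{k,1,1}$. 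Expanding the square gives a double sum over $\mathcal{X}_{\neq}^{k-1}$ which I partition into pairings according to $n$, $(\sigma^1,\sigma^2)\in\tilde{\mathcal{A}}(k_1+1,n_1,k_2-1,n_2)^{2}$ and $\kappa\in\mathcal{S}_n$, as in \eqref{EQ:Aver_full_ex_same_0,0_dob2}.

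The crucial algebraic step, already carried out at the beginning of the proof of Lemma~\ref{expansion_aver_bound_Mainterm_3}, is to absorb the ``seam'' into one operator: after shifting the time variables of $\mathcal{I}_{1,1}(k_2,\boldsymbol{x}_2,\iota,\frac{\tau-1}{\tau_0}t;\hbar)$ by $\frac{t}{\tau_0}$, the two consecutive factors $\Theta_{\alpha_{k_2+1}}(\,\cdot\,,\,\cdot\,,x_{2,k_2})$ and $\Theta_{\alpha_{k_2}}(\,\cdot\,,\,\cdot\,,x_{2,k_2})$ at the common scatterer $x_{2,k_2}$ merge via the substitution $s_{k_2}\mapsto s_{k_2}-s_{k_2+1}$ into a single operator $\tilde{\Theta}_{\alpha_{k_2}+\alpha_{k_2+1}}$ whose kernel contains an extra internal momentum $p_{k_2}$; correspondingly $s_{k_2}$ is treated like one of the internal $t$-variables of Observation~\ref{obs_form_I_op_kernel}. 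With this reduction we are reduced to an expression of the same shape as in Lemma~\ref{expansion_aver_bound_Mainterm_rec}, but with a two-piece time support for the ``outer'' $s$-variables, exactly the situation estimated in the proof of Lemma~\ref{expansion_aver_bound_Mainterm_rec_2}.

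For $\kappa=\mathrm{id}$ I then apply Lemma~\ref{LE:Exp_ran_phases}, perform the shift $p_{m_2-1}\mapsto p_{m_2-1}-p_{m_2}$ and $p_m\mapsto p_m-p_{m_2-1}$ from \eqref{EQ:exp_aver_b_Mainterm_rec_2} (together with its $\boldsymbol{\tilde{p}}$-analogue) to isolate the recollision momentum, split into $2^{a_k+\tilde{a}_k+2}$ cases according to whether each $t_i,\tilde{t}_i$ or the merged $s_{k_2},\tilde{s}_{k_2}$ is $\geq 1$ or $<1$, and apply the stationary-phase bound Lemma~\ref{app_quadratic_integral_tech_est}. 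The time integral gains an additional factor $\tau_0^{n_2-2k_2}$ as in \eqref{EQ:exp_aver_b_er_mainterm_2} from the proof of Lemma~\ref{expansion_aver_bound_Mainterm_2}, leading to the $L^2$-contribution $\frac{C^k\tau_0^{n_2-2k_2}(\rho t)^{2k-n-2}}{(k_1+1)!(k_1-n_1)!(k_2-1)!(k_2-1-n_2)!}\|\varphi\|_{L^2}^{2}$. For $\kappa\neq\mathrm{id}$ I insert the $\zeta$-damping and the auxiliary indicator $f(\boldsymbol{s},\boldsymbol{\tilde{s}})$ as in \eqref{EQ:exp_aver_b_er_mainterm_3}, write the two time-delta functions as Fourier integrals in $\nu,\tilde{\nu}_0,\tilde{\nu}_1$, and estimate the remaining $q,\tilde{p},\tilde{q}$ integral by the weight insertion from \eqref{EQ:exp_aver_b_Mainterm_rec_11.5}, then by Lemma~\ref{LE:est_res_combined} in the $\tilde{p}$-variable and repeated application of Lemma~\ref{LE:resolvent_int_est}, picking up $|\log(\tfrac{\hbar\tau_0}{t})|^{n+8}$.

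The hardest bookkeeping step will be checking that the merged operator $\tilde{\Theta}$ is compatible with the recollision structure encoded by $\iota\in\mathcal{Q}_{k,1,1}$: in particular one has to keep track of which of the two cases in the $\kappa=\mathrm{id}$ analysis (condition \eqref{EQ:exp_aver_b_Mainterm_rec_1.5} holding or not) applies once $x_{2,k_2}$ is treated as a scatterer of its own, since the values $m_1,m_2$ defining $\iota$ may now bracket the seam index. This is handled by the same case analysis as in Lemma~\ref{expansion_aver_bound_Mainterm_rec}, just with the extra weight $\langle\tilde{p}\rangle^{2d+2}\langle\tilde{q}\rangle^{5d+5}$ which survives the merge. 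Combining the two cases and summing over $n,\sigma^1,\sigma^2,\kappa,\alpha,\tilde{\alpha}$, $k_1,k_2,\iota$ using $|\mathcal{A}|\leq 2^k$, $|\mathcal{S}_n|=n!$ and $\frac{n!}{(k-2)!}\leq k^{2}$, together with the $\lambda$-assumption \eqref{assump_1} to sum the $\alpha,\tilde\alpha$ series, and maximising the terms over $k\in\{k_0+1,\dots,2k_0\}$, produces exactly the claimed bound.
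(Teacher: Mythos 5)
Your proposal is correct and matches the paper's intent: the paper's own proof of this lemma is a one-line remark instructing the reader to combine the seam-merging argument of Lemma~\ref{expansion_aver_bound_Mainterm_3} with the one-recollision arguments (Lemmas~\ref{expansion_aver_bound_Mainterm_rec} and~\ref{expansion_aver_bound_Mainterm_rec_2}), which is exactly the combination you carry out, including the change of variables $s_{k_2}\mapsto s_{k_2}-s_{k_2+1}$ producing the merged operator and the split into the $\kappa=\mathrm{id}$ Fourier case and the $\kappa\neq\mathrm{id}$ resolvent case. The minor bookkeeping discrepancies (the polynomial power of $k_0$ from Cauchy--Schwarz and one factorial in the ladder denominator) are harmless, since the stated bound absorbs them.
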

\begin{proof}
The estimate is obtained by an argument analogous to that of Lemma~\ref{expansion_aver_bound_Mainterm_3} combined with the arguments used in the proof Lemma~\ref{expansion_aver_bound_Mainterm_rec_3}. 
\end{proof}
\subsection{Estimates for truncated terms}
\begin{lemma}\label{expansion_aver_bound_Mainterm_rec_trun}
Assume we are in the setting of Definition~\ref{functions_for_exp_rec_def}. Let  $\iota\in\mathcal{Q}_{k,1,1}$ and let $\varphi \in \mathcal{H}^{3d+3}_\hbar(\R^d)$. Then 
\begin{equation*}
	\begin{aligned}
	\MoveEqLeft \mathbb{E}\Big[\big\lVert \sum_{\iota\in\mathcal{Q}_{k_0+1,1,1}} \sum_{\boldsymbol{x}\in\mathcal{X}_{\neq}^{k_0}} \mathcal{E}_{1,1}^{k_0}(\boldsymbol{x},\iota,\tfrac{t}{\tau_0};\hbar)\varphi\big\rVert_{L^2(\R^d)}^2 \Big]
	\\
	&\leq  \frac{k_0^6 C^{k_0}}{ \tau_0^{k_0-1}\hbar (k_0-1)!} \norm{\varphi}_{L^{2}(\R^d)}^2 +   \frac{k_0^7 C^{k_0}}{ \tau_0^{k_0-2}}  |\log(\tfrac{\hbar}{t})|^{k_0+5} \norm{\varphi}_{\mathcal{H}^{2d+2}_\hbar(\R^d)}^2,
	\end{aligned}
\end{equation*}
where the constant $C$ depends on $\rho$, $t$, the single site potential $V$ and the coupling constant $\lambda$. In particular we have that the function is in  $L^2(\R^d)$ $\Pro$-almost surely. 
\end{lemma}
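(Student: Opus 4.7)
The plan is to mirror the approach of Lemma~\ref{LE:Truncated_Without_re_bound_Mainterm_1}, which transplants the estimates of Lemma~\ref{expansion_aver_bound_Mainterm} to the truncated setting; here one replaces Lemma~\ref{expansion_aver_bound_Mainterm} by Lemma~\ref{expansion_aver_bound_Mainterm_rec} to accommodate the single recollision encoded by $\iota\in\mathcal{Q}_{k_0+1,1,1}$. The key preliminary step is to exploit the unitarity of $U_{\hbar,\lambda}(-s_{k_0})$ and $U_{\hbar,0}(s_{k_0})$ together with Jensen's inequality applied to the outermost integral over $s_{k_0}$ in the definition of $\mathcal{E}_{1,1}^{k_0}$. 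This yields
\begin{equation*}
\mathbb{E}\Big[\big\lVert \sum_{\iota\in\mathcal{Q}_{k_0+1,1,1}}\sum_{\boldsymbol{x}\in\mathcal{X}_{\neq}^{k_0}} \mathcal{E}_{1,1}^{k_0}(\boldsymbol{x},\iota,\tfrac{t}{\tau_0};\hbar)\varphi\big\rVert_{L^2(\R^d)}^2\Big]\leq \tfrac{t}{\tau_0}\int_0^{t/\tau_0}\mathbb{E}\Big[\big\lVert \sum_{\iota,\boldsymbol{x},\alpha}(i\lambda)^{|\alpha|}\tilde{\mathcal{E}}_{1,1}^{k_0}(\boldsymbol{x},s_{k_0},\alpha,\iota,\tfrac{t}{\tau_0};\hbar)\varphi \big\rVert_{L^2(\R^d)}^2\Big]\,ds_{k_0},
\end{equation*}
and the inner object has exactly the structure estimated in Lemma~\ref{expansion_aver_bound_Mainterm_rec}, except that one of the simplex time integrations has been moved outside, the last internal-collision multi-index is fixed to $\alpha_{k_0+1}=1$, and the remaining time variables are constrained to the short interval $[0,t/\tau_0]$.

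Next I would expand the inner $L^2$-norm by introducing twin configurations $(\boldsymbol{x},\tilde{\boldsymbol{x}})\in\mathcal{X}_{\neq}^{2k_0}$, decomposing them into the matching index sets indexed by $n$, $\sigma^1,\sigma^2\in\mathcal{A}(k_0,n)$ and the permutation $\kappa\in\mathcal{S}_n$, and then apply Lemma~\ref{LE:Exp_ran_phases} to perform the Poisson average. The subsequent analysis splits into the same cases treated in Lemma~\ref{expansion_aver_bound_Mainterm_rec}: the case $\kappa=\mathrm{id}$ together with the two subcases depending on whether the recollision bracket $(m_1,m_2)$ coded by $\iota$ is nested inside the $\sigma$'s in the sense of \eqref{EQ:exp_aver_b_Mainterm_rec_1.5}, and the case $\kappa\neq\mathrm{id}$. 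In each subcase, the change of variables absorbing the recollision momentum, the Fourier-based stationary-phase argument on the $(\boldsymbol{t},\boldsymbol{\tilde{t}})$ variables, and the iterative applications of Lemmas~\ref{LE:resolvent_int_est} and~\ref{LE:est_res_combined} reproduce verbatim the pointwise bounds already established in that proof.

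The only substantive difference is the bookkeeping of time-integration volumes. Since $k_0-1$ of the simplex variables $s_1,\dots,s_{k_0-1}$ are constrained to $[0,t/\tau_0]$ while $s_{k_0}$ has been pulled outside, the analogue of \eqref{EQ:exp_aver_b_Mainterm_11} acquires an additional factor $\tau_0^{-(k_0-1)}$ in the diagonal case and $\tau_0^{-(k_0-2)}$ in the off-diagonal case (where one further time variable is already consumed by the regularisation argument involving $\zeta$). The explicit $\hbar^{-1}$ in the $L^2$-part of the bound originates from the single unpaired simplex factor that no longer carries its own $\hbar$, exactly as in the derivation of \eqref{EQ:Truncated_Without_re_bound_Mainterm_1_2}; in the Sobolev part this $\hbar^{-1}$ cancels against the $\hbar$ produced by the stationary-phase estimate of the recollision momentum. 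Summing over $\iota\in\mathcal{Q}_{k_0+1,1,1}$ contributes a factor of order $k_0^4$ after squaring (by Remark~\ref{Remark_grow_of_Q}), and the residual polynomial in $k_0$ is picked up from the estimate $n!/(k_0-2)!\leq k_0^2$ together with the sum over $n$ and $\sigma^1,\sigma^2$, as in the proof of Lemma~\ref{expansion_aver_bound_Mainterm_2}.

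The main obstacle is not any single analytic step but rather the careful tracking of which time variables remain in the short interval, which are tied up by recollision delta functions or by regularisation, and which contribute to the $\tau_0^{-(\cdot)}$ factors in the diagonal and off-diagonal cases. Once this bookkeeping from Lemma~\ref{expansion_aver_bound_Mainterm_rec} is merged with the truncation argument of Lemma~\ref{LE:Truncated_Without_re_bound_Mainterm_1}, the stated bound follows after invoking assumption~\eqref{assump_1} on the coupling constant to absolutely sum the series over $\alpha,\tilde\alpha$.
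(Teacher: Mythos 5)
Your proposal is correct and follows essentially the same route as the paper, whose proof of this lemma is precisely the combination of the truncation argument of Lemma~\ref{LE:Truncated_Without_re_bound_Mainterm_1} (unitarity plus Jensen on the outer $s_{k_0}$-integral) with the one-recollision estimates of Lemma~\ref{expansion_aver_bound_Mainterm_rec}, together with the quadratic growth of $\#\mathcal{Q}_{k,1,1}$ from Remark~\ref{Remark_grow_of_Q}. Your write-up in fact supplies more of the bookkeeping (time-volume factors in $\tau_0$, origin of the $\hbar^{-1}$) than the paper records explicitly.
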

\begin{proof}
The estimates follows from combining the arguments in the proofs of Lemma~\ref{LE:Truncated_Without_re_bound_Mainterm_1} and Lemma~\ref{expansion_aver_bound_Mainterm_rec}. Moreover we also use that the number of maps in $\mathcal{Q}_{k,1,1}$ grows quadratic in $k$.
\end{proof}
The proof of the following Lemma also follow from combining arguments made in earlier proofs. Hence we will not write it down explicitly. 
\begin{lemma}\label{expansion_aver_bound_Mainterm_rec_trun_2}
  Assume we are in the setting of Definition~\ref{def_remainder_k_0}. Let $\varphi \in \mathcal{H}^{3d+3}_\hbar(\R^d)$. Then for any
  $\tau\in\{2,\dots,\tau_0\}$
    \begin{equation*}
	\begin{aligned}
	\MoveEqLeft \mathbb{E}\Big[ \big\lVert \sum_{k_2=1}^{k_0}  \sum_{\iota\in\mathcal{Q}_{k_0+k_2+1,1,1}}   \sum_{(\boldsymbol{x}_1,\boldsymbol{x}_2)\in \mathcal{X}_{\neq}^{k_0+k_2}}  \mathcal{E}_{1,1}^{k_0}(\boldsymbol{x}_1,\iota,\tfrac{t}{\tau_0};\hbar) \mathcal{I}_{1,1}(k_2,\boldsymbol{x}_2,\iota,\tfrac{\tau-1}{\tau_0}t;\hbar) \varphi\big\rVert_{L^2(\R^d)}^2 \Big]
	\\
	& \leq \frac{k_0^8 C^{k_0}}{ \tau_0^{k_0-1}\hbar (k_0-1)!} \norm{\varphi}_{L^{2}(\R^d)}^2 +   \frac{k_0^{11} C^{k_0}}{ \tau_0^{k_0-2}}  |\log(\tfrac{\hbar}{t})|^{k_0+8} \norm{\varphi}_{\mathcal{H}^{3d+3}_\hbar(\R^d)}^2,
	 \end{aligned}
\end{equation*}
and
    \begin{equation*}
	\begin{aligned}
	\MoveEqLeft \mathbb{E}\Big[ \big\lVert  \sum_{k_2=1}^{k_0}  \sum_{\iota\in\mathcal{Q}_{k_0+k_2+1,1,1}}   \sum_{(\boldsymbol{x}_1,\boldsymbol{x}_2)\in \mathcal{X}_{\neq}^{k_0+k_2}}  \mathcal{E}_{1,1}^{k_0+1}((x_{2,k_2},\boldsymbol{x}_1),\iota,\tfrac{t}{\tau_0};\hbar) \mathcal{I}_{1,1}(k_2,\boldsymbol{x}_2,\iota,\tfrac{\tau-1}{\tau_0}t;\hbar)  \varphi\big\rVert_{L^2(\R^d)}^2 \Big]
	  \\
	   &
	    \leq \frac{k_0^8 C^{k_0}}{ \tau_0^{k_0-1}\hbar (k_0-1)!} \norm{\varphi}_{L^{2}(\R^d)}^2 +   \frac{k_0^{11} C^{k_0}}{ \tau_0^{k_0-2}}  |\log(\tfrac{\hbar}{t})|^{k_0+8} \norm{\varphi}_{\mathcal{H}^{3d+3}_\hbar(\R^d)}^2,
	 \end{aligned}
\end{equation*}
where the constant $C$ depends on the single site potential $V$ and the coupling constant $\lambda$. In particular we have that the function is in  $L^2(\R^d)$ $\Pro$-almost surely. 
\end{lemma}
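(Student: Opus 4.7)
The plan is to derive both estimates by combining three ingredients already developed in the paper: (i) the trick used in the proof of Lemma~\ref{LE:Truncated_Without_re_bound_Mainterm_1} to push the outer time integral outside the $L^2$-norm, (ii) the two-interval time decomposition used in Lemma~\ref{expansion_aver_bound_Mainterm_2} (respectively Lemma~\ref{expansion_aver_bound_Mainterm_3} for the second estimate, where the last scatterer of $\mathcal{I}_{1,1}$ is reused as the first scatterer of $\mathcal{E}_{1,1}^{k_0+1}$), and (iii) the recollision analysis of Lemma~\ref{expansion_aver_bound_Mainterm_rec} that tracks the additional variables $\tilde p, \tilde q$ coming from the map $\iota\in\mathcal{Q}_{k_0+k_2+1,1,1}$.

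First, using that $U_{\hbar,\lambda}(-s_{k_0})U_{\hbar,0}(s_{k_0})$ appearing in the definition of $\mathcal{E}_{1,1}^{k_0}$ is unitary, and then applying Jensen's inequality in the $s_{k_0}$-integral (as in \eqref{EQ:Truncated_Without_re_bound_Mainterm_1_1}), the task reduces to bounding
\begin{equation*}
\tfrac{t}{\tau_0}\!\!\int_0^{t/\tau_0}\!\!\mathbb{E}\Big[\big\lVert\!\sum_{k_2=1}^{k_0}\sum_{\iota}\!\sum_{(\boldsymbol{x}_1,\boldsymbol{x}_2)\in\mathcal{X}_{\neq}^{k_0+k_2}}\!\!\!\!\sum_{\alpha\in\N^{k_0}\times\{1\}}\!(i\lambda)^{|\alpha|}\tilde{\mathcal{E}}_{1,1}^{k_0}(\boldsymbol{x}_1,s_{k_0},\alpha,\iota,\tfrac{t}{\tau_0};\hbar)\,\mathcal{I}_{1,1}(k_2,\boldsymbol{x}_2,\iota,\tfrac{\tau-1}{\tau_0}t;\hbar)\varphi\big\rVert_{L^2}^2\Big]\,ds_{k_0}.
\end{equation*}
For each fixed $s_{k_0}$ and fixed $k_2$ and $\iota$, expanding the $L^2$-norm produces the usual double sum over collision series, which we split according to $n$ (number of common scatterers), the pairing data $\sigma^1,\sigma^2\in\tilde{\mathcal{A}}(k_0,\cdot,k_2,\cdot)$, and the permutation $\kappa\in\mathcal{S}_n$. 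Applying Lemma~\ref{LE:Exp_ran_phases} removes the randomness and produces the same type of oscillatory integrals analyzed in Lemma~\ref{expansion_aver_bound_Mainterm_rec}, with one additional feature: the time domain of $s_{k_0-1},\ldots,s_1$ is $[s_{k_0},t/\tau_0]$, i.e. a subinterval of length bounded by $t/\tau_0$, while the time variables belonging to the $\mathcal{I}_{1,1}(k_2,\cdot)$ factor live in $[t/\tau_0,\tau t/\tau_0]$. This is exactly the two-regime split already handled in Lemma~\ref{expansion_aver_bound_Mainterm_2}.

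The case $\kappa=\mathrm{id}$ is further split, as in Lemma~\ref{expansion_aver_bound_Mainterm_rec}, according to whether condition \eqref{EQ:exp_aver_b_Mainterm_rec_1.5} holds relative to the recollision indices $(m_1,m_2)$ of $\iota$. When it holds, the change of variables \eqref{EQ:exp_aver_b_Mainterm_rec_2} decouples $\tilde p$ and $\tilde q$ from the leading phases and Lemma~\ref{app_quadratic_integral_tech_est} gives the clean $L^2$-bound, with the two time intervals $[0,t/\tau_0]$ and $[t/\tau_0,\tau t/\tau_0]$ contributing a factor $(\hbar^{-1}t)^{2k-n-3}\tau_0^{n_1+1-2k_0}/[(k_0-1)!(k_0-n_1)!(k_2-1)!(k_2-n_2)!]$ analogous to \eqref{EQ:exp_aver_b_er_mainterm_4}. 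When it fails, and in the case $\kappa\neq\mathrm{id}$, one introduces the regularisation parameters $\nu,\tilde\nu_0,\tilde\nu_1$ (one per time window, as in the proof of Lemma~\ref{expansion_aver_bound_Mainterm_rec_2}), writes the delta functions as Fourier transforms, and uses Lemma~\ref{LE:est_res_combined} on the $\tilde p$-integral together with repeated applications of Lemma~\ref{LE:resolvent_int_est}; this produces the $|\log(\hbar/t)|^{n+\mathrm{O}(1)}$ factor and the $\hbar\,\norm{\varphi}_{\mathcal{H}^{3d+3}_\hbar}^2$-type bound.

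Finally, one sums over $\alpha,\tilde\alpha$ (the sums converge by Assumption~\eqref{assump_1}), uses the standard bounds $\#\mathcal{A}(k,n)\leq 2^k$, $\#\mathcal{S}_n=n!$, $n!/(k-2)!\leq k^2$, and $\#\mathcal{Q}_{k,1,1}\leq Ck^2$ (Remark~\ref{Remark_grow_of_Q}), sums over $k=k_0+1,\dots,2k_0$ and $k_1+k_2=k$, and maximises the resulting terms. The extra factor $1/\hbar$ compared to Lemma~\ref{expansion_aver_bound_Mainterm_rec_2} comes from the outer $s_{k_0}$-integral of length $t/\tau_0$ being bounded trivially by $t/\tau_0$ instead of being evaluated as a Gaussian-type integral, which together with the combinatorial factors $k_0^6$ (splittings of $\iota$ and $\sigma$) and $k_0^2$ (size of $\mathcal{Q}_{k,1,1}$) yields the stated constants $k_0^8$ and $k_0^{11}$. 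The second estimate is identical except one applies the variant of Lemma~\ref{expansion_aver_bound_Mainterm_3}: the shared scatterer $x_{2,k_2}$ forces two consecutive $\Theta$-operators in the truncated factor to share position, handled by the same change of variables as in \eqref{EQ:Aver_full_ex_same_0,0_dob1}--\eqref{EQ:Aver_full_ex_same_0,0_dob4}. The main bookkeeping obstacle is tracking which of the $s$- and $\tilde s$-variables the indicator $\boldsymbol{1}_{[s_{k_0},t/\tau_0]}$ freezes, since this must be coordinated with the choice of Fourier-regularising variables so that the resolvent lemmas can still be applied; this is the same obstacle resolved by the proof of Lemma~\ref{expansion_aver_bound_Mainterm_rec} and poses no new difficulty here.
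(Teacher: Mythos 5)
Your proposal is correct and follows essentially the same route the paper intends: the paper itself gives no written proof of this lemma, stating only that it follows by combining the arguments of the earlier proofs, and your plan assembles exactly those ingredients — the unitarity/Jensen trick of Lemma~\ref{LE:Truncated_Without_re_bound_Mainterm_1} to pull out the $s_{k_0}$-integral, the two-window time decomposition of Lemmas~\ref{expansion_aver_bound_Mainterm_2} and~\ref{expansion_aver_bound_Mainterm_3} (the latter for the shared-scatterer case), and the one-recollision analysis of Lemma~\ref{expansion_aver_bound_Mainterm_rec}, with the combinatorial counts from Remark~\ref{Remark_grow_of_Q}. The only caveat is that the precise powers of $k_0$ and the logarithm are asserted rather than tracked line by line, but that level of bookkeeping is also left implicit in the paper.
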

\section{Terms with two recollisions}\label{Sec:tech_est_2_recol}
We will in this section continue with the technical estimates which gives that the Duhamel expansion do converge in appropriate sense. We will here focus on the terms where we have observed two recollisions.
\subsection{Estimates for fully expanded terms}
\begin{lemma}\label{expansion_aver_bound_Mainterm_rec_21}
Assume we are in the setting of Definition~\ref{functions_for_exp_rec_def}. Let  $\iota\in\mathcal{Q}_{k,2,1}$ and let $\varphi \in \mathcal{H}^{4d+4}_\hbar(\R^d)$. Then 
\begin{equation}
	\begin{aligned}
	\MoveEqLeft \mathbb{E}\Big[\big\lVert \sum_{\boldsymbol{x}\in\mathcal{X}_{\neq}^{k-2}}\mathcal{I}_{2,1}(k,\boldsymbol{x},\iota,t;\hbar) \varphi\big\rVert_{L^2(\R^d)}^2 \Big] \leq  \frac{ C^{k_0}}{  (k_0-2)!} \norm{\varphi}_{L^{2}(\R^d)}^2 +   k_0 C^{k_0}\hbar  |\log(\tfrac{\hbar}{t})|^{k_0+7} \norm{\varphi}_{\mathcal{H}^{4d+4}_\hbar(\R^d)}^2,
	\end{aligned}
\end{equation}
where the constant $C$ depends on the single site potential $V$ and the coupling constant $\lambda$. In particular we have that the function is in  $L^2(\R^d)$ $\Pro$-almost surely. 
\end{lemma}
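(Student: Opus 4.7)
The plan is to follow the scheme used in the proofs of Lemmas~\ref{expansion_aver_bound_Mainterm} and \ref{expansion_aver_bound_Mainterm_rec}, adapting it to the fact that $\iota\in\mathcal{Q}_{k,2,1}$ encodes a scatterer that is visited three times. First, expand the square of the $L^2$ norm as a double sum over configurations $\boldsymbol{x},\boldsymbol{\tilde{x}}\in\mathcal{X}_{\neq}^{k-2}$, sort these configurations by how many points they share, and for each $n\in\{0,\ldots,k-2\}$ and pair of $\sigma^1,\sigma^2\in\mathcal{A}(k-2,n)$, $\kappa\in\mathcal{S}_n$, introduce the numbers $\mathcal{T}(n,\sigma^1,\sigma^2,\alpha,\tilde{\alpha},\kappa)$ analogous to those in the proof of Lemma~\ref{expansion_aver_bound_Mainterm_rec}. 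Apply Lemma~\ref{LE:Exp_ran_phases} to pass to the average. As before, split into the cases $\kappa=\mathrm{id}$ and $\kappa\neq\mathrm{id}$.

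In the case $\kappa=\mathrm{id}$, further split depending on the compatibility of the triple $m_1<m_2<m_3$ defining $\iota\in\mathcal{Q}_{k,2,1}$ with $\sigma^1$ and $\sigma^2$. In the most symmetric situation, in which there exist indices $i_1,i_2$ with $\sigma^1_{i_j-1}\le m_1<m_3<\sigma^1_{i_j}$ and similarly for $\sigma^2$, perform the change of variables that replaces $p_{m_2-1},p_{m_3-1}$ by new variables $\tilde p_1,\tilde p_2$ (the analogues of the single $\tilde p$ in \eqref{EQ:exp_aver_b_Mainterm_rec_2}) and relabels the remaining $p_m$. After this reduction one is in a quadratic–phase integral of the same form as the one treated in Lemma~\ref{expansion_aver_bound_Mainterm}, with $n+2$ outer variables $p_1,\dots,p_n,\tilde p_1,\tilde p_2$. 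Apply Lemma~\ref{app_quadratic_integral_tech_est} in these variables, together with the $t$/$\tilde t$ argument of the proof of Lemma~\ref{expansion_aver_bound_Mainterm}, to obtain a bound of the shape $C_d^{a_k+\tilde a_k+n}\,\|\hat V\|_{1,\infty,3d+3}^{|\alpha|+|\tilde\alpha|}\,\|\varphi\|_{L^2}^2\,(\rho t)^{2k-4-n}/((k-2)!(k-2-n)!)$. In all other subcases, where at least one of the triples $(m_1,m_2,m_3)$ crosses a $\sigma^j$–jump, introduce the auxiliary function $f(\boldsymbol{s},\boldsymbol{\tilde s})$ encoding the surviving time constraints, insert two Fourier representations of $\delta$, the regulariser $e^{\cdot\,\zeta(\hbar^{-1}t)}$ on the two variables $s_0,\tilde s_0$, and reduce to a $q$/$\tilde p_1/\tilde p_2/\nu/\tilde\nu$–integral.

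The main obstacle is the resolvent estimate for this last integral: compared with the single-recollision case we now have two extra momenta $\tilde p_1,\tilde p_2$ and three ``special'' resolvent denominators of the form $|\tfrac12(\tilde p_j+q_{l})^2+\nu+i\zeta|^{-1}$. The weights that need to be absorbed are $\langle\tilde p_1\rangle^{2d+2}\langle\tilde p_2\rangle^{2d+2}\langle q_0\rangle^{4d+4}\prod_{i}\langle q_{\sigma^2_i}-q_{\sigma^2_{i-1}}+\cdots\rangle^{5d+5}$, which is exactly why one must now assume $\varphi\in\mathcal{H}^{4d+4}_\hbar(\R^d)$ (so that $\int\langle q_0\rangle^{8d+8}|\hat\varphi(q_0/\hbar)|^2\,dq_0\le\hbar^d C\|\varphi\|_{\mathcal{H}^{4d+4}_\hbar}^2$). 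Use Lemma~\ref{LE:est_res_combined} twice (once for each $\tilde p_j$) to produce two factors of $|\log\zeta|^2$ and two $|q_{l}-q_{l'}|^{-1}$ factors; these singularities are then integrated against the remaining resolvents by an application of Lemma~\ref{LE:resolvent_int_est}, with the remaining $q$–integrals handled by repeated use of the same lemma. This yields an overall $|\log\zeta|^{n+7}\le|\log(\hbar/t)|^{n+7}$ factor.

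The case $\kappa\neq\mathrm{id}$ proceeds along the same lines, using the first index $i^{*}$ with $\kappa(i^{*})\ne i^{*}$ exactly as in the proof of Lemma~\ref{expansion_aver_bound_Mainterm} and coupling it with the double-recollision bookkeeping above; the counting of denominators and the choice of weights do not change. Combining both cases, using $\#\mathcal{A}(k-2,n)\le 2^{k}$, $\#\mathcal{S}_n=n!$, $n!/(k-2)!\le k$, and the smallness assumption \eqref{assump_1} on the coupling constant to sum absolutely the series in $\alpha,\tilde\alpha$, and finally summing in $n$ gives
\begin{equation*}
\mathbb{E}\bigl[\bigl\lVert\textstyle\sum_{\boldsymbol{x}}\mathcal{I}_{2,1}(k,\boldsymbol{x},\iota,t;\hbar)\varphi\bigr\rVert_{L^2}^2\bigr]\le\frac{C^{k}}{(k-2)!}\|\varphi\|_{L^2}^2+k\,C^{k}\hbar\,|\log(\hbar/t)|^{k+7}\|\varphi\|_{\mathcal{H}^{4d+4}_\hbar}^2,
\end{equation*}
which is the required bound with $k$ playing the role of $k_0$.
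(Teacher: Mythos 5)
Your proposal is correct and follows essentially the same route as the paper: the same double-sum decomposition with Lemma~\ref{LE:Exp_ran_phases}, the same change of variables producing the two extra momenta $\tilde p_1,\tilde p_2$ associated with $m_2,m_3$, the split into the "both recollision blocks inside one gap of both pairings" case (handled by Lemma~\ref{app_quadratic_integral_tech_est} plus the $t/\tilde t$ argument, giving the $(\rho t)^{2k-n-4}/((k-2)!(k-n-2)!)$ term) versus the crossing and $\kappa\neq\mathrm{id}$ cases handled by the $\nu$-representation with Lemmas~\ref{LE:est_res_combined} and \ref{LE:resolvent_int_est}, yielding the $\hbar\,|\log(\hbar/t)|^{n+7}$ factor and the $\mathcal{H}^{4d+4}_\hbar$ norm. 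The only slips are cosmetic: in the stationary-phase step the conjugate factor contributes $\tilde q_1,\tilde q_2$ as well (so Lemma~\ref{app_quadratic_integral_tech_est} is applied in $n+4$, not $n+2$, variables), and the stated weight on $\langle q_0\rangle$ should consistently be $8d+8$, matching your own use of $\mathcal{H}^{4d+4}_\hbar$.
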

\begin{proof}
As in the proof of Lemma~\ref{expansion_aver_bound_Mainterm} we get two sums over the set $\mathcal{X}_{\neq}^{k-2}$, when we take the $L^2$-norm. Again we divide into different cases depending on how many points $x_m$ and $\tilde{x}_m$ they have in common and where they are placed. This yields
\begin{equation}\label{EQ:exp_aver_bound_Mainterm_rec_21_1}
	\big\lVert \sum_{\boldsymbol{x}\in\mathcal{X}_{\neq}^{k-2}} \mathcal{I}_{2,1}(k,\boldsymbol{x},\iota,t;\hbar) \varphi\big\rVert_{L^2(\R^d)}^2
	=\sum_{\alpha,\tilde{\alpha}\in\N^k}\sum_{n=0}^{k-2} \sum_{\sigma^1,\sigma^2\in\mathcal{A}(k-2,n)} \sum_{\kappa\in\mathcal{S}_n} (i\lambda)^{|\alpha|}(-i\lambda)^{|\tilde{\alpha}|} \mathcal{T}(n,\sigma^1,\sigma^2,\alpha,\tilde{\alpha},\kappa),
\end{equation}
where the numbers $\mathcal{T}(n,\sigma^1,\sigma^2,\alpha,\tilde{\alpha},\kappa)$ are given by
\begin{equation*}
	\begin{aligned}
	\MoveEqLeft \mathcal{T}(n,\sigma^1,\sigma^2,\alpha,\tilde{\alpha},\kappa) = \sum_{\boldsymbol{x},\tilde{\boldsymbol{x}}\in \mathcal{X}_{\neq}^{k-2}}    \prod_{i=1}^n \rho^{-1}\delta(x_{\sigma_i^1}- \tilde{x}_{\sigma_{\kappa(i)^2}})
	   \int\mathcal{I}(k,\boldsymbol{x},\alpha,\iota,t;\hbar)\varphi(x)\overline{\mathcal{I}(k,\boldsymbol{\tilde{x}},\tilde{\alpha},\iota,t;\hbar) \varphi(x)}  \,dx,
	\end{aligned}
\end{equation*}
where the operators $\mathcal{I}(k,\boldsymbol{x},\alpha,\iota,t;\hbar)$ is defined as $ \mathcal{I}_{2,1}(k,\boldsymbol{x},\iota,t;\hbar)$ just for a fixed $\alpha$. As in the previous proofs we will split the estimate into a number of cases. We will in all cases use the following form for the function we consider 
\begin{equation*}
	\begin{aligned}
 	\MoveEqLeft \mathcal{I}(k,\boldsymbol{x},\alpha,\iota,t;\hbar)\varphi(x) 
	=  \frac{1}{(2\pi\hbar)^d\hbar^{k}} \int_{\R_{+}^{|\alpha|}} \int \boldsymbol{1}_{[0,t]}( \boldsymbol{s}_{1,k}^{+}+ \hbar\boldsymbol{t}_{1,a_k}^{+})
	 e^{ i   \langle  \hbar^{-1}x,p_{k} \rangle} \prod_{i=1}^{a_k} e^{i  t_{i} \frac{1}{2} \eta_{i}^2}   \Big\{\prod_{m=1}^k  e^{i \hbar^{-1}  s_{m} \frac{1}{2} p_{m}^2} 
	 \\
	 &\times    
	 e^{ -i   \langle  \hbar^{-1/d}x_{\iota(m)},p_{m}-p_{m-1} \rangle}  \hat{\mathcal{V}}_{\alpha_m}(p_m,p_{m-1},\boldsymbol{\eta} )  \Big\}    e^{i  \hbar^{-1}(t- \boldsymbol{s}_{1,k}^{+}- \hbar \boldsymbol{t}_{1,a_k}^{+}) \frac{1}{2} p_0^2} \hat{\varphi}(\tfrac{p_0}{\hbar}) \, d\boldsymbol{\eta}d\boldsymbol{p}   d\boldsymbol{t} d\boldsymbol{s}.
	\end{aligned}
\end{equation*}
Recall that since $\iota\in\mathcal{Q}_{k,2,1}$ we have the numbers $m_1$, $m_2$ and $m_3$ associated to the map $\iota$ and the ``adjoint map'' $\iota^{*}$. We start by making the change of variables $p_{m_2-1}\mapsto p_{m_2-1} - p_{m_2}$ and $p_{m}\mapsto p_{m} - p_{m_2-1}$ for all $m\in\{m_1,\dots,m_2-2\}$, where $m_1$ and $m_2$ are the numbers associated to the map $\iota$. 
Moreover we also do the change of variables $p_{m_3-1}\mapsto p_{m_3-1}-p_{m_3}$ and $p_m\mapsto p_m-p_{m_3-1}$ for all $m\in\{m_1,\dots,m_3-2\}\setminus\{m_2-1\}$. This yields after a relabelling  
\begin{equation}\label{EQ:exp_aver_bound_Mainterm_rec_21_3}
	\begin{aligned}
 	& \mathcal{I} (k, \boldsymbol{x},\alpha,\iota,t;\hbar)\varphi(x) 
	=  \frac{1}{(2\pi\hbar)^d\hbar^{k}} \int_{\R_{+}^{|\alpha|}} \int \boldsymbol{1}_{[0,t]}( \boldsymbol{s}_{1,k}^{+}+ \hbar\boldsymbol{t}_{1,a_k}^{+})
	 e^{ i   \langle  \hbar^{-1}x,p_{k-2} \rangle} \prod_{i=1}^{a_k} e^{i  t_{i} \frac{1}{2} \eta_{i}^2}   
	 \\
	 &\times e^{i \hbar^{-1}  s_{m_2} \frac{1}{2} (p_{m_2-1}+\tilde{p}_2)^2}   \hat{\mathcal{V}}_{\alpha_{m_2}}(p_{m_2-1},\tilde{p}_1 +p_{m_2-1},\boldsymbol{\eta} ) e^{i \hbar^{-1}  s_{m_3} \frac{1}{2} p_{m_3-2}^2}   \hat{\mathcal{V}}_{\alpha_{m_3}}(p_{m_3-1},\tilde{p}_2 + p_{m_3-1},\boldsymbol{\eta} ) 
	 \\
	 &\times \Big\{\prod_{m=1}^{k-2}  e^{ -i   \langle  \hbar^{-1/d}x_{m},p_{m}-p_{m-1} \rangle}
	 e^{i \hbar^{-1}  s_{\iota^{*}(m)} \frac{1}{2} (p_{m}+\pi_m^1(\boldsymbol{\tilde{p}}))^2}   \hat{\mathcal{V}}_{\alpha_\iota^{*}(m)}(p_m+\pi_m^1(\boldsymbol{\tilde{p}}),p_{m-1}+\pi_m^2(\boldsymbol{\tilde{p}}),\boldsymbol{\eta} )  \Big\}    
	 \\
	 &\times e^{i  \hbar^{-1}(t- \boldsymbol{s}_{1,k}^{+}- \hbar \boldsymbol{t}_{1,a_k}^{+}) \frac{1}{2} p_0^2} \hat{\varphi}(\tfrac{p_0}{\hbar}) \, d\boldsymbol{\eta}d\boldsymbol{p}   d\boldsymbol{t} d\boldsymbol{s},
	\end{aligned}
\end{equation}
where we have used the following definition of the functions 
\begin{equation*}
\begin{aligned}
	\pi_{m}^1(\tilde{p} ) = \begin{cases}
	\tilde{p}_1 +\tilde{p}_2 & \text{if $m\in\{ m_1,\dots,m_2-1  \}$}
	\\
	\tilde{p}_2 & \text{if $m\in\{ m_2,\dots,m_3-2  \}$}
	\\
	0 &\text{otherwise}.
	\end{cases}
	\quad\quad
	\pi_{m}^2(\tilde{p} ) = \begin{cases}
	\tilde{p}_1 + \tilde{p}_2 & \text{if $m\in\{ m_1+1,\dots,m_2-1  \}$}
	\\
	 \tilde{p}_2 & \text{if $m\in\{ m_2,\dots,m_3-2  \}$}
	\\
	0 &\text{otherwise}.
	\end{cases}
	\end{aligned}
\end{equation*}
How we will estimate the different terms will depend on $\kappa$, and the relation between the two $\sigma$'s and $\kappa$. Firstly we split into the case if $\kappa\neq\mathrm{id}$ or $\kappa=\mathrm{id}$. We start with the case where $\kappa=\mathrm{id}$,  here we again divide into two different cases. First case is if there exist $i_1$ and $i_2$ in $\{1,n+1\}$ such that  following condition is satisfied 
\begin{equation*}
	\sigma_{i_1-1}^1\leq m_1<m_3 -2< \sigma_{i_1}^1 \quad\text{and}\quad \sigma_{i_2-1}^2 \leq m_1< m_3-2 < \sigma_{i_2}^2.
\end{equation*} 
Second case is the converse. Assume such two indices exist. For this case we get after applying Lemma~\ref{LE:Exp_ran_phases} and evaluating all integrals involving delta functions that
\begin{equation*}
	\begin{aligned}
	\MoveEqLeft \Aver{\mathcal{T}(n,\sigma^1,\sigma^2,\alpha,\tilde{\alpha},\mathrm{id}) }=  \frac{(\rho(2\pi)^d)^{2k-n-4}}{(2\pi\hbar)^d\hbar^{n+4}} \int_{\R_{+}^{|\alpha|+|\tilde{\alpha}|}} \int \boldsymbol{1}_{[0,t]}( \boldsymbol{s}_{1,k}^{+}+ \hbar\boldsymbol{t}_{1,a_k}^{+})  \boldsymbol{1}_{[0,t]}( \boldsymbol{\tilde{s}}_{1,k}^{+}+ \hbar\boldsymbol{\tilde{t}}_{1,\tilde{a}_k}^{+})
	 \\
	 &\times 
	  e^{i \hbar^{-1}  s_{m_2} \frac{1}{2} (p_{i_1-1}+\tilde{p}_2)^2}    e^{i \hbar^{-1}  s_{m_3} \frac{1}{2} p_{i_1-1}^2}  
	  e^{-i \hbar^{-1}  \tilde{s}_{m_2} \frac{1}{2} (p_{i_2-1}+\tilde{q}_2)^2}  e^{-i \hbar^{-1}  \tilde{s}_{m_3} \frac{1}{2} p_{i_2-1}^2}  
	   \prod_{i=1}^{a_k} e^{i  t_{i} \frac{1}{2} \eta_{i}^2}  
	    \prod_{i=1}^{\tilde{a}_k} e^{-i  \tilde{t}_{i} \frac{1}{2} \xi_{i}^2}  
	 \\
	 &\times \prod_{i=1}^{n+1} \prod_{m=\sigma_{i}^1+1}^{\sigma_{i}^1-1}  e^{i \hbar^{-1}  s_{\iota^{*}(m)} \frac{1}{2} (p_{i-1}+\pi_m^1(\boldsymbol{\tilde{p}}))^2}   
	  \prod_{m=\sigma_{i-1}^2+1}^{\sigma_{i}^2-1}  e^{-i \hbar^{-1}  \tilde{s}_{\iota^{*}(m)} \frac{1}{2} (p_{i-1}+\pi_m^1(\boldsymbol{\tilde{q}}))^2}   
	 \\
	 &\times 
	  \prod_{i=1}^{n} \Big\{ e^{i \hbar^{-1}  s_{\iota^{*}(\sigma_i^1)} \frac{1}{2} (p_{i}+\pi_{\sigma_i^1}^1(\boldsymbol{\tilde{p}}))^2} 
	  e^{-i \hbar^{-1}  \tilde{s}_{\iota^{*}(\sigma_i^2)} \frac{1}{2} (p_{i}+\pi_{\sigma_i^2}^1(\boldsymbol{\tilde{q}}))^2}  \Big\}
	e^{i  \hbar^{-1}(\boldsymbol{\tilde{s}}_{1,k}^{+}+  \hbar \boldsymbol{\tilde{t}}_{1,a_k}^{+} - \boldsymbol{s}_{1,k}^{+}- \hbar \boldsymbol{t}_{1,a_k}^{+}) \frac{1}{2} p_0^2}  
	\\&\times \mathcal{G}(\boldsymbol{\tilde{p}},\boldsymbol{\tilde{q}},\boldsymbol{p},\boldsymbol{\eta},\boldsymbol{\xi},\sigma^1,\sigma^2,\alpha,\tilde{\alpha})  |\hat{\varphi}(\tfrac{p_0}{\hbar})|^2 \, d\boldsymbol{\eta}d\boldsymbol{\xi}d\boldsymbol{p} d\tilde{p}d\tilde{q}   d\boldsymbol{t} d\boldsymbol{s}d\boldsymbol{\tilde{t}} d\boldsymbol{\tilde{s}},
	\end{aligned}
\end{equation*}
where we have used the notation
\begin{equation*}
	\begin{aligned}
	\MoveEqLeft \mathcal{G}(\boldsymbol{\tilde{p}},\boldsymbol{\tilde{q}},\boldsymbol{p},\boldsymbol{\eta},\boldsymbol{\xi},\sigma^1,\sigma^2,\alpha,\tilde{\alpha}) = \hat{\mathcal{V}}_{\alpha_{m_2}}(p_{i_1-1},\tilde{p}_1 +p_{i_1-1},\boldsymbol{\eta} )  \overline{\hat{\mathcal{V}}_{\alpha_{m_2}}(p_{i_2-1},\tilde{q}_1 +p_{i_2-1},\boldsymbol{\xi} )}
	\\
	& \times    \hat{\mathcal{V}}_{\alpha_{m_3}}(p_{i_1-1},\tilde{p}_2 + p_{i_1-1},\boldsymbol{\eta} )   
	 \Big\{ \prod_{i=1}^{n+1} \prod_{m=\sigma_{i-1}^1+1}^{\sigma_{i}^1-1}  \hat{\mathcal{V}}_{\alpha_{\iota^{*}(m)}}(p_{i-1}+\pi_m^1(\boldsymbol{\tilde{p}}),p_{i-1}+\pi_m^2(\boldsymbol{\tilde{p}}),\boldsymbol{\eta} ) 
	   \\
	   &\times \prod_{m=\sigma_{i-1}^2+1}^{\sigma_{i}^2-1}  \overline{\hat{\mathcal{V}}_{\alpha_{\iota^{*}}(m)}(p_{i-1}+\pi_m^1(\boldsymbol{\tilde{q}}),p_{i-1}+\pi_m^2(\boldsymbol{\tilde{q}}),\boldsymbol{\xi} )} \Big\} \overline{ \hat{\mathcal{V}}_{\alpha_{m_3}}(p_{i_2-1},\tilde{q}_2 + p_{i_2-1},\boldsymbol{\xi} ) }
	  \\
	 &\times \prod_{i=1}^{n}  \hat{\mathcal{V}}_{\alpha_{\iota^{*}}(\sigma_i^1)}(p_{i} + \pi_{\sigma_i^1}^1(\boldsymbol{\tilde{p}}) ,p_{i-1} + \pi_{\sigma_i^1}^2(\boldsymbol{\tilde{p}}),\boldsymbol{\eta} )   
	  \overline{ \hat{\mathcal{V}}_{\alpha_{\iota^{*}}(\sigma_i^2)}(p_{i}+\pi_{\sigma_i^2}^1(\boldsymbol{\tilde{q}}),p_{i-1}+\pi_{\sigma_i^2}^1(\boldsymbol{\tilde{q}}),\boldsymbol{\xi} ) }.   
	\end{aligned}
\end{equation*}
We note that $\pi_m^1(\boldsymbol{\tilde{p}})\neq0$ and $\pi_m^2(\boldsymbol{\tilde{p}})\neq0$ for $ \sigma_{i_1-1}^1\leq m < \sigma_{i_1}^1$ and $\pi_m^1(\boldsymbol{\tilde{q}})\neq0$ and $\pi_m^2(\boldsymbol{\tilde{q}})\neq0$ for $ \sigma_{i_1-1}^2\leq m < \sigma_{i_1}^2$. Hence we preform the following changes of variables $\tilde{p}_1 \mapsto \tilde{p}_1 + \tilde{p}_2 + p_{i_1-1}$, $\tilde{p}_2 \mapsto \tilde{p}_2 + p_{i_1-1}$, $\tilde{q}_1 \mapsto \tilde{q}_1 + \tilde{q}_2 + p_{i_2-1}$ and $\tilde{q}_2 \mapsto \tilde{q}_2 + p_{i_2-1}$. This change of variables ensures that in all quadratic phases we only have one variable. Hence we can argue as in the proof of Lemma~\ref{expansion_aver_bound_Mainterm} to ensure integrability in the $t$'s and $\tilde{t}$'s and case by case use Lemma~\ref{app_quadratic_integral_tech_est} directly for the all variables $p_1,\dots,p_n,\tilde{p}_1,\tilde{p}_2,\tilde{q}_1,\tilde{q}_2$. Preforming these arguments we arrive at the bound
   \begin{equation}\label{EQ:exp_aver_bound_Mainterm_rec_21_5}
   	\begin{aligned}
	 \Aver{\mathcal{T}(n,\sigma^1,\sigma^2,\alpha,\tilde{\alpha},\mathrm{id})}
	 \leq  C_d^{a_k +\tilde{a}_k+n+4}  \norm{\varphi}_{L^2(\R^d)}^2\norm{\hat{V}}_{1,\infty,3d+3}^{|\alpha|+|\tilde{\alpha}|}  \frac{ (\rho t)^{2k-n-4}}{(k-2)!(k-n-2)!}.
	\end{aligned}
\end{equation}
Now we turn to the case where such two indices does not exists. Here we have that there exists an $i^{*}$ such that
\begin{equation}
	 m_1< \sigma_{i^{*}}^1\leq m_3 -2  \quad\text{or}\quad  m_1< \sigma_{i^{*}}^2\leq m_3-2 .
\end{equation} 
For this case we will use the ``$\nu$''-representation of the operators. By again first introducing a suitable chosen $f(\boldsymbol{s},\boldsymbol{\tilde{s}})$ inserting this function and introduce the additional variable $s_0$. We introduce the function $\zeta$ again and insert the functions $e^{(\hbar^{-1}t- \boldsymbol{s}_{0,k}^{+}- \boldsymbol{t}_{1,a_k}^{+})\zeta(\hbar^{-1}t)}$ and $e^{(\hbar^{-1}t- \boldsymbol{\tilde{s}}_{0,k}^{+}- \boldsymbol{\tilde{t}}_{1,a_k}^{+})\zeta(\hbar^{-1}t)}$. Then we do an analogous argument to that of Lemma~\ref{expansion_aver_bound_Mainterm_rec} to obtain the bound 
   \begin{equation}\label{EQ:exp_aver_bound_Mainterm_rec_21_6}
   	\begin{aligned}
	 \Aver{\mathcal{T}(n,\sigma^1,\sigma^2,\alpha,\tilde{\alpha},\mathrm{id})}
	 \leq {}&   C_d^{a_k+\tilde{a}_k+n} \norm{\hat{V}}_{1,\infty,5d+5}^{|\alpha|+|\tilde{\alpha}|} 
	  + \hbar |\log(\zeta)|^{n+7} \frac{\rho(\rho t)^{2k-n-5} }{(k-3)!(k-n-2)!} \norm{\varphi}_{\mathcal{H}^{4d+4}_\hbar(\R^d)}^2.
	\end{aligned}
\end{equation}
What remains is the case where $\kappa\neq\mathrm{id}$. For this case we need just to keep track of the first $i$ such that $\kappa(i)\neq i$ and use an analogous argument to that used in the proof of Lemma~\ref{expansion_aver_bound_Mainterm} for the case, where $\kappa\neq\mathrm{id}$. This yields the bound
 \begin{equation}\label{EQ:exp_aver_bound_Mainterm_rec_21_7}
   	\begin{aligned}
	 \Aver{\mathcal{T}(n,\sigma^1,\sigma^2,\alpha,\tilde{\alpha},\kappa)}
	 \leq {}&   C_d^{a_k+\tilde{a}_k+n} \norm{\hat{V}}_{1,\infty,5d+5}^{|\alpha|+|\tilde{\alpha}|} 
	  + \hbar |\log(\zeta)|^{n+7} \frac{\rho(\rho t)^{2k-n-5} }{(k-3)!(k-n-2)!} \norm{\varphi}_{\mathcal{H}^{4d+4}_\hbar(\R^d)}^2.
	\end{aligned}
\end{equation}
By combining \eqref{EQ:exp_aver_bound_Mainterm_rec_21_1}, \eqref{EQ:exp_aver_bound_Mainterm_rec_21_5}, \eqref{EQ:exp_aver_bound_Mainterm_rec_21_6},  \eqref{EQ:exp_aver_bound_Mainterm_rec_21_7} and arguing as in the proofs for the previous estimates  we get the estimate
\begin{equation}
	\begin{aligned}
	\mathbb{E}\Big[\big\lVert \sum_{\boldsymbol{x}\in\mathcal{X}_{\neq}^{k-2}}\mathcal{I}_{2,1}(k,\boldsymbol{x},\iota,t;\hbar) \varphi\big\rVert_{L^2(\R^d)}^2 \Big] \leq  \frac{ C^{k_0}}{  (k_0-2)!} \norm{\varphi}_{L^{2}(\R^d)}^2 +   k_0 C^{k_0}\hbar  |\log(\tfrac{\hbar}{t})|^{k_0+7} \norm{\varphi}_{\mathcal{H}^{4d+4}_\hbar(\R^d)}^2.
	\end{aligned}
\end{equation}
This concludes the proof.
\end{proof}
\begin{lemma}\label{expansion_aver_bound_Mainterm_rec_22}
Assume we are in the setting of Definition~\ref{functions_for_exp_rec_def}. Let  $\iota\in\mathcal{Q}_{k,2,2}$ and let $\varphi \in \mathcal{H}^{4d+4}_\hbar(\R^d)$. Then 
\begin{equation}
	\begin{aligned}
	\MoveEqLeft \mathbb{E}\Big[\big\lVert \sum_{\boldsymbol{x}\in\mathcal{X}_{\neq}^{k-2}}\mathcal{I}_{2,2}(k,\boldsymbol{x},\iota,t;\hbar) \varphi\big\rVert_{L^2(\R^d)}^2 \Big] \leq \frac{ C^{k_0}}{  (k_0-2)!} \norm{\varphi}_{L^{2}(\R^d)}^2 +   k_0 C^{k_0}\hbar  |\log(\tfrac{\hbar}{t})|^{k_0+7} \norm{\varphi}_{\mathcal{H}^{4d+4}_\hbar(\R^d)}^2,
	\end{aligned}
\end{equation}
where the constant $C$ depends on $\rho$, $t$, the single site potential $V$ and the coupling constant $\lambda$. In particular we have that the function is in  $L^2(\R^d)$ $\Pro$-almost surely. 
\end{lemma}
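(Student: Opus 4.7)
The strategy mirrors the proof of Lemma~\ref{expansion_aver_bound_Mainterm_rec_21}, so I will proceed in parallel, pointing out where the two-scatterer structure of $\mathcal{Q}_{k,2,2}$ changes the bookkeeping. First, I expand the $L^{2}$-norm and split the resulting double sum over $\mathcal{X}_{\neq}^{k-2}$ according to how many points coincide and which indices coincide, giving the standard decomposition
\begin{equation*}
\big\lVert \sum_{\boldsymbol{x}\in\mathcal{X}_{\neq}^{k-2}}\mathcal{I}_{2,2}(k,\boldsymbol{x},\iota,t;\hbar)\varphi\big\rVert_{L^2}^2 = \sum_{\alpha,\tilde\alpha\in\N^k}\sum_{n=0}^{k-2}\sum_{\sigma^1,\sigma^2\in\mathcal{A}(k-2,n)}\sum_{\kappa\in\mathcal{S}_n}(i\lambda)^{|\alpha|}(-i\lambda)^{|\tilde\alpha|}\mathcal{T}(n,\sigma^1,\sigma^2,\alpha,\tilde\alpha,\kappa).
\end{equation*}
As before, I split into the subcases $\kappa=\mathrm{id}$ and $\kappa\neq\mathrm{id}$, and I work with the kernel representation of $\mathcal{I}_{2,2}$ after performing a change of variables of the momentum variables adapted to $\iota$. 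The difference from the $\mathcal{Q}_{k,2,1}$ case is that the $\iota^{*}$ associated with $\mathcal{Q}_{k,2,2}$ comes from two \emph{independent} pairs $(m_{1,1},m_{1,2})$ and $(m_{2,1},m_{2,2})$, so I introduce two independent auxiliary momenta $\tilde{p}_1,\tilde{p}_2$ (and $\tilde{q}_1,\tilde{q}_2$ on the dual side) via
\begin{equation*}
p_{m_{j,2}-1}\mapsto p_{m_{j,2}-1}-p_{m_{j,2}},\qquad p_{m}\mapsto p_{m}-p_{m_{j,2}-1}\quad\text{for }m_{j,1}\leq m< m_{j,2}-1,\ j=1,2.
\end{equation*}
This yields an analogue of the decomposition \eqref{EQ:exp_aver_bound_Mainterm_rec_21_3} with cutoff functions $\pi_m^1,\pi_m^2$ now taking the three values $0,\tilde{p}_1,\tilde{p}_2$ according to which (if any) of the two recollision segments $m$ belongs to.

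For $\kappa=\mathrm{id}$ I split further depending on whether, for each $j\in\{1,2\}$, there exist indices $i_j,\tilde i_j$ with $\sigma^1_{i_j-1}\leq m_{j,1}<m_{j,2}-1<\sigma^1_{i_j}$ and analogously for $\sigma^2$ (the recollision segments sitting inside a single $\sigma$-block). In the fully nested subcase, the shifts $\tilde{p}_j\mapsto\tilde{p}_j+p_{i_j-1}$ and $\tilde{q}_j\mapsto\tilde{q}_j+p_{\tilde i_j-1}$ turn every quadratic phase into a phase depending on a single momentum variable, so that I can invoke Lemma~\ref{app_quadratic_integral_tech_est} in $(p_1,\dots,p_n,\tilde{p}_1,\tilde{p}_2,\tilde{q}_1,\tilde{q}_2)$ exactly as in \eqref{EQ:exp_aver_bound_Mainterm_rec_21_5} and obtain a clean factorial gain
\begin{equation*}
|\Aver{\mathcal{T}(n,\sigma^1,\sigma^2,\alpha,\tilde{\alpha},\mathrm{id})}| \leq C_d^{a_k+\tilde{a}_k+n+4}\,\|\hat{V}\|_{1,\infty,3d+3}^{|\alpha|+|\tilde{\alpha}|}\,\|\varphi\|_{L^2}^{2}\,\frac{(\rho t)^{2k-n-4}}{(k-2)!(k-n-2)!}.
\end{equation*}
In the non-nested subcase (and in the case $\kappa\neq\mathrm{id}$) I resort to the $\nu$-representation: introduce auxiliary variables $s_0,\tilde s_0$ and regularisers $e^{(\hbar^{-1}t-\boldsymbol{s}_{0,k}^{+}-\boldsymbol{t}_{1,a_k}^{+})\zeta}$ together with the Fourier representation of the two time delta-functions, choose a cutoff $f(\boldsymbol{s},\boldsymbol{\tilde s})$ that omits exactly the $s_{\iota^{*}(\sigma)}$- and $s_{m_{j,2}-1}$-variables (and $s_{\iota^{*}(i^{*})}$ when $\kappa\neq\mathrm{id}$), and estimate as in the proof of Lemma~\ref{expansion_aver_bound_Mainterm_rec} using Lemma~\ref{LE:est_res_combined} and repeated applications of Lemma~\ref{LE:resolvent_int_est}. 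This produces, for each such subcase, the bound
\begin{equation*}
|\Aver{\mathcal{T}(n,\sigma^1,\sigma^2,\alpha,\tilde{\alpha},\kappa)}| \leq C_d^{a_k+\tilde{a}_k+n}\,\|\hat{V}\|_{1,\infty,5d+5}^{|\alpha|+|\tilde{\alpha}|}\,\hbar|\log(\zeta)|^{n+7}\,\frac{\rho(\rho t)^{2k-n-5}}{(k-3)!(k-n-2)!}\|\varphi\|_{\mathcal{H}^{4d+4}_\hbar}^{2}.
\end{equation*}

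Summing over $n$, $\sigma^1,\sigma^2,\kappa,\alpha,\tilde\alpha$, using the assumption on $\lambda$ and $\hat V$ to absorb the $|\alpha|+|\tilde\alpha|$ sum, and substituting $\zeta\geq\hbar/t$ produces the claimed estimate. The \emph{main obstacle} is the non-nested subcase of $\kappa=\mathrm{id}$: because the two recollision segments are no longer constrained to lie in the same $\sigma$-block, several $p_i$-variables appear simultaneously in the phases associated with $\tilde{p}_1$ and $\tilde{p}_2$, and the ``denominator bookkeeping'' after inserting the weights $\langle\tilde p_1\rangle,\langle\tilde p_2\rangle,\langle\tilde q_1\rangle,\langle\tilde q_2\rangle$ becomes more delicate than in Lemma~\ref{expansion_aver_bound_Mainterm_rec_21}. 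The key observation is that at least one of the two pairs is ``crossing'' a $\sigma$-boundary, so that one can apply Lemma~\ref{LE:est_res_combined} to that pair exactly as in \eqref{EQ:exp_aver_b_Mainterm_rec_12}--\eqref{EQ:exp_aver_b_Mainterm_rec_13}, producing the $|q_{l_2}-q_{l_1}|^{-1}$ factor that couples to a second application of Lemma~\ref{LE:resolvent_int_est}; the other pair contributes only a bounded integral because the corresponding $\tilde p$-variable can be absorbed into a convergent $\langle \cdot\rangle^{-d-1}$-weight. This gives the same $|\log(\zeta)|$-counting as before, and the proof concludes.
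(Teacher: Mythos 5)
Your proposal follows essentially the same route as the paper: the paper's own proof of this lemma is only a short sketch stating that one repeats the argument of Lemma~\ref{expansion_aver_bound_Mainterm_rec_21} after performing precisely the two-pair change of variables you describe (shifting at $m_{1,2}-1$ and $m_{2,2}-1$ and propagating back to $m_{1,1}$, resp.\ $m_{2,1}$, excluding the already relabelled index) and then splitting into cases according to the relation between the pairings $\sigma^1,\sigma^2,\kappa$ and the numbers attached to $\iota$, with the Fourier method (Lemma~\ref{app_quadratic_integral_tech_est}) in the nested case and the $\nu$-representation with Lemmas~\ref{LE:est_res_combined} and~\ref{LE:resolvent_int_est} otherwise. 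Your intermediate bounds match those of the $\mathcal{Q}_{k,2,1}$ proof, so the write-up is consistent with (and somewhat more explicit than) the paper's argument.
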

\begin{proof}
The proof is analogous to that of Lemma~\ref{expansion_aver_bound_Mainterm_rec_21}. The same form for the function $\mathcal{I}_{2,2}^{\mathrm{rec}}(k,\boldsymbol{x},\alpha,\iota,t;\hbar) \varphi$ is used here. However, we here do the change of variables variables $p_{m_{1,2}-1}\mapsto p_{m_{1,2}-1} - p_{m_{1,2}}$ and $p_{m}\mapsto p_{m} - p_{m_{1,2}-1}$ for all $m\in\{m_{1,2},\dots,m_{1,2}-2\}$, where $m_{1,1}$ and $m_{1,2}$ are numbers associated to the map $\iota$. 
Moreover we also do the change of variables $p_{m_{2,2}-1}\mapsto p_{m_{2,2}-1}-p_{m_{2,2}}$ and $p_m\mapsto p_m-p_{m_{2,2}-1}$ for all $m\in\{m_{2,1},\dots,m_{2,2}-2\}\setminus\{m_{1,2}-1\}$, where  again $m_{2,1}$ and $m_{2,2}$ are numbers associated to the map $\iota$. After this change of variable and a relabelling the proof proceeds as the proof of Lemma~\ref{expansion_aver_bound_Mainterm_rec_21} by dividing into different cases depending on the relations between the pairing relations and the numbers associated to $\iota$.  
\end{proof}
We will omit the proofs of the following Lemmas. Both proofs are done by combining arguments from previous proofs.
\begin{lemma}\label{expansion_aver_bound_Mainterm_rec_21_2}
  Assume we are in the setting of Definition~\ref{def_remainder_k_0}. Let $\varphi \in \mathcal{H}^{5d+5}_\hbar(\R^d)$. Then for any
  $\tau\in\{2,\dots,\tau_0\}$
    \begin{equation*}
	\begin{aligned}
	\MoveEqLeft \mathbb{E}\Big[ \big\lVert \sum_{k_1=1}^{k_0} \sum_{k_2=k_0-k_1+1}^{k_0} \sum_{\iota\in\mathcal{Q}_{k_1+k_2,2,1}} 
	  \sum_{(\boldsymbol{x}_1,\boldsymbol{x}_2)\in \mathcal{X}_{\neq}^{k_1+k_2-2}}   \mathcal{I}_{2,1}(k_1,\boldsymbol{x}_1,\iota,\tfrac{t}{\tau_0};\hbar)  \mathcal{I}_{2,1}(k_2,\boldsymbol{x}_2,\iota,\tfrac{\tau-1}{\tau_0}t;\hbar)  \varphi\big\rVert_{L^2(\R^d)}^2 \Big]
	  \\
	  &
	   \leq \frac{k_0^{12} C^{k_0}}{ \tau_0 (k_0-2)!} \norm{\varphi}_{L^{2}(\R^d)}^2 +  \frac{ k_0^{13} C^{k_0}\hbar }{\tau_0} |\log(\tfrac{\hbar}{t})|^{k_0+9} \norm{\varphi}_{\mathcal{H}^{5d+5}_\hbar(\R^d)}^2,
	 \end{aligned}
\end{equation*}
and
  \begin{equation*}
	\begin{aligned}
	\MoveEqLeft \mathbb{E}\Big[ \big\lVert \sum_{k_1=1}^{k_0} \sum_{k_2=k_0-k_1+1}^{k_0} \sum_{\iota\in\mathcal{Q}_{k_1+k_2,2,1}} 
	 \\
	 &\times  \sum_{(\boldsymbol{x}_1,\boldsymbol{x}_2)\in \mathcal{X}_{\neq}^{k_1+k_2-2}}  \mathcal{I}_{2,1}(k_1+1,(x_{2,k_2},\boldsymbol{x}_1),\iota,\tfrac{t}{\tau_0};\hbar)  \mathcal{I}_{2,1}(k_2,\boldsymbol{x}_2,\iota,\tfrac{\tau-1}{\tau_0}t;\hbar)   \varphi\big\rVert_{L^2(\R^d)}^2 \Big]
	  \\
	   \leq{}&  \frac{k_0^{12} C^{k_0}}{ \tau_0 (k_0-2)!} \norm{\varphi}_{L^{2}(\R^d)}^2 +  \frac{ k_0^{13} C^{k_0}\hbar}{\tau_0}  |\log(\tfrac{\hbar}{t})|^{k_0+9} \norm{\varphi}_{\mathcal{H}^{5d+5}_\hbar(\R^d)}^2,
	 \end{aligned}
\end{equation*}
where the constant $C$ depends on $\rho$, $t$, the single site potential $V$ and the coupling constant $\lambda$. In particular we have that the function is in  $L^2(\R^d)$ $\Pro$-almost surely. 
\end{lemma}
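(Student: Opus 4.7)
The proof plan combines the methodology of Lemma~\ref{expansion_aver_bound_Mainterm_rec_21} (single $\mathcal{I}_{2,1}$ estimate) with the product-operator techniques of Lemma~\ref{expansion_aver_bound_Mainterm_rec_2} and Lemma~\ref{expansion_aver_bound_Mainterm_rec_3}. First I would apply Cauchy--Schwarz (or a simple trivial bound on a sum of vectors) to pull the outer sums $\sum_{k_1}\sum_{k_2}\sum_{\iota\in\mathcal{Q}_{k_1+k_2,2,1}}$ outside of the $L^2$-norm squared. Using Remark~\ref{Remark_grow_of_Q}, the number of terms in these sums is bounded by $Ck_0^2\cdot k_0^3=Ck_0^5$, which accounts for a substantial part of the prefactor. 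After this, I fix $k=k_1+k_2\in\{k_0+1,\dots,2k_0\}$ and expand the $L^2$-norm as a double sum over $\mathcal{X}_{\neq}^{k-2}\times\mathcal{X}_{\neq}^{k-2}$, classifying pairs by how many points they share (parametrised by $n$, $\sigma^1,\sigma^2\in\mathcal{A}(k-2,n)$, $\kappa\in\mathcal{S}_n$).

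Next I would apply Lemma~\ref{LE:Exp_ran_phases} to compute the average over the Poisson process, yielding the factor $\Lambda_n$ and momentum-conservation delta functions. The resulting expression is then split into cases exactly as in the proof of Lemma~\ref{expansion_aver_bound_Mainterm_rec_21}: the case $\kappa=\mathrm{id}$ with both $m_1$ and $m_3-2$ bracketed by single intervals $(\sigma^1_{i_1-1},\sigma^1_{i_1})$ and $(\sigma^2_{i_2-1},\sigma^2_{i_2})$ is handled by performing the change of variables $\tilde{p}_j\mapsto\tilde{p}_j+p_{i_1-1}$ (and analogously for $\tilde{q}_j$) so that every quadratic phase involves only one momentum, followed by a direct application of Lemma~\ref{app_quadratic_integral_tech_est} to each momentum variable. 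This yields the leading term proportional to $\|\varphi\|_{L^2}^2$. The remaining ``bad'' cases (when such bracketing fails, or $\kappa\ne\mathrm{id}$) are treated by the $\nu$-representation trick: insert an $f(\boldsymbol{s},\boldsymbol{\tilde{s}})$ indicator, introduce additional integration variables $s_0,\tilde{s}_0$ and an exponential regulator $e^{\cdots\zeta(\hbar^{-1}t)}$, write delta functions as Fourier integrals, and then estimate the resulting momentum and $\nu,\tilde{\nu}$ integrals using Lemma~\ref{LE:est_res_combined} together with repeated application of Lemma~\ref{LE:resolvent_int_est}. This produces the $\hbar |\log(\hbar/t)|^{k_0+9}$ error term with the $\mathcal{H}^{5d+5}_\hbar$ norm of $\varphi$; the two extra powers of $|\log|$ and the requirement of $5d+5$ derivatives (rather than $4d+4$ in Lemma~\ref{expansion_aver_bound_Mainterm_rec_21}) arise from the additional resolvents introduced by the time-split $\tilde{\nu}_0,\tilde{\nu}_1$ pair, exactly as the passage from Lemma~\ref{expansion_aver_bound_Mainterm} to Lemma~\ref{expansion_aver_bound_Mainterm_2}.

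The time-splitting structure of the product $\mathcal{I}_{2,1}(k_1,\cdot,t/\tau_0)\mathcal{I}_{2,1}(k_2,\cdot,(\tau-1)t/\tau_0)$ is handled as in Lemma~\ref{expansion_aver_bound_Mainterm_2}: combining the two operators into a single expression with a split time domain produces the $\tau_0^{-1}$ factor via the estimate $\int_{[\frac{t}{\tau_0},\frac{\tau t}{\tau_0}]^{k_2-n_2}} 1 \lesssim \tau_0^{n_2-2k_2}t^{2k_2-n_2}$. For the second stated estimate, the additional factor $\mathcal{I}_{2,1}(k_1+1,(x_{2,k_2},\boldsymbol{x}_1),\ldots)$ containing the ``shared'' scatterer $x_{2,k_2}$ is handled as in Lemma~\ref{expansion_aver_bound_Mainterm_3}: the two adjacent $\Theta$-factors at the splitting time centered at $x_{2,k_2}$ are merged by a change of variables into a single effective operator, and $s_{k_2}$ is treated as an internal ``$t$''-variable; the remainder of the argument then proceeds identically.

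The main obstacle will be bookkeeping: each $\iota\in\mathcal{Q}_{k,2,1}$ carries three distinguished indices $m_1<m_2<m_3$, and their interleaving with the pairing data $(\sigma^1,\sigma^2,\kappa)$ and with the split point $k_2$ gives rise to a substantial number of subcases. In each subcase one must identify which quadratic phases can be linearised (hence estimated by the stationary-phase Lemma~\ref{app_quadratic_integral_tech_est}) and which require the $\nu$-representation, and one must then verify that the logarithmic loss from the resolvent estimates is no worse than $|\log(\hbar/t)|^{k_0+9}$. Once this is done, summing the resulting uniform estimate over $n$, over $(\sigma^1,\sigma^2,\kappa)$, over $(k_1,k_2)$, and over $\iota$, and using the bounds on $\#\mathcal{A}(k-2,n)$, $\#\mathcal{S}_n$, and $\#\mathcal{Q}_{k,2,1}$ together with the hypotheses on $\lambda$ and $\hat{V}$, yields precisely the stated bounds.
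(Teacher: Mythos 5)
Your proposal is correct and follows essentially the route the paper intends: the paper omits this proof entirely, stating only that it is obtained by combining the arguments of the single-$\mathcal{I}_{2,1}$ estimate (Lemma~\ref{expansion_aver_bound_Mainterm_rec_21}) with the time-splitting/product techniques of Lemma~\ref{expansion_aver_bound_Mainterm_2} and Lemma~\ref{expansion_aver_bound_Mainterm_3}, which is exactly your plan (counting $\#\mathcal{Q}_{k,2,1}$, pairing via Lemma~\ref{LE:Exp_ran_phases}, stationary phase for ladder terms, $\nu$-representation with Lemmas~\ref{LE:resolvent_int_est} and \ref{LE:est_res_combined} for crossings, and merging the shared scatterer as in Lemma~\ref{expansion_aver_bound_Mainterm_3}). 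One small correction: the factor $\tau_0^{-1}$ comes from the time integrals over the \emph{short} interval $[0,t/\tau_0]$ carried by the $k_1\ge 1$ variables (giving $(t/\tau_0)^{2k_1-n_1}$ as in the proof of Lemma~\ref{expansion_aver_bound_Mainterm_2}), not from the integral over $[\tfrac{t}{\tau_0},\tfrac{\tau t}{\tau_0}]$ as your displayed estimate suggests.
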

\begin{lemma}\label{expansion_aver_bound_Mainterm_rec_22_2}
  Assume we are in the setting of Definition~\ref{def_remainder_k_0}. Let $\varphi \in \mathcal{H}^{5d+5}_\hbar(\R^d)$. Then for any
  $\tau\in\{2,\dots,\tau_0\}$
    \begin{equation*}
	\begin{aligned}
	\MoveEqLeft \mathbb{E}\Big[ \big\lVert \sum_{k_1=1}^{k_0} \sum_{k_2=k_0-k_1+1}^{k_0} \sum_{\iota\in\mathcal{Q}_{k_1+k_2,2,2}} 
	  \sum_{(\boldsymbol{x}_1,\boldsymbol{x}_2)\in \mathcal{X}_{\neq}^{k_1+k_2-2}}   \mathcal{I}_{2,2}(k_1,\boldsymbol{x}_1,\iota,\tfrac{t}{\tau_0};\hbar)  \mathcal{I}_{2,2}(k_2,\boldsymbol{x}_2,\iota,\tfrac{\tau-1}{\tau_0}t;\hbar)  \varphi\big\rVert_{L^2(\R^d)}^2 \Big]
	  \\
	  &
	  \leq  \frac{k_0^{15} C^{k_0}}{ \tau_0 (k_0-2)!} \norm{\varphi}_{L^{2}(\R^d)}^2 +  \frac{ k_0^{16} C^{k_0}\hbar}{\tau_0}  |\log(\tfrac{\hbar}{t})|^{k_0+10} \norm{\varphi}_{\mathcal{H}^{5d+5}_\hbar(\R^d)}^2,
	 \end{aligned}
\end{equation*}
and
  \begin{equation*}
	\begin{aligned}
	\MoveEqLeft  \mathbb{E}\Big[ \big\lVert \sum_{k_1=1}^{k_0} \sum_{k_2=k_0-k_1+1}^{k_0} \sum_{\iota\in\mathcal{Q}_{k_1+k_2,2,2}} 
	  \\
	  &\times \sum_{(\boldsymbol{x}_1,\boldsymbol{x}_2)\in \mathcal{X}_{\neq}^{k_1+k_2-2}}  \mathcal{I}_{2,2}(k_1+1,(x_{2,k_2},\boldsymbol{x}_1),\iota,\tfrac{t}{\tau_0};\hbar)  \mathcal{I}_{2,1}(k_2,\boldsymbol{x}_2,\iota,\tfrac{\tau-1}{\tau_0}t;\hbar)   \varphi\big\rVert_{L^2(\R^d)}^2 \Big]
	  \\
	   \leq{}&  \frac{k_0^{15} C^{k_0}}{ \tau_0 (k_0-2)!} \norm{\varphi}_{L^{2}(\R^d)}^2 +  \frac{ k_0^{16} C^{k_0}\hbar}{\tau_0}  |\log(\tfrac{\hbar}{t})|^{k_0+10} \norm{\varphi}_{\mathcal{H}^{5d+5}_\hbar(\R^d)}^2,
	 \end{aligned}
\end{equation*}
where the constant $C$ depends on the single site potential $V$ and the coupling constant $\lambda$. In particular we have that the function is in  $L^2(\R^d)$ $\Pro$-almost surely. 
\end{lemma}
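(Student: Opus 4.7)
The proof will follow the same structure as Lemma~\ref{expansion_aver_bound_Mainterm_rec_21_2}, combining the time-splitting technique of Lemma~\ref{expansion_aver_bound_Mainterm_2} (and its ``same scatterer at the seam'' version Lemma~\ref{expansion_aver_bound_Mainterm_3}) with the two-recollision arguments already developed for $\mathcal{Q}_{k,2,2}$ in Lemma~\ref{expansion_aver_bound_Mainterm_rec_22}. The only new ingredient is bookkeeping. First I would apply the Cauchy--Schwarz-type estimate used at the start of Lemma~\ref{expansion_aver_bound_Mainterm_rec_22_2}: write
\begin{equation*}
\big\lVert \sum_{k_1,k_2,\iota,\boldsymbol{x}_1,\boldsymbol{x}_2}  \mathcal{I}_{2,2}(k_1,\boldsymbol{x}_1,\iota,\tfrac{t}{\tau_0};\hbar)\mathcal{I}_{2,2}(k_2,\boldsymbol{x}_2,\iota,\tfrac{\tau-1}{\tau_0}t;\hbar)\varphi\big\rVert_{L^2}^2 \le C k_0^{2}\cdot\#\mathcal{Q}_{k,2,2}\cdot (\cdots),
\end{equation*}
with the inner term controlled at each fixed $k=k_1+k_2\in\{k_0+1,\ldots,2k_0\}$, fixed $\iota\in\mathcal{Q}_{k,2,2}$, and fixed split $(k_1,k_2)$. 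Remark~\ref{Remark_grow_of_Q} supplies the $k_0^{4}$ growth of $\#\mathcal{Q}_{k,2,2}$, and together with the $k_0^{2}$ from the double sum over $(k_1,k_2)$ this gives an overall combinatorial prefactor that eventually produces the $k_0^{15}$ and $k_0^{16}$ powers.

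For the inner estimate, I would expand the $L^2$-norm into a double collision series as in \eqref{EQ:exp_aver_bound_Mainterm_rec_21_1} and split into the pairing types $\kappa=\mathrm{id}$ and $\kappa\neq\mathrm{id}$. In each case I would use the \emph{same} change of variables as in Lemma~\ref{expansion_aver_bound_Mainterm_rec_22}, namely $p_{m_{1,2}-1}\mapsto p_{m_{1,2}-1}-p_{m_{1,2}}$ with $p_m\mapsto p_m-p_{m_{1,2}-1}$ for $m\in\{m_{1,1},\dots,m_{1,2}-2\}$, and analogously for the $m_{2,1},m_{2,2}$ block. This reduces the phase to one involving recollision momenta $\tilde p_1,\tilde p_2$ (and $\tilde q_1,\tilde q_2$) which enter only through indicator functions $\pi^1_m(\boldsymbol{\tilde p}),\pi^2_m(\boldsymbol{\tilde p})$ with support controlled by the two blocks specified by $\iota$. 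Then Lemma~\ref{LE:Exp_ran_phases} collapses the Poisson expectation to delta functions.

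The case analysis proceeds exactly as in Lemma~\ref{expansion_aver_bound_Mainterm_rec_22}. If for both $(\sigma^1,i_1)$ and $(\sigma^2,i_2)$ the indices sit inside a single block (so $\pi^1,\pi^2$ can be absorbed by translation), a direct stationary phase argument via Lemma~\ref{app_quadratic_integral_tech_est} applied to $p_1,\dots,p_n,\tilde p_1,\tilde p_2,\tilde q_1,\tilde q_2$ gives the combinatorial first term with $\|\varphi\|_{L^2}^2$. In the remaining cases I would pass to the $\nu,\tilde\nu$-representation with regulariser $\zeta=\zeta(\hbar^{-1}t)$, apply Lemma~\ref{LE:Exp_ran_phases}, reduce to an integral of the form handled in \eqref{EQ:exp_aver_b_Mainterm_rec_12}, and use Lemmas~\ref{LE:est_res_combined} and \ref{LE:resolvent_int_est} to extract the $|\log\zeta|^{n+c}$ factors. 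The two-operator seam introduces an extra $\tilde\nu_2$-variable exactly as in Lemma~\ref{expansion_aver_bound_Mainterm_2}, which is why one needs $\mathcal{H}^{5d+5}_\hbar$ smoothness rather than $\mathcal{H}^{4d+4}_\hbar$. The second version of the estimate (with $(x_{2,k_2},\boldsymbol{x}_1)$) proceeds identically, merging two adjacent $\Theta$-operators sharing the same scatterer into a single $\tilde\Theta$ as in Lemma~\ref{expansion_aver_bound_Mainterm_3}.

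The main obstacle is not any single hard estimate but the careful \emph{case bookkeeping}: the set $\mathcal{Q}_{k,2,2}$ admits substantially more relative positions of the two recollisions with respect to $(\sigma^1,\sigma^2)$ than $\mathcal{Q}_{k,2,1}$, and one must verify in each configuration that the three additional resolvent factors introduced by the recollisions and the seam can still be integrated using Lemma~\ref{LE:resolvent_int_est} and Lemma~\ref{LE:est_res_combined} after inserting the compensating Japanese brackets $\langle q_i-q_{i-1}\rangle^{-d-1}$, $\langle \tilde p_j\rangle^{-d-1}$, $\langle \tilde q_j\rangle^{-d-1}$. Summing the resulting bounds over $n\in\{0,\ldots,k-2\}$, the permutations $\kappa$, the pairings $\sigma^1,\sigma^2$, and finally over $k,k_1,k_2,\iota$ with the combinatorial factors identified above yields exactly the stated inequality, with the $|\log(\hbar/t)|^{k_0+10}$ exponent accounting for the $n+7$ factors from the two recollisions plus three from the seam resolvents.
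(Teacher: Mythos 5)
Your proposal matches the paper's (omitted) argument: the paper itself only states that this lemma follows by combining the time-splitting/seam arguments of Lemmas~\ref{expansion_aver_bound_Mainterm_2} and~\ref{expansion_aver_bound_Mainterm_3} with the $\mathcal{Q}_{k,2,2}$ case analysis of Lemma~\ref{expansion_aver_bound_Mainterm_rec_22}, exactly as you do, and your use of Lemma~\ref{LE:Exp_ran_phases}, the stationary-phase bound, the $\nu$-representation with Lemmas~\ref{LE:resolvent_int_est} and~\ref{LE:est_res_combined}, and Remark~\ref{Remark_grow_of_Q} for the $\#\mathcal{Q}_{k,2,2}$ count is the intended route (note only the small slip where you cite the lemma being proved for the initial Cauchy--Schwarz step; you mean the opening reduction of Lemma~\ref{expansion_aver_bound_Mainterm_rec_2}/\ref{expansion_aver_bound_Mainterm_rec_21_2}).
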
%
\subsection{Estimates for truncated terms}
\begin{lemma}\label{expansion_aver_bound_Mainterm_rec_trun2j}
Assume we are in the setting of Definition~\ref{functions_for_exp_rec_def}. Let $j\in\{1,2\}$ and let $\varphi \in \mathcal{H}^{4d+4}_\hbar(\R^d)$. Then 
\begin{equation*}
	\begin{aligned}
	\MoveEqLeft \mathbb{E}\Big[\big\lVert \sum_{\iota\in\mathcal{Q}_{k_0+1,2,j}} \sum_{\boldsymbol{x}\in\mathcal{X}_{\neq}^{k_0-1}} \mathcal{E}_{2,j}^{k_0}(\boldsymbol{x},\iota,t;\hbar)\varphi\big\rVert_{L^2(\R^d)}^2 \Big]
	\\
	&\leq \frac{k_0^{13} C^{k_0}}{ \tau_0^{k_0-2}\hbar (k_0-2)!} \norm{\varphi}_{L^{2}(\R^d)}^2 +   \frac{k_0^{16} C^{k_0}}{ \tau_0^{k_0-4}} \hbar  |\log(\tfrac{\hbar}{t})|^{k_0+7} \norm{\varphi}_{\mathcal{H}^{4d+4}_\hbar(\R^d)}^2,
	\end{aligned}
\end{equation*}
where the constant $C$ depends on the single site potential $V$ and the coupling constant $\lambda$. In particular we have that the function is in  $L^2(\R^d)$ $\Pro$-almost surely. 
\end{lemma}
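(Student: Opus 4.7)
The plan is to combine the strategy of the proof of Lemma~\ref{LE:Truncated_Without_re_bound_Mainterm_1} (moving the outer time integral out of the $L^2$-norm by Jensen's inequality and unitarity) with the two-recollision counting arguments of Lemma~\ref{expansion_aver_bound_Mainterm_rec_21} and Lemma~\ref{expansion_aver_bound_Mainterm_rec_22}. First, I would use unitarity of $U_{\hbar,\lambda}(-s_{k_0})U_{\hbar,0}(s_{k_0})$ together with the Cauchy--Schwarz/Jensen inequality to obtain a bound of the form
\begin{equation*}
\Big\| \sum_{\iota\in\mathcal{Q}_{k_0+1,2,j}}\sum_{\boldsymbol{x}\in\mathcal{X}_{\neq}^{k_0-1}} \mathcal{E}_{2,j}^{k_0}(\boldsymbol{x},\iota,\tfrac{t}{\tau_0};\hbar)\varphi\Big\|_{L^2}^2
\leq \frac{t}{\tau_0}\int_{0}^{t/\tau_0}\Big\|\sum_{\iota,\boldsymbol{x},\alpha}(i\lambda)^{|\alpha|}\tilde{\mathcal{E}}_{2,j}^{k_0}(\boldsymbol{x},s_{k_0},\alpha,\iota,\tfrac{t}{\tau_0};\hbar)\varphi\Big\|_{L^2}^2 ds_{k_0},
\end{equation*}
so that the problem is reduced to bounding, on average, the $L^2$-norm squared of a sum of operators having essentially the same structure as the fully-expanded two-recollision operators of Lemma~\ref{expansion_aver_bound_Mainterm_rec_21}/\ref{expansion_aver_bound_Mainterm_rec_22}, with the distinctions that $\alpha_{k_0+1}=1$ is frozen, one of the time variables ($s_{k_0}$) is now a free parameter rather than being integrated over, and the ordering constraint $\boldsymbol{1}_{[s_{k_0},t/\tau_0]}(s_{k_0-1})$ replaces one of the free time integrals.

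Next, for each fixed $s_{k_0}$, I would expand the $L^2$-norm as a double sum over $\mathcal{X}_{\neq}^{k_0-1}$ and split according to how many points the two configurations share (indices $\sigma^1,\sigma^2$) and the pairing $\kappa\in\mathcal{S}_n$, exactly as in the proofs of Lemma~\ref{expansion_aver_bound_Mainterm_rec_21} (for $j=1$) and Lemma~\ref{expansion_aver_bound_Mainterm_rec_22} (for $j=2$). The case analysis is then further subdivided according to the position of the recollision indices $m_1,m_2,m_3$ (resp.\ $m_{1,1},m_{1,2},m_{2,1},m_{2,2}$) relative to $\sigma^1,\sigma^2$, and in each case one applies the change of variables isolating the ``$\tilde{p}_1,\tilde{p}_2$'' momenta across the recollision segments, together with Lemma~\ref{LE:Exp_ran_phases}, the stationary-phase estimate Lemma~\ref{app_quadratic_integral_tech_est} in the $t,\tilde{t}$ variables, and in the non-trivial pairings the resolvent estimates of Lemmas~\ref{LE:resolvent_int_est} and \ref{LE:est_res_combined}, producing in each case the factor of type $(C\log(\hbar/t))^{n+7}$ and a combinatorial time-integration factor of order $(t/\tau_0)^{2k_0-n-3}/((k_0-2)!(k_0-n-2)!)$.

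The net effect of removing $s_{k_0}$ from the integration (and then integrating it back at the outer level at a cost of only $t/\tau_0$) compared to the fully-expanded setting is the loss of one power of $\hbar$ in the ``first'' (non-log) term, producing the $\hbar^{-1}$ present in the stated estimate; one verifies this exactly as in the step from Lemma~\ref{expansion_aver_bound_Mainterm_rec} to Lemma~\ref{expansion_aver_bound_Mainterm_rec_trun}. Finally I would sum over $\iota\in\mathcal{Q}_{k_0+1,2,j}$; by Remark~\ref{Remark_grow_of_Q} this set has polynomial cardinality in $k_0$ (cubic for $j=1$, quartic for $j=2$), which together with Cauchy--Schwarz over $\iota$ contributes the polynomial prefactor $k_0^{13}$ (respectively $k_0^{16}$) collected in the statement; the extra two powers of $\tau_0$ between the two terms of the bound originate exactly as in Lemma~\ref{expansion_aver_bound_Mainterm_rec_trun_2}, from the fact that in the case $\kappa=\mathrm{id}$ (and favourable $\sigma$'s) one uses Lemma~\ref{app_quadratic_integral_tech_est} directly in the $p$--variables, while in the remaining pairings one uses instead the $\nu$-representation, yielding different time-integral combinatorics.

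The main obstacle will be bookkeeping: one must carefully track how the ordering constraint $\boldsymbol{1}_{[s_{k_0},t/\tau_0]}(s_{k_0-1})$ interacts with the case analysis on the recollision positions, to ensure that the right time-integral always remains available to be converted into either a $t/\tau_0$ factor or, after the $\hbar^{-1}$ loss, into the logarithmic factors supplied by Lemma~\ref{LE:est_res_combined}. Once this bookkeeping is done, the argument is a routine fusion of the techniques already developed in Sections~\ref{Sec:tech_est_0_recol}, \ref{Sec:tech_est_1_recol} and the present section.
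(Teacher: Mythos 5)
Your proposal is correct and follows essentially the same route as the paper, whose own proof consists precisely of the instruction to combine the Jensen/unitarity reduction of Lemma~\ref{LE:Truncated_Without_re_bound_Mainterm_1} with the two-recollision expansion and counting arguments of Lemmas~\ref{expansion_aver_bound_Mainterm_rec_21} and~\ref{expansion_aver_bound_Mainterm_rec_22} and the cardinality bounds of Remark~\ref{Remark_grow_of_Q}; your accounting of the extra $\hbar^{-1}$ and of the $\tau_0$-powers matches the mechanism already seen in passing from Lemma~\ref{expansion_aver_bound_Mainterm_rec} to Lemma~\ref{expansion_aver_bound_Mainterm_rec_trun}. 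Note also that your use of the time argument $t/\tau_0$ is the intended reading of the statement (as in Definition~\ref{def_remainder_k_0} and Lemma~\ref{expansion_aver_bound_Mainterm_rec_trun}), so no further adjustment is needed.
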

\begin{proof}
The proof is done by combining the arguments of the previous proofs. We do remark that in the case $j=1$ we can get $k_0^9$ instead of $k_0^{12}$ but we have choosen to give the general estimate.
\end{proof}
The proof of the following Lemma is also done by combining arguments from previous proofs. Here we have also chosen to give a general estimate for both types of configurations. For $j=1$ we could have had $k_0^{11}$ and not $k_0^{14}$. 
\begin{lemma}\label{expansion_aver_bound_Mainterm_rec_trun_j_2}
  Assume we are in the setting of Definition~\ref{def_remainder_k_0}. Let $j\in\{1,2\}$ and let $\varphi \in \mathcal{H}^{5d+5}_\hbar(\R^d)$. Then for any
  $\tau\in\{2,\dots,\tau_0\}$
    \begin{equation*}
	\begin{aligned}
	\MoveEqLeft \mathbb{E}\Big[ \big\lVert \sum_{k_2=1}^{k_0}  \sum_{\iota\in\mathcal{Q}_{k_0+k_2+1,2,j}}   \sum_{(\boldsymbol{x}_1,\boldsymbol{x}_2)\in \mathcal{X}_{\neq}^{k_0+k_2-1}}  \mathcal{E}_{2,j}^{k_0}(\boldsymbol{x}_1,\iota,\tfrac{t}{\tau_0};\hbar) \mathcal{I}_{2,j}(k_2,\boldsymbol{x}_2,\iota,\tfrac{\tau-1}{\tau_0}t;\hbar) \varphi\big\rVert_{L^2(\R^d)}^2 \Big]
	  \\
	  &\leq \frac{k_0^{15} C^{k_0}}{ \tau_0^{k_0-2}\hbar (k_0-2)!} \norm{\varphi}_{L^{2}(\R^d)}^2 +   \frac{k_0^{17} C^{k_0}}{ \tau_0^{k_0-4}} \hbar  |\log(\tfrac{\hbar}{t})|^{k_0+7} \norm{\varphi}_{\mathcal{H}^{5d+5}_\hbar(\R^d)}^2,
	 \end{aligned}
\end{equation*}
and
    \begin{equation*}
	\begin{aligned}
	\MoveEqLeft \mathbb{E}\Big[ \big\lVert  \sum_{k_2=1}^{k_0}  \sum_{\iota\in\mathcal{Q}_{k_0+k_2+1,2,j}}   \sum_{(\boldsymbol{x}_1,\boldsymbol{x}_2)\in \mathcal{X}_{\neq}^{k_0+k_2-1}}  \mathcal{E}_{2,j}^{k_0+1}((x_{2,k_2},\boldsymbol{x}_1),\iota,\tfrac{t}{\tau_0};\hbar) \mathcal{I}_{2,j}(k_2,\boldsymbol{x}_2,\iota,\tfrac{\tau-1}{\tau_0}t;\hbar)  \varphi\big\rVert_{L^2(\R^d)}^2 \Big]
	  \\
	   &
	   \leq \frac{k_0^{15} C^{k_0}}{ \tau_0^{k_0-2}\hbar (k_0-2)!} \norm{\varphi}_{L^{2}(\R^d)}^2 +   \frac{k_0^{17} C^{k_0}}{ \tau_0^{k_0-4}} \hbar  |\log(\tfrac{\hbar}{t})|^{k_0+7} \norm{\varphi}_{\mathcal{H}^{5d+5}_\hbar(\R^d)}^2,
	 \end{aligned}
\end{equation*}
where the constant $C$ depends on the single site potential $V$ and the coupling constant $\lambda$. In particular we have that the function is in  $L^2(\R^d)$ $\Pro$-almost surely. 
\end{lemma}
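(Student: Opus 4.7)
The plan is to follow the template of Lemma~\ref{expansion_aver_bound_Mainterm_rec_trun_2} (the analogous one-recollision mixed estimate) while incorporating the two-recollision machinery developed in Lemma~\ref{expansion_aver_bound_Mainterm_rec_21_2} and Lemma~\ref{expansion_aver_bound_Mainterm_rec_22_2}. The starting observation is that by Definition~\ref{functions_for_exp_rec_def} the operator $\mathcal{E}_{2,j}^{k_0}$ carries an outer $ds_{k_0}$--integral over $[0,t/\tau_0]$, inside of which stands the product $U_{\hbar,\lambda}(-s_{k_0})U_{\hbar,0}(s_{k_0})$. Unitarity of both evolutions together with Jensen's inequality pulls this integral outside of the $L^{2}$-norm squared, giving up a factor $t/\tau_0$, and reduces both stated inequalities to estimating, uniformly in $s_{k_0}\in[0,t/\tau_0]$, the mean-square norm of the object obtained by replacing $\mathcal{E}_{2,j}^{k_0}$ by its kernel part $\tilde{\mathcal{E}}_{2,j}^{k_0}$. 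This step is exactly the step carried out in \eqref{EQ:Truncated_Without_re_bound_Mainterm_1_1}--\eqref{EQ:Truncated_Without_re_bound_Mainterm_1_2}; the only arithmetic change is that the innermost $\boldsymbol{s}$-integral runs over $[s_{k_0},t/\tau_0]^{k_0-1}_{\leq}$ instead of over $[0,t/\tau_0]^{k_0}_{\leq}$.

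From this point onward the integrand has the same structure as the square already estimated in Lemma~\ref{expansion_aver_bound_Mainterm_rec_21_2} (respectively Lemma~\ref{expansion_aver_bound_Mainterm_rec_22_2}) for $j=1$ (resp.\ $j=2$), so I expand the $L^{2}$-norm as a double collision series, label the common-point pairings by $(\sigma^1,\sigma^2)\in\tilde{\mathcal{A}}(k_0,\cdot,k_2,\cdot)^2$ as in \eqref{orderset_def_2}, and split the analysis into the two regimes $\kappa=\mathrm{id}$ and $\kappa\neq\mathrm{id}$ after applying Lemma~\ref{LE:Exp_ran_phases}. For $\kappa=\mathrm{id}$ and those pairings which bracket both recollision positions $m_1<m_2<m_3$ (resp.\ $m_{1,2},m_{2,2}$) associated to $\iota$, I use the $t$-representation of the kernels, perform the momentum shifts $p_{m_i-1}\mapsto p_{m_i-1}-p_{m_i}$ that introduce the auxiliary variables $\tilde{p}_1,\tilde{p}_2,\tilde{q}_1,\tilde{q}_2$ exactly as in the proof of Lemma~\ref{expansion_aver_bound_Mainterm_rec_21}, then apply Lemma~\ref{app_quadratic_integral_tech_est} in all momentum variables to extract the $(\rho t)^{2k-n-4}$ main bound. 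For every other configuration of $(\sigma^1,\sigma^2,\kappa)$ I pass to the $\nu$-representation: introduce $s_0$, $\tilde{s}_0$, write the two time-total delta functions as Fourier transforms of $1$, insert the damping factors $e^{(\cdots)\zeta}$ with $\zeta=\zeta(\hbar^{-1}t)$, and control the resulting momentum integrals by one application of Lemma~\ref{LE:est_res_combined} together with repeated use of Lemma~\ref{LE:resolvent_int_est}. This second regime is the source of the $\hbar|\log(\hbar/t)|^{k_0+7}$ error and of the need for $\|\varphi\|_{\mathcal{H}^{5d+5}_\hbar}$ (one extra power of $\langle q\rangle^{d+1}$ per additional $\nu$-variable, exactly as in \eqref{EQ:exp_aver_b_er_mainterm_6}--\eqref{EQ:exp_aver_b_er_mainterm_7}).

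The two stated inequalities differ only by whether the last scatterer of $\mathcal{I}_{2,j}$ coincides with the first scatterer of $\mathcal{E}_{2,j}^{k_0}$. This is handled as in the proof of Lemma~\ref{expansion_aver_bound_Mainterm_3}: after the time shift $s_m\mapsto s_m+t/\tau_0$ used to glue the two collision series, the two adjacent $\Theta$-operators attached to the common scatterer are fused by the substitution $s_{k_2}\mapsto s_{k_2}-s_{k_2+1}$ into a single operator $\tilde{\Theta}$ whose kernel has the same form as a single $\Theta$ but with one extra internal momentum and one extra quadratic-phase time, to be treated as an $\eta$-variable in the stationary-phase step. This buys an additional $\hbar$ in the combinatorics and one additional factor of $\tau_0^{-1}$ from the extra time integration, which is absorbed into the stated prefactors. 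The combinatorial bookkeeping uses Remark~\ref{Remark_grow_of_Q}: the sum over $\iota\in\mathcal{Q}_{k_0+k_2+1,2,j}$ contributes a factor bounded by $Ck_0^{3}$ for $j=1$ and $Ck_0^{4}$ for $j=2$; together with the $k_0$ from the sum over $k_2\in\{1,\dots,k_0\}$, the $k_0^{2}$ from $n!/(k_0-2)!\leq k_0^{2}$ appearing in the $\kappa\neq\mathrm{id}$ regime, and the $\hbar^{-1}$ or $|\log(\hbar/t)|^{k_0+7}$ powers from the previous paragraph, one arrives at the stated polynomial prefactors $k_0^{15}$ and $k_0^{17}$ uniformly in $j$.

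The main obstacle is purely combinatorial: one must track simultaneously the map $\iota$ recording the two recollision positions, the two pairings $(\sigma^1,\sigma^2)$ which must be split according to whether their indices fall into the $\mathcal{E}$-block or the $\mathcal{I}$-block per \eqref{orderset_def_2}, and the permutation $\kappa$, to ensure that the damping factor $|q_{\ell_{2}}-q_{\ell_{1}}|^{-1}$ produced by Lemma~\ref{LE:est_res_combined} is attached to a momentum that can then be integrated against one of the resolvent-type factors via Lemma~\ref{LE:resolvent_int_est}. No new analytic technique beyond what appears in Section~\ref{Sec:tech_est_0_recol}--\ref{Sec:tech_est_2_recol} is needed; the estimate is obtained by unifying the outer-integral truncation of Lemma~\ref{LE:Truncated_Without_re_bound_Mainterm_1} with the $\kappa$-split stationary-phase--versus--$\nu$-regularisation dichotomy used throughout this section, and by appealing to Remark~\ref{Remark_grow_of_Q} for the polynomial counting of two-recollision maps.
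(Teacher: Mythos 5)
Your proposal matches the paper's intended argument: the paper gives no explicit proof of this lemma, stating only that it follows by combining the arguments of the previous proofs (the outer-time-integral truncation of Lemma~\ref{LE:Truncated_Without_re_bound_Mainterm_1}, the time-division/shared-scatterer fusing of Lemmas~\ref{expansion_aver_bound_Mainterm_2} and~\ref{expansion_aver_bound_Mainterm_3}, the two-recollision Fourier/$\nu$-representation dichotomy of Lemmas~\ref{expansion_aver_bound_Mainterm_rec_21_2} and~\ref{expansion_aver_bound_Mainterm_rec_22_2}, and the counting of Remark~\ref{Remark_grow_of_Q}), which is precisely the combination you spell out. Your write-up is a faithful, somewhat more detailed rendering of that same route, so no comparison or correction is needed.
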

\section{Recollision error terms for $\mathcal{Q}_{k,2,1}$}\label{Sec:tech_est_truncated_recol_1}
We will in this section prove estimates for some of the terms in the ``recollision''-error. The techniques used will mostly be the same as in the previous section. We will however in this case  need to use some different techniques when we estimate the terms, where $\kappa$ is the identity. 
\begin{lemma}\label{expansion_aver_bound_Mainterm_rec_trun21_1}
Assume we are in the setting of Definition~\ref{def_recol_reminder}. Let $\varphi \in \mathcal{H}^{5d+5}_\hbar(\R^d)$. Then 
\begin{equation*}
	\begin{aligned}
	\mathbb{E}\Big[\big\lVert \sum_{k_1=5}^{k_0} \sum_{\iota\in\mathcal{Q}_{k_1,2,1}}   \sum_{\boldsymbol{x}_1\in \mathcal{X}_{\neq}^{k_1-2}}  \mathcal{E}_{2,1}^{\mathrm{rec}}(k_1,0,\boldsymbol{x}_1,\iota,t;\hbar)\varphi\big\rVert_{L^2(\R^d)}^2 \Big]
	\leq k_0^{16} C^{2k_0}  |\log(\tfrac{\hbar}{\tau_0})|^{k_0+7}  \norm{\varphi}_{\mathcal{H}^{5d+5}_\hbar(\R^d)}^2  \hbar,
	\end{aligned}
\end{equation*}
where the constant $C$ depends on the single site potential $V$ and the coupling constant $\lambda$. In particular we have that the function is in  $L^2(\R^d)$ $\Pro$-almost surely. 
\end{lemma}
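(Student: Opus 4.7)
The plan is to run a hybrid of the argument used to estimate fully expanded terms with two recollisions (Lemma~\ref{expansion_aver_bound_Mainterm_rec_21}) and the argument used for the truncated recollision term $\mathcal{E}_{2,j}^{k_0}$ (Lemma~\ref{expansion_aver_bound_Mainterm_rec_trun2j}), treating the extra ``third recollision'' potential $V^{s_{k+1}}_{\hbar,x_l}$ as an additional vertex whose momentum integration we absorb by one more application of the resolvent estimates. First I would exploit unitarity of $U_{\hbar,\lambda}(-s_{k+1})U_{\hbar,0}(s_{k+1})$ to pull it out of the $L^2$ norm, and then apply the Cauchy--Schwarz (Jensen) inequality to extract the $ds_{k+1}$ integral. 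After these two reductions, the inner expression is, for each fixed $s_{k+1}\in[0,t]$, $k_1$, $\iota$, and index $l$ of the third recollision, a truncated fully expanded operator that is structurally identical to those appearing in \Cref{Sec:tech_est_2_recol}, except for an extra free vertex carrying the phase $V^{s_{k+1}}_{\hbar,x_l}$ and a truncation $\boldsymbol{1}_{[s_{k+1},t]}(s_k)$ on the outermost time integral.

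Next I would expand the squared $L^2$ norm, producing a double sum over $\mathcal{X}_{\neq}^{k_1-2}$ which, as in all previous lemmas, I would sort by pairing patterns $(n,\sigma^1,\sigma^2,\kappa)$ and evaluate the Poissonian average via Lemma~\ref{LE:Exp_ran_phases}. I would then split into the cases $\kappa=\mathrm{id}$ versus $\kappa\neq\mathrm{id}$ and, inside $\kappa=\mathrm{id}$, further split according to whether the pair $(m_1,m_3)$ indexing the two recollisions of $\iota$ lies in a single gap of $\sigma^1$ (and of $\sigma^2$) or not. In the fully nested case I would, as in Lemma~\ref{expansion_aver_bound_Mainterm_rec_21}, use the shifts $p_{m_2-1}\mapsto p_{m_2-1}-p_{m_2}$ and $p_{m_3-1}\mapsto p_{m_3-1}-p_{m_3}$ together with the translations $\tilde{p}_j\mapsto\tilde{p}_j+p_{i_1-1}$ (and the analogous ones for $\tilde{q}_j$) to reduce every quadratic phase to a phase in a single momentum variable, then apply Lemma~\ref{app_quadratic_integral_tech_est} coordinate by coordinate; this gives the first term of the stated bound.

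The main obstacle is the non-nested and $\kappa\neq\mathrm{id}$ regime combined with the presence of the third recollision vertex. Here I would pass to the ``$\nu$-representation'': insert the cut-off $f(\boldsymbol{s},\boldsymbol{\tilde{s}})$ that is independent of the time variables conjugate to the phases we wish to produce, introduce auxiliary variables $s_0,\tilde{s}_0,\tilde{s}_{k+1}$ so that two delta functions can be written as Fourier transforms, insert the regularising factors $e^{(\hbar^{-1}t-\cdots)\zeta}$ with $\zeta\sim\hbar/t$, and finally trade the remaining oscillatory phases for resolvent factors $(\tfrac12 q^2\pm\nu+i\zeta)^{-1}$. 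The crucial new ingredient is the extra vertex $V^{s_{k+1}}_{\hbar,x_l}$: it produces one additional momentum integration and one additional resolvent in $\nu$, and the condition $\iota(l)\neq k_1+k$ ensures that this extra vertex is not already identified with one of the two recolliding scatterers, so its momentum transfer is genuinely new. This additional integration I would bound by Lemma~\ref{LE:est_res_combined} (pairing the new resolvent with an adjacent one to extract the singular factor $|q_l-q_{l-1}|^{-1}$) and then clean up by iterated application of Lemma~\ref{LE:resolvent_int_est}. This yields one extra factor $|\log(\zeta)|$ and, critically, one extra factor of $\hbar$ coming from the $\rho\hbar$ per scatterer balanced against the resolvent estimate.

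Finally, I would collect the combinatorial pieces. The sum $\sum_{k_1=5}^{k_0}$ contributes at most $k_0$ times the worst term; Remark~\ref{Remark_grow_of_Q} gives $\#\mathcal{Q}_{k_1,2,1}\lesssim k_0^3$; the inner sum over $l$ contributes at most $k_0$; and the per-term bound, obtained as above, carries a prefactor of order $k_0^{11}$ in direct analogy with Lemma~\ref{expansion_aver_bound_Mainterm_rec_trun2j}. Multiplying these produces the stated $k_0^{16}$. Assumption~\eqref{assump_1} together with Lemma~\ref{psi_m_gamma_est} ensures that the sum over $\alpha,\tilde\alpha\in\N^{k_1}$ converges and only contributes to the constant $C$. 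The $\hbar^{-1}$ appearing in the definition of $\mathcal{E}_{2,1}^{\mathrm{rec}}$ is compensated by the two powers of $\hbar$ that arise, one from the Jensen reduction in $s_{k+1}$ (contributing a factor $t$ and losing one power of $\hbar^{-1}$ against it) and one from the extra recollision vertex treated above, so that a single net power of $\hbar$ survives, in agreement with the claim.
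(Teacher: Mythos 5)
There is a genuine gap in your power counting in $\hbar$, and it is concentrated precisely in the case you propose to dispatch most quickly. After squaring the $L^2$ norm, the prefactor $\hbar^{-1}$ in Definition~\ref{def_recol_error} becomes $\hbar^{-2}$ (the Jensen/Cauchy--Schwarz step in $s_{k+1}$ only costs a factor $t$, it does not ``lose one power of $\hbar^{-1}$''), so every single pairing configuration must deliver at least $\hbar^{3}$ from the estimates in order for the stated bound of order $\hbar$ to come out. Your plan for the $\kappa=\mathrm{id}$, nested configurations is to repeat the argument of Lemma~\ref{expansion_aver_bound_Mainterm_rec_21} (momentum shifts plus Lemma~\ref{app_quadratic_integral_tech_est}) and declare that ``this gives the first term of the stated bound''; but that argument, as in the fully expanded case, produces a bound that is $O(1)$ in $\hbar$ per configuration, and multiplied by $\hbar^{-2}$ it diverges --- note also that the lemma's bound has no $\hbar$-free term at all. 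The mechanism the paper uses here, and which is absent from your proposal, is a dedicated analysis of the \emph{time} integrals: because the map $\iota$ revisits the scatterers at $m_2,m_3$ and the extra vertex sits at $l$, the blocks of time variables $\boldsymbol{s}_{m_1,m_2-1}$, $\boldsymbol{s}_{m_2,m_3-1}$ and $\boldsymbol{s}_{l,k}$ each appear in a non-degenerate phase with an independent momentum, and after the chain of substitutions in the $s$-variables the integration over these blocks (together with the $s_k$ integral) yields $\hbar^{n+7}$ rather than the generic $\hbar^{n}$, i.e.\ exactly the three extra powers needed. Your accounting instead produces at most two powers (one of which, from the Jensen step, is spurious), so the estimate comes out short even for the crossing terms, and fails outright for the ladder terms where no resolvent factors are available.

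Two smaller points in the same direction: the case analysis inside $\kappa=\mathrm{id}$ must track the position of $l$ relative to $m_1,m_2,m_3$ (the paper distinguishes $l>m_3$, $m_1<l<m_3$ with $l\neq m_2$, $l=m_1$, $l<m_1$), since when $l$ interlaces with the recollision interval the Fourier method does not apply and one must switch to the $\nu$-representation even for $\kappa=\mathrm{id}$; and there is a borderline configuration ($n=1$ with $\sigma^1,\sigma^2$ straddling the recollision differently) where a mixed Fourier/resolvent argument is needed. Your use of Lemma~\ref{LE:est_res_combined} and Lemma~\ref{LE:resolvent_int_est} for the crossing terms and your combinatorial bookkeeping ($k_0$ from the sum over $k_1$, $k_0^3$ from $\#\mathcal{Q}_{k_1,2,1}$, $k_0$ from $l$, $n!$ from $\kappa$) are in line with the paper; the missing ingredient is the source of the full $\hbar^3$ per configuration, without which the lemma as stated is not reached.
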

\begin{proof}
From the definition of the operator $\mathcal{E}_{2,1}^{\mathrm{rec}}(k_1,0,\boldsymbol{x}_1,\iota,t;\hbar)$ (Definition~\ref{def_recol_error}) we get that
\begin{equation}\label{EQ:expansion_aver_bound_3.recol_21_0}
	\begin{aligned}
		\MoveEqLeft \big\lVert \sum_{k=5}^{k_0} \sum_{\iota\in\mathcal{Q}_{k},2,1}   \sum_{\boldsymbol{x}\in \mathcal{X}_{\neq}^{k-2}}  \mathcal{E}_{2,1}^{\mathrm{rec}}(k,0,\boldsymbol{x},\iota,t;\hbar)\varphi\big\rVert_{L^2(\R^d)}^2
		\\
		\leq{}& \frac{Ck_0^8 t}{\hbar^2}  \sum_{k=5}^{k_0}   \sum_{\iota\in\mathcal{Q}_{k},2,1} \sum_{l=1, \iota(l)\neq k}^{k-2}  \int_{0}^{t}  \big\lVert \sum_{\alpha\in \N^k}(i\lambda)^{|\alpha|+1}     \sum_{\boldsymbol{x}\in \mathcal{X}_{\neq}^{k-2}}  \tilde{\mathcal{E}}_{2,j}^{\mathrm{rec}}(s_{k+1},k,\boldsymbol{x},\alpha,l,t,\iota;\hbar)\varphi\big\rVert_{L^2(\R^d)}^2 \,ds_{k+1},
	\end{aligned}
\end{equation}
where we have that
\begin{equation*}
	\begin{aligned}
	\tilde{\mathcal{E}}_{2,1}^{\mathrm{rec}}(s_{k+1},k,\boldsymbol{x},\alpha,l,t,\iota;\hbar)
	=    \int_{[0,t]_{\leq}^{k}} \boldsymbol{1}_{[s_{k+1},t]}(s_k)   V^{s_{k+1}}_{\hbar, x_{l}} \prod_{m=1}^k \Theta_{\alpha_m}(s_{m-1},{s}_{m},x_{\iota(m+k_1)};V,\hbar)  \, d\boldsymbol{s}_{k,1}U_{\hbar,0}(-t). 
	\end{aligned}
\end{equation*}
As in the previous proofs we get two sums over the set $\mathcal{X}_{\neq}^{k-2}$, when we take the $L^2$-norm. Again we divide into different cases depending on how many points $x_m$ and $\tilde{x}_m$ they have in common and where they are placed. This yields
\begin{equation}\label{EQ:expansion_aver_bound_3.recol_21_1}
	\begin{aligned}
	\MoveEqLeft \int_0^t \big\lVert \sum_{\alpha\in \N^k}(i\lambda)^{|\alpha|+1}     \sum_{\boldsymbol{x}\in \mathcal{X}_{\neq}^{k-2}}  \tilde{\mathcal{E}}_{2,j}^{\mathrm{rec}}(s_{k+1},k,\boldsymbol{x},\alpha,l,t,\iota;\hbar)\varphi\big\rVert_{L^2(\R^d)}^2\,ds_{k+1}
	\\
	&= \sum_{\alpha,\tilde{\alpha}\in \N^k} \sum_{n=0}^{k-2} \sum_{\sigma^1,\sigma^2\in\mathcal{A}(k-2,n)} \sum_{\kappa\in\mathcal{S}_n} \mathcal{T}(n,\sigma^1,\sigma^2,\alpha,\tilde{\alpha},\kappa),
	\end{aligned}
\end{equation}
where the numbers $\mathcal{T}(n,\sigma^1,\sigma^2,\alpha,\tilde{\alpha},\kappa)$ are given by
\begin{equation*}
	\begin{aligned}
	\MoveEqLeft \mathcal{T}(n,\sigma^1,\sigma^2,\alpha,\tilde{\alpha},\kappa) = \sum_{\boldsymbol{x},\tilde{\boldsymbol{x}}\in \mathcal{X}_{\neq}^{k-2}}    \prod_{i=1}^n \rho^{-1}\delta(x_{\sigma_i^1}- \tilde{x}_{\sigma_{\kappa(i)^2}})
	\\
	&\times  \int_0^t \int \tilde{\mathcal{E}}_{2,1}^{\mathrm{rec}}(s_{k+1},k,\boldsymbol{x},\alpha,l,t,\iota;\hbar)\varphi(x)\overline{\tilde{\mathcal{E}}_{2,1}^{\mathrm{rec}}(s_{k+1},k,\boldsymbol{\tilde{x}},\tilde{\alpha},l,t,\iota;\hbar) \varphi(x)}  \,dx ds_{k+1}.
	\end{aligned}
\end{equation*}
As in the previous proofs we will split the estimate into a number of cases. However we will in all cases use the following form for the function we consider 
\begin{equation*}
	\begin{aligned}
 	\MoveEqLeft \tilde{\mathcal{E}}_{2,1}^{\mathrm{rec}}(s_{k+1},k,\boldsymbol{x},\alpha,l,t,\iota;\hbar)\varphi(x) 
	=  \frac{1}{(2\pi\hbar)^d\hbar^{k}} \int_{\R_{+}^{|\alpha|}} \int \boldsymbol{1}_{[0,t]}( \boldsymbol{s}_{1,k+1}^{+}+ \hbar\boldsymbol{t}_{1,a_k}^{+})
	 e^{ i   \langle  \hbar^{-1}x,p_{k+1} \rangle} 
	  e^{ -i   \langle  \hbar^{-1/d}x_{l},p_{k+1}-p_{k} \rangle}  
	  \\
	  &\times e^{i  s_{k+1} \frac{1}{2}\hbar^{-1} p_{k+1}^2}  \hat{V}(p_{k+1}-p_k)  
	  \prod_{i=1}^{a_k} e^{i  t_{i} \frac{1}{2} \eta_{i}^2} \prod_{m=1}^k  e^{ -i   \langle  \hbar^{-1/d}x_{\iota(m)},p_{m}-p_{m-1} \rangle}  
	 e^{i \hbar^{-1}  s_{m} \frac{1}{2} p_{m}^2}   \hat{\mathcal{V}}_{\alpha_m}(p_m,p_{m-1},\boldsymbol{\eta} )      
	 \\
	 &\times e^{i  \hbar^{-1}(t- \boldsymbol{s}_{1,k+1}^{+}- \hbar \boldsymbol{t}_{1,a_k}^{+}) \frac{1}{2} p_0^2} \hat{\varphi}(\tfrac{p_0}{\hbar}) \, d\boldsymbol{\eta}d\boldsymbol{p}   d\boldsymbol{t} d\boldsymbol{s}.
	\end{aligned}
\end{equation*}
Recall that since $\iota\in\mathcal{Q}_{k,2,1}$ we have the numbers $m_1$, $m_2$ and $m_3$ associated to the map $\iota$ and the ``adjoint map'' $\iota^{*}$.
We start by preforming the change of variables $p_{m_2-1}\mapsto p_{m_2-1} - p_{m_2}$ and $p_{m}\mapsto p_{m} - p_{m_2-1}$ for all $m\in\{m_1,\dots,m_2-2\}$ and  the change of variables $p_{m_3-1}\mapsto p_{m_3-1}-p_{m_3}$ and $p_m\mapsto p_m-p_{m_3-1}$ for all $m\in\{m_1,\dots,m_3-2\}\setminus\{m_2-1\}$. Finally we preform the change of variables $p_k\mapsto p_k -p_{k+1}$ and $p_m\mapsto p_m-p_{k}$ for all $m\in\{l,\dots,k-1\}\setminus\{m_2-1,m_3-1\}$   This yields after a relabelling  
\begin{equation*}
	\begin{aligned}
 	\MoveEqLeft \tilde{\mathcal{E}}_{2,1}^{\mathrm{rec}}(s_{k+1},k,\boldsymbol{x},\alpha,l,t,\iota;\hbar)\varphi(x) 
	\\
	={}&  \frac{1}{(2\pi\hbar)^d\hbar^{k}} \int_{\R_{+}^{|\alpha|}} \int \boldsymbol{1}_{[0,t]}( \boldsymbol{s}_{1,k+1}^{+}+ \hbar\boldsymbol{t}_{1,a_k}^{+})
	 e^{ i   \langle  \hbar^{-1}x,p_{k-2} \rangle} 
	 e^{i  s_{k+1} \frac{1}{2}\hbar^{-1} p_{k-2}^2}   
	 e^{i \hbar^{-1}  s_{m_3} \frac{1}{2} (p_{m_3-2}+\tilde{p}_3 \boldsymbol{1}_{\{l<m_3\}})^2} \hat{V}(-\tilde{p}_3)  
	 \\
	 &\times  e^{i \hbar^{-1}  s_{m_2} \frac{1}{2} (p_{m_2-1}+\tilde{p}_2+\tilde{p}_3\boldsymbol{1}_{\{l<m_2\}})^2}  \hat{\mathcal{V}}_{\alpha_{m_2}}(p_{m_2-1}+\tilde{p}_2+\tilde{p}_3\boldsymbol{1}_{\{l<m_2\}},\tilde{p}_1 +p_{m_2-1}+\tilde{p}_2+\tilde{p}_3\boldsymbol{1}_{\{l<m_2\}},\boldsymbol{\eta} )
	 \\
	 &\times \prod_{i=1}^{a_k} e^{i  t_{i} \frac{1}{2} \eta_{i}^2}   \prod_{m=1}^{k-2}  e^{ -i   \langle  \hbar^{-1/d}x_{m},p_{m}-p_{m-1} \rangle}
	 e^{i \hbar^{-1}  s_{\iota^{*}(m)} \frac{1}{2} (p_{m}+\pi_m^1(\boldsymbol{\tilde{p}}))^2}   \hat{\mathcal{V}}_{\alpha_\iota^{*}(m)}(p_m+\pi_m^1(\boldsymbol{\tilde{p}}),p_{m-1}+\pi_m^2(\boldsymbol{\tilde{p}}),\boldsymbol{\eta} )       
	 \\
	 &\times \hat{\mathcal{V}}_{\alpha_{m_3}}(p_{m_3-2}+\tilde{p}_3\boldsymbol{1}_{\{l<m_3\}},\tilde{p}_2 + p_{m_3-2}+\tilde{p}_3\boldsymbol{1}_{\{l<m_3\}},\boldsymbol{\eta} ) 
	e^{i  \hbar^{-1}(t- \boldsymbol{s}_{1,k+1}^{+}- \hbar \boldsymbol{t}_{1,a_k}^{+}) \frac{1}{2} p_0^2} \hat{\varphi}(\tfrac{p_0}{\hbar}) \, d\boldsymbol{\eta}d\boldsymbol{\tilde{p}}d\boldsymbol{p}   d\boldsymbol{t} d\boldsymbol{s},
	\end{aligned}
\end{equation*}
where we have used the following definition of the functions 
\begin{equation*}
\begin{aligned}
	&\pi_{m}^1(\boldsymbol{\tilde{p}} ) = \tilde{p}_1 \boldsymbol{1}_{\{m_1,\dots,m_2-1\}}(m) + \tilde{p}_2 \boldsymbol{1}_{\{m_1,\dots,m_3-2\}}(m) + \tilde{p}_3 \boldsymbol{1}_{\{\iota^{*[-1]}(l),\dots,k-2\}}(m)
	\\
	&\pi_{m}^2(\boldsymbol{\tilde{p}} ) = \tilde{p}_1 \boldsymbol{1}_{\{m_1+1,\dots,m_2-1\}}(m) + \tilde{p}_2 \boldsymbol{1}_{\{m_1+1,\dots,m_3-2\}}(m) + \tilde{p}_3 \boldsymbol{1}_{\{\iota^{*[-1]}(l+1),\dots,k-2\}}(m),
	\end{aligned}
\end{equation*}
where $\iota^{*[-1]}$ is the inverse of the map $\iota^{*}$ with the convention that $\iota^{*[-1]}(m_1)=m_1$. How we will estimate the different terms will depend on $\kappa$, and the relation between the two $\sigma$'s and $\kappa$. Firstly we split into the case if $\kappa\neq\mathrm{id}$ or $\kappa=\mathrm{id}$. We start with the case where $\kappa=\mathrm{id}$ here we will further split into the four cases $l> m_3$, $m_3>l>m_1 $ with $l\neq m_2$, $l=m_1$ and $l<m_1$.
We start with the case $l> m_3$. Observe then we need $k\geq8$ for this to be possible.  Here we split into different cases depending on $\sigma^1$ and $\sigma^2$. Firstly we assume that $\sigma^1_n,\sigma^2_n\leq l-2$\footnote{It is indeed $l-2$ here since we assume $l>m_2$ and hence have that if $\sigma^1_n,\sigma^2_n\leq l-2$ then will $\iota^{*}(\sigma^1_n),\iota^{*}(\sigma^2_n)\leq l$.} and that there exists  two indices $i_1$ and $i_2$ in $\{1,\dots,n+1\}$ such that
\begin{equation*}
	\sigma_{i_1-1}^1 \leq m_1<m_3-2 < \sigma_{i_1}^1 \quad\text{and}\quad \sigma_{i_2-1}^2 \leq m_1<m_3-2 < \sigma_{i_2}^2.
\end{equation*}
Note that if such indices exists in the case $k=8$, then $n$ can be at most $2$. In general we have that $n\leq l-m_3+m_1$ otherwise such indices can not exists.  For this case we get after applying Lemma~\ref{LE:Exp_ran_phases} and evaluating all integrals involving delta functions that
\begin{equation}\label{EQ:expansion_aver_bound_3.recol_21_2}
	\begin{aligned}
	\MoveEqLeft\Aver{ \mathcal{T}(n,\sigma^1,\sigma^2,\alpha,\tilde{\alpha},\kappa)} = \frac{(\rho(2\pi)^{d})^{2k-n-4}}{(2\pi\hbar)^d\hbar^{n+4}} \int_{\R_{+}^{|\alpha|+|\tilde{\alpha}|+1}} \int \boldsymbol{1}_{[0,t]}( \boldsymbol{s}_{1,k+1}^{+}+ \hbar\boldsymbol{t}_{1,a_k}^{+}) \boldsymbol{1}_{[0,t]}( \boldsymbol{\tilde{s}}_{1,k}^{+}+s_{k+1}+ \hbar\boldsymbol{\tilde{t}}_{1,a_k}^{+}) 
	\\
	&\times 
	  e^{i \hbar^{-1}  s_{m_2} \frac{1}{2} (p_{i_1-1}+\tilde{p}_2)^2} e^{i \hbar^{-1}  s_{m_3} \frac{1}{2} p_{i_1-1}^2}  
	  e^{-i \hbar^{-1}  \tilde{s}_{m_2} \frac{1}{2} (p_{i_2-1}+\tilde{q}_2)^2}  e^{-i \hbar^{-1}  \tilde{s}_{m_3} \frac{1}{2} p_{i_2-1}^2}
	    \prod_{i=1}^{a_k} e^{i  t_{i} \frac{1}{2} \eta_{i}^2}  \prod_{i=1}^{\tilde{a}_k} e^{-i  \tilde{t}_{i} \frac{1}{2} \xi_{i}^2}
	  \\
	  &\times     \prod_{m=\sigma_n^1+1}^{k-1} e^{i \hbar^{-1}  s_{\iota^{*}(m)} \frac{1}{2} (p_{n}+\pi_m^1(\boldsymbol{\tilde{p}}) )^2}    \prod_{m=\sigma_n^2+1}^{k-1} e^{-i \hbar^{-1}  \tilde{s}_{\iota^{*}(m)} \frac{1}{2} (p_{n}+\pi_m^1(\boldsymbol{\tilde{q}}) )^2} 
	  \prod_{i=1}^n \Big\{ e^{i \hbar^{-1}  s_{\iota^{*}(\sigma_{i}^1)} \frac{1}{2} (p_{i}+\pi_{\sigma_{i}^1}^1(\boldsymbol{\tilde{p}}) )^2}
	  \\
	  &\times   
	 e^{-i \hbar^{-1}  \tilde{s}_{\iota^{*}(\sigma_{i}^2)} \frac{1}{2} (p_{i}+\pi_{\sigma_{i}^2}^1(\boldsymbol{\tilde{q}}) )^2} 
	 \prod_{m=\sigma_{i-1}^1+1}^{\sigma_{i}^1-1} e^{i \hbar^{-1}  s_{\iota^{*}(m)} \frac{1}{2} (p_{i-1}+\pi_m^1(\boldsymbol{\tilde{p}}) )^2}   
	 \prod_{m=\sigma_{i-1}^2+1}^{\sigma_{i}^2-1} e^{-i \hbar^{-1}  s_{\iota^{*}(m)} \frac{1}{2} (p_{i-1}+\pi_m^1(\boldsymbol{\tilde{q}}) )^2}  \Big\}  
	  \\
	  &\times \mathcal{G}(\boldsymbol{\tilde{p}},\boldsymbol{\tilde{q}},\boldsymbol{q},\boldsymbol{\eta},\boldsymbol{\xi},\sigma^1,\sigma^2,\alpha,\tilde{\alpha}) 
	e^{i  \hbar^{-1}(\boldsymbol{\tilde{s}}_{1,k}^{+}- \boldsymbol{s}_{1,k}^{+}+ \hbar( \boldsymbol{\tilde{t}}_{1,\tilde{a}_k}^{+}- \boldsymbol{t}_{1,a_k}^{+})) \frac{1}{2} p_0^2} |\hat{\varphi}(\tfrac{p_0}{\hbar})|^2
	  \, d\boldsymbol{\eta}d\boldsymbol{\tilde{p}}d\boldsymbol{\xi}d\boldsymbol{\tilde{q}}d\boldsymbol{p}   d\boldsymbol{t} d\boldsymbol{s}, 
	\end{aligned}
\end{equation}
where the function $ \mathcal{G}(\boldsymbol{\tilde{p}},\boldsymbol{\tilde{q}},\boldsymbol{q},\boldsymbol{\eta},\boldsymbol{\xi},\sigma^1,\sigma^2,\alpha,\tilde{\alpha}) $ is defined by
\begin{equation}\label{EQ:def_G_recol_21}
	\begin{aligned}
	\MoveEqLeft \mathcal{G}(\boldsymbol{\tilde{p}},\boldsymbol{\tilde{q}},\boldsymbol{q},\boldsymbol{\eta},\boldsymbol{\xi},\sigma^1,\sigma^2,\alpha,\tilde{\alpha}) = \hat{V}(-\tilde{p}_3)  \hat{\mathcal{V}}_{\alpha_{m_2}}(p_{i_1-1}+\tilde{p}_2,\tilde{p}_1 +p_{i_1-1}+\tilde{p}_2,\boldsymbol{\eta} ) \hat{\mathcal{V}}_{\alpha_{m_3}}(p_{i_1-1},\tilde{p}_2 + p_{i_1-1},\boldsymbol{\eta} ) 
	 \\
	  &\times \overline{ \hat{V}(-\tilde{q}_3)}  \overline{\hat{\mathcal{V}}_{\alpha_{m_2}}(p_{i_2-1}+\tilde{q}_2,\tilde{q}_1 +p_{i_2-1}+\tilde{q}_2,\boldsymbol{\xi} )} \overline{\hat{\mathcal{V}}_{\alpha_{m_3}}(p_{i_2-1},\tilde{q}_2 + p_{i_2-1},\boldsymbol{\xi} ) }
	  \\
	  &\times \prod_{m=\sigma_n^1+1}^{k-1}   \hat{\mathcal{V}}_{\alpha_{\iota^{*}(m)}}(p_n+\pi_m^1(\boldsymbol{\tilde{p}}),p_{n}+\pi_m^2(\boldsymbol{\tilde{p}}),\boldsymbol{\eta} )  \prod_{m=\sigma_n^2+1}^{k-1} \overline{  \hat{\mathcal{V}}_{\alpha_{\iota^{*}(m)}}(p_n+\pi_m^1(\boldsymbol{\tilde{q}}),p_{n}+\pi_m^2(\boldsymbol{\tilde{q}}),\boldsymbol{\xi} ) }
	 \\
	 &\times \prod_{i=1}^n \Big\{  \hat{\mathcal{V}}_{\alpha_{\iota^{*}(\sigma_{i}^1)}}(p_{i}+\pi_{\sigma_{i}^1}^1(\boldsymbol{\tilde{p}}),p_{i-1}+\pi_{\sigma_{i}^1}^2(\boldsymbol{\tilde{p}}),\boldsymbol{\eta} ) 
	 \prod_{m=\sigma_{i-1}^1+1}^{\sigma_{i}^1-1}   \hat{\mathcal{V}}_{\alpha_{\iota^{*}(m)}}(p_{i-1}+\pi_m^1(\boldsymbol{\tilde{p}}),p_{i-1}+\pi_m^2(\boldsymbol{\tilde{p}}),\boldsymbol{\eta} ) 
	  \\
	  &\times   \overline{  \hat{\mathcal{V}}_{\alpha_{\iota^{*}(\sigma_{i}^2)}}(p_{i}+\pi_{\sigma_{i}^2}^1(\boldsymbol{\tilde{q}}),p_{i-1}+\pi_{\sigma_{i}^2}^2(\boldsymbol{\tilde{q}}),\boldsymbol{\xi} ) }
	  \prod_{m=\sigma_{i-1}^2+1}^{\sigma_{i}^2-1}   \overline{\hat{\mathcal{V}}_{\alpha_{\iota^{*}(m)}}(p_{i-1}+\pi_m^1(\boldsymbol{\tilde{q}}),p_{i-1}+\pi_m^2(\boldsymbol{\tilde{q}}),\boldsymbol{\xi} ) }  \Big\}.
	\end{aligned}
\end{equation}
We now preform the change of variables $\tilde{p}_1\mapsto p_{i_1-1} +\tilde{p}_1+\tilde{p}_2$, $\tilde{p}_2\mapsto  p_{i_1-1} +\tilde{p}_2$, $\tilde{p}_3\mapsto  p_{n} +\tilde{p}_3$, $\tilde{q}_1\mapsto p_{i_2-1} +\tilde{q}_1+\tilde{q}_2$, $\tilde{q}_2\mapsto  p_{i_2-1} +\tilde{q}_2$, $\tilde{q}_3\mapsto  p_{n} +\tilde{q}_3$. After this change of variables the argument is analogous to the case of $\kappa=\mathrm{id}$ in the proof of Lemma~\ref{expansion_aver_bound_Mainterm}, where we use the integration by parts trick. Here we will do it in all $\boldsymbol{\eta}$, $\boldsymbol{\xi}$, $p_1,\dots,p_n$, $\tilde{p}_1,\tilde{p}_2,\tilde{p}_3,\tilde{q}_1,\tilde{q}_2$ and $\tilde{q}_3$ if  either $\sigma_n^1<l-2$ or $\sigma_n^2<l-2$. If $\sigma_n^1=\sigma_n^2=l-2$, we cannot use the argument in the variable $p_n$.  The difference in the two arguments will be in how we estimate the time integrals. The time integrals will in this case be
 \begin{equation}\label{EQ:time_int_est_3_recol_1}
	\begin{aligned}
	 & \int_{\R_{+}^{2k+1}}  \tfrac{\max(1,\hbar^{-1} |s_{m_3}+\sum_{m=\sigma_{i_1-1}^1}^{m_1-1} s_{\iota^{*}(m)}+\sum_{m=m_2-2}^{\sigma_{i_1}^1-1} s_{\iota^{*}(m)}-\sum_{m=\sigma_{i_1-1}^2}^{\sigma_{i_1}^2-1} \tilde{s}_{\iota^{*}(m)}  + \hbar l^1_{i_1-1}(\boldsymbol{t}, \boldsymbol{\tilde{t}}) |)^{-\frac{d}{2}} }
	{ \max(1,\hbar^{-1} |\sum_{m=\sigma_{i_2-1}^1}^{\sigma_{i_2}^1-1} s_{\iota^{*}(m)} - \tilde{s}_{m_3}-\sum_{m=\sigma_{i_2-1}^2}^{m_1-1} \tilde{s}_{\iota^{*}(m)}-\sum_{m=m_2-2}^{\sigma_{i_2}^2-1} \tilde{s}_{\iota^{*}(m)}  + \hbar l^1_{i_2-1}(\boldsymbol{t}, \boldsymbol{\tilde{t}}) |)^{\frac{d}{2}} }
	 \\
	 &\times \prod_{\substack{i=1\\ i\neq i_1-1,i_2-1}}^{n-1}\tfrac{\boldsymbol{1}_{[0,t]}( \boldsymbol{s}_{1,k+1}^{+}+ \hbar\boldsymbol{t}_{1,a_k}^{+}) \boldsymbol{1}_{[0,t]}( \boldsymbol{\tilde{s}}_{1,k}^{+}+s_{k+1}+ \hbar\boldsymbol{\tilde{t}}_{1,a_k}^{+})}{ \max(1,\hbar^{-1} |\sum_{m=\sigma_{i}^1}^{\sigma_{i+1}^1-1} s_{\iota^{*}(m)} - \sum_{m=\sigma_{i}^2}^{\sigma_{i+1}^2-1} \tilde{s}_{\iota^{*}(m)}  + \hbar l^1_i(\boldsymbol{t}, \boldsymbol{\tilde{t}}) |)^{\frac{d}{2}} } \tfrac{ 1}{ \max(1, \hbar^{-1}| \sum_{m=\sigma_n^1}^{l-3} s_{\iota^{*}(m)}- \sum_{m=\sigma_n^2}^{l-3} \tilde{s}_{\iota^{*}(m)} +l^1_n(\boldsymbol{t}, \boldsymbol{\tilde{t}}) |)^{\frac{d}{2}}}
	\\
	&\times \tfrac{1}{[\max(1, \hbar^{-1}| \sum_{m=m_1}^{m_2-1} s_{\iota^{*}(m)}+\tilde{l}^1_{1,p}(\boldsymbol{t}) |) \max(1, \hbar^{-1}| s_{m_2}+ \sum_{m=m_2}^{m_3-2} s_{\iota^{*}(m)}+\tilde{l}^1_{2,p}(\boldsymbol{t}) |) \max(1, \hbar^{-1}| \sum_{m=l-2}^{k-2} s_{\iota^{*}(m)}+\tilde{l}^1_{3,p}(\boldsymbol{t}) |)]^{\frac{d}{2}}} 
	\\
	&\times \tfrac{1}{ [\max(1, \hbar^{-1}| \sum_{m=m_1}^{m_2-1} \tilde{s}_{\iota^{*}(m)}+\tilde{l}^1_{1,q}(\boldsymbol{\tilde{t}}) |)
	\max(1, \hbar^{-1}| \tilde{s}_{m_2}+ \sum_{m=m_2}^{m_3-2} \tilde{s}_{\iota^{*}(m)}+\tilde{l}^1_{2,q}(\boldsymbol{\tilde{t}}) |)
	 \max(1, \hbar^{-1}| \sum_{m=l-2}^{k-2} \tilde{s}_{\iota^{*}(m)}+\tilde{l}^1_{3,q}(\boldsymbol{\tilde{t}}) |)]^{\frac{d}{2}}}  
	 \, d\boldsymbol{\tilde{s}}d\boldsymbol{s},
	\end{aligned}
\end{equation} 
where all functions of $t$ are linear functions. They are determined by which $t_i$'s and $\tilde{t}_i$ are larger than one or smaller. Moreover, we have included the integral over $s_{k+1}$.  In particular we have that all functions with a tilde are a sum of either $t$'s or $\tilde{t}$'s hence they are positive and we get something larger by dropping them. Note the in the case, where $\sigma_n^1=\sigma_n^2=l-2$ we have that the term with the sums ranging over $m=\sigma_n^j$ to $l-3$ will not be there. Due to this we will also just drop this term in the following estimate. Doing this and using the definition of the map $\iota^{*}$ we get that the integrals can be bounded by
 \begin{equation*}
	\begin{aligned}
	\MoveEqLeft  \int_{\R_{+}^{2k+1}}   \tfrac{\boldsymbol{1}_{[0,t]}( \boldsymbol{s}_{1,k+1}^{+}+ \hbar\boldsymbol{t}_{1,a_k}^{+}) \boldsymbol{1}_{[0,t]}( \boldsymbol{\tilde{s}}_{1,k}^{+}+s_{k+1}+ \hbar\boldsymbol{\tilde{t}}_{1,a_k}^{+})}{\max(1,\hbar^{-1} |\sum_{m=\sigma_{i_2-1}^1}^{\sigma_{i_2}^1-1} s_{\iota^{*}(m)} - \tilde{s}_{m_3}-\sum_{m=\sigma_{i_2-1}^2}^{m_1-1} \tilde{s}_{\iota^{*}(m)}-\sum_{m=m_2-2}^{\sigma_{i_2}^2-1} \tilde{s}_{\iota^{*}(m)}  + \hbar l^1_{i_2-1}(\boldsymbol{t}, \boldsymbol{\tilde{t}}) |)^\frac{d}{2}} 
	\\
	&\times \tfrac{1}{\max(1,\hbar^{-1} |s_{m_3}+\sum_{m=\sigma_{i_1-1}^1}^{m_1-1} s_{\iota^{*}(m)}+\sum_{m=m_2-2}^{\sigma_{i_1}^1-1} s_{\iota^{*}(m)}-\sum_{m=\sigma_{i_1-1}^2}^{\sigma_{i_1}^2-1} \tilde{s}_{\iota^{*}(m)}  + \hbar l^1_{i_1-1}(\boldsymbol{t}, \boldsymbol{\tilde{t}}) |)^\frac{d}{2}} 
	\\
	&\times  \prod_{\substack{i=1\\ i\neq i_1-1,i_2-1}}^{n-1} \tfrac{1}{\max(1,\hbar^{-1} |\sum_{m=\sigma_{i}^1}^{\sigma_{i+1}^1-1} s_{\iota^{*}(m)} - \sum_{m=\sigma_{i}^2}^{\sigma_{i+1}^2-1} \tilde{s}_{\iota^{*}(m)}  + \hbar l^1_i(\boldsymbol{t}, \boldsymbol{\tilde{t}}) |)^\frac{d}{2}} 
	\\
	&\times \tfrac{1}{\max(1, \hbar^{-1}| \sum_{m=m_1}^{m_2-1} s_{m} |)^\frac{d}{2}}  \tfrac{1}{\max(1, \hbar^{-1}| \sum_{m=m_2}^{m_3-1} s_{m} |)^\frac{d}{2}}  \tfrac{1}{\max(1, \hbar^{-1}| \sum_{m=l}^{k} s_{m} |)^\frac{d}{2}} 
	\\
	&\times \tfrac{1}{\max(1, \hbar^{-1}| \sum_{m=m_1}^{m_2-1} \tilde{s}_{m} |)^\frac{d}{2}} 
	 \tfrac{1}{\max(1, \hbar^{-1}| \sum_{m=m_2}^{m_3-1} \tilde{s}_{m} |)^\frac{d}{2}}  \tfrac{1}{\max(1, \hbar^{-1}| \sum_{m=l}^{k} \tilde{s}_{m} |)^\frac{d}{2}}  
	 \, d\boldsymbol{\tilde{s}} d\boldsymbol{s}.
	\end{aligned}
\end{equation*} 
It is important to observe that each $s$ and $\tilde{s}$ only appears once. We note that for the characteristic functions we have the estimate
\begin{equation*}
	\begin{aligned}
	\MoveEqLeft \boldsymbol{1}_{[0,t]}( \boldsymbol{s}_{1,k+1}^{+}+ \hbar\boldsymbol{t}_{1,a_k}^{+}) \boldsymbol{1}_{[0,t]}( \boldsymbol{\tilde{s}}_{1,k}^{+}+s_{k+1}+ \hbar\boldsymbol{\tilde{t}}_{1,a_k}^{+}) 
	\leq \boldsymbol{1}_{[0,t]}( \boldsymbol{s}_{1,m_1-1}^{+} +\boldsymbol{s}_{m_3,l-1}^{+})\boldsymbol{1}_{[0,t]}( \boldsymbol{s}_{l,k}^{+})
	\\
	&\times \boldsymbol{1}_{[0,t]}( \boldsymbol{s}_{m_1,m_2-1}^{+}+s_k) \boldsymbol{1}_{[0,t]}( \boldsymbol{s}_{m_2,m_3-1}^{+} +s_k) \boldsymbol{1}_{[0,t]}( \boldsymbol{\tilde{s}}_{1,m_1-1}^{+} +\boldsymbol{\tilde{s}}_{m_3,l-1}^{+} - g(\tilde{s}) +s_{k+1})
	\\
	&\times \boldsymbol{1}_{[0,t]}( \boldsymbol{\tilde{s}}_{l,k}^{+})\boldsymbol{1}_{[0,t]}( \boldsymbol{\tilde{s}}_{m_1,m_2-1}^{+}+\tilde{s}_k) \boldsymbol{1}_{[0,t]}( \boldsymbol{\tilde{s}}_{m_2,m_3-1}^{+} +\tilde{s}_k), 
	\end{aligned}	
\end{equation*}
where $g(\tilde{s}) = \sum_{i=1, i\neq i_1-1,i_2-1}^{n-1} \tilde{s}_{\iota^{*}(\sigma_i^2)} +\tilde{s}_{m_3} + \tilde{s}_{\iota^{*}(\sigma_{i_1-1}^2)}\boldsymbol{1}_{i_1\neq i_2} $.
We will further more make the following change of variables 
\begin{equation*}
	\begin{aligned}
	s_m &\mapsto \sum_{j=m}^k s_j \quad\text{and}\quad \tilde{s}_m \mapsto \sum_{j=m}^k \tilde{s}_j \qquad\text{for all $m\in\{l,\dots,k\}$},
	\\
	s_m &\mapsto \sum_{j=m}^{m_3-1} s_j + s_k \quad\text{and}\quad \tilde{s}_m \mapsto \sum_{j=m}^{m_3-1} \tilde{s}_j + \tilde{s}_k \qquad\text{for all $m\in\{m_2,\dots,m_3-1\}$},
	\\
	s_m &\mapsto \sum_{j=m}^{m_2-1} s_j + s_k \quad\text{and}\quad \tilde{s}_m \mapsto \sum_{j=m}^{m_2-1} \tilde{s}_j + \tilde{s}_k  \qquad\text{for all $m\in\{m_1,\dots,m_2-1\}$}.
	\end{aligned}
\end{equation*}
Using the estimate for the characteristic function and evaluating all integrals except the ones we have just made a change of variables we obtain that our initial integral is bounded by the expression
 \begin{equation*}
	\begin{aligned}
	\MoveEqLeft  C^{n-1}  \frac{\hbar^{n-1} t^{2(l-m_3+m_1)-n}}{(l-m_3+m_1-1)!(l-m_3+m_1-n+1)!} \Big[ \int_{\R_{+}} \int_{\R_{+}^{k-l}} \frac{\boldsymbol{1}_{[0,t]_{\leq}^{k-l+1}}( \boldsymbol{s}_{k,l})}{\max(1, \hbar^{-1}| s_{l} |)^\frac{d}{2}} \,d\boldsymbol{s}_{l,k-1}
	\\
	&\times \int_{\R_{+}^{m_2-m_1}} \frac{\boldsymbol{1}_{[0,t]_{\leq}^{m_2-m_1+1}}(s_k, \boldsymbol{s}_{m_2-1,m_1})}{\max(1, \hbar^{-1}| s_{m_1}-s_k |)^\frac{d}{2}} \,d\boldsymbol{s}_{m_1,m_2-1} \int_{\R_{+}^{m_3-m_2}} \frac{\boldsymbol{1}_{[0,t]_{\leq}^{m_3-m_2}}(s_k, \boldsymbol{s}_{m_3-1,m_2})}{\max(1, \hbar^{-1}|  s_{m_2} - s_k |)^\frac{d}{2}} \,d\boldsymbol{s}_{m_2,m_3-1} \, ds_k \Big]^2.
	\end{aligned}
\end{equation*} 
Evaluating the integral in $s_l$, $s_{m_1}$ and $s_{m_2}$ we get the bounds 
 \begin{equation*}
	\begin{aligned}
	 \int_{\R_{+}^{k-l}} \frac{\boldsymbol{1}_{[0,t]_{\leq}^{k-l+1}}( \boldsymbol{s}_{k,l})}{\max(1, \hbar^{-1}| s_{l} |)^\frac{d}{2}} \,d\boldsymbol{s}_{l,k-1}
	\leq \frac{4\hbar}{d-2}  \int_{\R_{+}^{k-l-1}} \frac{\boldsymbol{1}_{[0,t]_{\leq}^{k-l-1}}( \boldsymbol{s}_{k-1,l+1})}{(1+ \hbar^{-1} s_{k} )^\frac{d-2}{2}} \,d\boldsymbol{s}_{l+1,k-1},
	\end{aligned}
\end{equation*} 
 \begin{equation*}
	\begin{aligned}
	\MoveEqLeft \int_{\R_{+}^{m_2-m_1}} \frac{\boldsymbol{1}_{[0,t]_{\leq}^{m_2-m_1+1}}(s_k, \boldsymbol{s}_{m_2-1,m_1})}{\max(1, \hbar^{-1}| s_{m_1}-s_k |)^\frac{d}{2}} \,d\boldsymbol{s}_{m_1,m_2-1}
	\\
	&\leq \frac{4\hbar}{d-2}  \int_{\R_{+}^{m_2-m_1-1}} \frac{\boldsymbol{1}_{[0,t]_{\leq}^{m_2-m_1-1}}( \boldsymbol{s}_{m_2-1,m_1+1})}{(1+ \hbar^{-1}|s_{m_1+1}- s_{k}| )^\frac{d-2}{2}} \,d\boldsymbol{s}_{m_1+1,m_2-1},
	\end{aligned}
\end{equation*} 
and
 \begin{equation*}
	\begin{aligned}
	\MoveEqLeft \int_{\R_{+}^{m_3-m_2}} \frac{\boldsymbol{1}_{[0,t]_{\leq}^{m_3-m_2+1}}(s_k, \boldsymbol{s}_{m_3-1,m_2})}{\max(1, \hbar^{-1}| s_{m_2}-s_k |)^\frac{d}{2}} \,d\boldsymbol{s}_{m_2,m_3-1}
	\\
	&\leq \frac{4\hbar}{d-2}  \int_{\R_{+}^{m_3-m_2-1}} \frac{\boldsymbol{1}_{[0,t]_{\leq}^{m_3-m_2-1}}( \boldsymbol{s}_{m_3-1,m_2+1})}{(1+ \hbar^{-1}|s_{m_2+1}- s_{k}| )^\frac{d-2}{2}} \,d\boldsymbol{s}_{m_2+1,m_3-1}.
	\end{aligned}
\end{equation*}
Applying these estimates and evaluating the integral in $s_k$, which gives an additional power of $\hbar$, and then  the remaining  integrals we obtain that the initial time integral in \eqref{EQ:time_int_est_3_recol_1} is bounded by
 \begin{equation*}
	\begin{aligned}
	\MoveEqLeft C^n  \frac{\hbar^{n+7} t^{2k-6-n}}{(l-m_3+m_1-1)!(l-m_3+m_1-n+1)!((m_3-m_2-1)! (m_2-m_1-1)!(k-l-1)!)^2} 
	\\
	&\leq C^n \frac{\hbar^{n+7} t^{2k-6-n}}{(k-4)!(k-n-2)!}. 
	\end{aligned}
\end{equation*} 
Using this estimate and the methods used in the proof of Lemma~\ref{expansion_aver_bound_Mainterm} we can obtain the estimate
   \begin{equation}\label{EQ:expansion_aver_bound_3.recol_21_3}
   	\begin{aligned}
	 \Aver{\mathcal{T}(n,\sigma^1,\sigma^2,\alpha,\tilde{\alpha},\mathrm{id})}
	 \leq  C_d^{a_k+\tilde{a}_k+n}  \norm{\varphi}_{L^2(\R^d)}^2 \norm{\hat{V}}_{1,\infty,3d+3}^{|\alpha|+|\tilde{\alpha}|+2}  \frac{\hbar^3 \rho^{2} (\rho t)^{2(k-3)-n} }{(k-4)!(k-n-2)!}.
	\end{aligned}
\end{equation}
We now turn to the case where $\sigma^1_n > l-2$ or $\sigma^2_n >l-2$ or there exists $i'$ such that $m_1<\sigma_{i'}^1<m_3$ or here exists $i'$ such that $m_1<\sigma_{i'}^2<m_3$.  This is the case where we have at least one genuine recollision happening between the internal recollisions. Again we have the expression from \eqref{EQ:expansion_aver_bound_3.recol_21_2}, but we have done a change of variables so we get the following expression
\begin{equation*}
	\begin{aligned}
	\MoveEqLeft\Aver{ \mathcal{T}(n,\sigma^1,\sigma^2,\alpha,\tilde{\alpha},\mathrm{id})} = \frac{(\rho\hbar(2\pi)^{d})^{2k-n-4}}{(2\pi\hbar)^d} \int_{\R_{+}^{|\alpha|+|\tilde{\alpha}|+1}} \int \boldsymbol{1}_{[0,t]}( \hbar\boldsymbol{s}_{1,k}^{+}+s_{k+1}+ \hbar\boldsymbol{t}_{1,a_k}^{+})  \prod_{i=1}^{a_k} e^{i  t_{i} \frac{1}{2} \eta_{i}^2}   \prod_{i=1}^{\tilde{a}_k} e^{-i  \tilde{t}_{i} \frac{1}{2} \xi_{i}^2}
	 \\
	 &\times   \boldsymbol{1}_{[0,t]}( \hbar\boldsymbol{\tilde{s}}_{1,k}^{+}+s_{k+1}+ \hbar\boldsymbol{\tilde{t}}_{1,\tilde{a}_k}^{+}) 
	 e^{i   s_{m_2} \frac{1}{2} (p_{i_1-1}+\tilde{p}_2)^2} e^{i   s_{m_3} \frac{1}{2} p_{i_1-1}^2}  e^{-i  \tilde{s}_{m_2} \frac{1}{2} (p_{i_2-1}+\tilde{q}_2)^2}   e^{-i  \tilde{s}_{m_3} \frac{1}{2} p_{i_2-1}^2} 
	  \\
	  &\times   \prod_{m=\sigma_n^1+1}^{k-1} e^{i   s_{\iota^{*}(m)} \frac{1}{2} (p_{n}+\pi_m^1(\boldsymbol{\tilde{p}}) )^2}   \prod_{m=\sigma_n^2+1}^{k-1} e^{-i  \tilde{s}_{\iota^{*}(m)} \frac{1}{2} (p_{n}+\pi_m^1(\boldsymbol{\tilde{q}}) )^2} 
	  \prod_{i=1}^n \Big\{ e^{i  s_{\iota^{*}(\sigma_{i}^1)} \frac{1}{2} (p_{i}+\pi_{\sigma_{i}^1}^1(\boldsymbol{\tilde{p}}) )^2}   
	 \\
	 &\times e^{-i   \tilde{s}_{\iota^{*}(\sigma_{i}^2)} \frac{1}{2} (p_{i}+\pi_{\sigma_{i}^2}^1(\boldsymbol{\tilde{q}}) )^2} 
	   \prod_{m=\sigma_{i-1}^1+1}^{\sigma_{i}^1-1} e^{i   s_{\iota^{*}(m)} \frac{1}{2} (p_{i-1}+\pi_m^1(\boldsymbol{\tilde{p}}) )^2}   
	   \prod_{m=\sigma_{i-1}^2+1}^{\sigma_{i}^2-1} e^{-i  s_{\iota^{*}(m)} \frac{1}{2} (p_{i-1}+\pi_m^1(\boldsymbol{\tilde{q}}) )^2}   \Big\}
	 \\
	 &\times e^{i  (\boldsymbol{\tilde{s}}_{1,k}^{+}- \boldsymbol{s}_{1,k}^{+}+ \boldsymbol{\tilde{t}}_{1,\tilde{a}_k}^{+}- \boldsymbol{t}_{1,a_k}^{+}) \frac{1}{2} p_0^2} 
	 \mathcal{G}(\boldsymbol{\tilde{p}},\boldsymbol{\tilde{q}},\boldsymbol{q},\boldsymbol{\eta},\boldsymbol{\xi},\sigma^1,\sigma^2,\alpha,\tilde{\alpha})  
	 |\hat{\varphi}(\tfrac{p_0}{\hbar})|^2
	  \, d\boldsymbol{\eta}d\boldsymbol{\tilde{p}}d\boldsymbol{\xi}d\boldsymbol{\tilde{q}}d\boldsymbol{p}   d\boldsymbol{t} d\boldsymbol{s}, 
	\end{aligned}
\end{equation*}
where $\mathcal{G}$ is defined in \eqref{EQ:def_G_recol_21}.
For this case we will use the ``$\nu$-representation'' of the kernel. Hence we again we insert a suitable $f(\boldsymbol{s},\boldsymbol{\tilde{s}})$, where the dependence on $\boldsymbol{s},\boldsymbol{\tilde{s}}$ is determined by the relation between the $\sigma$'s and $m_1$, $m_2$, $m_3$ and $l$. It will depend on $k-4$ of the $s$ variables and $k-n-2$ of the $\tilde{s}$ variables, where we count $s_{k+1}$ as a $\tilde{s}$ variable. From here the argument is analogous that of Lemma~\ref{expansion_aver_bound_Mainterm} and one can obtain the estimate
\begin{equation}\label{EQ:expansion_aver_bound_3.recol_21_4}
   	\begin{aligned}
	| \Aver{\mathcal{T}(n,\sigma^1,\sigma^2,\alpha,\tilde{\alpha},\mathrm{id})}|
	 \leq   C_d^{a_k+\tilde{a}_k+n}   \norm{\hat{V}}_{1,\infty,5d+5}^{|\alpha|+|\tilde{\alpha}|+2}     \frac{\rho^2(\rho t)^{2k-n-6} \hbar^3 |\log(\zeta)|^{n+7}}{(k-4)!(k-n-2)!} \norm{\varphi}_{\mathcal{H}^{5d+5}_\hbar(\R^d)}^2.
	\end{aligned}
\end{equation}
In the special case where $n=1$ and the numbers $\sigma_1^1$ and $\sigma_1^2$ satisfies that $\sigma_1^2=l-2$, $\sigma^1_1 > l-2$ or $m_1<\sigma_{1}^1<m_3$ one has to combine the methods from above with the ``$\nu$-representation''. For the notation used above one has to use method based on Fourier transform for the $\tilde{q}$ variables and for the other ``side'' (the $\tilde{p}$ variables) one uses the method of the ``$\nu$-representation''.

We now turn to the case where $m_3>l>m_1$ with $l\neq m_2$. For this case we can use the  ``$\nu$-representation''  of the kernel and argue as above to obtain the estimate
 \begin{equation}\label{EQ:expansion_aver_bound_3.recol_21_5}
   	\begin{aligned}
	|\Aver{\mathcal{T}(n,\sigma^1,\sigma^2,\alpha,\tilde{\alpha},\mathrm{id})}|
	 \leq   C_d^{a_k+\tilde{a}_k+n}   \norm{\hat{V}}_{1,\infty,5d+5}^{|\alpha|+|\tilde{\alpha}|+2}    \frac{\rho^2(\rho t)^{2k-n-6} \hbar^3 |\log(\zeta)|^{n+7}}{(k-4)!(k-n-2)!} \norm{\varphi}_{\mathcal{H}^{5d+5}_\hbar(\R^d)}^2
	\end{aligned}
\end{equation}
for all $\sigma^1$ and $\sigma^2$. For the cases where $m_1\geq l$ we can argue as above and obtain that if $\sigma_n^1,\sigma_n^2<l$ the estimate
   \begin{equation}\label{EQ:expansion_aver_bound_3.recol_21_6}
   	\begin{aligned}
	| \Aver{\mathcal{T}(n,\sigma^1,\sigma^2,\alpha,\tilde{\alpha},\mathrm{id})}|
	 \leq C_d^{a_k+\tilde{a}_k+n}   \norm{\hat{V}}_{1,\infty,5d+5}^{|\alpha|+|\tilde{\alpha}|+2}    \norm{\varphi}_{L^2(\R^d)}^2   \frac{\hbar^3 \rho^{2} (\rho t)^{2(k-3)-n} }{(k-4)!(k-n-2)!}.
	\end{aligned}
\end{equation}
If this is not the case we can use the  ``$\nu$-representation''  of the kernel and obtain the estimate
 \begin{equation}\label{EQ:expansion_aver_bound_3.recol_21_7}
   	\begin{aligned}
	|\Aver{\mathcal{T}(n,\sigma^1,\sigma^2,\alpha,\tilde{\alpha},\mathrm{id})}|
	 \leq   C_d^{a_k+\tilde{a}_k+n}   \norm{\hat{V}}_{1,\infty,5d+5}^{|\alpha|+|\tilde{\alpha}|+2}   \frac{\rho^2(\rho t)^{2k-n-6} \hbar^3 |\log(\zeta)|^{n+7}}{(k-4)!(k-n-2)!} \norm{\varphi}_{\mathcal{H}^{5d+5}_\hbar(\R^d)}^2.
	\end{aligned}
\end{equation}
We now turn to the case where $\kappa\neq\mathrm{id}$. The estimates for all of these cases will be done by using the ``$\nu$-representation''  of the kernel. For this case we will also have to divide into different cases but for all cases we will obtain the estimate
 \begin{equation}\label{EQ:expansion_aver_bound_3.recol_21_8}
   	\begin{aligned}
	|\Aver{\mathcal{T}(n,\sigma^1,\sigma^2,\alpha,\tilde{\alpha},\kappa)}|
	 \leq   C_d^{a_k+\tilde{a}_k+n}   \norm{\hat{V}}_{1,\infty,5d+5}^{|\alpha|+|\tilde{\alpha}|+2}   \frac{\rho^2(\rho t)^{2k-n-6} \hbar^3 |\log(\zeta)|^{n+7}}{(k-4)!(k-n-2)!} \norm{\varphi}_{\mathcal{H}^{5d+5}_\hbar(\R^d)}^2.
	\end{aligned}
\end{equation}
This estimate is obtained analogous to the previous estimates. Combining our estimates in \cref{EQ:expansion_aver_bound_3.recol_21_0,EQ:expansion_aver_bound_3.recol_21_3,EQ:expansion_aver_bound_3.recol_21_4,EQ:expansion_aver_bound_3.recol_21_5,EQ:expansion_aver_bound_3.recol_21_7,EQ:expansion_aver_bound_3.recol_21_6,EQ:expansion_aver_bound_3.recol_21_8} and arguing as in the previous proofs we get that
\begin{equation*}
	\begin{aligned}
		\MoveEqLeft \mathbb{E}\Big[ \big\lVert \sum_{k=5}^{k_0} \sum_{\iota\in\mathcal{Q}_{k},2,1}   \sum_{\boldsymbol{x}\in \mathcal{X}_{\neq}^{k-2}}  \mathcal{E}_{2,1}^{\mathrm{rec}}(k,0,\boldsymbol{x},\iota,\tfrac{t}{\tau_0};\hbar)\varphi\big\rVert_{L^2(\R^d)}^2\Big]
		\\
		\leq{}& k_0^{12} \hbar   \norm{\varphi}_{\mathcal{H}^{5d+5}_\hbar(\R^d)}^2 \sum_{k=5}^{k_0} C^{2k} \sum_{n=0}^{k-2}\Big[  \frac{ \rho (\rho t)^{2k-5-n} }{(k-4)!(k-n-2)!}+  \frac{\rho(\rho t)^{2k-n-5}  |\log(\zeta)|^{n+7}n!}{(k-4)!(k-n-2)!} \Big]
		\\
		\leq{}& k_0^{16} C^{2k_0} |\log(\tfrac{\hbar}{t})|^{k_0+7}  \norm{\varphi}_{\mathcal{H}^{5d+5}_\hbar(\R^d)}^2  \hbar.
	\end{aligned}
\end{equation*}
This concludes the proof.
\end{proof}
\begin{lemma}\label{expansion_aver_bound_Mainterm_rec_trun21_2}
Assume we are in the setting of Definition~\ref{def_recol_reminder} and let $\varphi \in \mathcal{H}^{3d+3}_\hbar(\R^d)$. Then 
\begin{equation*}
	\begin{aligned}
	\MoveEqLeft \mathbb{E}\Big[\big\lVert  \sum_{k_1=1}^{k_0} \sum_{k_2=1}^{k_0} \sum_{\iota\in\mathcal{Q}_{k_1+k_2,2,1}}  \boldsymbol{1}_{\{k_1+k_2\geq5\}} 
	  \sum_{(\boldsymbol{x}_1,\boldsymbol{x}_2)\in \mathcal{X}_{\neq}^{k_1+k_2-2}}  \mathcal{E}_{2,1}^{\mathrm{rec}}(k_1,k_2,\boldsymbol{x}_1,\iota,\tfrac{t}{\tau_0};\hbar) \mathcal{I}_{2,1}^{\mathrm{rec}}(k_2,\boldsymbol{x}_2,\iota,\tfrac{\tau-1}{\tau_0}t;\hbar)\varphi\big\rVert_{L^2(\R^d)}^2 \Big]
	\\
	&\leq k_0^{18} C^{2k_0}  |\log(\tfrac{\hbar}{\tau_0})|^{k_0+9}  \norm{\varphi}_{\mathcal{H}^{3d+3}_\hbar(\R^d)}^2  \hbar,
	\end{aligned}
\end{equation*}
and
\begin{equation*}
	\begin{aligned}
	\MoveEqLeft \mathbb{E}\Big[\big\lVert  \sum_{k_1=1}^{k_0} \sum_{k_2=1}^{k_0} \sum_{\iota\in\mathcal{Q}_{k_1+k_2,2,1}}  \boldsymbol{1}_{\{k_1+k_2\geq5\}} 
	\\
	&\times 
	  \sum_{(\boldsymbol{x}_1,\boldsymbol{x}_2)\in \mathcal{X}_{\neq}^{k_1+k_2-2}} \mathcal{E}_{2,j}^{\mathrm{rec}}(k_1+1,k_2,(x_{2,k_2},\boldsymbol{x}_1),\iota,\tfrac{t}{\tau_0};\hbar) \mathcal{I}_{2,1}^{\mathrm{rec}}(k_2,\boldsymbol{x}_2,\iota,\tfrac{\tau-1}{\tau_0}t;\hbar)\varphi\big\rVert_{L^2(\R^d)}^2 \Big]
	\\
	&\leq k_0^{18} C^{2k_0}  |\log(\tfrac{\hbar}{\tau_0})|^{k_0+9}  \norm{\varphi}_{\mathcal{H}^{3d+3}_\hbar(\R^d)}^2  \hbar,
	\end{aligned}
\end{equation*}
where the constant $C$ depends on the single site potential $V$ and the coupling constant $\lambda$. In particular we have that the function is in  $L^2(\R^d)$ $\Pro$-almost surely. 
\end{lemma}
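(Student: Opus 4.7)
The plan is to combine the methodology of Lemma~\ref{expansion_aver_bound_Mainterm_rec_trun21_1} (which handles the case $k_2=0$) with the product-structure arguments used in Lemma~\ref{expansion_aver_bound_Mainterm_2} and Lemma~\ref{expansion_aver_bound_Mainterm_rec_2}. First I would reduce the outer sum over $k_1,k_2$ to a sum over $k=k_1+k_2$, absorbing a factor $k_0^2$ from the range of $(k_1,k_2)$ and a factor proportional to $k^3$ from $\#\mathcal{Q}_{k,2,1}$ (Remark~\ref{Remark_grow_of_Q}). Then, exactly as in \eqref{EQ:expansion_aver_bound_3.recol_21_0}, I would use Cauchy--Schwarz in the remaining time integral introduced by $\mathcal{E}_{2,1}^{\mathrm{rec}}$ to pull it outside the $L^2$-norm, paying a factor $t\hbar^{-2}$. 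The resulting object is the $L^2$-norm squared of a sum $\sum_{\boldsymbol{x}\in\mathcal{X}_{\neq}^{k-2}}$ of operators of the form $\tilde{\mathcal{E}}_{2,1}^{\mathrm{rec}}\,\mathcal{I}_{2,1}^{\mathrm{rec}}$ applied to $\varphi$.

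Expanding this squared $L^2$-norm produces two sums over $\mathcal{X}_{\neq}^{k-2}$, which I decompose into pairing patterns indexed by $(n,\sigma^1,\sigma^2,\kappa)$ with $\sigma^j\in\tilde{\mathcal{A}}(k_1+1,n_1,k_2-1,n_2)$ as in \eqref{orderset_def_2} to track which collision centres sit in $\mathcal{E}_{2,1}^{\mathrm{rec}}$ versus $\mathcal{I}_{2,1}^{\mathrm{rec}}$. After applying Lemma~\ref{LE:Exp_ran_phases}, I would perform the standard block-shift change of variables $p_{m_j-1}\mapsto p_{m_j-1}-p_{m_j}$ and its analogues at $m_1,m_2,m_3$ and at the recollision index $l$ coming from $\mathcal{E}_{2,1}^{\mathrm{rec}}$ --- and the same for the $q$-side with its own $l$, $m_1',m_2',m_3'$ --- to isolate the two recollision momenta $\tilde{p}_1,\tilde{p}_2,\tilde{p}_3,\tilde{q}_1,\tilde{q}_2,\tilde{q}_3$, bringing the integrand into the same structural form as in \eqref{EQ:expansion_aver_bound_3.recol_21_2}.

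The estimate then splits into the two principal cases $\kappa=\mathrm{id}$ versus $\kappa\neq\mathrm{id}$, and within $\kappa=\mathrm{id}$ into subcases according to whether the recollision indices are ``protected'' by the block structure of $\sigma^1,\sigma^2$ (so that all three $\tilde{p}_i$'s sit inside one block and can be handled purely by the Fourier/stationary-phase argument of Lemma~\ref{app_quadratic_integral_tech_est}, as in \eqref{EQ:expansion_aver_bound_3.recol_21_3}) or whether at least one genuine collision index falls between recollision positions, in which case the ``$\nu$-representation'' together with Lemma~\ref{LE:est_res_combined}, Lemma~\ref{LE:resolvent_int_est} must be invoked, yielding logarithmic factors as in \eqref{EQ:expansion_aver_bound_3.recol_21_4}. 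The single-block time-integral estimate now runs over two split time intervals $[0,t/\tau_0]$ and $[t/\tau_0,\tau t/\tau_0]$, so the bound \eqref{EQ:time_int_est_3_recol_1} is replaced by its two-interval analogue, gaining a factor $\tau_0^{-(k_1-{\rm const})}$ that compensates the extra combinatorial growth, exactly as in the passage from Lemma~\ref{expansion_aver_bound_Mainterm} to Lemma~\ref{expansion_aver_bound_Mainterm_2}.

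The second assertion of the lemma, where the first operator carries the ``matched'' scatterer $x_{2,k_2}$ in the first slot, is treated in the same way: the extra repeated-scatterer slot is handled by the trick from the proof of Lemma~\ref{expansion_aver_bound_Mainterm_3}, introducing one more $\Theta$-factor fused into an effective $\tilde{\Theta}$ and gaining another power of $\tau_0^{-1}$ from the extra small time integral, but yielding the same polynomial-in-$k_0$ constants. The main obstacle is purely bookkeeping: with recollisions on both the $p$-side and the $q$-side, and with the recollision slot $l$ of $\mathcal{E}_{2,1}^{\mathrm{rec}}$ possibly coinciding with one of $m_1,m_2,m_3$ or paired across the block structure, the case distinction becomes lengthy, but each subcase reduces to one of the templates already settled in Sections \ref{Sec:tech_est_0_recol}--\ref{Sec:tech_est_truncated_recol_1}. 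Summing the resulting bounds over $n$ (using $n!/(k-{\rm const})!\lesssim k^{{\rm const}}$ and $\zeta\geq\hbar/t$) and over $k\leq 2k_0$ (each term bounded by plugging in the maximum), then multiplying through by the $k_0^3\cdot k_0^2\cdot k_0^2$ factors from $\#\mathcal{Q}_{k,2,1}$, the range of $(k_1,k_2)$, and the sum over the recollision slot $l$, produces the stated bound with the $\hbar$ from the recollision error and $|\log(\hbar/\tau_0)|^{k_0+9}$ from the nested resolvent integrations.
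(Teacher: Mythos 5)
Your proposal follows essentially the same route as the paper: the paper's own proof is precisely the reduction you describe — Cauchy--Schwarz on the extra time integral of $\mathcal{E}_{2,1}^{\mathrm{rec}}$ as in the $k_2=0$ case, then combining the pairing/split-time-interval arguments of Lemma~\ref{expansion_aver_bound_Mainterm_2} with the case analysis (Fourier method versus $\nu$-representation) of Lemma~\ref{expansion_aver_bound_Mainterm_rec_trun21_1}, and Lemma~\ref{expansion_aver_bound_Mainterm_3} for the repeated-scatterer variant. The only cosmetic difference is that the paper does not retain the $\tau_0$ gains you mention (the smallness comes solely from the recollision factor $\hbar$ and the logarithms), but dropping them only weakens your bound to the stated one, so the argument is the same.
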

\begin{proof}
We observe that
\begin{equation*}
	\begin{aligned}
	\MoveEqLeft \big\lVert  \sum_{k_1=1}^{k_0} \sum_{k_2=1}^{k_0} \sum_{\iota\in\mathcal{Q}_{k_1+k_2,2,1}}  \boldsymbol{1}_{\{k_1+k_2\geq5\}} 
	  \sum_{(\boldsymbol{x}_1,\boldsymbol{x}_2)\in \mathcal{X}_{\neq}^{k_1+k_2-2}}  \mathcal{E}_{2,1}^{\mathrm{rec}}(k_1,k_2,\boldsymbol{x}_1,\iota,\tfrac{t}{\tau_0};\hbar) \mathcal{I}_{2,1}^{\mathrm{rec}}(k_2,\boldsymbol{x}_2,\iota,\tfrac{\tau-1}{\tau_0}t;\hbar)\varphi\big\rVert_{L^2(\R^d)}^2 
	\\
	\leq{}& C k_0^{10} \sum_{k=5}^{2k_0} \sum_{\iota\in\mathcal{Q}_{k,2,1}}   \sum_{k_1+k_2=k}    \boldsymbol{1}_{\{1\leq k_1,k_2\leq k_0\}}   
	\\
	&\times  \big\lVert  \sum_{(\boldsymbol{x}_1,\boldsymbol{x}_2)\in \mathcal{X}_{\neq}^{k-2}}  \mathcal{E}_{2,1}^{\mathrm{rec}}(k_1,k_2,\boldsymbol{x}_1,\iota,\tfrac{t}{\tau_0};\hbar) \mathcal{I}_{2,1}^{\mathrm{rec}}(k_2,\boldsymbol{x}_2,\iota,\tfrac{\tau-1}{\tau_0}t;\hbar)\varphi\big\rVert_{L^2(\R^d)}^2 .
	\end{aligned}
\end{equation*}
Moreover, we have for fixed $k,k_1,k_2$ and $\iota$ that
\begin{equation*}\label{EQ:expansion_aver_bound_3.recol_21_1_1}
	\begin{aligned}
	\MoveEqLeft   \big\lVert  \sum_{(\boldsymbol{x}_1,\boldsymbol{x}_2)\in \mathcal{X}_{\neq}^{k-2}}  \mathcal{E}_{2,1}^{\mathrm{rec}}(k_1,k_2,\boldsymbol{x}_1,\iota,\tfrac{t}{\tau_0};\hbar) \mathcal{I}_{2,1}^{\mathrm{rec}}(k_2,\boldsymbol{x}_2,\iota,\tfrac{\tau-1}{\tau_0}t;\hbar)\varphi\big\rVert_{L^2(\R^d)}^2 
	\\
	&\leq \frac{t}{\hbar^2}  \sum_{l=1, \iota(l)\neq k}^{k-2}   \int_{0}^{t} \big\lVert \sum_{\alpha\in \N^{k_2}}(i\lambda)^{|\alpha|+1}    
	\\
	&\phantom{\leq \frac{t}{\hbar^2}  \sum_{l=1, \iota(l)\neq k}^{k-2}   \int_{0}^{t} \big\lVert \sum} {}
	\sum_{(\boldsymbol{x}_1,\boldsymbol{x}_2)\in \mathcal{X}_{\neq}^{k-2}}   \tilde{\mathcal{E}}_{2,j}^{\mathrm{rec}}(s_{k+1},k,\boldsymbol{x}_2,\alpha,l,t,\iota;\hbar)  \mathcal{I}_{2,1}^{\mathrm{rec}}(k_2,\boldsymbol{x}_2,\iota,\tfrac{\tau-1}{\tau_0}t;\hbar) \varphi\big\rVert_{L^2(\R^d)}^2 \,ds_{k+1},
	\end{aligned}
\end{equation*}
To estimate the average of these terms we combine the arguments in the proof of Lemma~\ref{expansion_aver_bound_Mainterm_2} and Lemma~\ref{expansion_aver_bound_Mainterm_rec_trun21_1}.  From this we obtain the bound
\begin{equation*}\label{EQ:expansion_aver_bound_3.recol_21_1_2}
	\begin{aligned}
	\MoveEqLeft   \sum_{k_1+k_2=k}    \boldsymbol{1}_{\{1\leq k_1,k_2\leq k_0\}}   \mathbb{E}\Big[  \big\lVert  \sum_{(\boldsymbol{x}_1,\boldsymbol{x}_2)\in \mathcal{X}_{\neq}^{k-2}}  \mathcal{E}_{2,1}^{\mathrm{rec}}(k_1,k_2,\boldsymbol{x}_1,\iota,\tfrac{t}{\tau_0};\hbar) \mathcal{I}_{2,1}^{\mathrm{rec}}(k_2,\boldsymbol{x}_2,\iota,\tfrac{\tau-1}{\tau_0}t;\hbar)\varphi\big\rVert_{L^2(\R^d)}^2  \Big]
	\\
	\leq {}& \hbar   \norm{\varphi}_{\mathcal{H}^{2d+2}_\hbar(\R^d)}^2  C^{2k} \sum_{n=0}^{k-2}\Big[  \frac{ \rho (\rho t)^{2k-5-n} }{(k-4)!(k-n-2)!}+  \frac{\rho(\rho t)^{2k-n-5}  |\log(\zeta)|^{n+9}n!}{(k-4)!(k-n-2)!} \Big].
	\end{aligned}
\end{equation*}
Finally by arguing as in the proof of Lemma~\ref{expansion_aver_bound_Mainterm_rec_trun21_1} we obtain the desired estimate. To obtain the second estimate an analogous argument is used combined with the arguments in the proof of Lemma~\ref{expansion_aver_bound_Mainterm_3}.
\end{proof}
\section{Recollision error terms for $\mathcal{Q}_{k,2,2}$}\label{Sec:tech_est_truncated_recol_2}
For this case we will have to distinguish between three types of  configurations. Recall that for each $\iota\in \mathcal{Q}_{k,2,2}$ we have the four numbers $m_{1,1},m_{1,2},m_{2,1}$ and $m_{2,2}$, where the first two numbers keep track of the first internal recollision and the two last ones the second. The first type of configurations is the configurations such that
\begin{equation}\label{Def_Q_K_2_2_3}
	\mathcal{Q}_{k,2,2}^{3}=\{\iota\in \mathcal{Q}_{k,2,2}\, | \,  m_{1,1}= k-3, m_{2,1}=k-2 \}.
\end{equation}
These are the configurations, where we will have to estimate two terms born series. Note that for all $k\geq4$ the set $\mathcal{Q}_{k,2,2}^{3}$ will only contain one map/configuration. The second type of configurations is
\begin{equation}\label{Def_Q_K_2_2_2}
	\mathcal{Q}_{k,2,2}^{2}=\{\iota\in \mathcal{Q}_{k,2,2}\setminus \mathcal{Q}_{k,2,2}^{3} \, | \,  m_{1,1}< m_{2,1}<m_{1,2} \}.
\end{equation}
The third type of configurations are the remaing configurations
\begin{equation}\label{Def_Q_K_2_2_1}
	\mathcal{Q}_{k,2,2}^{1}=\mathcal{Q}_{k,2,2}\setminus (\mathcal{Q}_{k,2,2}^{2}\cup \mathcal{Q}_{k,2,2}^{3}).
\end{equation}
We will treat each of these cases with slightly different methods. The estimate for configurations in $\mathcal{Q}_{k,2,2}^{1}$ will be analogous to those used for configurations in $\mathcal{Q}_{k,2,1}$. The argument for configurations in $\mathcal{Q}_{k,2,2}^{2}$ will be based only on the resolvent method. For certain configurations in $\mathcal{Q}_{k,2,2}^{3}$ we will need to continue the expansion. This will depend on the ``new'' recollision we stopped at.
\subsection{Estimates for $\iota\in\mathcal{Q}_{k,2,2}^{1}$}
\begin{lemma}\label{expansion_aver_bound_Mainterm_rec_trun22_1}
Assume we are in the setting of Definition~\ref{def_recol_reminder}. Let $\varphi \in \mathcal{H}^{2d+2}_\hbar(\R^d)$. Then 
\begin{equation*}
	\begin{aligned}
	\mathbb{E}\Big[\big\lVert \sum_{k_1=4}^{k_0} \sum_{\iota\in\mathcal{Q}_{k,2,2}^{1}}   \sum_{\boldsymbol{x}_1\in \mathcal{X}_{\neq}^{k_1-2}}  \mathcal{E}_{2,2}^{\mathrm{rec}}(k_1,0,\boldsymbol{x}_1,\iota,\tfrac{t}{\tau_0};\hbar)\varphi\big\rVert_{L^2(\R^d)}^2 \Big]
	\leq k_0^{19} C^{2k_0}  |\log(\tfrac{\hbar}{\tau_0})|^{k_0+7}  \norm{\varphi}_{\mathcal{H}^{2d+2}_\hbar(\R^d)}^2  \hbar,
	\end{aligned}
\end{equation*}
where the constant $C$ depends on the single site potential $V$ and the coupling constant $\lambda$. In particular we have that the function is in  $L^2(\R^d)$ $\Pro$-almost surely. 
\end{lemma}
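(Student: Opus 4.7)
The approach closely parallels the strategy of Lemma~\ref{expansion_aver_bound_Mainterm_rec_trun21_1}, adapted to two distinct recollision centres in the configuration class $\mathcal{Q}_{k,2,2}^{1}$. First I would apply Jensen's inequality and the unitarity of $U_{\hbar,\lambda}$ and $U_{\hbar,0}$ to pull the outer time integral in $s_{k+1}$ out of the $L^2$-norm, as in \eqref{EQ:expansion_aver_bound_3.recol_21_0}. This reduces the problem to estimating $\mathbb{E}\big[\|\sum_{\boldsymbol{x},\alpha}(i\lambda)^{|\alpha|+1}\tilde{\mathcal{E}}_{2,2}^{\mathrm{rec}}(s_{k+1},k,\boldsymbol{x},\alpha,l,t,\iota;\hbar)\varphi\|^2\big]$ uniformly in $s_{k+1}$, then summing over $l$, over $\iota\in\mathcal{Q}_{k,2,2}^{1}$, and over $k\in\{4,\dots,k_0\}$. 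The extra factor $t/\hbar^{2}$ produced at this step is the source of the $\hbar^{-2}$ that must ultimately be absorbed.

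Expanding the $L^2$-norm as in \eqref{EQ:expansion_aver_bound_3.recol_21_1} produces a double collision series paired through delta functions according to $(n,\sigma^1,\sigma^2,\kappa)$. For $\iota\in\mathcal{Q}_{k,2,2}^{1}$ I have four associated indices $m_{1,1}<m_{1,2}$ and $m_{2,1}<m_{2,2}$, together with the recollision index $l$ outside the $\Theta$-product. I would perform the three momentum shifts at $p_{m_{1,2}-1}$, $p_{m_{2,2}-1}$ and $p_k$ (analogous to the two shifts in the proof of Lemma~\ref{expansion_aver_bound_Mainterm_rec_trun21_1}) to isolate three auxiliary momenta $\tilde p_1,\tilde p_2,\tilde p_3$ and absorb the $\hat V$ factors into bounded tensors. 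The point of working in the subclass $\mathcal{Q}_{k,2,2}^{1}$ is that the two recollision blocks are neither nested nor interleaved, so the corresponding shifted momenta act on disjoint index ranges and the combinatorial structure reduces to that of the single-recollision argument applied twice.

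The core of the proof is the case analysis in $(n,\sigma^1,\sigma^2,\kappa)$. For $\kappa\neq\mathrm{id}$ and for those $\kappa=\mathrm{id}$ configurations in which at least one of the blocks $[m_{1,1},m_{1,2}]$, $[m_{2,1},m_{2,2}]$ or $[\iota^{*[-1]}(l),k-2]$ is cut by some $\sigma_i^j$, I would use the $\nu$-representation together with Lemma~\ref{LE:est_res_combined} and iterated applications of Lemma~\ref{LE:resolvent_int_est}. Mimicking the derivation of \eqref{EQ:expansion_aver_bound_3.recol_21_4} and paying the three resolvent factors from the three recollision events yields
\begin{equation*}
|\mathbb{E}[\mathcal{T}(n,\sigma^1,\sigma^2,\alpha,\tilde{\alpha},\kappa)]|
\;\leq\; C_d^{a_k+\tilde a_k+n}\,\norm{\hat V}_{1,\infty,5d+5}^{|\alpha|+|\tilde\alpha|+2}
\,\frac{\rho^{2}(\rho t)^{2k-n-6}\,\hbar^{3}\,|\log\zeta|^{n+7}}{(k-4)!\,(k-n-2)!}\,\norm{\varphi}_{\mathcal{H}^{5d+5}_\hbar(\R^d)}^{2}.
\end{equation*}
For the remaining $\kappa=\mathrm{id}$ subcases, in which both recollision blocks and the $l$-block are enclosed between consecutive pairing indices and no genuine pairing occurs inside them, I would instead run the time integration-by-parts/stationary phase argument used in Lemma~\ref{expansion_aver_bound_Mainterm_rec_21} and Lemma~\ref{expansion_aver_bound_Mainterm_rec_trun21_1}, with three successive substitutions $s_m\mapsto\sum_{j\geq m}s_j$ inside each block. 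This yields the same bound without the $|\log\zeta|^{n+7}$ factor and with $\norm{\varphi}_{L^2}^{2}$ in place of the Sobolev norm.

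Finally, summing the estimate over $n$, over $(\sigma^1,\sigma^2,\kappa)$, over $\alpha,\tilde\alpha\in\N^{k}$ (using Assumption~\eqref{assump_1} to ensure geometric decay in $|\alpha|+|\tilde\alpha|$), over $l\in\{1,\dots,k-2\}$, and over $\iota\in\mathcal{Q}_{k,2,2}^{1}$ whose cardinality is $O(k^{4})$ by Remark~\ref{Remark_grow_of_Q}, I would combine the $\hbar^{3}$ coming from the three momentum shifts with the $\hbar^{-2}$ from the outer time factor to obtain one clean power of $\hbar$, with a polynomial prefactor $k_0^{19}$ (four powers from $|\mathcal{Q}_{k,2,2}^{1}|$ beyond the $k_0^{16}$ in Lemma~\ref{expansion_aver_bound_Mainterm_rec_trun21_1}, plus a power from the $\ell$-sum) and a logarithmic factor $|\log(\hbar/\tau_0)|^{k_0+7}$ after evaluating the sum in $n$ as in the previous proofs. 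The main obstacle is the $\kappa=\mathrm{id}$ case analysis: one must verify that the exclusion of $\mathcal{Q}_{k,2,2}^{2}\cup\mathcal{Q}_{k,2,2}^{3}$ is exactly what prevents a pathological configuration where none of the three blocks is ``cut'' and the stationary-phase method simultaneously fails because of insufficient separation between the recollision centres, so that in every subcase at least one of the two methods applies.
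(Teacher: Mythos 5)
Your proposal follows essentially the same route as the paper's (itself only sketched) proof: the same reduction pulling the $s_{k+1}$ integral outside the norm, the same three momentum shifts at $p_{m_{1,2}-1}$, $p_{m_{2,2}-1}$ and $p_k$, the same application of Lemma~\ref{LE:Exp_ran_phases}, and the same dichotomy between the resolvent ($\nu$-representation) method when a crossing is present and the Fourier/stationary-phase method when it is not, followed by summation over $n,\alpha,\tilde\alpha,l,\iota,k$ using Assumption~\eqref{assump_1} and Remark~\ref{Remark_grow_of_Q}. The only cosmetic discrepancy is that your intermediate resolvent-case bound carries $\norm{\varphi}_{\mathcal{H}^{5d+5}_\hbar(\R^d)}$ where the lemma records $\norm{\varphi}_{\mathcal{H}^{2d+2}_\hbar(\R^d)}$, a bookkeeping difference in how many weight factors are inserted rather than a different argument.
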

\begin{proof}
We will only sketch the proof as it is analogous to that used for $\iota\in \mathcal{Q}_{k,2,1}$. We split into the two different cases $m_{2,1}>m_{1,2}$ and $m_{2,1}<m_{1,1}$. We then write up the $L^2$-norm as in the previous proofs and get these long sums over different configurations. For the functions we obtain this we preform the change of  variables $p_{m_{1,2}-1}\mapsto p_{m_{1,2}-1} - p_{m_{1,2}}$ and $p_{m}\mapsto p_{m} - p_{m_{1,2}-1}$ for all $m\in\{m_{1,2},\dots,m_{1,2}-2\}$, and  the change of variables $p_{m_{2,2}-1}\mapsto p_{m_{2,2}-1}-p_{m_{2,2}}$ and $p_m\mapsto p_m-p_{m_{2,2}-1}$ for all $m\in\{m_{2,1},\dots,m_{2,2}-2\}\setminus\{m_{1,2}-1\}$, where  again  $m_{1,1}$, $m_{1,2}$, $m_{2,1}$ and $m_{2,2}$ are the numbers associated to the map $\iota$.  Finally we preform the change of variables $p_k\mapsto p_k -p_{k+1}$ and $p_m\mapsto p_m-p_{k}$ for all $m\in\{l,\dots,k-1\}\setminus\{m_{2,2}-1,m_{1,2}-1\}$. This we do in both cases.   

After this we then write up the the averages of the inner products using Lemma~\ref{LE:Exp_ran_phases}. For each of these terms we now need to estimate these using either the Fourier method or the resolvent method.  In general for every case where we have a crossing coming from either $l$ being between $m_{1,1}$ and $m_{1,2}$ or  between $m_{2,1}$ and $m_{2,2}$ or from one of the index es $\sigma$ being between $m_{1,1}$ and $m_{1,2}$ or  between $m_{2,1}$ and $m_{2,2}$ or $\sigma_n>l$ we will use the resolvent method. When such a crossing is not present we will use the Fourier method. 
\end{proof}
The proof of the following Lemma follows as the other proofs of similar lemmas done in the previous sections.
\begin{lemma}\label{expansion_aver_bound_Mainterm_rec_trun22_2}
Assume we are in the setting of Definition~\ref{def_recol_reminder}. Let $\varphi \in \mathcal{H}^{3d+3}_\hbar(\R^d)$. Then 
\begin{equation*}
	\begin{aligned}
	\MoveEqLeft \mathbb{E}\Big[\big\lVert  \sum_{k_1=1}^{k_0} \sum_{k_2=1}^{k_0} \sum_{\iota\in\mathcal{Q}_{k_1+k_2,2,2}^{1}}  \boldsymbol{1}_{\{k_1+k_2\geq4\}} 
	\\
	&\times 
	  \sum_{(\boldsymbol{x}_1,\boldsymbol{x}_2)\in \mathcal{X}_{\neq}^{k_1+k_2-2}}  \mathcal{E}_{2,1}^{\mathrm{rec}}(k_1,k_2,\boldsymbol{x}_1,\iota,\tfrac{t}{\tau_0};\hbar) \mathcal{I}_{2,1}^{\mathrm{rec}}(k_2,\boldsymbol{x}_2,\iota,\tfrac{\tau-1}{\tau_0}t;\hbar)\varphi\big\rVert_{L^2(\R^d)}^2 \Big]
	\\
	&\leq k_0^{21} C^{2k_0}  |\log(\tfrac{\hbar}{\tau_0})|^{k_0+9}  \norm{\varphi}_{\mathcal{H}^{3d+3}_\hbar(\R^d)}^2  \hbar,
	\end{aligned}
\end{equation*}
and
\begin{equation*}
	\begin{aligned}
	\MoveEqLeft \mathbb{E}\Big[\big\lVert  \sum_{k_1=1}^{k_0} \sum_{k_2=1}^{k_0} \sum_{\iota\in\mathcal{Q}_{k_1+k_2,2,2}^{1}}  \boldsymbol{1}_{\{k_1+k_2\geq4\}} 
	\\
	&\times 
	  \sum_{(\boldsymbol{x}_1,\boldsymbol{x}_2)\in \mathcal{X}_{\neq}^{k_1+k_2-2}} \mathcal{E}_{2,j}^{\mathrm{rec}}(k_1+1,k_2,(x_{2,k_2},\boldsymbol{x}_1),\iota,\tfrac{t}{\tau_0};\hbar) \mathcal{I}_{2,j}^{\mathrm{rec}}(k_2,\boldsymbol{x}_2,\iota,\tfrac{\tau-1}{\tau_0}t;\hbar)\varphi\big\rVert_{L^2(\R^d)}^2 \Big]
	\\
	&\leq k_0^{21} C^{2k_0}  |\log(\tfrac{\hbar}{\tau_0})|^{k_0+9}  \norm{\varphi}_{\mathcal{H}^{3d+3}_\hbar(\R^d)}^2  \hbar,
	\end{aligned}
\end{equation*}
where the constant $C$ depends on the single site potential $V$ and the coupling constant $\lambda$. In particular we have that the function is in  $L^2(\R^d)$ $\Pro$-almost surely. 
\end{lemma}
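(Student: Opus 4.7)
The plan is to reduce the double-norm to a fixed configuration $(k_1,k_2,\iota)$, and then combine the structural analysis of Lemma~\ref{expansion_aver_bound_Mainterm_rec_trun22_1} (which handled the analogous estimate with no second block on the right) with the two-block machinery of Lemma~\ref{expansion_aver_bound_Mainterm_2} and Lemma~\ref{expansion_aver_bound_Mainterm_3}. First I would apply Cauchy--Schwarz in the outer sums: the sums over $k_1,k_2\in\{1,\dots,k_0\}$ contribute a factor $k_0^2$, the sum over $\iota\in\mathcal{Q}_{k_1+k_2,2,2}^{1}$ contributes $O(k_0^4)$ by Remark~\ref{Remark_grow_of_Q}, and the sum over the recollision index $l$ inside $\mathcal{E}_{2,2}^{\mathrm{rec}}$ contributes $O(k_0)$; together with the $\alpha$-truncation factors these account for the $k_0^{10}$ slack between the eventual per-configuration bound and the target $k_0^{21}$. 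As in \eqref{EQ:expansion_aver_bound_3.recol_21_0}, unitarity of $U_{\hbar,\lambda}(-s_{k+1})U_{\hbar,0}(s_{k+1})$ and Jensen in $s_{k+1}$ pull the outer time integral out of the $L^2$-norm, reducing matters to a fixed $s_{k+1}$.

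Next I would expand the remaining squared norm into a double sum over $\mathcal{X}_{\neq}^{2(k_1+k_2-2)}$ and decompose it exactly as in Lemma~\ref{expansion_aver_bound_Mainterm_rec_22_2}, indexing pairings by $(n,\sigma^1,\sigma^2,\kappa)$ with $\sigma^1,\sigma^2\in\tilde{\mathcal{A}}(k_1,n_1,k_2,n_2)$ and $\kappa\in\mathcal{S}_n$, then take the Poisson expectation via Lemma~\ref{LE:Exp_ran_phases}. After the same momentum shifts used in the proof of Lemma~\ref{expansion_aver_bound_Mainterm_rec_trun22_1}, namely $p_{m_{1,2}-1}\mapsto p_{m_{1,2}-1}-p_{m_{1,2}}$ and $p_m\mapsto p_m-p_{m_{1,2}-1}$ on $\{m_{1,1},\dots,m_{1,2}-2\}$, the analogous shift for the second internal block, and the recollision shift $p_k\mapsto p_k-p_{k+1}$, $p_m\mapsto p_m-p_k$ on $\{l,\dots,k-1\}\setminus\{m_{1,2}-1,m_{2,2}-1\}$, the integrand acquires three extra momentum variables $\tilde p_1,\tilde p_2,\tilde p_3$ (with tilde twins on the adjoint side) carrying the two internal collisions and the recollision.

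At this point I would split into the same subcases as in the proof of Lemma~\ref{expansion_aver_bound_Mainterm_rec_trun22_1}. If no $\sigma_i^j$ sits inside the recollision intervals $[m_{1,1},m_{1,2}]$ or $[m_{2,1},m_{2,2}]$, $\sigma_n^j\leq l-2$, and $\kappa=\mathrm{id}$, I would apply the stationary-phase argument of Lemma~\ref{app_quadratic_integral_tech_est} after the decoupling shifts $\tilde p_j\mapsto \tilde p_j+p_\cdot$, obtaining a contribution bounded in terms of $\norm{\varphi}_{L^2}^2$ alone. In every other subcase (any crossing sitting inside a recollision block, $\sigma_n>l-2$, or $\kappa\neq\mathrm{id}$) I would use the $\nu$-representation with regulariser $\zeta=\zeta(\hbar^{-1}t\tau_0^{-1})$, insert the function $f(\boldsymbol s,\boldsymbol{\tilde s})$ as in Lemma~\ref{expansion_aver_bound_Mainterm_2}, and apply the cascade of resolvent estimates from Lemmas~\ref{LE:resolvent_int_est} and~\ref{LE:est_res_combined}. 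The logarithmic exponent $k_0+9$, as opposed to the $k_0+7$ of Lemma~\ref{expansion_aver_bound_Mainterm_rec_trun22_1}, is produced by two \emph{additional} energy integrations $\tilde\nu_0,\tilde\nu_1$ arising from the splitting of the time interval into blocks $[0,t/\tau_0]$ and $[t/\tau_0,\tau t/\tau_0]$, exactly as in the step \eqref{EQ:exp_aver_b_er_mainterm_6} of Lemma~\ref{expansion_aver_bound_Mainterm_2}.

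The second estimate, in which the first argument of $\mathcal{E}_{2,j}^{\mathrm{rec}}$ is the shifted configuration $(x_{2,k_2},\boldsymbol{x}_1)$, is handled in the same way with the additional junction-point bookkeeping of Lemma~\ref{expansion_aver_bound_Mainterm_3}: the shared scatterer $x_{2,k_2}$ at the seam between the two expansion blocks produces a single extra $\Theta$-factor depending on $s_{k_2}$, which in the Fourier route is absorbed by treating $s_{k_2}$ as a $t$-variable and $p_{k_2}$ as an $\eta$-variable, and in the $\nu$-representation route contributes an identical factor of $\hbar\tau_0^{-1}$ and one more resolvent, both already accounted for in the final bound. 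Summing the Born series in $|\alpha|+|\tilde\alpha|$ via Assumption~\eqref{assump_1}, evaluating the $n$-sum and collapsing the $(k_1,k_2)$-sum as in Lemma~\ref{expansion_aver_bound_Mainterm_rec_trun21_2} gives the stated estimate. The main obstacle is not analytic but purely combinatorial: among the $O(k^4)$ maps in $\mathcal{Q}_{k,2,2}^{1}$ one must verify, for every pairing, that when the $\nu$-representation is forced the first crossing index $i^{*}$ produces a ``free'' momentum difference $q_{\kappa(i^{*})-1}-q_{i^{*}-1}$ to which Lemma~\ref{LE:est_res_combined} can be applied, so that the crucial factor $|q|^{-1}$ is available to cap the $\tilde p$-integrals. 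Once this case distinction is tabulated exactly as in the proofs of Lemmas~\ref{expansion_aver_bound_Mainterm_rec_trun22_1} and~\ref{expansion_aver_bound_Mainterm_rec_trun21_2}, all remaining computations are routine.
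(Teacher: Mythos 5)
Your proposal is correct and follows essentially the same route the paper intends: the paper gives no separate argument for this lemma, stating only that it follows by combining the earlier proofs, namely the one-block recollision analysis of Lemma~\ref{expansion_aver_bound_Mainterm_rec_trun22_1} with the two-block machinery of Lemmas~\ref{expansion_aver_bound_Mainterm_2} and~\ref{expansion_aver_bound_Mainterm_3}, exactly as you do, including the Fourier-versus-resolvent case split, the extra $\tilde\nu$ integrations explaining the exponent $k_0+9$, and the seam-scatterer treatment for the second estimate. Your bookkeeping of the $k_0$ powers is looser than exact, but the paper itself deliberately over-estimates these combinatorial factors, so no gap remains.
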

\subsection{Estimates for $\iota\in\mathcal{Q}_{k,2,2}^{2}$}
\begin{lemma}\label{expansion_aver_bound_Mainterm_rec_trun22_3}
Assume we are in the setting of Definition~\ref{def_recol_reminder}. Let $\varphi \in \mathcal{H}^{2d+2}_\hbar(\R^d)$. Then 
\begin{equation*}
	\begin{aligned}
	\mathbb{E}\Big[\big\lVert \sum_{k_1=4}^{k_0} \sum_{\iota\in\mathcal{Q}_{k,2,2}^{2}}   \sum_{\boldsymbol{x}_1\in \mathcal{X}_{\neq}^{k_1-2}}  \mathcal{E}_{2,2}^{\mathrm{rec}}(k_1,0,\boldsymbol{x}_1,\iota,\tfrac{t}{\tau_0};\hbar)\varphi\big\rVert_{L^2(\R^d)}^2 \Big]
	\leq k_0^{19} C^{2k_0}  |\log(\tfrac{\hbar}{\tau_0})|^{k_0+7}  \norm{\varphi}_{\mathcal{H}^{2d+2}_\hbar(\R^d)}^2  \hbar,
	\end{aligned}
\end{equation*}
where the constant $C$ depends on the single site potential $V$ and the coupling constant $\lambda$. In particular we have that the function is in  $L^2(\R^d)$ $\Pro$-almost surely. 
\end{lemma}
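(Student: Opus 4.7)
The plan is to follow the template of Lemma~\ref{expansion_aver_bound_Mainterm_rec_trun21_1}, but, as indicated in the remark introducing this subsection, to use the $\nu$-representation (resolvent method) in every sub-case rather than mixing it with the stationary-phase/Fourier method. I would start by applying unitarity of $U_{\hbar,\lambda}(-s_{k+1})U_{\hbar,0}(s_{k+1})$ and Jensen's inequality to pull the outer time integration out of the $L^2$-norm, exactly as in \eqref{EQ:expansion_aver_bound_3.recol_21_0}. Expanding the squared norm then produces the standard sum over pairings parametrised by $n$, $\sigma^1,\sigma^2\in\mathcal{A}(k-2,n)$, $\kappa\in\mathcal{S}_n$ and internal-scattering indices $\alpha,\tilde\alpha\in\N^{k}$; the combinatorial prefactor coming from $\#\mathcal{Q}_{k,2,2}^{2}$ is polynomial in $k$ by Remark~\ref{Remark_grow_of_Q}.

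Next I would write each function $\tilde{\mathcal{E}}_{2,2}^{\mathrm{rec}}(s_{k+1},k,\boldsymbol{x},\alpha,l,t,\iota;\hbar)\varphi$ in momentum representation via the form of Observation~\ref{obs_form_I_op_kernel} and perform three successive changes of momentum variables to isolate the three ``recollision'' momenta $\tilde p_1,\tilde p_2,\tilde p_3$: one centred at $m_{1,2}$, one at $m_{2,2}$ and one at $k$ (the outer $V_{\hbar,x_l}^{s_{k+1}}$ that launches the error term). The crucial feature of the class $\mathcal{Q}_{k,2,2}^{2}$, where $m_{1,1}<m_{2,1}<m_{1,2}$, is that the index intervals $\{m_{1,1},\dots,m_{1,2}-1\}$ and $\{m_{2,1},\dots,m_{2,2}-1\}$ overlap. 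Consequently the indicator functions $\pi^1_m,\pi^2_m$ that record which shifts $\tilde p_j$ enter each factor $\hat{\mathcal V}_{\alpha_{\iota^*(m)}}$ produce a quadratic form in $(\tilde p_1,\tilde p_2,\tilde p_3)$ which, unlike the nested/disjoint case of $\mathcal{Q}_{k,2,1}$ treated in the first sub-case of Lemma~\ref{expansion_aver_bound_Mainterm_rec_21}, cannot be diagonalised by a single affine change of variables. This is the reason the Fourier/stationary-phase route of \eqref{EQ:expansion_aver_bound_3.recol_21_3} is unavailable here and we must use resolvents throughout.

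Having applied Lemma~\ref{LE:Exp_ran_phases} to take the expectation and obtain the factor $\Lambda_n(\boldsymbol p,\boldsymbol q,\sigma^1,\sigma^2,\kappa)$, I would insert the characteristic-function surrogate $f(\boldsymbol s,\tilde{\boldsymbol s})$ restricting the appropriate partial sums of $s$'s and $\tilde s$'s to $[0,\hbar^{-1}t]$, introduce the auxiliary variables $s_0,\tilde s_0$ (and an extra one to encode the $s_{k+1}$ integration) together with the regulariser $\zeta=\zeta(\hbar^{-1}t)$, and write the resulting delta functions as Fourier transforms of $1$. This converts every $s$- and $\tilde s$-integral into a resolvent $|\tfrac12 p^2+\nu+i\zeta|^{-1}$ (with a factor $e^{2\hbar^{-1}t\zeta}\le C$). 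After evaluating all $\boldsymbol\eta,\boldsymbol\xi,\boldsymbol x,\boldsymbol y$ integrals by the standard integration-by-parts device used in Lemma~\ref{app_quadratic_integral_tech_est} and collapsing the integrals in $s_0,\tilde s_0,s_{\sigma_{i^*}^1}$ and the $\tilde s$'s whose indices lie in $\sigma^2$, I am left with an integral over $\boldsymbol q,\tilde p_1,\tilde p_2,\tilde p_3,\nu,\tilde\nu$ of a product of resolvents that is the natural generalisation of \eqref{EQ:exp_aver_b_Mainterm_rec_12}: it contains three resolvents of the form $|\tfrac12(\tilde p_j+q_{l_j})^2+\nu+i\zeta|^{-1}$, a chain $\prod |\tfrac12 q_{\sigma_m^2}^2-\tilde\nu-i\zeta|^{-1}$, and Schwartz envelopes $\langle\cdot\rangle^{-d-1}$ coming from $\hat V\in\mathcal{S}$.

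The remaining integral is then controlled by applying Lemma~\ref{LE:est_res_combined} successively to each $\tilde p_j$ variable, which produces two powers of $|\log\zeta|$ and a factor $|q_{l_j}-q_{l_j'}|^{-1}$ each time, followed by Lemma~\ref{LE:resolvent_int_est} to integrate the chain in $\boldsymbol q$ and finally the $\nu,\tilde\nu$ integrals. The whole estimate then gives a bound of the shape $C^{|\alpha|+|\tilde\alpha|}\hbar^3\rho^2(\rho t)^{2k-n-6}|\log\zeta|^{n+7}/((k-4)!(k-n-2)!)\,\|\varphi\|_{\mathcal{H}^{2d+2}_\hbar}^{2}$. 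Summing in $n$, in $\alpha,\tilde\alpha$ (convergent by Assumption~\eqref{assump_1}), in $\sigma^1,\sigma^2,\kappa$ (using $n!/(k-4)!\le k^4$) and in $\iota\in\mathcal{Q}_{k,2,2}^{2}$ and $k\le k_0$ yields the stated bound with the polynomial power of $k_0$. The main obstacle I expect is organisational: one has to choose the correct order in which to apply Lemma~\ref{LE:est_res_combined} to the three variables $\tilde p_1,\tilde p_2,\tilde p_3$ (specifically, the one whose two resolvent denominators encode the ``overlap'' of the nested recollision intervals must be integrated last) so that the logarithmic factors assemble into the single power $|\log\zeta|^{n+7}$ rather than a diverging product; and one has to verify that, case by case in $\kappa$ and in the relative positions of $\sigma_n^j$ and $l$, no new crossing terms require extra derivatives of $\varphi$ beyond $\mathcal H^{2d+2}_\hbar$.
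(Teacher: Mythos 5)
Your plan is essentially the paper's own proof: for this lemma the paper gives exactly the sketch you describe — expand the $L^2$-norm, perform the three changes of variables centred at $m_{1,2}$, $m_{2,2}$ and $k$, apply Lemma~\ref{LE:Exp_ran_phases}, and then estimate every resulting term by the resolvent ($\nu$-representation) method, the paper's stated reason being that for $\iota\in\mathcal{Q}_{k,2,2}^{2}$ a crossing is always present, which matches your observation that the interleaving of the two recollision intervals rules out the Fourier/stationary-phase route. The additional bookkeeping you supply (the order in which Lemma~\ref{LE:est_res_combined} and Lemma~\ref{LE:resolvent_int_est} are applied, and the summations over $n$, $\alpha,\tilde\alpha$, $\sigma^1,\sigma^2,\kappa$, $\iota$ and $k\leq k_0$) goes beyond the paper's sketch but is consistent with how the analogous estimates are carried out in Lemma~\ref{expansion_aver_bound_Mainterm_rec} and Lemma~\ref{expansion_aver_bound_Mainterm_rec_trun21_1}, which the paper implicitly invokes.
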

\begin{proof}
To obtain these estimates we again write up the $L^2$-norm as in the previous proofs and get these long sums over different configurations. For the functions we obtain this we preform the change of  variables $p_{m_{1,2}-1}\mapsto p_{m_{1,2}-1} - p_{m_{1,2}}$ and $p_{m}\mapsto p_{m} - p_{m_{1,2}-1}$ for all $m\in\{m_{1,2},\dots,m_{1,2}-2\}$, and  the change of variables $p_{m_{2,2}-1}\mapsto p_{m_{2,2}-1}-p_{m_{2,2}}$ and $p_m\mapsto p_m-p_{m_{2,2}-1}$ for all $m\in\{m_{2,1},\dots,m_{2,2}-2\}\setminus\{m_{1,2}-1\}$, where  again  $m_{1,1}$, $m_{1,2}$, $m_{2,1}$ and $m_{2,2}$ are the numbers associated to the map $\iota$.  Finally we preform the change of variables $p_k\mapsto p_k -p_{k+1}$ and $p_m\mapsto p_m-p_{k}$ for all $m\in\{l,\dots,k-1\}\setminus\{m_{2,2}-1,m_{1,2}-1\}$. After this we then write up the the averages of the inner products using Lemma~\ref{LE:Exp_ran_phases}. For each of these terms we now need to estimate these using the resolvent method. We always use the resolvent method to estimate these since we always have a crossing present. 
\end{proof}
\begin{lemma}\label{expansion_aver_bound_Mainterm_rec_trun22_4}
Assume we are in the setting of Definition~\ref{def_recol_reminder}. Let $\varphi \in \mathcal{H}^{3d+3}_\hbar(\R^d)$. Then 
\begin{equation*}
	\begin{aligned}
	\MoveEqLeft \mathbb{E}\Big[\big\lVert  \sum_{k_1=1}^{k_0} \sum_{k_2=1}^{k_0} \sum_{\iota\in\mathcal{Q}_{k_1+k_2,2,2}^{2}}  \boldsymbol{1}_{\{k_1+k_2\geq4\}} 
	\\
	&\times 
	  \sum_{(\boldsymbol{x}_1,\boldsymbol{x}_2)\in \mathcal{X}_{\neq}^{k_1+k_2-2}}  \mathcal{E}_{2,1}^{\mathrm{rec}}(k_1,k_2,\boldsymbol{x}_1,\iota,\tfrac{t}{\tau_0};\hbar) \mathcal{I}_{2,1}^{\mathrm{rec}}(k_2,\boldsymbol{x}_2,\iota,\tfrac{\tau-1}{\tau_0}t;\hbar)\varphi\big\rVert_{L^2(\R^d)}^2 \Big]
	\\
	&\leq k_0^{21} C^{2k_0}  |\log(\tfrac{\hbar}{\tau_0})|^{k_0+9}  \norm{\varphi}_{\mathcal{H}^{3d+3}_\hbar(\R^d)}^2  \hbar,
	\end{aligned}
\end{equation*}
and
\begin{equation*}
	\begin{aligned}
	\MoveEqLeft \mathbb{E}\Big[\big\lVert  \sum_{k_1=1}^{k_0} \sum_{k_2=1}^{k_0} \sum_{\iota\in\mathcal{Q}_{k_1+k_2,2,2}^{2}}  \boldsymbol{1}_{\{k_1+k_2\geq4\}} 
	\\
	&\times 
	  \sum_{(\boldsymbol{x}_1,\boldsymbol{x}_2)\in \mathcal{X}_{\neq}^{k_1+k_2-2}} \mathcal{E}_{2,j}^{\mathrm{rec}}(k_1+1,k_2,(x_{2,k_2},\boldsymbol{x}_1),\iota,\tfrac{t}{\tau_0};\hbar) \mathcal{I}_{2,j}^{\mathrm{rec}}(k_2,\boldsymbol{x}_2,\iota,\tfrac{\tau-1}{\tau_0}t;\hbar)\varphi\big\rVert_{L^2(\R^d)}^2 \Big]
	\\
	&\leq k_0^{21} C^{2k_0}  |\log(\tfrac{\hbar}{\tau_0})|^{k_0+9}  \norm{\varphi}_{\mathcal{H}^{3d+3}_\hbar(\R^d)}^2  \hbar,
	\end{aligned}
\end{equation*}
where the constant $C$ depends on the single site potential $V$ and the coupling constant $\lambda$. In particular we have that the function is in  $L^2(\R^d)$ $\Pro$-almost surely. 
\end{lemma}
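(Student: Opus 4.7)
The plan is to follow the template of Lemma~\ref{expansion_aver_bound_Mainterm_rec_trun21_2}, combining the resolvent-only scheme used in Lemma~\ref{expansion_aver_bound_Mainterm_rec_trun22_3} for $\iota\in\mathcal{Q}^2_{k,2,2}$ with the two–time–interval bookkeeping from Lemma~\ref{expansion_aver_bound_Mainterm_2}. Concretely, I would first bound the squared norm of the triple sum by $Ck_0^{10}$ times the sum, over $k=k_1+k_2\in\{4,\dots,2k_0\}$ and $\iota\in\mathcal{Q}^2_{k,2,2}$, of the squared norm of a single composed operator; the factor $k_0^{10}$ absorbs the sums over $k_1,k_2$ together with the cardinality bound $\#\mathcal{Q}^2_{k,2,2}\le Ck^4$ from Remark~\ref{Remark_grow_of_Q}. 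Next, as in \eqref{EQ:expansion_aver_bound_3.recol_21_0}, pull out the outer time variable $s_{k+1}$ and the location index $l$ via Jensen, leaving the squared norm of $\tilde{\mathcal{E}}^{\mathrm{rec}}_{2,2}\cdot\mathcal{I}^{\mathrm{rec}}_{2,2}$ integrated in $s_{k+1}$ over $[0,t/\tau_0]$.

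Expanding the $L^2$–norm produces the standard double sum over $n\in\{0,\dots,k-2\}$, orderings $\sigma^1,\sigma^2\in\tilde{\mathcal{A}}(k_1,n_1,k_2,n_2)$ and pairings $\kappa\in\mathcal{S}_n$. I would then perform the same change of variables in momentum as in Lemma~\ref{expansion_aver_bound_Mainterm_rec_trun22_3} (peeling off $p_{m_{1,2}-1}$, $p_{m_{2,2}-1}$ and $p_k$) and apply Lemma~\ref{LE:Exp_ran_phases} to integrate out the Poisson phases. The decisive structural input is the defining property of $\mathcal{Q}^2_{k,2,2}$, namely $m_{1,1}<m_{2,1}<m_{1,2}<m_{2,2}$: because the two internal–collision intervals interleave, every pairing pattern $(\sigma^1,\sigma^2,\kappa)$ produces at least one genuine crossing, so the Fourier/stationary–phase estimate (which would otherwise handle the ``$\kappa=\mathrm{id}$, no crossing'' branch as in Lemma~\ref{expansion_aver_bound_Mainterm_rec_trun21_1}) is never needed. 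Accordingly I would use the ``$\nu$–representation'' uniformly: introduce auxiliary variables $s_0,\tilde{s}_0,\tilde{s}_{k+1}$, insert the regulating exponentials $e^{\zeta_1\cdot}$ and $e^{\zeta_0\cdot}$ with $\zeta_i=\zeta((\tau_i\hbar)^{-1}t)$ as in Lemma~\ref{expansion_aver_bound_Mainterm_2}, write the time delta functions as Fourier transforms of $1$ in new variables $\nu,\tilde{\nu}_0,\tilde{\nu}_1$, and evaluate the resulting Gaussian integrals in the internal $t$–variables by the standard division–into–cases argument.

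After evaluating all $s$–integrals that $f(\boldsymbol{s},\boldsymbol{\tilde s})$ does not depend on (one gets the combinatorial gain $(\hbar^{-1}t)^{2k-n-1}\tau_0^{n_1+1-2k_1}/((k_1-1)!(k_1-n_1)!(k_2-1)!(k_2-n_2)!)$ as in \eqref{EQ:exp_aver_b_er_mainterm_4}), it remains to estimate the momentum–resolvent integral. Here I would follow the chain used in Lemma~\ref{expansion_aver_bound_Mainterm_rec_trun22_3}: insert the standard $\langle q_0\rangle^{5d+5}$, $\langle\tilde p_j\rangle^{d+1}$, $\langle\tilde q_j\rangle^{d+1}$ and $\langle q_{\sigma^2_i}-q_{\sigma^2_{i-1}}\rangle^{d+1}$ weights that absorb derivatives of $\mathcal{G}$; then do the crossed integral via Lemma~\ref{LE:est_res_combined} at the first violation of pairing induced by $\iota$ (which exists by the interleaving property), and chain single–resolvent estimates from Lemma~\ref{LE:resolvent_int_est} on the remaining variables together with the $\tilde{\nu}_0,\tilde{\nu}_1,\nu$ integrals. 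This yields a bound of the shape $C^{a_k+\tilde a_k}\,|\log(\zeta_1)|^{n_2+c_1}|\log(\zeta_0)|^{n_1+c_2}$ with $c_1+c_2\le 9$, matched against the $5d+5$ derivatives of $\varphi$ used to control $\langle q_0\rangle^{5d+5}|\hat\varphi(q_0/\hbar)|^2$.

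Summing over $n$, $\sigma^1,\sigma^2,\kappa$ (extracting $n!$ and an inverse factorial to produce the $(k-1)!$ denominator), then over $k_1+k_2=k$, then over $k\le 2k_0$ and $\iota\in\mathcal{Q}^2_{k,2,2}$, and combining with the outer $\tau_0^{-1}$ from the $s_{k+1}$ integration and the $\hbar$ from the resolvent gains, gives the claimed bound $k_0^{21}C^{2k_0}|\log(\hbar/\tau_0)|^{k_0+9}\hbar\,\|\varphi\|_{\mathcal{H}^{3d+3}_\hbar}^2$. The second estimate (with $(x_{2,k_2},\boldsymbol{x}_1)$ replacing $\boldsymbol{x}_1$) is obtained by the same argument together with the index–shift device from Lemma~\ref{expansion_aver_bound_Mainterm_3} that fuses the two $\Theta$–operators at $x_{2,k_2}$ into a single $\tilde\Theta$, which merely increases some polynomial prefactors in $k_0$. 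The main obstacle is purely bookkeeping: tracking how many $\log$ factors are generated in the chained resolvent estimate (one per internal line, plus two extra from Lemma~\ref{LE:est_res_combined}, plus one from the additional $\tilde{\nu}_0$ integration) and verifying that the required derivative count $5d+5$ on $\varphi$ is indeed sufficient to balance the weights introduced to tame the $\tilde p_j,\tilde q_j$ integrations originating from the two internal recollisions.
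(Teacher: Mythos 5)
Your proposal matches what the paper intends: Lemma~\ref{expansion_aver_bound_Mainterm_rec_trun22_4} is stated without a written proof, being left (as with Lemma~\ref{expansion_aver_bound_Mainterm_rec_trun21_2} and Lemma~\ref{expansion_aver_bound_Mainterm_rec_trun22_2}) to the combination of the single-operator resolvent argument of Lemma~\ref{expansion_aver_bound_Mainterm_rec_trun22_3} with the two-time-interval bookkeeping of Lemma~\ref{expansion_aver_bound_Mainterm_2} and Lemma~\ref{expansion_aver_bound_Mainterm_3}, which is exactly the route you describe, including the key observation that every configuration in $\mathcal{Q}_{k,2,2}^{2}$ forces a crossing so the $\nu$-representation can be used throughout. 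Your one loose end resolves itself: the weight $\langle q_0\rangle^{5d+5}$ is controlled by the $\mathcal{H}^{3d+3}_\hbar$-norm exactly as in \eqref{EQ:exp_aver_b_er_mainterm_7}, so no $5d+5$ derivatives of $\varphi$ are needed, consistent with the stated hypothesis and conclusion.
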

\subsection{Estimates for $\iota\in\mathcal{Q}_{k,2,2}^{3}$}
For this case we will have to restart our expansion as the collision almost behave as a single collision happening within two potentials. It is only to estimate these terms that we are restricting the dimension to be three. Hence in this section we always assume $d=3$.
\begin{lemma}\label{expansion_aver_bound_Mainterm_rec_trun22_5}
Assume we are in the setting of Definition~\ref{def_recol_reminder}. Let $\varphi \in \mathcal{H}^{5d+5}_\hbar(\R^d)$ and for all $k$ let $\iota\in\mathcal{Q}_{k,2,2}^{3}$. Then 
\begin{equation*}
	\begin{aligned}
	\MoveEqLeft \mathbb{E}\Big[\big\lVert \sum_{k=4}^{k_0}   \sum_{\boldsymbol{x}_1\in \mathcal{X}_{\neq}^{k-2}}  \mathcal{E}_{2,2}^{\mathrm{rec}}(k,0,\boldsymbol{x}_1,\iota,\tfrac{t}{\tau_0};\hbar)\varphi\big\rVert_{L^2(\R^d)}^2 \Big]
	\\
	&\leq C \hbar k_0^{10} C^{k_0}   |\log(\tfrac{\hbar}{t})|^{k_0+9} \norm{\varphi}_{\mathcal{H}^{5d+5}_\hbar(\R^d)}^2 +C \tau_0^{-3}  \norm{\varphi}_{\mathcal{H}^{5d+5}_\hbar(\R^d)}^2 k_0^{10} C^{k_0} |\log(\tfrac{\hbar}{t})|^{k_0+20} ,
	\end{aligned}
\end{equation*}
where the constants $C$ depends on the single site potential $V$ and the coupling constant $\lambda$. In particular we have that the function is in  $L^2(\R^d)$ $\Pro$-almost surely. 
\end{lemma}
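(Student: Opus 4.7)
The configurations in $\mathcal{Q}_{k,2,2}^{3}$ are singled out because $m_{1,1}=k-3$ and $m_{2,1}=k-2$ force the two recollisions to occur at the very end of the collision series in the tightly coupled alternating pattern $\dots,x_{k-3},x_{k-2},x_{k-3},x_{k-2}$. This ``double bubble'' structure resists both the stationary-phase (Fourier) and resolvent arguments used in Sections~\ref{Sec:tech_est_0_recol}--\ref{Sec:tech_est_truncated_recol_1}: the intermediate phases cancel pairwise and the two consecutive resolvents of the form $(\tfrac12 q^2+\nu+i\zeta)^{-1}$ corresponding to the alternating centres produce no useful decay. The plan is therefore to perform one additional Duhamel step on the factor $U_{\hbar,\lambda}(-s_{k+1})$ sitting inside $\mathcal{E}_{2,2}^{\mathrm{rec}}(k,0,\boldsymbol{x},\iota,t/\tau_0;\hbar)$, which creates one further potential factor at a new scattering centre and lets us split the estimate into two physically distinct contributions.

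First I would substitute the definition of $\mathcal{E}_{2,2}^{\mathrm{rec}}$ and note that the added potential $V^{s_{k+1}}_{\hbar,x_l}$ is forced to have $l\in\{1,\dots,k-3\}$, since $\iota(l)\ne k$ excludes $l=k-2$. Writing $U_{\hbar,\lambda}(-s_{k+1}) = U_{\hbar,0}(-s_{k+1}) + (i\lambda/\hbar)\int_0^{s_{k+1}}U_{\hbar,\lambda}(-s_{k+2})W_{\hbar}U_{\hbar,0}(s_{k+2}-s_{k+1})\,ds_{k+2}$ and sorting the sum over scatterers into diagonal and off-diagonal pieces as in the proof of Proposition~\ref{duhamel_expansion_lemma} produces three kinds of terms: $(a)$ the free-propagator piece in which the end structure is exactly the alternating block $x_{k-3},x_{k-2},x_{k-3},x_{k-2}$ followed by $x_l$, $(b)$ a first-collision piece in which the new centre coincides with an already-used scatterer (producing a genuine third recollision), and $(c)$ a first-collision piece landing on a fresh Poisson point.

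For block $(a)$ I would recognise the alternating pattern as a $T$-matrix-like kernel between $x_{k-3}$ and $x_{k-2}$; the associated momentum-space kernel is computed using Lemma~\ref{LE:Fourier_trans_resolvent}, which is precisely where the $d=3$ restriction enters, and then estimated in position space by a product of resolvent kernels behaving like $|x_{k-3}-x_{k-2}|^{-1}$. Taking expectation over the Poisson process, Lemma~\ref{LE:int_posistion} converts this singularity into an integrable quantity against the Schwartz weights from $\hat V$, and the resulting bound carries an extra factor of $\hbar$ coming from the Poisson scaling of the scatterer positions---this is the source of the first term $\hbar k_0^{10}C^{k_0}|\log(\hbar/t)|^{k_0+9}\|\varphi\|^2$. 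For blocks $(b)$ and $(c)$, the extra Duhamel step effectively produces a third recollision (or an extra free block), so the techniques of \Cref{Sec:tech_est_truncated_recol_1,Sec:tech_est_truncated_recol_2} apply. Here the time integration is confined to the short window $[0,t/\tau_0]$ in \emph{three} independent places (the two alternating blocks and the additional step), which is what delivers the $\tau_0^{-3}$ smallness in the second term.

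The main obstacle is organising the extra Duhamel step so that the alternating block at the end factorises cleanly as a $T$-matrix kernel depending only on the pair $(x_{k-3},x_{k-2})$: without this factorisation, summing the Born series in $\alpha\in\N^k$ and keeping the Poisson expectation under control at the same time is very delicate, because the integrations in the two intermediate momenta of the alternating bubble couple to all other momenta through the delta constraints of Lemma~\ref{LE:Exp_ran_phases}. Once this block is bounded uniformly in the intermediate momenta using Lemmas~\ref{LE:Fourier_trans_resolvent}--\ref{LE:est_res_combined}, the remaining combinatorial bookkeeping over pairings $(\sigma^1,\sigma^2,\kappa)$ and over $k\le k_0$ is a routine variant of the arguments in Lemmas~\ref{expansion_aver_bound_Mainterm_rec_21} and \ref{expansion_aver_bound_Mainterm_rec_trun21_1}, producing the polynomial factor $k_0^{10}C^{k_0}$ and the logarithmic factors $|\log(\hbar/t)|^{k_0+9}$ and $|\log(\hbar/t)|^{k_0+20}$ from the repeated application of Lemma~\ref{LE:resolvent_int_est}.
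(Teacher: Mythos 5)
There is a genuine gap. Your plan performs a \emph{single} additional Duhamel step on $U_{\hbar,\lambda}(-s_{k+1})$ and claims that the resulting ``recollision'' and ``fresh centre'' pieces (your (b) and (c)) can be handled by the techniques of \Cref{Sec:tech_est_truncated_recol_1,Sec:tech_est_truncated_recol_2}. But one of the terms produced in case (b) is the one where the newly created collision lands again on $x_{k-3}$ or $x_{k-2}$, i.e.\ it extends the alternating block by one more factor; this term has exactly the same ``double bubble at the end'' structure that defeats both the Fourier and the resolvent arguments, which is the very reason this lemma needs a separate proof, so the argument does not close after one step. The paper's proof instead splits the $l$-sum into $l\le k-4$ (treated directly, with no re-expansion, by the ladder-domination and resolvent estimates; this is the source of the $O(\hbar)$ term in the bound) and $l=k-3$, and for the latter it \emph{iterates} the Duhamel expansion as long as the collisions keep alternating between $x_{k-3}$ and $x_{k-2}$, stopping only when a new centre or a different recollision appears. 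This produces an infinite series indexed by the block length $\beta$, whose summability is not free: it uses the smallness assumption \eqref{assump_1} on $\lambda$ through the geometric factor $\tilde\lambda^\beta$ with $\tilde\lambda<1$ (Lemmas~\ref{expansion_aver_bound_Mainterm_rec_trun22_special_case_1}--\ref{expansion_aver_bound_Mainterm_rec_trun22_special_case_3}), a mechanism absent from your sketch.

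Two further points where your sketch diverges from what actually makes the proof work. First, the $d=3$ resolvent kernel of Lemma~\ref{LE:Fourier_trans_resolvent} together with Lemma~\ref{LE:int_posistion} is not used on the fixed-length end block of your case (a); it is used for \emph{long} alternating blocks ($\beta>6$) in the continued expansion, where the accumulated phase $e^{-i\langle\hbar^{-1/d}(x_{k-2}-x_{k-3}),\,\cdot\,\rangle}$ is converted into position-space factors $|z_m-(\pm1)^m x|^{-1}$ whose product decays like $\langle x\rangle^{-(\beta+2)/2}$, giving integrability in the separation of the two centres; the extra $\hbar$ in the $l\le k-4$ part comes instead from the crossing structure created by the insertion at $x_l$, not from ``Poisson scaling''. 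Second, the $\tau_0^{-3}$ term arises only from the terms $\mathcal{I}^{\mathrm{dob}}_3$ where the continued expansion terminates on a genuinely new scatterer: one factor $\tau_0^{-1}$ from the prefactor of the extra time integral and $\tau_0^{-2}$ because the time variables $s_{k+\beta}$ and $\tilde s_{k+\beta+1}$ are always confined to $[0,t/\tau_0]$, not because ``the two alternating blocks and the additional step'' each contribute a short-time window. So while your overall intuition (re-expand, recognise the alternating block as a $T$-matrix-like object, isolate a fresh-centre contribution carrying the $\tau_0$-smallness) points in the right direction, the single-step scheme and the appeal to earlier sections for the continued-alternation term constitute a real gap that the paper fills with the full $\beta$-expansion and the coupling-constant smallness.
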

\begin{proof}
From the definition of the operator $ \mathcal{E}_{2,2}^{\mathrm{rec}}(k,0,\boldsymbol{x}_1,\iota,\tfrac{t}{\tau_0};\hbar)$ we have that
\begin{equation}\label{EQ:expansion_aver_bound_Mainterm_rec_trun22_5_1}
	\begin{aligned}
	&\big\lVert \sum_{k=4}^{k_0}   \sum_{\boldsymbol{x}_1\in \mathcal{X}_{\neq}^{k-2}}  \mathcal{E}_{2,2}^{\mathrm{rec}}(k,0,\boldsymbol{x}_1,\iota,\tfrac{t}{\tau_0};\hbar)\varphi\big\rVert_{L^2(\R^d)}^2
	\\
	&\leq\frac{2}{\hbar^2}  \big\lVert \sum_{k=5}^{k_0}   \sum_{\boldsymbol{x}_1\in \mathcal{X}_{\neq}^{k-2}}  \sum_{\alpha\in \N^k}(i\lambda)^{|\alpha|+1}    \int_{0}^t   U_{\hbar,\lambda}(-s_{k+1}) U_{\hbar,0}(s_{k+1})\sum_{l=1}^{k-4}   \tilde{\mathcal{E}}_{2,2}^{\mathrm{rec}}(s_{k+1},k,\boldsymbol{x},\alpha,l,t,\iota;\hbar) \varphi\,ds_{k+1} \big\rVert_{L^2(\R^d)}^2
	\\
	&\phantom{=}{}+ 2  \big\lVert \sum_{k=4}^{k_0}   \sum_{\boldsymbol{x}_1\in \mathcal{X}_{\neq}^{k-2}}  \sum_{\alpha\in \N^k}\frac{(i\lambda)^{|\alpha|+1} }{\hbar}   \int_{0}^t   U_{\hbar,\lambda}(-s_{k+1}) U_{\hbar,0}(s_{k+1})   \tilde{\mathcal{E}}_{2,2}^{\mathrm{rec}}(s_{k+1},k,\boldsymbol{x},\alpha,k-3,t,\iota;\hbar) \varphi\,ds_{k+1} \big\rVert_{L^2(\R^d)}^2,
	\end{aligned}
\end{equation}
where 
\begin{equation*}
	\begin{aligned}
	 \tilde{\mathcal{E}}_{2,2}^{\mathrm{rec}}(s_{k+1},k,\boldsymbol{x},\alpha,l,t,\iota;\hbar)
	=   \int_{[0,t]_{\leq}^{k}} \boldsymbol{1}_{[s_{k+1},t]}(s_k)   V^{s_{k+1}}_{\hbar, x_{l}} \prod_{m=1}^k \Theta_{\alpha_m}(s_{m-1},{s}_{m},x_{\iota(m+k_1)};V,\hbar)  \, d\boldsymbol{s}_{k,1}U_{\hbar,0}(-t). 
	\end{aligned}
\end{equation*}
We start by estimating the average of the first $L^2$-norm. From applying Lemma~\ref{LE:crossing_dom_ladder} we obtain that
\begin{equation}\label{EQ:expansion_aver_bound_Mainterm_rec_trun22_5_2}
	\begin{aligned}
	\MoveEqLeft\mathbb{E}\Big[ \big\lVert \sum_{k=5}^{k_0}   \sum_{\boldsymbol{x}_1\in \mathcal{X}_{\neq}^{k-2}}  \sum_{\alpha\in \N^k}(i\lambda)^{|\alpha|+1}    \int_{0}^t   U_{\hbar,\lambda}(-s_{k+1}) U_{\hbar,0}(s_{k+1})\sum_{l=1}^{k-4}   \tilde{\mathcal{E}}_{2,2}^{\mathrm{rec}}(s_{k+1},k,\boldsymbol{x},\alpha,l,t,\iota;\hbar) \varphi\,ds_{k+1} \big\rVert_{L^2(\R^d)}^2\Big]
	\\
	\leq{}&  2k_0^4\sum_{k=5}^{k_0}  \sum_{l=1}^{k-4}   \sum_{\alpha,\tilde{\alpha}\in \N^k}\lambda^{|\alpha|+|\tilde{\alpha}|+2}   \sum_{n=0}^{k-2} \binom{k-2}{n} n!  \sum_{\sigma\in\mathcal{A}(k-2,n)} \big|\Aver{ \mathcal{T}(k,\sigma,\alpha,\tilde{\alpha},l;\hbar)}\big|,
	\end{aligned}
\end{equation}
where we have used the notation
\begin{equation*}
	\begin{aligned}
	 \mathcal{T}(k,\sigma,\alpha,\tilde{\alpha},l;\hbar) = {}&\sum_{(\boldsymbol{x},\tilde{\boldsymbol{x}})\in \mathcal{X}_{\neq}^{2k-4}}    \prod_{i=1}^n \frac{\delta(x_{\sigma_i}- \tilde{x}_{\sigma_{i}})}{\rho}
	\\
	&\times  \int_0^t \int  \tilde{\mathcal{E}}_{2,2}^{\mathrm{rec}}(s_{k+1},k,\boldsymbol{x},\alpha,l,t,\iota;\hbar)\varphi(x)\overline{ \tilde{\mathcal{E}}_{2,2}^{\mathrm{rec}}(s_{k+1},k,\boldsymbol{\tilde{x}},\tilde{\alpha},l,t,\iota;\hbar) \varphi(x)}  \,dx ds_{k+1}.
	\end{aligned}
\end{equation*}
Before we evaluate the averages $\Aver{ \mathcal{T}(k,\sigma,\alpha,\tilde{\alpha},l;\hbar)}$ we will consider the expressions for the functions. By definition we have that
\begin{equation*}
	\begin{aligned}
 	\MoveEqLeft \tilde{\mathcal{E}}_{2,2}^{\mathrm{rec}}(s_{k+1},k,\boldsymbol{x},\alpha,l,t,\iota;\hbar)\varphi(x)
	=  \frac{1}{(2\pi\hbar)^d\hbar^{k}} \int_{\R_{+}^{|\alpha|}} \int \boldsymbol{1}_{[0,t]}( \boldsymbol{s}_{1,k+1}^{+}+ \hbar\boldsymbol{t}_{1,a_k}^{+})
	 e^{ i   \langle  \hbar^{-1}x,p_{k+1} \rangle} 
	 \\
	 &\times  e^{ -i   \langle  \hbar^{-1/d}x_{l},p_{k+1}-p_{k} \rangle}  e^{i  s_{k+1} \frac{1}{2}\hbar^{-1} p_{k+1}^2}  \hat{V}(p_{k+1}-p_k)  \prod_{i=1}^{a_k} e^{i  t_{i} \frac{1}{2} \eta_{i}^2} \Big\{\prod_{m=1}^k  e^{ -i   \langle  \hbar^{-1/d}x_{\iota(m)},p_{m}-p_{m-1} \rangle}  
	\\
	&\times e^{i \hbar^{-1}  s_{m} \frac{1}{2} p_{m}^2}   \hat{\mathcal{V}}_{\alpha_m}(p_m,p_{m-1},\boldsymbol{\eta} )  \Big\}    e^{i  \hbar^{-1}(t- \boldsymbol{s}_{1,k+1}^{+}- \hbar \boldsymbol{t}_{1,a_k}^{+}) \frac{1}{2} p_0^2} \hat{\varphi}(\tfrac{p_0}{\hbar}) \, d\boldsymbol{\eta}d\boldsymbol{p}   d\boldsymbol{t} d\boldsymbol{s}.
	\end{aligned}
\end{equation*}
Firstly we preform the following change of variables in the mentioned order $p_{k-2}\mapsto p_{k-2}-p_{k-1}$, $p_{k-3}\mapsto p_{k-3}-p_{k-2}$, $p_k\mapsto p_k-p_{k+1}$ and $p_m\mapsto p_m - p_{k}$ for all $m\in\{l,\dots,k-3\}$ and a relabelling this gives us the expression
\begin{equation*}
	\begin{aligned}
 	\MoveEqLeft \tilde{\mathcal{E}}_{2,2}^{\mathrm{rec}}(s_{k+1},k,\boldsymbol{x},\alpha,l,t,\iota;\hbar)\varphi(x)
	=  \frac{1}{(2\pi\hbar)^d\hbar^{k}} \int_{\R_{+}^{|\alpha|}} \int_{\R^{(|\alpha|+2)d}} \boldsymbol{1}_{[0,t]}( \boldsymbol{s}_{1,k+1}^{+}+ \hbar\boldsymbol{t}_{1,a_k}^{+})
	 e^{ i   \langle  \hbar^{-1}x,p_{k-2} \rangle} 
	 \\
	 &\times  e^{i  s_{k+1} \frac{1}{2}\hbar^{-1} p_{k-2}^2}\hat{V}(-\tilde{p}_3) e^{i \hbar^{-1}  s_{k} \frac{1}{2} (p_{k-2}+\tilde{p}_3)^2}   \hat{\mathcal{V}}_{\alpha_k}(p_{k-2}+\tilde{p}_3,\tilde{p}_{2},\boldsymbol{\eta} ) e^{i \hbar^{-1}  s_{k-1} \frac{1}{2} \tilde{p}_{2}^2}   \hat{\mathcal{V}}_{\alpha_{k-1}}(\tilde{p}_{2},\tilde{p}_{2}+\tilde{p}_1,\boldsymbol{\eta} )
	 \\
	 &\times e^{i \hbar^{-1}  s_{k-2} \frac{1}{2} (\tilde{p}_{1} +\tilde{p}_2 )^2}   \hat{\mathcal{V}}_{\alpha_{k-2}}(\tilde{p}_1+\tilde{p}_2,\tilde{p}_1 + p_{k-3}+\tilde{p}_3,\boldsymbol{\eta} )
	 e^{i \hbar^{-1}  s_{k-3} \frac{1}{2} (\tilde{p}_{1} +\tilde{p}_3+p_{k-3} )^2}  
	 \\
	 &\times  \hat{\mathcal{V}}_{\alpha_{k-3}}(\tilde{p}_1+\tilde{p}_3+p_{k-3},p_{k-4}+\tilde{p}_3,\boldsymbol{\eta} )
	   \prod_{i=1}^{a_k} e^{i  t_{i} \frac{1}{2} \eta_{i}^2} \prod_{m=1}^{k-2}  e^{ -i   \langle  \hbar^{-1/d}x_{m},p_{m}-p_{m-1} \rangle}   \prod_{m=1}^{k-4} \Big\{e^{i \hbar^{-1}  s_{m} \frac{1}{2} (p_{m}+\pi_{m}(\tilde{p}_3 ))^2} 
	 \\
	 &\times   
	   \hat{\mathcal{V}}_{\alpha_m}(p_m+\pi_{m}(\tilde{p}_3 ),p_{m-1}+\pi_{m-1}(\tilde{p}_3 ),\boldsymbol{\eta} ) \Big\}   e^{i  \hbar^{-1}(t- \boldsymbol{s}_{1,k+1}^{+}- \hbar \boldsymbol{t}_{1,a_k}^{+}) \frac{1}{2} p_0^2} \hat{\varphi}(\tfrac{p_0}{\hbar}) \, d\boldsymbol{\eta}d\boldsymbol{p}   d\boldsymbol{t} d\boldsymbol{s},
	\end{aligned}
\end{equation*}
where we have used the notation
\begin{equation*}
	\pi_{m}(\tilde{p}_3 ) = \tilde{p}_3 \boldsymbol{1}_{\{l,\dots,k-4\}}(m).
\end{equation*}
This is the expression for the function we will be working with for this case. 
To estimate these averages we can argue as in the proofs of Lemma~\ref{LE:Exp_ran_phases} and Lemma~\ref{expansion_aver_bound_Mainterm} and obtain that
\begin{equation*}
	\begin{aligned}
 	\mathbb{E}&\big[  \mathcal{T}(k,\sigma,\alpha,\tilde{\alpha},l;\hbar) \big] =   \frac{(2\pi)^{(2k-2-n)d}(\rho\hbar)^{2k-4-n}}{(2\pi\hbar)^d}  \int_{\R_{+}^{|\alpha|+|\tilde{\alpha}|+2}} f(\boldsymbol{s},\boldsymbol{\tilde{s}}) \int \Lambda_n(\boldsymbol{p},\boldsymbol{q},\sigma)    \hat{\varphi}(\tfrac{p_0}{\hbar}) \overline{\hat{\varphi}(\tfrac{q_0}{\hbar})} 
	\\
	\times & \delta(\tfrac{t}{\hbar}- \boldsymbol{s}_{0,k}^{+}-\tilde{s}_{k+1}- \boldsymbol{t}_{1,a_{k+\beta}}^{+})  \delta(\tfrac{t}{\hbar}- \boldsymbol{\tilde{s}}_{1,k+1}^{+}- \boldsymbol{\tilde{t}}_{1,a_{k+\beta}}^{+}) \prod_{i=1}^{a_{k+\beta}} e^{i  t_{i} \frac{1}{2} \eta_{i}^2} \prod_{i=1}^{\tilde{a}_{k+\beta}} e^{-i  \tilde{t}_{i} \frac{1}{2} \xi_{i}^2} 
	\\
	\times& 
	e^{i \hbar^{-1}  s_{k} \frac{1}{2} (p_{k-2}+\tilde{p}_3)^2} e^{-i \hbar^{-1}  \tilde{s}_{k} \frac{1}{2} (q_{k-2}+\tilde{q}_3)^2}
	 e^{i \hbar^{-1}  s_{k-1} \frac{1}{2} \tilde{p}_{2}^2}  e^{-i \hbar^{-1}  \tilde{s}_{k-1} \frac{1}{2} \tilde{q}_{2}^2}  
	  e^{i \hbar^{-1}  s_{k-2} \frac{1}{2} (\tilde{p}_{1} +\tilde{p}_2 )^2}   e^{-i \hbar^{-1}  \tilde{s}_{k-2} \frac{1}{2} (\tilde{q}_{1} +\tilde{q}_2 )^2} 
	\\
	\times&  
	e^{i \hbar^{-1}  s_{k-3} \frac{1}{2} (\tilde{p}_{1} +\tilde{p}_3+p_{k-3} )^2}   e^{-i \hbar^{-1}  \tilde{s}_{k-3} \frac{1}{2} (\tilde{q}_{1} +\tilde{q}_3+q_{k-3} )^2}  
	 \prod_{m=0}^{k-4} e^{i \hbar^{-1}  s_{m} \frac{1}{2} (p_{m}+\pi_{m}(\tilde{p}_3 ))^2}  e^{-i \hbar^{-1}  \tilde{s}_{m} \frac{1}{2} (q_{m}+\pi_{m}(\tilde{q}_3 ))^2} 
	\\
	\times&  \mathcal{G}(\boldsymbol{p}, \boldsymbol{\tilde{p}},\boldsymbol{\tilde{q}},\boldsymbol{\eta}, \boldsymbol{\xi},\sigma ) d\boldsymbol{\eta} d\boldsymbol{\xi}d\boldsymbol{p}d\boldsymbol{q} d\boldsymbol{\tilde{p}}  d\boldsymbol{\tilde{q}}  d\boldsymbol{t} d\boldsymbol{\tilde{t}}  d\boldsymbol{s} d\boldsymbol{\tilde{s}},
	\end{aligned}
\end{equation*}
where the function $f$ is given by
\begin{equation*}
	f(\boldsymbol{s},\boldsymbol{\tilde{s}}) =
	\begin{cases}
	 \boldsymbol{1}_{[0,\hbar^{-1}t]}\big( \sum_{i=1}^{k-4} s_i  \big) \boldsymbol{1}_{[0,\hbar^{-1}t]}\big( \sum_{i=1,i \notin \sigma}^{k-4} \tilde{s}_i + \tilde{s}_{k+1} \big) &\text{if $\sigma_n\leq l$}
	 \\
	 \boldsymbol{1}_{[0,\hbar^{-1}t]}\big( \sum_{i=1,i\neq l}^{k-4} s_i   \big) \boldsymbol{1}_{[0,\hbar^{-1}t]}\big( \sum_{i=1,i \notin \sigma}^{k-3} \tilde{s}_i  +\tilde{s}_{k+1} \big) &\text{if $\sigma_n> l$}.
	 \end{cases}
\end{equation*}
The function $\mathcal{G}$ is given by
\begin{equation*}
	\begin{aligned}
	\MoveEqLeft \mathcal{G}(\boldsymbol{p}, \boldsymbol{\tilde{p}},\boldsymbol{\tilde{q}},\boldsymbol{\eta}, \boldsymbol{\xi},\sigma ) =  
	\hat{V}(-\tilde{p}_3) \overline{\hat{V}(-\tilde{q}_3)}  \hat{\mathcal{V}}_{\alpha_k}(p_{k-2}+\tilde{p}_3,\tilde{p}_{2},\boldsymbol{\eta} ) \overline{ \hat{\mathcal{V}}_{\alpha_k}(q_{k-2}+\tilde{q}_3,\tilde{q}_{2},\boldsymbol{\eta} )}    \hat{\mathcal{V}}_{\alpha_{k-1}}(\tilde{p}_{2},\tilde{p}_{2}+\tilde{p}_1,\boldsymbol{\eta} )
	\\
	&\times \overline{\hat{\mathcal{V}}_{\alpha_{k-1}}(\tilde{q}_{2},\tilde{q}_{2}+\tilde{q}_1,\boldsymbol{\eta} )}   \hat{\mathcal{V}}_{\alpha_{k-2}}(\tilde{p}_1+\tilde{p}_2,\tilde{p}_1 + p_{k-3}+\tilde{p}_3,\boldsymbol{\eta} ) \overline{  \hat{\mathcal{V}}_{\alpha_{k-2}}(\tilde{q}_1+\tilde{q}_2,\tilde{q}_1 + q_{k-3}+\tilde{q}_3,\boldsymbol{\xi} )}
	 \\
	 &\times  \hat{\mathcal{V}}_{\alpha_{k-3}}(\tilde{p}_1+\tilde{p}_3+p_{k-3},p_{k-4}+\tilde{p}_3,\boldsymbol{\eta} ) \overline{ \hat{\mathcal{V}}_{\tilde{\alpha}_{k-3}}(\tilde{q}_1+\tilde{q}_3+q_{k-3},q_{k-4}+\tilde{q}_3,\boldsymbol{\xi} )}
	    \\
	    &\times \prod_{m=1}^{k-4}  
	   \hat{\mathcal{V}}_{\alpha_m}(p_m+\pi_{m}(\tilde{p}_3 ),p_{m-1}+\pi_{m-1}(\tilde{p}_3 ),\boldsymbol{\eta} ) \overline{ \hat{\mathcal{V}}_{\tilde{\alpha}_m}(q_m+\pi_{m}(\tilde{q}_3 ),q_{m-1}+\pi_{m-1}(\tilde{q}_3 ),\boldsymbol{\xi} ) }.
	\end{aligned}
\end{equation*}
  The function $\Lambda_n(\boldsymbol{p},\boldsymbol{q},\sigma)$ is given by
  \begin{equation*}
  	\Lambda_n(\boldsymbol{p},\boldsymbol{q},\sigma) = \delta(p_{k-2}-q_{k-2})  \prod_{i=1}^n \delta(p_{\sigma_{i-1}}- q_{\sigma{i-1}} - p_{\sigma_n}+ q_{\sigma_n} ))   
	\prod_{i=1}^{n+1} \prod_{m=\sigma_{i-1}+1}^{\sigma_{i}-1} \delta(p_m-p_{\sigma_{i-1}}) \delta(q_m-q_{\sigma^1_{i-1}}).
  \end{equation*}
After having obtained this form the proof is analogous to the proofs previously done. That is we introduce the function $\zeta$, write our delta functions as the Fourier transform of $1$. Next we integrate in all $s$ and $\tilde{s}$ that the function $f$ does not depend on. Moreover, we also do the usual argument by dividing into different cases depending on the size of $t$ and $\tilde{t}$ and do integration by parts for all for all cases, where the variables is ``large''. 
 \begin{equation}\label{EQ:expansion_aver_bound_Mainterm_rec_trun22_5_6}
   	\begin{aligned}
	\mathbb{E}\big[  \mathcal{T}(k,\sigma,\alpha,\tilde{\alpha},l;\hbar) \big] 
	 \leq   C_d^{|\alpha|+|\tilde{\alpha}|}   \norm{\hat{V}}_{1,\infty,6d+6}^{|\alpha|+|\tilde{\alpha}|+2}   \frac{\rho^2(\rho t)^{2k-n-6} \hbar^3 |\log(\zeta)|^{n+9}}{(k-5)!(k-n-1)!} \norm{\varphi}_{\mathcal{H}^{5d+5}_\hbar(\R^d)}^2.
	\end{aligned}
\end{equation}
By combining \eqref{EQ:expansion_aver_bound_Mainterm_rec_trun22_5_2} and \eqref{EQ:expansion_aver_bound_Mainterm_rec_trun22_5_6} we obtain that
\begin{equation}\label{EQ:expansion_aver_bound_Mainterm_rec_trun22_5_7}
	\begin{aligned}
	\MoveEqLeft\mathbb{E}\Big[ \big\lVert \sum_{k=5}^{k_0}   \sum_{\boldsymbol{x}_1\in \mathcal{X}_{\neq}^{k-2}}  \sum_{\alpha\in \N^k}(i\lambda)^{|\alpha|+1}    \int_{0}^t   U_{\hbar,\lambda}(-s_{k+1}) U_{\hbar,0}(s_{k+1})\sum_{l=1}^{k-4}   \tilde{\mathcal{E}}_{2,2}^{\mathrm{rec}}(s_{k+1},k,\boldsymbol{x},\alpha,l,t,\iota;\hbar) \varphi\,ds_{k+1} \big\rVert_{L^2(\R^d)}^2\Big]
	\\
	\leq{}& C \hbar^3 k_0^{10} C^{k_0}   |\log(\tfrac{\hbar}{t})|^{k_0+9} \norm{\varphi}_{\mathcal{H}^{5d+5}_\hbar(\R^d)}^2.
	\end{aligned}
\end{equation}
We now turn to the second term in \eqref{EQ:expansion_aver_bound_Mainterm_rec_trun22_5_1}. We will for this case have to continue the Duhamel expansion. We will stop if we obtain a new genuine collision center or if we see a new recollision that is not in either $x_{k-2}$ or $x_{k-3}$. If we continue to see the pattern where we jump between $x_{k-2}$ and $x_{k-3}$ we continue the expansion. This yields the following formal expression
\begin{equation*}
	\begin{aligned}
	 \MoveEqLeft \sum_{k=4}^{k_0}   \sum_{\boldsymbol{x}_1\in \mathcal{X}_{\neq}^{k-2}}  \sum_{\alpha\in \N^k} \frac{(i\lambda)^{|\alpha|+1} }{\hbar}   \int_{0}^t   U_{\hbar,\lambda}(-s_{k+1}) U_{\hbar,0}(s_{k+1})   \tilde{\mathcal{E}}_{2,2}^{\mathrm{rec}}(s_{k+1},k,\boldsymbol{x},\alpha,k-3,t,\iota;\hbar) \varphi\,ds_{k+1}
	 \\
	&= \sum_{\beta=1}^\infty \Big[  \sum_{k=4}^{k_0} \mathcal{I}^{\mathrm{dob}}_1(\beta,k;\hbar) +  \sum_{k=5}^{k_0}  \mathcal{I}^{\mathrm{dob}}_2(\beta,k;\hbar) 
	+\sum_{k=4}^{k_0}   \mathcal{I}^{\mathrm{dob}}_3(\beta,k;\hbar)  \Big]
	\end{aligned}
\end{equation*}
where the operators $\mathcal{I}^{\mathrm{dob}}_{j}(\beta,k;\hbar) $ are given by
\begin{equation*}
	\begin{aligned}
	 \mathcal{I}^{\mathrm{dob}}_1(\beta,k;\hbar)
	= \sum_{\boldsymbol{x}\in\mathcal{X}_{\neq}^{k-2}} \sum_{\alpha^1\in\N^{k_1}}   \int_{[0,t]^{\beta+k}}
	 \tilde{\Theta}_{\beta}^{\mathrm{dob}}(\boldsymbol{s}_{\beta+k,k};V,\hbar) 
	\prod_{m=1}^{k_1}\Theta_{\alpha_m}(s_{m-1},{s}_{m},x_{\iota(m)};V,\hbar)  \, d\boldsymbol{s}_{k,1} U_{\hbar,0}(-t),
	\end{aligned}
\end{equation*}
\begin{equation*}
	\begin{aligned}	
	\mathcal{I}^{\mathrm{dob}}_2(\beta,k;\hbar)=\frac{i\lambda}{\hbar} &\sum_{l=1}^{k-4} \sum_{\boldsymbol{x}\in\mathcal{X}_{\neq}^{k-2}}  \sum_{\alpha^1\in\N^{k_1}}   \int_{[0,t]^{\beta+k+1}} U_{\hbar,\lambda}(-s_{k_{\beta+k+1}}) U_{\hbar,0}(s_{k_{\beta+k+1}})
	 V^{s_{k+\beta+1}}_{\hbar, x_{l}}  
	  \\
	  &\times \tilde{\Theta}_{\beta}^{\mathrm{dob}}(\boldsymbol{s}_{\beta+k,k};V,\hbar) \prod_{m=1}^{k}\Theta_{\alpha_m^1}(s_{m-1},{s}_{m},x_{\iota(m)};V,\hbar)  \, d\boldsymbol{s}_{k,1} U_{\hbar,0}(-t),
	 \end{aligned}
\end{equation*}
and
\begin{equation*}
	\begin{aligned}	
	\mathcal{I}^{\mathrm{dob}}_3(\beta,k;\hbar)=\frac{i\lambda}{\hbar} & \sum_{\boldsymbol{x}\in\mathcal{X}_{\neq}^{k-1}}  \sum_{\alpha^1\in\N^{k_1}}   \int_{[0,t]^{\beta+k+1}} U_{\hbar,\lambda}(-s_{k_{\beta+k+1}}) U_{\hbar,0}(s_{k_{\beta+k+1}})
	 V^{s_{k+\beta+1}}_{\hbar, x_{k-1}}  
	  \\
	  &\times \tilde{\Theta}_{\beta}^{\mathrm{dob}}(\boldsymbol{s}_{\beta+k,k};V,\hbar) \prod_{m=1}^{k}\Theta_{\alpha_m^1}(s_{m-1},{s}_{m},x_{\iota(m)};V,\hbar)  \, d\boldsymbol{s}_{k,1} U_{\hbar,0}(-t).
	 \end{aligned}
\end{equation*}
Moreover, we have her introduced the operators $\tilde{\Theta}_{\beta}^{\mathrm{dob}}(\boldsymbol{s}_{\beta+k,k};V,\hbar)$, which is defined by
\begin{equation*}
	\begin{aligned}
  	\tilde{\Theta}_{\beta}^{\mathrm{dob}}(\boldsymbol{s}_{\beta+m,k};V,\hbar)  
	= \sum_{\tilde{\alpha}\in \N^\beta} \prod_{m=1}^{\beta}\Theta_{\tilde{\alpha}_m}(s_{m-1+k},{s}_{m+k},x_{\tilde{\iota}(m)};V,\hbar) ,
	\end{aligned}
\end{equation*}
where 
\begin{equation*}
	\tilde{\iota}(m)=
	\begin{cases}
	k-3 & \text{for $m$ odd} \\
	k-2 & \text{for $m$ even}.
	\end{cases}
\end{equation*}
With this further expansion we obtain the estimate
\begin{equation*}
	\begin{aligned}
	\MoveEqLeft  \mathbb{E}\Big[\big\lVert\sum_{k=4}^{k_0}   \sum_{\boldsymbol{x}_1\in \mathcal{X}_{\neq}^{k-2}}  \sum_{\alpha\in \N^k} \frac{(i\lambda)^{|\alpha|+1} }{\hbar}   \int_{0}^t   U_{\hbar,\lambda}(-s_{k+1}) U_{\hbar,0}(s_{k+1})   \tilde{\mathcal{E}}_{2,2}^{\mathrm{rec}}(s_{k+1},k,\boldsymbol{x},\alpha,k-3,t,\iota;\hbar) \varphi\,ds_{k+1} \varphi \big\rVert_{L^2(\R^d)}^2 \Big]
	\\
	&\leq \sum_{\beta=1}^\infty \beta^2 \sum_{j=1}^3   \mathbb{E}\Big[\big\lVert  \sum_{k=k_j}^{k_0} \mathcal{I}^{\mathrm{dob}}_j(\beta,k;\hbar) \varphi \big\rVert_{L^2(\R^d)}^2 \Big] ,
	\end{aligned}
\end{equation*}
where $k_1=k_3=4$ and $k_2=5$.
By applying Lemma~\ref{expansion_aver_bound_Mainterm_rec_trun22_special_case_1}, Lemma~\ref{expansion_aver_bound_Mainterm_rec_trun22_special_case_2}, and  Lemma~\ref{expansion_aver_bound_Mainterm_rec_trun22_special_case_3}  we obtain that
\begin{equation}\label{EQ:expansion_aver_bound_Mainterm_rec_trun22_5_8}
	\begin{aligned}
	\MoveEqLeft  \mathbb{E}\Big[\big\lVert\sum_{k=4}^{k_0}   \sum_{\boldsymbol{x}_1\in \mathcal{X}_{\neq}^{k-2}}  \sum_{\alpha\in \N^k} \frac{(i\lambda)^{|\alpha|+1} }{\hbar}   \int_{0}^t   U_{\hbar,\lambda}(-s_{k+1}) U_{\hbar,0}(s_{k+1})   \tilde{\mathcal{E}}_{2,2}^{\mathrm{rec}}(s_{k+1},k,\boldsymbol{x},\alpha,k-3,t,\iota;\hbar) \varphi\,ds_{k+1} \varphi \big\rVert_{L^2(\R^d)}^2 \Big]
	\\
	\leq{}& \sum_{\beta=1}^\infty \beta^2 \Big\{    \tilde{\lambda}^\beta  \norm{\varphi}_{\mathcal{H}^{4d+4}_\hbar(\R^d)}^2 k_0^5 C^{k_0} \hbar  |\log(\tfrac{\hbar}{t})|^{k_0+19} + \hbar \tilde{\lambda}^\beta  \norm{\varphi}_{\mathcal{H}^{5d+5}_\hbar(\R^d)}^2 k_0^{10} C^{k_0} \hbar  |\log(\tfrac{\hbar}{t})|^{k_0+19} 
	\\
	&+   \tau_0^{-3} \tilde{\lambda}^\beta  \norm{\varphi}_{\mathcal{H}^{5d+5}_\hbar(\R^d)}^2 k_0^{5} C^{k_0}   |\log(\tfrac{\hbar}{t})|^{k_0+20}\Big\} 
	\\
	\leq{}& C \tau_0^{-3}  \norm{\varphi}_{\mathcal{H}^{5d+5}_\hbar(\R^d)}^2 k_0^{10} C^{k_0}  |\log(\tfrac{\hbar}{t})|^{k_0+20}.
	\end{aligned}
\end{equation}
From combining the estimates in \cref{EQ:expansion_aver_bound_Mainterm_rec_trun22_5_7,EQ:expansion_aver_bound_Mainterm_rec_trun22_5_8,EQ:expansion_aver_bound_Mainterm_rec_trun22_5_1} we obtain the desired estimate.
\end{proof}
 \begin{lemma}\label{expansion_aver_bound_Mainterm_rec_trun22_special_case_1}
Assume we are in the setting of Lemma~\ref{expansion_aver_bound_Mainterm_rec_trun22_5}. Then for any $\beta\in\N$ we have that
\begin{equation*}
	\begin{aligned}
	 \mathbb{E}\Big[\big\lVert  \sum_{k=4}^{k_0}  \mathcal{I}^{\mathrm{dob}}_1(\beta,k;\hbar) \varphi \big\rVert_{L^2(\R^d)}^2 \Big]
	\leq \tilde{\lambda}^\beta  \norm{\varphi}_{\mathcal{H}^{4d+4}_\hbar(\R^d)}^2 k_0^5 C^{k_0} \hbar  |\log(\tfrac{\hbar}{t})|^{k_0+19},
	\end{aligned}
\end{equation*}
where $\tilde{\lambda}<1$.  
\end{lemma}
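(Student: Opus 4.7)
The plan is to follow the strategy of Lemma~\ref{expansion_aver_bound_Mainterm_rec_trun21_1} but with the new ingredient being the treatment of the double-scattering operator $\tilde{\Theta}^{\mathrm{dob}}_\beta$, whose $\beta$ repeated bounces between the two scatterers $x_{k-3}$ and $x_{k-2}$ have to produce a geometric factor $\tilde{\lambda}^\beta$ coming from a Born series of $T$-type operators. First, I would expand $\|\sum_{k=4}^{k_0}\mathcal{I}^{\mathrm{dob}}_1(\beta,k;\hbar)\varphi\|_{L^2}^2$, use Cauchy–Schwarz to pull out a factor $k_0$ and dominate by a sum of squared norms for fixed $k$, then apply Lemma~\ref{LE:crossing_dom_ladder} to reduce the resulting double sum over the point process to the ladder pairings ($\kappa=\mathrm{id}$), paying a factor $2\binom{k-2}{n}n!$. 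The expected value is then evaluated by Lemma~\ref{LE:Exp_ran_phases}, producing the standard collection of momentum delta functions supported on the pairing graph.

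Next I would address the core new estimate, namely the uniform $\beta$-dependent bound on $\tilde{\Theta}^{\mathrm{dob}}_\beta$. By Remark~\ref{Re:Kernel_of_theta} the momentum kernel of $\Theta_{\tilde{\alpha}_m}(s_{m-1+k},s_{m+k},x_{\tilde{\iota}(m)};V,\hbar)$ is $\hbar^{-1}$ times a phase times $\Psi_{\tilde{\alpha}_m}$, and Lemma~\ref{psi_m_gamma_est} gives both the $L^1$- and $L^\infty$-bound $C_d^{\tilde{\alpha}_m-1}\|\hat V\|_{1,\infty,2d+2}^{\tilde{\alpha}_m}$. Since $\tilde{\Theta}^{\mathrm{dob}}_\beta=\sum_{\tilde{\alpha}\in\N^\beta}\prod_m\Theta_{\tilde{\alpha}_m}$ carries the factors $(i\lambda)^{|\tilde{\alpha}|}$ inherited from the Duhamel expansion, the assumption \eqref{assump_1} (which gives $\lambda C_d\|\hat V\|_{1,\infty,2d+2}<1/2$) yields, exactly as in Remark~\ref{born_series_convergence_remark}, that the $\tilde{\alpha}$-sum converges and produces an effective two-scatterer kernel obeying the uniform Schur-type bound $\tilde{\lambda}^\beta$ with $\tilde{\lambda}=\lambda C_d\|\hat V\|_{1,\infty,2d+2}/(1-\lambda C_d\|\hat V\|_{1,\infty,2d+2})<1$.

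With this identification the remaining structure of $\mathcal{I}^{\mathrm{dob}}_1(\beta,k;\hbar)$ is exactly the $\mathcal{Q}_{k,2,2}^{3}$ analogue of the operators estimated in Lemma~\ref{expansion_aver_bound_Mainterm_rec_21}, the only novelty being the extra $\beta$ ordered time variables $s_k\geq s_{k+1}\geq\cdots\geq s_{k+\beta}$ and the extra phase $e^{is_{m+k}\hbar^{-1}\tfrac12p^2}$ between each pair of $\Psi$'s. I would perform the same change of variables $p_{m_{1,2}-1}\mapsto p_{m_{1,2}-1}-p_{m_{1,2}}$, $p_{m_{2,2}-1}\mapsto p_{m_{2,2}-1}-p_{m_{2,2}}$, and the analogous shifts inside the sequence $\{l,\ldots,k-1\}$, and then split into the ``no crossing'' and ``crossing'' sub-cases depending on whether $\sigma_n\leq l$ and whether some $\sigma_i$ lies strictly between the pair $(m_{1,1},m_{1,2})$ or $(m_{2,1},m_{2,2})$. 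In the no-crossing sub-cases I would apply the Fourier integration-by-parts argument of Lemma~\ref{app_quadratic_integral_tech_est} exactly as in the proof of Lemma~\ref{expansion_aver_bound_Mainterm}; in the crossing sub-cases I would switch to the $\nu$-representation and use Lemmas~\ref{LE:resolvent_int_est}--\ref{LE:est_res_combined}, gaining an additional $\hbar$ and the $|\log(\hbar/t)|$ powers. The $\beta$ extra time integrations over the simplex yield a harmless $t^\beta/\beta!$, which is absorbed into $\tilde{\lambda}^\beta$ (adjusting $\tilde\lambda$ to $\tilde\lambda\,t/\beta!^{1/\beta}<1$) together with the $\hbar^{-\beta}$ coming from the $\beta$ prefactors $\hbar^{-1}$ of the $\Theta_{\tilde\alpha_m}$'s (which is exactly compensated by the $\hbar^\beta$ produced when the $\beta$ simplex-ordered time integrals are evaluated against the oscillatory phases, as in the proof of Lemma~\ref{expansion_aver_bound_Mainterm_rec_trun21_1}).

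Finally, summing the resulting per-$k$ bound over $k\in\{4,\ldots,k_0\}$ and $n\in\{0,\ldots,k-2\}$, together with the $\binom{k-2}{n}n!$ from Lemma~\ref{LE:crossing_dom_ladder} and the polynomial factors from the configurations, gives the stated $k_0^5 C^{k_0}|\log(\hbar/t)|^{k_0+19}\hbar$ times $\tilde{\lambda}^\beta\|\varphi\|_{\mathcal{H}_\hbar^{4d+4}}^2$. The delicate point, and the main obstacle, is the interaction between the $\beta$-dependent Born-series bound and the $\beta$-many ordered time integrations: one must check that the $\hbar^{-1}$ in each $\Theta_{\tilde\alpha_m}$ is not only controlled by the simplex volume $t^\beta/\beta!$ but also compatible with the oscillatory cancellations used to produce the $\hbar$-gain, so that the geometric decay $\tilde{\lambda}^\beta$ survives the remaining estimates without being spoiled by powers of $t/\hbar$. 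This is accomplished by noting that the quadratic phases in the intermediate momenta for the double-scattering block are exactly of the form treated by Lemma~\ref{app_quadratic_integral_tech_est}, so the time integrations produce the decay factors $\max(1,|s|)^{-d/2}$ that render the simplex integration $\hbar$-uniformly bounded, and the small parameter $\tilde{\lambda}<1$ dominates.
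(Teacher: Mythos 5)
Your overall skeleton (pull out a polynomial in $k_0$, apply Lemma~\ref{LE:crossing_dom_ladder} to reduce to ladder pairings, evaluate the expectation via Lemma~\ref{LE:Exp_ran_phases}, extract $\tilde{\lambda}^\beta$ from the $\lambda^{|\alpha|+|\tilde{\alpha}|}$ sum, and split crossing/no-crossing configurations between the Fourier method and the $\nu$-representation) matches the paper's first steps and its treatment of the regime of small $\beta$. However, there is a genuine gap in how you handle the repeated bouncing block itself, and it is exactly the point the lemma exists to address. Your proposal treats $\tilde{\Theta}^{\mathrm{dob}}_\beta$ as a black box controlled by a Schur/Born-series bound (Lemma~\ref{psi_m_gamma_est}) together with the claim that the $\hbar^{-\beta}$ prefactors are ``exactly compensated'' by $\hbar^\beta$ gained from the $\beta$ time integrations against the oscillatory phases. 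That compensation is not what those integrals give: after rescaling, the decay factors $\max(1,|s|)^{-d/2}$ from Lemma~\ref{app_quadratic_integral_tech_est} make each bounce's time integral \emph{uniformly bounded} in $\hbar$ (an $O(1)$ per bounce), not a source of an extra $\hbar^\beta$; the single overall power of $\hbar$ in the stated bound has to be produced by the global pairing structure. This fails in precisely the configuration where both $x_{k-2}$ and $x_{k-3}$ are paired with their counterparts: then the $\beta$ intermediate momenta of the block are constrained only through one delta function on their sum (the block behaves like a single effective collision, as the paper notes, ``effectively we now only have $n-1$ collisions''), and your generic argument neither controls the $\beta$-many unconstrained momentum integrals uniformly in $\beta$ nor yields the required $\hbar$.

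The paper resolves this by splitting into $\beta\leq 6$ and $\beta>6$. For $\beta\leq 6$ the block is small enough that explicit changes of variables and the standard machinery close the estimate. For $\beta>6$ a genuinely different mechanism is used: the relative position $x_{k-2}-x_{k-3}$ is kept in the phase, the intermediate resolvents are rewritten in position space via Lemma~\ref{LE:Fourier_trans_resolvent} (this is the one place where $d=3$ enters), and Lemma~\ref{LE:int_posistion} converts the $\beta+2$ resulting kernels into decay $\langle x\rangle^{-(\beta+2)/2}$ in the scatterer separation, which is integrable exactly because $\beta$ is large; the three sub-cases $k-2\notin\sigma$, $k-2\in\sigma$ with $k-3\notin\sigma$, and $k-2,k-3\in\sigma$ (the last with its exceptional $n=k-2$ count, which is ``the reason we have to do the full double Born series expansion'') are then handled separately. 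Without the $\beta$-regime split and the spatial-decay argument, your proof does not close for large $\beta$, so the proposal as written has a missing idea rather than just missing detail.
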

\begin{proof}
Firstly we note that
\begin{equation}\label{EQ:expansion_aver_bound_3.recol_22_3.1_0}
\big\lVert  \sum_{k=4}^{k_0} \mathcal{I}^{\mathrm{dob}}_1(\beta,k;\hbar) \varphi \big\rVert_{L^2(\R^d)}^2 \leq k_0^2\sum_{k=4}^{k_0}  \big\lVert \mathcal{I}^{\mathrm{dob}}_1(\beta,k;\hbar) \varphi \big\rVert_{L^2(\R^d)}^2.
\end{equation}
Using the definition of the operator $\mathcal{I}^{\mathrm{dob}}_1(\beta,k;\hbar) $ and applying Lemma~\ref{LE:crossing_dom_ladder} we obtain that 
\begin{equation}\label{EQ:expansion_aver_bound_3.recol_22_3.1_2.1}
	\begin{aligned}
	\mathbb{E}\Big[ \big\lVert \mathcal{I}^{\mathrm{dob}}_1(\beta,k;\hbar) \varphi \big\rVert_{L^2(\R^d)}^2\Big]
	 \leq 2 \sum_{n=0}^{k-2} \binom{k-2}{n} n! \sum_{\alpha,\tilde{\alpha}\in\N^{k+\beta}} \lambda^{|\alpha|+|\tilde{\alpha}|} \sum_{\sigma\in\mathcal{A}(k-2,n)} \big|\Aver{ \mathcal{T}(\beta,k,\sigma,\alpha,\tilde{\alpha};\hbar)}\big|,
	\end{aligned}
\end{equation}
where
\begin{equation*}
	\begin{aligned}
	\MoveEqLeft \mathcal{T}(\beta,k,\sigma,\alpha,\tilde{\alpha};\hbar) \\
	&= 
	 \sum_{(\boldsymbol{x},\boldsymbol{\tilde{x}})\in \mathcal{X}_{\neq}^{2k-4}} \prod_{i=1}^n \frac{\delta(x_{\sigma_i}- \tilde{x}_{\sigma_{i}})}{\rho}    \int_{\R^d} \mathcal{I}^{\mathrm{dob}}_1(\beta,k;\alpha,\boldsymbol{x},\hbar) \varphi (x)\overline{ \mathcal{I}^{\mathrm{dob}}_1(\beta,k;\tilde{\alpha},\boldsymbol{\tilde{x}},\hbar) \varphi (x)}  \,dx. 
	\end{aligned}
\end{equation*}
Before we proceed we will divide into the two cases $\beta\leq 6$ and $\beta>6$. We will start with the first case. From the definition of the operator we have that
\begin{equation*}
	\begin{aligned}
 	\MoveEqLeft  \mathcal{I}^{\mathrm{dob}}_1(\beta,k,\boldsymbol{x};\hbar) \varphi (x)
	=  \frac{1}{(2\pi\hbar)^d} \sum_{\alpha\in \N^{k+\beta}} (i\lambda)^{|\alpha|} \int_{\R_{+}^{|\alpha|}} \int_{\R^{(|\alpha|+1)d}} \boldsymbol{1}_{[0,\hbar^{-1}t]}( \boldsymbol{s}_{1,k+\beta}^{+}+ \boldsymbol{t}_{1,a_{k+\beta}}^{+})
	 e^{ i   \langle  \hbar^{-1}x,p_{k+\beta} \rangle}   \prod_{i=1}^{a_{k+\beta}} e^{i  t_{i} \frac{1}{2} \eta_{i}^2} 
	 \\
	 \times&    
	\prod_{m=-3}^\beta e^{ -i   \langle  \hbar^{-1/d}x_{\tilde{\iota}(k)}, p_{k+m}-p_{k+m-1} \rangle}  
	\prod_{m=1}^{k-4}  e^{ -i   \langle  \hbar^{-1/d}x_{m},p_{m}-p_{m-1} \rangle}      
	 \prod_{m=1}^{k+\beta}  
	e^{i  s_{m} \frac{1}{2} p_{m}^2}   \hat{\mathcal{V}}_{\alpha_m}(p_m,p_{m-1},\boldsymbol{\eta} ) 
	\\
	\times & e^{i ( \hbar^{-1}t- \boldsymbol{s}_{1,k+\beta}^{+}- \boldsymbol{t}_{1,a_{k+\beta}}^{+}) \frac{1}{2} p_0^2} \hat{\varphi}(\tfrac{p_0}{\hbar}) \, d\boldsymbol{\eta}d\boldsymbol{p}  d\boldsymbol{t} d\boldsymbol{s},
	\end{aligned}
\end{equation*}
Before taking the average we will preform a series  of change of variables  we will start with the change of variables $p_{k-2}\mapsto p_{k-2}-p_{k-1}$ and $p_{k-3}\mapsto p_{k-3}-p_{k-2}$. To motivate the next change of variables we remark that under this change of variables we get that
\begin{equation*}
	\begin{aligned}
	\prod_{m=-3}^1 e^{ -i   \langle  \hbar^{-1/d}x_{\tilde{\iota}(k)}, p_{k+m}-p_{k+m-1} \rangle} 
	&\mapsto  e^{ -i   \langle  \hbar^{-1/d}x_{k-2}, p_{k}-p_{k-1} \rangle}  e^{ -i   \langle  \hbar^{-1/d}x_{k-2}, p_{k-1}-p_{k-3} \rangle}  e^{ -i   \langle  \hbar^{-1/d}x_{k-3}, p_{k-3}-p_{k-4} \rangle}
	\\
	& = e^{ -i   \langle  \hbar^{-1/d}x_{k-2}, p_{k}-p_{k-3} \rangle}   e^{ -i   \langle  \hbar^{-1/d}x_{k-3}, p_{k-3}-p_{k-4} \rangle}.
	\end{aligned}
\end{equation*}
With this observation in mind we proceed with the change of variables 
\begin{equation*}
p_{k+m}\mapsto p_{k+m}-p_{k+m+1} \quad\text{and}\quad  p_{k-3}\mapsto p_{k-3}-p_{k+m}\quad  \text{for all }
\begin{cases}
	m\in\{0\} &\text{if $\beta\in\{1,2\}$}
	\\
	m\in\{0,2\} &\text{if $\beta\in\{3,4\}$}
	\\
	m\in\{0,2,4\} &\text{if $\beta\in\{5,6\}$}.
\end{cases}
\end{equation*}
After this change of variables and a relabelling we obtain that
\begin{equation*}
	\begin{aligned}
 	\MoveEqLeft  \mathcal{I}^{\mathrm{dob}}_1(\beta,k,\boldsymbol{x};\hbar) \varphi (x)
	=  \frac{1}{(2\pi\hbar)^d} \sum_{\alpha\in \N^{k+\beta}} (i\lambda)^{|\alpha|} \int_{\R_{+}^{|\alpha|}} \int \boldsymbol{1}_{[0,\hbar^{-1}t]}( \boldsymbol{s}_{1,k+\beta}^{+}+ \boldsymbol{t}_{1,a_{k+\beta}}^{+})
	 e^{ i   \langle  \hbar^{-1}x,p_{k-2} \rangle}   \prod_{i=1}^{a_{k+\beta}} e^{i  t_{i} \frac{1}{2} \eta_{i}^2} 
	\\
	\times & \tilde{\mathcal{G}}_\beta (p_{k-2},\boldsymbol{\tilde{p}},\boldsymbol{\eta}) \mathcal{P}_\beta (p_{k-2},\boldsymbol{\tilde{p}},\boldsymbol{s})  e^{i  s_{k-2} \frac{1}{2} (\tilde{p}_{-2}+ \tilde{p}_{-1})^2}   \hat{\mathcal{V}}_{\alpha_{k-2}}(\tilde{p}_{-2}+ \tilde{p}_{-1},p_{k-3}+ \pi_{\beta}(\boldsymbol{\tilde{p}}),\boldsymbol{\eta} )   e^{i  s_{k-3} \frac{1}{2} (p_{k-3}+ \pi_{\beta}(\boldsymbol{\tilde{p}}))^2}
	\\
	\times&   \hat{\mathcal{V}}_{\alpha_{k-3}}(p_{k-3}+ \pi_{\beta}(\boldsymbol{\tilde{p}}),p_{k-4},\boldsymbol{\eta} ) 
	 \prod_{m=1}^{k-2}  e^{ -i   \langle  \hbar^{-1/d}x_{m},p_{m}-p_{m-1} \rangle}   \prod_{m=1}^{k-4}  
	e^{i  s_{m} \frac{1}{2} p_{m}^2}   \hat{\mathcal{V}}_{\alpha_m}(p_m,p_{m-1},\boldsymbol{\eta} ) 
	 \\
	 \times& e^{i ( \hbar^{-1}t- \boldsymbol{s}_{1,k+\beta}^{+}- \boldsymbol{t}_{1,a_{k+\beta}}^{+}) \frac{1}{2} p_0^2} \hat{\varphi}(\tfrac{p_0}{\hbar}) \, d\boldsymbol{\eta}d\boldsymbol{p}   d\boldsymbol{t} d\boldsymbol{s},
	\end{aligned}
\end{equation*}
where
\begin{equation*}
	\pi_{\beta}(\boldsymbol{\tilde{p}}) =
	\begin{cases}
	\tilde{p}_{-2} + \tilde{p}_0 &\text{if $\beta\in\{1,2\}$}
	\\
	\tilde{p}_{-2} + \tilde{p}_0 +\tilde{p}_{2} &\text{if $\beta\in\{3,4\}$}
	\\
	\tilde{p}_{-2} + \tilde{p}_0 +\tilde{p}_{2} +\tilde{p}_{4}&\text{if $\beta\in\{5,6\}$}.
	\end{cases}
\end{equation*}
For $\beta=1$ we have that
\begin{equation*}
	\begin{aligned}
	\tilde{\mathcal{G}}_1 (p_{k-2},\boldsymbol{\tilde{p}},\boldsymbol{\eta}) ={}&  
	\hat{\mathcal{V}}_{\alpha_{k-1}}(\tilde{p}_{-1},\tilde{p}_{-1}+\tilde{p}_{-2},\boldsymbol{\eta} )
	\hat{\mathcal{V}}_{\alpha_{k}}(\tilde{p}_{0}+p_{k-2},\tilde{p}_{-1},\boldsymbol{\eta} )
	\hat{\mathcal{V}}_{\alpha_{k+1}}(p_{k-2},\tilde{p}_{0}+p_{k-2},\boldsymbol{\eta} )
	\\
	\mathcal{P}_1 (p_{k-2},\boldsymbol{\tilde{p}},\boldsymbol{s}) = {}& e^{i  s_{k-1} \frac{1}{2} \tilde{p}_{-1}^2} 
	e^{i  s_{k} \frac{1}{2}( \tilde{p}_{0}+p_{k-2})^2}  e^{i  s_{k+1} \frac{1}{2}p_{k-2}^2}
	\end{aligned}
\end{equation*}
and for $\beta\geq2$ we have that
\begin{equation*}
	\begin{aligned}
	\tilde{\mathcal{G}}_\beta (p_{k-2},\boldsymbol{\tilde{p}},\boldsymbol{s},\boldsymbol{\eta}) &=
	\begin{cases}  
	 \hat{\mathcal{V}}_{\alpha_{k+\beta}}(p_{k-2},\tilde{p}_{\beta-1},\boldsymbol{\eta} )\tilde{\mathcal{G}}_{\beta-1} (\tilde{p}_{\beta-2},\boldsymbol{\tilde{p}},\boldsymbol{\eta}) & \text{$\beta$ even}
	\\
	\hat{\mathcal{V}}_{\alpha_{k+\beta}}(p_{k-2}, p_{k-2} + \tilde{p}_{\beta-1},\boldsymbol{\eta} )\tilde{\mathcal{G}}_{\beta-1} (p_{k-2}+ \tilde{p}_{\beta-1},\boldsymbol{\tilde{p}},\boldsymbol{\eta}) & \text{$\beta$ odd}
	\end{cases}
	\\
	\mathcal{P}_\beta (p_{k-2},\boldsymbol{\tilde{p}},\boldsymbol{s}) &=
	\begin{cases}  
	e^{i  s_{k+\beta} \frac{1}{2}p_{k-2}^2} \mathcal{P}_{\beta-1} (\tilde{p}_{\beta-2},\boldsymbol{\tilde{p}},\boldsymbol{s}) & \text{$\beta$ even}
	\\
	e^{i  s_{k+\beta} \frac{1}{2}p_{k-2}^2} \mathcal{P}_{\beta-1} (p_{k-2}+ \tilde{p}_{\beta-1},\boldsymbol{\tilde{p}},\boldsymbol{s}) & \text{$\beta$ odd}.
	\end{cases}
	\end{aligned}
\end{equation*}
This is the expression for the function we will be working with for this case. 
To estimate these averages we can argue as in the proofs of Lemma~\ref{LE:Exp_ran_phases} and Lemma~\ref{expansion_aver_bound_Mainterm} and obtain that
\begin{equation*}
	\begin{aligned}
 	\mathbb{E}&\big[ \mathcal{T}(\beta,k,\sigma,\alpha,\tilde{\alpha};\hbar)\big] =   \frac{(2\pi)^{(2k-2-n)d}(\rho\hbar)^{2k-4-n}}{(2\pi\hbar)^d}  \int_{\R_{+}^{|\alpha|+|\tilde{\alpha}|}} f(\boldsymbol{s},\boldsymbol{\tilde{s}}) \int \Lambda_n(\boldsymbol{p},\boldsymbol{q},\sigma)    \hat{\varphi}(\tfrac{p_0}{\hbar}) \overline{\hat{\varphi}(\tfrac{q_0}{\hbar})} 
	\\
	\times & \delta(\tfrac{t}{\hbar}- \boldsymbol{s}_{0,k+\beta}^{+}- \boldsymbol{t}_{1,a_{k+\beta}}^{+})  \delta(\tfrac{t}{\hbar}- \boldsymbol{\tilde{s}}_{1,k+\beta}^{+}- \boldsymbol{\tilde{t}}_{1,a_{k+\beta}}^{+})  \mathcal{P}_\beta (p_{k-2},\boldsymbol{\tilde{p}},\boldsymbol{s})   \mathcal{P}_\beta (q_{k-2},\boldsymbol{\tilde{q}},\boldsymbol{\tilde{s}})  \prod_{i=1}^{a_{k+\beta}} e^{i  t_{i} \frac{1}{2} \eta_{i}^2} \prod_{i=1}^{\tilde{a}_{k+\beta}} e^{-i  \tilde{t}_{i} \frac{1}{2} \xi_{i}^2} 
	\\
	\times & e^{i  s_{k-2} \frac{1}{2} (\tilde{p}_{-2}+ \tilde{p}_{-1})^2}  e^{-i  \tilde{s}_{k-2} \frac{1}{2} (\tilde{q}_{-2}+ \tilde{q}_{-1})^2}     e^{i  s_{k-3} \frac{1}{2} (p_{k-3}+ \pi_{\beta}(\boldsymbol{\tilde{p}}))^2} e^{-i  \tilde{s}_{k-3} \frac{1}{2} (q_{k-3}+ \pi_{\beta}(\boldsymbol{\tilde{q}}))^2}
	\\
	\times &  
	\prod_{m=0}^{k_1-4}  e^{i  s_{m} \frac{1}{2} p_{m}^2} e^{-i  \tilde{s}_{m} \frac{1}{2} q_{m}^2}  \mathcal{G}(\boldsymbol{p}, \boldsymbol{\tilde{p}},\boldsymbol{\tilde{p}},\boldsymbol{\eta}, \boldsymbol{\xi},\sigma ) d\boldsymbol{\eta} d\boldsymbol{\xi}d\boldsymbol{p}d\boldsymbol{q} d\boldsymbol{\tilde{p}}  d\boldsymbol{\tilde{q}}  d\boldsymbol{t} d\boldsymbol{\tilde{t}}  d\boldsymbol{s} d\boldsymbol{\tilde{s}},
	\end{aligned}
\end{equation*}
where the function $f$ is given by
\begin{equation}\label{EQ:expansion_aver_bound_3.recol_22_3.1_3.1}
	f(\boldsymbol{s},\boldsymbol{\tilde{s}}) = 
	\begin{cases}
	\boldsymbol{1}_{[0,\hbar^{-1}t]}\big( \sum_{i=1}^{k-4} s_i + s_{k+\beta} \big) \boldsymbol{1}_{[0,\hbar^{-1}t]}\big( \sum_{i=1,i \notin \sigma}^{k-3} \tilde{s}_i +  \tilde{s}_{k+\beta}  \big) & \text{if $k-2\neq\sigma_n$}
	\\
	\boldsymbol{1}_{[0,\hbar^{-1}t]}\big( \sum_{i=1}^{k-4} s_i + s_{k+\beta} \big) \boldsymbol{1}_{[0,\hbar^{-1}t]}\big( \sum_{i=1,i \notin \sigma}^{k-3} \tilde{s}_i   \big) & \text{if $k-2=\sigma_n$}.
	\end{cases}
\end{equation}
The function $\mathcal{G}$ is given by
\begin{equation*}
	\begin{aligned}
	\MoveEqLeft \mathcal{G}(\boldsymbol{p}, \boldsymbol{\tilde{p}},\boldsymbol{\tilde{p}},\boldsymbol{\eta}, \boldsymbol{\xi},\sigma ) =  
	   \tilde{\mathcal{G}}_\beta (p_{k-2},\boldsymbol{\tilde{p}},\boldsymbol{\eta}) \overline{  \tilde{\mathcal{G}}_\beta (q_{k-2},\boldsymbol{\tilde{q}},\boldsymbol{\xi}) }  
	     \hat{\mathcal{V}}_{\alpha_{k-2}}(\tilde{p}_{-2}+ \tilde{p}_{-1},p_{k-3}+ \pi_{\beta}(\boldsymbol{\tilde{p}}),\boldsymbol{\eta} )
	     \\
	     \times& \overline{  \hat{\mathcal{V}}_{\tilde{\alpha}_{k-2}}(\tilde{q}_{-2}+ \tilde{q}_{-1},q_{k-3}+ \pi_{\beta}(\boldsymbol{\tilde{q}}),\boldsymbol{\xi} )}
	   \hat{\mathcal{V}}_{\alpha_{k-3}}(p_{k-3}+ \pi_{\beta}(\boldsymbol{\tilde{p}}),p_{k-4},\boldsymbol{\eta} )  \overline{ \hat{\mathcal{V}}_{\tilde{\alpha}_{k-3}}(q_{k-3}+ \pi_{\beta}(\boldsymbol{\tilde{q}}),q_{k-4},\boldsymbol{\xi} ) }
	   \\
	   \times&\prod_{m=1}^{k-4}    \hat{\mathcal{V}}_{\alpha_m}(p_m,p_{m-1},\boldsymbol{\eta} ) \overline{ \hat{\mathcal{V}}_{\tilde{\alpha}_m}(q_m,q_{m-1},\boldsymbol{\xi} ) }.
	\end{aligned}
\end{equation*}
  The function $\Lambda_n(\boldsymbol{p},\boldsymbol{q},\sigma)$ is given by
  \begin{equation*}
  	\Lambda_n(\boldsymbol{p},\boldsymbol{q},\sigma) = \delta(p_{k-2}-q_{k-2})  \prod_{i=1}^n \delta(p_{\sigma_{i-1}}- q_{\sigma{i-1}} - p_{\sigma_n}+ q_{\sigma_n} ))   
	\prod_{i=1}^{n+1} \prod_{m=\sigma_{i-1}+1}^{\sigma_{i}-1} \delta(p_m-p_{\sigma_{i-1}}) \delta(q_m-q_{\sigma^1_{i-1}}).
  \end{equation*}
We again use the function $\zeta$ as in the previous proofs and write the delta functions as Fourier transforms of the constant function $1$. 
Next we integrate in all $s$ and $\tilde{s}$ that the function $f$ does not depend on. Moreover, we also do the usual argument by dividing into different cases depending on the size of $t$ and $\tilde{t}$ and do integration by parts for all for all cases, where the variables is ``large''.  To estimate the bound we now have obtained we argue as in the previous proofs. This yields the bound
\begin{equation}\label{EQ:expansion_aver_bound_3.recol_22_3.1_4}
	\begin{aligned}
 	\MoveEqLeft  
	|\mathbb{E}\big[  \mathcal{T}(k,\sigma,\alpha,\tilde{\alpha},l;\hbar) \big] |
	 \leq   C_d^{|\alpha|+|\tilde{\alpha}|}   \norm{\hat{V}}_{1,\infty,4d+4}^{|\alpha|+|\tilde{\alpha}|}   \frac{\rho(\rho t)^{2k-n-5} \hbar |\log(\zeta)|^{n+19}}{(k-3)!(k-n-2)!} \norm{\varphi}_{\mathcal{H}^{4d+4}_\hbar(\R^d)}^2.
	\end{aligned}
\end{equation}
Combining \eqref{EQ:expansion_aver_bound_3.recol_22_3.1_2.1} and \eqref{EQ:expansion_aver_bound_3.recol_22_3.1_4} we obtain that
\begin{equation}\label{EQ:expansion_aver_bound_3.recol_22_3.1_5}
	\begin{aligned}
	\MoveEqLeft \mathbb{E}\Big[ \big\lVert \sum_{\boldsymbol{x}\in\mathcal{X}_{\neq}^{k-2}}\mathcal{I}^{\mathrm{dob}}_1(\beta,k;\hbar) \varphi \big\rVert_{L^2(\R^d)}^2\Big]
	 \leq 4\hbar  \norm{\varphi}_{\mathcal{H}^{4d+4}_\hbar(\R^d)}^2  \sum_{\alpha,\tilde{\alpha}\in\N^{k+\beta}} (C_d\lambda\norm{\hat{V}}_{1,\infty,4d+4})^{|\alpha|+|\tilde{\alpha}|}  
	 \\
	 &\times     \sum_{n=0}^{k-2} \binom{k-2}{n} n!  \sum_{\sigma\in\mathcal{A}(k-2,n)} \frac{ \rho (\rho t)^{2k-n-5} }{ (k-3)!(k-2-n)!}   |\log(\tfrac{\hbar}{t})|^{n+19}
	 \\
	 \leq{}& \tilde{\lambda}^\beta  \norm{\varphi}_{\mathcal{H}^{4d+4}_\hbar(\R^d)}^2  k C^k \hbar  |\log(\tfrac{\hbar}{t})|^{k+19},
	\end{aligned}
\end{equation}
where 
\begin{equation}\label{EQ_tilde_lambda}
	\tilde{\lambda} = \frac{C_d\lambda\norm{\hat{V}}_{1,\infty,4d+4}}{1-C_d\lambda\norm{\hat{V}}_{1,\infty,4d+4}}.
\end{equation}
That $\tilde{\lambda}$ is strictly less than one follows form our assumptions. Combining \eqref{EQ:expansion_aver_bound_3.recol_22_3.1_0} and \eqref{EQ:expansion_aver_bound_3.recol_22_3.1_5} we obtain the desired estimate. We now turn to the case where $\beta>6$. Here we will for notational convenience consider $\beta$ even and at the end say how to modify the argument for $\beta$ odd. Here we will use the following expression for the function $ \mathcal{I}^{\mathrm{dob}}_1(\beta,k_1,k_2,\boldsymbol{x};\hbar) \varphi (x)$:
\begin{equation}\label{EQ:expansion_aver_bound_3.recol_22_3.1_form}
	\begin{aligned}
 	\MoveEqLeft  \mathcal{I}^{\mathrm{dob}}_1(\beta,k,\boldsymbol{x};\hbar) \varphi (x)
	=  \frac{1}{(2\pi\hbar)^d} \sum_{\alpha\in \N^{k+\beta}} (i\lambda)^{|\alpha|} \int_{\R_{+}^{|\alpha|}} \int\boldsymbol{1}_{[0,\hbar^{-1}t]}( \boldsymbol{s}_{1,k+\beta}^{+}+ \boldsymbol{t}_{1,a_{k+\beta}}^{+})
	 e^{ i   \langle  \hbar^{-1}x,p_{k+\beta} \rangle} 
	 \\
	 &\times    
	e^{ -i   \langle  \hbar^{-1/d}(x_{k-2}-x_{k-3}),\sum_{m=0}^{\beta/2+1} p_{2m+k-2}-p_{2m+k-3} \rangle} e^{ -i   \langle  \hbar^{-1/d}x_{k-3},p_{\beta+k}-p_{k-4} \rangle}  \prod_{m=1}^{k-4}  e^{ -i   \langle  \hbar^{-1/d}x_{m},p_{m}-p_{m-1} \rangle}
	\\
	&\times 
	 \prod_{i=1}^{a_{k+\beta}} e^{i  t_{i} \frac{1}{2} \eta_{i}^2}       
	  \Big\{\prod_{m=1}^{k+\beta}  
	e^{i  s_{m} \frac{1}{2} p_{m}^2}   \hat{\mathcal{V}}_{\alpha_m}(p_m,p_{m-1},\boldsymbol{\eta} )  \Big\}e^{i ( \hbar^{-1}t- \boldsymbol{s}_{1,k+\beta}^{+}- \boldsymbol{t}_{1,a_{k+\beta}}^{+}) \frac{1}{2} p_0^2} \hat{\varphi}(\tfrac{p_0}{\hbar}) \, d\boldsymbol{\eta}_{1,a_k}d\boldsymbol{p}_{0,k}   d\boldsymbol{t}_{1,a_k} d\boldsymbol{s}_{1,k},
	\end{aligned}
\end{equation}
where we have also chosen a specific way of writing our phase function. To estimate the averages we will distinguish between three different cases. These are  $k-2\notin\sigma$, $k-3,k-2\in\sigma$ and finally the case $k-2\in\sigma$ and $k-3\notin\sigma$. We start with the first case $k-2\notin\sigma$. We will assume $k-3\in\sigma$ precisely we will assume $\sigma_{n}=k-3$. This is only done to simplify notation and the other case is estimated completely analogous. After taking the average we star by preforming the change of variables $x_{k-2}\mapsto x_{k-2}-x_{k-3}$ and   $\tilde{x}_{k-2}\mapsto \tilde{x}_{k-2}-\tilde{x}_{k-3}$. Then  by arguing as in the proof of Lemma~\ref{LE:Exp_ran_phases} and Lemma~\ref{expansion_aver_bound_Mainterm} and introducing the function $\zeta$ as in the previous proofs and write the delta functions as Fourier transforms of the constant function $1$ we have that    
\begin{equation}\label{EQ:expansion_aver_bound_3.recol_22_3.1_10}
	\begin{aligned}
 	&
	\Aver{ \mathcal{T}(\beta,k,\sigma,\alpha,\tilde{\alpha};\hbar)}
	=  \frac{(2\pi)^{(2k-2-n)d}(\rho\hbar)^{2k-4-n} e^{2\hbar^{-1}t\zeta}}{(2\pi)^2(2\pi\hbar)^d} \int_{\R_{+}^{|\alpha|+|\tilde{\alpha}|+2}} f(\boldsymbol{s},\boldsymbol{\tilde{s}}) \int e^{-i\hbar^{-1}t\tilde{\nu}} e^{-i\hbar^{-1}t\nu}  
	 \\
	 &\times   e^{ -i   \langle  x, p_{n}-\tilde{p}_{\beta-1} +\sum_{m=0}^{\beta/2} \tilde{p}_{2m-2}-\tilde{p}_{2m-3} \rangle} e^{ i   \langle  \tilde{x}, p_{n}-\tilde{q}_{\beta-1} +\sum_{m=0}^{\beta/2} \tilde{q}_{2m-2}-\tilde{q}_{2m-3} \rangle} \prod_{i=1}^{a_{k+\beta}} e^{i t_{i} ( \frac{1}{2}   \eta_{i}^2+\nu +i\zeta )}  \prod_{i=1}^{\tilde{a}_{k+\beta}} e^{- i  \tilde{t}_{i}( \frac{1}{2}   \xi_{i}^2-\tilde{\nu} - i\zeta )} 
	\\
	&\times  
	 e^{i  s_{k_1+\beta} (\frac{1}{2} p_{n}^2 +\nu+i\zeta)}   e^{-i  \tilde{s}_{k_1+\beta} (\frac{1}{2} p_{n}^2-\tilde{\nu}-i\zeta)} 
	 \Big\{\prod_{m=-3}^{\beta-1}  
	e^{i  s_{m+k}( \frac{1}{2} \tilde{p}_{m}^2+\nu +i\zeta)} e^{-i  \tilde{s}_{m+k}( \frac{1}{2} \tilde{q}_{m}^2-\tilde{\nu} - i\zeta )}   
	  \Big\}  
	  \\
	&  \times  \prod_{i=0}^{n-1} \prod_{m=\sigma_i}^{\sigma_{i+1}-1} e^{i  s_{m}(\frac{1}{2} p_{i}^2+\nu+i\zeta)} e^{-i\tilde{s}_{m}( \frac{1}{2} p_{i}^2-\tilde{\nu}-i\zeta)}
  	   \mathcal{G}(\boldsymbol{p}, \boldsymbol{\tilde{p}},\boldsymbol{\tilde{q}},\boldsymbol{\eta}, \boldsymbol{\xi},\sigma ) 
	 |\hat{\varphi}(\tfrac{p_0}{\hbar})|^2 \, d\boldsymbol{\eta}d\boldsymbol{\xi} dx d\tilde{x} d\boldsymbol{p} d\boldsymbol{\tilde{p}}d\boldsymbol{\tilde{q}} d\nu d\tilde{\nu}  d\boldsymbol{t}  d\boldsymbol{\tilde{t}}  d\boldsymbol{\tilde{s}}d\boldsymbol{s}.
	\end{aligned}
\end{equation}
where the function $f$ is given by
\begin{equation}\label{EQ:expansion_aver_bound_3.recol_22_3.1_10.1}
	f(\boldsymbol{s},\boldsymbol{\tilde{s}}) = 
	\boldsymbol{1}_{[0,\hbar^{-1}t]}\big( \sum_{i=1}^{k-4} s_i + s_{k+\beta} \big) \boldsymbol{1}_{[0,\hbar^{-1}t]}\big( \sum_{i=1,i \notin \sigma}^{k-4} \tilde{s}_i +  \tilde{s}_{k+\beta}  \big)
\end{equation}
and the function $\mathcal{G}$ is given by
\begin{equation}\label{EQ:expansion_aver_bound_3.recol_22_3.1_10.2}
	\begin{aligned}
	\mathcal{G}(\boldsymbol{p}, \boldsymbol{\tilde{p}},\boldsymbol{\tilde{q}},\boldsymbol{\eta}, \boldsymbol{\xi},\sigma )
	= {}&
	  \hat{\mathcal{V}}_{\alpha_{\beta+k_1}}(p_{n}, \tilde{p}_{\beta-1},\boldsymbol{\eta} )  
	 \overline{\hat{\mathcal{V}}_{\tilde{\alpha}_{\beta+k_1}}(p_{n}, \tilde{q}_{\beta-1},\boldsymbol{\xi} )} 
	\prod_{m=-2}^{\beta-1}  
	 \hat{\mathcal{V}}_{\alpha_{m+k_1}}(\tilde{p}_m,\tilde{p}_{m-1},\boldsymbol{\eta} ) \overline{ \hat{\mathcal{V}}_{\tilde{\alpha}_{m+k_1}}(\tilde{q}_m,\tilde{q}_{m-1},\boldsymbol{\xi} )} 
	  \\
	  &\times  \hat{\mathcal{V}}_{\alpha_{k_1-3}}(\tilde{p}_{-3},p_{n-1},\boldsymbol{\eta} )   
	 \overline{\hat{\mathcal{V}}_{\tilde{\alpha}_{k_1-3}}(\tilde{q}_{-3},p_{n-1},\boldsymbol{\eta} )  }
	  \prod_{m=1}^{\sigma_1-1}  
	  \hat{\mathcal{V}}_{\alpha_m}(p_0,p_{0},\boldsymbol{\eta} )  \overline{  \hat{\mathcal{V}}_{\tilde{\alpha}_m}(p_0,p_{0},\boldsymbol{\xi} )}
	  \\
	  &\times  \prod_{i=1}^{n_1-1} \hat{\mathcal{V}}_{\alpha_{\sigma_i}}(p_i,p_{i-1},\boldsymbol{\eta} )
	 \overline{ \hat{\mathcal{V}}_{\tilde{\alpha}_{\sigma_i}}(p_i,p_{i-1},\boldsymbol{\xi} )}   
	  \prod_{m=\sigma_i+1}^{\sigma_{i+1}-1}  
	   \hat{\mathcal{V}}_{\alpha_m}(p_i,p_{i},\boldsymbol{\eta} )\overline{ \hat{\mathcal{V}}_{\tilde{\alpha}_m}(p_i,p_{i},\boldsymbol{\xi} )}  .  
	\end{aligned}
\end{equation}
We now evaluate the integrals in $s_0$, $\tilde{s}_0$, $\tilde{s}_{\sigma_i}$ for $i\in\{1,\dots,n-1\}$  and $s_m$,$\tilde{s}_m$ for $m\in\{k-3,\dots,k+\beta-1\}$. From here we do the usual argument by dividing into different cases depending on the size of $t$ and $\tilde{t}$ and do integration by parts for all for all cases, where the varables is ``large''. For the variables $\tilde{p}_m$, $\tilde{q}_m$ with $m\in\{-2,\dots,\beta-1\}$ we use Lemma~\ref{LE:Fourier_trans_resolvent} and integration by parts. How we use Lemma~\ref{LE:Fourier_trans_resolvent} and integration by parts is captured in the following calculation 
\begin{equation}\label{EQ:expansion_aver_bound_3.recol_22_3.1_11}
	\begin{aligned}
 	\MoveEqLeft  
	  \Big|\int_{\R^{2\beta+4}} \mathcal{G}(\boldsymbol{p}, \boldsymbol{\tilde{p}},\boldsymbol{\tilde{q}},\boldsymbol{\eta}, \boldsymbol{\xi},\sigma ) 
	  \frac{ e^{ i   \langle  x,\tilde{p}_{-3} \rangle}}{\frac{1}{2} \tilde{p}_{-3}^2+\nu +i\zeta}\frac{ e^{ -i   \langle  \tilde{x},\tilde{q}_{-3} \rangle} }{\frac{1}{2} \tilde{q}_{-3}^2-\tilde{\nu} - i\zeta}
	   \prod_{m=-2}^{\beta-1}   \frac{  e^{ -i   \langle  x,(-1)^m\tilde{p}_m \rangle }}{\frac{1}{2} \tilde{p}_{m}^2+\nu +i\zeta} \frac{e^{ i   \langle  \tilde{x},(-1)^m\tilde{q}_m \rangle} }{\frac{1}{2} \tilde{q}_{m}^2-\tilde{\nu} - i\zeta}
	\, d\boldsymbol{\tilde{p}} d\boldsymbol{\tilde{q}} \Big|
	\\
	={}&\frac{1}{(2\pi)^{2\beta+4}} \Big| \int_{\R^{4\beta+8}} \mathcal{G}(\boldsymbol{p}, \boldsymbol{\tilde{p}},\boldsymbol{\tilde{q}},\boldsymbol{\eta}, \boldsymbol{\xi},\sigma )    \prod_{m=-2}^{\beta-1}   \frac{  e^{ -i   \langle  z_m ,\tilde{p}_m \rangle } e^{i|z_m|\sqrt{2(-\nu-i\zeta)} e^{- i   \langle \tilde{z}_m,\tilde{q}_m \rangle}e^{i|\tilde{z}_m|\sqrt{2(\tilde{\nu}+i\zeta)}} } }{|\tilde{z}_m-	(-1)^m\tilde{x}|  |z_m- (-1)^mx|} 
	\\
	&\times \frac{ e^{ i   \langle  x,\tilde{p}_{-3} \rangle}} {\frac{1}{2} \tilde{p}_{-3}^2+\nu +i\zeta}\frac{ e^{ -i   \langle  \tilde{x},\tilde{q}_{-3} \rangle} }{\frac{1}{2} \tilde{q}_{-3}^2-\tilde{\nu} - i\zeta}
	\, d\boldsymbol{z} d\boldsymbol{\tilde{z}} d\boldsymbol{\tilde{p}} d\boldsymbol{\tilde{q}} \Big|
	\\
	\leq&{}  \sum_{|\beta_{-2}|,\dots,|\beta_{k+\beta-1}|\leq d+1} \sum_{|\epsilon_{-2}|,\dots,|\epsilon_{k+\beta-1}|\leq d+1} \frac{ \prod_{i=-2}^{k+\beta-1} C(\beta_i)C(\epsilon_i) }{(2\pi)^{2\beta+4}}  \int_{\R^{4\beta+8}} \frac{ 1} {|\frac{1}{2} \tilde{p}_{-3}^2+\nu +i\zeta|}  \frac{ 1}{|\frac{1}{2} \tilde{q}_{-3}^2-\tilde{\nu} - i\zeta|}
	\\
	&\times   \prod_{m=-2}^{\beta-1}   \frac{  1}{ \langle \tilde{z}_m \rangle^{d+1} |\tilde{z}_m-	(-1)^m\tilde{x}| \langle z_m \rangle^{d+1}  |z_m- (-1)^mx|} 
	 \big| \partial_{\boldsymbol{\tilde{p}}}^{\boldsymbol{\beta}} \partial_{\boldsymbol{\tilde{q}}}^{\boldsymbol{\epsilon}} 
	\mathcal{G}(\boldsymbol{p}, \boldsymbol{\tilde{p}},\boldsymbol{\tilde{q}},\boldsymbol{\eta}, \boldsymbol{\xi},\sigma ) \big| \, d\boldsymbol{z} d\boldsymbol{\tilde{z}} d\boldsymbol{\tilde{p}} d\boldsymbol{\tilde{q}} .
	\end{aligned}
\end{equation}
Combining these arguments we get that
\begin{equation}\label{EQ:expansion_aver_bound_3.recol_22_3.1_12}
	\begin{aligned}
 	\MoveEqLeft  
	\big| \Aver{ \mathcal{T}(\beta,k,\sigma,\alpha,\tilde{\alpha};\hbar)}\big|
	 \leq  \frac{C^{|\alpha|+|\tilde{\alpha}|} (\rho\hbar)^{2k-4-n} }{(2\pi\hbar)^d}   \int_{\R_{+}^{2k-n-6}} f(\boldsymbol{s},\boldsymbol{\tilde{s}})  d\boldsymbol{s}_{1,k-3}   d\boldsymbol{\tilde{s}}_{1,k-n-3}    \sup_{\boldsymbol{\epsilon},\boldsymbol{\gamma},\boldsymbol{\delta},\boldsymbol{\varepsilon}}  \int  |\hat{\varphi}(\tfrac{p_0}{\hbar}) |^2
	  \\
	  &\times     \frac{ \langle p_0\rangle^{2(d+1)}}{\langle \tilde{\nu}\rangle |\frac{1}{2} p_0^2-\tilde{\nu}-i\zeta|}\frac{1}{\langle \nu\rangle |\frac{1}{2} p_0^2+\nu +i\zeta|}
	  \frac{ \langle\nu\rangle \langle \tilde{p}_{-3}\rangle^{-d-1} } {|\frac{1}{2} \tilde{p}_{-3}^2+\nu +i\zeta|}\frac{\langle\tilde{\nu}\rangle \langle \tilde{q}_{-3}\rangle^{-d-1} }{|\frac{1}{2} \tilde{q}_{-3}^2-\tilde{\nu} - i\zeta|}  \prod_{m=1}^{n-1} \frac{\langle p_m-p_{m-1} \rangle^{-d-1}}{ |\frac{1}{2} p_{m}^2+\nu+i\zeta|} 
	  \\
	  &\times
	   \Big| \partial_{\boldsymbol{\eta}}^{\boldsymbol{\epsilon}} \partial_{\boldsymbol{\xi}}^{\boldsymbol{\gamma}}  \partial_{\boldsymbol{\tilde{p}}}^{\boldsymbol{\delta}} \partial_{\boldsymbol{\tilde{q}}}^{\boldsymbol{\varepsilon}}\mathcal{G}(\boldsymbol{p}, \boldsymbol{\tilde{p}},\boldsymbol{\tilde{q}},\boldsymbol{\eta}, \boldsymbol{\xi},\sigma ) \Big|
	    \prod_{m=-2}^{\beta-1}   \frac{  1}{ \langle \tilde{z}_m \rangle^{d+1} |\tilde{z}_m-	(-1)^m\tilde{x}| \langle z_m \rangle^{d+1}  |z_m- (-1)^mx|} 
	 \\
	&\times 
	 \langle \tilde{q}_{-3} -p_{n-1} \rangle^{d+1} \langle \tilde{p}_{-3} -p_{n-1} \rangle^{d+1} \prod_{\substack{m=1\\m\neq n_1} }^{n} \langle p_m-p_{m-1}\rangle^{3(d+1)}
	\, dx d\tilde{x}d\boldsymbol{z} d\boldsymbol{\tilde{z}} d\boldsymbol{\tilde{p}} d\boldsymbol{\tilde{q}} d\boldsymbol{p}d\boldsymbol{\eta}d\boldsymbol{\xi},
	\end{aligned}
\end{equation}
where $C$ is a constant only depending on the dimension and we have used the notation
\begin{equation}\label{EQ:expansion_aver_bound_3.recol_22_3.1_sup_notation}
	 \sup_{\boldsymbol{\epsilon},\boldsymbol{\gamma},\boldsymbol{\delta},\boldsymbol{\varepsilon}} \coloneqq \sup_{|\epsilon_1|,\dots,|\epsilon_{a_{k+\beta}}|\leq d+1} \sup_{|\gamma_1|,\dots,|\gamma_{a_{k+\beta}}|\leq d+1} \sup_{|\delta_1|,\dots,|\delta_{\beta+2}|\leq d+1}  \sup_{|\varepsilon_1|,\dots,|\varepsilon_{\beta+2}|\leq d+1}.
\end{equation}
Moreover, we have used the inequality
\begin{equation*}
	\begin{aligned}
	1 &= \frac{\langle \tilde{\nu} \rangle \langle \nu \rangle \langle \tilde{p}_{0}\rangle^{2d+2} \langle \tilde{q}_{-3} -p_{n-1} \rangle^{d+1} \langle \tilde{p}_{-3} -p_{n-1} \rangle^{d+1} \prod_{m=1}^{n-1} \langle p_m-p_{m-1}\rangle^{3(d+1)}  }{\langle \tilde{\nu} \rangle \langle \nu \rangle \langle \tilde{p}_{0}\rangle^{2d+2} \langle \tilde{q}_{-3} -p_{n-1} \rangle^{d+1} \langle \tilde{p}_{-3} -p_{n-1} \rangle^{d+1} \prod_{m=1}^{n} \langle p_m-p_{m-1}\rangle^{3(d+1)}}
	\\
	&\leq C \frac{\langle \tilde{\nu} \rangle \langle \nu \rangle\langle \tilde{p}_{0}\rangle^{2d+2} \langle \tilde{q}_{-3} -p_{n-1} \rangle^{d+1} \langle \tilde{p}_{-3} -p_{n-1} \rangle^{d+1} \prod_{m=1}^{n-1} \langle p_m-p_{m-1}\rangle^{3(d+1)}  }{\langle \tilde{\nu} \rangle \langle \nu \rangle \langle \tilde{q}_{-3} \rangle^{d+1} \langle \tilde{p}_{-3}\rangle^{d+1} \prod_{m=1}^{n-1} \langle p_m-p_{m-1}\rangle^{d+1}}.
	\end{aligned}
\end{equation*}
To estimate this we note that
  \begin{equation}\label{EQ:expansion_aver_bound_3.recol_22_3.1_13}
	\begin{aligned}
	\MoveEqLeft  \sup_{\boldsymbol{p}}   \int \frac{ \Big| \partial_{\boldsymbol{\eta}}^{\boldsymbol{\epsilon}} \partial_{\boldsymbol{\xi}}^{\boldsymbol{\gamma}}  \partial_{\boldsymbol{\tilde{p}}}^{\boldsymbol{\delta}} \partial_{\boldsymbol{\tilde{q}}}^{\boldsymbol{\varepsilon}}\mathcal{G}(\boldsymbol{p}, \boldsymbol{\tilde{p}},\boldsymbol{\tilde{q}},\boldsymbol{\eta}, \boldsymbol{\xi},\sigma ) \Big|}{
	( \langle \tilde{q}_{-3} -p_{n-1} \rangle \langle \tilde{p}_{-3} -p_{n-1} \rangle)^{-d-1}} \prod_{m=1}^{n-1} \langle p_m-p_{m-1}\rangle^{3(d+1)}
	\, d\boldsymbol{\tilde{p}} d\boldsymbol{\tilde{q}} d\boldsymbol{\eta}d\boldsymbol{\xi} dp_n
	 \leq  (C  \norm{\hat{V}}_{1,\infty,3d+3})^{|\alpha|+|\tilde{\alpha}|}. 
	\end{aligned}
\end{equation}
Moreover, we have that
\begin{equation}\label{EQ:expansion_aver_bound_3.recol_22_3.1_14}
	\int_{\R_{+}^{2k-n-6}} f(\boldsymbol{s},\boldsymbol{\tilde{s}})  d\boldsymbol{s}_{1,k-3}   d\boldsymbol{\tilde{s}}_{1,k-n-3}  = \frac{t^{2k-n-6}}{\hbar^{2k-n-6}(k-3)!(k-3-n)!}.
\end{equation}
By applying Lemma~\ref{LE:int_posistion} and Lemma~\ref{LE:resolvent_int_est} we get that
\begin{equation}\label{EQ:expansion_aver_bound_3.recol_22_3.1_15}
	\begin{aligned}
 	\MoveEqLeft    \int  |\hat{\varphi}(\tfrac{p_0}{\hbar}) |^2   \frac{ \langle p_0\rangle^{2(d+1)}}{\langle \tilde{\nu}\rangle |\frac{1}{2} p_0^2-\tilde{\nu}-i\zeta|}\frac{1}{\langle \nu\rangle |\frac{1}{2} p_0^2+\nu +i\zeta|}
	  \frac{ \langle\nu\rangle \langle \tilde{p}_{-3}\rangle^{-d-1} } {|\frac{1}{2} \tilde{p}_{-3}^2+\nu +i\zeta|}\frac{\langle\tilde{\nu}\rangle \langle \tilde{q}_{-3}\rangle^{-d-1} }{|\frac{1}{2} \tilde{q}_{-3}^2-\tilde{\nu} - i\zeta|} \prod_{m=1}^{n-1} \frac{\langle p_m-p_{m-1} \rangle^{-d-1}}{ |\frac{1}{2} p_{m}^2+\nu+i\zeta|} 
	  \\
	  &\times 
	    \prod_{m=-2}^{\beta-1}   \frac{  1}{ \langle \tilde{z}_m \rangle^{d+1} |\tilde{z}_m-	(-1)^m\tilde{x}| \langle z_m \rangle^{d+1}  |z_m- (-1)^mx|} 
	\, dx d\tilde{x}d\boldsymbol{z} d\boldsymbol{\tilde{z}}  d\boldsymbol{p} d\tilde{p}_{-3} d\tilde{q}_{-3}
	\\
	&\leq   (2\pi\hbar)^dC |\log(\tfrac{\hbar}{t})|^{n+3}  \norm{\varphi}_{\mathcal{H}^{d+1}_\hbar(\R^d)}^2 \int \frac{1}{\langle x \rangle^{\frac{\beta+2}{2}}\langle \tilde{x} \rangle^{\frac{\beta+2}{2}}}  \, dx d\tilde{x}
	\\
	&\leq (2\pi\hbar)^d C |\log(\tfrac{\hbar}{t})|^{n+3}  \norm{\varphi}_{\mathcal{H}^{d+1}_\hbar(\R^d)}^2, 
	\end{aligned}
\end{equation}
where we have used that $\beta>6$. Combining the estimates in  \cref{EQ:expansion_aver_bound_3.recol_22_3.1_12,EQ:expansion_aver_bound_3.recol_22_3.1_13,EQ:expansion_aver_bound_3.recol_22_3.1_14,EQ:expansion_aver_bound_3.recol_22_3.1_15}  we obtain that
\begin{equation}\label{EQ:expansion_aver_bound_3.recol_22_3.1_16}
	\begin{aligned}
 	\MoveEqLeft  
	\big| \Aver{ \mathcal{T}(\beta,k,\sigma,\alpha,\tilde{\alpha};\hbar)}\big|
	 \leq \hbar^2    \frac{C^{|\alpha|+|\tilde{\alpha}|} \rho^2 (\rho t)^{2k-n-6} }{ (k-3)!(k-3-n)!}  
	  |\log(\tfrac{\hbar}{t})|^{n+3}  \norm{\varphi}_{\mathcal{H}^{d+1}_\hbar(\R^d)}^2 \norm{\hat{V}}_{1,\infty,3d+3}^{|\alpha|+|\tilde{\alpha}|}.
	\end{aligned}
\end{equation}
Note that for this case we have assumed that $k-2\notin \sigma$, hence $n\leq k-3$ implying that we are not taking the factorial of a negative number.

We now turn to the case where $k-2\in\sigma$ and $k-3\notin\sigma$. For this case we preform the change of variables $x_{k-3}\mapsto x_{k-3}-x_{k-2}$ and   $\tilde{x}_{k-3}\mapsto \tilde{x}_{k-3}-\tilde{x}_{k-2}$ after taking the average. Other than that one obtains with an analogous argument to the one above the estimate
\begin{equation}\label{EQ:expansion_aver_bound_3.recol_22_3.1_17}
	\begin{aligned}
 	\MoveEqLeft  
	\big| \Aver{ \mathcal{T}(\beta,k,\sigma,\alpha,\tilde{\alpha};\hbar)}\big|
	 \leq \hbar^2    \frac{C^{|\alpha|+|\tilde{\alpha}|} \rho^2 (\rho t)^{2k-n-6} }{ (k-3)!(k-3-n)!}  
	  |\log(\tfrac{\hbar}{t})|^{n+3}  \norm{\varphi}_{\mathcal{H}^{d+1}_\hbar(\R^d)}^2 \norm{\hat{V}}_{1,\infty,3d+3}^{|\alpha|+|\tilde{\alpha}|}.
	\end{aligned}
\end{equation}
Moreover, we again in this case have that $n\leq k-3$. 
We now turn to the case where $k-2,k-3\in\sigma$. For this case we also preform the change of variables $x_{k-2}\mapsto x_{k-2}-x_{k-3}$ and   $\tilde{x}_{k-2}\mapsto \tilde{x}_{k-2}-\tilde{x}_{k-3}$ after taking the average. However this time when we argue as in the proofs of  Lemma~\ref{LE:Exp_ran_phases} and Lemma~\ref{expansion_aver_bound_Mainterm}  we obtain that 
\begin{equation*}
	\begin{aligned}
 	&
	\Aver{ \mathcal{T}(\beta,k,\sigma,\alpha,\tilde{\alpha};\hbar)}
	=  \frac{(2\pi)^{(2k-2-n)d}(\rho\hbar)^{2k-4-n}}{(2\pi\hbar)^d} \int_{\R_{+}^{|\alpha|+|\tilde{\alpha}|+2}} f(\boldsymbol{s},\boldsymbol{\tilde{s}}) \int  |\hat{\varphi}(\tfrac{p_0}{\hbar})|^2  \delta(\tfrac{t}{\hbar}- \boldsymbol{s}_{0,k+\beta}^{+}- \boldsymbol{t}_{1,a_{k+\beta}}^{+})
	 \\
	& \times   \delta(\tfrac{t}{\hbar}- \boldsymbol{\tilde{s}}_{1,k+\beta}^{+}- \boldsymbol{\tilde{t}}_{1,a_{k+\beta}}^{+}) e^{ -i   \langle  x, \tilde{q}_{\beta-1}-\tilde{p}_{\beta-1} +\sum_{m=0}^{\beta/2} \tilde{p}_{2m-2}-\tilde{p}_{2m-3} - \tilde{q}_{2m-2}+\tilde{q}_{2m-3} \rangle} \prod_{i=1}^{a_{k+\beta}} e^{i  t_{i} \frac{1}{2} \eta_{i}^2} \prod_{i=1}^{\tilde{a}_{k+\beta}}  e^{-i  \tilde{t}_{i} \frac{1}{2} \xi_{i}^2}  
	\\
	&\times 
	     e^{i  s_{k-3} \frac{1}{2} \tilde{p}_{-3}^2}  e^{-i  \tilde{s}_{k-3} \frac{1}{2} \tilde{q}_{-3}^2}    
	 e^{i  (s_{k+\beta}-\tilde{s}_{k+\beta}) \frac{1}{2} p_{n-1}^2}  
	  \\
	  &\times \mathcal{G}(\boldsymbol{p}, \boldsymbol{\tilde{p}},\boldsymbol{\tilde{q}},\boldsymbol{\eta}, \boldsymbol{\xi},\sigma )   \prod_{m=-2}^{\beta-1}  
	e^{i  s_{m+k} \frac{1}{2} \tilde{p}_{m}^2} e^{-i  \tilde{s}_{m+k} \frac{1}{2} \tilde{q}_{m}^2}     \prod_{i=0}^{n-2} \prod_{m=\sigma_i}^{\sigma_{i+1}-1} e^{i  (s_{m}-\tilde{s}_{m}) \frac{1}{2} p_{i}^2}
	  \, d\boldsymbol{\eta} d\boldsymbol{\xi} dx d\tilde{x} d\boldsymbol{p} d\boldsymbol{\tilde{p}} d\boldsymbol{\tilde{q}}   d\boldsymbol{t}  d\boldsymbol{\tilde{t}}  d\boldsymbol{\tilde{s}}d\boldsymbol{s},
	\end{aligned}
\end{equation*}
where the function $f$ still is given by \eqref{EQ:expansion_aver_bound_3.recol_22_3.1_3.1}
and the function $\mathcal{G}$ is given by
\begin{equation*}
	\begin{aligned}
	\mathcal{G}&(\boldsymbol{p}, \boldsymbol{\tilde{p}},\boldsymbol{\tilde{q}},\boldsymbol{\eta}, \boldsymbol{\xi},\sigma )
	= 
	  \hat{\mathcal{V}}_{\alpha_{\beta+k}}(p_{n-1}, \tilde{p}_{\beta-1},\boldsymbol{\eta} )  
	 \overline{\hat{\mathcal{V}}_{\tilde{\alpha}_{\beta+k}}(p_{n-1}, \tilde{q}_{\beta-1},\boldsymbol{\xi} )} 
	  \hat{\mathcal{V}}_{\alpha_{k_1-3}}(\tilde{p}_{-3},p_{n_1-1},\boldsymbol{\eta} )   
	  \overline{\hat{\mathcal{V}}_{\tilde{\alpha}_{k_1-3}}(\tilde{q}_{-3},p_{n_1-1},\boldsymbol{\eta} )  }
	  \\
	  &\times \prod_{m=-2}^{\beta-1}  
	 \hat{\mathcal{V}}_{\alpha_{m+k_1}}(\tilde{p}_m,\tilde{p}_{m-1},\boldsymbol{\eta} ) \overline{ \hat{\mathcal{V}}_{\tilde{\alpha}_{m+k_1}}(\tilde{q}_m,\tilde{q}_{m-1},\boldsymbol{\xi} )} 
	     \prod_{m=1}^{\sigma_1-1}  
	  \hat{\mathcal{V}}_{\alpha_m}(p_0,p_{0},\boldsymbol{\eta} )  \overline{  \hat{\mathcal{V}}_{\tilde{\alpha}_m}(p_0,p_{0},\boldsymbol{\xi} )}
	  \\
	  &\times 
	   \prod_{i=1}^{n-2} \hat{\mathcal{V}}_{\alpha_{\sigma_i}}(p_i,p_{i-1},\boldsymbol{\eta} )
	 \overline{ \hat{\mathcal{V}}_{\tilde{\alpha}_{\sigma_i}}(p_i,p_{i-1},\boldsymbol{\xi} )}   
	  \prod_{m=\sigma_i+1}^{\sigma_{i+1}-1}  
	   \hat{\mathcal{V}}_{\alpha_m}(p_i,p_{i},\boldsymbol{\eta} )\overline{ \hat{\mathcal{V}}_{\tilde{\alpha}_m}(p_i,p_{i},\boldsymbol{\xi} )}.  
	\end{aligned}
\end{equation*}
We observe here that effectively we now only have $n-1$ collisions and not $n$. This illustrates the intuition that in this case we treat the collisions in the position $x_{k-2}$ and $x_{k-3}$ as a single collision.  By arguing as above we obtain that if $2\leq n\leq k-3$ the estimate
\begin{equation}\label{EQ:expansion_aver_bound_3.recol_22_3.1_18}
	\begin{aligned}
 	\MoveEqLeft  
	\big| \Aver{ \mathcal{T}(\beta,k,\sigma,\alpha,\tilde{\alpha};\hbar)}\big|
	 \leq \hbar^2    \frac{C^{|\alpha|+|\tilde{\alpha}|} \rho^2 (\rho t)^{2k-n-6} }{ (k-3)!(k-3-n)!}  
	  |\log(\tfrac{\hbar}{t})|^{n+3}  \norm{\varphi}_{\mathcal{H}^{d+1}_\hbar(\R^d)}^2 \norm{\hat{V}}_{1,\infty,3d+3}^{|\alpha|+|\tilde{\alpha}|}.
	\end{aligned}
\end{equation}
Note that this case is empty if $k=4$. Here we can only have that $n=2$ if  $k-2,k-3\in\sigma$. This is covered in the next case. In the case $n=k-2$ we get the estimate
\begin{equation}\label{EQ:expansion_aver_bound_3.recol_22_3.1_19}
	\begin{aligned}
 	\MoveEqLeft  
	\big| \Aver{ \mathcal{T}(\beta,k,\sigma,\alpha,\tilde{\alpha};\hbar)}\big|
	 \leq \hbar    \frac{C^{|\alpha|+|\tilde{\alpha}|} \rho (\rho t)^{k-3} }{ (k-3)!}  
	  |\log(\tfrac{\hbar}{t})|^{n+3}  \norm{\varphi}_{\mathcal{H}^{d+1}_\hbar(\R^d)}^2 \norm{\hat{V}}_{1,\infty,3d+3}^{|\alpha|+|\tilde{\alpha}|}.
	\end{aligned}
\end{equation}
That this case is different is the reason we have to do the full double Born series expansion. Combining the estimates in  \cref{EQ:expansion_aver_bound_3.recol_22_3.1_2.1,EQ:expansion_aver_bound_3.recol_22_3.1_16,EQ:expansion_aver_bound_3.recol_22_3.1_17,EQ:expansion_aver_bound_3.recol_22_3.1_18,EQ:expansion_aver_bound_3.recol_22_3.1_19} we obtain that
\begin{equation}\label{EQ:expansion_aver_bound_3.recol_22_3.1_20}
	\begin{aligned}
	\MoveEqLeft \mathbb{E}\Big[ \big\lVert \sum_{\boldsymbol{x}\in\mathcal{X}_{\neq}^{k-2}}\mathcal{I}^{\mathrm{dob}}_1(\beta,k;\hbar) \varphi \big\rVert_{L^2(\R^d)}^2\Big]
	 \leq 2\hbar  \norm{\varphi}_{\mathcal{H}^{d+1}_\hbar(\R^d)}^2  \sum_{\alpha,\tilde{\alpha}\in\N^{k+\beta}} (C_d\lambda\norm{\hat{V}}_{1,\infty,3d+3})^{|\alpha|+|\tilde{\alpha}|}  
	 \\
	 &\times \Big[  k   \rho (\rho t)^{k-3}  |\log(\tfrac{\hbar}{t})|^{k+1}+ 2 \sum_{n=0}^{k-3} \binom{k-2}{n} n!  \sum_{\sigma\in\mathcal{A}(k-2,n)} \frac{ \rho^2 (\rho t)^{2k-n-6} }{ (k-3)!(k-3-n)!}   |\log(\tfrac{\hbar}{t})|^{n+3}\Big]
	 \\
	 \leq{}& \tilde{\lambda}^\beta  \norm{\varphi}_{\mathcal{H}^{d+1}_\hbar(\R^d)}^2 t^{k-3} k C^k \hbar  |\log(\tfrac{\hbar}{t})|^{k+1},
	\end{aligned}
\end{equation}
where  $\tilde{\lambda}$ is given by \eqref{EQ_tilde_lambda}. Combining \eqref{EQ:expansion_aver_bound_3.recol_22_3.1_2.1} and \eqref{EQ:expansion_aver_bound_3.recol_22_3.1_20} we obtain a bound better than stated in the Lemma. We have chosen to state the Lemma with a bound that covers both small and large values of $\beta$. For $\beta$ odd, the difference is that instead of \eqref{EQ:expansion_aver_bound_3.recol_22_3.1_form} we now have that
\begin{equation*}
	\begin{aligned}
 	\MoveEqLeft  \mathcal{I}^{\mathrm{dob}}_1(\beta,k,\boldsymbol{x};\hbar) \varphi (x)
	=  \frac{1}{(2\pi\hbar)^d} \sum_{\alpha\in \N^{k+\beta}} (i\lambda)^{|\alpha|} \int_{\R_{+}^{|\alpha|}} \int\boldsymbol{1}_{[0,\hbar^{-1}t]}( \boldsymbol{s}_{1,k+\beta}^{+}+ \boldsymbol{t}_{1,a_{k+\beta}}^{+})
	 e^{ i   \langle  \hbar^{-1}x,p_{k+\beta} \rangle} 
	 \\
	 &\times    
	e^{ -i   \langle  \hbar^{-1/d}(x_{k-2}-x_{k-3}),\sum_{m=0}^{(\beta+1)/2} p_{2m+k-2}-p_{2m+k-3} \rangle} e^{ -i   \langle  \hbar^{-1/d}x_{k-3},p_{\beta+k}-p_{k-4} \rangle}  
	\prod_{m=1}^{k-4}  e^{ -i   \langle  \hbar^{-1/d}x_{m},p_{m}-p_{m-1} \rangle}      
	\\
	&\times  \prod_{i=1}^{a_{k+\beta}} e^{i  t_{i} \frac{1}{2} \eta_{i}^2}  \prod_{m=1}^{k+\beta}  \Big\{
	e^{i  s_{m} \frac{1}{2} p_{m}^2}   \hat{\mathcal{V}}_{\alpha_m}(p_m,p_{m-1},\boldsymbol{\eta} )  \Big\}e^{i ( \hbar^{-1}t- \boldsymbol{s}_{1,k+\beta}^{+}- \boldsymbol{t}_{1,a_{k+\beta}}^{+}) \frac{1}{2} p_0^2} \hat{\varphi}(\tfrac{p_0}{\hbar}) \, d\boldsymbol{\eta}d\boldsymbol{p}  d\boldsymbol{t} d\boldsymbol{s},
	\end{aligned}
\end{equation*}
Form here the proofs are analogous and we obtain the same bound as in \eqref{EQ:expansion_aver_bound_3.recol_22_3.1_20}.
\end{proof}
 \begin{lemma}\label{expansion_aver_bound_Mainterm_rec_trun22_special_case_2}
Assume we are in the setting of Lemma~\ref{expansion_aver_bound_Mainterm_rec_trun22_5}. Then for any $\beta\in\N$ we have that
\begin{equation*}
	\begin{aligned}
	 \mathbb{E}\Big[\big\lVert  \sum_{k=5}^{k_0}   \mathcal{I}^{\mathrm{dob}}_2(\beta,k;\hbar)  \varphi\big\rVert_{L^2(\R^d)}^2 \Big]
	\leq  \hbar \tilde{\lambda}^\beta  \norm{\varphi}_{\mathcal{H}^{5d+5}_\hbar(\R^d)}^2 k_0^{10} C^{k_0} \hbar  |\log(\tfrac{\hbar}{t})|^{k_0+19},
	\end{aligned}
\end{equation*}
where $\tilde{\lambda}<1$.  
\end{lemma}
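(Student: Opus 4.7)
The plan is to combine the strategy used to prove Lemma~\ref{expansion_aver_bound_Mainterm_rec_trun21_1} with the double Born series treatment from Lemma~\ref{expansion_aver_bound_Mainterm_rec_trun22_special_case_1}. First I would use the unitarity of $U_{\hbar,\lambda}(-s_{k+\beta+1})U_{\hbar,0}(s_{k+\beta+1})$ together with Jensen's inequality to extract the outer integral $\int_0^t ds_{k+\beta+1}$ outside the $L^2$-norm; this costs a factor of $t$ and reduces the problem (modulo the $\frac{1}{\hbar^2}$ coming from $\mathcal{I}^{\mathrm{dob}}_2$) to estimating the $L^2$-norm of $\sum_{\boldsymbol{x}}\sum_l V^{s_{k+\beta+1}}_{\hbar,x_l}\tilde{\Theta}_\beta^{\mathrm{dob}}\cdots U_{\hbar,0}(-t)\varphi$, with $s_{k+\beta+1}\in[0,t]$ fixed.

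Next I would expand the $L^2$-norm, producing two collision series which I sort using Lemma~\ref{LE:crossing_dom_ladder} so that only ladder pairings $\sigma\in\mathcal{A}(k-2,n)$ need to be considered (at the cost of the combinatorial factor $\binom{k-2}{n}n!$), and apply Lemma~\ref{LE:Exp_ran_phases} to integrate out the random phases. Before taking the average I perform exactly the same preparatory change of variables as in the proof of Lemma~\ref{expansion_aver_bound_Mainterm_rec_trun22_special_case_1}: first $p_{k-2}\mapsto p_{k-2}-p_{k-1}$, $p_{k-3}\mapsto p_{k-3}-p_{k-2}$, then the successive shifts which telescope the long string of $\tilde{\Theta}^{\mathrm{dob}}$ kernels into a single phase $e^{-i\langle\hbar^{-1/d}(x_{k-2}-x_{k-3}),\cdots\rangle}e^{-i\langle\hbar^{-1/d}x_{k-3},p_{\beta+k}-p_{k-4}\rangle}$, and finally $p_{k+\beta+1}\mapsto p_{k+\beta+1}-p_{k+\beta}$ and $p_m\mapsto p_m-p_{k+\beta+1}$ for all $m\in\{l,\ldots,k+\beta\}\setminus\{k-3,k-2\}$ to expose the new recollision.

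After introducing $\zeta=\zeta(\hbar^{-1}t)$, writing the two time delta functions as Fourier transforms, and passing to the $\nu$-representation, I would then distinguish cases based on how $\sigma$ interacts with the indices $k-3$, $k-2$ and $l$. In every case at least one genuine crossing is present (either from the new recollision at $l$, or because $\sigma_n$ lies beyond $l$, or because some $\sigma_i$ falls in the ``double'' block), so I can always apply the resolvent route: Lemma~\ref{LE:est_res_combined} once to handle the crossing and Lemma~\ref{LE:resolvent_int_est} repeatedly for the remaining resolvents, exactly as in \cref{EQ:exp_aver_b_Mainterm_12,EQ:exp_aver_b_Mainterm_16} and in the proof of Lemma~\ref{expansion_aver_bound_Mainterm_rec_trun21_1}. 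The time integrals are controlled by the usual division into regimes $|t_i|\gtrless 1$, $|\tilde t_i|\gtrless 1$, Fourier-representing the relevant oscillatory Gaussians, and integration by parts à la Lemma~\ref{app_quadratic_integral_tech_est}, which together with the remaining $\boldsymbol{s},\boldsymbol{\tilde s}$ integrations produces the combinatorial denominator $(k-3)!(k-3-n)!$ and the logarithmic factor $|\log\zeta|^{n+19}$ (the extra logs come from the double ladder of resolvents, cf.\ the computation in \eqref{EQ:expansion_aver_bound_Mainterm_rec_trun22_5_6}). The sums over $\alpha,\tilde\alpha\in\N^{k+\beta}$ are then geometric; Assumption~\eqref{assump_1} and the estimate in Lemma~\ref{psi_m_gamma_est} force them to converge, and the contribution of the $\beta$ internal collisions factors out as $\tilde\lambda^\beta$ with $\tilde\lambda<1$ given by \eqref{EQ_tilde_lambda}. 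Finally, summing the $k$-dependent bound over $k\in\{5,\ldots,k_0\}$, plugging in $n=k-2$ as the maximiser, and using $|\log\zeta|\le|\log(\hbar/t)|$ yields the claimed estimate.

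The main obstacle will be bookkeeping in the many subcases produced by the combined presence of the double ladder at positions $k-3,k-2$, the genuine recollision at $l\in\{1,\ldots,k-4\}$, and the possible pairings $\sigma$; in particular one must verify that in every subcase the $\tilde\alpha$-summation really does factor as a convergent double Born series so that the factor $\tilde\lambda^\beta$ appears uniformly, which is what allows the eventual sum over $\beta$ in the proof of Lemma~\ref{expansion_aver_bound_Mainterm_rec_trun22_5} to converge. The other delicate point is keeping the power of $\hbar$ correct: the extra $\hbar^{-2}$ from the outer $\frac{1}{\hbar}V_{\hbar,x_l}^{s_{k+\beta+1}}$-factor must be compensated by the additional $\rho\hbar$ from Lemma~\ref{LE:Exp_ran_phases} (the genuine recollision pairs $x_l$ with a scatterer in the conjugate string), leaving a net $\hbar^{+1}$ as stated.
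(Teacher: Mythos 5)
Your reduction steps (unitarity plus Jensen to pull out the $s_{k+\beta+1}$-integral, Lemma~\ref{LE:crossing_dom_ladder} to restrict to ladder pairings, Lemma~\ref{LE:Exp_ran_phases}, and the telescoping change of variables) match the paper, and for a \emph{bounded} number of internal double-scatterings your plan is essentially the paper's argument. The genuine gap is that you run the $\nu$-representation/resolvent route uniformly in $\beta$ and assert that the $\beta$-dependence factors out purely as $\tilde\lambda^\beta$ with a $\beta$-independent constant and the fixed power $|\log\zeta|^{n+19}$. That cannot work: the block of $\beta$ back-and-forth scatterings between $x_{k-3}$ and $x_{k-2}$ carries of order $\beta$ additional momentum variables, each sitting under a resolvent $|\tfrac12 \tilde p_m^2+\nu+i\zeta|^{-1}$, and in the resolvent route each such integration costs a factor $C|\log\zeta|$ (Lemma~\ref{LE:resolvent_int_est}); there is no mechanism in that route to avoid a log (or at least a non-small constant) per resolvent. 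Your bound would therefore degrade like $\tilde\lambda^\beta\,(C|\log\zeta|)^{c\beta}$, which is not summable over $\beta$ for small $\hbar$ (recall the weight $\sum_\beta\beta^2$ in the proof of Lemma~\ref{expansion_aver_bound_Mainterm_rec_trun22_5}), and in particular does not yield the stated $\beta$-uniform estimate with exponent $k_0+19$.

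The paper resolves exactly this by splitting into $\beta\le 6$ (where your argument, as in Lemma~\ref{expansion_aver_bound_Mainterm_rec_trun22_special_case_1}, gives $\hbar^3\,|\log\zeta|^{n+19}$ per term with at most finitely many extra logs) and $\beta>6$, where it abandons the resolvent estimates for the double block: after the change of variables $x_{k-2}\mapsto x_{k-2}-x_{k-3}$ the alternating phases $e^{-i\langle\hbar^{-1/d}(x_{k-2}-x_{k-3}),\sum_m(-1)^m\cdot\rangle}$ are paired with the resolvents via the explicit $3$-dimensional kernel of Lemma~\ref{LE:Fourier_trans_resolvent}, and each scattering then contributes a position-space factor $\langle z_m\rangle^{-(d+1)}|z_m\mp x|^{-1}$ whose integral is $O(\langle x\rangle^{-1/2})$ by Lemma~\ref{LE:int_posistion} — uniformly in $\beta$ and without logs. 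The accumulated decay $\langle x\rangle^{-(\beta+2)/2}$ makes the $x,\tilde x$-integrals convergent precisely because $\beta>6$, and only then does the $\alpha,\tilde\alpha$-summation produce a clean $\tilde\lambda^\beta$ with a $\beta$-independent remainder (with the separate subcases $k-2\notin\sigma$, $k-2\in\sigma\not\ni k-3$, and $k-2,k-3\in\sigma$, the last behaving like a single effective collision). This position-space mechanism — which is also where the restriction $d=3$ enters — is missing from your proposal, so as written the estimate you would obtain is not uniform in $\beta$ and the lemma does not follow.
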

\begin{proof}
The proof is a combination of the proof of Lemma~\ref{expansion_aver_bound_Mainterm_rec_trun22_special_case_1} and the first part of the proof of Lemma~\ref{expansion_aver_bound_Mainterm_rec_trun22_5}. Firstly we note that
\begin{equation}\label{EQ:expansion_aver_bound_3.recol_22_3.2_0}
\big\lVert  \sum_{k=5}^{k_0} \mathcal{I}^{\mathrm{dob}}_2(\beta,k;\hbar) \varphi \big\rVert_{L^2(\R^d)}^2 \leq \frac{k_0^4 t \lambda^2}{\hbar^2} \sum_{k=5}^{k_0} \sum_{l=1}^{k-4}  \int_{0}^t  \big\lVert \tilde{\mathcal{I}}^{\mathrm{dob}}_2(\beta,k,l,s_{k+\beta+1};\hbar) \varphi \big\rVert_{L^2(\R^d)}^2 \, ds_{k+\beta+1},
\end{equation}
where 
\begin{equation*}
	\begin{aligned}	
	 \tilde{\mathcal{I}}^{\mathrm{dob}}_2(\beta,k,l,s_{k+\beta+1};\hbar)
	= {} &   \sum_{\boldsymbol{x}\in\mathcal{X}_{\neq}^{k-2}}  \sum_{\alpha^1\in\N^{k_1}}   \int_{[0,t]^{\beta+k}} \boldsymbol{1}_{[s_{k+\beta+1},t]}(s_{k+\beta})
	 V^{s_{k+\beta+1}}_{\hbar, x_{l}}  
	  \tilde{\Theta}_{\beta}^{\mathrm{dob}}(\boldsymbol{s}_{\beta+k,k};V,\hbar)
	  \\
	  &\times  \prod_{m=1}^{k}\Theta_{\alpha_m^1}(s_{m-1},{s}_{m},x_{\iota(m)};V,\hbar)   U_{\hbar,0}(-t)   \, d\boldsymbol{s}_{k,1}.
	 \end{aligned}
\end{equation*}
Applying Lemma~\ref{LE:crossing_dom_ladder} we obtain that 
\begin{equation}\label{EQ:expansion_aver_bound_3.recol_22_3.2_1}
	\begin{aligned}
	\MoveEqLeft \mathbb{E}\Big[ \int_{0}^t  \big\lVert \tilde{\mathcal{I}}^{\mathrm{dob}}_2(\beta,k,l,s_{k+\beta+1};\hbar) \varphi \big\rVert_{L^2(\R^d)}^2 \, ds_{k+\beta+1}\Big]
	\\
	 &\leq 2 \sum_{n=0}^{k-2} \binom{k-2}{n} n! \sum_{\alpha,\tilde{\alpha}\in\N^{k+\beta}} \lambda^{|\alpha|+|\tilde{\alpha}|} \sum_{\sigma\in\mathcal{A}(k-2,n)} \big|\Aver{ \mathcal{T}(\beta,k,\sigma,\alpha,\tilde{\alpha},l;\hbar)}\big| ,
	\end{aligned}
\end{equation}
where
\begin{equation*}
	\begin{aligned}
	\mathcal{T}(\beta,k,\sigma,\alpha,\tilde{\alpha},l;\hbar)
	= {}&
	 \sum_{(\boldsymbol{x},\boldsymbol{\tilde{x}})\in \mathcal{X}_{\neq}^{2k-4}} \prod_{i=1}^n \frac{\delta(x_{\sigma_i}- \tilde{x}_{\sigma_{i}})}{\rho}  
	 \\
	 &\times \int_{0}^t  \int_{\R^d} \mathcal{I}^{\mathrm{dob}}_2(\beta,k,s_{k+\beta+1};\alpha,\boldsymbol{x},\hbar) \varphi (x) \overline{ \mathcal{I}^{\mathrm{dob}}_2(\beta,k,s_{k+\beta+1};\tilde{\alpha},\boldsymbol{\tilde{x}},\hbar) \varphi (x)}  \,dx ds_{k+\beta+1}. 
	\end{aligned}
\end{equation*}
Before we proceed we will divide into the two cases $\beta\leq 6$ and $\beta>6$. We will start with the first case. From the definition of the operator we have that
\begin{equation*}
	\begin{aligned}
 	\MoveEqLeft   \mathcal{I}^{\mathrm{dob}}_2(\beta,k,s_{k+\beta+1};\alpha,\boldsymbol{x},\hbar) \varphi (x)
	=  \frac{1}{(2\pi\hbar)^d} \sum_{\alpha\in \N^{k+\beta}} (i\lambda)^{|\alpha|} \int_{\R_{+}^{|\alpha|}}  \boldsymbol{1}_{[0,\hbar^{-1}t]}( \boldsymbol{s}_{1,k+\beta}^{+}+\hbar^{-1}s_{k+\beta+1}+ \boldsymbol{t}_{1,a_{k+\beta}}^{+})
	 \\
	 \times& \int  e^{ i   \langle  \hbar^{-1}x,p_{k+\beta} \rangle}  
	   e^{ -i   \langle  \hbar^{-1/d}x_{l},p_{k+\beta+1}-p_{k+\beta} \rangle}      
	e^{i \hbar^{-1} s_{k+\beta+1} \frac{1}{2} p_{k+\beta+1}^2}   \hat{V}(p_{k+\beta+1}-p_{k+\beta}) 
	 \prod_{i=1}^{a_{k+\beta}} e^{i  t_{i} \frac{1}{2} \eta_{i}^2}   
	 \\
	 \times&    
	 \prod_{m=-3}^\beta e^{ -i   \langle  \hbar^{-1/d}x_{\tilde{\iota}(k)}, p_{k+m}-p_{k+m-1} \rangle}
	\prod_{m=1}^{k-4}  e^{ -i   \langle  \hbar^{-1/d}x_{m},p_{m}-p_{m-1} \rangle}      
	 \prod_{m=1}^{k+\beta}  
	e^{i  s_{m} \frac{1}{2} p_{m}^2}   \hat{\mathcal{V}}_{\alpha_m}(p_m,p_{m-1},\boldsymbol{\eta} ) 
	\\
	\times & e^{i ( \hbar^{-1}t- \boldsymbol{s}_{1,k+\beta}^{+}- \boldsymbol{t}_{1,a_{k+\beta}}^{+}) \frac{1}{2} p_0^2} \hat{\varphi}(\tfrac{p_0}{\hbar}) \, d\boldsymbol{\eta}d\boldsymbol{p}   d\boldsymbol{t} d\boldsymbol{s}.
	\end{aligned}
\end{equation*}
As in the proof of Lemma~\ref{expansion_aver_bound_Mainterm_rec_trun22_special_case_1} we preform the change of variables 
\begin{equation*}
p_{k+m}\mapsto p_{k+m}-p_{k+m+1} \quad\text{and}\quad  p_{k-3}\mapsto p_{k-3}-p_{k+m}\quad  \text{for all }
\begin{cases}
	m\in\{-2,0\} &\text{if $\beta\in\{1,2\}$}
	\\
	m\in\{-2,0,2\} &\text{if $\beta\in\{3,4\}$}
	\\
	m\in\{-2,0,2,4\} &\text{if $\beta\in\{5,6\}$}.
\end{cases}
\end{equation*}
After this change of variables and a relabelling we obtain that
\begin{equation*}
	\begin{aligned}
 	\MoveEqLeft   \mathcal{I}^{\mathrm{dob}}_2(\beta,k,s_{k+\beta+1};\alpha,\boldsymbol{x},\hbar) \varphi (x)
	=  \frac{1}{(2\pi\hbar)^d} \sum_{\alpha\in \N^{k+\beta}} (i\lambda)^{|\alpha|} \int_{\R_{+}^{|\alpha|}} \boldsymbol{1}_{[0,\hbar^{-1}t]}( \boldsymbol{s}_{1,k+\beta}^{+}+\hbar^{-1}s_{k+\beta+1}+ \boldsymbol{t}_{1,a_{k+\beta}}^{+})
	 \\
	 &\times  \int  e^{ i   \langle  \hbar^{-1}x,p_{k-1} \rangle}  
	  e^{ -i   \langle  \hbar^{-1/d}x_{l},p_{k-1}-p_{k-2} \rangle}      
	e^{i \hbar^{-1} s_{k+\beta+1} \frac{1}{2} p_{k-1}^2}  \hat{V}(p_{k-1}-p_{k-2}) 
	\tilde{\mathcal{G}}_\beta (p_{k-2},\boldsymbol{\tilde{p}},\boldsymbol{\eta})  \prod_{i=1}^{a_{k+\beta}} e^{i  t_{i} \frac{1}{2} \eta_{i}^2}  
	\\
	&\times  \mathcal{P}_\beta (p_{k-2},\boldsymbol{\tilde{p}},\boldsymbol{s})  e^{i  s_{k-2} \frac{1}{2} (\tilde{p}_{-2}+ \tilde{p}_{-1})^2}   \hat{\mathcal{V}}_{\alpha_{k-2}}(\tilde{p}_{-2}+ \tilde{p}_{-1},p_{k-3}+ \pi_{\beta}(\boldsymbol{\tilde{p}}),\boldsymbol{\eta} )   e^{i  s_{k-3} \frac{1}{2} (p_{k-3}+ \pi_{\beta}(\boldsymbol{\tilde{p}}))^2}
	\\
	&\times   \hat{\mathcal{V}}_{\alpha_{k-3}}(p_{k-3}+ \pi_{\beta}(\boldsymbol{\tilde{p}}),p_{k-4},\boldsymbol{\eta} ) 
	 \prod_{m=1}^{k-2}  e^{ -i   \langle  \hbar^{-1/d}x_{m},p_{m}-p_{m-1} \rangle}   \prod_{m=1}^{k-4}  
	e^{i  s_{m} \frac{1}{2} p_{m}^2}   \hat{\mathcal{V}}_{\alpha_m}(p_m,p_{m-1},\boldsymbol{\eta} ) 
	 \\
	& \times e^{i ( \hbar^{-1}t- \boldsymbol{s}_{1,k+\beta}^{+}- \boldsymbol{t}_{1,a_{k+\beta}}^{+}) \frac{1}{2} p_0^2} \hat{\varphi}(\tfrac{p_0}{\hbar}) \, d\boldsymbol{\eta}d\boldsymbol{p}d\boldsymbol{\tilde{p}}   d\boldsymbol{t} d\boldsymbol{s},
	\end{aligned}
\end{equation*}
where
\begin{equation*}
	\pi_{\beta}(\boldsymbol{\tilde{p}}) =
	\begin{cases}
	\tilde{p}_{-2} + \tilde{p}_0 &\text{if $\beta\in\{1,2\}$}
	\\
	\tilde{p}_{-2} + \tilde{p}_0 +\tilde{p}_{2} &\text{if $\beta\in\{3,4\}$}
	\\
	\tilde{p}_{-2} + \tilde{p}_0 +\tilde{p}_{2} +\tilde{p}_{4}&\text{if $\beta\in\{5,6\}$}.
	\end{cases}
\end{equation*}
For $\beta=1$ we have that
\begin{equation*}
	\begin{aligned}
	\tilde{\mathcal{G}}_1 (p_{k-2},\boldsymbol{\tilde{p}},\boldsymbol{\eta}) ={}&  
	\hat{\mathcal{V}}_{\alpha_{k-1}}(\tilde{p}_{-1},\tilde{p}_{-1}+\tilde{p}_{-2},\boldsymbol{\eta} )
	\hat{\mathcal{V}}_{\alpha_{k}}(\tilde{p}_{0}+p_{k-2},\tilde{p}_{-1},\boldsymbol{\eta} )
	\hat{\mathcal{V}}_{\alpha_{k+1}}(p_{k-2},\tilde{p}_{0}+p_{k-2},\boldsymbol{\eta} )
	\\
	\mathcal{P}_1 (p_{k-2},\boldsymbol{\tilde{p}},\boldsymbol{s}) = {}& e^{i  s_{k-1} \frac{1}{2} \tilde{p}_{-1}^2} 
	e^{i  s_{k} \frac{1}{2}( \tilde{p}_{0}+p_{k-2})^2}  e^{i  s_{k+1} \frac{1}{2}p_{k-2}^2}
	\end{aligned}
\end{equation*}
and for $\beta\geq2$ we have that
\begin{equation*}
	\begin{aligned}
	\tilde{\mathcal{G}}_\beta (p_{k-2},\boldsymbol{\tilde{p}},\boldsymbol{\eta}) &=
	\begin{cases}  
	 \hat{\mathcal{V}}_{\alpha_{k+\beta}}(p_{k-2},\tilde{p}_{\beta-1},\boldsymbol{\eta} )\tilde{\mathcal{G}}_{\beta-1} (\tilde{p}_{\beta-2},\boldsymbol{\tilde{p}},\boldsymbol{\eta}) & \text{$\beta$ even}
	\\
	\hat{\mathcal{V}}_{\alpha_{k+\beta}}(p_{k-2}, p_{k-2} + \tilde{p}_{\beta-1},\boldsymbol{\eta} )\tilde{\mathcal{G}}_{\beta-1} (p_{k-2}+ \tilde{p}_{\beta-1},\boldsymbol{\tilde{p}},\boldsymbol{\eta}) & \text{$\beta$ odd}
	\end{cases}
	\\
	\mathcal{P}_\beta (p_{k-2},\boldsymbol{\tilde{p}},\boldsymbol{s}) &=
	\begin{cases}  
	e^{i  s_{k+\beta} \frac{1}{2}p_{k-2}^2} \mathcal{P}_{\beta-1} (\tilde{p}_{\beta-2},\boldsymbol{\tilde{p}},\boldsymbol{s}) & \text{$\beta$ even}
	\\
	e^{i  s_{k+\beta} \frac{1}{2}p_{k-2}^2} \mathcal{P}_{\beta-1} (p_{k-2}+ \tilde{p}_{\beta-1},\boldsymbol{\tilde{p}},\boldsymbol{s}) & \text{$\beta$ odd}.
	\end{cases}
	\end{aligned}
\end{equation*}
Next we preform the change of variables $p_{k-2}\mapsto p_{k-2}-p_{k-1}$ and $p_m \mapsto p_m-p_{k-2}$ for all $m\in\{l,\dots,k-3\}$ with this change of variable and a relabelling we obtain that
\begin{equation*}
	\begin{aligned}
 	\MoveEqLeft   \mathcal{I}^{\mathrm{dob}}_2(\beta,k,s_{k+\beta+1};\alpha,\boldsymbol{x},\hbar) \varphi (x)
	=  \frac{1}{(2\pi\hbar)^d} \sum_{\alpha\in \N^{k+\beta}} (i\lambda)^{|\alpha|} \int_{\R_{+}^{|\alpha|}}\boldsymbol{1}_{[0,\hbar^{-1}t]}( \boldsymbol{s}_{1,k+\beta}^{+}+\hbar^{-1}s_{k+\beta+1}+ \boldsymbol{t}_{1,a_{k+\beta}}^{+})
	 \\
	 &\times   \int e^{ i   \langle  \hbar^{-1}x,p_{k-2} \rangle}       
	e^{i \hbar^{-1} s_{k+\beta+1} \frac{1}{2} p_{k-2}^2}   \hat{V}(-\tilde{p}_\beta ) 
	 \tilde{\mathcal{G}}_\beta (\tilde{p}_\beta+p_{k-2},\boldsymbol{\tilde{p}},\boldsymbol{\eta}) \mathcal{P}_\beta (\tilde{p}_\beta+p_{k-2},\boldsymbol{\tilde{p}},\boldsymbol{s}) 
	\\
	&\times  e^{i  s_{k-2} \frac{1}{2} (\tilde{p}_{-2}+ \tilde{p}_{-1})^2}   \hat{\mathcal{V}}_{\alpha_{k-2}}(\tilde{p}_{-2}+ \tilde{p}_{-1},p_{k-3}+\tilde{p}_\beta+ \pi_{\beta}(\boldsymbol{\tilde{p}}),\boldsymbol{\eta} )   e^{i  s_{k-3} \frac{1}{2} (p_{k-3}+\tilde{p}_\beta+ \pi_{\beta}(\boldsymbol{\tilde{p}}))^2}
	 \prod_{i=1}^{a_{k+\beta}} e^{i  t_{i} \frac{1}{2} \eta_{i}^2}  
	\\
	&\times   \hat{\mathcal{V}}_{\alpha_{k-3}}(p_{k-3}+\tilde{p}_\beta+ \pi_{\beta}(\boldsymbol{\tilde{p}}),p_{k-4}+\tilde{p}_\beta,\boldsymbol{\eta} ) 
	e^{i ( \hbar^{-1}t- \boldsymbol{s}_{1,k+\beta+1}^{+}- \boldsymbol{t}_{1,a_{k+\beta}}^{+}) \frac{1}{2} p_0^2}
	 \prod_{m=1}^{k-2}  e^{ -i   \langle  \hbar^{-1/d}x_{m},p_{m}-p_{m-1} \rangle}   
	 \\
	& \times
	\prod_{m=1}^{k-4}  
	e^{i  s_{m} \frac{1}{2}( p_{m}+\tilde{\pi}_m(\tilde{p}_\beta))^2}   \hat{\mathcal{V}}_{\alpha_m}(p_m+\tilde{\pi}_m(\tilde{p}_\beta),p_{m-1}+\tilde{\pi}_{m-1}(\tilde{p}_\beta),\boldsymbol{\eta} ) 
	  \hat{\varphi}(\tfrac{p_0}{\hbar}) \, d\boldsymbol{\eta}d\boldsymbol{p}   d\boldsymbol{t}d\boldsymbol{s},
	\end{aligned}
\end{equation*}
where we have used the notation
\begin{equation*}
	\tilde{\pi}_m(\tilde{p}_\beta) = 
	\begin{cases}
		\tilde{p}_\beta &\text{if $m\in\{l,\dots,k-4\}$} 
		\\
		0 &\text{if $m\notin\{l,\dots,k-4\}$} .
	\end{cases}
\end{equation*}
Using this form we can argue as in the proofs of Lemma~\ref{LE:Exp_ran_phases} and Lemma~\ref{expansion_aver_bound_Mainterm} and obtain that
\begin{equation*}
	\begin{aligned}
 	\mathbb{E}&\big[ \mathcal{T}(\beta,k,\sigma,\alpha,\tilde{\alpha};\hbar)\big] =   \frac{(2\pi)^{(2k-2-n)d}\hbar (\rho\hbar)^{2k-4-n}}{(2\pi\hbar)^d}  \int_{\R_{+}^{|\alpha|+|\tilde{\alpha}|}} f(\boldsymbol{s},\boldsymbol{\tilde{s}}) \int \Lambda_n(\boldsymbol{p},\boldsymbol{q},\sigma)    \hat{\varphi}(\tfrac{p_0}{\hbar}) \overline{\hat{\varphi}(\tfrac{q_0}{\hbar})} 
	\\
	\times & \delta(\tfrac{t}{\hbar}- \boldsymbol{s}_{0,k+\beta}^{+}-\tilde{s}_{k+\beta+1}- \boldsymbol{t}_{1,a_{k+\beta}}^{+})  \delta(\tfrac{t}{\hbar}- \boldsymbol{\tilde{s}}_{1,k+\beta+1}^{+}- \boldsymbol{\tilde{t}}_{1,a_{k+\beta}}^{+})  \mathcal{P}_\beta (p_{k-2},\boldsymbol{\tilde{p}},\boldsymbol{s})   \mathcal{P}_\beta (q_{k-2},\boldsymbol{\tilde{q}},\boldsymbol{\tilde{s}})  
	\\
	\times & e^{i  s_{k-2} \frac{1}{2} (\tilde{p}_{-2}+ \tilde{p}_{-1})^2}  e^{-i  \tilde{s}_{k-2} \frac{1}{2} (\tilde{q}_{-2}+ \tilde{q}_{-1})^2}     e^{i  s_{k-3} \frac{1}{2} (p_{k-3}+\tilde{p}_\beta+ \pi_{\beta}(\boldsymbol{\tilde{p}}))^2} e^{-i  \tilde{s}_{k-3} \frac{1}{2} (q_{k-3}+\tilde{q}_\beta+ \pi_{\beta}(\boldsymbol{\tilde{q}}))^2} \prod_{i=1}^{a_{k+\beta}} e^{i  t_{i} \frac{1}{2} \eta_{i}^2}  
	\\
	\times &  
	\prod_{i=1}^{\tilde{a}_{k+\beta}} e^{-i  \tilde{t}_{i} \frac{1}{2} \xi_{i}^2}  
	\prod_{m=0}^{k-4}  e^{i  s_{m} \frac{1}{2}( p_{m}+\tilde{\pi}_m(\tilde{p}_\beta))^2} e^{-i  \tilde{s}_{m} \frac{1}{2}( q_{m}+\tilde{\pi}_m(\tilde{q}_\beta))^2}  \mathcal{G}(\boldsymbol{p}, \boldsymbol{\tilde{p}},\boldsymbol{\tilde{q}},\boldsymbol{\eta}, \boldsymbol{\xi},\sigma ) d\boldsymbol{\eta} d\boldsymbol{\xi}d\boldsymbol{p}d\boldsymbol{q} d\boldsymbol{\tilde{p}}  d\boldsymbol{\tilde{q}}  d\boldsymbol{t} d\boldsymbol{\tilde{t}}  d\boldsymbol{s} d\boldsymbol{\tilde{s}},
	\end{aligned}
\end{equation*}
where we have preformed the change of variables $\tilde{s}_{k+\beta+1}\mapsto\hbar^{-1} \tilde{s}_{k+\beta+1}$. The function $f$ is given by
\begin{equation*}
	f(\boldsymbol{s},\boldsymbol{\tilde{s}}) = 
	\begin{cases}
	\boldsymbol{1}_{[0,\hbar^{-1}t]}\big( \sum_{i=1}^{k-4} s_i \big) \boldsymbol{1}_{[0,\hbar^{-1}t]}\big( \sum_{i=1,i \notin \sigma}^{k-3} \tilde{s}_i +  \tilde{s}_{k+\beta+1}  \big) & \text{if $\sigma_n\leq l$}
	\\
	\boldsymbol{1}_{[0,\hbar^{-1}t]}\big( \sum_{i=1,i\neq l}^{k-4} s_i  \big) \boldsymbol{1}_{[0,\hbar^{-1}t]}\big( \sum_{i=1,i \notin \sigma}^{k-3} \tilde{s}_i   +  \tilde{s}_{k+\beta+1}  \big) & \text{if $\sigma_n>l$}.
	\end{cases}
\end{equation*}
The function $\mathcal{G}$ is given by
\begin{equation*}
	\begin{aligned}
	\MoveEqLeft \mathcal{G}(\boldsymbol{p}, \boldsymbol{\tilde{p}},\boldsymbol{\tilde{p}},\boldsymbol{\eta}, \boldsymbol{\xi},\sigma ) 
	\\
	={}&  
	   \tilde{\mathcal{G}}_\beta (\tilde{p}_\beta+p_{k-2},\boldsymbol{\tilde{p}},\boldsymbol{\eta}) 
	     \hat{\mathcal{V}}_{\alpha_{k-2}}(\tilde{p}_{-2}+ \tilde{p}_{-1},p_{k-3}+\tilde{p}_\beta+ \pi_{\beta}(\boldsymbol{\tilde{p}}),\boldsymbol{\eta} )   \hat{\mathcal{V}}_{\alpha_{k-3}}(p_{k-3}+\tilde{p}_\beta+ \pi_{\beta}(\boldsymbol{\tilde{p}}),p_{k-4},\boldsymbol{\eta} )
	     \\
	     \times& \overline{  \tilde{\mathcal{G}}_\beta (\tilde{q}_\beta+q_{k-2},\boldsymbol{\tilde{q}},\boldsymbol{\xi}) }   \overline{  \hat{\mathcal{V}}_{\tilde{\alpha}_{k-2}}(\tilde{q}_{-2}+ \tilde{q}_{-1},q_{k-3}+\tilde{q}_\beta+ \pi_{\beta}(\boldsymbol{\tilde{q}}),\boldsymbol{\xi} )}  
	    \overline{ \hat{\mathcal{V}}_{\tilde{\alpha}_{k-3}}(q_{k-3}+\tilde{q}_\beta+ \pi_{\beta}(\boldsymbol{\tilde{q}}),q_{k-4},\boldsymbol{\xi} ) }
	   \\
	   \times&
	 \prod_{m=1}^{k-4}    \hat{\mathcal{V}}_{\alpha_m}(p_m+\tilde{\pi}_m(\tilde{p}_\beta),p_{m-1}+\tilde{\pi}_{m-1}(\tilde{p}_\beta),\boldsymbol{\eta} ) \overline{ \hat{\mathcal{V}}_{\tilde{\alpha}_m}(q_m+\tilde{\pi}_m(\tilde{q}_\beta),q_{m-1}+\tilde{\pi}_{m-1}(\tilde{q}_\beta),\boldsymbol{\xi} ) }.
	\end{aligned}
\end{equation*}
  The function $\Lambda_n(\boldsymbol{p},\boldsymbol{q},\sigma)$ is given by
  \begin{equation*}
  	\Lambda_n(\boldsymbol{p},\boldsymbol{q},\sigma) = \delta(p_{k-2}-q_{k-2})  \prod_{i=1}^n \delta(p_{\sigma_{i-1}}- q_{\sigma{i-1}} - p_{\sigma_n}+ q_{\sigma_n} ))   
	\prod_{i=1}^{n+1} \prod_{m=\sigma_{i-1}+1}^{\sigma_{i}-1} \delta(p_m-p_{\sigma_{i-1}}) \delta(q_m-q_{\sigma^1_{i-1}}).
  \end{equation*}
From applying the ``usually'' arguments we obtain the bound
\begin{equation}\label{EQ:expansion_aver_bound_3.recol_22_3.2_5}
	\begin{aligned}
 	\MoveEqLeft  
	\big| \Aver{\mathcal{T}(\beta,k,\sigma,\alpha,\tilde{\alpha};\hbar)}\big|
	 \leq \hbar^3    \frac{C^{|\alpha|+|\tilde{\alpha}|+2} \rho^2 (\rho t)^{2k-n-6} }{ (k-5)!(k-2-n)!}  
	  |\log(\tfrac{\hbar}{t})|^{n+19}  \norm{\varphi}_{\mathcal{H}^{5d+5}_\hbar(\R^d)}^2 \norm{\hat{V}}_{1,\infty,6d+6}^{|\alpha|+|\tilde{\alpha}|+2},
	\end{aligned}
\end{equation}
where we have used that $\beta\leq 6$. We remark that this is a ``combined''  bound, that will be true for all cases. Some of the different configurations will have better bounds, but we have choosen to use a single bound as it will suffices. Combining \eqref{EQ:expansion_aver_bound_3.recol_22_3.2_0},  \eqref{EQ:expansion_aver_bound_3.recol_22_3.2_1},  \eqref{EQ:expansion_aver_bound_3.recol_22_3.2_5} and arguing as above we obtain the estimate
\begin{equation}
\Aver*{ \big\lVert  \sum_{k=5}^{k_0} \mathcal{I}^{\mathrm{dob}}_2(\beta,k;\hbar) \varphi \big\rVert_{L^2(\R^d)}^2} \leq 
 \hbar \tilde{\lambda}^\beta  \norm{\varphi}_{\mathcal{H}^{5d+5}_\hbar(\R^d)}^2 k_0^{10} C^{k_0} \hbar  |\log(\tfrac{\hbar}{t})|^{k_0+19},
\end{equation}
where  $\tilde{\lambda}$ is given by \eqref{EQ_tilde_lambda} and satisfies that $\tilde{\lambda}<1$. We now turn to the case $\beta>6$ again for notational convenience we will assume $\beta$ to be even and in the end of the proof mention what we will need to change for $\beta$ odd. From the definition of the operator we have that
\begin{equation*}
	\begin{aligned}
 	\MoveEqLeft  \mathcal{\tilde{I}}^{\mathrm{dob}}_2(\beta,k,\boldsymbol{x},s_{k+\beta+1};\hbar,l) \varphi (x)
	=  \frac{1}{(2\pi\hbar)^d} \sum_{\alpha\in \N^{k+\beta}} (i\lambda)^{|\alpha|} \int_{\R_{+}^{|\alpha|}} \boldsymbol{1}_{[0,\hbar^{-1}t]}( \boldsymbol{s}_{1,k+\beta}^{+}+ \hbar^{-1}s_{k+\beta+1}+ \boldsymbol{t}_{1,a_{k+\beta}}^{+})
	\\
	\times& \int
	  e^{ i   \langle  \hbar^{-1}x,p_{k+\beta+1} \rangle} e^{ -i   \langle  \hbar^{-1/d}x_{l},p_{k+\beta+1}-p_{k+\beta} \rangle} e^{i \hbar^{-1} s_{k+\beta+1} \frac{1}{2} p_{k+\beta+1}^2} \hat{V}(p_{k+\beta+1}-p_{k+\beta}) e^{ -i   \langle  \hbar^{-1/d}x_{k-3},p_{\beta+k_1}-p_{k-4} \rangle} 
	 \\
	 \times&    
	e^{ -i   \langle  \hbar^{-1/d}(x_{k-2}-x_{k-3}),\sum_{m=0}^{\beta/2+1} p_{2m+k-2}-p_{2m+k-3} \rangle}   \prod_{m=1}^{k-4}  e^{ -i   \langle  \hbar^{-1/d}x_{m},p_{m}-p_{m-1} \rangle}  \prod_{m=1}^{k+\beta}  
	e^{i  s_{m} \frac{1}{2} p_{m}^2}   \hat{\mathcal{V}}_{\alpha_m}(p_m,p_{m-1},\boldsymbol{\eta} ) 
	\\
	\times&
	 \prod_{i=1}^{a_{k+\beta}} e^{i  t_{i} \frac{1}{2} \eta_{i}^2}       
	 e^{i ( \hbar^{-1}t- \boldsymbol{s}_{1,k}^{+}- \boldsymbol{t}_{1,a_{k+\beta}}^{+}) \frac{1}{2} p_0^2} \hat{\varphi}(\tfrac{p_0}{\hbar}) \, d\boldsymbol{\eta}d\boldsymbol{p}  d\boldsymbol{t}d\boldsymbol{s}.
	\end{aligned}
\end{equation*}
We start by preforming the change of variables $p_{k+\beta}\mapsto p_{k+\beta} - p_{k+\beta+1}$ and $p_{m}\mapsto p_{m} - p_{k+\beta}$ for all $m\in\{l,\dots,k+\beta-1\}$. After a relabelling, where we relabel $p_{k+\beta}$ as $\tilde{\eta}$ and $p_{k+\beta+1}$ as $p_{k+\beta}$ we get that
\begin{equation*}
	\begin{aligned}
 	\MoveEqLeft  \mathcal{\tilde{I}}^{\mathrm{dob}}_2(\beta,k,\boldsymbol{x},s_{k+\beta+1};\hbar,l) \varphi (x)
	= \sum_{\alpha\in \N^{k+\beta}}  \frac{(i\lambda)^{|\alpha|}}{(2\pi\hbar)^d}  \int_{\R_{+}^{|\alpha|}} \boldsymbol{1}_{[0,\hbar^{-1}t]}( \boldsymbol{s}_{1,k+\beta}^{+}+ \hbar^{-1}s_{k+\beta+1}+ \boldsymbol{t}_{1,a_{k+\beta}}^{+})
	\\
	\times&  \int e^{ i   \langle  \hbar^{-1}x,p_{k+\beta} \rangle}  e^{i \hbar^{-1} s_{k+\beta+1} \frac{1}{2} p_{k+\beta}^2} \hat{V}(-\tilde{\eta})   
	e^{ -i   \langle  \hbar^{-1/d}(x_{k-2}-x_{k-3}),\sum_{m=0}^{\beta/2+1} p_{2m+k-2}-p_{2m+k-3} \rangle}   
	\\
	\times&     
	\prod_{i=1}^{a_{k+\beta}} e^{i  t_{i} \frac{1}{2} \eta_{i}^2}  
	 \prod_{m=1}^{k+\beta}  
	e^{i  s_{m} \frac{1}{2} (p_{m}+\pi_m(\tilde{\eta}))^2}   \hat{\mathcal{V}}_{\alpha_m}(p_m+ \pi_m(\tilde{\eta}),p_{m-1} + \pi_{m-1}(\tilde{\eta}),\boldsymbol{\eta} )  e^{ -i   \langle  \hbar^{-1/d}x_{k-3},p_{\beta+k}-p_{k-4} \rangle}  
	\\
	\times& \prod_{m=1}^{k-4} \Big\{ e^{ -i   \langle  \hbar^{-1/d}x_{m},p_{m}-p_{m-1} \rangle} \Big\} e^{i ( \hbar^{-1}t-\boldsymbol{s}_{1,k+\beta}^{+}- \hbar^{-1}s_{k+\beta+1}- \boldsymbol{t}_{1,a_{k+\beta}}^{+}) \frac{1}{2} p_0^2} \hat{\varphi}(\tfrac{p_0}{\hbar}) \, d\boldsymbol{\eta}d\boldsymbol{p}  d\boldsymbol{t}d\boldsymbol{s},
	\end{aligned}
\end{equation*}
where the function $\pi_m(\tilde{\eta})$ is given in \eqref{EQ:expansion_aver_bound_3.recol_22_3.2_5}.
 As in the proof of Lemma~\ref{expansion_aver_bound_Mainterm_rec_trun22_special_case_1} we again have to distinguish the three cases  $k-2\notin\sigma$, $k-3,k-2\in\sigma$ and finally the case $k-2\in\sigma$ and $k-3\notin\sigma$. We start with the first case $k-2\notin\sigma$ and again we will assume $\sigma_n=k-3$, this is only done to simplify notation and the other case is estimated completely analogous. After taking the average we start by preforming the change of variables $x_{k-2}\mapsto x_{k-2}-x_{k-3}$ and   $\tilde{x}_{k-2}\mapsto \tilde{x}_{k-2}-\tilde{x}_{k-3}$. Then  by arguing as in the proof of Lemma~\ref{LE:Exp_ran_phases} and Lemma~\ref{expansion_aver_bound_Mainterm} we have that   
\begin{equation}\label{EQ:expansion_aver_bound_3.recol_22_3.2_11}
	\begin{aligned}
 	\MoveEqLeft  
	\Aver{ \mathcal{T}(\beta,k,\sigma,\alpha,\tilde{\alpha};\hbar)}
	=  \frac{(2\pi)^{(2k-2-n)d}\hbar(\rho\hbar)^{2k-4-n}}{(2\pi\hbar)^d} \int_{\R_{+}^{|\alpha|+|\tilde{\alpha}|+2}} f(\boldsymbol{s},\boldsymbol{\tilde{s}}) \int    \mathcal{G}(\boldsymbol{p}, \boldsymbol{\tilde{p}},\boldsymbol{\tilde{q}},\boldsymbol{\eta}, \boldsymbol{\xi},  \tilde{\eta},\tilde{\xi},\sigma )  
	 \\
	 &\times   \delta(\tfrac{t}{\hbar}- \boldsymbol{\tilde{s}}_{0,k+\beta+1}^{+}- \boldsymbol{\tilde{t}}_{1,a_{k+\beta}}^{+})   \delta(\tfrac{t}{\hbar}- \boldsymbol{s}_{0,k+\beta}^{+}-\tilde{s}_{k+\beta+1}- \boldsymbol{t}_{1,a_{k+\beta}}^{+})  \prod_{i=1}^{a_{k+\beta}} e^{i  t_{i} \frac{1}{2} \eta_{i}^2} \prod_{i=1}^{\tilde{a}_{k+\beta}}  e^{-i  \tilde{t}_{i} \frac{1}{2} \xi_{i}^2}
	 \\
	 &\times e^{ -i   \langle  x, p_{n}-\tilde{p}_{\beta-1} +\sum_{m=0}^{\beta/2} \tilde{p}_{2m-2}-\tilde{p}_{2m-3} \rangle}   e^{ i   \langle  \tilde{x}, p_{n}-\tilde{q}_{\beta-1} +\sum_{m=0}^{\beta/2} \tilde{q}_{2m-2}-\tilde{q}_{2m-3} \rangle}
	     e^{i  s_{k-3} \frac{1}{2} \tilde{p}_{-3}^2}  e^{-i  \tilde{s}_{k-3} \frac{1}{2} \tilde{q}_{-3}^2}    
	\\
	&\times    
	    e^{i  s_{k+\beta} \frac{1}{2} (p_{n}+\tilde{\eta})^2}   e^{ - i \tilde{s}_{k+\beta} \frac{1}{2} (p_{n}+\tilde{\xi})^2}   \prod_{m=-2}^{\beta-1}e^{i  s_{m+k} \frac{1}{2} (\tilde{p}_{m}+ \tilde{\eta})^2} e^{-i  \tilde{s}_{m+k} \frac{1}{2} (\tilde{q}_{m}+\tilde{\xi})^2}   
	  \\
	  &\times  \prod_{i=0}^{n-1} \prod_{m=\sigma_i}^{\sigma_{i+1}-1} e^{i  s_{m} \frac{1}{2} (p_{i}+\pi_m(\tilde{\eta}))^2}  e^{-i\tilde{s}_{m} \frac{1}{2} (p_{i}+\pi_m(\tilde{\xi}))^2}  |\hat{\varphi}(\tfrac{p_0}{\hbar})|^2
	  \, d\boldsymbol{\eta} d\boldsymbol{\xi} dx d\tilde{x}d\tilde{\eta}d\tilde{\xi} d\boldsymbol{p} d\boldsymbol{\tilde{p}} d\boldsymbol{\tilde{q}}   d\boldsymbol{t}  d\boldsymbol{\tilde{t}}  d\boldsymbol{\tilde{s}}d\boldsymbol{s},
	\end{aligned}
\end{equation}
where the function $f$ is given by
\begin{equation*}
	f(\boldsymbol{s},\boldsymbol{\tilde{s}}) = \boldsymbol{1}_{[0,\hbar^{-1}t]}\big( \sum_{i=1}^{k-4} s_i  \big) \boldsymbol{1}_{[0,\hbar^{-1}t]}\big( \sum_{i=1,i \notin \sigma}^{k-4} \tilde{s}_i + \tilde{s}_{k+\beta+1}  \big)
\end{equation*}
and the function $\mathcal{G}$ is given by
\begin{equation*}
	\begin{aligned}
	& \mathcal{G}(\boldsymbol{p}, \boldsymbol{\tilde{p}},\boldsymbol{\tilde{q}},\boldsymbol{\eta}, \boldsymbol{\xi},  \tilde{\eta},\tilde{\xi},\sigma )  
	= \overline{\hat{V}(\tilde{\eta})}\hat{V}(\tilde{\xi}) 
	  \hat{\mathcal{V}}_{\alpha_{\beta+k}}(p_{n}+\tilde{\eta}, \tilde{p}_{\beta-1}+\tilde{\eta},\boldsymbol{\eta} )  
	 \overline{\hat{\mathcal{V}}_{\tilde{\alpha}_{\beta+k}}(p_{n}+\tilde{\xi}, \tilde{q}_{\beta-1}+\tilde{\xi},\boldsymbol{\xi} )} 
	\\
	&\times \Big\{\prod_{m=-2}^{\beta-1}  
	 \hat{\mathcal{V}}_{\alpha_{m+k}}(\tilde{p}_m+\tilde{\eta},\tilde{p}_{m-1}+\tilde{\eta},\boldsymbol{\eta} ) \overline{ \hat{\mathcal{V}}_{\tilde{\alpha}_{m+k}}(\tilde{q}_m+\tilde{\xi},\tilde{q}_{m-1}+\tilde{\xi},\boldsymbol{\xi} )}  \Big\}
	  \hat{\mathcal{V}}_{\alpha_{k-3}}(\tilde{p}_{-3}+\tilde{\eta},p_{n-1}+\tilde{\eta},\boldsymbol{\eta} )   
	  \\
	  &\times \overline{\hat{\mathcal{V}}_{\tilde{\alpha}_{k-3}}(\tilde{q}_{-3}+\tilde{\xi},p_{n-1}+\tilde{\xi},\boldsymbol{\eta} )  }
	 \prod_{i=1}^{n-1} \hat{\mathcal{V}}_{\alpha_{\sigma_i}}(p_i+\pi_{\sigma_i}(\tilde{\eta}),p_{i-1}\pi_{\sigma_i-1}(\tilde{\eta}),\boldsymbol{\eta} )
	 \overline{ \hat{\mathcal{V}}_{\tilde{\alpha}_{\sigma_i}}(p_i+\pi_{\sigma_i}(\tilde{\xi}),p_{i-1}+\pi_{\sigma_i-1}(\tilde{\xi}),\boldsymbol{\xi} )}   
	 \\
	 &\times \prod_{m=\sigma_i+1}^{\sigma_{i+1}-1}  
	   \hat{\mathcal{V}}_{\alpha_m}(p_i+\pi_m(\tilde{\eta}),p_{i}+\pi_m(\tilde{\eta}),\boldsymbol{\eta} )\overline{ \hat{\mathcal{V}}_{\tilde{\alpha}_m}(p_i+\pi_m(\tilde{\xi}),p_{i}+\pi_m(\tilde{\xi}),\boldsymbol{\xi} )}
	   \prod_{m=1}^{\sigma_1-1}  
	  \hat{\mathcal{V}}_{\alpha_m}(p_0,p_{0},\boldsymbol{\eta} )  \overline{  \hat{\mathcal{V}}_{\tilde{\alpha}_m}(p_0,p_{0},\boldsymbol{\xi} )}.  
	\end{aligned}
\end{equation*}
We again use the function $\zeta$ as in the previous proofs and write the delta functions as Fourier transforms of the constant function $1$. We the obtain that
\begin{equation*}
	\begin{aligned}
 	\MoveEqLeft  
	\Aver{ \mathcal{T}(\beta,k,\sigma,\alpha,\tilde{\alpha};\hbar)}
	=   \frac{(2\pi)^{(2k-2-n)d}\hbar(\rho\hbar)^{2k-4-n} e^{2\hbar^{-1}t\zeta}}{(2\pi)^2(2\pi\hbar)^d} \int_{\R_{+}^{|\alpha|+|\tilde{\alpha}|+2}} f(\boldsymbol{s},\boldsymbol{\tilde{s}}) \int  |\hat{\varphi}(\tfrac{p_0}{\hbar})|^2  
	 \\
	 \times&  e^{ -i   \langle  x, p_{n}-\tilde{p}_{\beta-1} +\sum_{m=0}^{\beta/2} \tilde{p}_{2m-2}-\tilde{p}_{2m-3} \rangle} 
	  e^{ i   \langle  \tilde{x}, p_{n}-\tilde{q}_{\beta-1} +\sum_{m=0}^{\beta/2} \tilde{q}_{2m-2}-\tilde{q}_{2m-3} \rangle}
	e^{i  s_{k-3} (\frac{1}{2} (\tilde{p}_{-3}+\tilde{\eta})^2+\nu+i\zeta)} 
	\\ 
	\times& e^{-i  \tilde{s}_{k-3}( \frac{1}{2}( \tilde{q}_{-3}+\tilde{\xi})^2-\tilde{\nu}-i\zeta)}    
	\prod_{i=1}^{a_{k+\beta}} e^{i  t_{i}( \frac{1}{2} \eta_{i}^2 +\nu+i\zeta)} \prod_{i=1}^{\tilde{a}_{k+\beta}}  e^{-i  \tilde{t}_{i}( \frac{1}{2} \xi_{i}^2-\tilde{\nu}-i\zeta)}    
	   \prod_{m=-2}^{\beta-1} \Big\{ e^{i  s_{m+k}( \frac{1}{2} (\tilde{p}_{m}+ \tilde{\eta})^2 +\nu+i\zeta)}
	   \\
	   \times&  e^{-i  \tilde{s}_{m+k} (\frac{1}{2} (\tilde{q}_{m}+\tilde{\xi})^2-\tilde{\nu}-i\zeta)} \Big\}
	  e^{i  s_{k+\beta}( \frac{1}{2} (p_{n}+\tilde{\eta})^2 +\nu+i\zeta)}   e^{ - i \tilde{s}_{k_1+\beta}( \frac{1}{2} (p_{n}+\tilde{\xi})^2-\tilde{\nu}-i\zeta)} \mathcal{G}(\boldsymbol{p}, \boldsymbol{\tilde{p}},\boldsymbol{\tilde{q}},\boldsymbol{\eta}, \boldsymbol{\xi},  \tilde{\eta},\tilde{\xi},\sigma )
	  \\
	  \times& \prod_{i=0}^{n-1} \prod_{m=\sigma_i}^{\sigma_{i+1}-1} e^{i  s_{m} (\frac{1}{2} (p_{i}+\pi_m(\tilde{\eta}))^2 +\nu+i\zeta)}  e^{-i\tilde{s}_{m}( \frac{1}{2} (p_{i}+\pi_m(\tilde{\xi}))^2-\tilde{\nu}-i\zeta)}
	  \, d\boldsymbol{\eta} d\boldsymbol{\xi} d\nu d\tilde{\nu} dx d\tilde{x}d\tilde{\eta}d\tilde{\xi} d\boldsymbol{p} d\boldsymbol{\tilde{p}} d\boldsymbol{\tilde{q}}   d\boldsymbol{t}  d\boldsymbol{\tilde{t}}  d\boldsymbol{\tilde{s}}d\boldsymbol{s}.
	\end{aligned}
\end{equation*}
 We now evaluate the integrals in $s_0$,$s_{k+\beta}$,$\tilde{s}_{k+\beta}$, $\tilde{s}_0$, $\tilde{s}_{\sigma_i}$ for $i\in\{1,\dots,n-1\}$ and $s_m$,$\tilde{s}_m$ for $m\in\{k-3,\dots,k_1+\beta-1\}$. From here we do the usual argument by dividing into different cases depending on the size of $t$ and $\tilde{t}$ and do integration by parts for all for all cases, where the varables is ``large''. For the variables $\tilde{p}_m$, $\tilde{q}_m$ with $m\in\{-2,\dots,\beta-1\}$ we use Lemma~\ref{LE:Fourier_trans_resolvent} and integration by parts as in the proof of Lemma~\ref{expansion_aver_bound_Mainterm_rec_trun22_special_case_1}. 
 Combining these arguments we get that
\begin{equation}\label{EQ:expansion_aver_bound_3.recol_22_3.2_15}
	\begin{aligned}
 	\MoveEqLeft  
	\big| \Aver{ \mathcal{T}(\beta,k,\sigma,\alpha,\tilde{\alpha};\hbar,l)}\big|
	 \leq  \frac{C^{|\alpha|+|\tilde{\alpha}|}\hbar (\rho\hbar)^{2k-4-n} }{(2\pi\hbar)^d}   \int_{\R_{+}^{2k-n-7}} f(\boldsymbol{s},\boldsymbol{\tilde{s}})  d\boldsymbol{s}   d\boldsymbol{\tilde{s}}    \sup_{\boldsymbol{\epsilon},\boldsymbol{\gamma},\boldsymbol{\delta},\boldsymbol{\varepsilon}}  \int  |\hat{\varphi}(\tfrac{p_0}{\hbar}) |^2
	  \\
	  &\times  \frac{ \langle \tilde{\eta}\rangle^{-d-1}}{ |\frac{1}{2} (p_{n_1}+\tilde{\xi})^2-\tilde{\nu}-i\zeta|}\frac{ \langle \tilde{\xi}\rangle^{-d-1}}{|\frac{1}{2} (p_{n_1}+\tilde{\eta})^2+\nu +i\zeta|}    \frac{ \langle p_0\rangle^{2(d+1)}}{\langle \tilde{\nu}\rangle |\frac{1}{2} p_0^2-\tilde{\nu}-i\zeta|}\frac{1}{\langle \nu\rangle |\frac{1}{2} p_0^2+\nu +i\zeta|}
	  \\
	  &\times  
	  \frac{ \langle\nu\rangle \langle \tilde{p}_{-3}\rangle^{-d-1} } {|\frac{1}{2} (\tilde{p}_{-3}+\tilde{\eta})^2+\nu +i\zeta|}\frac{\langle\tilde{\nu}\rangle \langle \tilde{q}_{-3}\rangle^{-d-1} }{|\frac{1}{2} (\tilde{q}_{-3}+\tilde{\xi})^2-\tilde{\nu} - i\zeta|}  \prod_{\substack{m=1\\m\neq n_1}}^{n} \frac{\langle p_m-p_{m-1} \rangle^{-d-1}}{ |\frac{1}{2} (p_{m}+\pi_{\sigma_m}(\tilde{\xi}))^2-\tilde{\nu}-i\zeta|} 
	  \\
	  &\times
	   \Big| \partial_{\boldsymbol{\eta}}^{\boldsymbol{\epsilon}} \partial_{\boldsymbol{\xi}}^{\boldsymbol{\gamma}}  \partial_{\boldsymbol{\tilde{p}}}^{\boldsymbol{\delta}} \partial_{\boldsymbol{\tilde{q}}}^{\boldsymbol{\varepsilon}} \mathcal{G}(\boldsymbol{p}, \boldsymbol{\tilde{p}},\boldsymbol{\tilde{p}},\boldsymbol{\eta}, \boldsymbol{\xi},\tilde{\eta},\tilde{\xi},\sigma )  \Big|
	    \prod_{m=-2}^{\beta-1}   \frac{  1}{ \langle \tilde{z}_m \rangle^{d+1} |\tilde{z}_m-	(-1)^m\tilde{x}| \langle z_m \rangle^{d+1}  |z_m- (-1)^mx|} 
	 \\
	&\times 
	 \langle \tilde{\eta}\rangle^{d+1}  \langle \tilde{\xi}\rangle^{d+1}\langle \tilde{q}_{-3} -p_{n-1} \rangle^{d+1} \langle \tilde{p}_{-3} -p_{n-1} \rangle^{d+1} \prod_{m=1}^{n-1} \langle p_m-p_{m-1}\rangle^{3(d+1)}
	\, dx d\tilde{x}d\tilde{\eta}d\tilde{\xi}d\boldsymbol{z} d\boldsymbol{\tilde{z}} d\boldsymbol{\tilde{p}} d\boldsymbol{\tilde{q}} d\boldsymbol{p}d\boldsymbol{\eta}d\boldsymbol{\xi},
	\end{aligned}
\end{equation}
where $C$ is a constant only depending on the dimension and we have used the notation as given in \eqref{EQ:expansion_aver_bound_3.recol_22_3.1_sup_notation}. Moreover, we have inserted the function and used the inequality
\begin{equation*}
	\begin{aligned}
	\MoveEqLeft \frac{\langle \tilde{\nu} \rangle \langle \nu \rangle \langle \tilde{\eta}\rangle^{d+1}  \langle \tilde{\xi}\rangle^{d+1} \langle \tilde{p}_{0}\rangle^{2d+2} \langle \tilde{q}_{-3} -p_{n-1} \rangle^{d+1} \langle \tilde{p}_{-3} -p_{n-1} \rangle^{d+1} \prod_{m=1}^{n-1} \langle p_m-p_{m-1}\rangle^{3(d+1)}  }{\langle \tilde{\nu} \rangle \langle \nu \rangle \langle \tilde{\eta}\rangle^{d+1}  \langle \tilde{\xi}\rangle^{d+1} \langle  \tilde{p}_{0}\rangle^{2d+2} \langle \tilde{q}_{-3} -p_{n-1} \rangle^{d+1} \langle \tilde{p}_{-3} -p_{n-1} \rangle^{d+1} \prod_{m=1}^{n-1} \langle p_m-p_{m-1}\rangle^{3(d+1)}}
	\\
	&\leq C \frac{\langle \tilde{\nu} \rangle \langle \nu \rangle \langle \tilde{\eta}\rangle^{d+1}  \langle \tilde{\xi}\rangle^{d+1}\langle \tilde{p}_{0}\rangle^{2d+2} \langle \tilde{q}_{-3} -p_{n-1} \rangle^{d+1} \langle \tilde{p}_{-3} -p_{n-1} \rangle^{d+1} \prod_{m=1}^{n-1}  \langle p_m-p_{m-1}\rangle^{3(d+1)}  }{\langle \tilde{\nu} \rangle \langle \nu \rangle \langle \tilde{\eta}\rangle^{d+1}  \langle \tilde{\xi}\rangle^{d+1} \langle \tilde{q}_{-3} \rangle^{d+1} \langle \tilde{p}_{-3}\rangle^{d+1} \prod_{m=1}^{n-1}  \langle p_m-p_{m-1}\rangle^{d+1}}.
	\end{aligned}
\end{equation*}
To estimate this we note that
  \begin{equation}\label{EQ:expansion_aver_bound_3.recol_22_3.2_16}
	\begin{aligned}
	\MoveEqLeft  \sup_{\boldsymbol{p},\tilde{\eta},\tilde{\xi}}   \int  \langle \tilde{\eta}\rangle^{d+1}  \langle \tilde{\xi}\rangle^{d+1}  \frac{ \Big| \partial_{\boldsymbol{\eta}}^{\boldsymbol{\epsilon}} \partial_{\boldsymbol{\xi}}^{\boldsymbol{\gamma}}  \partial_{\boldsymbol{\tilde{p}}}^{\boldsymbol{\delta}} \partial_{\boldsymbol{\tilde{q}}}^{\boldsymbol{\varepsilon}} \mathcal{G}(\boldsymbol{p}, \boldsymbol{\tilde{p}},\boldsymbol{\tilde{p}},\boldsymbol{\eta}, \boldsymbol{\xi},\tilde{\eta},\tilde{\xi},\sigma )  \Big|}{
	( \langle \tilde{q}_{-3} -p_{n-1} \rangle \langle \tilde{p}_{-3} -p_{n-1} \rangle)^{-d-1}} \prod_{m=1}^{n-1} \langle p_m-p_{m-1}\rangle^{3(d+1)}
	\, d\boldsymbol{\tilde{p}} d\boldsymbol{\tilde{q}} d\boldsymbol{\eta}d\boldsymbol{\xi} dp_{n}
	 \\
	 &\leq  C^{|\alpha|+|\tilde{\alpha}|}  \norm{\hat{V}}_{1,\infty,3d+3}^{|\alpha|+|\tilde{\alpha}|+2}. 
	\end{aligned}
\end{equation}
Moreover, we have that
\begin{equation}\label{EQ:expansion_aver_bound_3.recol_22_3.2_17}
	\int_{\R_{+}^{2k-n-7}} f(\boldsymbol{s},\boldsymbol{\tilde{s}})  d\boldsymbol{s}_{1,k-4}   d\boldsymbol{\tilde{s}}_{1,(k-4-n)_{+}+1}  \leq \frac{t^{2k-n-6}}{\hbar^{k-4+(k-4-n)_{+}+1}(k-4)!(k-3-n)!}.
\end{equation}
The special power of $\hbar$ is due to the form of $f$ since we always have dependence of at least one $\tilde{s}$. Moreover, we have multiplied with an additional $t$ to obtain a power not containing the term $(k-4-n)_{+}$.
By applying Lemma~\ref{LE:int_posistion} and Lemma~\ref{LE:resolvent_int_est} we get that
\begin{equation}\label{EQ:expansion_aver_bound_3.recol_22_3.2_18}
	\begin{aligned}
 	\MoveEqLeft  \int  |\hat{\varphi}(\tfrac{p_0}{\hbar}) |^2
	 \frac{ \langle \tilde{\eta}\rangle^{-d-1}}{ |\frac{1}{2} (p_{n}+\tilde{\xi})^2-\tilde{\nu}-i\zeta|}\frac{ \langle \tilde{\xi}\rangle^{-d-1}}{|\frac{1}{2} (p_{n}+\tilde{\eta)}^2+\nu +i\zeta|}    \frac{ \langle p_0\rangle^{2(d+1)}}{\langle \tilde{\nu}\rangle |\frac{1}{2} p_0^2-\tilde{\nu}-i\zeta|}\frac{1}{\langle \nu\rangle |\frac{1}{2} p_0^2+\nu +i\zeta|}
	  \\
	  &\times  
	  \frac{ \langle\nu\rangle \langle \tilde{p}_{-3}\rangle^{-d-1} } {|\frac{1}{2} (\tilde{p}_{-3}+\tilde{\eta})^2+\nu +i\zeta|}\frac{\langle\tilde{\nu}\rangle \langle \tilde{q}_{-3}\rangle^{-d-1} }{|\frac{1}{2} (\tilde{q}_{-3}+\tilde{\xi})^2-\tilde{\nu} - i\zeta|}  \prod_{m=1}^{n-1} \frac{\langle p_m-p_{m-1} \rangle^{-d-1}}{ |\frac{1}{2} (p_{m}+\pi_{\sigma_m}(\tilde{\xi}))^2-\tilde{\nu}-i\zeta|} 
	  \\
	  &\times
	    \prod_{m=-2}^{\beta-1}   \frac{  1}{ \langle \tilde{z}_m \rangle^{d+1} |\tilde{z}_m-	(-1)^m\tilde{x}| \langle z_m \rangle^{d+1}  |z_m- (-1)^mx|} 
	\, dx d\tilde{x}d\tilde{\eta}d\tilde{\xi}d\boldsymbol{z} d\boldsymbol{\tilde{z}}  d\boldsymbol{p} d\tilde{p}_{-3} d\tilde{q}_{-3}
	\\
	\leq{}&   (2\pi\hbar)^dC |\log(\tfrac{\hbar}{t})|^{n+5}  \norm{\varphi}_{\mathcal{H}^{d+1}_\hbar(\R^d)}^2 \int \frac{1}{\langle x \rangle^{\frac{\beta+2}{2}}\langle \tilde{x} \rangle^{\frac{\beta+2}{2}}}  \, dx d\tilde{x}
	\\
	\leq{}& (2\pi\hbar)^d C |\log(\tfrac{\hbar}{t})|^{n+5}  \norm{\varphi}_{\mathcal{H}^{d+1}_\hbar(\R^d)}^2, 
	\end{aligned}
\end{equation}
where we have used that $\beta>6$ and we have evaluated the integrals in $\tilde{\eta}$ and $\tilde{\xi}$ as the last ones. Combining the estimates in \eqref{EQ:expansion_aver_bound_3.recol_22_3.2_15}, \eqref{EQ:expansion_aver_bound_3.recol_22_3.2_16}, \eqref{EQ:expansion_aver_bound_3.recol_22_3.2_17} and \eqref{EQ:expansion_aver_bound_3.recol_22_3.2_18} we get that
\begin{equation}\label{EQ:expansion_aver_bound_3.recol_22_3.2_19}
	\begin{aligned}
 	\MoveEqLeft  
	\big| \Aver{ \mathcal{T}(\beta,k,\sigma,\alpha,\tilde{\alpha};\hbar,l)}\big|
	 \leq \hbar^3    \frac{C^{|\alpha|+|\tilde{\alpha}|} \rho^2 (\rho t)^{2k-n-6} }{ (k-4)!(k-3-n)!}  
	  |\log(\tfrac{\hbar}{t})|^{n+5}  \norm{\varphi}_{\mathcal{H}^{d+1}_\hbar(\R^d)}^2 \norm{\hat{V}}_{1,\infty,3d+3}^{|\alpha|+|\tilde{\alpha}|+2}.
	\end{aligned}
\end{equation}
Notice that for this case we have that $n\leq k-3$. We now turn to the case $k-2\in\sigma$ and $k-3\notin\sigma$ as in the proof of Lemma~\ref{expansion_aver_bound_Mainterm_rec_trun22_special_case_1} we here preform the change of variables $x_{k-3}\mapsto x_{k-3}-x_{k-2}$ and   $\tilde{x}_{k-3}\mapsto \tilde{x}_{k-3}-\tilde{x}_{k-2}$ after taking the average. From this we get the same expression as in \eqref{EQ:expansion_aver_bound_3.recol_22_3.2_11}, but where we in the phase function have inner products between $p_{n}$ and $x$, $\tilde{x}$ we now have inner products between $p_{n-1}$ and $x$, $\tilde{x}$. Then by an analogous argument to that we have done above we obtain that
\begin{equation}\label{EQ:expansion_aver_bound_3.recol_22_3.2_20}
	\begin{aligned}
 	\MoveEqLeft  
	\big| \Aver{ \mathcal{T}(\beta,k,\sigma,\alpha,\tilde{\alpha};\hbar,l)}\big|
	 \leq \hbar^3    \frac{C^{|\alpha|+|\tilde{\alpha}|} \rho^2 (\rho t)^{2k-n-6} }{ (k-4)!(k-3-n)!}  
	  |\log(\tfrac{\hbar}{t})|^{n+5}  \norm{\varphi}_{\mathcal{H}^{d+1}_\hbar(\R^d)}^2 \norm{\hat{V}}_{1,\infty,3d+3}^{|\alpha|+|\tilde{\alpha}|+2}.
	\end{aligned}
\end{equation}
What remains is the case $k-2,k-3\in \sigma$. For this case we also preform the change of variables $x_{k-2}\mapsto x_{k-2}-x_{k-3}$ and   $\tilde{x}_{k-2}\mapsto \tilde{x}_{k-2}-\tilde{x}_{k-3}$ after taking the average. However this time when we argue as in the proofs of  Lemma~\ref{LE:Exp_ran_phases} and Lemma~\ref{expansion_aver_bound_Mainterm} we obtain that
\begin{equation*}
	\begin{aligned}
 	\MoveEqLeft  
	\Aver{ \mathcal{T}(\beta,k,\sigma,\alpha,\tilde{\alpha};\hbar)}
	=  \frac{(2\pi)^{(2k-2-n)d}\hbar(\rho\hbar)^{2k-4-n}}{(2\pi\hbar)^d} \int_{\R_{+}^{|\alpha|+|\tilde{\alpha}|+2}} f(\boldsymbol{s},\boldsymbol{\tilde{s}}) \int  |\hat{\varphi}(\tfrac{p_0}{\hbar})|^2   \delta(\tfrac{t}{\hbar}- \boldsymbol{\tilde{s}}_{0,k+\beta+1}^{+}- \boldsymbol{\tilde{t}}_{1,a_{k+\beta}}^{+}) 
	 \\
	 \times&    \delta(\tfrac{t}{\hbar}- \boldsymbol{s}_{0,k+\beta}^{+}-\tilde{s}_{k+\beta+1}- \boldsymbol{t}_{1,a_{k+\beta}}^{+})   
	 e^{ -i   \langle  x, \tilde{q}_{\beta-1}-\tilde{p}_{\beta-1} +\sum_{m=0}^{\beta/2} \tilde{p}_{2m-2}-\tilde{p}_{2m-3}- \tilde{q}_{2m-2}+\tilde{q}_{2m-3} \rangle}
	 \prod_{i=1}^{a_{k+\beta}} e^{i  t_{i} \frac{1}{2} \eta_{i}^2}     e^{i  s_{k-3} \frac{1}{2} \tilde{p}_{-3}^2}
	 \\
	 \times&
	    e^{-i  \tilde{s}_{k-3} \frac{1}{2} \tilde{q}_{-3}^2}  e^{i  s_{k+\beta} \frac{1}{2} (p_{n-1}+\tilde{\eta})^2}   e^{ - i \tilde{s}_{k+\beta} \frac{1}{2} (p_{n-1}+\tilde{\xi})^2}   \prod_{i=1}^{\tilde{a}_{k+\beta}}  e^{-i  \tilde{t}_{i} \frac{1}{2} \xi_{i}^2} 
	 \prod_{m=-2}^{\beta-1}e^{i  s_{m+k} \frac{1}{2} (\tilde{p}_{m}+ \tilde{\eta})^2} e^{-i  \tilde{s}_{m+k} \frac{1}{2} (\tilde{q}_{m}+\tilde{\xi})^2}   
	 \\
	 \times &
	 \prod_{i=0}^{n-2} \prod_{m=\sigma_i}^{\sigma_{i+1}-1} e^{i  s_{m} \frac{1}{2} (p_{i}+\pi_m(\tilde{\eta}))^2}  e^{-i\tilde{s}_{m} \frac{1}{2} (p_{i}+\pi_m(\tilde{\xi}))^2}
	  \mathcal{G}(\boldsymbol{p}, \boldsymbol{\tilde{p}},\boldsymbol{\tilde{q}},\boldsymbol{\eta}, \boldsymbol{\xi},  \tilde{\eta},\tilde{\xi},\sigma )  
	  \, d\boldsymbol{\eta} d\boldsymbol{\xi} dx d\tilde{x}d\tilde{\eta}d\tilde{\xi} d\boldsymbol{p} d\boldsymbol{\tilde{p}} d\boldsymbol{\tilde{q}}   d\boldsymbol{t}  d\boldsymbol{\tilde{t}}  d\boldsymbol{\tilde{s}}d\boldsymbol{s},
	\end{aligned}
\end{equation*}
where the function $f$ is given by
\begin{equation*}
	f(\boldsymbol{s},\boldsymbol{\tilde{s}}) = \boldsymbol{1}_{[0,\hbar^{-1}t]}\big( \sum_{i=1,i\neq l}^{k-4} s_i  \big) \boldsymbol{1}_{[0,\hbar^{-1}t]}\big( \sum_{i=1,i \notin \sigma}^{k-4} \tilde{s}_i + \tilde{s}_{k+1+\beta}  \big),
\end{equation*}
and the function $\mathcal{G}$ is given by
\begin{equation*}
	\begin{aligned}
	& \mathcal{G}(\boldsymbol{p}, \boldsymbol{\tilde{p}},\boldsymbol{\tilde{q}},\boldsymbol{\eta}, \boldsymbol{\xi},  \tilde{\eta},\tilde{\xi},\sigma )  
	= \overline{\hat{V}(\tilde{\eta})}\hat{V}(\tilde{\xi}) 
	  \hat{\mathcal{V}}_{\alpha_{\beta+k}}(p_{n-1}+\tilde{\eta}, \tilde{p}_{\beta-1}+\tilde{\eta},\boldsymbol{\eta} )  
	 \overline{\hat{\mathcal{V}}_{\tilde{\alpha}_{\beta+k}}(p_{n-1}+\tilde{\xi}, \tilde{q}_{\beta-1}+\tilde{\xi},\boldsymbol{\xi} )} 
	\\
	&\times \Big\{\prod_{m=-2}^{\beta-1}  
	 \hat{\mathcal{V}}_{\alpha_{m+k}}(\tilde{p}_m+\tilde{\eta},\tilde{p}_{m-1}+\tilde{\eta},\boldsymbol{\eta} ) \overline{ \hat{\mathcal{V}}_{\tilde{\alpha}_{m+k}}(\tilde{q}_m+\tilde{\xi},\tilde{q}_{m-1}+\tilde{\xi},\boldsymbol{\xi} )}  \Big\}
	  \hat{\mathcal{V}}_{\alpha_{k-3}}(\tilde{p}_{-3}+\tilde{\eta},p_{n-2}+\tilde{\eta},\boldsymbol{\eta} )   
	  \\
	  &\times \overline{\hat{\mathcal{V}}_{\tilde{\alpha}_{k-3}}(\tilde{q}_{-3}+\tilde{\xi},p_{n-2}+\tilde{\xi},\boldsymbol{\eta} )  }
	 \prod_{i=1}^{n-2} \hat{\mathcal{V}}_{\alpha_{\sigma_i}}(p_i+\pi_{\sigma_i}(\tilde{\eta}),p_{i-1}\pi_{\sigma_i-1}(\tilde{\eta}),\boldsymbol{\eta} )
	 \overline{ \hat{\mathcal{V}}_{\tilde{\alpha}_{\sigma_i}}(p_i+\pi_{\sigma_i}(\tilde{\xi}),p_{i-1}+\pi_{\sigma_i-1}(\tilde{\xi}),\boldsymbol{\xi} )}   
	 \\
	 &\times \prod_{m=\sigma_i+1}^{\sigma_{i+1}-1}  
	   \hat{\mathcal{V}}_{\alpha_m}(p_i+\pi_m(\tilde{\eta}),p_{i}+\pi_m(\tilde{\eta}),\boldsymbol{\eta} )\overline{ \hat{\mathcal{V}}_{\tilde{\alpha}_m}(p_i+\pi_m(\tilde{\xi}),p_{i}+\pi_m(\tilde{\xi}),\boldsymbol{\xi} )}
	   \prod_{m=1}^{\sigma_1-1}  
	  \hat{\mathcal{V}}_{\alpha_m}(p_0,p_{0},\boldsymbol{\eta} )  \overline{  \hat{\mathcal{V}}_{\tilde{\alpha}_m}(p_0,p_{0},\boldsymbol{\xi} )}.  
	\end{aligned}
\end{equation*}
We can again argue as above and obtain that
\begin{equation}\label{EQ:expansion_aver_bound_3.recol_22_3.2_21}
	\begin{aligned}
 	\MoveEqLeft  
	\big| \Aver{ \mathcal{T}(\beta,k,\sigma,\alpha,\tilde{\alpha};\hbar,l)}\big|
	 \leq  \frac{C^{|\alpha|+|\tilde{\alpha}|}\hbar (\rho\hbar)^{2k-4-n} }{(2\pi\hbar)^d}   \int_{\R_{+}^{2k-n-8}} f(\boldsymbol{s},\boldsymbol{\tilde{s}})  d\boldsymbol{s}   d\boldsymbol{\tilde{s}}    \sup_{\boldsymbol{\epsilon},\boldsymbol{\gamma},\boldsymbol{\delta},\boldsymbol{\varepsilon}}  \int  |\hat{\varphi}(\tfrac{p_0}{\hbar}) |^2
	  \\
	  &\times  \frac{ \langle \tilde{\eta}+p_{n-1}\rangle^{-d-1}}{ |\frac{1}{2} (p_{n-1}+\tilde{\xi})^2-\tilde{\nu}-i\zeta|} 
	  \frac{   \langle \tilde{\eta}+p_{l^*}\rangle^{-d-1} \langle \tilde{\xi}\rangle^{-d-1}}{|\frac{1}{2} (p_{n-1}+\tilde{\eta})^2+\nu +i\zeta|}    
	  \frac{ \langle p_{n-1} -\tilde{p}_{\beta-1} \rangle^{-d-1}}{|\frac{1}{2} (p_{l^*}+\tilde{\eta})^2+\nu +i\zeta|}
	  \frac{ \langle p_0\rangle^{4(d+1)}}{\langle \tilde{\nu}\rangle |\frac{1}{2} p_0^2-\tilde{\nu}-i\zeta|}
	  \\
	  &\times  
	  \frac{1}{\langle \nu\rangle |\frac{1}{2} p_0^2+\nu +i\zeta|}
	  \frac{ \langle\nu\rangle \langle \tilde{p}_{-3}\rangle^{-d-1} } {|\frac{1}{2} (\tilde{p}_{-3}+\tilde{\eta})^2+\nu +i\zeta|}\frac{\langle\tilde{\nu}\rangle \langle \tilde{q}_{-3}\rangle^{-d-1} }{|\frac{1}{2} (\tilde{q}_{-3}+\tilde{\xi})^2-\tilde{\nu} - i\zeta|}  \prod_{m=1}^{n-1} \frac{\langle p_m-p_{m-1} \rangle^{-d-1}}{ |\frac{1}{2} (p_{m}+\pi_{\sigma_m}(\tilde{\xi}))^2-\tilde{\nu}-i\zeta|} 
	  \\
	  &\times
	   \Big| \partial_{\boldsymbol{\eta}}^{\boldsymbol{\epsilon}} \partial_{\boldsymbol{\xi}}^{\boldsymbol{\gamma}}  \partial_{\boldsymbol{\tilde{p}}}^{\boldsymbol{\delta}} \partial_{\boldsymbol{\tilde{q}}}^{\boldsymbol{\varepsilon}} \mathcal{G}(\boldsymbol{p}, \boldsymbol{\tilde{p}},\boldsymbol{\tilde{p}},\boldsymbol{\eta}, \boldsymbol{\xi},\tilde{\eta},\tilde{\xi},\sigma )  \Big|
	    \prod_{m=-2}^{\beta-1}   \frac{  \langle \tilde{z}_m \rangle^{-d-1} \langle z_m \rangle^{-d-1} }{  |\tilde{z}_m-(-1)^m\tilde{x}|   |z_m- (-1)^mx|}  \prod_{m=1}^{n-2} \langle p_m-p_{m-1}\rangle^{2d+2}
	 \\
	&\times 
	  \langle \tilde{\eta}\rangle^{2d+2} \langle \tilde{\xi}\rangle^{d+1}\langle \tilde{q}_{-3} -p_{n-2} \rangle^{d+1} \langle \tilde{p}_{-3} -p_{n-2} \rangle^{2d+2} \prod_{m=1}^{n-2} \langle p_m-p_{m-1}\rangle^{3(d+1)}
	  \\
	  &\times  \langle p_{n-1}-\tilde{p}_{\beta-1}\rangle^{d+1}   \prod_{m=-2}^{\beta-1} \langle \tilde{p}_m-\tilde{p}_{m-1}\rangle^{d+1}
	\, dx d\tilde{x}d\tilde{\eta}d\tilde{\xi}d\boldsymbol{z} d\boldsymbol{\tilde{z}} d\boldsymbol{\tilde{p}} d\boldsymbol{\tilde{q}} d\boldsymbol{p}d\boldsymbol{\eta}d\boldsymbol{\xi},
	\end{aligned}
\end{equation}
where $C$ is a constant only depending on the dimension and the index $l^{*}$ is the largest index such that $\sigma_{l^*}\leq l$ and we have used the notation as given in \eqref{EQ:expansion_aver_bound_3.recol_22_3.1_sup_notation}.  Compared to the above argument we have here also inserted a slightly different function but the principle is the same. 
To estimate this we note that
  \begin{equation}\label{EQ:expansion_aver_bound_3.recol_22_3.2_22}
	\begin{aligned}
	\MoveEqLeft  \sup_{\boldsymbol{p},\tilde{\eta},\tilde{\xi}}   \int  \Big| \partial_{\boldsymbol{\eta}}^{\boldsymbol{\epsilon}} \partial_{\boldsymbol{\xi}}^{\boldsymbol{\gamma}}  \partial_{\boldsymbol{\tilde{p}}}^{\boldsymbol{\delta}} \partial_{\boldsymbol{\tilde{q}}}^{\boldsymbol{\varepsilon}} \mathcal{G}(\boldsymbol{p}, \boldsymbol{\tilde{p}},\boldsymbol{\tilde{p}},\boldsymbol{\eta}, \boldsymbol{\xi},\tilde{\eta},\tilde{\xi},\sigma )  \Big|
	\langle \tilde{q}_{-3} -p_{n-2} \rangle^{d+1} \langle \tilde{p}_{-3} -p_{n-2} \rangle^{2d+2} \prod_{m=1 }^{n-2} \langle p_m-p_{m-1}\rangle^{3d+3}
	\\
	&\times \langle \tilde{\eta}\rangle^{2d+2}  \langle \tilde{\xi}\rangle^{2d+2}   \langle p_{n-1}-\tilde{p}_{\beta-1}\rangle^{d+1} \prod_{m=1}^{n-2} \langle p_m-p_{m-1}\rangle^{2d+2}  \prod_{m=-2}^{\beta-1} \langle \tilde{p}_m-\tilde{p}_{m-1}\rangle^{d+1} \, d\boldsymbol{\tilde{p}} d\boldsymbol{\tilde{q}} d\boldsymbol{\eta}d\boldsymbol{\xi}
	 \\
	 \leq{}&  C^{|\alpha|+|\tilde{\alpha}|}  \norm{\hat{V}}_{1,\infty,3d+3}^{|\alpha|+|\tilde{\alpha}|+2}. 
	\end{aligned}
\end{equation}
Note that we are not integrating in $\tilde{p}_{-3}$ and $\tilde{q}_{-3}$. Moreover we have that
\begin{equation}\label{EQ:expansion_aver_bound_3.recol_22_3.2_23}
	\int_{\R_{+}^{2k-n-8}} f(\boldsymbol{s},\boldsymbol{\tilde{s}})  d\boldsymbol{s}_{1,k-5}   d\boldsymbol{\tilde{s}}_{1,k-4-(n-2)_{+}+1}  \leq \frac{t^{2k-n-6}}{\hbar^{2k-8-(n-2)_{+}+1}(k-5)!(k-3-(n-2)_{+})!}.
\end{equation}
The special power of $\hbar$ is due to the form of $f$ since we always have dependence of at least one $\tilde{s}$. Moreover, we have multiplied with two additional $t$s to obtain a power not containing the term $(n-2)_{+}$.
By applying Lemma~\ref{LE:int_posistion}, Lemma~\ref{LE:resolvent_int_est} and Lemma~\ref{LE:est_res_combined} we get that
\begin{equation}\label{EQ:expansion_aver_bound_3.recol_22_3.2_24}
	\begin{aligned}
 	\MoveEqLeft  \int  |\hat{\varphi}(\tfrac{p_0}{\hbar}) |^2
	 \frac{ \langle \tilde{\eta}+p_{n-1}\rangle^{-d-1}}{ |\frac{1}{2} (p_{n-1}+\tilde{\xi})^2-\tilde{\nu}-i\zeta|} 
	  \frac{   \langle \tilde{\eta}+p_{l^*}\rangle^{-d-1} \langle \tilde{\xi}\rangle^{-d-1}}{|\frac{1}{2} (p_{n-1}+\tilde{\eta})^2+\nu +i\zeta|}    
	  \frac{ \langle p_{n-1} -\tilde{p}_{\beta-1} \rangle^{-d-1}}{|\frac{1}{2} (p_{l^*}+\tilde{\eta})^2+\nu +i\zeta|}
	  \frac{ \langle p_0\rangle^{4(d+1)}}{\langle \tilde{\nu}\rangle |\frac{1}{2} p_0^2-\tilde{\nu}-i\zeta|}
	  \\
	  &\times  
	  \frac{1}{\langle \nu\rangle |\frac{1}{2} p_0^2+\nu +i\zeta|}
	  \frac{ \langle\nu\rangle \langle \tilde{p}_{-3}\rangle^{-d-1} } {|\frac{1}{2} (\tilde{p}_{-3}+\tilde{\eta})^2+\nu +i\zeta|}\frac{\langle\tilde{\nu}\rangle \langle \tilde{q}_{-3}\rangle^{-d-1} }{|\frac{1}{2} (\tilde{q}_{-3}+\tilde{\xi})^2-\tilde{\nu} - i\zeta|}  \prod_{m=1}^{n-2} \frac{\langle p_m-p_{m-1} \rangle^{-d-1}}{ |\frac{1}{2} (p_{m}+\pi_{\sigma_m}(\tilde{\xi}))^2-\tilde{\nu}-i\zeta|} 
	\\
	&\times     
	    \prod_{m=-2}^{\beta-1}   \frac{  \langle \tilde{z}_m \rangle^{-d-1}\langle z_m \rangle^{-d-1}}{  |\tilde{z}_m-	(-1)^m\tilde{x}|   |z_m- (-1)^mx|} 
	\, dx d\tilde{x}d\tilde{\eta}d\tilde{\xi}d\boldsymbol{z} d\boldsymbol{\tilde{z}}  d\boldsymbol{p} d\tilde{p}_{-3} d\tilde{q}_{-3}
	\\
	\leq{}&   (2\pi\hbar)^dC |\log(\tfrac{\hbar}{t})|^{n+5}  \norm{\varphi}_{\mathcal{H}^{2d+2}_\hbar(\R^d)}^2 \int \frac{1}{\langle x \rangle^{\frac{\beta+2}{2}}\langle \tilde{x} \rangle^{\frac{\beta+2}{2}}}  \, dx d\tilde{x}
	\\
	\leq{}& (2\pi\hbar)^d C |\log(\tfrac{\hbar}{t})|^{n+5}  \norm{\varphi}_{\mathcal{H}^{2d+2}_\hbar(\R^d)}^2, 
	\end{aligned}
\end{equation}
where we have evaluated the integrals in the following order first $p_{-3}$, then $\tilde{\eta}$ using Lemma~\ref{LE:est_res_combined}. This produce the factor $|p_{n-1}-p_{l^*}|$ which we integrate by evaluating the integral in $p_{n-1}$. Next the integrals in $\tilde{q}_3$ and the remaining $p$ excluding $p_0$ is evaluated using Lemma~\ref{LE:resolvent_int_est}. Then the integral in $\tilde{\xi}$, $\nu$ and $\tilde{\nu}$ is evaluated using Lemma~\ref{LE:resolvent_int_est} and finaly the integral in $p_0$. Combining the estimates in \eqref{EQ:expansion_aver_bound_3.recol_22_3.2_21}, \eqref{EQ:expansion_aver_bound_3.recol_22_3.2_22}, \eqref{EQ:expansion_aver_bound_3.recol_22_3.2_23} and \eqref{EQ:expansion_aver_bound_3.recol_22_3.2_24} we obtain the estimate
\begin{equation}\label{EQ:expansion_aver_bound_3.recol_22_3.2_25}
	\begin{aligned}
 	\MoveEqLeft  
	\big| \Aver{ \mathcal{T}(\beta,k,\sigma,\alpha,\tilde{\alpha};\hbar,l)}\big|
	 \leq \hbar^3    \frac{C^{|\alpha|+|\tilde{\alpha}|} \rho^2 (\rho t)^{2k-n-6} }{(k-5)!(k-3-(n-2)_{+})!}  
	  |\log(\tfrac{\hbar}{t})|^{n+5}  \norm{\varphi}_{\mathcal{H}^{2d+2}_\hbar(\R^d)}^2 \norm{\hat{V}}_{1,\infty,3d+3}^{|\alpha|+|\tilde{\alpha}|+2}.
	\end{aligned}
\end{equation}
Combining the estimates in \eqref{EQ:expansion_aver_bound_3.recol_22_3.2_0}, \eqref{EQ:expansion_aver_bound_3.recol_22_3.2_1}, \eqref{EQ:expansion_aver_bound_3.recol_22_3.2_19}, \eqref{EQ:expansion_aver_bound_3.recol_22_3.2_20} and \eqref{EQ:expansion_aver_bound_3.recol_22_3.2_25} we obtain the bound
\begin{equation}\label{EQ:expansion_aver_bound_3.recol_22_3.2_26}
	\begin{aligned}
	\MoveEqLeft \big\lVert \sum_{k=5}^{k_0}  \mathcal{I}^{\mathrm{dob}}_2(\beta,k;\hbar) \varphi \big\rVert_{L^2(\R^d)}^2 
	 \leq \hbar \norm{\varphi}_{\mathcal{H}^{2d+2}_\hbar(\R^d)}^2 \sum_{\alpha,\tilde{\alpha}\in\N^{k+\beta}} (C_d\lambda\norm{\hat{V}}_{1,\infty,3d+3})^{|\alpha|+|\tilde{\alpha}|+2} k_0^4  
	 \\
	 &\times  \sum_{k_1=5}^{k_0} \sum_{k_2=0}^{k_0+2-k_1} \sum_{l=1}^{k-4} \Big[ Ck^3  \rho (\rho t)^{2k-n-5}
	  |\log(\tfrac{\hbar}{t})|^{k+3} + \sum_{n=0}^{k-3} \binom{k-2}{n}^2 n! \frac{ \rho (\rho t)^{2k-n-5} }{ (k-4)!(k-3-n)!}  
	  |\log(\tfrac{\hbar}{t})|^{n+5}   \Big]
	  \\
	  \leq{}&\tilde{\lambda}^{\beta} \norm{\varphi}_{\mathcal{H}^{2d+2}_\hbar(\R^d)}^2 \hbar C^{k_0} k_0^{10}  |\log(\tfrac{\hbar}{t})|^{k_0+3},
	\end{aligned}
\end{equation}
where $\tilde{\lambda}$ is given by \eqref{EQ_tilde_lambda}. This concludes the proof for the case $\beta$ even for $\beta$ odd we get a slightly different form of \eqref{EQ:expansion_aver_bound_3.recol_22_3.2_1} similar to the one stated in the proof of Lemma~\ref{expansion_aver_bound_Mainterm_rec_trun22_special_case_1} but otherwise the proofs are analogous.
This concludes the proof as the the bound obtained for $\beta>6$ is ``better'' than the bound stated in the Lemma.  
\end{proof}
 \begin{lemma}\label{expansion_aver_bound_Mainterm_rec_trun22_special_case_3}
Assume we are in the setting of Lemma~\ref{expansion_aver_bound_Mainterm_rec_trun22_5}. Then for any $\beta\in\N$ we have that
\begin{equation*}
	\begin{aligned}
	 \mathbb{E}\Big[\big\lVert  \sum_{k=4}^{k_0}   \mathcal{I}^{\mathrm{dob}}_3(\beta,k;\hbar)  \varphi\big\rVert_{L^2(\R^d)}^2 \Big]
	\leq  \tau_0^{-3} \tilde{\lambda}^\beta  \norm{\varphi}_{\mathcal{H}^{5d+5}_\hbar(\R^d)}^2 k_0^{5} C^{k_0} |\log(\tfrac{\hbar}{t})|^{k_0+20}.
	\end{aligned}
\end{equation*}
where $\tilde{\lambda}<1$.  
\end{lemma}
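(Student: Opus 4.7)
The proof will follow the same overall scheme as that of Lemma~\ref{expansion_aver_bound_Mainterm_rec_trun22_special_case_2}, so the plan is to indicate the differences, which come from the fact that here the potential that closes the Duhamel step, $V^{s_{k+\beta+1}}_{\hbar,x_{k-1}}$, sits at a \emph{new} scatterer $x_{k-1}$ rather than at a previously seen scatterer $x_l$ with $l\leq k-4$. First I would apply the unitarity bound together with Cauchy--Schwarz (as in \eqref{EQ:expansion_aver_bound_3.recol_22_3.2_0}) to pull the $s_{k+\beta+1}$ integral outside the $L^2$-norm, and then invoke Lemma~\ref{LE:crossing_dom_ladder} to reduce the resulting average to a sum over pairings indexed by $n\in\{0,\ldots,k-1\}$ and $\sigma\in\mathcal{A}(k-1,n)$ (note the range is $k-1$, not $k-2$, because of the extra scatterer $x_{k-1}$).

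Next I would write the momentum-space form of $\mathcal{I}^{\mathrm{dob}}_3(\beta,k;\hbar)\varphi$ and perform the same sequence of changes of variables as in the proof of Lemma~\ref{expansion_aver_bound_Mainterm_rec_trun22_special_case_2}: the shifts isolating $\tilde p_\beta$, the shifts absorbing the double-Born momenta, and, crucially, an extra shift in $p_{k-1}$ to make $x_{k-1}$ appear only through a single phase $e^{-i\hbar^{-1/d}\langle x_{k-1},p_{k-1}-p_{k-2}\rangle}$. Then one case-splits according to whether $k-3,k-2,k-1$ lie in $\sigma$, distinguishing $\beta\leq 6$ from $\beta>6$ exactly as before. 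For $\beta>6$ the Fourier--resolvent identity Lemma~\ref{LE:Fourier_trans_resolvent} will be applied to the momenta $\tilde p_m,\tilde q_m$ with $m\in\{-2,\ldots,\beta-1\}$ so that the integrations in $x,\tilde x$ at the end produce an integrable tail $\langle x\rangle^{-(\beta+2)/2}\langle\tilde x\rangle^{-(\beta+2)/2}$; this is what yields the geometrically small factor $\tilde\lambda^\beta$ once the Born-series factors are collected.

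The main new point, and the source of the $\tau_0^{-3}$ gain, is the extra scatterer $x_{k-1}$. After applying Lemma~\ref{LE:Exp_ran_phases}, the only way $x_{k-1}$ can survive the expectation (via Lemma~\ref{LE:crossing_dom_ladder}) is to be paired with its counterpart $\tilde x_{k-1}$; this produces one additional $\delta$-function $\delta(x_{k-1}-\tilde x_{k-1})$, so the associated momentum difference $p_{k-1}-p_{k-2}$ must equal $q_{k-1}-q_{k-2}$ (up to the pair-shift). Since now there is no gain from integrating a momentum difference against $|x_{k-1}-\tilde x_{k-1}|^{-1}$, the integration that in the proof of Lemma~\ref{expansion_aver_bound_Mainterm_rec_trun22_special_case_2} produced the factor $\hbar$ instead produces a factor coming from the volume $(t/\tau_0)^3$ of the free-propagation region during time $s_{k+\beta+1}\in[0,t/\tau_0]$: explicitly, after writing the $s_{k+\beta+1}$ integral and the new resolvent in $p_{k-1}$ in the ``$\nu$-representation'' one may estimate the corresponding $p_{k-1}$-integral by Lemma~\ref{LE:resolvent_int_est} and the $s_{k+\beta+1}$-integral by Lemma~\ref{LE:gamma_reg_time_int}, obtaining a factor $\tau_0^{-3}$ from the two extra characteristic functions $\boldsymbol{1}_{[0,t/\tau_0]}$ constraining $s_{k+\beta+1}$ and $\tilde s_{k+\beta+1}$ together with the $1/\max(1,|\cdot|)^{3/2}$ weight coming from stationary phase in dimension $d=3$.

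Finally I would gather these estimates and apply Lemma~\ref{LE:est_res_combined} and Lemma~\ref{LE:resolvent_int_est} exactly as in equations \eqref{EQ:expansion_aver_bound_3.recol_22_3.2_22}--\eqref{EQ:expansion_aver_bound_3.recol_22_3.2_25} of the proof of Lemma~\ref{expansion_aver_bound_Mainterm_rec_trun22_special_case_2} to dispose of the remaining momentum and spectral integrations; this will yield $|\log(\hbar/t)|^{n+20}$ at the worst case and the combinatorial factor $k_0^5 C^{k_0}$ after summing over $k,n,\sigma$ and the Born orders $\alpha,\tilde\alpha$. The main obstacle I expect is bookkeeping: correctly extracting a clean $\tau_0^{-3}$ without spurious $\tau_0$-factors elsewhere, since the time constraints intertwine the new variable $s_{k+\beta+1}$ with both the double-Born times $s_{k+m},\tilde s_{k+m}$ and the internal collision times, so the characteristic-function decomposition must be performed with some care before the Fourier/resolvent machinery is applied.
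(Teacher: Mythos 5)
Your skeleton — Cauchy--Schwarz/Jensen to pull out the $s_{k+\beta+1}$ integral, Lemma~\ref{LE:crossing_dom_ladder} to reduce to pairing configurations, the same changes of variables as in the previous lemma, the split $\beta\leq 6$ versus $\beta>6$ with Lemma~\ref{LE:Fourier_trans_resolvent} handling the double-Born string, and the coupling-constant smallness giving $\tilde\lambda^\beta$ — matches the paper's proof. But the one piece of genuinely new content in this lemma, the factor $\tau_0^{-3}$, is where your argument has a gap. First, your claim that $x_{k-1}$ ``can only survive the expectation'' by being paired with $\tilde x_{k-1}$ is false: in Lemma~\ref{LE:crossing_dom_ladder} unpaired points are simply integrated out by Campbell's theorem, producing $\rho\hbar(2\pi)^d$ factors, so both the unpaired case ($\sigma_n<k-1$) and the paired case ($\sigma_n=k-1$) occur and must be estimated; it is in fact the \emph{paired} case that is dangerous, because there the structure is purely ladder-like and no $\hbar$-gain is available.

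Second, your accounting of the three powers of $\tau_0^{-1}$ does not add up. Two characteristic functions confining $s_{k+\beta+1}$ and $\tilde s_{k+\beta+1}$ to $[0,t/\tau_0]$ can give at most $(t/\tau_0)^2$, and the stationary-phase weight $\max(1,|\cdot|)^{-3/2}$ only yields $\tau_0$-independent constants, never an extra $\tau_0^{-1}$; moreover you invoke the Cauchy--Schwarz step of the previous lemma, \eqref{EQ:expansion_aver_bound_3.recol_22_3.2_0}, whose Jensen prefactor is $t$, not $t/\tau_0$. In the paper the count is: (i) one $\tau_0^{-1}$ from the Jensen prefactor, because here \emph{all} time variables of $\mathcal{I}^{\mathrm{dob}}_3$, including $s_{k+\beta+1}$, are confined to $[0,t_0]$ with $t_0=t/\tau_0$, cf.\ \eqref{EQ:expansion_aver_bound_3.recol_22_3.3_0}; and (ii) a further $\tau_0^{-2}$ obtained by verifying that in every configuration — in particular in the worst case $\sigma_n=k-1$, where the bound is proportional to $(\rho t_0)^{2k-n-3}$ — one has $2k-n-3\geq 2$, i.e.\ at least two time integrals (the paper uses those in $s_{k+\beta}$ and $\tilde s_{k+\beta+1}$) remain restricted to $[0,t_0]$. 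Without this explicit exponent check, and with the unpaired configurations left untreated, the stated bound $\tau_0^{-3}\tilde\lambda^{\beta}\,\|\varphi\|^2_{\mathcal{H}^{5d+5}_\hbar}\,k_0^{5}C^{k_0}|\log(\hbar/t)|^{k_0+20}$ is not reached.
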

\begin{proof}
The proof is very similar to the proof of Lemma~\ref{expansion_aver_bound_Mainterm_rec_trun22_special_case_1}. Firstly we note that
\begin{equation}\label{EQ:expansion_aver_bound_3.recol_22_3.3_0}
\big\lVert  \sum_{k=4}^{k_0} \mathcal{I}^{\mathrm{dob}}_3(\beta,k;\hbar) \varphi \big\rVert_{L^2(\R^d)}^2 \leq \frac{k_0^2 t \lambda^2}{\tau_0 \hbar^2} \sum_{k=5}^{k_0}   \int_{0}^{t_0}  \big\lVert \tilde{\mathcal{I}}^{\mathrm{dob}}_3(\beta,k,s_{k+\beta+1};\hbar) \varphi \big\rVert_{L^2(\R^d)}^2 \, ds_{k+\beta+1},
\end{equation}
where $t_0=\frac{t}{\tau_0}$ and
\begin{equation*}
	\begin{aligned}	
	 \tilde{\mathcal{I}}^{\mathrm{dob}}_3(\beta,k,s_{k+\beta+1};\hbar)
	= {} &   \sum_{\boldsymbol{x}\in\mathcal{X}_{\neq}^{k-1}}  \sum_{\alpha\in\N^{k}}   \int_{[0,t_0]^{\beta+k}} \boldsymbol{1}_{[s_{k+\beta+1},t]}(s_{k+\beta})
	 V^{s_{k+\beta+1}}_{\hbar, x_{k-1}}  
	  \tilde{\Theta}_{\beta}^{\mathrm{dob}}(\boldsymbol{s}_{\beta+k,k};V,\hbar)
	  \\
	  &\times  \prod_{m=1}^{k}\Theta_{\alpha_m}(s_{m-1},{s}_{m},x_{\iota(m)};V,\hbar)   U_{\hbar,0}(-t)   \, d\boldsymbol{s}_{k,1}.
	 \end{aligned}
\end{equation*}
Applying Lemma~\ref{LE:crossing_dom_ladder} we obtain that 
\begin{equation}\label{EQ:expansion_aver_bound_3.recol_22_3.3_1}
	\begin{aligned}
	\MoveEqLeft \mathbb{E}\Big[ \int_{0}^{t_0}  \big\lVert \tilde{\mathcal{I}}^{\mathrm{dob}}_3(\beta,k,s_{k+\beta+1};\hbar) \varphi \big\rVert_{L^2(\R^d)}^2 \, ds_{k+\beta+1}\Big]
	\\
	 &\leq 2 \sum_{n=0}^{k-1} \binom{k-1}{n} n! \sum_{\alpha,\tilde{\alpha}\in\N^{k+\beta}} \lambda^{|\alpha|+|\tilde{\alpha}|} \sum_{\sigma\in\mathcal{A}(k-2,n)} \big|\Aver{ \mathcal{T}(\beta,k,\sigma,\alpha,\tilde{\alpha};\hbar)}\big| ,
	\end{aligned}
\end{equation}
where
\begin{equation*}
	\begin{aligned}
	\mathcal{T}(\beta,k,\sigma,\alpha,\tilde{\alpha};\hbar)
	= {}&
	 \sum_{(\boldsymbol{x},\boldsymbol{\tilde{x}})\in \mathcal{X}_{\neq}^{2k-4}}  \prod_{i=1}^n \frac{\delta(x_{\sigma_i}- \tilde{x}_{\sigma_{i}})}{\rho}  
	 \\
	 &\times \int_{0}^{t_0}  \int_{\R^d} \mathcal{I}^{\mathrm{dob}}_3(\beta,k,s_{k+\beta+1};\alpha,\boldsymbol{x},\hbar) \varphi (x) \overline{ \mathcal{I}^{\mathrm{dob}}_3(\beta,k,s_{k+\beta+1};\tilde{\alpha},\boldsymbol{\tilde{x}},\hbar) \varphi (x)}  \,dx ds_{k+\beta+1}. 
	\end{aligned}
\end{equation*}
Before we proceed we will divide into the two cases $\beta\leq 6$ and $\beta>6$. We will start with the first case. From the definition of the operator we have that
\begin{equation*}
	\begin{aligned}
 	\MoveEqLeft   \mathcal{I}^{\mathrm{dob}}_3(\beta,k,s_{k+\beta+1};\alpha,\boldsymbol{x},\hbar) \varphi (x)
	=  \frac{1}{(2\pi\hbar)^d} \sum_{\alpha\in \N^{k+\beta}} (i\lambda)^{|\alpha|} \int_{\R_{+}^{|\alpha|}}  \boldsymbol{1}_{[0,\hbar^{-1}t_0]}( \boldsymbol{s}_{1,k+\beta}^{+}+\hbar^{-1}s_{k+\beta+1}+ \boldsymbol{t}_{1,a_{k+\beta}}^{+})
	 \\
	 \times& \int_{\R^{(|\alpha|+1)d}}  e^{ i   \langle  \hbar^{-1}x,p_{k+\beta} \rangle}  
	   e^{ -i   \langle  \hbar^{-1/d}x_{l},p_{k+\beta+1}-p_{k+\beta} \rangle}      
	e^{i \hbar^{-1} s_{k+\beta+1} \frac{1}{2} p_{k+\beta+1}^2}   \hat{V}(p_{k+\beta+1}-p_{k+\beta}) 
	 \prod_{i=1}^{a_{k+\beta}} e^{i  t_{i} \frac{1}{2} \eta_{i}^2}   
	 \\
	 \times&    
	 \prod_{m=-3}^\beta e^{ -i   \langle  \hbar^{-1/d}x_{\tilde{\iota}(k)}, p_{k+m}-p_{k+m-1} \rangle}
	\prod_{m=1}^{k-4}  e^{ -i   \langle  \hbar^{-1/d}x_{m},p_{m}-p_{m-1} \rangle}      
	 \prod_{m=1}^{k+\beta}  
	e^{i  s_{m} \frac{1}{2} p_{m}^2}   \hat{\mathcal{V}}_{\alpha_m}(p_m,p_{m-1},\boldsymbol{\eta} ) 
	\\
	\times & e^{i ( \hbar^{-1}t_0- \boldsymbol{s}_{1,k+\beta}^{+}- \boldsymbol{t}_{1,a_{k+\beta}}^{+}) \frac{1}{2} p_0^2} \hat{\varphi}(\tfrac{p_0}{\hbar}) \, d\boldsymbol{\eta}d\boldsymbol{p}   d\boldsymbol{t} d\boldsymbol{s}.
	\end{aligned}
\end{equation*}
As in the proof of Lemma~\ref{expansion_aver_bound_Mainterm_rec_trun22_special_case_1} we preform the change of variables 
\begin{equation*}
p_{k+m}\mapsto p_{k+m}-p_{k+m+1} \quad\text{and}\quad  p_{k-3}\mapsto p_{k-3}-p_{k+m}\quad  \text{for all }
\begin{cases}
	m\in\{-2,0\} &\text{if $\beta\in\{1,2\}$}
	\\
	m\in\{-2,0,2\} &\text{if $\beta\in\{3,4\}$}
	\\
	m\in\{-2,0,2,4\} &\text{if $\beta\in\{5,6\}$}.
\end{cases}
\end{equation*}
After this change of variables and a relabelling we obtain that
\begin{equation*}
	\begin{aligned}
 	\MoveEqLeft   \mathcal{I}^{\mathrm{dob}}_3(\beta,k,s_{k+\beta+1};\alpha,\boldsymbol{x},\hbar) \varphi (x)
	=  \frac{1}{(2\pi\hbar)^d} \sum_{\alpha\in \N^{k+\beta}} (i\lambda)^{|\alpha|} \int_{\R_{+}^{|\alpha|}} \boldsymbol{1}_{[0,\hbar^{-1}t_0]}( \boldsymbol{s}_{1,k+\beta}^{+}+\hbar^{-1}s_{k+\beta+1}+ \boldsymbol{t}_{1,a_{k+\beta}}^{+})
	 \\
	 &\times  \int e^{ i   \langle  \hbar^{-1}x,p_{k-1} \rangle}  
	  e^{ -i   \langle  \hbar^{-1/d}x_{k-1},p_{k-1}-p_{k-2} \rangle}      
	e^{i \hbar^{-1} s_{k+\beta+1} \frac{1}{2} p_{k-1}^2}  \hat{V}(p_{k-1}-p_{k-2}) 
	 \prod_{i=1}^{a_{k+\beta}} e^{i  t_{i} \frac{1}{2} \eta_{i}^2}  
	\\
	&\times  \tilde{\mathcal{G}}_\beta (p_{k-2},\boldsymbol{\tilde{p}},\boldsymbol{\eta}) \mathcal{P}_\beta (p_{k-2},\boldsymbol{\tilde{p}},\boldsymbol{s})  e^{i  s_{k-2} \frac{1}{2} (\tilde{p}_{-2}+ \tilde{p}_{-1})^2}   \hat{\mathcal{V}}_{\alpha_{k-2}}(\tilde{p}_{-2}+ \tilde{p}_{-1},p_{k-3}+ \pi_{\beta}(\boldsymbol{\tilde{p}}),\boldsymbol{\eta} )   e^{i  s_{k-3} \frac{1}{2} (p_{k-3}+ \pi_{\beta}(\boldsymbol{\tilde{p}}))^2}
	\\
	&\times   \hat{\mathcal{V}}_{\alpha_{k-3}}(p_{k-3}+ \pi_{\beta}(\boldsymbol{\tilde{p}}),p_{k-4},\boldsymbol{\eta} ) 
	 \prod_{m=1}^{k-2}  e^{ -i   \langle  \hbar^{-1/d}x_{m},p_{m}-p_{m-1} \rangle}   \prod_{m=1}^{k-4}  
	e^{i  s_{m} \frac{1}{2} p_{m}^2}   \hat{\mathcal{V}}_{\alpha_m}(p_m,p_{m-1},\boldsymbol{\eta} ) 
	 \\
	& \times e^{i ( \hbar^{-1}t_0- \boldsymbol{s}_{1,k+\beta}^{+}- \boldsymbol{t}_{1,a_{k+\beta}}^{+}) \frac{1}{2} p_0^2} \hat{\varphi}(\tfrac{p_0}{\hbar}) \, d\boldsymbol{\eta}d\boldsymbol{p}d\boldsymbol{\tilde{p}}   d\boldsymbol{t} d\boldsymbol{s},
	\end{aligned}
\end{equation*}
where
\begin{equation*}
	\pi_{\beta}(\boldsymbol{\tilde{p}}) =
	\begin{cases}
	\tilde{p}_{-2} + \tilde{p}_0 &\text{if $\beta\in\{1,2\}$}
	\\
	\tilde{p}_{-2} + \tilde{p}_0 +\tilde{p}_{2} &\text{if $\beta\in\{3,4\}$}
	\\
	\tilde{p}_{-2} + \tilde{p}_0 +\tilde{p}_{2} +\tilde{p}_{4}&\text{if $\beta\in\{5,6\}$}.
	\end{cases}
\end{equation*}
For $\beta=1$ we have that
\begin{equation*}
	\begin{aligned}
	\tilde{\mathcal{G}}_1 (p_{k-2},\boldsymbol{\tilde{p}},\boldsymbol{\eta}) ={}&  
	\hat{\mathcal{V}}_{\alpha_{k-1}}(\tilde{p}_{-1},\tilde{p}_{-1}+\tilde{p}_{-2},\boldsymbol{\eta} )
	\hat{\mathcal{V}}_{\alpha_{k}}(\tilde{p}_{0}+p_{k-2},\tilde{p}_{-1},\boldsymbol{\eta} )
	\hat{\mathcal{V}}_{\alpha_{k+1}}(p_{k-2},\tilde{p}_{0}+p_{k-2},\boldsymbol{\eta} )
	\\
	\mathcal{P}_1 (p_{k-2},\boldsymbol{\tilde{p}},\boldsymbol{s}) = {}& e^{i  s_{k-1} \frac{1}{2} \tilde{p}_{-1}^2} 
	e^{i  s_{k} \frac{1}{2}( \tilde{p}_{0}+p_{k-2})^2}  e^{i  s_{k+1} \frac{1}{2}p_{k-2}^2}
	\end{aligned}
\end{equation*}
and for $\beta\geq2$ we have that
\begin{equation*}
	\begin{aligned}
	\tilde{\mathcal{G}}_\beta (p_{k-2},\boldsymbol{\tilde{p}},\boldsymbol{\eta}) &=
	\begin{cases}  
	 \hat{\mathcal{V}}_{\alpha_{k+\beta}}(p_{k-2},\tilde{p}_{\beta-1},\boldsymbol{\eta} )\tilde{\mathcal{G}}_{\beta-1} (\tilde{p}_{\beta-2},\boldsymbol{\tilde{p}},\boldsymbol{\eta}) & \text{$\beta$ even}
	\\
	\hat{\mathcal{V}}_{\alpha_{k+\beta}}(p_{k-2}, p_{k-2} + \tilde{p}_{\beta-1},\boldsymbol{\eta} )\tilde{\mathcal{G}}_{\beta-1} (p_{k-2}+ \tilde{p}_{\beta-1},\boldsymbol{\tilde{p}},\boldsymbol{\eta}) & \text{$\beta$ odd}
	\end{cases}
	\\
	\mathcal{P}_\beta (p_{k-2},\boldsymbol{\tilde{p}},\boldsymbol{s}) &=
	\begin{cases}  
	e^{i  s_{k+\beta} \frac{1}{2}p_{k-2}^2} \mathcal{P}_{\beta-1} (\tilde{p}_{\beta-2},\boldsymbol{\tilde{p}},\boldsymbol{s}) & \text{$\beta$ even}
	\\
	e^{i  s_{k+\beta} \frac{1}{2}p_{k-2}^2} \mathcal{P}_{\beta-1} (p_{k-2}+ \tilde{p}_{\beta-1},\boldsymbol{\tilde{p}},\boldsymbol{s}) & \text{$\beta$ odd}.
	\end{cases}
	\end{aligned}
\end{equation*}
As in the previous proofs we can now argue as in the proofs of  of Lemma~\ref{LE:Exp_ran_phases} and Lemma~\ref{expansion_aver_bound_Mainterm} and obtain that
\begin{equation*}
	\begin{aligned}
 	\mathbb{E}&\big[ \mathcal{T}(\beta,k,\sigma,\alpha,\tilde{\alpha};\hbar)\big] =   \frac{(2\pi)^{(2k-2-n)d}\hbar (\rho\hbar)^{2k-2-n}}{(2\pi\hbar)^d}  \int_{\R_{+}^{|\alpha|+|\tilde{\alpha}|}} f(\boldsymbol{s},\boldsymbol{\tilde{s}}) \int \Lambda_n(\boldsymbol{p},\boldsymbol{q},\sigma)    \hat{\varphi}(\tfrac{p_0}{\hbar}) \overline{\hat{\varphi}(\tfrac{q_0}{\hbar})} 
	\\
	\times & \delta(\tfrac{t_0}{\hbar}- \boldsymbol{s}_{0,k+\beta}^{+}-\tilde{s}_{k+\beta+1}- \boldsymbol{t}_{1,a_{k+\beta}}^{+})  \delta(\tfrac{t_0}{\hbar}- \boldsymbol{\tilde{s}}_{1,k+\beta+1}^{+}- \boldsymbol{\tilde{t}}_{1,a_{k+\beta}}^{+})  \mathcal{P}_\beta (p_{k-2},\boldsymbol{\tilde{p}},\boldsymbol{s})   \mathcal{P}_\beta (q_{k-2},\boldsymbol{\tilde{q}},\boldsymbol{\tilde{s}})  
	\\
	\times & e^{i  s_{k-2} \frac{1}{2} (\tilde{p}_{-2}+ \tilde{p}_{-1})^2}  e^{-i  \tilde{s}_{k-2} \frac{1}{2} (\tilde{q}_{-2}+ \tilde{q}_{-1})^2}     e^{i  s_{k-3} \frac{1}{2} (p_{k-3}+\tilde{p}_\beta+ \pi_{\beta}(\boldsymbol{\tilde{p}}))^2} e^{-i  \tilde{s}_{k-3} \frac{1}{2} (q_{k-3}+\tilde{q}_\beta+ \pi_{\beta}(\boldsymbol{\tilde{q}}))^2} \prod_{i=1}^{a_{k+\beta}} e^{i  t_{i} \frac{1}{2} \eta_{i}^2}  
	\\
	\times &  
	\prod_{i=1}^{\tilde{a}_{k+\beta}} e^{-i  \tilde{t}_{i} \frac{1}{2} \xi_{i}^2}  
	\prod_{m=0}^{k_1-4}  e^{i  s_{m} \frac{1}{2}( p_{m}+\tilde{\pi}_m(\tilde{p}_\beta))^2} e^{-i  \tilde{s}_{m} \frac{1}{2}( q_{m}+\tilde{\pi}_m(\tilde{q}_\beta))^2}  \mathcal{G}(\boldsymbol{p}, \boldsymbol{\tilde{p}},\boldsymbol{\tilde{q}},\boldsymbol{\eta}, \boldsymbol{\xi},\sigma ) d\boldsymbol{\eta} d\boldsymbol{\xi}d\boldsymbol{p}d\boldsymbol{q} d\boldsymbol{\tilde{p}}  d\boldsymbol{\tilde{q}}  d\boldsymbol{t} d\boldsymbol{\tilde{t}}  d\boldsymbol{s} d\boldsymbol{\tilde{s}},
	\end{aligned}
\end{equation*}
where we have preformed the change of variables $\tilde{s}_{k+\beta+1}\mapsto\hbar^{-1} \tilde{s}_{k+\beta+1}$. The function $f$ is given by
\begin{equation*}
	f(\boldsymbol{s},\boldsymbol{\tilde{s}}) = 
	\begin{cases}
	\boldsymbol{1}_{[0,\hbar^{-1}t]}\big( \sum_{i=1}^{k-4} s_i + s_{k+\beta} \big) \boldsymbol{1}_{[0,\hbar^{-1}t]}\big( \sum_{i=1,i \notin \sigma}^{k-3} \tilde{s}_i +  \tilde{s}_{k+\beta} + \tilde{s}_{k+\beta+1}  \big) & \text{if $k-2\notin\sigma $}
	\\
	\boldsymbol{1}_{[0,\hbar^{-1}t]}\big( \sum_{i=1}^{k-4} s_i + s_{k+\beta}  \big) \boldsymbol{1}_{[0,\hbar^{-1}t]}\big( \sum_{i=1,i \notin \sigma}^{k-3} \tilde{s}_i   +  \tilde{s}_{k+\beta+1}  \big) & \text{if $k-2\in\sigma$}.
	\end{cases}
\end{equation*}
The function $\mathcal{G}$ is given by
\begin{equation*}
	\begin{aligned}
	\MoveEqLeft \mathcal{G}(\boldsymbol{p}, \boldsymbol{\tilde{p}},\boldsymbol{\tilde{p}},\boldsymbol{\eta}, \boldsymbol{\xi},\sigma ) =  
	   \hat{V}(p_{k-1}-p_{k-2}) \overline{ \hat{V}(q_{k-1}-q_{k-2})} \tilde{\mathcal{G}}_\beta (\tilde{p}_\beta+p_{k-2},\boldsymbol{\tilde{p}},\boldsymbol{\eta}) \overline{  \tilde{\mathcal{G}}_\beta (\tilde{q}_\beta+q_{k-2},\boldsymbol{\tilde{q}},\boldsymbol{\xi}) }  
	     \\
	     \times&
	     \hat{\mathcal{V}}_{\alpha_{k-2}}(\tilde{p}_{-2}+ \tilde{p}_{-1},p_{k-3}+\tilde{p}_\beta+ \pi_{\beta}(\boldsymbol{\tilde{p}}),\boldsymbol{\eta} )
	     \overline{  \hat{\mathcal{V}}_{\tilde{\alpha}_{k-2}}(\tilde{q}_{-2}+ \tilde{q}_{-1},q_{k-3}+\tilde{q}_\beta+ \pi_{\beta}(\boldsymbol{\tilde{q}}),\boldsymbol{\xi} )}
	   \\
	   \times&\hat{\mathcal{V}}_{\alpha_{k-3}}(p_{k-3}+\tilde{p}_\beta+ \pi_{\beta}(\boldsymbol{\tilde{p}}),p_{k-4},\boldsymbol{\eta} )  
	    \overline{ \hat{\mathcal{V}}_{\tilde{\alpha}_{k-3}}(q_{k-3}+\tilde{q}_\beta+ \pi_{\beta}(\boldsymbol{\tilde{q}}),q_{k-4},\boldsymbol{\xi} ) }
	   \\
	   \times&
	 \prod_{m=1}^{k-4}    \hat{\mathcal{V}}_{\alpha_m}(p_m+\tilde{\pi}_m(\tilde{p}_\beta),p_{m-1}+\tilde{\pi}_{m-1}(\tilde{p}_\beta),\boldsymbol{\eta} ) \overline{ \hat{\mathcal{V}}_{\tilde{\alpha}_m}(q_m+\tilde{\pi}_m(\tilde{q}_\beta),q_{m-1}+\tilde{\pi}_{m-1}(\tilde{q}_\beta),\boldsymbol{\xi} ) }.
	\end{aligned}
\end{equation*}
  The function $\Lambda_n(\boldsymbol{p},\boldsymbol{q},\sigma)$ is given by
  \begin{equation*}
  	\Lambda_n(\boldsymbol{p},\boldsymbol{q},\sigma) = \delta(p_{k-1}-q_{k-1})  \prod_{i=1}^n \delta(p_{\sigma_{i-1}}- q_{\sigma_{i-1}} - p_{\sigma_n}+ q_{\sigma_n} ))   
	\prod_{i=1}^{n+1} \prod_{m=\sigma_{i-1}+1}^{\sigma_{i}-1} \delta(p_m-p_{\sigma_{i-1}}) \delta(q_m-q_{\sigma^1_{i-1}}).
  \end{equation*}
  To estimate these averages we can use the same techniques as in the previous proofs. However we will split into the two different cases $\sigma_n<k-1$ and $\sigma_n=k-1$. We start with the case where $\sigma_n=k-1$. Here we obtain the bound
  \begin{equation*}	
  \begin{aligned}
 	\MoveEqLeft  
	\big| \Aver{\mathcal{T}(\beta,k,\sigma,\alpha,\tilde{\alpha};\hbar)}\big|
	 \leq \hbar^2    \frac{C^{|\alpha|+|\tilde{\alpha}|+2} \rho (\rho t_0)^{2k-n-3} }{ (k-3)!(k-n)!}  
	  |\log(\tfrac{\hbar}{t})|^{n+20}  \norm{\varphi}_{\mathcal{H}^{5d+5}_\hbar(\R^d)}^2 \norm{\hat{V}}_{1,\infty,6d+6}^{|\alpha|+|\tilde{\alpha}|+2},
	\end{aligned}
\end{equation*}
  where we again have used that $\beta\leq6$. Recalling that $t_0=\frac{t}{\tau_0}$ and observing that for all possible values of $k$ and $n$ we have that $2k-n-3\geq 2$ we get that  
    \begin{equation}\label{EQ:expansion_aver_bound_3.recol_22_3.3_5}
	\begin{aligned}
 	\MoveEqLeft  
	\big| \Aver{\mathcal{T}(\beta,k,\sigma,\alpha,\tilde{\alpha};\hbar)}\big|
	 \leq \frac{\hbar^2}{\tau_0^2}    \frac{C^{|\alpha|+|\tilde{\alpha}|+2} \rho (\rho t)^{2k-n-3} }{ (k-3)!(k-n)!}  
	  |\log(\tfrac{\hbar}{t})|^{n+20}  \norm{\varphi}_{\mathcal{H}^{5d+5}_\hbar(\R^d)}^2 \norm{\hat{V}}_{1,\infty,6d+6}^{|\alpha|+|\tilde{\alpha}|+2}.
	\end{aligned}
\end{equation}
  Note that for any possible value of $k$ and $n$ we always have the integrals in $s_{k+\beta}$ and $\tilde{s}_{k+\beta+1}$.\footnote{This obersevation is crucial when we want to generalise this to the case where we consider time division. Since we can here have cases where we only have that it is the variables $s_{k+\beta}$, $\tilde{s}_{k+\beta}$ and $\tilde{s}_{k+\beta+1}$ that lives in a small time interval. } 
  For the case where $\sigma_n<k-1$ we argue as in the previous proofs and obtain the bound
  \begin{equation}\label{EQ:expansion_aver_bound_3.recol_22_3.3_6}	
  \begin{aligned}
 	\MoveEqLeft  
	\big| \Aver{\mathcal{T}(\beta,k,\sigma,\alpha,\tilde{\alpha};\hbar)}\big|
	 \leq \hbar^3    \frac{C^{|\alpha|+|\tilde{\alpha}|+2} \rho^2 (\rho t_0)^{2k-n-4} }{ (k-3)!(k-1-n)!}  
	  |\log(\tfrac{\hbar}{t})|^{n+20}  \norm{\varphi}_{\mathcal{H}^{5d+5}_\hbar(\R^d)}^2 \norm{\hat{V}}_{1,\infty,6d+6}^{|\alpha|+|\tilde{\alpha}|+2}.
	\end{aligned}
\end{equation}
From combining \eqref{EQ:expansion_aver_bound_3.recol_22_3.3_5} and \eqref{EQ:expansion_aver_bound_3.recol_22_3.3_6} we obtain for any configuration $\sigma$ in this case the bound
    \begin{equation}\label{EQ:expansion_aver_bound_3.recol_22_3.3_7}
	\begin{aligned}
 	\MoveEqLeft  
	\big| \Aver{\mathcal{T}(\beta,k,\sigma,\alpha,\tilde{\alpha};\hbar)}\big|
	 \leq \frac{\hbar^2}{\tau_0^2}    \frac{C^{|\alpha|+|\tilde{\alpha}|+2} \rho^2 (\rho t)^{2k-n-4} }{ (k-3)!(k-1-n)!}  
	  |\log(\tfrac{\hbar}{t})|^{n+20}  \norm{\varphi}_{\mathcal{H}^{5d+5}_\hbar(\R^d)}^2 \norm{\hat{V}}_{1,\infty,6d+6}^{|\alpha|+|\tilde{\alpha}|+2}.
	\end{aligned}
\end{equation}
Then from combining \eqref{EQ:expansion_aver_bound_3.recol_22_3.3_0}, \eqref{EQ:expansion_aver_bound_3.recol_22_3.3_1} and \eqref{EQ:expansion_aver_bound_3.recol_22_3.3_7} we obtain that
\begin{equation*}
	\begin{aligned}
	 \mathbb{E}\Big[\big\lVert  \sum_{k=4}^{k_0}   \mathcal{I}^{\mathrm{dob}}_3(\beta,k;\hbar)  \varphi\big\rVert_{L^2(\R^d)}^2 \Big]
	\leq  \tau_0^{-3} \tilde{\lambda}^\beta  \norm{\varphi}_{\mathcal{H}^{5d+5}_\hbar(\R^d)}^2 k_0^{5} C^{k_0}  |\log(\tfrac{\hbar}{t})|^{k_0+20}.
	\end{aligned}
\end{equation*}
This concludes the proof for the case of $\beta\leq 6$. We now turn to the case $\beta>6$ again for notational convenience we will assume $\beta$ to be even and in the end of the proof mention what we will need to change for $\beta$ odd. From the definition of the operator we have that
\begin{equation}\label{EQ:expansion_aver_bound_3.recol_22_3.3_8}
	\begin{aligned}
 	\MoveEqLeft  \mathcal{\tilde{I}}^{\mathrm{dob}}_3(\beta,k,\boldsymbol{x},s_{k+\beta+1};\hbar) \varphi (x)
	=  \frac{1}{(2\pi\hbar)^d} \sum_{\alpha\in \N^{k+\beta}} (i\lambda)^{|\alpha|} \int_{\R_{+}^{|\alpha|}} \boldsymbol{1}_{[0,\hbar^{-1}t_0]}( \boldsymbol{s}_{1,k+\beta}^{+}+ \hbar^{-1}s_{k+\beta+1}+ \boldsymbol{t}_{1,a_{k+\beta}}^{+})
	\\
	\times& \int_{\R^{(|\alpha|+1)d}}
	  e^{ i   \langle  \hbar^{-1}x,p_{k+\beta+1} \rangle} e^{ -i   \langle  \hbar^{-1/d}x_{k-1},p_{k+\beta+1}-p_{k+\beta} \rangle} e^{i \hbar^{-1} s_{k+\beta+1} \frac{1}{2} p_{k+\beta+1}^2} \hat{V}(p_{k+\beta+1}-p_{k+\beta})
	 \\
	 \times&    
	e^{ -i   \langle  \hbar^{-1/d}(x_{k-2}-x_{k-3}),\sum_{m=0}^{\beta/2+1} p_{2m+k-2}-p_{2m+k-3} \rangle} e^{ -i   \langle  \hbar^{-1/d}x_{k-3},p_{\beta+k}-p_{k-4} \rangle}   \prod_{m=1}^{k-4}  e^{ -i   \langle  \hbar^{-1/d}x_{m},p_{m}-p_{m-1} \rangle}
	\\
	\times&
	 \prod_{i=1}^{a_{k+\beta}} e^{i  t_{i} \frac{1}{2} \eta_{i}^2}       
	  \Big\{\prod_{m=1}^{k+\beta}  
	e^{i  s_{m} \frac{1}{2} p_{m}^2}   \hat{\mathcal{V}}_{\alpha_m}(p_m,p_{m-1},\boldsymbol{\eta} )  \Big\}e^{i ( \hbar^{-1}t- \boldsymbol{s}_{1,k}^{+}- \boldsymbol{t}_{1,a_{k+\beta}}^{+}) \frac{1}{2} p_0^2} \hat{\varphi}(\tfrac{p_0}{\hbar}) \, d\boldsymbol{\eta}d\boldsymbol{p}  d\boldsymbol{t}d\boldsymbol{s}.
	\end{aligned}
\end{equation}
As in the proof of Lemma~\ref{expansion_aver_bound_Mainterm_rec_trun22_special_case_1} we again have to distinguish the three cases  $k-2\notin\sigma$, $k-3,k-2\in\sigma$ and finally the case $k-2\in\sigma$ and $k-3\notin\sigma$. We start with the first case $k-2\notin\sigma$. After taking the average we star by preforming the change of variables $x_{k-2}\mapsto x_{k-2}-x_{k-3}$ and   $\tilde{x}_{k-2}\mapsto \tilde{x}_{k-2}-\tilde{x}_{k-3}$. Then  by arguing as in the proof of Lemma~\ref{LE:Exp_ran_phases} and Lemma~\ref{expansion_aver_bound_Mainterm} we have that   
\begin{equation}\label{EQ:expansion_aver_bound_3.recol_22_3.3_9}
	\begin{aligned}
 	\MoveEqLeft  
	\Aver{ \mathcal{T}(\beta,k,\sigma,\alpha,\tilde{\alpha};\hbar)}
	=  \frac{(2\pi)^{(2k-2-n)d}\hbar(\rho\hbar)^{2k-2-n}}{(2\pi\hbar)^d} \int_{\R_{+}^{|\alpha|+|\tilde{\alpha}|+2}} f(\boldsymbol{s},\boldsymbol{\tilde{s}}) \int  \Lambda_n(\boldsymbol{p},\boldsymbol{q},\sigma)    \mathcal{G}(\boldsymbol{p}, \boldsymbol{\tilde{p}},\boldsymbol{\tilde{q}},\boldsymbol{\eta}, \boldsymbol{\xi},  \tilde{\eta},\tilde{\xi},\sigma )  
	 \\
	 &\times   \delta(\tfrac{t}{\hbar}- \boldsymbol{\tilde{s}}_{0,k+\beta+1}^{+}- \boldsymbol{\tilde{t}}_{1,a_{k+\beta}}^{+})   \delta(\tfrac{t}{\hbar}- \boldsymbol{s}_{0,k+\beta}^{+}-\tilde{s}_{k+\beta+1}- \boldsymbol{t}_{1,a_{k+\beta}}^{+})  \prod_{i=1}^{a_{k+\beta}} e^{i  t_{i} \frac{1}{2} \eta_{i}^2} \prod_{i=1}^{\tilde{a}_{k+\beta}}  e^{-i  \tilde{t}_{i} \frac{1}{2} \xi_{i}^2}
	 \\
	 &\times e^{ -i   \langle  x, p_{k-3}-\tilde{p}_{\beta-1} +\sum_{m=0}^{\beta/2} \tilde{p}_{2m-2}-\tilde{p}_{2m-3} \rangle}   e^{ i   \langle  \tilde{x}, q_{k-3}-\tilde{q}_{\beta-1} +\sum_{m=0}^{\beta/2} \tilde{q}_{2m-2}-\tilde{q}_{2m-3} \rangle}  e^{i  s_{k+\beta} \frac{1}{2} p_{k-3}^2}   e^{ - i \tilde{s}_{k+\beta} \frac{1}{2} q_{k-3}^2}
	\\
	&\times    
	      \prod_{m=-3}^{\beta-1}e^{i  s_{m+k} \frac{1}{2} \tilde{p}_{m}^2} e^{-i  \tilde{s}_{m+k} \frac{1}{2} \tilde{q}_{m}^2}   
	 \prod_{m=0}^{k-4} e^{i  s_{m} \frac{1}{2} p_{m}^2}  e^{-i\tilde{s}_{m} \frac{1}{2}q_{m}^2}  |\hat{\varphi}(\tfrac{p_0}{\hbar})|^2
	  \, d\boldsymbol{\eta} d\boldsymbol{\xi} dx d\tilde{x}d\tilde{\eta}d\tilde{\xi} d\boldsymbol{p} d\boldsymbol{\tilde{p}} d\boldsymbol{\tilde{q}}   d\boldsymbol{t}  d\boldsymbol{\tilde{t}}  d\boldsymbol{\tilde{s}}d\boldsymbol{s},
	\end{aligned}
\end{equation}
where the function $f$ is given by
\begin{equation}\label{EQ:expansion_aver_bound_3.recol_22_3.3_10}
	f(\boldsymbol{s},\boldsymbol{\tilde{s}}) =
	\begin{cases}
	 \boldsymbol{1}_{[0,\hbar^{-1}t]}\big( \sum_{i=1}^{k-4} s_i +s_{k+\beta}  \big) \boldsymbol{1}_{[0,\hbar^{-1}t]}\big( \sum_{i=1,i \notin \sigma}^{k-4} \tilde{s}_i +\tilde{s}_{k+\beta} + \tilde{s}_{k+\beta+1}  \big) &\text{if $k-3\notin\sigma$}
	 \\
	  \boldsymbol{1}_{[0,\hbar^{-1}t]}\big( \sum_{i=1}^{k-4} s_i +s_{k+\beta}  \big) \boldsymbol{1}_{[0,\hbar^{-1}t]}\big( \sum_{i=1,i \notin \sigma}^{k-4} \tilde{s}_i + \tilde{s}_{k+\beta+1}  \big) &\text{if $k-3\in\sigma$},
	 \end{cases}
\end{equation}
and the function $\mathcal{G}$ is given by
\begin{equation}\label{EQ:expansion_aver_bound_3.recol_22_3.3_11}
	\begin{aligned}
	& \mathcal{G}(\boldsymbol{p}, \boldsymbol{\tilde{p}},\boldsymbol{\tilde{q}},\boldsymbol{\eta}, \boldsymbol{\xi},  \tilde{\eta},\tilde{\xi},\sigma )  
	=\hat{V}(p_{k-2}-p_{k-3})  \overline{\hat{V}(q_{k-2}-q_{k-3})} 
	  \hat{\mathcal{V}}_{\alpha_{\beta+k}}(p_{k-3}, \tilde{p}_{\beta-1},\boldsymbol{\eta} )  
	 \overline{\hat{\mathcal{V}}_{\tilde{\alpha}_{\beta+k}}(q_{k-3}, \tilde{q}_{\beta-1},\boldsymbol{\xi} )} 
	\\
	&\times\prod_{m=-2}^{\beta-1}   \Big\{
	 \hat{\mathcal{V}}_{\alpha_{m+k}}(\tilde{p}_m,\tilde{p}_{m-1},\boldsymbol{\eta} ) \overline{ \hat{\mathcal{V}}_{\tilde{\alpha}_{m+k}}(\tilde{q}_m,\tilde{q}_{m-1},\boldsymbol{\xi} )}  \Big\}
	  \hat{\mathcal{V}}_{\alpha_{k-3}}(\tilde{p}_{-3},p_{k-4},\boldsymbol{\eta} )   
	  \overline{\hat{\mathcal{V}}_{\tilde{\alpha}_{k-3}}(\tilde{q}_{-3},q_{k-4},\boldsymbol{\eta} )  }
	 \\
	 &\times \prod_{m=1}^{k-3}  
	   \hat{\mathcal{V}}_{\alpha_m}(p_m,p_{m-1},\boldsymbol{\eta} )\overline{ \hat{\mathcal{V}}_{\tilde{\alpha}_m}(q_m,q_{m-1},\boldsymbol{\xi} )}.  
	\end{aligned}
\end{equation}
and the function $\Lambda_n$ will, due to our relabelling, depend on $\sigma$ in the following way, where we stress that we have assumed $k-2\notin\sigma$. If we have that $\sigma_n\leq k-3$ we get that
  \begin{equation}\label{EQ:expansion_aver_bound_3.recol_22_3.3_11.1}
  	\begin{aligned}
  	\Lambda_n(\boldsymbol{p},\boldsymbol{q},\sigma) 
	={}& \delta(p_{k-2}-q_{k-2})  \prod_{i=1}^n \delta(p_{\sigma_{i-1}}- q_{\sigma_{i-1}} - p_{\sigma_n}+ q_{\sigma_n} )   
	\prod_{i=1}^{n+1} \prod_{m=\sigma_{i-1}+1}^{\sigma_{i}-1} \delta(p_m-p_{\sigma_{i-1}}) \delta(q_m-q_{\sigma_{i-1}})
	\\
	&\times \prod_{m=\sigma_{n}+1}^{k-2} \delta(p_m-p_{\sigma_{n}}) \delta(q_m-q_{\sigma_{n}}).
	\end{aligned}
  \end{equation}
  In the case where $\sigma_n=k-1$ we get that
    \begin{equation}\label{EQ:expansion_aver_bound_3.recol_22_3.3_11.2}
  	\begin{aligned}
  	 \Lambda_n(\boldsymbol{p},\boldsymbol{q},\sigma) 
	={}&
	\delta(p_{k-2}-q_{k-2})  \prod_{i=1}^n \delta(p_{\sigma_{i-1}}- q_{\sigma_{i-1}} )   
	\prod_{i=1}^{n-1} \prod_{m=\sigma_{i-1}+1}^{\sigma_{i}-1} \delta(p_m-p_{\sigma_{i-1}}) \delta(q_m-q_{\sigma_{i-1}}). 
	\end{aligned}
  \end{equation}
For any of the configuration $\sigma$ we can argue as in the proof of Lemma~\ref{expansion_aver_bound_Mainterm_rec_trun22_special_case_1} and obtain the bound
\begin{equation}\label{EQ:expansion_aver_bound_3.recol_22_3.3_12}
	\begin{aligned}
 	\MoveEqLeft  
	\big| \Aver{ \mathcal{T}(\beta,k,\sigma,\alpha,\tilde{\alpha};\hbar)}\big|
	 \leq \hbar^3    \frac{C^{|\alpha|+|\tilde{\alpha}|+2} \rho^3 (\rho t)^{2k-n-5} }{ (k-3)!(k-2-n)!}  
	  |\log(\tfrac{\hbar}{t})|^{n+4}  \norm{\varphi}_{\mathcal{H}^{d+1}_\hbar(\R^d)}^2 \norm{\hat{V}}_{1,\infty,3d+3}^{|\alpha|+|\tilde{\alpha}|+2},
	\end{aligned}
\end{equation}
where we have used that $t_0<t $. For the case where we have $k-2\in\sigma$ and $k-3\notin\sigma$ we switch the roles of $x_{k-2}$ and $x_{k-3}$ in the above arguments as done in the proofs of Lemma~\ref{expansion_aver_bound_Mainterm_rec_trun22_special_case_1} and obtain for any $\sigma$ the bound
\begin{equation}\label{EQ:expansion_aver_bound_3.recol_22_3.3_13}
	\begin{aligned}
 	\MoveEqLeft  
	\big| \Aver{ \mathcal{T}(\beta,k,\sigma,\alpha,\tilde{\alpha};\hbar)}\big|
	 \leq \hbar^3    \frac{C^{|\alpha|+|\tilde{\alpha}|+2} \rho^3 (\rho t)^{2k-n-5} }{ (k-3)!(k-2-n)!}  
	  |\log(\tfrac{\hbar}{t})|^{n+4}  \norm{\varphi}_{\mathcal{H}^{d+1}_\hbar(\R^d)}^2 \norm{\hat{V}}_{1,\infty,3d+3}^{|\alpha|+|\tilde{\alpha}|+2}.
	\end{aligned}
\end{equation}
We now turn to the case where we have $k-2,k-3\in\sigma$.  After taking the average we star by preforming the change of variables $x_{k-2}\mapsto x_{k-2}-x_{k-3}$ and   $\tilde{x}_{k-2}\mapsto \tilde{x}_{k-2}-\tilde{x}_{k-3}$. Then  by arguing as in the proof of Lemma~\ref{LE:Exp_ran_phases} and Lemma~\ref{expansion_aver_bound_Mainterm} we have that   
\begin{equation*}
	\begin{aligned}
 	\MoveEqLeft  
	\Aver{ \mathcal{T}(\beta,k,\sigma,\alpha,\tilde{\alpha};\hbar)}
	=  \frac{(2\pi)^{(2k-2-n)d}\hbar(\rho\hbar)^{2k-2-n}}{(2\pi\hbar)^d} \int_{\R_{+}^{|\alpha|+|\tilde{\alpha}|+2}} f(\boldsymbol{s},\boldsymbol{\tilde{s}}) \int  \Lambda_n(\boldsymbol{p},\boldsymbol{q},\sigma)    \mathcal{G}(\boldsymbol{p}, \boldsymbol{\tilde{p}},\boldsymbol{\tilde{q}},\boldsymbol{\eta}, \boldsymbol{\xi},  \tilde{\eta},\tilde{\xi},\sigma )  
	 \\
	 &\times   \delta(\tfrac{t}{\hbar}- \boldsymbol{\tilde{s}}_{0,k+\beta+1}^{+}- \boldsymbol{\tilde{t}}_{1,a_{k+\beta}}^{+})   \delta(\tfrac{t}{\hbar}- \boldsymbol{s}_{0,k+\beta}^{+}-\tilde{s}_{k+\beta+1}- \boldsymbol{t}_{1,a_{k+\beta}}^{+})  \prod_{i=1}^{a_{k+\beta}} e^{i  t_{i} \frac{1}{2} \eta_{i}^2} \prod_{i=1}^{\tilde{a}_{k+\beta}}  e^{-i  \tilde{t}_{i} \frac{1}{2} \xi_{i}^2}
	 \\
	 &\times e^{ -i   \langle  x, p_{k-3}- q_{k-3}+\tilde{q}_{\beta-1}-\tilde{p}_{\beta-1} +\sum_{m=0}^{\beta/2} \tilde{p}_{2m-2}-\tilde{p}_{2m-3} -  \tilde{q}_{2m-2}+\tilde{q}_{2m-3} \rangle}     e^{i  s_{k+\beta} \frac{1}{2} p_{k-3}^2}   e^{ - i \tilde{s}_{k+\beta} \frac{1}{2} q_{k-3}^2}
	\\
	&\times    
	      \prod_{m=-3}^{\beta-1}e^{i  s_{m+k} \frac{1}{2} \tilde{p}_{m}^2} e^{-i  \tilde{s}_{m+k} \frac{1}{2} \tilde{q}_{m}^2}   
	 \prod_{m=0}^{k-4} e^{i  s_{m} \frac{1}{2} p_{m}^2}  e^{-i\tilde{s}_{m} \frac{1}{2}q_{m}^2}  |\hat{\varphi}(\tfrac{p_0}{\hbar})|^2
	  \, d\boldsymbol{\eta} d\boldsymbol{\xi} dx d\tilde{x}d\tilde{\eta}d\tilde{\xi} d\boldsymbol{p} d\boldsymbol{\tilde{p}} d\boldsymbol{\tilde{q}}   d\boldsymbol{t}  d\boldsymbol{\tilde{t}}  d\boldsymbol{\tilde{s}}d\boldsymbol{s},
	\end{aligned}
\end{equation*}
where the function $f$ is given by \eqref{EQ:expansion_aver_bound_3.recol_22_3.3_10}, the function $\mathcal{G}$ is given by \eqref{EQ:expansion_aver_bound_3.recol_22_3.3_11}, if $\sigma_n=k-2$ the function $\Lambda_n$ is given by \eqref{EQ:expansion_aver_bound_3.recol_22_3.3_11.1} and if $\sigma_n=k-1$ the function $\Lambda_n$ is given by \eqref{EQ:expansion_aver_bound_3.recol_22_3.3_11.2}. The estimates will be different for the two cases $\sigma_n=k-2$ and $\sigma_n=k-1$. For both cases we can argue as in the proof of Lemma~\ref{expansion_aver_bound_Mainterm_rec_trun22_special_case_1} to obtain the bounds. For the case where  $\sigma_n=k-2$ we obtain the bound 
\begin{equation}\label{EQ:expansion_aver_bound_3.recol_22_3.3_15}
	\begin{aligned}
 	\MoveEqLeft  
	\big| \Aver{ \mathcal{T}(\beta,k,\sigma,\alpha,\tilde{\alpha};\hbar)}\big|
	 \leq \hbar^3    \frac{C^{|\alpha|+|\tilde{\alpha}|+2} \rho^2  (\rho t)^{2k-n-4} }{ (k-3)!(k-1-n)!}  
	  |\log(\tfrac{\hbar}{t})|^{n+4}  \norm{\varphi}_{\mathcal{H}^{d+1}_\hbar(\R^d)}^2 \norm{\hat{V}}_{1,\infty,3d+3}^{|\alpha|+|\tilde{\alpha}|+2},
	\end{aligned}
\end{equation}
where we again have used that $t<t_0$. For the case where $\sigma_n=k-1$ we get that estimate 
\begin{equation}\label{EQ:expansion_aver_bound_3.recol_22_3.3_16}
	\begin{aligned}
	\big| \Aver{ \mathcal{T}(\beta,k,\sigma,\alpha,\tilde{\alpha};\hbar)}\big|
	& \leq \hbar^2    \frac{C^{|\alpha|+|\tilde{\alpha}|+2} \rho (\rho t_0)^{2k-n-3} }{ (k-3)!(k-n)!}  
	  |\log(\tfrac{\hbar}{t_0})|^{n+4}  \norm{\varphi}_{\mathcal{H}^{d+1}_\hbar(\R^d)}^2 \norm{\hat{V}}_{1,\infty,3d+3}^{|\alpha|+|\tilde{\alpha}|+2}
	   \\
	   &\leq  \frac{\hbar^2}{\tau_0^2}    \frac{C^{|\alpha|+|\tilde{\alpha}|+2} \rho (\rho t)^{2k-n-3} }{ (k-3)!(k-n)!}  
	  |\log(\tfrac{\hbar}{t})|^{n+4}  \norm{\varphi}_{\mathcal{H}^{d+1}_\hbar(\R^d)}^2 \norm{\hat{V}}_{1,\infty,3d+3}^{|\alpha|+|\tilde{\alpha}|+2},
	\end{aligned}
\end{equation}
where we have used that for any possible $k$ and $n$ we have that $2k-n-3\geq2$ and $t_0=\frac{t}{\tau_0}$. Again it is the integrals in the variables $s_{k+\beta}$ and $\tilde{s}_{k+\beta+1}$ that ensures we get the factor $\tau_0^{-2}$.
Now from combining the estimates in \eqref{EQ:expansion_aver_bound_3.recol_22_3.3_0}, \eqref{EQ:expansion_aver_bound_3.recol_22_3.3_1}, \eqref{EQ:expansion_aver_bound_3.recol_22_3.3_12}, \eqref{EQ:expansion_aver_bound_3.recol_22_3.3_13}, \eqref{EQ:expansion_aver_bound_3.recol_22_3.3_15} \eqref{EQ:expansion_aver_bound_3.recol_22_3.3_16} and by arguing as in the previous proofs we get that
\begin{equation*}
	\begin{aligned}
	 \mathbb{E}\Big[\big\lVert  \sum_{k=4}^{k_0}   \mathcal{I}^{\mathrm{dob}}_3(\beta,k;\hbar)  \varphi\big\rVert_{L^2(\R^d)}^2 \Big]
	\leq  \tau_0^{-3} \tilde{\lambda}^\beta  \norm{\varphi}_{\mathcal{H}^{d+1}_\hbar(\R^d)}^2 k_0^{5} C^{k_0} |\log(\tfrac{\hbar}{t})|^{k_0+4}
	\end{aligned}
\end{equation*}
for all $\beta>6$ and even. For $\beta>6$ and odd we obtain the same estimate by an analogous argument. The main difference is in the starting expression of the function $ \mathcal{\tilde{I}}^{\mathrm{dob}}_3(\beta,k,\boldsymbol{x},s_{k+\beta+1};\hbar) \varphi (x)$ as described in the proof of  Lemma~\ref{expansion_aver_bound_Mainterm_rec_trun22_special_case_1}. This concludes the proof.
\end{proof}
\begin{lemma}\label{expansion_aver_bound_Mainterm_rec_trun22_6}
Assume we are in the setting of Definition~\ref{def_recol_reminder}. Let $\varphi \in \mathcal{H}^{5d+5}_\hbar(\R^d)$. Then 
\begin{equation*}
	\begin{aligned}
	\MoveEqLeft \mathbb{E}\Big[\big\lVert  \sum_{k_1=1}^{k_0} \sum_{k_2=1}^{k_0}   \boldsymbol{1}_{\{k_1+k_2\geq4\}} 
	\\
	&\times 
	  \sum_{(\boldsymbol{x}_1,\boldsymbol{x}_2)\in \mathcal{X}_{\neq}^{k_1+k_2-2}}  \mathcal{E}_{2,1}^{\mathrm{rec}}(k_1,k_2,\boldsymbol{x}_1,\iota,\tfrac{t}{\tau_0};\hbar) \mathcal{I}_{2,1}^{\mathrm{rec}}(k_2,\boldsymbol{x}_2,\iota,\tfrac{\tau-1}{\tau_0}t;\hbar)\varphi\big\rVert_{L^2(\R^d)}^2 \Big]
	\\
	&\leq C \hbar k_0^{16} C^{k_0}   |\log(\tfrac{\hbar}{t})|^{k_0+11} \norm{\varphi}_{\mathcal{H}^{5d+5}_\hbar(\R^d)}^2 +C \tau_0^{-3}  \norm{\varphi}_{\mathcal{H}^{5d+5}_\hbar(\R^d)}^2 k_0^{16} C^{k_0}   |\log(\tfrac{\hbar}{t})|^{k_0+22} ,
	\end{aligned}
\end{equation*}
and
\begin{equation*}
	\begin{aligned}
	\MoveEqLeft \mathbb{E}\Big[\big\lVert  \sum_{k_1=1}^{k_0} \sum_{k_2=1}^{k_0}  \boldsymbol{1}_{\{k_1+k_2\geq4\}} 
	\\
	&\times 
	  \sum_{(\boldsymbol{x}_1,\boldsymbol{x}_2)\in \mathcal{X}_{\neq}^{k_1+k_2-2}} \mathcal{E}_{2,j}^{\mathrm{rec}}(k_1+1,k_2,(x_{2,k_2},\boldsymbol{x}_1),\iota,\tfrac{t}{\tau_0};\hbar) \mathcal{I}_{2,j}^{\mathrm{rec}}(k_2,\boldsymbol{x}_2,\iota,\tfrac{\tau-1}{\tau_0}t;\hbar)\varphi\big\rVert_{L^2(\R^d)}^2 \Big]
	\\
	&\leq C \hbar k_0^{16} C^{k_0}   |\log(\tfrac{\hbar}{t})|^{k_0+11} \norm{\varphi}_{\mathcal{H}^{5d+5}_\hbar(\R^d)}^2 +C \tau_0^{-3}  \norm{\varphi}_{\mathcal{H}^{5d+5}_\hbar(\R^d)}^2 k_0^{16} C^{k_0}   |\log(\tfrac{\hbar}{t})|^{k_0+22},
	\end{aligned}
\end{equation*}
where the constant $C$ depends on the single site potential $V$ and the coupling constant $\lambda$. In particular we have that the function is in  $L^2(\R^d)$ $\Pro$-almost surely. 
\end{lemma}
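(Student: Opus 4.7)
The plan is to reduce the second operator $\mathcal{I}_{2,1}^{\mathrm{rec}}(k_2,\boldsymbol{x}_2,\iota,\tfrac{\tau-1}{\tau_0}t;\hbar)$ to a setting where the techniques developed in Lemma~\ref{expansion_aver_bound_Mainterm_rec_trun22_5} can be reused, following the same template that converts Lemma~\ref{expansion_aver_bound_Mainterm_rec_trun22_3} into Lemma~\ref{expansion_aver_bound_Mainterm_rec_trun22_4} and Lemma~\ref{expansion_aver_bound_Mainterm_rec_trun21_1} into Lemma~\ref{expansion_aver_bound_Mainterm_rec_trun21_2}. First I would use Cauchy--Schwarz to replace the outer sum $\sum_{k_1,k_2=1}^{k_0}$ by a single sum over $k=k_1+k_2$ with a polynomial prefactor $k_0^4$, and note that since $\iota\in\mathcal{Q}_{k,2,2}^{3}$ fixes the positions $m_{1,1}=k-3$, $m_{2,1}=k-2$, there is only one configuration for each $k$.

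Next I would expand the $L^2$ norm, using the definition of $\mathcal{E}_{2,1}^{\mathrm{rec}}$ to pull out the time integral $\int_0^t \,ds_{k+1}$ and the unitary factor $U_{\hbar,\lambda}(-s_{k+1})U_{\hbar,0}(s_{k+1})$ as in \eqref{EQ:expansion_aver_bound_Mainterm_rec_trun22_5_1} of Lemma~\ref{expansion_aver_bound_Mainterm_rec_trun22_5}, and split according to whether the recollision index satisfies $l\leq k-4$ or $l=k-3$. The case $l\leq k-4$ is handled exactly as the first term of \eqref{EQ:expansion_aver_bound_Mainterm_rec_trun22_5_1}: apply Lemma~\ref{LE:crossing_dom_ladder} to linearise the $L^2$ norm into an expectation over pairings $(\sigma,\kappa)$ of $2(k-2+k_2)$ points, then follow the calculation leading to \eqref{EQ:expansion_aver_bound_Mainterm_rec_trun22_5_6} but with the extra $k_2$ potentials of $\mathcal{I}_{2,1}^{\mathrm{rec}}$ sitting at times in $[t_0,t]$. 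The time division produces an extra $\zeta_0=\zeta(\tau_0\hbar^{-1}t)$ in the $\nu$-representation step (exactly as in the proofs of Lemmas~\ref{expansion_aver_bound_Mainterm_2} and \ref{expansion_aver_bound_Mainterm_rec_2}), contributing at most $|\log(\tfrac{\hbar}{t})|^2$ additional powers, and the combinatorial bookkeeping from $\sigma\in\tilde{\mathcal{A}}(k-2,n_1,k_2,n_2)$ adds $2^k$ which is absorbed into $C^{k_0}$.

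The case $l=k-3$ is the main obstacle: as in Lemma~\ref{expansion_aver_bound_Mainterm_rec_trun22_5}, the near-diagonal recollision between positions $x_{k-2}$ and $x_{k-3}$ forces a continuation of the Duhamel expansion into a double Born series, producing the three operator families $\mathcal{I}^{\mathrm{dob}}_1, \mathcal{I}^{\mathrm{dob}}_2, \mathcal{I}^{\mathrm{dob}}_3$. Here I would rewrite the composition $\mathcal{E}_{2,1}^{\mathrm{rec}}(k_1,k_2,\cdot)\,\mathcal{I}_{2,1}^{\mathrm{rec}}(k_2,\cdot)$ by sliding the recently fully expanded term inside the appropriate time interval $[t_0,t]$ and then apply the analogues of Lemmas~\ref{expansion_aver_bound_Mainterm_rec_trun22_special_case_1}, \ref{expansion_aver_bound_Mainterm_rec_trun22_special_case_2}, \ref{expansion_aver_bound_Mainterm_rec_trun22_special_case_3} with the additional $k_2$ collision integrals appended at the right. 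The crucial observation, already used in Lemma~\ref{expansion_aver_bound_Mainterm_rec_trun22_special_case_3}, is that the integrals over $s_{k+\beta}$ and $\tilde{s}_{k+\beta+1}$ are each confined to the short interval $[0,\tau_0^{-1}\hbar^{-1}t]$, producing the $\tau_0^{-3}$ improvement (the third factor of $\tau_0^{-1}$ comes from the third short integral forced by the continuation). The extra factor $\tilde{\lambda}^{\beta}$ from the absolutely convergent Born series allows the $\sum_{\beta=1}^\infty \beta^2$ to converge.

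The main technical difficulty will be verifying that when $k_2\geq 1$ the time division factors propagate correctly through the resolvent and Fourier arguments of \eqref{EQ:expansion_aver_bound_3.recol_22_3.3_15}--\eqref{EQ:expansion_aver_bound_3.recol_22_3.3_16}: one must check that the additional $k_2$ short-time integrals $s_m\in[t_0,t]$ introduced by $\mathcal{I}_{2,1}^{\mathrm{rec}}$ do not interfere with the $\tau_0^{-3}$ savings that come from the double Born series terms, and that the relevant $n$-dependent factorial gains $((k-3-(n-2)_+)!)^{-1}$ combine with those from Lemma~\ref{LE:crossing_dom_ladder} to offset the $n!$ from the pairing sum, exactly as in the transition from Lemma~\ref{expansion_aver_bound_Mainterm_rec_trun21_1} to Lemma~\ref{expansion_aver_bound_Mainterm_rec_trun21_2}. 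Collecting the bounds, using $\rho t$, $\norm{\hat{V}}_{1,\infty,6d+6}$ and $\lambda$ to absorb the $|\alpha|+|\tilde\alpha|$ powers, performing the sums over $n,\sigma$ and $k$ via the usual $k_0^{16}C^{k_0}$ majorisation, and finally summing the absolutely convergent series in $\beta$, yields the stated estimate. The second estimate involving $\mathcal{E}_{2,j}^{\mathrm{rec}}(k_1+1,k_2,(x_{2,k_2},\boldsymbol{x}_1),\iota,\cdot)$ is proved in exactly the same way, with the only modification being the shared point $x_{2,k_2}$ between the two operators, handled by the same argument that takes Lemma~\ref{expansion_aver_bound_Mainterm_2} to Lemma~\ref{expansion_aver_bound_Mainterm_3}.
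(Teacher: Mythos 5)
Your proposal is correct and follows essentially the same route as the paper, whose own proof of this lemma is only a brief sketch instructing the reader to combine the arguments of Lemma~\ref{expansion_aver_bound_Mainterm_2}, Lemma~\ref{expansion_aver_bound_Mainterm_3} and Lemma~\ref{expansion_aver_bound_Mainterm_rec_trun22_5}, and to prove time-divided versions of Lemmas~\ref{expansion_aver_bound_Mainterm_rec_trun22_special_case_1}--\ref{expansion_aver_bound_Mainterm_rec_trun22_special_case_3} for the terms requiring continued expansion. Your write-up fills in exactly that combination — the reduction over $k_1+k_2$, the $l\leq k-4$ versus $l=k-3$ split, the appended $\mathcal{I}$-factors with the extra $\zeta_0$ from time division, the short-time integrals in $s_{k+\beta}$, $\tilde{s}_{k+\beta+1}$ and $s_{k+\beta+1}$ producing $\tau_0^{-3}$, the convergent $\sum_\beta \beta^2\tilde{\lambda}^\beta$, and the shared-point case handled as in Lemma~\ref{expansion_aver_bound_Mainterm_3} — so it matches the paper's intended argument.
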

\begin{proof}
The proof is done by combining the arguments from the proof of Lemma~\ref{expansion_aver_bound_Mainterm_2}, Lemma~\ref{expansion_aver_bound_Mainterm_3} and Lemma~\ref{expansion_aver_bound_Mainterm_rec_trun22_5}. Again we will need to continue expand certain terms. For the terms where we have to continue the expansion we also have to prove ``new'' versions of  
 Lemma~\ref{expansion_aver_bound_Mainterm_rec_trun22_special_case_1}, Lemma~\ref{expansion_aver_bound_Mainterm_rec_trun22_special_case_2}, and  Lemma~\ref{expansion_aver_bound_Mainterm_rec_trun22_special_case_3}. This is done by combining the arguments in the original proof with the arguments in Lemma~\ref{expansion_aver_bound_Mainterm_2} and Lemma~\ref{expansion_aver_bound_Mainterm_3}.
\end{proof}
\section{Regularisation of main terms}\label{Sec:reg_main_op}
In this section we will prove some results that enable us to ``regularise'' the main terms in our expansion. The regularisation can be thought of as going to complex valued energies, which of cause does not make physical sense. But this will ensure that the resolvents become bounded operators. The main thing is that these bounds will be uniform in $\hbar$.
We will from here and in the following need to choose $k_0$ and $\tau_0$ depending on $\hbar$ in the statement of some of our Lemmas. To be precise we will choose them as
\begin{equation*}
	k_0 = \frac{|\log(\hbar)|}{10 \log(|\log(\hbar)|)} \qquad\text{and}\qquad \tau_0 = \hbar^{-\frac{1}{3}}.
\end{equation*}
With this choice of $k_0$ we have that
\begin{equation}
	\lim_{\hbar\rightarrow 0 } \hbar^{\frac13} C^{k_0} |\log(\hbar)|^{k_0+50} =0,
\end{equation}
where $C$ is some constant. The choice of $\tau_0$ ensure the factor $\hbar^{\frac13}$, when needed. We will use the convention that if we sum up to either $k_0$ or $\tau_0$ we mean the sum over all natural numbers less than or equal to $k_0$ or $\tau_0$ respectively. 
\begin{lemma}\label{regularise_expansion}
Assume we are in the same setting as in Definition~\ref{functions_for_exp_def}, let $\varphi \in L^2(\R^d)$ and $\varepsilon>0$ be given. Moreover let
 \begin{equation*}
	k_0 = \frac{|\log(\hbar)|}{10 \log(|\log(\hbar)|)} .
\end{equation*} 
Then there exists $\hbar_0,\gamma_0>0$ such that 
\begin{equation*}
\sup_{\hbar\in(0,\hbar_0]} \sum_{k=1}^{k_0}\mathbb{E} \bigg[\Big\lVert  \sum_{\boldsymbol{x}\in\mathcal{X}_{\neq}^{k}}  \mathcal{I}_{0,0}(k,\boldsymbol{x},t;\hbar)\varphi-\mathcal{I}_{0,0}^\gamma(k,\boldsymbol{x},t;\hbar)\varphi \Big\rVert_{L^2(\R^d)}^2	\bigg]^\frac{1}{2}  \leq \varepsilon,
\end{equation*}
for all  $\gamma\in[0,\gamma_0]$, where $\gamma_0$ and $\hbar_0$ are independent of each other. 
\end{lemma}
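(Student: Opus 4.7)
My plan is to adapt the variance estimate of Lemma~\ref{expansion_aver_bound_Mainterm} to the difference operator $\mathcal{I}_{0,0}-\mathcal{I}_{0,0}^{\gamma}$ and extract an extra factor of $\gamma^{1/3}$. Using the second representation of Observation~\ref{obs_form_I_op_kernel} together with~\eqref{obs_con_form_psi}, the kernels of $\mathcal{I}_{0,0}(k,\boldsymbol{x},t;\hbar)$ and $\mathcal{I}_{0,0}^{\gamma}(k,\boldsymbol{x},t;\hbar)$ in momentum representation differ only through the regularising factor $e^{-\gamma \boldsymbol{t}_{1,a_k}^{+}/2}$ that each $\Psi_{\alpha_m}^{\gamma}$ carries relative to $\Psi_{\alpha_m}$. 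Hence the kernel of the difference is that of $\mathcal{I}_{0,0}$ multiplied by the scalar $1-e^{-\gamma \boldsymbol{t}_{1,a_k}^{+}/2}$.

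To extract the $\gamma^{1/3}$ factor, I iterate $1-e^{-(a+b)}\leq (1-e^{-a})+(1-e^{-b})$ to obtain
\begin{equation*}
1-e^{-\gamma\boldsymbol{t}_{1,a_k}^{+}/2}\leq \sum_{i=1}^{a_k}(1-e^{-\gamma t_i/2}),
\end{equation*}
and distribute this sum to repeat the arguments of Lemma~\ref{expansion_aver_bound_Mainterm} one summand at a time. In each summand, the proof proceeds verbatim except that precisely one of the one-dimensional integrals in the internal time $t_i$ produced by the Fourier integration-by-parts step has integrand $(1-e^{-\gamma t_i/2})\max(1,|t_i|)^{-d/2}$ rather than $\max(1,|t_i|)^{-d/2}$. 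By Lemma~\ref{LE:gamma_reg_time_int}, this integral is bounded by $5(\gamma/2)^{1/3}$; this is exactly where the restriction $d=3$ is used, as only then does the exponent $d/2$ match the $3/2$ of Lemma~\ref{LE:gamma_reg_time_int}. Summing over $i\leq a_k$ and absorbing the extra factor of $a_k$ into the geometric $\alpha$-sum through the standing coupling constant assumption $\lambda C_d\|\hat V\|_{1,\infty,5d+5}<1$, I obtain a bound on $\mathbb{E}\|(\mathcal{I}_{0,0}-\mathcal{I}_{0,0}^{\gamma})\varphi\|_{L^{2}}^{2}$ of the same shape as in Lemma~\ref{expansion_aver_bound_Mainterm} but multiplied by $\gamma^{1/3}$.

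The principal obstacle is that the $\kappa\neq\mathrm{id}$ (crossing) contribution in Lemma~\ref{expansion_aver_bound_Mainterm} is expressed in terms of $\|\varphi\|_{\mathcal{H}^{2d+2}_\hbar}^{2}$, which is generally infinite for a mere $\varphi\in L^{2}$. I plan to circumvent this by a density argument: given $\varepsilon>0$ pick $\varphi_\delta\in\mathcal{S}(\R^d)$ with $\|\varphi-\varphi_\delta\|_{L^2}<\delta$, note that $\|\varphi_\delta\|_{\mathcal{H}^{2d+2}_\hbar}\leq \|\varphi_\delta\|_{H^{2d+2}}$ is uniformly bounded for $\hbar\in(0,1]$, and decompose
\begin{equation*}
(\mathcal{I}_{0,0}-\mathcal{I}_{0,0}^{\gamma})\varphi=(\mathcal{I}_{0,0}-\mathcal{I}_{0,0}^{\gamma})\varphi_\delta+\mathcal{I}_{0,0}(\varphi-\varphi_\delta)-\mathcal{I}_{0,0}^{\gamma}(\varphi-\varphi_\delta).
\end{equation*}
The first piece is controlled by the $\gamma^{1/3}$-improved bound applied to $\varphi_\delta$. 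For the remaining two pieces, applied to $\varphi-\varphi_\delta$, the ladder ($\kappa=\mathrm{id}$) part of Lemma~\ref{expansion_aver_bound_Mainterm} gives a contribution summable to $\sum_{k=1}^{k_0}(C^{k}/k!)^{1/2}\delta$, which is small in $\delta$ uniformly in $\hbar$ and $\gamma$; the crossing part is handled by revisiting the $p_0$-integral in Lemma~\ref{expansion_aver_bound_Mainterm}, dropping the weight $\langle p_0\rangle^{4d+4}$ in favour of crude $L^{\infty}_{p_0}$ bounds on the resolvent factors, which loses only polylogarithmic powers of $\hbar^{-1}$ that are swallowed by the prefactor $\sqrt{\hbar}\,|\log(\hbar/t)|^{(k_0+3)/2}=o_{\hbar\to 0}(1)$ enabled by the stated choice $k_0=|\log\hbar|/(10\log|\log\hbar|)$. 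Choosing $\delta$, then $\gamma_0$, and finally $\hbar_0$ small in that order yields the claim.
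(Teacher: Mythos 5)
Your extraction of the $\gamma^{1/3}$ gain is essentially the paper's own argument: the paper localises the regularisation difference in a single factor via the telescoping identity $\prod_m A_m-\prod_m B_m=\sum_m A_1\cdots(A_m-B_m)\cdots B_k$ applied to the $\Psi_{\alpha_m}$'s, while you factor out $1-e^{-\gamma\boldsymbol{t}^{+}_{1,a_k}/2}$ and use subadditivity; both routes reduce to inserting one factor $1-e^{-\gamma t_i/2}$ into a single internal-time integral, applying Lemma~\ref{LE:gamma_reg_time_int}, and absorbing the resulting counting factor ($a_k$ for you, $(\alpha_\nu-1)(\tilde\alpha_\nu-1)$ in the paper) into the convergent $\alpha$-sum; in both arguments the crossing ($\kappa\neq\mathrm{id}$) terms receive no $\gamma$-improvement and are handled by the factor $\hbar$ together with the choice of $k_0$. (A side remark: nothing here forces $d=3$; since $\max(1,t^{d/2})\geq\max(1,t^{3/2})$ for $d\geq3$, Lemma~\ref{LE:gamma_reg_time_int} applies verbatim in all dimensions treated in the paper.)

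The genuine gap is in your density step. The crossing estimate of Lemma~\ref{expansion_aver_bound_Mainterm} does not use the weight $\langle q_0\rangle^{4d+4}$ only to state the answer in a Sobolev norm: that weight is what generates, by telescoping along the momentum chain, the absolute weights $\langle q_{\kappa(i^{*})-1}\rangle^{-d-1}$, $\langle q_{\kappa(i^{*})}\rangle^{-d-1}$, $\langle q_{\kappa(i^{*})+1}\rangle^{-d-1}$ and $\langle q_{\kappa(i^{*})}+q_{i^{*}-1}-q_{\kappa(i^{*})-1}\rangle^{-d-1}$ that are indispensable for the applications of Lemma~\ref{LE:est_res_combined} and Lemma~\ref{LE:resolvent_int_est} at the crossing, where the potential decay sits in the wrong variables and is explicitly discarded before \eqref{EQ:exp_aver_b_Mainterm_13}. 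Moreover, the quantitative claim you rely on is false: the crude sup-bound of a resolvent factor is $\sup_{q_0}|\tfrac12 q_0^2+\nu+i\zeta|^{-1}=\zeta^{-1}\approx t/\hbar$, a polynomial rather than polylogarithmic loss, and there are two such factors in $q_0$, which more than consumes the single factor $\hbar$ the crossing estimate produces. Hence the crossing contribution of $\mathcal{I}_{0,0}(\varphi-\varphi_\delta)$ and $\mathcal{I}_{0,0}^{\gamma}(\varphi-\varphi_\delta)$ is not controlled by your sketch, and the decomposition does not close for bare $L^2$ data. The paper sidesteps this by estimating the crossing terms exactly as in Lemma~\ref{expansion_aver_bound_Mainterm}, i.e.\ in terms of $\|\varphi\|_{\mathcal{H}^{2d+2}_\hbar}$, tacitly using that in every application $\varphi=\varphi_\hbar$ has uniformly bounded semiclassical Sobolev norms (the hypothesis ``$\varphi\in L^2$'' in the statement is looser than what the proof actually uses). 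If you adopt that uniform Sobolev hypothesis, your argument reduces to the paper's and is correct; if you insist on $L^2$ data only, you would need a new crossing estimate with only $\|\varphi\|_{L^2}$ on the right-hand side and some positive power of $\hbar$, which neither your proposal nor the paper supplies.
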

\begin{proof}
By definition,
\begin{equation*}
	\begin{aligned}
 	\MoveEqLeft  \sum_{\boldsymbol{x}\in\mathcal{X}_{\neq}^{k}}  \mathcal{I}_{0,0}(k,\boldsymbol{x},t;\hbar)\varphi(x) -\mathcal{I}_{0,0}^\gamma(k,\boldsymbol{x},t;\hbar)\varphi (x) 
	\\
	= {}& \frac{1}{(2\pi\hbar)^d\hbar^{k}} \sum_{\alpha \in \N^k} (i\lambda)^{|\alpha|} \sum_{\boldsymbol{x} \in \mathcal{X}_{\neq}^k} \int_{[0,t]_{\leq}^k} \int_{\R^{(k+1)d}}
	 e^{ i   \langle  \hbar^{-1}x,p_{k} \rangle} \prod_{m=1}^k  e^{ -i   \langle  \hbar^{-1/d}x_m,p_{m}-p_{m-1} \rangle}   e^{i  s_{m} \frac{1}{2}\hbar^{-1} (p_{m}^2-p_{m-1}^2)}
	\\
	&\times \Big\{\prod_{m=1}^k \Psi_{\alpha_m}(p_{m},p_{m-1}) - \prod_{m=1}^k \Psi_{\alpha_m}^\gamma(p_{m},p_{m-1})   \, \Big\}  e^{i  t \frac{1}{2}\hbar^{-1} p_0^2} \hat{\varphi}(\tfrac{p_0}{\hbar}) \,d\boldsymbol{p} d\boldsymbol{s},
	\end{aligned}
\end{equation*}
where we have omitted the dependece on $\hbar^{-1}(s_{m-1}-s_{m})$ and  $V$ from all $\Psi_{\alpha_m}$ and $\Psi_{\alpha_m}^\gamma$. For the difference of the products we will use the identity
\begin{equation}\label{def_of_prod}
	\prod_{m=1}^k  A_m - \prod_{m=1}^k B_m = \sum_{m=1}^k A_1\cdots A_{m-1}(A_m-B_m)B_{m+1}\cdots B_k,
\end{equation}
with the convention $A_0=B_{k+1}=1$. We then have that
\begin{equation}\label{splitting_fuct_cont_1}
	\begin{aligned}
 	\sum_{\boldsymbol{x}\in\mathcal{X}_{\neq}^{k}}  \mathcal{I}_{0,0}(k,\boldsymbol{x},t;\hbar)\varphi(x) -\mathcal{I}_{0,0}^\gamma(k,\boldsymbol{x},t;\hbar)\varphi (x) =    \sum_{\nu=1}^k \mathcal{I}_\nu^\gamma(k,t;\hbar)\varphi(x),
	\end{aligned}
\end{equation}
where
\begin{equation*}
	\begin{aligned}
 	\mathcal{I}_\nu^\gamma(k,t;\hbar)\varphi(x) = {}& \frac{1}{(2\pi\hbar)^d\hbar^{k}} \sum_{\alpha \in \N^k} (i\lambda)^{|\alpha|} \sum_{\boldsymbol{x} \in \mathcal{X}_{\neq}^k} \int_{[0,t]_{\leq}^k} \int
	 e^{ i   \langle  \hbar^{-1}x,p_{k} \rangle} \Big\{\prod_{m=1}^k  e^{ -i   \langle  \hbar^{-1/d}x_m,p_{m}-p_{m-1} \rangle}   
	\\
	&\times e^{i  s_{m} \frac{1}{2}\hbar^{-1} (p_{m}^2-p_{m-1}^2)}\tilde{\Psi}_{\alpha_m}^{\gamma,\nu}(p_{m},p_{m-1})   \, \Big\}  e^{i  t \frac{1}{2}\hbar^{-1} p_0^2} \hat{\varphi}(\tfrac{p_0}{\hbar}) \,d\boldsymbol{p} d\boldsymbol{s},
	\end{aligned}
\end{equation*}
with the notation
\begin{equation*}
	\tilde{\Psi}_{\alpha_m}^{\gamma,\nu}(p_{m},p_{m-1})  = \begin{cases}
	\Psi_{\alpha_m}(p_{m},p_{m-1}) &\text{if $m<\nu$}
	\\
	\Psi_{\alpha_m}(p_{m},p_{m-1}) -\Psi_{\alpha_m}^\gamma(p_{m},p_{m-1}) &\text{if $m=\nu$}
	\\
	\Psi_{\alpha_m}^\gamma(p_{m},p_{m-1}) &\text{if $m>\nu$}.
	\end{cases}
\end{equation*}
Note that for the cases where $\alpha_\nu=1$ we have that $\tilde{\Psi}_{\alpha_\nu}^{\gamma,\nu}(p_{\nu},p_{\nu-1}) =0$ for all $(p_{\nu},p_{\nu-1})$. Hence in the following we will only consider the cases where $\alpha_\nu>1$. Using \eqref{splitting_fuct_cont_1}  and the triangle equality for $L^2(\R^d\times\Omega,dx\otimes\mathbb{P})$-norm we see that
\begin{equation}\label{EQ:regularise_expansion_1}
\mathbb{E} \bigg[\Big\lVert  \sum_{\boldsymbol{x}\in\mathcal{X}_{\neq}^{k}}  \mathcal{I}_{0,0}(k,\boldsymbol{x},t;\hbar)\varphi-\mathcal{I}_{0,0}^\gamma(k,\boldsymbol{x},t;\hbar)\varphi \Big\rVert_{L^2(\R^d)}^2	\bigg]^\frac{1}{2}    \leq  \sum_{\nu=1}^k \mathbb{E} \Big[\big\lVert  \mathcal{I}_\nu^\gamma(k,t;\hbar)\varphi \big\rVert_{L^2(\R^d)}^2	\Big]^\frac{1}{2}  .
\end{equation}
Since the form of the function $\mathcal{I}_\nu^\gamma(k,\alpha;\hbar)\varphi(x)$ is essentially the same as $\sum_{\boldsymbol{x}\in\mathcal{X}_{\neq}^{k}}  \mathcal{I}_{0,0}(k,\boldsymbol{x},t;\hbar)\varphi(x)$ we will use an argument similar to that used in Lemma~\ref{expansion_aver_bound_Mainterm}. We again expand the $L^2$-norm and get
\begin{equation}\label{EQ:regularise_expansion_2}
	\big\lVert  \mathcal{I}_\nu^\gamma(k,t;\hbar)\varphi \big\rVert_{L^2(\R^d)}^2
	=\sum_{n=0}^k \sum_{\sigma^1,\sigma^2\in\mathcal{A}(k,n)} \sum_{\kappa\in\mathcal{S}_n}\sum_{\alpha,\tilde{\alpha}\in\N^k} \mathcal{T}(n,\alpha,\tilde{\alpha},\sigma^1,\sigma^2,\kappa),
\end{equation}
where the numbers $\mathcal{T}(n,\alpha,\tilde{\alpha},\sigma^1,\sigma^2,\kappa)$ are given by
\begin{equation*}
	\mathcal{T}(n,\alpha,\tilde{\alpha},\sigma^1,\sigma^2,\kappa)
	= \sum_{(\boldsymbol{x},\boldsymbol{\tilde{x}})\in \mathcal{X}_{\neq}^{2k}}  \prod_{i=1}^n \rho^{-1}\delta(x_{\sigma_i^1}- \tilde{x}_{\sigma_{\kappa(i)^2}})  \int_{\R^d} \mathcal{I}_\nu^\gamma (k,\boldsymbol{x},\alpha,t;\hbar) \varphi(x)\overline{ \mathcal{I}_\nu^\gamma(k,\boldsymbol{\tilde{x}},\tilde{\alpha},t;\hbar) \varphi(x)}  \,dx,
\end{equation*}
where again $\mathcal{I}_\nu^\gamma (k,\boldsymbol{x},\alpha,t;\hbar) \varphi(x)$ is defined as $ \mathcal{I}_\nu^\gamma(k,t;\hbar)\varphi (x)$ for a fixed $\boldsymbol{x}$ and $\alpha$. Again we consider the direct and crossing terms separately. For the direct terms we get from arguing as in the proof of Lemma~\ref{expansion_aver_bound_Mainterm} that
   \begin{equation}\label{EQ:regularise_expansion_3}
   	\begin{aligned}
	 |\Aver{\mathcal{T}(n,\alpha,\tilde{\alpha},\sigma^1,\sigma^2,\mathrm{id})}|
	 \leq {}&  C^{|\alpha|+|\tilde{\alpha}|}  \norm{\varphi}_{L^2(\R^d)}^2 \left(\lambda \norm{\hat{V}}_{1,\infty,3d+3}\right)^{|\alpha|+|\tilde{\alpha}|}  \frac{ (\rho t)^{2k-n}}{k!(k-n)!}  
	 \\
	 &\times  \int_{\R_{+}^{a_k}}   \frac{1-e^{-\gamma\frac{1}{2}\boldsymbol{t}_{1,\alpha_\nu-1}^{+}}}{\prod_{i=1}^{a_k}\max(1, | t_i |)^\frac{d}{2}}  d\boldsymbol{t} 
	  \int_{\R_{+}^{\tilde{a}_k}} \frac{1-e^{-\gamma\frac{1}{2}\boldsymbol{t}_{1,\tilde{\alpha}_\nu-1}^{+}}}{  \prod_{i=1}^{\tilde{a}_k}  \max(1, | t_i |)^\frac{d}{2}}  d\boldsymbol{t}.
	\end{aligned}
\end{equation}
The difference to the estimate obtained in Lemma~\ref{expansion_aver_bound_Mainterm} is the two times integrals we have to estimate differently in this case.  The two factors $1-e^{-\gamma\frac{1}{2}\boldsymbol{t}_{1,\alpha_\nu-1}^{+}}$ and $1-e^{-\gamma\frac{1}{2}\boldsymbol{t}_{1,\tilde{\alpha}_\nu-1}^{+}}$ is due to the definition of the functions $\tilde{\Psi}_{\alpha_m}^{\gamma,\nu}$. We have estimated all other exponentials with a $\gamma$ by $1$.  
Using the identity in \eqref{def_of_prod} and the fact that the exponentials are bounded by one, we get
  \begin{equation*}
	\begin{aligned}
	 (1-e^{-\gamma\frac{1}{2}\boldsymbol{t}_{1,\alpha_\nu-1}^{+}})(1-e^{-\gamma\frac{1}{2}\boldsymbol{t}_{1,\tilde{\alpha}_\nu-1}^{+}}) \leq \sum_{l_1=1}^{\alpha_\nu-1} \sum_{l_2=1}^{\tilde{\alpha}_\nu-1} (1-e^{-\gamma\frac{1}{2}t_{l_1}})(1- e^{-\gamma\frac{1}{2}t_{l_2+\alpha_\nu}}).
	\end{aligned}
\end{equation*} 
Using the inequality we have just obtained we have that 
\begin{equation}\label{EQ:regularise_expansion_4}
	\begin{aligned}
	\MoveEqLeft \int_{\R_{+}^{a_k}}   \frac{1-e^{-\gamma\frac{1}{2}\boldsymbol{t}_{1,\alpha_\nu-1}^{+}}}{\prod_{i=1}^{a_k}\max(1, | t_i |)^\frac{d}{2}}  d\boldsymbol{t} 
	  \int_{\R_{+}^{\tilde{a}_k}} \frac{1-e^{-\gamma\frac{1}{2}\boldsymbol{t}_{1,\tilde{\alpha}_\nu-1}^{+}}}{  \prod_{i=1}^{\tilde{a}_k}  \max(1, | t_i |)^\frac{d}{2}}  d\boldsymbol{t} 
	\\
	&\leq (\alpha_\nu-1)(\tilde{\alpha}_\nu-1)  \left( \frac{d}{d-2}\right)^{a_k+\tilde{a}_k-2} \left( \int_{\R_{+}} \frac{(1-e^{-\gamma\frac{1}{2}t})}{\max(1, | t |)^\frac{d}{2}}  dt \right)^2 \leq (\alpha_\nu-1)(\tilde{\alpha}_\nu-1)  C^{a_k+\tilde{a}_k-2}  \gamma^{\frac13}, 
	\end{aligned}
\end{equation}
where we in the last inequality have used Lemma~\ref{LE:gamma_reg_time_int}. With these estimates we get from combining  \eqref{EQ:regularise_expansion_3}, \eqref{EQ:regularise_expansion_4} that
   \begin{equation}\label{EQ:regularise_expansion_5}
   	\begin{aligned}
	 |\Aver{\mathcal{T}(n,\alpha,\tilde{\alpha},\sigma^1,\sigma^2,\mathrm{id})}|
	 \leq (\alpha_\nu-1)(\tilde{\alpha}_\nu-1)  \left(\lambda C_d\norm{\hat{V}}_{1,\infty,3d+3}\right)^{|\alpha|+|\tilde{\alpha}|}   \gamma^{\frac{1}{3}}  \norm{\varphi}_{L^2(\R^d)}^2   \frac{ (\rho t)^{2k-n}}{k!(k-n)!}  
	\end{aligned}
\end{equation} 
    for all $\gamma\in[0,\gamma_0]$. For the crossing terms we get the estimate
       \begin{equation}\label{EQ:regularise_expansion_6}
   	\begin{aligned}
	|\Aver{\mathcal{T}(n,\alpha,\tilde{\alpha},\sigma^1,\sigma^2,\kappa)}|
	 \leq      \left(C_d \lambda\norm{\hat{V}}_{1,\infty,3d+3}\right)^{|\alpha|+|\tilde{\alpha}|}     \frac{\rho(\rho t)^{2k-n-1} \hbar |\log(\tfrac{\hbar}{t})|^{n+3}}{(k-1)!(k-n)!} \norm{\varphi}_{\mathcal{H}^{2d+2}_\hbar(\R^d)}^2,
	\end{aligned}
\end{equation}
This estimate is obtained complete analogous to the estimate obtained in the proof of Lemma~\ref{expansion_aver_bound_Mainterm}.   
Combining the estimates in \eqref{EQ:regularise_expansion_2} \eqref{EQ:regularise_expansion_5} and \eqref{EQ:regularise_expansion_6} we get
\begin{equation}\label{EQ:regularise_expansion_7}
	\begin{aligned}
	\MoveEqLeft \mathbb{E}\Big[\big\lVert  \mathcal{I}_\nu^\gamma(k,t;\hbar)\varphi \big\rVert_{L^2(\R^d)}^2\Big]
	\leq \sum_{\alpha,\tilde{\alpha}\in\N^k} \norm{\varphi}_{\mathcal{H}^{2d+2}_\hbar(\R^d)}^2  \left(C_d\lambda\norm{\hat{V}}_{1,\infty,3d+3}\right)^{|\alpha|+|\tilde{\alpha}|}   
	\\
	&\times  
	\sum_{n=0}^k  \Big[  (\alpha_\nu-1)(\tilde{\alpha}_\nu-1) \gamma^{\frac13} \frac{ (\rho t)^{2k-n}}{k!(k-n)!} 	+ \hbar |\log(\tfrac{\hbar}{t})|^{n+3} \frac{\rho(\rho t)^{2k-n-1} n!}{(k-1)!(k-n)!} \Big],
	\end{aligned}
\end{equation}
where we have used that the number of elements in $\mathcal{A}(k,n)$ is bounded by $2^k$ and that the number of elements in $\mathcal{S}_n$ is $n!$. Evaluating the sums in $\alpha$ and $\tilde{\alpha}$ and combing the estimates in \eqref{EQ:regularise_expansion_1} and \eqref{EQ:regularise_expansion_7} we obtain that
\begin{equation*}
	\begin{aligned}
	\MoveEqLeft \sum_{k=1}^{k_0}\mathbb{E} \bigg[\Big\lVert  \sum_{\boldsymbol{x}\in\mathcal{X}_{\neq}^{k}}  \mathcal{I}_{0,0}(k,\boldsymbol{x},t;\hbar)\varphi-\mathcal{I}_{0,0}^\gamma(k,\boldsymbol{x},t;\hbar)\varphi \Big\rVert_{L^2(\R^d)}^2	\bigg]^\frac{1}{2}    
	\\
	&\leq C \norm{\varphi}_{\mathcal{H}^{2d+2}_\hbar(\R^d)} \sum_{k=1}^{k_0}  \sum_{\nu=1}^k \Big[  \sum_{n=0}^k \gamma^{\frac13} \frac{ (\rho t)^{2k-n}}{k!(k-n)!} 	+ \hbar |\log(\tfrac{\hbar}{t})|^{n+3} \frac{\rho(\rho t)^{2k-n-1} n!}{(k-1)!(k-n)!} \Big]^{\frac{1}{2}}
	\\
	&\leq C \norm{\varphi}_{\mathcal{H}^{2d+2}_\hbar(\R^d)} \sum_{k=1}^{k_0}  \sum_{\nu=1}^k \Big[  \gamma^{\frac13} \frac{ (\rho t)^{k} e^{\rho t}}{k!} 	+  C \hbar |\log(\tfrac{\hbar}{t})|^{k_0+3} \Big]^{\frac{1}{2}},
	\end{aligned}
\end{equation*}
where the constant $C$ depends on the potential and the coupling constant but is independent of $\gamma$ and $\hbar$. Using the subadditivity of the square root and that $\sum_{k=1}^{\infty} \frac{k (\sqrt{\rho t})^{k} }{\sqrt{k!}}$ is finite we get that
\begin{equation}\label{EQ:regularise_expansion_8}
	\begin{aligned}
	\MoveEqLeft \sum_{k=1}^{k_0}\mathbb{E} \bigg[\Big\lVert  \sum_{\boldsymbol{x}\in\mathcal{X}_{\neq}^{k}}  \mathcal{I}_{0,0}(k,\boldsymbol{x},t;\hbar)\varphi-\mathcal{I}_{0,0}^\gamma(k,\boldsymbol{x},t;\hbar)\varphi \Big\rVert_{L^2(\R^d)}^2	\bigg]^\frac{1}{2}    
\leq C \norm{\varphi}_{\mathcal{H}^{2d+2}_\hbar(\R^d)} \Big( \gamma^{\frac16} +\sqrt{ \hbar k_0^2 |\log(\tfrac{\hbar}{t})|^{k_0+3}} \Big). 
	\end{aligned}
\end{equation}
Since we have that $\sup_{\hbar\in I}\norm{\varphi}_{\mathcal{H}^{2d+2}_\hbar(\R^d)}$ is uniformly bounded and $\hbar k_0^2 |\log(\tfrac{\hbar}{t})|^{k_0+3}\rightarrow 0$ as $\hbar\rightarrow 0$ due to our assumptions on $k_0$ we get the existence of $\gamma_0$ and $\hbar_0$ depending on $\varepsilon$ from \eqref{EQ:regularise_expansion_8} such that
\begin{equation*}
\sup_{\hbar\in(0,\hbar_0]} \sum_{k=1}^{k_0}\mathbb{E} \bigg[\Big\lVert  \sum_{\boldsymbol{x}\in\mathcal{X}_{\neq}^{k}}  \mathcal{I}_{0,0}(k,\boldsymbol{x},t;\hbar)\varphi-\mathcal{I}_{0,0}^\gamma(k,\boldsymbol{x},t;\hbar)\varphi \Big\rVert_{L^2(\R^d)}^2	\bigg]^\frac{1}{2}  \leq \varepsilon,
\end{equation*}
for all  $\gamma\in[0,\gamma_0]$, where $\gamma_0$ and $\hbar_0$ are independent of each other. This concludes the proof.
\end{proof}
In the following we will use the operator $\mathcal{I}_{\infty}^\gamma(k,\boldsymbol{x},t;\hbar)$, which is defined analogous to $\mathcal{I}_{0,0}^\gamma(k,\boldsymbol{x},t;\hbar)$ but with $\Psi_{\alpha_m}^\gamma(p_{m},p_{m-1},\hbar^{-1}(s_{m-1}-s_m);V)$ replaced by $\Psi_{\alpha_m}^\gamma(p_{m},p_{m-1},\infty;V)$.
\begin{lemma}\label{regularise_expansion_in_time}
Assume we are in the same setting as in Definition~\ref{functions_for_exp_def}, let $\varphi \in L^2(\R^d)$ and let $\gamma>0$. Moreover let
 \begin{equation*}
	k_0 = \frac{|\log(\hbar)|}{10 \log(|\log(\hbar)|)} .
\end{equation*} 
Then 
\begin{equation*}
\lim_{\hbar\rightarrow 0}\sum_{k=1}^{k_0}     \mathbb{E} \bigg[\Big\lVert \sum_{\boldsymbol{x}\in\mathcal{X}^{k}_{\neq}}  \mathcal{I}_{0,0}^\gamma(k,\boldsymbol{x},t;\hbar) \varphi-\mathcal{I}_{\infty}^\gamma(k,\boldsymbol{x},t;\hbar)\varphi \Big\rVert_{L^2(\R^d)}^2	\bigg]^\frac{1}{2}  = 0.
\end{equation*}
\end{lemma}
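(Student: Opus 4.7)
The plan is to mimic the proof of Lemma~\ref{regularise_expansion}. Writing both $\mathcal{I}_{0,0}^\gamma$ and $\mathcal{I}_\infty^\gamma$ in the momentum representation of Observation~\ref{obs_form_I_op_kernel}, their integrands differ only in that one is a product of factors $\Psi_{\alpha_m}^\gamma(p_m, p_{m-1}, \hbar^{-1}(s_{m-1} - s_m); V)$ whereas the other is the same product with $\hbar^{-1}(s_{m-1} - s_m)$ replaced by $\infty$. Applying the product identity \eqref{def_of_prod} telescopes the difference into a sum over $\nu \in \{1, \dots, k\}$ of terms $\mathcal{J}_\nu^\gamma(k, t; \hbar)$ in which only the $\nu$-th factor carries the difference $\Psi_{\alpha_\nu}^\gamma(p_\nu, p_{\nu-1}, \infty; V) - \Psi_{\alpha_\nu}^\gamma(p_\nu, p_{\nu-1}, \hbar^{-1}(s_{\nu-1} - s_\nu); V)$, with the $m < \nu$ factors kept in their original form and the $m > \nu$ factors replaced by their $\infty$-cutoff versions. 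By the triangle inequality in $L^2(\R^d \times \Omega, dx \otimes \mathbb{P})$ it suffices to estimate each $\mathbb{E}[\lVert \sum_{\boldsymbol{x}} \mathcal{J}_\nu^\gamma \rVert_{L^2}^2]^{1/2}$ individually.

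The crucial new input is a tail estimate. Using representation~\eqref{obs_con_form_psi}, the $\nu$-th factor becomes an integral over the region $\boldsymbol{t}_{1, \alpha_\nu - 1}^+ > \hbar^{-1}(s_{\nu - 1} - s_\nu)$ of an integrand already carrying the damping factor $e^{-\gamma \boldsymbol{t}^+/2}$. Since on this region $e^{-\gamma \boldsymbol{t}^+/2} \leq e^{-\gamma T/2}$ with $T = \hbar^{-1}(s_{\nu-1} - s_\nu)$, the effect inside the $\boldsymbol{s}$-integration of $\mathcal{J}_\nu^\gamma$ is to produce an additional factor $e^{-\gamma \hbar^{-1}(s_{\nu-1} - s_\nu)/2}$ beyond the bounds used in Lemma~\ref{expansion_aver_bound_Mainterm}. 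Integrating this factor in $s_\nu$ yields $\int_0^\infty e^{-\gamma \hbar^{-1} u / 2}\, du = 2\hbar/\gamma$, replacing one of the time-factors in the $\boldsymbol{s}$-integration bounds \eqref{EQ:exp_aver_b_Mainterm_5} and \eqref{EQ:exp_aver_b_Mainterm_11} by $2\hbar/\gamma$, and hence yielding an overall gain of order $\hbar/(\gamma t)$ relative to the corresponding estimate from Lemma~\ref{expansion_aver_bound_Mainterm}.

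The remainder of the argument then proceeds exactly as in the proof of Lemma~\ref{expansion_aver_bound_Mainterm}, separately for the direct ($\kappa = \mathrm{id}$) and crossing ($\kappa \neq \mathrm{id}$) contributions. Summing the resulting bounds over $n$, over $\alpha, \tilde{\alpha} \in \N^k$, and over the pairings $\sigma^1, \sigma^2, \kappa$, one obtains for each $\nu$ a bound of the form $C^k (\hbar/\gamma) \bigl[ \|\varphi\|_{L^2}^2/k! + \hbar k |\log(\hbar/t)|^{k+3} \|\varphi\|_{\mathcal{H}^{2d+2}_\hbar}^2 \bigr]$, that is, exactly the bound of Lemma~\ref{expansion_aver_bound_Mainterm} multiplied by $\hbar/\gamma$. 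Summing over $\nu \in \{1, \dots, k\}$ and $k \in \{1, \dots, k_0\}$, taking square roots, and using the absolute convergence of the Born series ensured by assumption~\eqref{assump_1}, the total is bounded by a quantity of order $\sqrt{\hbar/\gamma} \cdot k_0^2 C^{k_0} |\log\hbar|^{(k_0+3)/2}$, which tends to zero as $\hbar \to 0$ for the prescribed scaling $k_0 = |\log\hbar|/(10 \log|\log\hbar|)$ and any fixed $\gamma > 0$.

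The main technical point is verifying that the extra factor $e^{-\gamma \hbar^{-1}(s_{\nu-1} - s_\nu)/2}$ can indeed be extracted and integrated cleanly in the crossing-term analysis, which is carried out via the $\nu$-representation combined with the regulator $\zeta$ and the auxiliary characteristic functions $f(\boldsymbol{s}, \boldsymbol{\tilde s})$. This works because the new factor depends only on the difference $s_{\nu-1} - s_\nu$, and is independent of the momenta, of the variables $\nu, \tilde\nu$, and of the cutoff functions $f$; one may therefore perform the $s_\nu$-integration first to extract $2\hbar/\gamma$, after which the residual $\boldsymbol{s}, \boldsymbol{\tilde s}$-integral has exactly the structure treated in Lemma~\ref{expansion_aver_bound_Mainterm} and contributes the same logarithmic factors, giving the required bound.
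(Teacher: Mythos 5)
Your proposal is essentially the paper's own argument: the same telescoping via \eqref{def_of_prod} into terms $\mathcal{J}_\nu^\gamma$, and the same key mechanism, namely that the tail $\boldsymbol{t}_{1,\alpha_\nu-1}^{+}>\hbar^{-1}(s_{\nu-1}-s_\nu)$ of the $\gamma$-damped time integral produces a decaying function of $\hbar^{-1}(s_{\nu-1}-s_\nu)$ whose integration in one collision time yields the factor $\hbar/\gamma$ (the paper uses the bound $\boldsymbol{1}_{[T,\infty)}(\boldsymbol{t}^+)e^{-\gamma\boldsymbol{t}^+/2}\leq C\max(1,(\gamma T)^d)^{-1}$ rather than extracting $e^{-\gamma T/2}$, but this is the same estimate), after which the argument of Lemma~\ref{expansion_aver_bound_Mainterm} and the choice of $k_0$ give the limit. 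The one place where you over-reach is the claim that the crossing ($\kappa\neq\mathrm{id}$) contributions also pick up the extra $\hbar/\gamma$: when $s_\nu$ (or the corresponding gap variable) coincides with one of the time variables that must be integrated oscillatorily to produce a resolvent in the $\zeta$-regularised analysis, you cannot simply "perform the $s_\nu$-integration first" and recover exactly the structure of Lemma~\ref{expansion_aver_bound_Mainterm}; the exponential then only shifts the regularisation and yields no gain. This is harmless, though: as in the paper, one bounds the extra exponential by $1$ for the crossing terms and uses the crossing estimate of Lemma~\ref{expansion_aver_bound_Mainterm}, which already carries the factor $\hbar$ needed for the limit with the prescribed $k_0$.
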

\begin{proof}
The proof will be similar to the proof of Lemma~\ref{regularise_expansion}. Firstly by using the identity in \eqref{def_of_prod} we have that  
\begin{equation*}
	\begin{aligned}
 	\sum_{\boldsymbol{x}\in\mathcal{X}^{k}_{\neq}} \mathcal{I}_{0,0}^\gamma(k,\boldsymbol{x},t;\hbar) \varphi(x)-\mathcal{I}_{\infty}^\gamma(k,\boldsymbol{x},t;\hbar)\varphi(x)  
	=   \sum_{\nu=1}^k   \mathcal{I}_{\infty,\nu}^\gamma(k,t;\hbar)\varphi(x),  
	\end{aligned}
\end{equation*}
where
\begin{equation*}
	\begin{aligned}
 	\mathcal{I}_\nu^\gamma(k,t;\hbar)\varphi(x) = {}& \frac{1}{(2\pi\hbar)^d\hbar^{k}} \sum_{\alpha \in \N^k} (i\lambda)^{|\alpha|} \sum_{\boldsymbol{x} \in \mathcal{X}_{\neq}^k} \int_{[0,t]_{\leq}^k} \int
	 e^{ i   \langle  \hbar^{-1}x,p_{k} \rangle} \Big\{\prod_{m=1}^k  e^{ -i   \langle  \hbar^{-1/d}x_m,p_{m}-p_{m-1} \rangle}   
	\\
	&\times e^{i  s_{m} \frac{1}{2}\hbar^{-1} (p_{m}^2-p_{m-1}^2)}\tilde{\Psi}_{\alpha_m,\infty}^{\gamma,\nu}(p_{m},p_{m-1})   \, \Big\}  e^{i  t \frac{1}{2}\hbar^{-1} p_0^2} \hat{\varphi}(\tfrac{p_0}{\hbar}) \,d\boldsymbol{p} d\boldsymbol{s},
	\end{aligned}
\end{equation*}
with the notation
\begin{equation*}
	\begin{aligned}
	 \tilde{\Psi}_{\alpha_m,\infty}^{\gamma,\nu}(p_{m},p_{m-1})  
	= \begin{cases}
	\Psi_{\alpha_m}(p_{m},p_{m-1},\hbar^{-1}(s_{m-1}-s_m);V) &\text{if $m<\nu$}
	\\
	\Psi_{\alpha_m}(p_{m},p_{m-1},\hbar^{-1}(s_{m-1}-s_m);V) -\Psi_{\alpha_m}^\gamma(p_{m},p_{m-1},\infty;V) &\text{if $m=\nu$}
	\\
	\Psi_{\alpha_m}^\gamma(p_{m},p_{m-1},\infty;V) &\text{if $m>\nu$}.
	\end{cases}
	\end{aligned}
\end{equation*}
Note again that for the cases where $\alpha_\nu=1$ we have that $ \tilde{\Psi}_{\alpha_{\nu},\infty}^{\gamma,\nu}(p_{\nu},p_{\nu-1}) =0$ for all $(p_{\nu},p_{\nu-1})$. Hence in the following we will only consider the cases where $\alpha_\nu>1$. 
The triangle inequality for $L^2(\R^d\times\Omega,dx\otimes\mathbb{P})$-norm gives us that 
\begin{equation}\label{EQ:regularise_expansion_in_time_1}
 \mathbb{E} \bigg[\Big\lVert \sum_{\boldsymbol{x}\in\mathcal{X}^{k}_{\neq}}  \mathcal{I}_{0,0}^\gamma(k,\boldsymbol{x},t;\hbar) \varphi-\mathcal{I}_{\infty}^\gamma(k,\boldsymbol{x},t;\hbar)\varphi \Big\rVert_{L^2(\R^d)}^2	\bigg]^\frac{1}{2} \leq  \sum_{\nu=1}^k \mathbb{E} \Big[\big\lVert   \mathcal{I}_{\infty,\nu}^\gamma(k,t;\hbar)\varphi \big\rVert_{L^2(\R^d)}^2	\Big]^\frac{1}{2}  ,
\end{equation}
As in the previous proof we will use an argument similar to that used in Lemma~\ref{expansion_aver_bound_Mainterm}. We again expand the $L^2$-norm and get
\begin{equation}\label{EQ:regularise_expansion_in_time_2}
	\big\lVert  \mathcal{I}_{\infty,\nu}^\gamma(k,t;\hbar)\varphi \big\rVert_{L^2(\R^d)}^2
	=\sum_{n=0}^k \sum_{\sigma^1,\sigma^2\in\mathcal{A}(k,n)} \sum_{\kappa\in\mathcal{S}_n}\sum_{\alpha,\tilde{\alpha}\in\N^k} \mathcal{T}(n,\alpha,\tilde{\alpha},\sigma^1,\sigma^2,\kappa),
\end{equation}
where the numbers $\mathcal{T}(n,\alpha,\tilde{\alpha},\sigma^1,\sigma^2,\kappa)$ are given by
\begin{equation*}
	\begin{aligned}
	\MoveEqLeft \mathcal{T}(n,\alpha,\tilde{\alpha},\sigma^1,\sigma^2,\kappa)
	\\
	&= \sum_{(\boldsymbol{x},\boldsymbol{\tilde{x}})\in \mathcal{X}_{\neq}^{2k}} \prod_{i=1}^n \rho^{-1}\delta(x_{\sigma_i^1}- \tilde{x}_{\sigma_{\kappa(i)^2}})  \int_{\R^d} \mathcal{I}_{\infty,\nu}^\gamma (k,\boldsymbol{x},\alpha,t;\hbar) \varphi(x)\overline{ \mathcal{I}_{\infty,\nu}^\gamma(k,\boldsymbol{\tilde{x}},\tilde{\alpha},t;\hbar) \varphi(x)}  \,dx,
	\end{aligned}
\end{equation*}
where again $\mathcal{I}_{\infty,\nu}^\gamma (k,\boldsymbol{x},\alpha,t;\hbar) \varphi(x)$ is defined as $ \mathcal{I}_{\infty,\nu}^\gamma(k,t;\hbar)\varphi (x)$ for a fixed $\boldsymbol{x}$ and $\alpha$. Again we consider the direct and crossing terms separately. 
For the direct terms we get from arguing as in the proof of Lemma~\ref{expansion_aver_bound_Mainterm} that
   \begin{equation*}
   	\begin{aligned}
	\MoveEqLeft  \Aver{\mathcal{T}(n,\alpha,\tilde{\alpha},\sigma^1,\sigma^2,\mathrm{id})}
	 \leq    \norm{\varphi}_{L^2(\R^d)}^2 \left(C\lambda \norm{\hat{V}}_{1,\infty,3d+3}\right)^{|\alpha|+|\tilde{\alpha}|}  \rho^{2k-n}
	  \int_{[0,t]_{\leq}^k} \int_{[0,t]_{\leq}^k}  \prod_{i=1}^{n} \tfrac{1}{\hbar \max(1,\hbar^{-1} | s_{\sigma_i^1}- \tilde{s}_{\sigma_{i}^2} |)^\frac{d}{2}}
	\\
	\times&   \int_{\R_{+}^{a_k+\tilde{a}_k}}    \tfrac{\boldsymbol{1}_{[\hbar^{-1}(s_{\nu-1}-s_\nu),\infty)}(\boldsymbol{t}_{1,\alpha_\nu-1}^{+}) e^{-\gamma\frac{1}{2}\boldsymbol{t}_{1,\alpha_\nu-1}^{+}}}{\prod_{i=1}^{a_k}\max(1, | t_i |)^\frac{d}{2} }   \tfrac{  \boldsymbol{1}_{[\hbar^{-1}(\tilde{s}_{\nu-1}-\tilde{s}_\nu),\infty)}(\boldsymbol{\tilde{t}}_{1,\alpha_\nu-1}^{+}) e^{-\gamma\frac{1}{2}\boldsymbol{\tilde{t}}_{1,\alpha_\nu-1}^{+}}}{\prod_{i=1}^{\tilde{a}_k}\max(1, | \tilde{t}_i |)^\frac{d}{2}}  \, d\boldsymbol{t}   d\boldsymbol{s}d\tilde{\boldsymbol{s}},
	\end{aligned}
\end{equation*}
where we her have not estimate any time integrals yet. 
We now bound $\boldsymbol{1}_{[\hbar^{-1}(\tilde{s}_{\nu-1}-\tilde{s}_\nu),\infty)}(\boldsymbol{\tilde{t}}_{1,\alpha_\nu-1}^{+}) e^{-\gamma\frac{1}{2}\boldsymbol{\tilde{t}}_{1,\alpha_\nu-1}^{+}}$ by $1$ and evaluate all integrals in $\tilde{s}$ and $\tilde{t}$. We then get
   \begin{equation}\label{EQ:regularise_expansion_in_time_3}
   	\begin{aligned}
	\Aver{\mathcal{T}(n,\alpha,\tilde{\alpha},\sigma^1,\sigma^2,\mathrm{id})} \leq {}&  C^{|\alpha|+|\tilde{\alpha}|}   \norm{\varphi}_{L^2(\R^d)}^2 \left(\lambda \norm{\hat{V}}_{1,\infty,3d+3}\right)^{|\alpha|+|\tilde{\alpha}|}    \frac{ \rho^{2k-n} t^{k-n}}{(k-n)!}  
	 \\
	 &\times  \int_{[0,t]_{\leq}^k}   \int_{\R_{+}^{a_k}}    \frac{\boldsymbol{1}_{[\hbar^{-1}(s_{\nu-1}-s_\nu),\infty)}(\boldsymbol{t}_{1,\alpha_\nu-1}^{+}) e^{-\gamma\frac{1}{2}\boldsymbol{t}_{1,\alpha_\nu-1}^{+}}}{\prod_{i=1}^{a_k}\max(1, | t_i |)^\frac{d}{2} }
	  \, d\boldsymbol{t}  d\boldsymbol{s},
	\end{aligned}
\end{equation}
where we have absorbed all constants from the integration only depending on dimension in to the constant $C^{|\alpha|+|\tilde{\alpha}|} $.
For the remaining integrals in time we will divide them into two according to if $\gamma\hbar^{-1}(s_{\nu-1}-s_\nu)$  is smaller than one or greater than or equal to one. In the latter case we have that
  \begin{equation*}
	\begin{aligned}
	\boldsymbol{1}_{[\hbar^{-1}(s_{\nu-1}-s_\nu),\infty)}(\boldsymbol{t}_{1,\alpha_\nu}^{+})  e^{-\gamma\frac{1}{2}\boldsymbol{t}_{1,\alpha_\nu}^{+}} \leq \frac{C}{|\gamma \hbar^{-1}(s_{\nu-1}-s_\nu)|^d}
	\end{aligned}
\end{equation*} 
for all $\boldsymbol{t}_{1,\alpha_\nu}^{+}$ and in the first case we can use the trivial bound one. This yields 
  \begin{equation}\label{EQ:regularise_expansion_in_time_4}
	\begin{aligned}
	\MoveEqLeft   \int_{[0,t]_{\leq}^k}   \int_{\R_{+}^{a_k}}    \frac{\boldsymbol{1}_{[\hbar^{-1}(s_{\nu-1}-s_\nu),\infty)}(\boldsymbol{t}_{1,\alpha_\nu-1}^{+}) e^{-\gamma\frac{1}{2}\boldsymbol{t}_{1,\alpha_\nu-1}^{+}}}{\prod_{i=1}^{a_k}\max(1, | t_i |)^\frac{d}{2} }
	  \, d\boldsymbol{t}  d\boldsymbol{s}
	  \\
	&\leq  \left(\frac{d}{d-2}\right)^{a_k} \int_{[0,t]_{\leq}^k}  \frac{C}{\max(1,|\gamma \hbar^{-1}(s_{\nu-1}-s_\nu)|^d)} \,d\boldsymbol{s}_{1,k}
	\leq \frac{\hbar t^{k-1} d}{\gamma(d-1)(k-1)! }  \left(\frac{d}{d-2}\right)^{a_k}.
	\end{aligned}
\end{equation} 
Combining \eqref{EQ:regularise_expansion_in_time_3} and \eqref{EQ:regularise_expansion_in_time_4} we obtain the bound
   \begin{equation}\label{EQ:regularise_expansion_in_time_5}
   	\begin{aligned}
	|\Aver{\mathcal{T}(n,\alpha,\tilde{\alpha},\sigma^1,\sigma^2,\mathrm{id})}| \leq {}&   \norm{\varphi}_{L^2(\R^d)}^2 \left(C_d\lambda \norm{\hat{V}}_{1,\infty,3d+3}\right)^{|\alpha|+|\tilde{\alpha}|}    \frac{\hbar \rho (\rho t)^{2k-n-1}}{\gamma (k-n)!(k-1)!},  
	\end{aligned}
\end{equation}
where we again have absorbed all constants from the integration only depending on dimension in to the constant $C^{|\alpha|+|\tilde{\alpha}|} $. For the crossing terms we get the estimate
       \begin{equation}\label{EQ:regularise_expansion_in_time_6}
   	\begin{aligned}
	|\Aver{\mathcal{T}(n,\alpha,\tilde{\alpha},\sigma^1,\sigma^2,\kappa)}|
	 \leq      \left(C_d \lambda\norm{\hat{V}}_{1,\infty,3d+3}\right)^{|\alpha|+|\tilde{\alpha}|}     \frac{\rho(\rho t)^{2k-n-1} \hbar |\log(\tfrac{\hbar}{t})|^{n+3}}{(k-1)!(k-n)!} \norm{\varphi}_{\mathcal{H}^{2d+2}_\hbar(\R^d)}^2.
	\end{aligned}
\end{equation}
This estimate is obtained complete analogous to the estimate obtained in the proof of Lemma~\ref{expansion_aver_bound_Mainterm}. Combining the estimates in \eqref{EQ:regularise_expansion_in_time_2} \eqref{EQ:regularise_expansion_in_time_5} and \eqref{EQ:regularise_expansion_in_time_6} we get
\begin{equation}\label{EQ:regularise_expansion_in_time_7}
	\begin{aligned}
	\MoveEqLeft \mathbb{E}\Big[\big\lVert  \mathcal{I}_\nu^\gamma(k,t;\hbar)\varphi \big\rVert_{L^2(\R^d)}^2\Big]
	\leq \sum_{\alpha,\tilde{\alpha}\in\N^k} \norm{\varphi}_{\mathcal{H}^{2d+2}_\hbar(\R^d)}^2  \left(C_d\lambda\norm{\hat{V}}_{1,\infty,3d+3}\right)^{|\alpha|+|\tilde{\alpha}|}   
	\\
	&\times  
	\sum_{n=0}^k  \Big[  \frac{ \hbar \rho(\rho t)^{2k-n-1}}{\gamma (k-1)!(k-n)!} 	+ \hbar |\log(\tfrac{\hbar}{t})|^{n+3} \frac{\rho(\rho t)^{2k-n-1} n!}{(k-1)!(k-n)!} \Big],
	\end{aligned}
\end{equation}
where we have used that the number of elements in $\mathcal{A}(k,n)$ is bounded by $2^k$ and that the number of elements in $\mathcal{S}_n$ is $n!$. By arguing as in the proof of Lemma~\ref{regularise_expansion} we get from \eqref{EQ:regularise_expansion_in_time_1} and  \eqref{EQ:regularise_expansion_in_time_7} that
\begin{equation}\label{EQ:regularise_expansion_in_time_8}
	\begin{aligned}
	\MoveEqLeft\mathbb{E} \bigg[\Big\lVert \sum_{\boldsymbol{x}\in\mathcal{X}^{k}_{\neq}}  \mathcal{I}_{0,0}^\gamma(k,\boldsymbol{x},t;\hbar) \varphi-\mathcal{I}_{\infty}^\gamma(k,\boldsymbol{x},t;\hbar)\varphi \Big\rVert_{L^2(\R^d)}^2	\bigg]^\frac{1}{2}    
\leq C \norm{\varphi}_{\mathcal{H}^{2d+2}_\hbar(\R^d)} \Big( \sqrt{\tfrac{\hbar}{\gamma}} +\sqrt{ \hbar k_0^2 |\log(\tfrac{\hbar}{t})|^{k_0+3}} \Big). 
	\end{aligned}
\end{equation}
Since we have that $\sup_{\hbar\in I}\norm{\varphi}_{\mathcal{H}^{2d+2}_\hbar(\R^d)}$ and our assumption on $k_0$ we get from the estimate in \eqref{EQ:regularise_expansion_in_time_8} that 
\begin{equation*}
\lim_{\hbar\rightarrow 0}\sum_{k=1}^{k_0}     \mathbb{E} \bigg[\Big\lVert \sum_{\boldsymbol{x}\in\mathcal{X}^{k}_{\neq}}  \mathcal{I}_{0,0}^\gamma(k,\boldsymbol{x},t;\hbar) \varphi-\mathcal{I}_{\infty}^\gamma(k,\boldsymbol{x},t;\hbar)\varphi \Big\rVert_{L^2(\R^d)}^2	\bigg]^\frac{1}{2}  = 0.
\end{equation*}
As desired. This concludes the proof.
\end{proof}
\section{Convergence of error terms}\label{Sec:con_error_term}
\begin{lemma}\label{LE:Convergence_of_duhamel_expansion} 
Assume we are in the setting of Lemma~\ref{duhamel_expansion_lemma} and that $d=3$. Let $\{\varphi_\hbar\}_{\hbar\in I}$ be a uniformly bounded family in $\mathcal{H}_\hbar^{5d+5}$. Moreover let 
\begin{equation*}
	k_0 = \frac{|\log(\hbar)|}{10 \log(|\log(\hbar)|)} \qquad\text{and}\qquad \tau_0 = \hbar^{-\frac{1}{3}}.
\end{equation*}
Then we have that
\begin{equation*}
	\begin{aligned}
	\lim_{\hbar\rightarrow 0} \mathbb{E} \big[  \norm{ \mathcal{R}(\tau_0,k_0:\hbar) \varphi_\hbar}_{L^2(\R^d)}^2\big] = 0.
	\end{aligned}
\end{equation*}
\end{lemma}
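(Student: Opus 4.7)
The plan is to use the triangle inequality in $L^2(\R^d\times\Omega,dx\otimes d\mathbb{P})$ together with the unitarity of $U_{\hbar,\lambda}$ to reduce the claim to estimating each summand in $\mathcal{R}(\tau_0,k_0;\hbar)$ separately. Concretely, since $U_{\hbar,\lambda}(-s)$ is unitary on $L^2(\R^d)$ for each realisation of $\mathcal{X}$, we have
\begin{equation*}
\mathbb{E}\bigl[\lVert \mathcal{R}(\tau_0,k_0;\hbar)\varphi_\hbar\rVert_{L^2}^2\bigr]^{1/2} \leq \sum_{\tau=1}^{\tau_0} \Bigl\{\mathbb{E}\bigl[\lVert \mathcal{R}_\tau^{\mathrm{rec}}(k_0;\hbar)\varphi_\hbar\rVert_{L^2}^2\bigr]^{1/2} + \sum_{i=0}^{2}\sum_{j=\min(1,i)}^{i}\mathbb{E}\bigl[\lVert \mathcal{R}_{\tau,i,j}^{k_0}(N;\hbar)\varphi_\hbar\rVert_{L^2}^2\bigr]^{1/2}\Bigr\}.
\end{equation*}
The task then reduces to showing each bracketed term vanishes fast enough that multiplication by $\tau_0=\hbar^{-1/3}$ still sends the sum to zero.

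Next, for each fixed $\tau$ I would expand $\mathcal{R}_\tau^{\mathrm{rec}}(k_0;\hbar)$ and $\mathcal{R}_{\tau,i,j}^{k_0}(N;\hbar)$ into their five constitutive pieces per Definitions~\ref{def_remainder_k_0} and~\ref{def_recol_reminder}, and invoke the already proved bounds. For $\tau=1$ these are Lemmas~\ref{LE:Truncated_Without_re_bound_Mainterm_1}, \ref{expansion_aver_bound_Mainterm_rec_trun}, \ref{expansion_aver_bound_Mainterm_rec_trun2j} (non/one/two-recollision truncations) together with Lemmas~\ref{expansion_aver_bound_Mainterm_rec_trun21_1}, \ref{expansion_aver_bound_Mainterm_rec_trun22_1}, \ref{expansion_aver_bound_Mainterm_rec_trun22_3}, \ref{expansion_aver_bound_Mainterm_rec_trun22_5} for the third-recollision remainders. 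For $\tau\geq 2$ the corresponding bounds are Lemmas~\ref{expansion_aver_bound_Mainterm_2}, \ref{expansion_aver_bound_Mainterm_3}, \ref{LE:Truncated_Without_re_bound_Mainterm_2}, \ref{expansion_aver_bound_Mainterm_rec_2}, \ref{expansion_aver_bound_Mainterm_rec_3}, \ref{expansion_aver_bound_Mainterm_rec_trun_2}, \ref{expansion_aver_bound_Mainterm_rec_21_2}, \ref{expansion_aver_bound_Mainterm_rec_22_2}, \ref{expansion_aver_bound_Mainterm_rec_trun_j_2} together with the recollision-error bounds Lemmas~\ref{expansion_aver_bound_Mainterm_rec_trun21_2}, \ref{expansion_aver_bound_Mainterm_rec_trun22_2}, \ref{expansion_aver_bound_Mainterm_rec_trun22_4}, \ref{expansion_aver_bound_Mainterm_rec_trun22_6}. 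Each of these inputs is polynomial in $k_0$ and at most of order $C^{k_0}|\log(\hbar/t)|^{k_0+O(1)}$ in the logarithm, multiplied by a small prefactor of one of three types: (a) $\tau_0^{-(k_0-c)}$, (b) $\hbar$, or (c) $\tau_0^{-3}$ (the latter only from the configurations in $\mathcal{Q}_{k,2,2}^{3}$).

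Finally I would verify that, under the choices $k_0=|\log\hbar|/(10\log|\log\hbar|)$ and $\tau_0=\hbar^{-1/3}$, each of the three prefactor types survives the extra factor of $\tau_0=\hbar^{-1/3}$ coming from the outer $\tau$-sum. For type (a), $\tau_0\cdot\tau_0^{-(k_0-c)/2}=\hbar^{(k_0-c-2)/6}$, and since $k_0\to\infty$ this dominates the polynomial-in-$k_0$ and $C^{k_0}|\log\hbar|^{k_0/2+O(1)}$ factors: indeed $k_0\log|\log\hbar|\ll|\log\hbar|$ by construction of $k_0$, so $C^{k_0}|\log\hbar|^{k_0}=e^{k_0(\log C+\log|\log\hbar|)}=o(\hbar^{-\varepsilon})$ for every $\varepsilon>0$. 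Type (b) gives $\hbar^{-1/3}\cdot\hbar^{1/2}=\hbar^{1/6}$ times the same subpolynomial-in-$\hbar^{-1}$ correction, and type (c) gives $\hbar^{-1/3}\cdot\tau_0^{-3/2}=\hbar^{1/6}$ similarly. In all cases the product tends to zero.

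The main obstacle is the bookkeeping: there are many configuration classes to track, and the bound for the $\mathcal{Q}_{k,2,2}^{3}$ recollision piece (Lemmas~\ref{expansion_aver_bound_Mainterm_rec_trun22_5} and~\ref{expansion_aver_bound_Mainterm_rec_trun22_6}) only gives a $\tau_0^{-3}$ saving rather than an $\hbar$-saving, which is exactly what forces the precise choice $\tau_0=\hbar^{-1/3}$; had $\tau_0$ been chosen much larger, the outer sum over $\tau$ would not be absorbable. Everything else is a routine triangle-inequality/unitarity manipulation followed by invocation of the already established norm estimates.
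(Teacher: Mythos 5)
Your proposal follows essentially the same route as the paper: the paper also reduces to the individual remainder pieces via unitarity of $U_{\hbar,\lambda}$ (using Jensen's inequality on the squares, so the outer $\tau$-sum costs a factor $\tau_0$, rather than your triangle inequality on square roots — the two are equivalent up to halving the exponents), then invokes exactly the catalogue of norm lemmas you list, and closes by checking that the choices of $k_0$ and $\tau_0$ absorb the $\hbar^{-1/3}$ from the $\tau$-sum, with the $\tau_0^{-3}$ saving from the $\mathcal{Q}_{k,2,2}^{3}$ configurations being the binding constraint exactly as you say. One quantitative slip in your final step: with $k_0=|\log\hbar|/(10\log|\log\hbar|)$ one has $|\log\hbar|^{k_0}=e^{|\log\hbar|/10}=\hbar^{-1/10}$ (up to subpolynomial corrections from $C^{k_0}$ and the $O(1)$ in the exponent), so the claim that $C^{k_0}|\log\hbar|^{k_0}=o(\hbar^{-\varepsilon})$ for \emph{every} $\varepsilon>0$ is false; the argument still closes because the available powers of $\hbar$ ($\hbar^{1/6}$ in your square-root bookkeeping, $\hbar^{1/3}$ in the paper's squared version) exceed $\hbar^{1/10}$ (respectively $\hbar^{1/20}$ after taking the square root of the logarithmic factor), but you should state the margin explicitly rather than the incorrect ``every $\varepsilon$'' claim.
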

\begin{proof}
Firstly recal that 
\begin{equation*}
	\begin{aligned}
	 \mathcal{R}(\tau_0,k_0:\hbar) = \sum_{\tau=1}^{\tau_0} U_{\hbar,\lambda}(-\tfrac{\tau_0-\tau}{\tau_0}t) \big[  \mathcal{R}_\tau^{\mathrm{rec}}(k_0;\hbar)+ \sum_{i=0}^2 \sum_{j=\min(1,i)}^i \mathcal{R}_{\tau,i,j}^{k_0}(N;\hbar)  \big]. 
	\end{aligned}
\end{equation*}
We get from Jensens inequality, the unitarity of $U_{\hbar,\lambda}(t)$ and linearity of the average that
\begin{equation}\label{EQ:Convergence_of_duhamel_expansion_1}
	\begin{aligned}
	\mathbb{E} \big[  \norm{ \mathcal{R}(\tau_0,k_0:\hbar) \varphi_\hbar}_{L^2(\R^d)}^2\big] \leq  \sum_{\tau=1}^{\tau_0} \tau_0 \Big\{\mathbb{E} \big[  \norm{  \mathcal{R}_\tau^{\mathrm{rec}}(k_0;\hbar) }_{L^2(\R^d)}^2\big] + 4\sum_{i=0}^2 \sum_{j=\min(1,i)}^i \mathbb{E} \big[  \norm{\mathcal{R}_{\tau,i,j}^{k_0}(N;\hbar) }_{L^2(\R^d)}^2\big]   \Big\}.
	\end{aligned}
\end{equation}
We will consider each average separately. We will start with the averages of $ \norm{  \mathcal{R}_\tau^{\mathrm{rec}}(k_0;\hbar) }_{L^2(\R^d)}^2$. By definition we have for $\tau=1$ that
\begin{equation*}
	\mathcal{R}_1^{\mathrm{rec}}(k_0;\hbar)  = \sum_{j=1}^2
	  \sum_{k_1=6-j}^{k_0} \sum_{\iota\in\mathcal{Q}_{k_1,2,j}}   \sum_{\boldsymbol{x}_1\in \mathcal{X}_{\neq}^{k_1-2}}  \mathcal{E}_{2,j}^{\mathrm{rec}}(k_1,0,\boldsymbol{x}_1,\iota,\tfrac{t}{\tau_0};\hbar).
\end{equation*}
Recalling that we in Section~\ref{Sec:tech_est_truncated_recol_2} divided $\mathcal{Q}_{k_1+k_2,2,2}$ into three disjoint sets we get that
\begin{equation}\label{EQ:Convergence_of_duhamel_expansion_2}
	\begin{aligned}
	\mathbb{E} \big[  \norm{  \mathcal{R}_1^{\mathrm{rec}}(k_0;\hbar) }_{L^2(\R^d)}^2\big] \leq {}& 4 \mathbb{E}\Big[\big\lVert \sum_{k_1=5}^{k_0} \sum_{\iota\in\mathcal{Q}_{k_1,2,1}}   \sum_{\boldsymbol{x}_1\in \mathcal{X}_{\neq}^{k_1-2}}  \mathcal{E}_{2,1}^{\mathrm{rec}}(k_1,0,\boldsymbol{x}_1,\iota,\tfrac{t}{\tau_0};\hbar)\varphi\big\rVert_{L^2(\R^d)}^2 \Big]
	\\
	&+ 4\sum_{j=1}^3 \mathbb{E}\Big[\big\lVert \sum_{k_1=4}^{k_0} \sum_{\iota\in\mathcal{Q}_{k_1,2,2}^j}   \sum_{\boldsymbol{x}_1\in \mathcal{X}_{\neq}^{k_1-2}}  \mathcal{E}_{2,1}^{\mathrm{rec}}(k_1,0,\boldsymbol{x}_1,\iota,\tfrac{t}{\tau_0};\hbar).\varphi\big\rVert_{L^2(\R^d)}^2 \Big].
	\end{aligned}
\end{equation}
From applying \Cref{expansion_aver_bound_Mainterm_rec_trun21_1,expansion_aver_bound_Mainterm_rec_trun22_1,expansion_aver_bound_Mainterm_rec_trun22_3,expansion_aver_bound_Mainterm_rec_trun22_5} we obtain that
\begin{equation}\label{EQ:Convergence_of_duhamel_expansion_3}
	\begin{aligned}
	\mathbb{E} \big[  \norm{  \mathcal{R}_1^{\mathrm{rec}}(k_0;\hbar) }_{L^2(\R^d)}^2\big] \leq {}& C k_0^{19} C^{k_0}  |\log(\tfrac{\hbar}{\tau_0})|^{k_0+7}  \norm{\varphi}_{\mathcal{H}^{2d+2}_\hbar(\R^d)}^2  \hbar  +C \tau_0^{-3}  \norm{\varphi}_{\mathcal{H}^{5d+5}_\hbar(\R^d)}^2 k_0^{10} C^{k_0} |\log(\tfrac{\hbar}{t})|^{k_0+20}, 
	\end{aligned}
\end{equation}
where we have combined some of the estimates obtained in the different lemmas by multiplying one of the constants with 10. 
For $\tau\in\{2,\dots,\tau_0\}$ we have that
\begin{equation*}
	\begin{aligned}
	\MoveEqLeft  \mathcal{R}_\tau^{\mathrm{rec}}(k_0;\hbar) 
	\\
	={}&\sum_{j=1}^2 \sum_{k_1=1}^{k_0} \sum_{k_2=0}^{k_0} \sum_{\iota\in\mathcal{Q}_{k_1+k_2,2,j}}  \boldsymbol{1}_{\{k_1+k_2\geq6-j\}}   \Big\{  \sum_{(\boldsymbol{x}_1,\boldsymbol{x}_2)\in \mathcal{X}_{\neq}^{k_1+k_2-2}}  \mathcal{E}_{2,j}^{\mathrm{rec}}(k_1,k_2,\boldsymbol{x}_1,\iota,\tfrac{t}{\tau_0};\hbar) \mathcal{I}_{2,j}^{\mathrm{rec}}(k_2,\boldsymbol{x}_2,\iota,\tfrac{\tau-1}{\tau_0}t;\hbar)
	 \\
	 &+
	  \sum_{(\boldsymbol{x}_1,\boldsymbol{x}_2)\in \mathcal{X}_{\neq}^{k_1+k_2-2}}  \mathcal{E}_{2,j}^{\mathrm{rec}}(k_1+1,k_2,(x_{2,k_2},\boldsymbol{x}_1),\iota,\tfrac{t}{\tau_0};\hbar) \mathcal{I}_{2,j}^{\mathrm{rec}}(k_2,\boldsymbol{x}_2,\iota,\tfrac{\tau-1}{\tau_0}t;\hbar) \Big\}.
	\end{aligned}
\end{equation*}
Hence if we apply the ``same'' inequality as in \eqref{EQ:Convergence_of_duhamel_expansion_2} and apply \Cref{expansion_aver_bound_Mainterm_rec_trun21_2,expansion_aver_bound_Mainterm_rec_trun22_2,expansion_aver_bound_Mainterm_rec_trun22_4,expansion_aver_bound_Mainterm_rec_trun22_6} we obtain for all $\tau\geq2$ that
\begin{equation}\label{EQ:Convergence_of_duhamel_expansion_4}
	\begin{aligned}
	\mathbb{E} \big[  \norm{  \mathcal{R}_\tau^{\mathrm{rec}}(k_0;\hbar) }_{L^2(\R^d)}^2\big] \leq {}& C k_0^{21} C^{k_0}  |\log(\tfrac{\hbar}{\tau_0})|^{k_0+9}  \norm{\varphi}_{\mathcal{H}^{3d+3}_\hbar(\R^d)}^2  \hbar  +C \tau_0^{-3}  \norm{\varphi}_{\mathcal{H}^{5d+5}_\hbar(\R^d)}^2 k_0^{16} C^{k_0}   |\log(\tfrac{\hbar}{t})|^{k_0+22}.
	\end{aligned}
\end{equation}
Combining the two estimates in \eqref{EQ:Convergence_of_duhamel_expansion_3} and \eqref{EQ:Convergence_of_duhamel_expansion_4} we obtain that
\begin{equation*}
	\begin{aligned}
	 \MoveEqLeft \sum_{\tau=1}^{\tau_0} \tau_0 \mathbb{E} \big[  \norm{  \mathcal{R}_\tau^{\mathrm{rec}}(k_0;\hbar) }_{L^2(\R^d)}^2\big] 
	 \\
	 &\leq C \tau_0^2 k_0^{21} C^{k_0}  |\log(\tfrac{\hbar}{\tau_0})|^{k_0+9}  \norm{\varphi}_{\mathcal{H}^{3d+3}_\hbar(\R^d)}^2  \hbar  +C \tau_0^{-1}  \norm{\varphi}_{\mathcal{H}^{5d+5}_\hbar(\R^d)}^2 k_0^{16} C^{k_0}   |\log(\tfrac{\hbar}{t})|^{k_0+22}. 
	\end{aligned}
\end{equation*}
With this estimate and our assumptions on $k_0$ and $\tau_0$ it follows that
\begin{equation}\label{EQ:Convergence_of_duhamel_expansion_5}
	\begin{aligned}
	 \lim_{\hbar\rightarrow 0}\sum_{\tau=1}^{\tau_0} \tau_0 \mathbb{E} \big[  \norm{  \mathcal{R}_\tau^{\mathrm{rec}}(k_0;\hbar) }_{L^2(\R^d)}^2\big] =0.
	\end{aligned}
\end{equation}
We now turn to the error terms where we have seen $k_0+1$ different collision centers. Here we have by definition for all $(i,j)\in\{(0,0), (1,1), (2,1),(2,2)\}$ that
\begin{equation*}
	\begin{aligned}
	\mathcal{R}_{1,i,j}^{k_0}(\hbar) =
	  \sum_{\iota\in\mathcal{Q}_{k_0+1,i,j}}\sum_{\boldsymbol{x}_1\in \mathcal{X}_{\neq}^{k_0+1-i}}  \mathcal{E}_{i,j}^{k_0}(\boldsymbol{x}_1,\iota,\tfrac{t}{\tau_0};\hbar) .
	 \end{aligned}
\end{equation*}
Applying \Cref{LE:Truncated_Without_re_bound_Mainterm_1,expansion_aver_bound_Mainterm_rec_trun,expansion_aver_bound_Mainterm_rec_trun2j} We obtain the estimate 
\begin{equation}\label{EQ:Convergence_of_duhamel_expansion_6}
	\begin{aligned}
	\MoveEqLeft  4\sum_{i=0}^2 \sum_{j=\min(1,i)}^i \mathbb{E} \big[  \norm{\mathcal{R}_{1,i,j}^{k_0}(N;\hbar) }_{L^2(\R^d)}^2\big]   
	 \\
	 &\leq 16\Big[ \frac{k_0^{13} C^{k_0}}{ \tau_0^{k_0-2}\hbar (k_0-2)!} \norm{\varphi}_{L^{2}(\R^d)}^2 +   \frac{k_0^{16} C^{k_0}}{ \tau_0^{k_0-4}} \hbar  |\log(\tfrac{\hbar}{t})|^{k_0+7} \norm{\varphi}_{\mathcal{H}^{3d+3}_\hbar(\R^d)}^2\Big],
	\end{aligned}
\end{equation}
where we have estimate all terms with the ``worst'' estimate. For $\tau\in\{2,\dots,\tau_0\}$ and all $(i,j)\in\{(0,0), (1,1), (2,1),(2,2)\}$ we have by definition that 
\begin{equation*}
	\begin{aligned}
\MoveEqLeft \mathcal{R}_{\tau,i,j}^{k_0}(N;\hbar) 
	 =  \sum_{\iota\in\mathcal{Q}_{k_0+1,i,j}}   \sum_{\boldsymbol{x}\in \mathcal{X}_{\neq}^{k_0+1-i}}  \mathcal{E}_{i,j}^{k_0}(\boldsymbol{x},\iota,\tfrac{t}{\tau_0};\hbar) U_{\hbar,0}(-\tfrac{\tau-1}{\tau_0}t;\hbar) 
	 \\
	 &+ \sum_{k_2=1}^{k_0} \sum_{\iota\in\mathcal{Q}_{k_0+k_2+1,i,j}}   \sum_{(\boldsymbol{x}_1,\boldsymbol{x}_2)\in \mathcal{X}_{\neq}^{k_0+k_2+1-i}}  \mathcal{E}_{i,j}^{k_0}(\boldsymbol{x}_1,\iota,\tfrac{t}{\tau_0};\hbar) \mathcal{I}_{i,j}(k_2,\boldsymbol{x}_2,\iota,\tfrac{\tau-1}{\tau_0}t;\hbar) 
	 \\
	 &+  \sum_{k_2=1}^{k_0} \sum_{\iota\in\mathcal{Q}_{k_0+k_2+1,i,j}}   \sum_{(\boldsymbol{x}_1,\boldsymbol{x}_2)\in \mathcal{X}_{\neq}^{k_0+k_2+1-i}}  \mathcal{E}_{i,j}^{k_0+1}((x_{2,k_2},\boldsymbol{x}_1),\iota,\tfrac{t}{\tau_0};\hbar) \mathcal{I}_{i,j}(k_2,\boldsymbol{x}_2,\iota,\tfrac{\tau-1}{\tau_0}t;\hbar) 
	 \\
	 &+ \sum_{k_1=1}^{k_0} \sum_{k_2=k_0-k_1+1}^{k_0} \sum_{\iota\in\mathcal{Q}_{k_1+k_2,i,j}} 
	  \sum_{(\boldsymbol{x}_1,\boldsymbol{x}_2)\in \mathcal{X}_{\neq}^{k_1+k_2-i}}  \mathcal{I}_{i,j}(k_1,\boldsymbol{x}_1,\iota,\tfrac{t}{\tau_0};\hbar)  \mathcal{I}_{i,j}(k_2,\boldsymbol{x}_2,\iota,\tfrac{\tau-1}{\tau_0}t;\hbar)
	  \\
	  &+  \sum_{k_1=1}^{k_0} \sum_{k_2=k_0-k_1+1}^{k_0} \sum_{\iota\in\mathcal{Q}_{k_1+k_2,i,j}} 
	  \sum_{(\boldsymbol{x}_1,\boldsymbol{x}_2)\in \mathcal{X}_{\neq}^{k_1+k_2-i}}  \mathcal{I}_{i,j}(k_1+1,(x_{2,k_2},\boldsymbol{x}_1),\iota,\tfrac{t}{\tau_0};\hbar)  \mathcal{I}_{i,j}(k_2,\boldsymbol{x}_2,\iota,\tfrac{\tau-1}{\tau_0}t;\hbar). 
	 \end{aligned}
\end{equation*}
Here we will apply the following results: \Cref{expansion_aver_bound_Mainterm_2,expansion_aver_bound_Mainterm_3,LE:Truncated_Without_re_bound_Mainterm_1,LE:Truncated_Without_re_bound_Mainterm_2,expansion_aver_bound_Mainterm_rec_2,expansion_aver_bound_Mainterm_rec_3,expansion_aver_bound_Mainterm_rec_trun,expansion_aver_bound_Mainterm_rec_trun_2,expansion_aver_bound_Mainterm_rec_21_2,expansion_aver_bound_Mainterm_rec_22_2,expansion_aver_bound_Mainterm_rec_trun2j,expansion_aver_bound_Mainterm_rec_trun_j_2}. This gives us the estimate
\begin{equation}\label{EQ:Convergence_of_duhamel_expansion_7}
	\begin{aligned}
	\MoveEqLeft  4\sum_{i=0}^2 \sum_{j=\min(1,i)}^i \mathbb{E} \big[  \norm{\mathcal{R}_{1,i,j}^{k_0}(N;\hbar) }_{L^2(\R^d)}^2\big]   
	 \\
	 &\leq 50\Big[ \frac{k_0^{15} C^{k_0}}{ \tau_0^{k_0-2}\hbar (k_0-2)!} \norm{\varphi}_{L^{2}(\R^d)}^2 +   \frac{k_0^{17} C^{k_0}}{ \tau_0^{k_0-4}} \hbar  |\log(\tfrac{\hbar}{t})|^{k_0+7} \norm{\varphi}_{\mathcal{H}^{3d+3}_\hbar(\R^d)}^2\Big].
	\end{aligned}
\end{equation}
Combining the estimates in \eqref{EQ:Convergence_of_duhamel_expansion_6} and \eqref{EQ:Convergence_of_duhamel_expansion_7} we obtain that
\begin{equation*}
	\begin{aligned}
	\MoveEqLeft  \sum_{\tau=1}^{\tau_0} 4 \tau_0 \sum_{i=0}^2 \sum_{j=\min(1,i)}^i \mathbb{E} \big[  \norm{\mathcal{R}_{\tau,i,j}^{k_0}(N;\hbar) }_{L^2(\R^d)}^2\big]  
	 \\
	 &\leq  50\Big[ \frac{k_0^{15} C^{k_0}}{ \tau_0^{k_0-4}\hbar (k_0-2)!} \norm{\varphi}_{L^{2}(\R^d)}^2 +   \frac{k_0^{17} C^{k_0}}{ \tau_0^{k_0-6}} \hbar  |\log(\tfrac{\hbar}{t})|^{k_0+7} \norm{\varphi}_{\mathcal{H}^{3d+3}_\hbar(\R^d)}^2\Big].
	\end{aligned}
\end{equation*}
With this estimate and our assumptions on $k_0$ and $\tau_0$ it follows that
\begin{equation}\label{EQ:Convergence_of_duhamel_expansion_8}
	\begin{aligned}
	 \lim_{\hbar\rightarrow 0} \sum_{\tau=1}^{\tau_0} 4 \tau_0 \sum_{i=0}^2 \sum_{j=\min(1,i)}^i \mathbb{E} \big[  \norm{\mathcal{R}_{\tau,i,j}^{k_0}(N;\hbar) }_{L^2(\R^d)}^2\big]   =0.
	\end{aligned}
\end{equation}
Finally by combing \eqref{EQ:Convergence_of_duhamel_expansion_1}, \eqref{EQ:Convergence_of_duhamel_expansion_5} and \eqref{EQ:Convergence_of_duhamel_expansion_8} we get that
\begin{equation*}
	\begin{aligned}
	 \lim_{\hbar\rightarrow 0} \mathbb{E} \big[  \norm{ \mathcal{R}(\tau_0,k_0:\hbar) \varphi_\hbar}_{L^2(\R^d)}^2\big]=0.
	\end{aligned}
\end{equation*}
This concludes the proof.
\end{proof}
 \begin{lemma}\label{LE:unf_norm_bound_without_rec} 
Assume we are in the setting of Lemma~\ref{duhamel_expansion_lemma}, let $\{\varphi_\hbar\}_{\hbar\in I}$ be a uniformly bounded family in $\mathcal{H}_\hbar^{5d+5}(\R^d)$. Moreover let 
\begin{equation*}
	k_0 = \frac{|\log(\hbar)|}{10 \log(|\log(\hbar)|)}.
\end{equation*}
Then we have that
\begin{equation*}
	\begin{aligned}
	\sup_{\hbar\in I}   \sum_{k=0}^{k_0}     \mathbb{E}\Big[\big\lVert  \sum_{\boldsymbol{x}\in\mathcal{X}_{\neq}^{k}}  \mathcal{I}_{0,0}(k,\boldsymbol{x},\iota,t;\hbar) \varphi_\hbar\big\rVert_{L^2(\R^d)}^2 \Big]^{\frac{1}{2}} \leq C,
	\end{aligned}
\end{equation*}
where $ \mathcal{I}_{0,0}(0,\boldsymbol{x},\iota,t;\hbar) = U_{\hbar,0}(-t)$.
\end{lemma}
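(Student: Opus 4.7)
The plan is to invoke the bound of Lemma~\ref{expansion_aver_bound_Mainterm} directly on each term of the sum, then show that the resulting majorant is bounded uniformly in $\hbar$ thanks to the particular choice of $k_0$. The uniform $\mathcal{H}_\hbar^{5d+5}$-bound on $\{\varphi_\hbar\}_{\hbar\in I}$ implies a uniform bound on $\norm{\varphi_\hbar}_{\mathcal{H}_\hbar^{2d+2}}$, which is the Sobolev norm that actually enters the estimate; let me write $M=\sup_{\hbar\in I}\norm{\varphi_\hbar}_{\mathcal{H}_\hbar^{2d+2}(\R^d)}<\infty$. For $k=0$ the operator is the free propagator and the corresponding term contributes $\norm{\varphi_\hbar}_{L^2}\leq M$, so I treat only $k\geq 1$.

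For $k\in\{1,\dots,k_0\}$, Lemma~\ref{expansion_aver_bound_Mainterm} gives
\begin{equation*}
  \mathbb{E}\Big[\big\lVert\textstyle\sum_{\boldsymbol{x}\in\mathcal{X}_{\neq}^{k}}\mathcal{I}_{0,0}(k,\boldsymbol{x},t;\hbar)\varphi_\hbar\big\rVert_{L^2}^2\Big]
  \;\leq\; \frac{C^k}{k!}M^2 \;+\; \hbar k\, C^k |\log(\hbar/t)|^{k+3} M^2,
\end{equation*}
with a constant $C$ depending on $\rho,t,V,\lambda$ but not on $\hbar$ or $k$. Taking the square root (using $\sqrt{a+b}\leq\sqrt a+\sqrt b$) and summing over $k$, I split the total into two pieces $S_1(\hbar)+S_2(\hbar)$, where
\begin{equation*}
  S_1(\hbar)=M\sum_{k=1}^{k_0}\frac{C^{k/2}}{\sqrt{k!}},\qquad
  S_2(\hbar)=M\sqrt{\hbar}\sum_{k=1}^{k_0}\sqrt{k}\,C^{k/2}|\log(\hbar/t)|^{(k+3)/2}.
\end{equation*}
The first sum $S_1(\hbar)$ is trivially bounded by the convergent series $M\sum_{k\geq 1}C^{k/2}/\sqrt{k!}$, which is a finite constant independent of $\hbar$.

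The main point is therefore to show that $S_2(\hbar)$ stays bounded (in fact tends to zero) as $\hbar\to 0$. Each summand in $S_2(\hbar)$ is bounded by $\sqrt{k_0}\,C^{k_0/2}|\log(\hbar/t)|^{(k_0+3)/2}$, so $S_2(\hbar)\leq M\,k_0^{3/2}\sqrt{\hbar}\,C^{k_0/2}|\log(\hbar/t)|^{(k_0+3)/2}$. With the specific choice $k_0=|\log\hbar|/(10\log|\log\hbar|)$, the product $C^{k_0}|\log\hbar|^{k_0}$ grows like
\begin{equation*}
  \exp\!\Big(k_0\bigl(\log C+\log|\log\hbar|\bigr)\Big)
  \;=\;\exp\!\Big(\tfrac{|\log\hbar|}{10}\bigl(1+o(1)\bigr)\Big)
  \;=\;\hbar^{-1/10+o(1)}\quad\text{as }\hbar\to 0,
\end{equation*}
so $\hbar\,C^{k_0}|\log\hbar|^{k_0+3}\lesssim\hbar^{9/10+o(1)}\to 0$, and the polynomial prefactor $k_0^{3}$ certainly does not spoil this decay. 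Consequently $S_2(\hbar)\to 0$, so $S_1(\hbar)+S_2(\hbar)$ is uniformly bounded in $\hbar\in I$, which is exactly the claim.

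The main (and essentially only) obstacle is the exponential-in-$k_0$ logarithmic factor $|\log(\hbar/t)|^{k_0+3}$ coming from the crossing terms in Lemma~\ref{expansion_aver_bound_Mainterm}; the verification that it is tamed by the factor $\hbar$ is the reason the logarithmically small choice of $k_0$ was made in the first place, and it is exactly the same heuristic calculation that has already been used throughout Section~\ref{Sec:con_error_term}. No recollision terms enter here since we are only estimating the ladder-type operators $\mathcal{I}_{0,0}$, so no further machinery beyond Lemma~\ref{expansion_aver_bound_Mainterm} is required.
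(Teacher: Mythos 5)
Your proof is correct and follows essentially the same route as the paper: apply Lemma~\ref{expansion_aver_bound_Mainterm} termwise, use subadditivity of the square root, bound the first sum by the convergent series $\sum_k C^{k/2}/\sqrt{k!}$, and kill the second using that $\hbar\,k_0^{3}C^{k_0}|\log(\hbar/t)|^{k_0+3}\to 0$ for the given $k_0$. The only difference is that you spell out the exponent computation $C^{k_0}|\log\hbar|^{k_0}=\hbar^{-1/10+o(1)}$, which the paper simply asserts as a consequence of the choice of $k_0$.
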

\begin{proof}
Firstly note that by assumption and unitarity of $U_{\hbar,0}(-t)$ we have that
\begin{equation}\label{EQ:unf_norm_bound_without_rec_1}
	\begin{aligned}
	\sup_{\hbar\in I}      \mathbb{E}\Big[\big\lVert   \mathcal{I}_{0,0}(0,\boldsymbol{x},\iota,t;\hbar) \varphi_\hbar\big\rVert_{L^2(\R^d)}^2 \Big]^{\frac{1}{2}} \leq \sup_{\hbar\in I} \norm{\varphi_\hbar}_{L^2(\R^d)}\leq C.
	\end{aligned}
\end{equation}
 For $k\geq1$ we have from Lemma~\ref{expansion_aver_bound_Mainterm} the estimate
\begin{equation*}
	\begin{aligned}
	\MoveEqLeft \mathbb{E}\Big[\big\lVert \sum_{\boldsymbol{x}\in\mathcal{X}_{\neq}^{k}}\mathcal{I}_{0,0}(k,\boldsymbol{x},t;\hbar) \varphi\big\rVert_{L^2(\R^d)}^2\Big] \leq   \frac{ C^k}{k!} \norm{\varphi}_{L^2(\R^d)}^2  
	+ \hbar k C^k |\log(\tfrac{\hbar}{t})|^{k+3}  \norm{\varphi}_{\mathcal{H}^{2d+2}_\hbar(\R^d)}^2,
	\end{aligned}
\end{equation*}
where $C$ is independent of $\hbar$.
Applying this and the subadditivity of the square-root we get that
\begin{equation*}
	\begin{aligned}
	  \sum_{k=1}^{k_0}     \mathbb{E}\Big[\big\lVert  \sum_{\boldsymbol{x}\in\mathcal{X}_{\neq}^{k}}  \mathcal{I}_{0,0}(k,\boldsymbol{x},\iota,t;\hbar) \varphi_\hbar\big\rVert_{L^2(\R^d)}^2 \Big]^{\frac{1}{2}} \leq  \sum_{k=1}^{k_0} \sqrt{ \frac{ C^k}{k!}} \norm{\varphi}_{L^2(\R^d)}  
	+ \sqrt{\hbar k C^k |\log(\tfrac{\hbar}{t})|^{k+3} } \norm{\varphi}_{\mathcal{H}^{2d+2}_\hbar(\R^d)}.
	\end{aligned}
\end{equation*}
Since we have that  
\begin{equation*}
	\begin{aligned}
	  \sum_{k=1}^{\infty} \sqrt{ \frac{ C^k}{k!}}<\infty\quad\text{and}\quad\lim_{\hbar\rightarrow0} \hbar k_0^3 C^{k_0} |\log(\tfrac{\hbar}{t})|^{k_0+3} =0,
	\end{aligned}
\end{equation*}
due to our assumption on $k_0$, we get that
\begin{equation}\label{EQ:unf_norm_bound_without_rec_2}
	\sup_{\hbar\in I}   \sum_{k=1}^{k_0}     \mathbb{E}\Big[\big\lVert  \sum_{\boldsymbol{x}\in\mathcal{X}_{\neq}^{k}}  \mathcal{I}_{0,0}(k,\boldsymbol{x},\iota,t;\hbar) \varphi_\hbar\big\rVert_{L^2(\R^d)}^2 \Big]^{\frac{1}{2}} \leq C.
\end{equation}
From \eqref{EQ:unf_norm_bound_without_rec_1} and \eqref{EQ:unf_norm_bound_without_rec_2} we get the desired result.
\end{proof}
 \begin{lemma}\label{LE:norm_conv_with_op_1_rec} 
Assume we are in the setting of Lemma~\ref{duhamel_expansion_lemma}, let $\{\varphi_\hbar\}_{\hbar\in I}$ be a uniformly bounded family in $\mathcal{H}_\hbar^{5d+5}(\R^d)$. Moreover let 
\begin{equation*}
	k_0 = \frac{|\log(\hbar)|}{10 \log(|\log(\hbar)|)}.
\end{equation*}
Then for $(i,j)\in\{(1,1), (2,1),(2,2)\}$ we have that
\begin{equation*}
	\begin{aligned}
	\lim_{\hbar\rightarrow 0}   \sum_{k=k_{ij}}^{k_0} \sum_{\iota\in\mathcal{Q}_{k,i,j}}     \mathbb{E}\Big[\big\lVert  \sum_{\boldsymbol{x}\in\mathcal{X}_{\neq}^{k-i}}  \mathcal{I}_{i,j}(k,\boldsymbol{x},\iota,t;\hbar) \varphi_\hbar\big\rVert_{L^2(\R^d)}^2 \Big]^{\frac{1}{2}}= 0,
	\end{aligned}
\end{equation*}
where $k_{11}=3$, $k_{21}=5$, $k_{22}=4$.
\end{lemma}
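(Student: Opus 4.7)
My plan is to deduce the claim by applying, for each $(i,j)\in\{(1,1),(2,1),(2,2)\}$, the fully-expanded per-configuration $L^2$-bounds already established: Lemma~\ref{expansion_aver_bound_Mainterm_rec} for $(1,1)$, Lemma~\ref{expansion_aver_bound_Mainterm_rec_21} for $(2,1)$ and Lemma~\ref{expansion_aver_bound_Mainterm_rec_22} for $(2,2)$. Each yields, for a fixed $\iota\in\mathcal{Q}_{k,i,j}$, an estimate of the schematic form
\begin{equation*}
\mathbb{E}\Big[\big\|\sum_{\boldsymbol{x}\in\mathcal{X}_{\neq}^{k-i}}\mathcal{I}_{i,j}(k,\boldsymbol{x},\iota,t;\hbar)\varphi_\hbar\big\|_{L^2(\R^d)}^2\Big] \leq \frac{C^{k}}{(k-k_*)!}\|\varphi_\hbar\|_{L^2}^2 + \hbar\, k^{\alpha} C^k|\log(\hbar/t)|^{k+\beta}\|\varphi_\hbar\|^2_{\mathcal{H}^{m_{i,j}}_\hbar},
\end{equation*}
with case-dependent integer constants $k_*\in\{1,2\}$, $m_{i,j}\in\{2d+2,4d+4\}$ and $\alpha,\beta=O(1)$. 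Applying $\sqrt{a+b}\leq\sqrt a+\sqrt b$ separates the two contributions.

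For the $\hbar$-explicit piece, summing the square root over $\iota\in\mathcal{Q}_{k,i,j}$ using the polynomial cardinality bound $|\mathcal{Q}_{k,i,j}|\leq Ck^{\gamma_{i,j}}$ recorded in Remark~\ref{Remark_grow_of_Q} (with $\gamma_{1,1}=2,\gamma_{2,1}=3,\gamma_{2,2}=4$) and then over $k_{ij}\leq k\leq k_0$ produces a quantity of order
\begin{equation*}
Ck_0^{\gamma_{i,j}+\alpha/2+1}\sqrt\hbar\,C^{k_0/2}|\log(\hbar/t)|^{(k_0+\beta)/2}\sup_{\hbar\in I}\|\varphi_\hbar\|_{\mathcal{H}^{5d+5}_\hbar},
\end{equation*}
which tends to $0$ under the prescribed choice $k_0=|\log\hbar|/(10\log|\log\hbar|)$, exactly as in the verifications of Lemma~\ref{LE:Convergence_of_duhamel_expansion}. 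The uniform $\mathcal{H}^{5d+5}_\hbar$-bound on $\varphi_\hbar$ is used here.

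The main obstacle is the ostensibly $\hbar$-independent contribution $\sum_{k=k_{ij}}^{k_0}k^{\gamma_{i,j}}\sqrt{C^{k}/(k-k_*)!}\,\|\varphi_\hbar\|_{L^2}$, since the tail $\sum_{k\geq k_{ij}}k^{\gamma_{i,j}}\sqrt{C^{k}/(k-k_*)!}$ is absolutely convergent but nonzero. The plan for closing this gap is to sharpen the $\kappa=\mathrm{id}$ direct-pairing step in the proofs of \Cref{expansion_aver_bound_Mainterm_rec,expansion_aver_bound_Mainterm_rec_21,expansion_aver_bound_Mainterm_rec_22} by performing an additional integration by parts in the recollision auxiliary momentum variables ($\tilde p$ for $(1,1)$; $\tilde p_1,\tilde p_2$ for $(2,\cdot)$), exploiting rapid decay of $\hat V$ together with the fact that the semiclassical profile $\hat\varphi_\hbar(\cdot/\hbar)$ is concentrated on momentum scale $\hbar^{-1}$ while the mass-shell constraint forced by a genuine recollision pins one auxiliary momentum to a codimension-$d$ submanifold; this discrepancy produces an extra prefactor of $\hbar^{d/2}$. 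Combined with absolute summability of the tail, the first-term contribution is then $O(\hbar^{d/2})$ and vanishes, completing the argument.
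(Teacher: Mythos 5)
Your reduction of the $\hbar$-explicit error pieces (square-root splitting, the polynomial count of $\mathcal{Q}_{k,i,j}$ from Remark~\ref{Remark_grow_of_Q}, and the summation over $k\leq k_0$) is fine and matches the bookkeeping used elsewhere in the paper, but you have correctly located the crux — the $\hbar$-independent ladder contribution $\sum_k k^{\gamma_{i,j}}\sqrt{C^k/(k-k_*)!}\,\|\varphi_\hbar\|_{L^2}$ — and your plan for closing it is a genuine gap. The claim that one more integration by parts in the recollision momenta ($\tilde p$, resp.\ $\tilde p_1,\tilde p_2$) yields an extra factor $\hbar^{d/2}$ because ``the mass-shell constraint pins one auxiliary momentum to a codimension-$d$ submanifold'' is asserted, not proved, and it is not clear it can work as stated: at finite $\hbar$ there is no sharp mass-shell constraint in the $\kappa=\mathrm{id}$ terms, and integrating by parts against the quadratic phases in the recollision variable is exactly what Lemma~\ref{app_quadratic_integral_tech_est} already does in the proofs of \Cref{expansion_aver_bound_Mainterm_rec,expansion_aver_bound_Mainterm_rec_21,expansion_aver_bound_Mainterm_rec_22}; it produces the integrable decay $\max(1,|t|)^{-d/2}$ in the internal time variables and hence an $O(1)$ bound, with no additional power of $\hbar$. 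So the decisive step of the lemma is missing.

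The paper's proof gets the smallness from a different mechanism, a time-scale dichotomy in the recollision times rather than extra momentum regularity: for each $\iota$ one inserts $1=\bigl(1-e^{-\hbar^{-4/7}\frac12(s_{m_1}-s_{m_2})}\bigr)+e^{-\hbar^{-4/7}\frac12(s_{m_1}-s_{m_2})}$ and splits $\mathcal{I}_{1,1}=\mathcal{I}_{1,1}^1+\mathcal{I}_{1,1}^2$ accordingly. In the ``long return time'' piece the factor $1-e^{-\hbar^{-4/7}\cdot}$ is coupled to the $\max(1,|t|)^{-d/2}$ decay of the free evolution between the two visits to the same scatterer, and an integral estimate of the type of Lemma~\ref{LE:gamma_reg_time_int} converts this into a factor $\hbar^{2/7}$ for the ladder terms; in the ``short return time'' piece the time-simplex volume of the recollision window is itself of size $\hbar^{4/7(m_2-m_1)}$, giving $\hbar^{16/7}$ after squaring. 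Crossing pairings are handled by the usual $\hbar|\log(\hbar/t)|^{k+O(1)}$ bounds. Only because every term then carries a positive power of $\hbar$ (or a $1/k!$-type factorial times such a power) does the sum over $k\leq k_0$ and $\iota\in\mathcal{Q}_{k,i,j}$ vanish under the stated choice of $k_0$. If you want to salvage your route, you would need to supply a quantitative replacement for this dichotomy; the momentum-space heuristic in your last paragraph does not by itself produce any power of $\hbar$.
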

\begin{proof}
We will only prove the case $(i,j)=(1,1)$ as the other cases are proved analogously.
As previous we will start by ``writting'' up the $L^2$-norm and divide into cases depending on the number of collisions. However, before this we will insert ``$1$'' in the following way $1-e^{-\hbar^{-\frac{4}{7}\frac{1}{2}(s_{m_1}-s_{m_2})}}+e^{-\hbar^{-\frac{4}{7}\frac{1}{2}(s_{m_1}-s_{m_2})}}$ , where $m_1$ and $m_2$ are the numbers associated to $\iota$. The numbers $s_{m_1}$ and $s_{m_2}$ are the time variables we preform integration in from the definition of  the operator $\mathcal{I}_{1,1}(k,\boldsymbol{x},t;\hbar)$. To be precise we will make the following splitting
\begin{equation*}
	\begin{aligned}
	\mathcal{I}_{1,1}(k,\boldsymbol{x},\iota,t;\hbar) = \mathcal{I}_{1,1}^1(k,\boldsymbol{x},\iota,t;\hbar)  + \mathcal{I}_{1,1}^2(k,\boldsymbol{x},\iota,t;\hbar) ,
	  \end{aligned}
\end{equation*}
where
\begin{equation*}
	\begin{aligned}
	\mathcal{I}_{1,1}^1(k,\boldsymbol{x},\iota,t;\hbar) =  \sum_{\alpha\in \N^k} (i\lambda)^{|\alpha|} \int_{[0,t]_{\leq}^k} \big(1-e^{-\hbar^{-\frac{4}{7}}\frac{1}{2}(s_{m_1}-s_{m_2})}\big)  \prod_{m=1}^k \Theta_{\alpha_m}(s_{m-1},{s}_{m},x_{\iota(m)};V,\hbar)\, d\boldsymbol{s}_{k,1}U_{\hbar,0}(-t)
	  \end{aligned}
\end{equation*}
and 
\begin{equation*}
	\begin{aligned}
	\mathcal{I}_{1,1}^2(k,\boldsymbol{x},\iota,t;\hbar) =  \sum_{\alpha\in \N^k} (i\lambda)^{|\alpha|} \int_{[0,t]_{\leq}^k} e^{-\hbar^{-\frac{4}{7}}\frac{1}{2}(s_{m_1}-s_{m_2})}  \prod_{m=1}^k \Theta_{\alpha_m}(s_{m-1},{s}_{m},x_{\iota(m)};V,\hbar)\, d\boldsymbol{s}_{k,1}U_{\hbar,0}(-t).
	  \end{aligned}
\end{equation*}
Applying the triangle inequality for the $L^2(\R^d\times \Omega, dx\otimes \Pro)$-norm we get
\begin{equation}\label{EQ:norm_conv_with_op_1_rec_0}
	\begin{aligned}
	    &\sum_{k=3}^{k_0} \sum_{\iota\in\mathcal{Q}_{k,1,1}}  \mathbb{E}\Big[\big\lVert  \sum_{\boldsymbol{x}\in\mathcal{X}_{\neq}^{k-1}}  \mathcal{I}_{1,1}(k,\boldsymbol{x},t;\hbar) \varphi_\hbar\big\rVert_{L^2(\R^d)}^2 \Big]^\frac{1}{2} 
	   \\
	   &\leq  \sum_{k=3}^{k_0} \sum_{\iota\in\mathcal{Q}_{k,1,1}}   \mathbb{E}\Big[\big\lVert  \sum_{\boldsymbol{x}\in\mathcal{X}_{\neq}^{k-1}}  \mathcal{I}_{1,1}^1(k,\boldsymbol{x},t;\hbar) \varphi_\hbar\big\rVert_{L^2(\R^d)}^2 \Big]^\frac{1}{2} +  \mathbb{E}\Big[\big\lVert  \sum_{\boldsymbol{x}\in\mathcal{X}_{\neq}^{k-1}}  \mathcal{I}_{1,1}^2(k,\boldsymbol{x},t;\hbar) \varphi_\hbar\big\rVert_{L^2(\R^d)}^2 \Big]^\frac{1}{2}.
	\end{aligned}
\end{equation}
As in the proof of Lemma~\ref{expansion_aver_bound_Mainterm_rec_trun21_1} we will then apply Lemma~\ref{LE:crossing_dom_ladder} such that we only have to consider ladder terms for both cases. Hence we obtain for $j$ equal $1$ or $2$ and for each fixed $\kappa$ and $\iota$ that
\begin{equation}\label{EQ:norm_conv_with_op_1_rec_1}
	\begin{aligned}
	   \mathbb{E}\Big[\big\lVert  \sum_{\boldsymbol{x}\in\mathcal{X}_{\neq}^{k-1}}  \mathcal{I}_{1,1}^j(k,\boldsymbol{x},t;\hbar) \varphi_\hbar\big\rVert_{L^2(\R^d)}^2 \Big]
	    \leq2 \sum_{\alpha,\tilde{\alpha}\in \N^k}\lambda^{|\alpha|+|\tilde{\alpha}|}   \sum_{n=0}^{k-1} \binom{k-1}{n} n!  \sum_{\sigma\in\mathcal{A}(k-1,n)} \big|\Aver{ \mathcal{T}^j(n,\sigma,\alpha,\tilde{\alpha};\hbar)}\big|,
	\end{aligned}
\end{equation}
where
\begin{equation*}
	\begin{aligned}
	 \mathcal{T}^j(n,\sigma,\alpha,\tilde{\alpha};\hbar) = \sum_{(\boldsymbol{x},\tilde{\boldsymbol{x}})\in \mathcal{X}_{\neq}^{2k-2}}    \prod_{i=1}^n \frac{\delta(x_{\sigma_i}- \tilde{x}_{\sigma_{i}})}{\rho}
	  \int_{\R^d}  \mathcal{I}_{1,1}^j(k,\boldsymbol{x},\alpha,t;\hbar)\varphi_\hbar(x)\overline{ \mathcal{I}_{1,1}^j(k,\boldsymbol{\tilde{x}},\tilde{\alpha},t;\hbar) \varphi_\hbar(x)}  \,dx,
	\end{aligned}
\end{equation*}
where as usual $ \mathcal{I}_{1,1}^j(k,\boldsymbol{x},\alpha,t;\hbar)$ is defined as $\mathcal{I}_{1,1}^j(k,\boldsymbol{x},t;\hbar)$ for fixed $\alpha$. We start with the case $j=1$ for a fixed $\kappa$ and $\iota$. By definition of the operators we have that
\begin{equation*}
	\begin{aligned}
 	\MoveEqLeft  \mathcal{I}_{1,1}^1(k,\boldsymbol{x},\alpha,\iota,t;\hbar) \varphi(x) 
	=  \frac{1}{(2\pi\hbar)^{d}} \int_{\R_{+}^{j+1}} \int \delta(\tfrac{t}{\hbar}- \boldsymbol{s}_{0,k}^{+}+ \boldsymbol{t}_{1,a_k}^{+})
	 \big(1-\prod_{m=m_1}^{m_2-1} e^{-\hbar^{\frac{3}{7}}\frac{1}{2} s_m }\prod_{j=a_{m-1}+1}^{a_m}  e^{-\hbar^{\frac{3}{7}}\frac{1}{2}  t_j }\big)
	\\
	\times & e^{ i  \hbar^{-1}  \langle x,p_{k} \rangle}   \prod_{i=1}^{a_k} e^{i  t_{i} \frac{1}{2} \eta_{i}^2} \Big\{\prod_{m=1}^k e^{i  s_{m} \frac{1}{2} p_{m}^2}
	  e^{ -i   \langle  \hbar^{-1/d}x_{\iota(m)},p_{m}-p_{m-1} \rangle}     \hat{\mathcal{V}}_{\alpha_m}(p_m,p_{m-1},\boldsymbol{\eta} )  \Big\}    e^{is_0 \frac{1}{2} p_0^2} \hat{\varphi}(\tfrac{p_0}{\hbar}) \, d\boldsymbol{\eta}d\boldsymbol{p} dy   d\boldsymbol{t} d\boldsymbol{s},
	\end{aligned}
\end{equation*}
where we have done the change of variables in the time variables as described in Observation~\ref{obs_form_I_op_kernel}. Under this change of variables we have that the difference $s_{m_1}-s_{m_2}$ is mapped to $ \sum_{m=m_1}^{m_2-1} s_m$.
We now  preform the change of variables $p_{m_2-1}\mapsto p_{m_2-1} - p_{m_2}$ and $p_{m}\mapsto p_{m} - p_{m_2-1}$ for all $m\in\{m_1,\dots,m_2-2\}$, where $m_1$ and $m_2$ are the numbers associated to the map $\iota$. This change of variables and a relabelling yields the form
\begin{equation}\label{EQ:norm_conv_with_op_1_rec_4}
	\begin{aligned}
 	\MoveEqLeft \mathcal{I}_{1,1}^1(k,\boldsymbol{x},\alpha,\iota,t;\hbar) \varphi(x) 
	=  \frac{1}{(2\pi\hbar)^d} \int_{\R_{+}^{j+1}} \int \delta(\tfrac{t}{\hbar}- \boldsymbol{s}_{0,k}^{+}+ \boldsymbol{t}_{1,a_k}^{+})
	 \big(1-\prod_{m=m_1}^{m_2-1} e^{-\hbar^{\frac{3}{7}}\frac{1}{2} s_m }\prod_{j=a_{m-1}+1}^{a_m}  e^{-\hbar^{\frac{3}{7}}\frac{1}{2}  t_j }\big)
	 \\
	 \times&
	 e^{ i   \langle  \hbar^{-1}x,p_{k-1} \rangle}   e^{i  s_{m_2} \frac{1}{2} p_{m_2-1}^2} 
	   \hat{\mathcal{V}}_{\alpha_{m_2}}(p_{m_2-1},\tilde{p}+p_{m_2-1},\boldsymbol{\eta} )  \prod_{i=1}^{a_k} e^{i  t_{i} \frac{1}{2} \eta_{i}^2}  
	  \prod_{m=1}^{k-1}  e^{ -i   \langle  \hbar^{-1/d}x_{m},p_{m}-p_{m-1}\rangle} 
	   \\
	   \times& 
	    \Big\{\prod_{m=1}^{k-1}  
	   e^{i  s_{\iota^{*}(m)} \frac{1}{2} (p_{m}+\pi_{m}^1(\tilde{p}))^2}   \hat{\mathcal{V}}_{\alpha_{\iota^{*}(m)}}(p_m+\pi_{m}^1(\tilde{p}),p_{m-1}+\pi_{m}^2(\tilde{p}),\boldsymbol{\eta} )  \Big\}   e^{i  s_0 \frac{1}{2} p_0^2} \hat{\varphi}(\tfrac{p_0}{\hbar}) \, d\boldsymbol{\eta}d\boldsymbol{p}d\tilde{p}   d\boldsymbol{t} d\boldsymbol{s},
	\end{aligned}
\end{equation}
where $\iota^{*}$ is the function associated to $\iota$ and
\begin{equation*}
	\pi_{m}^1(\tilde{p} ) = \begin{cases}
	\tilde{p} & \text{if $m\in\{ m_1,\dots,m_2-1  \}$}
	\\
	0 &\text{otherwise}.
	\end{cases}
	\quad\text{and}\quad 
	\pi_{m}^2(\tilde{p} ) = \begin{cases}
	\tilde{p} & \text{if $m\in\{ m_1+1,\dots,m_2-1  \}$}
	\\
	0 &\text{otherwise}.
	\end{cases}
\end{equation*}
When taking the averages of $ \mathcal{T}^1(n,\sigma,\alpha,\tilde{\alpha};\hbar)$ we will consider the following two different cases separately. The cases are given by if there is at least one $\sigma_i$ such that $m_1 <\sigma_i <m_2$ or not. For the case where such an $\sigma_i$ we can argue as in the proof of Lemma~\ref{expansion_aver_bound_Mainterm_rec} end obtain the bound
 \begin{equation}\label{EQ:norm_conv_with_op_1_rec_5}
   	\begin{aligned}
	 \Aver{\mathcal{T}^1(n,\sigma,\alpha,\tilde{\alpha};\hbar) }
	 \leq {}&   C_d^{a_k+\tilde{a}_k}  \left(\norm{\hat{V}}_{1,\infty,5d+5}\right)^{|\alpha|+|\tilde{\alpha}|} 
	  \hbar |\log(\zeta)|^{n+4} \frac{\rho(\rho t)^{2k-n-3} }{(k-2)!(k-n-1)!} \norm{\varphi}_{\mathcal{H}^{2d+2}_\hbar(\R^d)}^2.
	\end{aligned}
\end{equation}
where in addition to the arguments previous used in the proof of Lemma~\ref{expansion_aver_bound_Mainterm_rec}  also have used the trivial estimate
\begin{equation*}
	\begin{aligned}
	 1-\prod_{m=m_1}^{m_2-1} e^{-\hbar^{\frac{3}{7}}\frac{1}{2} s_m }\prod_{j=a_{m-1}+1}^{a_m}  e^{-\hbar^{\frac{3}{7}}\frac{1}{2}  t_j }
	 \leq 1.
	 \end{aligned}
\end{equation*}
For the other case we have there exists an $i_1$ such that $\sigma_{i_1}< m_1 <m_2\leq \sigma_{i_1+1}$. When we take the average of $ \mathcal{T}^1(n,\sigma,\alpha,\tilde{\alpha};\hbar)$ we apply Lemma~\ref{LE:Exp_ran_phases} and integrate all variables in the delta functions to obtain that
\begin{equation*}
	\begin{aligned}
 	\MoveEqLeft \Aver{\mathcal{T}^1(n,\sigma,\alpha,\tilde{\alpha};\hbar)}
	=  \frac{(\rho\hbar(2\pi)^d)^{2k-2-n}}{(2\pi\hbar)^d} \int_{\R_{+}^{2j+2}} \int \delta(\tfrac{t}{\hbar}- \boldsymbol{s}_{0,k}^{+}+ \boldsymbol{t}_{1,a_k}^{+}) \delta(\tfrac{t}{\hbar}- \boldsymbol{\tilde{s}}_{0,k}^{+}+ \boldsymbol{\tilde{t}}_{1,a_k}^{+})  
	\\
	\times & \big(1-\prod_{m=m_1}^{m_2-1} e^{-\hbar^{\frac{3}{7}}\frac{1}{2} s_m }\prod_{j=a_{m-1}+1}^{a_m}  e^{-\hbar^{\frac{3}{7}}\frac{1}{2}  t_j }\big)  \big(1-\prod_{m=m_1}^{m_2-1} e^{-\hbar^{\frac{3}{7}}\frac{1}{2} \tilde{s}_m }\prod_{j=\tilde{a}_{m-1}+1}^{\tilde{a}_m}  e^{-\hbar^{\frac{3}{7}}\frac{1}{2}  \tilde{t}_j }\big)   e^{i  s_{m_2} \frac{1}{2} p_{i_1}^2}     e^{-i  \tilde{s}_{m_2} \frac{1}{2} p_{i_1}^2}  
	 \\
	 \times&    \prod_{i=1}^n  e^{i  s_{\iota^{*}(\sigma_i)} \frac{1}{2} (p_{i}+\pi_{\sigma_i}^1(\tilde{p}))^2}  e^{ -i  \tilde{s}_{\iota^{*}(\sigma_i)} \frac{1}{2} (p_{i}+\pi_{\sigma_i}^1(\tilde{q}))^2}   \prod_{i=0}^n \prod_{m=\sigma_i+1}^{\sigma_{i+1}-1}  e^{i  s_{\iota^{*}(m)} \frac{1}{2} (p_{i}+\pi_{m}^1(\tilde{p}))^2}  e^{-i  \tilde{s}_{\iota^{*}(m)} \frac{1}{2} (p_{i}+\pi_{m}^1(\tilde{q}))^2}
	 \\
	 \times& 
	  e^{i  s_0 \frac{1}{2} p_0^2} e^{-i  \tilde{s}_0 \frac{1}{2} p_0^2} \prod_{i=1}^{a_k} e^{i  t_{i} \frac{1}{2} \eta_{i}^2}    \prod_{i=1}^{\tilde{a}_k} e^{-i  \tilde{t}_{i} \frac{1}{2} \xi_{i}^2} \mathcal{G}(\boldsymbol{p},\tilde{p},\tilde{q},\boldsymbol{\eta},\boldsymbol{\xi})   |\hat{\varphi}(\tfrac{p_0}{\hbar})|^2 \, d\boldsymbol{\eta} d\boldsymbol{\xi}d\boldsymbol{p}d\tilde{p}   d\boldsymbol{t} d\boldsymbol{s} d\boldsymbol{\tilde{t}} d\boldsymbol{\tilde{s}},
	\end{aligned}
\end{equation*}
where the function $\mathcal{G}(\boldsymbol{p},\tilde{p},\tilde{q},\boldsymbol{\eta},\boldsymbol{\xi})$ is given by
\begin{equation*}
	\begin{aligned}
 	\mathcal{G}(\boldsymbol{p},\tilde{p},\tilde{q},\boldsymbol{\eta},\boldsymbol{\xi}) ={}&  \hat{\mathcal{V}}_{\alpha_{m_2}}(p_{i_1},\tilde{p}+p_{i_1},\boldsymbol{\eta} )   \overline{\hat{\mathcal{V}}_{\tilde{\alpha}_{m_2}}(p_{i_1},\tilde{q}+p_{i_1},\boldsymbol{\xi} ) } 
	  \\
	   &\times \prod_{i=1}^n  \hat{\mathcal{V}}_{\alpha_{\iota^{*}(\sigma_i)}}(p_i+\pi_{\sigma_i}^1(\tilde{p}),p_{\sigma_i-1}+\pi_{\sigma_i}^2(\tilde{p}),\boldsymbol{\eta} ) 
	    \overline{ \hat{\mathcal{V}}_{\tilde{\alpha}_{\iota^{*}(\sigma_i)}}(p_i+\pi_{\sigma_i}^1(\tilde{q}),p_{\sigma_i-1}+\pi_{\sigma_i}^2(\tilde{q}),\boldsymbol{\xi} )}  
	 \\
	 &\times \prod_{i=0}^n \prod_{m=\sigma_i+1}^{\sigma_{i+1}-1} \hat{\mathcal{V}}_{\alpha_{\iota^{*}(m)}}(p_i+\pi_{m}^1(\tilde{p}),p_{i}+\pi_{m}^2(\tilde{p}),\boldsymbol{\eta} ) \overline{ \hat{\mathcal{V}}_{\tilde{\alpha}_{\iota^{*}(m)}}(p_i+\pi_{m}^1(\tilde{q}),p_{i}+\pi_{m}^2(\tilde{q}),\boldsymbol{\xi} )}. 
	\end{aligned}
\end{equation*}
Due to the definition of $\pi_m$ and our assumptions on $\sigma$ we have that $\pi_m$ is different from zero only in the places where it is added to $p_{i_1}$. Hence we preform the following change of variables $\tilde{p}\mapsto \tilde{p}-p_{i_1}$ and $\tilde{q}\mapsto \tilde{q}-p_{i_1}$. With this change of variables we obtain the following expression
\begin{equation}\label{EQ:norm_conv_with_op_1_rec_6}
	\begin{aligned}
 	\MoveEqLeft \Aver{\mathcal{T}^1(n,\sigma,\alpha,\tilde{\alpha};\hbar)}
	=  \frac{(\rho\hbar(2\pi)^d)^{2k-2-n}}{(2\pi\hbar)^d} \int_{\R_{+}^{2j}} \int  \boldsymbol{1}_{[0,\hbar^{-1}t]}( \boldsymbol{s}_{1,k}^{+}+ \boldsymbol{t}_{1,a_k}^{+})  \boldsymbol{1}_{[0,\hbar^{-1}t]}( \boldsymbol{\tilde{s}}_{1,k}^{+}+ \boldsymbol{\tilde{t}}_{1,a_k}^{+})    
	\\
	\times & \big(1-\prod_{m=m_1}^{m_2-1} e^{-\hbar^{\frac{3}{7}}\frac{1}{2} s_m }\prod_{j=\tilde{a}_{m-1}+1}^{\tilde{a}_m}  e^{-\hbar^{\frac{3}{7}}\frac{1}{2}  t_j }\big)  \big(1-\prod_{m=m_1}^{m_2-1} e^{-\hbar^{\frac{3}{7}}\frac{1}{2} \tilde{s}_m }\prod_{j=a_{m-1}+1}^{a_m}  e^{-\hbar^{\frac{3}{7}}\frac{1}{2}  \tilde{t}_j }\big)   e^{i  s_{m_2} \frac{1}{2} p_{i_1}^2}     e^{-i  \tilde{s}_{m_2} \frac{1}{2} p_{i_1}^2}  
	 \\
	 \times&    \prod_{i=1}^n  e^{i  (s_{\iota^{*}(\sigma_i)} - \tilde{s}_{\iota^{*}(\sigma_i)})  \frac{1}{2} p_{i}^2}  
	 \prod_{i=0,i\neq i_1}^n \prod_{m=\sigma_i+1}^{\sigma_{i+1}-1}  e^{i  (s_{\iota^{*}(m)}-\tilde{s}_{\iota^{*}(m)}) \frac{1}{2} p_{i}^2}
	      \prod_{m=\sigma_{i_1}+1}^{m_1-1}  e^{i  (s_{\iota^{*}(m)}-\tilde{s}_{\iota^{*}(m)}) \frac{1}{2} p_{i_1}^2}   
      	\\
	\times&
	\prod_{m=m_1}^{m_2-1}  e^{i  s_{\iota^{*}(m)} \frac{1}{2} \tilde{p}^2}  e^{-i  \tilde{s}_{\iota^{*}(m)} \frac{1}{2} \tilde{q}^2}   \prod_{m=m_2}^{\sigma_{i_1+1}-1}  e^{i  (s_{\iota^{*}(m)}-\tilde{s}_{\iota^{*}(m)}) \frac{1}{2} p_{i_1}^2} \prod_{i=1}^{a_k} e^{i  t_{i} \frac{1}{2} \eta_{i}^2}    \prod_{i=1}^{\tilde{a}_k} e^{-i  \tilde{t}_{i} \frac{1}{2} \xi_{i}^2}
	 \\
	 \times& 
	  e^{i  (\hbar^{-1}t- \boldsymbol{s}_{1,k}^{+}- \boldsymbol{t}_{1,a_k}^{+}) \frac{1}{2} p_0^2} e^{-i  (\hbar^{-1}t- \boldsymbol{\tilde{s}}_{1,k}^{+}- \boldsymbol{\tilde{t}}_{1,a_k}^{+}) \frac{1}{2} p_0^2}  \mathcal{G}(\boldsymbol{p},\tilde{p}-p_{i_1},\tilde{q}-p_{i_1},\boldsymbol{\eta},\boldsymbol{\xi})   |\hat{\varphi}(\tfrac{p_0}{\hbar})|^2 \, d\boldsymbol{\eta} d\boldsymbol{\xi}d\boldsymbol{p}d\tilde{p}   d\boldsymbol{t} d\boldsymbol{s} d\boldsymbol{\tilde{t}} d\boldsymbol{\tilde{s}},
	\end{aligned}
\end{equation}
where we also have evaluated the integral in $s_0$ and $\tilde{s}_0$. As in the proof of Lemma~\ref{expansion_aver_bound_Mainterm} we will now apply the ``Fourier method'' to estimate the integrals. So we start by divide into $2^{a_k+\tilde{a}_k}$ cases depending on the size of  the $t$'s and $\tilde{t}$'s. Then for each of this cases we divide into a further $2^{n+2}$ cases depending on the size of linear combinations of the $s$ and $\tilde{s}$ variables.   Preforming this argument we obtain the estimate
\begin{equation}\label{EQ:norm_conv_with_op_1_rec_7}
 	 |\Aver{\mathcal{T}^1(n,\sigma,\alpha,\tilde{\alpha};\hbar)}|
	\leq C^{a_k + \tilde{a}_k + n+2} (\rho\hbar)^{2k-2-n} \norm{\varphi_\hbar}_{L^2(\R^d)}^2 \norm{\hat{V}}_{1,\infty,3d+3}^{|\alpha| + |\tilde{\alpha}|} \mathcal{I}(t,\hbar),
\end{equation}
where
\begin{equation*}
	\begin{aligned}
	\mathcal{I}(t,\hbar) ={}& \int_{\R_{+}^{2j}}
	\frac{\boldsymbol{1}_{[0,\hbar^{-1}t]}( \boldsymbol{s}_{1,k}^{+}+ \boldsymbol{t}_{1,a_k}^{+})  }{\max(1, | \sum_{m=m_1}^{m_2-1} s_{\iota^{*}(m)} | )^\frac{d}{2}}  
	\frac{ \boldsymbol{1}_{[0,\hbar^{-1}t]}( \boldsymbol{\tilde{s}}_{1,k}^{+}+ \boldsymbol{\tilde{t}}_{1,a_k}^{+})}{\max(1, | \sum_{m=m_1}^{m_2-1} \tilde{s}_{\iota^{*}(m)} | )^\frac{d}{2}} 
	\prod_{i=1}^{a_k} \frac{1}{\max(1, | t_i |)^\frac{d}{2}} 
	\prod_{i=1}^{\tilde{a}_k} \frac{1}{\max(1, | \tilde{t}_i |)^\frac{d}{2}} 
	\\
	&\times \frac{\big(1-\prod_{m=m_1}^{m_2-1} e^{-\hbar^{\frac{3}{7}}\frac{1}{2} s_m }\prod_{j=a_{m-1}+1}^{a_m}  e^{-\hbar^{\frac{3}{7}}\frac{1}{2}  t_j }\big)  \big(1-\prod_{m=m_1}^{m_2-1} e^{-\hbar^{\frac{3}{7}}\frac{1}{2} \tilde{s}_m }\prod_{j=\tilde{a}_{m-1}+1}^{\tilde{a}_m}  e^{-\hbar^{\frac{3}{7}}\frac{1}{2}  \tilde{t}_j }\big) }{\max(1, | \sum_{m=\sigma_{i_1}}^{m_1-1} s_{\iota^{*}(\sigma_i)} - \tilde{s}_{\iota^{*}(\sigma_i)} + \sum_{m=m_2}^{\sigma_{i_1+1}-1} s_{\iota^{*}(\sigma_i)} - \tilde{s}_{\iota^{*}(\sigma_i)} +s_{m_3}-\tilde{s}_{m_2} | )^\frac{d}{2}}  
	\\
	&\times   \prod_{i=1,i\neq i_1}^{n} \frac{1}{\max(1, | \sum_{m=\sigma_i}^{\sigma_{i+1}-1} s_{\iota^{*}(m)} - \tilde{s}_{\iota^{*}(m)} |)^\frac{d}{2}} 
	    d\boldsymbol{t} d\boldsymbol{s} d\boldsymbol{\tilde{t}} d\boldsymbol{\tilde{s}}.
	\end{aligned}
\end{equation*}
To estimate the time integrals we  first observe that
\begin{equation*}
	\begin{aligned}
 	\MoveEqLeft \boldsymbol{1}_{[0,\hbar^{-1}t]}( \boldsymbol{s}_{1,k}^{+}+ \boldsymbol{t}_{1,a_k}^{+})  \boldsymbol{1}_{[0,\hbar^{-1}t]}( \boldsymbol{\tilde{s}}_{1,k}^{+}+ \boldsymbol{\tilde{t}}_{1,a_k}^{+}) 
	\\
	\leq{}&  \boldsymbol{1}_{[0,\hbar^{-1}t]}( \boldsymbol{s}_{1,m_1-1}^{+} + \boldsymbol{s}_{m_2,k}^{+})  \boldsymbol{1}_{[0,\hbar^{-1}t]}( \boldsymbol{s}_{m_1,m_2-1}^{+} )  \boldsymbol{1}_{[0,\hbar^{-1}t]}(  \sum_{m=1,m\notin\iota^*(\sigma)}^{k}\tilde{s}_{m}- \boldsymbol{\tilde{s}}_{m_1,m_2-1}^{+} ) \boldsymbol{1}_{[0,\hbar^{-1}t]}( \boldsymbol{\tilde{s}}_{m_1,m_2-1}^{+} ).
	\end{aligned}
\end{equation*}
After applying this inequality we evaluate all integrals in $s$ and $\tilde{s}$ except $s_{m_1},\dots,s_{m_2-1}$ and $\tilde{s}_{m_1},\dots,\tilde{s}_{m_2-1}$. Moreover we also use that $\iota^*(m)=m$ for all $m\leq m_2-1$. This gives the bound
\begin{equation*}
	\begin{aligned}
	\mathcal{I}(t,\hbar) 
	\leq{}& 
	\frac{C(\hbar^{-1}t)^{2k-2(m2-m_1)-n}}{(k-m_2+m_1)! (k-m_2+m_1-n)!} 
	\\
	&\times \int_{\R_{+}^{a_k+m_2-m_1}}
	\boldsymbol{1}_{[0,\hbar^{-1}t]}( \boldsymbol{s}_{m_1,m_2-1}^{+} )
	 \tfrac{ \big(1-\prod_{m=m_1}^{m_2-1} e^{-\hbar^{\frac{3}{7}}\frac{1}{2} s_m }\prod_{j=a_{m-1}+1}^{a_m}  e^{-\hbar^{\frac{3}{7}}\frac{1}{2}  t_j }\big)}   {\max(1, | \sum_{m=m_1}^{m_2-1} s_{m} | )^\frac{d}{2}} 
	 \prod_{i=1}^{a_k} \tfrac{1}{\max(1, | t_i |)^\frac{d}{2}}    d\boldsymbol{t} d\boldsymbol{s}
	\\
	&\times
	 \int_{\R_{+}^{\tilde{a}_k +m_2-m_1}}
	 \boldsymbol{1}_{[0,\hbar^{-1}t]}( \boldsymbol{\tilde{s}}_{m_1,m_2-1}^{+} )
	\tfrac{\big(1-\prod_{m=m_1}^{m_2-1} e^{-\hbar^{\frac{3}{7}}\frac{1}{2} \tilde{s}_m }\prod_{j=\tilde{a}_{m-1}+1}^{\tilde{a}_m}  e^{-\hbar^{\frac{3}{7}}\frac{1}{2}  \tilde{t}_j }\big) }{\max(1, | \sum_{m=m_1}^{m_2-1} \tilde{s}_{m} | )^\frac{d}{2}} 
	\prod_{i=1}^{\tilde{a}_k} \tfrac{1}{\max(1, | \tilde{t}_i |)^\frac{d}{2}} 
	   d\boldsymbol{\tilde{t}} d\boldsymbol{\tilde{s}}.
	\end{aligned}
\end{equation*}
Using the identity for difference given in \eqref{def_of_prod} of products and that $e^{-t}\leq1$ for all $t\geq0$ we get that
\begin{equation*}
	\begin{aligned}
	\big(1-\prod_{m=m_1}^{m_2-1} e^{-\hbar^{\frac{3}{7}}\frac{1}{2} s_m }\prod_{j=a_{m-1}+1}^{a_m}  e^{-\hbar^{\frac{3}{7}}\frac{1}{2}  t_j }\big) \leq \sum_{m=m_1}^{m_2-1} 1- e^{-\hbar^{\frac{3}{7}}\frac{1}{2} s_m } + \sum_{m=m_1}^{m_2-1}\sum_{j=a_{m-1}+1}^{a_m} 1-e^{-\hbar^{\frac{3}{7}}\frac{1}{2}  t_j }.
	\end{aligned}
\end{equation*}
Applying this inequality and Lemma~\ref{LE:gamma_reg_time_int} we get that
\begin{equation*}
	\begin{aligned}
	\MoveEqLeft \int_{\R_{+}^{a_k+m_2-m_1}}
	\boldsymbol{1}_{[0,\hbar^{-1}t]}( \boldsymbol{s}_{m_1,m_2-1}^{+} )
	 \frac{ \big(1-\prod_{m=m_1}^{m_2-1} e^{-\hbar^{\frac{3}{7}}\frac{1}{2} s_m }\prod_{j=a_{m-1}+1}^{a_m}  e^{-\hbar^{\frac{3}{7}}\frac{1}{2}  t_j }\big)}   {\max(1, | \sum_{m=m_1}^{m_2-1} s_{\iota^{*}(m)} | )^\frac{d}{2}} 
	 \prod_{i=1}^{a_k} \frac{1}{\max(1, | t_i |)^\frac{d}{2}}    d\boldsymbol{t} d\boldsymbol{s}
	 \\
	 \leq {}& \sum_{m=m_1}^{m_2-1} \int_{\R_{+}^{a_k+m_2-m_1}} \boldsymbol{1}_{[0,\hbar^{-1}t]}( \boldsymbol{s}_{m_1,m_2-1}^{+} -s_m)
	 \frac{ 1- e^{-\hbar^{\frac{3}{7}}\frac{1}{2} s_m }}   {\max(1, | s_m | )^\frac{d}{2}} 
	 \prod_{i=1}^{a_k} \frac{1}{\max(1, | t_i |)^\frac{d}{2}}    d\boldsymbol{t} d\boldsymbol{s}
	 \\
	 &+  \sum_{m=m_1}^{m_2-1}\sum_{j=a_{m-1}+1}^{a_m} \int_{\R_{+}^{a_k+m_2-m_1}}
	 \boldsymbol{1}_{[0,\hbar^{-1}t]}( \boldsymbol{s}_{m_1+1,m_2-1}^{+} )
	 \frac{ 1-e^{-\hbar^{\frac{3}{7}}\frac{1}{2}  t_j }}   {\max(1, |  s_{m_1} | )^\frac{d}{2}} 
	 \prod_{i=1}^{a_k} \frac{1}{\max(1, | t_i |)^\frac{d}{2}}    d\boldsymbol{t} d\boldsymbol{s}
	 \\
	 \leq{}& C^{|\alpha|} \hbar^{\frac17}\frac{(\hbar^{-1}t)^{m_2-m_1-1}}{(m_2-m_1-1)!}\sum_{m=1}^k \alpha_m,
	\end{aligned}
\end{equation*}
where we have used that we at most can have $|\alpha|$ terms and that for any $m\in\{m_1,\dots,m_2-1\}$ we have that
\begin{equation*}
	\frac{1}{\max(1, | \sum_{m=m_1}^{m_2-1} s_{\iota^{*}(m)} | )^\frac{d}{2}}  \leq\frac{1} {\max(1, | s_m| )^\frac{d}{2}} .
\end{equation*}
We get an analogous estimate for the integral over the remaining $\tilde{s}$ and $\tilde{t}$. Using these estimates we obtain that
\begin{equation}\label{EQ:norm_conv_with_op_1_rec_8}
	\begin{aligned}
	\mathcal{I}(t,\hbar) 
	\leq{}&C^{|\alpha|+|\tilde{\alpha}|} \hbar^{\frac27} 
	\frac{(\hbar^{-1}t)^{2k-2-n}}{(k-m_2+m_1)! (k-m_2+m_1-n)!(m_2-m_1-1)!(m_2-m_1-1)!}  \sum_{m=1}^k \alpha_m \sum_{m=1}^k \tilde{\alpha}_m
	\\
	\leq{}&  C^{|\alpha|+|\tilde{\alpha}|} \hbar^{\frac27} 
	\frac{ 2^{2k-2-n}(\hbar^{-1}t)^{2k-2-n}}{(k-1)! (k-1-n)!}  \sum_{m=1}^k \alpha_m \sum_{m=1}^k \tilde{\alpha}_m.
	\end{aligned}
\end{equation}
From combining \eqref{EQ:norm_conv_with_op_1_rec_7} and \eqref{EQ:norm_conv_with_op_1_rec_8} we obtain that
\begin{equation}\label{EQ:norm_conv_with_op_1_rec_9}
 	 |\Aver{\mathcal{T}^1(n,\sigma,\alpha,\tilde{\alpha};\hbar)}|
	\leq (C_d\norm{\hat{V}}_{1,\infty,3d+3})^{|\alpha|+|\tilde{\alpha}|} \hbar^{\frac27}  \norm{\varphi_\hbar}_{L^2(\R^d)}^2   \frac{ (2\rho t)^{2k-2-n}}{(k-1)! (k-1-n)!} \sum_{m=1}^k \alpha_m \sum_{m=1}^k \tilde{\alpha}_m.
\end{equation}
From \eqref{EQ:norm_conv_with_op_1_rec_1}, \eqref{EQ:norm_conv_with_op_1_rec_5} and \eqref{EQ:norm_conv_with_op_1_rec_9} we obtain the estimate
\begin{equation*}
	\begin{aligned}
	   \MoveEqLeft \mathbb{E}\Big[\big\lVert  \sum_{\boldsymbol{x}\in\mathcal{X}_{\neq}^{k-1}}  \mathcal{I}_{1,1}^1(k,\boldsymbol{x},t;\hbar) \varphi_\hbar\big\rVert_{L^2(\R^d)}^2 \Big]
	    \leq 2 \sum_{\alpha,\tilde{\alpha}\in \N^k}\lambda^{|\alpha|+|\tilde{\alpha}|}   \sum_{n=0}^{k-1} \binom{k-1}{n} n!  
	    \\
	    &\times  \sum_{\sigma\in\mathcal{A}(k-1,n)}\Big\{ C_d^{a_k+\tilde{a}_k}  \left(\norm{\hat{V}}_{1,\infty,5d+5}\right)^{|\alpha|+|\tilde{\alpha}|} 
	  \hbar |\log(\zeta)|^{n+4} \frac{\rho(\rho t)^{2k-n-3} }{(k-2)!(k-n-1)!} \norm{\varphi_\hbar}_{\mathcal{H}^{2d+2}_\hbar(\R^d)}^2
	  \\
	  &+(C_d\norm{\hat{V}}_{1,\infty,3d+3})^{|\alpha|+|\tilde{\alpha}|} \hbar^{\frac27}  \norm{\varphi_\hbar}_{L^2(\R^d)}^2   \frac{ (2\rho t)^{2k-2-n}}{(k-1)! (k-1-n)!} \sum_{m=1}^k \alpha_m \sum_{m=1}^k \tilde{\alpha}_m\Big\}.
	\end{aligned}
\end{equation*}
We can estimate all these sums as in the previous proofs and obtain that
\begin{equation}\label{EQ:norm_conv_with_op_1_rec_10}
	\begin{aligned}
	   \MoveEqLeft \mathbb{E}\Big[\big\lVert  \sum_{\boldsymbol{x}\in\mathcal{X}_{\neq}^{k-1}}  \mathcal{I}_{1,1}^1(k,\boldsymbol{x},t;\hbar) \varphi_\hbar\big\rVert_{L^2(\R^d)}^2 \Big]
	    \leq (C^{k}k  \hbar |\log(\hbar)|^{k+4}  + C^{k}k^2 \hbar^{\frac27}) \sup_{\hbar\in\mathcal{I}} \norm{\varphi}_{\mathcal{H}^{2d+2}_\hbar(\R^d)}^2.
	\end{aligned}
\end{equation}
We now turn to the case where $j=2$. Here we will again consider the two different cases are given by if there is at least one $\sigma_i$ such that $m_1 <\sigma_i <m_2$ or not. For the case where such an $\sigma_i$ we can again argue as in the proof of Lemma~\ref{expansion_aver_bound_Mainterm_rec} end obtain the bound
 \begin{equation}\label{EQ:norm_conv_with_op_1_rec_11}
   	\begin{aligned}
	 \Aver{\mathcal{T}^2(n,\sigma,\alpha,\tilde{\alpha};\hbar) }
	 \leq {}&   C_d^{a_k+\tilde{a}_k}  \left(\norm{\hat{V}}_{1,\infty,5d+5}\right)^{|\alpha|+|\tilde{\alpha}|} 
	  \hbar |\log(\tfrac{\hbar}{t})|^{n+4} \frac{\rho(\rho t)^{2k-n-3} }{(k-2)!(k-n-1)!} \norm{\varphi}_{\mathcal{H}^{2d+2}_\hbar(\R^d)}^2.
	\end{aligned}
\end{equation}
For the other case we again have there exists an $i_1$ such that $\sigma_{i_1}< m_1 <m_2\leq \sigma_{i_1+1}$. Observe that for this to be true then $n\leq k-1-m_2+m_1$. For the previous estimates this observation was not crucial but it will be now. We write up the expression for $\mathcal{I}_{1,1}^1(k,\boldsymbol{x},\alpha,\iota,t;\hbar) \varphi(x)$ and preform the change of variables $p_{m_2-1}\mapsto p_{m_2-1} - p_{m_2}$ and $p_{m}\mapsto p_{m} - p_{m_2-1}$ for all $m\in\{m_1,\dots,m_2-2\}$, where $m_1$ and $m_2$ are the numbers associated to the map $\iota$. Moreover, we also do a relabelling by changing $p_{m_2-1}$ into $\tilde{p}$ and $p_m$ into $p_{m-1}$ for all $m\in\{m_2,\dots,k\}$. This yields the form
\begin{equation*}
	\begin{aligned}
 	\MoveEqLeft \mathcal{I}_{1,1}^2(n,\boldsymbol{x},\alpha,\iota,t;\hbar) \varphi(x) 
	=  \frac{1}{(2\pi\hbar)^d\hbar^{k}} \int_{[0,t]_{\leq}^k} \int
	 e^{ i   \langle  \hbar^{-1}x,p_{k-1} \rangle}  
	 e^{i  (s_{m_1}-s_{m_2}) \frac{1}{2}\hbar^{-1} \tilde{p}^2 } e^{-\hbar^{-\frac{4}{7}}\frac{1}{2}(s_{m_1}-s_{m_2})} 
	 \\
	 \times& 
	 e^{i \hbar^{-1}  \langle \tilde{p},s_{m_1} p_{m_1}-s_{m_2}p_{m_2-1}  \rangle}
	 e^{i \hbar^{-1}\sum_{i=m_1+1}^{m_2-1} s_i \langle \tilde{p}, p_i-p_{i-1}  \rangle}  \prod_{m=1}^{k-1}  e^{ -i   \langle  \hbar^{-1/d}x_{m},p_{m}-p_{m-1}\rangle} e^{i  s_{\iota^{*}(m)} \frac{1}{2}\hbar^{-1} (p_{m}^2-p_{m-1}^2)} 
	 \\
	 \times&  \prod_{m=1}^{k-1}  \Psi_{\alpha_{\iota^{*}(m)}}(p_{m} +\pi_{m}^1(\tilde{p})  ,p_{m-1}+ \pi_{m}^2(\tilde{p}) ,\hbar^{-1}(s_{\iota^{*}(m)-1}-s_{\iota^{*}(m)}); V) 
	  \\
	  \times&   \Psi_{\alpha_{m_2}}(p_{m_2-1}  ,\tilde{p}+p_{m_2-1} ,\hbar^{-1}(s_{m_2-1}-s_{m_2}); V)  e^{i  t \frac{1}{2}\hbar^{-1} p_0^2} \hat{\varphi}(\tfrac{p_0}{\hbar})  \,d\boldsymbol{p} d\tilde{p} d\boldsymbol{s}.
	\end{aligned}
\end{equation*}
Note the we here have used the form obtained in the proof for the Duhamel expansion. We can again use this to write an expression for $\mathcal{T}^2(n,\sigma,\alpha,\tilde{\alpha};\hbar)$. This expression will be similar to the one considered in the proof of Lemma~\ref{expansion_aver_bound_Mainterm}. Hence we can preform an argument similar to the one used there. Where the main difference will be in how we handle the integrals in the time variables $s$ and $\tilde{s}$. The main ideas is to first divide into cases depending on the size of the $t$'s and $\tilde{t}$'s. Then for each of these terms we use Lemma~\ref{app_quadratic_integral_tech_est} in the variables $p_1,\dots,p_n$. The main difference to the argument used in Lemma~\ref{expansion_aver_bound_Mainterm} is in the estimate done in \eqref{EQ:exp_aver_b_Mainterm_5}. For the current case we need to estimate the following expression
  \begin{equation}\label{EQ:norm_conv_with_op_1_rec_12}
	\begin{aligned}
	 \int_{[0,t]_{\leq}^k}\int_{[0,t]_{\leq}^{k}}   \tfrac{e^{-\hbar^{-\frac{4}{7}}\frac{1}{2}(s_{m_1}-s_{m_2})} e^{-\hbar^{-\frac{4}{7}}\frac{1}{2}(\tilde{s}_{m_1}-\tilde{s}_{m_2})} }{\max(1, \hbar^{-1}| s_{\iota^*(\sigma_n)}- \tilde{s}_{\iota^*(\sigma_n)} +l^1_n(\boldsymbol{t}, \boldsymbol{\tilde{t}}) |)^\frac{d}{2}} 
	\prod_{i=1}^{n-1} \tfrac{1}{\max(1,\hbar^{-1} | s_{\iota^*(\sigma_i)}- \tilde{s}_{\iota^*(\sigma_i)}-s_{\iota^*(\sigma_{i+1})}+ \tilde{s}_{\iota^*(\sigma_{i+1})} +l^1_i(\boldsymbol{t}, \boldsymbol{\tilde{t}}) |)^\frac{d}{2}}  \, d\boldsymbol{\tilde{s}} d\boldsymbol{s},
	\end{aligned}
\end{equation} 
where we due to our assumptions have that ${\iota^*(\sigma_i)}\notin\{m_1,\dots,m_2\}$ for all $i$ and $l^1_i(\boldsymbol{t}, \boldsymbol{\tilde{t}})$ are a linear function of the the variables $\boldsymbol{t}$ and $\boldsymbol{\tilde{t}}$. Moreover the the linear function $l^1_{i_1}(\boldsymbol{t}, \boldsymbol{\tilde{t}})$ will also depend on $s_m$ and $\tilde{s}_m$ for $m\in\{m_1,\dots,m_2\}$. In order to estimate \eqref{EQ:norm_conv_with_op_1_rec_12} we firstly observe that
  \begin{equation*}
	\begin{aligned}
	 \boldsymbol{1}_{[0,t]_{\leq}^k}(s_k,\dots,s_1)  \leq \boldsymbol{1}_{[0,t]_{\leq}^{k-m_2+m_1-1}}(s_k,\dots,s_{m_2+1},s_{m_1-1},\dots,s_1) \boldsymbol{1}_{[0,t]_{\leq}^{m_2-m_1+1}}(s_{m_2},\dots,s_{m_1}). 
	\end{aligned}
\end{equation*} 
We use the same estimate for $\boldsymbol{1}_{[0,t]_{\leq}^k}(\tilde{s}_k,\dots,\tilde{s}_1)$ but here we also exclude $\tilde{s}_{\iota^*(\sigma_i)}$ for all $i\in\{1,\dots,n\}$. Applying these observations/estimates and integrating all variables except $s_m$ and $\tilde{s}_m$ for $m\in\{m_1,\dots,m_2\}$ we get that
  \begin{equation}\label{EQ:norm_conv_with_op_1_rec_14}
	\begin{aligned}
	\MoveEqLeft  \int_{[0,t]_{\leq}^k}\int_{[0,t]_{\leq}^{k}}   \tfrac{e^{-\hbar^{-\frac{4}{7}}\frac{1}{2}(s_{m_1}-s_{m_2})} e^{-\hbar^{-\frac{4}{7}}\frac{1}{2}(\tilde{s}_{m_1}-\tilde{s}_{m_2})} }{\max(1, \hbar^{-1}| s_{\iota^*(\sigma_n)}- \tilde{s}_{\iota^*(\sigma_n)} +l^1_n(\boldsymbol{t}, \boldsymbol{\tilde{t}}) |)^\frac{d}{2}} 
	\prod_{i=1}^{n-1} \tfrac{1}{\max(1,\hbar^{-1} | s_{\iota^*(\sigma_i)}- \tilde{s}_{\iota^*(\sigma_i)}-s_{\iota^*(\sigma_{i+1})}+ \tilde{s}_{\iota^*(\sigma_{i+1})} +l^1_i(\boldsymbol{t}, \boldsymbol{\tilde{t}}) |)^\frac{d}{2}}  \, d\boldsymbol{\tilde{s}} d\boldsymbol{s}
	\\
	\leq{}&C^n \tfrac{\hbar^n t^{2k -2m_2+2m_1-2-n}}{(k-m_2+m_1-1)!(k -m_2+m_1-1-n)!} \Big(\int_{[0,t]_{\leq}^{m_2-m_1+1}}   e^{-\hbar^{-\frac{4}{7}}\frac{1}{2}(s_{m_1}-s_{m_2})} \, d\boldsymbol{s}\Big)^2.
	\end{aligned}
\end{equation} 
In order to estimate the last integral we start by integrating all variables $s_i$ with $m_1<i<m_2$, observe that there is at least one such index. This gives us
  \begin{equation*}
	\begin{aligned}
	\int_{[0,t]_{\leq}^{m_2-m_1+1}}   e^{-\hbar^{-\frac{4}{7}}\frac{1}{2}(s_{m_1}-s_{m_2})} \, d\boldsymbol{s} 
	=\tfrac{1}{(m_2-m_1-1)!} \int_{\R^{2}}  \boldsymbol{1}_{[0,t]_{\leq}^{2}}(s_{m_2}, s_{m_1}) (s_{m_1}-s_{m_2})^{m_2-m_1-1} e^{-\hbar^{\frac{4}{7}}\frac{1}{2}s_{m_1}} \, d\boldsymbol{s}.
	\end{aligned}
\end{equation*} 
 We then preform the change of variables $s_{m_1}\mapsto \hbar^{-\frac{4}{7}}(s_{m_1} -s_{m_2})$. With this change of variables we get
  \begin{equation}\label{EQ:norm_conv_with_op_1_rec_15}
	\begin{aligned}
	\int_{[0,t]_{\leq}^{m_2-m_1+1}}   e^{-\hbar^{-\frac{4}{7}}\frac{1}{2}(s_{m_1}-s_{m_2})} \, d\boldsymbol{s}  
	&=  \tfrac{\hbar^{\frac{4}{7} (m_2-m_1)}}{(m_2-m_1-1)!} \int_{\R^{2}}  \boldsymbol{1}_{[0,t]_{\leq}^{2}}(s_{m_2}, \hbar^{\frac{4}{7}}s_{m_1} +s_{m_2}) s_{m_1}^{m_2-m_1-1} e^{-\frac{1}{2}s_{m_1}} \, d\boldsymbol{s}
	\\
	&\leq  \tfrac{\hbar^{\frac{4}{7} (m_2-m_1)} t }{(m_2-m_1-1)!} \int_{0}^\infty  s_{m_1}^{m_2-m_1-1} e^{-\frac{1}{2}s_{m_1}} \, ds_{m_1} = \tfrac{\hbar^{\frac{4}{7} (m_2-m_1)} t }{2^{m_2-m_1-1}}.
	\end{aligned}
\end{equation} 
Combining the estimates in \eqref{EQ:norm_conv_with_op_1_rec_14} and \eqref{EQ:norm_conv_with_op_1_rec_15} we obtain that
  \begin{equation}\label{EQ:norm_conv_with_op_1_rec_16}
	\begin{aligned}
	\MoveEqLeft  \int_{[0,t]_{\leq}^k}\int_{[0,t]_{\leq}^{k}}   \tfrac{e^{-\hbar^{-\frac{4}{7}}\frac{1}{2}(s_{m_1}-s_{m_2})} e^{-\hbar^{-\frac{4}{7}}\frac{1}{2}(\tilde{s}_{m_1}-\tilde{s}_{m_2})} }{\max(1, \hbar^{-1}| s_{\iota^*(\sigma_n)}- \tilde{s}_{\iota^*(\sigma_n)} +l^1_n(\boldsymbol{t}, \boldsymbol{\tilde{t}}) |)^\frac{d}{2}} 
	\prod_{i=1}^{n-1} \tfrac{1}{\max(1,\hbar^{-1} | s_{\iota^*(\sigma_i)}- \tilde{s}_{\iota^*(\sigma_i)}-s_{\iota^*(\sigma_{i+1})}+ \tilde{s}_{\iota^*(\sigma_{i+1})} +l^1_i(\boldsymbol{t}, \boldsymbol{\tilde{t}}) |)^\frac{d}{2}}  \, d\boldsymbol{\tilde{s}} d\boldsymbol{s}
	\\
	\leq{}&C^n \tfrac{\hbar^{n+\frac{4}{7} (m_2-m_1)} t^{2k -2m_2+2m_1-n}}{(k-m_2+m_1-1)!(k -m_2+m_1-1-n)! 4^{m_2-m_1-1}}. 
	\end{aligned}
\end{equation} 
Using the estimate obtained in \eqref{EQ:norm_conv_with_op_1_rec_16} and by arguing as in the proof of Lemma~\ref{expansion_aver_bound_Mainterm} we obtain that
   \begin{equation}\label{EQ:norm_conv_with_op_1_rec_17}
   	\begin{aligned}
	 \Aver{\mathcal{T}(n,\alpha,\tilde{\alpha},\sigma^1,\sigma^2,\mathrm{id})}
	 \leq{}&  C_d^{a_k+\tilde{a}_k+n}  \norm{\varphi}_{L^2(\R^d)}^2 \left(\norm{\hat{V}}_{1,\infty,3d+3}\right)^{|\alpha|+|\tilde{\alpha}|}  
	 \\
	 &\times \frac{ \rho^{2k-n} \hbar^{\frac{8}{7} (m_2-m_1)} t^{2k -2m_2+2m_1-n} }{(k-m_2+m_1-1)!(k -m_2+m_1-1-n)! 4^{m_2-m_1-1}}.
	\end{aligned}
\end{equation}
Combining the estimates in \eqref{EQ:norm_conv_with_op_1_rec_1},  \eqref{EQ:norm_conv_with_op_1_rec_11}, \eqref{EQ:norm_conv_with_op_1_rec_16} and recalling that $n\leq k-1-m_2+m_1$ for the cases where there exists $i_1$ such that $\sigma_{i_1}< m_1 <m_2\leq \sigma_{i_1+1}$ we obtain that
\begin{equation*}
	\begin{aligned}
	   \MoveEqLeft \mathbb{E}\Big[\big\lVert  \sum_{\boldsymbol{x}\in\mathcal{X}_{\neq}^{k-1}}  \mathcal{I}_{1,1}^1(k,\boldsymbol{x},t;\hbar) \varphi_\hbar\big\rVert_{L^2(\R^d)}^2 \Big]
	    \leq 2 \sum_{\alpha,\tilde{\alpha}\in \N^k}\lambda^{|\alpha|+|\tilde{\alpha}|}   \sum_{n=0}^{k-1} \binom{k-1}{n} n!  
	    \\
	    &\times  \sum_{\sigma\in\mathcal{A}(k-1,n)}\Big\{ C_d^{a_k+\tilde{a}_k}  \left(\norm{\hat{V}}_{1,\infty,5d+5}\right)^{|\alpha|+|\tilde{\alpha}|} 
	  \hbar |\log(\zeta)|^{n+4} \frac{\rho(\rho t)^{2k-n-3} }{(k-2)!(k-n-1)!} \norm{\varphi_\hbar}_{\mathcal{H}^{2d+2}_\hbar(\R^d)}^2
	  \\
	  &+ (C_d\norm{\hat{V}}_{1,\infty,3d+3})^{|\alpha|+|\tilde{\alpha}|}  \norm{\varphi_\hbar}_{L^2(\R^d)}^2   \frac{ \boldsymbol{1}_{\{n\leq  k-1-m_2+m_1\}} \rho^{2k-n} \hbar^{\frac{8}{7} (m_2-m_1)} t^{2k -2m_2+2m_1-n} }{(k-m_2+m_1-1)!(k -m_2+m_1-1-n)! 4^{m_2-m_1-1}}\Big\}.
	\end{aligned}
\end{equation*}
Using that for the second term we have that $n\leq k-1-m_2+m_1$ we obtain the estimate
\begin{equation}\label{EQ:norm_conv_with_op_1_rec_18}
	\begin{aligned}
	   \MoveEqLeft \mathbb{E}\Big[\big\lVert  \sum_{\boldsymbol{x}\in\mathcal{X}_{\neq}^{k-1}}  \mathcal{I}_{1,1}^1(k,\boldsymbol{x},t;\hbar) \varphi_\hbar\big\rVert_{L^2(\R^d)}^2 \Big]
	    \leq (C^{k}k  \hbar |\log(\hbar)|^{k+4}  + C^{k}k^2 \hbar^{\frac{16}{7}}) \sup_{\hbar\in\mathcal{I}} \norm{\varphi}_{\mathcal{H}^{2d+2}_\hbar(\R^d)}^2,
	\end{aligned}
\end{equation}
where this is found using the same arguments as previously. From \eqref{EQ:norm_conv_with_op_1_rec_0}, \eqref{EQ:norm_conv_with_op_1_rec_10} and \eqref{EQ:norm_conv_with_op_1_rec_18} we obtain the estimate
\begin{equation}\label{EQ:norm_conv_with_op_1_rec_19}
	\begin{aligned}
	   \MoveEqLeft  \sum_{k=3}^{k_0} \sum_{\iota\in\mathcal{Q}_{k,1,1}}  \mathbb{E}\Big[\big\lVert  \sum_{\boldsymbol{x}\in\mathcal{X}_{\neq}^{k-1}}  \mathcal{I}_{1,1}(k,\boldsymbol{x},t;\hbar) \varphi_\hbar\big\rVert_{L^2(\R^d)}^2 \Big]^\frac{1}{2} 
	   \\
	   \leq{}&  \sum_{k=3}^{k_0} \sum_{\iota\in\mathcal{Q}_{k,1,1}} \sqrt{ (C^{k}k  \hbar |\log(\hbar)|^{k+4}  + C^{k}k^2 \hbar^{\frac27}) \sup_{\hbar\in\mathcal{I}} \norm{\varphi}_{\mathcal{H}^{2d+2}_\hbar(\R^d)}^2} 
	   \\
	   &+  \sum_{k=3}^{k_0} \sum_{\iota\in\mathcal{Q}_{k,1,1}} \sqrt{ (C^{k}k  \hbar |\log(\hbar)|^{k+4}  + C^{k}k^2 \hbar^{\frac{16}{7}}) \sup_{\hbar\in\mathcal{I}} \norm{\varphi}_{\mathcal{H}^{2d+2}_\hbar(\R^d)}^2}
	   \\
	   \leq{}& \sqrt{ \sup_{\hbar\in\mathcal{I}} \norm{\varphi}_{\mathcal{H}^{2d+2}_\hbar(\R^d)}^2} \Big\{ \sqrt{ (C^{k_0}k_0^7  \hbar |\log(\hbar)|^{k_0+4}  + C^{k_0}k_0^9 \hbar^{\frac27}) }  +  \sqrt{ (C^{k_0}k_0^7  \hbar |\log(\hbar)|^{k_0+4}  + C^{k_0}k_0^9 \hbar^{\frac{16}{7}})  }\Big\}.
	\end{aligned}
\end{equation}
Our assumptions on $k_0$ and the estimate obtained in \eqref{EQ:norm_conv_with_op_1_rec_19} give us that
\begin{equation*}
	\begin{aligned}
	\lim_{\hbar\rightarrow 0}   \sum_{k=3}^{k_0} \sum_{\iota\in\mathcal{Q}_{k,1,1}}     \mathbb{E}\Big[\big\lVert  \sum_{\boldsymbol{x}\in\mathcal{X}_{\neq}^{k-1}}  \mathcal{I}_{1,1}(k,\boldsymbol{x},t;\hbar) \varphi_\hbar\big\rVert_{L^2(\R^d)}^2 \Big]^{\frac{1}{2}}= 0.
	\end{aligned}
\end{equation*}
As desired. This concludes the proof.
\end{proof}
\section{Semiclassical limits for terms without recollisions}\label{Sec:limit_main_terms}
Before we start considering the term wise limits and the full limits we will need the following bounds for the function $ \Psi_{m}^\gamma(p_m,p_0,\infty;V) $. In this Lemma it is important that $\gamma>0$.
\begin{lemma}\label{est_der_Psi_gamma}
	Let $m$  in $\N$, $\gamma\in(0,1]$ and $V\in\mathcal{S}(\R^d)$ be given and consider the function  $\Psi_{m}^\gamma(p_m,p_0,\infty ; V)$ defined in Definition~\ref{def_reg_operator}. The function is smooth in the variables $p_m$ and $p_0$, and for all all $\alpha$, $\beta$, $\delta$ and $\epsilon$ in $\N^d_0$ there exists a constant $C$ such that
\begin{equation*}
	\begin{aligned}
	\sup_{q,p_0,p_m \in \R^d} |  \partial_{p_m}^\alpha \partial_{p_0}^\beta \partial_{q}^\delta\Psi_{m}^\gamma(p_m+q,p_0+q,\infty ; V)|  &\leq C,
	\\
	\sup_{q\in \R^d} \int_{\R^d} | p_0^{\epsilon}  \partial_{p_m}^\alpha \partial_{p_0}^\beta \partial_{q}^\delta\Psi_{m}^\gamma(p_m+q,p_0+q,\infty ; V)| \, dp_0  &\leq C\sum_{\epsilon_1 \leq \epsilon} |p_m^{\epsilon_1}|,
	\\
	\sup_{p_0\in \R^d}  | p_0^{\alpha}   \partial_{p_0}^\beta \Psi_{m}^\gamma(p_m,p_0,\infty ; V)|  & \leq C \sum_{\alpha_1 \leq \alpha} |p_m^{\alpha_1}| 
	\\
	 \int_{\R^d} | p_0^{\alpha}   \partial_{p_0}^\beta \Psi_{m}^\gamma(p_m,p_0,\infty ; V)| \, dp_0  &\leq C \sum_{\alpha_1 \leq \alpha} |p_m^{\alpha_1}|.
	 \end{aligned}
\end{equation*}
In all estimates the role of $p_0$ and $p_m$ can be switched.
\end{lemma}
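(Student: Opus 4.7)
My plan is to first evaluate the $t$-integrations explicitly. Since $s=\infty$ and $\gamma>0$, each factor $\int_{0}^\infty e^{it_i\frac12(p_i^2-p_0^2+i\gamma)}\,dt_i=\frac{2i}{p_i^2-p_0^2+i\gamma}$ is well defined and bounded. Hence
$$\Psi_m^\gamma(p_m,p_0,\infty;V)=\frac{(2i)^{m-1}}{(2\pi)^{dm}}\int \hat V(p_m-p_{m-1})\prod_{i=1}^{m-1}\frac{\hat V(p_i-p_{i-1})}{p_i^2-p_0^2+i\gamma}\,d\boldsymbol{p}_{1,m-1}.$$
Smoothness in $p_m$ and $p_0$ follows from dominated convergence: the denominators are bounded below by $\gamma$, and after the substitution $p_i\mapsto p_i+p_0$ every $\hat V$-factor in the chain becomes Schwartz in an independent integration variable, so one may differentiate under the integral sign.

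For the first (unweighted, uniform in $q$) estimate I would then make the substitution $p_i\mapsto p_i+p_0+q$, which turns the chain into a telescoping product $\hat V(p_m-p_0-p_{m-1})\hat V(p_{m-1}-p_{m-2})\cdots\hat V(p_1)$ and the denominators into $p_i^2+2p_i\cdot(p_0+q)+i\gamma$. The imaginary part of every denominator remains $\gamma$, giving the uniform bound $\gamma^{-1}$; the integral of the $\hat V$-product is controlled by a power of $\|\hat V\|_{L^1}$. Derivatives $\partial_{p_m}^\alpha\partial_{p_0}^\beta\partial_q^\delta$ are taken under the integral: each hits either a $\hat V$-factor (remaining Schwartz) or a resolvent, and $\partial_{p_0}[p_i^2+2p_i\cdot(p_0+q)+i\gamma]^{-1}=-2p_i[\cdots]^{-2}$ produces extra polynomial factors $p_i$ that are absorbed by the Schwartz decay of the adjacent $\hat V$-factor. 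Iterating, one gets uniform bounds depending on $\gamma^{-1}$, the multi-index orders, and a finite Schwartz seminorm of $\hat V$, but independent of $p_m,p_0,q$.

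For the weighted estimates involving $p_0^\epsilon$ or $p_0^\alpha$ I would use the telescoping identity
$$p_0=p_m-\sum_{i=1}^{m}(p_i-p_{i-1}),$$
and the multinomial expansion to write $p_0^\alpha=\sum_{\alpha_1\le\alpha}\binom{\alpha}{\alpha_1}p_m^{\alpha_1}\bigl(-\sum_i(p_i-p_{i-1})\bigr)^{\alpha-\alpha_1}$. Each difference factor $(p_j-p_{j-1})$ multiplies one of the $\hat V(p_j-p_{j-1})$-factors, and a polynomial multiple of a Schwartz function is Schwartz. This explicitly exhibits the bound as $C\sum_{\alpha_1\le\alpha}|p_m^{\alpha_1}|$ times an integral which, by the same argument as above, is uniformly controlled by $\gamma^{-(m-1)}$ and a Schwartz seminorm of $\hat V$. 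The $L^1$-in-$p_0$ bounds (estimates two and four) are obtained from the pointwise bounds by first integrating in the telescoped variable $p_0$ against one of the Schwartz factors, which gives a finite $L^1$-norm. Finally, by swapping the roles of $p_0$ and $p_m$ via the reversal $p_i\leftrightarrow p_{m-i}$ of the momentum chain, the analogous estimates with the roles exchanged follow.

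The main obstacle will be the bookkeeping: tracking how the polynomial factors produced by differentiating the resolvents interact with those arising from expanding $p_0^\alpha$, and verifying at each step that the Schwartz seminorm of $\hat V$ actually required is finite. Since every derivative of a resolvent introduces at most one extra factor $p_i$ per order, and $\hat V\in\mathcal{S}(\mathbb{R}^d)$ has all moments, a fixed finite seminorm depending only on $|\alpha|,|\beta|,|\delta|,|\epsilon|$ and $m$ suffices, so the estimates hold with constants depending on $\gamma,m,V$ and the multi-indices but not on the point of evaluation.
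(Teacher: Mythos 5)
Your proposal is correct and follows essentially the same route as the paper: the same substitution $p_i\mapsto p_i+p_0+q$, absorption of the polynomial factors by the Schwartz chain, and telescoping of the weight via $p_0=p_m-\sum_i(p_i-p_{i-1})$; the only difference is that you evaluate the time integrals into explicit resolvents $(p_i^2+2\langle p_i,p_0+q\rangle+i\gamma)^{-1}$, whereas the paper keeps the damped factors $e^{-\gamma t_i/2}$ and instead integrates the resulting powers $t_i^{k}$ in time. One small point of care: for the third and fourth estimates perform the $\partial_{p_0}^{\beta}$-differentiation in the shifted variables, where each derivative of a resolvent yields a factor $p_i$ (absorbed by the chain) rather than a factor $p_0$; in the unshifted form those extra $p_0$'s would, after telescoping, produce powers of $p_m$ beyond the stated bound $C\sum_{\alpha_1\leq\alpha}|p_m^{\alpha_1}|$.
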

\begin{proof}
Note that in the case $m=1$ we have by definition that
\begin{equation*}
	\Psi_{1}^\gamma(p_1+q,p_0+q,\infty ; V) = \hat{V}(p_1-p_0).
\end{equation*}
Thus the results follow directly from our assumptions on $V$ in this case. We now assume $m\geq2$.
First we observe that via the change of variables $p_i \mapsto p_i-p_0-q$ for all $i \in \{1,\dots,m-1\}$ we can write the function $\Psi_{m}^\gamma(p_m,p_0,\infty ; V)$ as
\begin{equation*}
	\begin{aligned}
	 \MoveEqLeft \Psi_{m}^\gamma(p_m+q,p_0+q,\infty ; V)
	 \\
	={}& \frac{1}{(2\pi)^{dm}}   \int_{\R_{+}^{m-1}}  \int    \hat{V}(p_{1})\prod_{i=2}^{m-1} \hat{V}(p_{i}-p_{i-1})
	 \hat{V}(p_m-p_{m-1}-p_0) \prod_{i=1}^{m-1}  e^{i t_i \frac{1}{2} (p_{i}^2+2\langle p_i,p_{0}+q\rangle+i\gamma)}\, d\boldsymbol{p}_{1,m-1}  d\boldsymbol{t}.
	\end{aligned}
\end{equation*}
The derivatives are then given by
\begin{equation*}
	\begin{aligned}
	\MoveEqLeft \partial_{p_m}^\alpha \partial_{p_0}^\beta\partial_{q}^\delta \Psi_{m}^\gamma(p_m+q,p_0+q,\infty ; V)
	\\
	={}& \frac{1}{(2\pi)^{dm}}   \int_{\R_{+}^{m-1}}  \int     \hat{V}(p_{1})\prod_{i=2}^{m-1} \hat{V}(p_{i}-p_{i-1})  \prod_{i=1}^{m-1}  e^{i t_i \frac{1}{2} (p_{i}^2+i\gamma)}
  	\\
	&\times 
	 \partial_{p_0}^\beta \big[  \partial_{p_m}^\alpha \hat{V}(p_m-p_{m-1}-p_0) \partial_{q}^\delta   e^{i \langle\sum_{i=1}^{m-1} t_i p_i,p_{0}+q\rangle} \big] \, d\boldsymbol{p}_{1,m-1}  d\boldsymbol{t}
	 \\
	 ={}& \frac{1}{(2\pi)^{dm}} \sum_{\beta_1\leq \beta} \binom{\beta}{\beta_1}i^{|\beta_1|}  \int_{\R_{+}^{m-1}}  \int   \hat{V}(p_{1})\prod_{i=2}^{m-1} \hat{V}(p_{i}-p_{i-1})  \prod_{i=1}^{m-1}  e^{i t_i \frac{1}{2} (p_{i}^2+i\gamma)}
  	\\
	&\times 
	 \big[  \partial_1^{\alpha+\beta-\beta_1} \hat{V}](p_m-p_{m-1}-p_0)  \Big(i\sum_{i=1}^{m-1} t_i p_i \Big)^{\beta_1+\delta}    e^{i \langle\sum_{i=1}^{m-1} t_i p_i,p_{0}+q\rangle}  \, d\boldsymbol{p}_{1,m-1}  d\boldsymbol{t}.
	\end{aligned}
\end{equation*}
If we just consider the sum $ (\sum_{i=1}^{m-1} t_i p_i )^{\beta_1}$, then we have that
\begin{equation*}
	\begin{aligned}
	 \Big(\sum_{i=1}^{m-1} t_i p_i \Big)^{\beta_1+\delta} = \sum_{\epsilon_1+\cdot+\epsilon_{m-1}=\beta_1+\delta} \prod_{i=1}^{m-1} \binom{(\beta_{1}+\delta)_i}{\epsilon_{1,i}\cdots\epsilon_{m-1,i}} t_i^{|\epsilon_i|} p_i^{\epsilon_i},
	\end{aligned}
\end{equation*}
where the notation $(\beta_{1}+\delta)_i$ is the $i$'th coordinate of $\beta_1+\delta$ and  $\epsilon_{j,i}$ is the $i$'th coordinate of $\epsilon_j$.  With this notation we get the following bound
\begin{equation}\label{psi_est_2_1}
	\begin{aligned}
	\MoveEqLeft |\partial_{p_m}^\alpha \partial_{p_0}^\beta\partial_{q}^\delta \Psi_{m}^\gamma(p_m+q,p_0+q,s ; V)|
	\\
	 \leq {}& \frac{1}{(2\pi)^{dm}}   \sum_{\beta_1\leq \beta} \binom{\beta}{\beta_1}  \sum_{\epsilon_1+\cdot+\epsilon_{m-1}=\beta_1+\delta} \int_{\R_{+}^{m-1}} \int  \Big| \hat{V}(p_{1}) \prod_{i=2}^{m-1} \hat{V}(p_{i}-p_{i-1}) 
	 \\
	 &\times   \prod_{i=1}^{m-1 }  \binom{(\beta_{1}+\delta)_i}{\epsilon_{1,i}\cdots\epsilon_{m-1,i}} t_i^{|\epsilon_i|} e^{- t_i \frac{1}{2} \gamma} p_i^{\epsilon_i}    \big[  \partial_1^{\alpha+\beta-\beta_1} \hat{V}](p_m-p_{m-1}-p_0)  \Big|  \, d\boldsymbol{p}_{1,m-1}  d\boldsymbol{t}.
	\end{aligned}
\end{equation}
Since we have that $\hat{V}(p_{1}) \prod_{i=2}^{m-1} \hat{V}(p_{i}-p_{i-1})$ is Schwartz in $(p_1,\dots,p_m-1)$ and $t_i^{|\epsilon_i|} e^{- t_i \frac{1}{2} \gamma} $ is integrable for all values of $|\epsilon_i|$ over the positive half-axis we get that
\begin{equation*}
	\sup_{q,p_0,p_m \in \R^d} |  \partial_{p_m}^\alpha \partial_{p_0}^\beta \partial_{q}^\delta\Psi_{m}^\gamma(p_m+q,p_0+q,\infty ; V)|  \leq C.
\end{equation*}
where $C$ depends on $\gamma$. From \eqref{psi_est_2_1} we do also immediately get that
\begin{equation*}
	\sup_{q\in \R^d} \int_{\R^d} | p_0^{\epsilon}  \partial_{p_m}^\alpha \partial_{p_0}^\beta \partial_{q}^\delta\Psi_{m}^\gamma(p_m+q,p_0+q,\infty ; V)| \, dp_0  \leq C \sum_{\epsilon_1 \leq \epsilon} |p_m^{\epsilon_1}|.
\end{equation*}
The last two bounds follow from analogous arguments. This concludes the proof.
\end{proof}
For the next lemmas we will use the following notation.
\begin{notation}\label{notation_op_main_lim}
Assume we are in the same setting as in Definition~\ref{functions_for_exp_def} and with the notation defined in Section~\ref{Sec:reg_main_op}. That is we have $\mathcal{I}_{\infty}^\gamma(k,\boldsymbol{x},t;\hbar)$, which is defined analogous to $\mathcal{I}_{0,0}^\gamma(k,\boldsymbol{x},t;\hbar)$ but with $\Psi_{\alpha_m}^\gamma(p_{m},p_{m-1},\hbar^{-1}(s_{m-1}-s_m);V)$ replaced by $\Psi_{\alpha_m}^\gamma(p_{m},p_{m-1},\infty;V)$.
 Then we will use the following notation/identification 
\begin{equation*}
	\begin{aligned}
		 \sum_{\boldsymbol{x} \in \mathcal{X}_{\neq}^k}  \mathcal{I}_{\infty}^\gamma(k,\boldsymbol{x},t;\hbar) = \sum_{\alpha\in \N^k} (i\lambda)^{|\alpha|} \mathcal{I}^\gamma_\infty(k,\alpha;\hbar).
	  \end{aligned}
\end{equation*}
That is for each $\alpha\in\N^k$ we have that
\begin{equation*}
	\begin{aligned}
	\mathcal{I}_\infty^\gamma(k,\alpha;\hbar) =  \sum_{\boldsymbol{x} \in \mathcal{X}_{\neq}^k} \int_{[0,t]_{\leq}^k}   \prod_{m=1}^k \tilde{\Theta}_{\alpha_m}^\gamma(s_{m-1},{s}_{m},x_m;V,\hbar)\, d\boldsymbol{s}_{k,1}U_{\hbar,0}(-t),
	  \end{aligned}
\end{equation*}
where the kernels of $\tilde{\Theta}_{\alpha_m}^\gamma(s_{m-1},{s}_{m},x_m;V,\hbar)$ is given by
\begin{equation*}
	\begin{aligned}
	 (p_{m},p_{0}) \mapsto{} \frac{1}{\hbar} e^{-i\hbar^{-1/d}\langle z,p_{m}-p_{0} \rangle} 
  e^{i s_{m}\hbar^{-1} \frac{1}{2} (p_{m}^2-p_{0}^2)} \Psi_{\alpha_m}^\gamma(p_{m},p_{0},\infty; V),
  	\end{aligned}
\end{equation*}
\end{notation}
\begin{lemma}\label{L.con_term_onesided_1}
Assume we are in the same setting as in Definition~\ref{functions_for_exp_def} with the notation introduced in Notation~\ref{notation_op_main_lim} and let  $\{\varphi_\hbar\}_{\hbar\in I}$ be a uniform semiclassical family in $\mathcal{H}^{5d+5}_\hbar(\R^d)$ with Wigner measure $\mu_0$. Let $k\in\N$ and $\alpha \in \N^k$.  Then for any $a \in \mathcal{S}(\R^{2d})$ we have that
\begin{equation*}
	\begin{aligned}
	\MoveEqLeft \lim_{\hbar\rightarrow0} \Aver{\langle \OpW(a)\mathcal{I}_\infty^\gamma(k,\alpha;\hbar)\varphi_\hbar,U_{\hbar,0}(-t) \varphi_\hbar \rangle }
	=  \frac{((2\pi)^d\rho t)^{k}}{k!} \int  a( x- tq,q) \prod_{m=1}^k  
	 \Psi_{\alpha_m}^\gamma(q,q,\infty;V)   \, d\mu_0(x,q).
	\end{aligned}
\end{equation*}
\end{lemma}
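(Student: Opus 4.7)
The plan is to reduce the Poisson expectation to a deterministic momentum multiplier and then apply the defining property of the Wigner measure. I would begin by writing $\mathcal{I}_\infty^\gamma(k,\alpha;\hbar)=\mathcal{M}(\alpha;\hbar)\,U_{\hbar,0}(-t)$ with
\[
\mathcal{M}(\alpha;\hbar) = \sum_{\boldsymbol x\in\mathcal X_{\neq}^k}\int_{[0,t]_{\leq}^k}\prod_{m=1}^k\tilde\Theta_{\alpha_m}^\gamma(s_{m-1},s_m,x_m;V,\hbar)\,d\boldsymbol s,
\]
and computing the $\Pro$-expectation of the momentum-representation kernel of $\mathcal M(\alpha;\hbar)$ directly. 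Applying Campbell's theorem, each of the $k$ position integrals gives $\int_{\R^d} e^{-i\hbar^{-1/d}\langle x_m,p_m-p_{m-1}\rangle}\,dx_m=\hbar(2\pi)^d\delta(p_m-p_{m-1})$, and the resulting chain of delta functions collapses the intermediate momenta to a single variable $p$, simultaneously trivialising the free-evolution phases $e^{is_m\hbar^{-1}\frac12(p_m^2-p_{m-1}^2)}$. Together with $\int_{[0,t]_{\leq}^k}d\boldsymbol s=t^k/k!$, this yields
\[
\Aver{\mathcal M(\alpha;\hbar)} = \mathfrak m(-i\hbar\nabla),\qquad \mathfrak m(p) := \frac{((2\pi)^d\rho t)^k}{k!}\prod_{m=1}^k\Psi_{\alpha_m}^\gamma(p,p,\infty;V),
\]
with $\mathfrak m$ independent of $\hbar$ and, by Lemma~\ref{est_der_Psi_gamma}, bounded with all derivatives bounded.

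Since the multiplier $\mathfrak m(-i\hbar\nabla)$ commutes with the free evolution $U_{\hbar,0}(-t)$, and since Egorov's theorem is exact for the quadratic free Hamiltonian, giving $U_{\hbar,0}(t)\OpW(a)U_{\hbar,0}(-t)=\OpW(a_t)$ with $a_t(x,p):=a(x-tp,p)$, one obtains
\[
\Aver{\langle \OpW(a)\mathcal I_\infty^\gamma(k,\alpha;\hbar)\varphi_\hbar,\,U_{\hbar,0}(-t)\varphi_\hbar\rangle}
= \langle \OpW(a_t)\,\mathfrak m(-i\hbar\nabla)\,\varphi_\hbar,\,\varphi_\hbar\rangle.
\]
The linearity of the classical flow implies $a_t\in\mathcal S(\R^{2d})$, and combined with the bounded-derivative control on $\mathfrak m$ this shows that $a_t\mathfrak m\in\mathcal S(\R^{2d})$.

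The final step is the semiclassical composition formula $\OpW(a_t)\,\mathfrak m(-i\hbar\nabla)=\OpW(a_t\mathfrak m)+\hbar R_\hbar$, where $R_\hbar$ is uniformly bounded in operator norm; this can be read off by writing the momentum kernel of the product, $\hat K(p,q)=(2\pi\hbar)^{-d}\mathcal F[a_t(\cdot,\tfrac{p+q}{2})](p-q)\,\mathfrak m(q)$, and Taylor-expanding the factor $\mathfrak m(q)-\mathfrak m(\tfrac{p+q}{2})$, where each power of $p-q$ yields an $\hbar$ upon inverse Fourier transform. Since $\|\varphi_\hbar\|_{L^2}$ is uniformly bounded, $\langle \hbar R_\hbar\varphi_\hbar,\varphi_\hbar\rangle=O(\hbar)$, and applying the defining property \eqref{uniqueW} of the Wigner measure with the Schwartz test symbol $a_t\mathfrak m$ gives
\[
\lim_{\hbar\to 0}\langle \OpW(a_t\mathfrak m)\varphi_\hbar,\varphi_\hbar\rangle=\int_{\R^{2d}} a(x-tq,q)\,\mathfrak m(q)\,d\mu_0(x,q),
\]
which is the claimed identity. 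The main technical point is the justification of the symbolic reduction for the non-decaying multiplier $\mathfrak m$; since $\mathfrak m$ is a symbol of order zero with all derivatives bounded, the standard $\hbar$-pseudodifferential calculus applies, but one must verify that the first-order remainder, which involves the Poisson-bracket correction $\tfrac{i\hbar}{2}\{a_t,\mathfrak m\}$, produces only an $O(\hbar)$ contribution to the quadratic form $\langle\,\cdot\,\varphi_\hbar,\varphi_\hbar\rangle$.
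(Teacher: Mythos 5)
Your argument is correct, and it reaches the paper's formula by a genuinely different packaging. The paper proves this lemma by brute force: it writes out the averaged inner product in integral form, applies Campbell's theorem to collapse all intermediate momenta, performs explicit changes of variables, Taylor-expands the combined amplitude $a(\cdot)\prod_m\Psi^\gamma_{\alpha_m}(\cdot)$ to first order in the rescaled momentum, and controls the remainder with the same integration-by-parts scheme used throughout the paper, before invoking the Wigner-measure pairing. You instead observe that the averaged operator $\mathbb{E}[\mathcal{M}(\alpha;\hbar)]$ is an $\hbar$-independent bounded Fourier multiplier $\mathfrak m(-i\hbar\nabla)$ (the $\hbar^{-k}$ from the $k$ kernels cancels exactly against the $\hbar^k$ produced by the $x_m$-integrations, and the free phases trivialise on the diagonal), then use commutation with $U_{\hbar,0}(\pm t)$, exact Egorov for the quadratic free Hamiltonian, and the standard $\hbar$-pseudodifferential composition $\OpW(a_t)\mathfrak m(-i\hbar\nabla)=\OpW(a_t\mathfrak m)+\mathcal O(\hbar)$ in operator norm, finishing with the definition \eqref{uniqueW} applied to the fixed Schwartz symbol $a_t\mathfrak m$. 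The two proofs share the same engine (Campbell's theorem plus a first-order Taylor/symbolic expansion costing $\mathcal O(\hbar)$), so the gain of your route is conceptual economy rather than new content: it isolates the multiplier structure, which the paper's hands-on computation obscures, but it relies on calculus facts (exact Egorov, composition with a remainder bounded via Calder\'on--Vaillancourt or a Schur test on the momentum kernel) that the paper avoids importing, and it does not extend to the two-sided terms of Lemma~\ref{LE:con_term_twosided}, where the averaged object is no longer a multiplier and the paper's explicit method is the one that generalises. Two small points you should make explicit: exchanging $\mathbb{E}$ with the inner product and the time integral requires the absolute integrability furnished by Campbell's theorem together with the bounds of Lemma~\ref{psi_m_gamma_est}, and the boundedness of all derivatives of $\mathfrak m$, which your composition remainder needs, is exactly Lemma~\ref{est_der_Psi_gamma} and is only available because $\gamma>0$; also your kernel formula should carry $\mathcal{F}_\hbar$ rather than $\mathcal{F}$, consistent with Section~\ref{momentum_rep_apx}.
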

\begin{proof}
From the definition of the operators we have that the inner product is given by
\begin{equation*}
	\begin{aligned}
	\MoveEqLeft \langle \OpW(a)\mathcal{I}_\infty^\gamma(k,\alpha;\hbar)\varphi_\hbar,U_{\hbar,0}(-t) \varphi_\hbar \rangle
	= \frac{1}{(2\pi\hbar)^{3d}\hbar^{k}} \sum_{\boldsymbol{x} \in \mathcal{X}_{\neq}^k}  \int_{[0,t]_{\leq}^k} \int e^{ i \hbar^{-1}  \langle  x-y,q_{1} \rangle} a(\tfrac{x+y}{2},q_{1})  
	\\
	\times& e^{- i   \hbar^{-1}\langle   y_0 ,p_{0} \rangle} 
	e^{ i   \langle  \hbar^{-1}y,p_{k} \rangle} \prod_{m=1}^k  e^{ -i   \langle  \hbar^{-1/d}x_m,p_{m}-p_{m-1} \rangle}  
	  \Psi_{\alpha_m}^\gamma(p_{m},p_{m-1},\infty;V)  e^{i  s_{m} \frac{1}{2}\hbar^{-1} (p_{m}^2-p_{m-1}^2)} 
	  \\
	  \times&
	e^{- i   \hbar^{-1}\langle  x - x_0 ,q_{0} \rangle}   e^{i  t \frac{1}{2}\hbar^{-1} p_0^2} \varphi_\hbar(y_0) e^{-i  t \frac{1}{2}\hbar^{-1} q_0^2} \overline{\varphi_\hbar(x_0)}  \, dx_0 dy_0  dx dy d\boldsymbol{p}  d\boldsymbol{q}d\boldsymbol{s}.
	\end{aligned}
\end{equation*}
When taking the average we get from Campbell's theorem that
\begin{equation*}
	\begin{aligned}
	\MoveEqLeft \Aver{ \langle \OpW(a)\mathcal{I}_\infty^\gamma(k,\alpha;\hbar)\varphi_\hbar,U_{\hbar,0}(-t) \varphi_\hbar \rangle }
	= \frac{{((2\pi)^d\rho t)^{k}}}{k!(2\pi\hbar)^{3d}}   \int e^{ i \hbar^{-1}  \langle  x-y,q_{1} \rangle} a(\tfrac{x+y}{2},q_{1})  e^{- i   \hbar^{-1}\langle   y_0 -y - \frac{1}{2}tp_0 ,p_{0} \rangle} 
	\\
	&\times e^{ i   \hbar^{-1}\langle  x_0 - x -\frac{1}{2}tq_0 ,q_{0} \rangle}  \prod_{m=1}^k  
	 \Psi_{\alpha_m}^\gamma(p_{0},p_{0},\infty;V)   \varphi_\hbar(y_0)  \overline{\varphi_\hbar(x_0)}  \, dx_0 dy_0  dx dy d{p}_{0}  d\boldsymbol{q}, 
	\end{aligned}
\end{equation*}
where we have integrated all delta functions arising from the integrals in the positions of the Poisson point process and evaluated all integrals in $\boldsymbol{s}$. We now preform the change of variables $(x,y)\mapsto (x-y,\frac{1}{2}(x+y))$. This gives us
\begin{equation*}
	\begin{aligned}
	\MoveEqLeft \Aver{ \langle \OpW(a)\mathcal{I}_\infty^\gamma(k,\alpha;\hbar)\varphi_\hbar,U_{\hbar,0}(-t) \varphi_\hbar \rangle }
	= \frac{{((2\pi)^d\rho t)^{k}}}{k!(2\pi\hbar)^{3d}}   \int e^{ i \hbar^{-1}  \langle  x,q_{1} -\frac{1}{2}(p_0+q_0) \rangle} a(y,q_{1})  e^{- i   \hbar^{-1}\langle   y_0 -y - \frac{1}{2}tp_0 ,p_{0} \rangle} 
	\\
	&\times e^{ i   \hbar^{-1}\langle  x_0 -y -\frac{1}{2}tq_0 ,q_{0} \rangle}  \prod_{m=1}^k  
	 \Psi_{\alpha_m}^\gamma(p_{0},p_{0},\infty;V)   \varphi_\hbar(y_0)  \overline{\varphi_\hbar(x_0)}  \, dx_0 dy_0  dx dy d{p}_{0}  d\boldsymbol{q}.
	\end{aligned}
\end{equation*}
The integral in $q_1$ and then the integral in $x$ is an inverse Fourier transform and a Fourier transform respectively. Combing this with the change of variables $p_0\mapsto \hbar^{-1}(p_0-q)$ and then $y\mapsto  y - y_0 + \frac{1}{2}tp_0+tq  $, we have
\begin{equation}\label{con_term_1.2.0}
	\begin{aligned}
	\MoveEqLeft \Aver{ \langle \OpW(a)\mathcal{I}_\infty^\gamma(k,\alpha;\hbar)\varphi_\hbar,U_{\hbar,0}(-t) \varphi_\hbar \rangle }
	= \frac{{((2\pi)^d\rho t)^{k}}}{ k! (2\pi\hbar)^{d}(2\pi)^{d}}   \int e^{ i   \hbar^{-1}\langle  x_0 -y_0  ,q \rangle} a(y + y_0 - \tfrac{1}{2}t\hbar p_0-tq,\tfrac{1}{2} \hbar p_0+q)
	\\
	&\times  e^{ i  \langle   y   ,p_{0}  \rangle} 
	 \prod_{m=1}^k  
	 \Psi_{\alpha_m}^\gamma(\hbar p_{0}+q,\hbar p_{0}+q,\infty;V)   \varphi_\hbar(y_0)  \overline{\varphi_\hbar(x_0)}  \, dx_0 dy_0   dy d{p}_{0}  dq.
	\end{aligned}
\end{equation}
To calculate the integral in $p_0$ and $y$ we perform a first order Taylor expansion of the function
\begin{equation*}
	p_0 \mapsto a(y + y_0 - \tfrac{1}{2}t\hbar p_0-tq,\tfrac{1}{2} \hbar p_0+q)  \Psi_{\alpha_m}^\gamma(\hbar p_{0}+q,\hbar p_{0}+q,\infty;V).
\end{equation*}
This produces an error of order $\hbar$. To see this let 
\begin{equation}\label{con_term_1.2.1}
	f(y,y_0,\hbar p_0, q)=a(y + y_0 - \tfrac{1}{2}t\hbar p_0-tq,\tfrac{1}{2} \hbar p_0+q)  \Psi_{\alpha_m}^\gamma(\hbar p_{0}+q,\hbar p_{0}+q,\infty;V).
\end{equation}
Observe that for this function we have integrability in the variables $y$ and $q$ also after taking a number of derivatives. This is due to our assumption that $a\in \mathcal{S}(\R^{2d})$ and Lemma~\ref{est_der_Psi_gamma}. Then by a first order Taylor expansion in $\hbar p_0$ around zero we get that
\begin{equation*}
	f(y,y_0,\hbar p_0, q)=f(y,y_0,0, q) +\sum_{\abs{\alpha}=1}  (\hbar p_0)^\alpha \int_0^1  \partial_\eta^{\alpha} f(y,y_0,\eta + s\hbar  p_0, q))\big|_{\eta=0} \, ds.
\end{equation*}
Using this we have
\begin{equation}\label{con_term_1.2.2}
	\begin{aligned}
	\MoveEqLeft     \int e^{ i   \hbar^{-1}\langle  x_0 -y_0  ,q \rangle} e^{ i  \langle   y   ,p_{0}  \rangle}  f(y,y_0,\hbar p_0, q)   \varphi_\hbar(y_0)  \overline{\varphi_\hbar(x_0)}  \, dx_0 dy_0   dy d{p}_{0}  dq
	\\
	={}&   \int e^{ i   \hbar^{-1}\langle  x_0 -y_0  ,q \rangle} e^{ i  \langle   y   ,p_{0}  \rangle}  f(y,y_0,0, q)   \varphi_\hbar(y_0)  \overline{\varphi_\hbar(x_0)}  \, dx_0 dy_0   dy d{p}_{0}  dq
	\\
	&+\sum_{\abs{\alpha}=1} \int\int_0^1 e^{ i   \hbar^{-1}\langle  x_0 -y_0  ,q \rangle} e^{ i  \langle   y   ,p_{0}  \rangle}   (\hbar p_0)^\alpha   \partial_\eta^{\alpha} f(y,y_0,\eta + s\hbar  p_0, q))\big|_{\eta=0}   
	 \varphi_\hbar(y_0)  \overline{\varphi_\hbar(x_0)}  \, \, ds dx_0 dy_0   dy d{p}_{0}  dq,
	\end{aligned}
\end{equation}
where we have omitted the prefactor $ (2\pi\hbar)^{-d}$. In order to estimate the error terms we define the two linear operators
\begin{equation*}
	L_q= \frac{1 - i\langle \nabla_y , \hbar^{-1}(x_0-y_0) \rangle}{1 + |\hbar^{-1}(x_0-y_0)|^2 } \quad\text{and}\quad  L_{y} = \frac{1 - i\langle \nabla_y , p_0 \rangle}{1 + p_0^2 }.
\end{equation*}
These operators acts as the identity on $e^{ i   \hbar^{-1}\langle  x_0 -y_0  ,q \rangle}$ and  $e^{ i  \langle   y   ,p_{0}  \rangle} $ respectively.  Inserting each of these $d+1$ times, and integrating by parts, we obtain for each $\alpha$ the estimate
\begin{equation}\label{con_term_1.2.3}
	\begin{aligned}
	\MoveEqLeft \frac{1}{ (2\pi\hbar)^{d}} \Big|  \int \int_0^1 e^{ i   \hbar^{-1}\langle  x_0 -y_0  ,q \rangle} e^{ i  \langle   y   ,p_{0}  \rangle}   (\hbar p_0)^\alpha   \partial_\eta^{\alpha} f(y,y_0,\eta + s\hbar  p_0, q))\big|_{\eta=0}   
	\varphi_\hbar(y_0)  \overline{\varphi_\hbar(x_0)}  \, \, ds dx_0 dy_0   dy d{p}_{0}  dq \Big|
	\\
	\leq{}& \frac{\hbar}{ (2\pi\hbar)^{d}}  \Big| \sum_{|{\tilde{\alpha}} | \leq d+1 } \binom{d+1}{({\tilde{\alpha}},d+1-|{\tilde{\alpha}}|)}  \sum_{|\epsilon | \leq d+1 } \binom{d+1}{(\epsilon,d+1-|\epsilon|)} 
	\int \frac{ |(\hbar^{-1}x_0-\hbar^{-1}y_0)^\epsilon| }{(1 + |\hbar^{-1}(x_0-y_0)|^2)^{d+1} } \frac{|p_0^{\tilde{\alpha}}| }{(1 + p_0^2)^{d+1} }    	
	\\
	&\phantom{\int_{\R^{5d}}}{} \times (|\varphi_\hbar(y_0)|^2 + |{\varphi_\hbar(x_0)} |^2)
 \sup_{\eta\in\R^d}| \partial_y^{{\tilde{\alpha}}+\alpha} \partial_q^\epsilon\partial_\eta^{\alpha} f(y,y_0,\eta, q))   |  \,  dx_0 dy_0   dy d{p}_{0}  dq \Big|
 	\\
	\leq{}& \hbar C,
	\end{aligned}
\end{equation}
where we have used the inequality $2| \varphi_\hbar(x_0)   \varphi_\hbar(y_0)| \leq | \varphi_\hbar(x_0) |^2+ | \varphi_\hbar(y_0)|^2$. Combining the estimates in  \cref{con_term_1.2.0,con_term_1.2.1,con_term_1.2.2,con_term_1.2.3} yields
\begin{equation}\label{con_term_1.2}
	\begin{aligned}
	\MoveEqLeft \Aver{ \langle \OpW(a)\mathcal{I}_\infty^\gamma(k,\alpha;\hbar)\varphi_\hbar,U_{\hbar,0}(-t) \varphi_\hbar \rangle }
	\\
	={}& \frac{{((2\pi)^d\rho t)^{k}}}{ k! (2\pi\hbar)^{d}}   \int e^{ i   \hbar^{-1}\langle  x_0 -y_0  ,q \rangle} a( y_0 -tq, q)    \prod_{m=1}^k  
	 \Psi_{\alpha_m}^\gamma(q,q,\infty;V)   \varphi_\hbar(y_0)  \overline{\varphi_\hbar(x_0)}  \, dy_0   dq dx_0 
	 +\mathcal{O}(\hbar).
	\end{aligned}
\end{equation}
From our assumptions on our states and \eqref{con_term_1.2} we get that
\begin{equation*}
	\begin{aligned}
	\MoveEqLeft \lim_{\hbar\rightarrow0} \Aver{\langle \OpW(a)\mathcal{I}_\infty^\gamma(k,\alpha;\hbar)\varphi_\hbar,U_{\hbar,0}(-t) \varphi_\hbar \rangle }
	=  \frac{((2\pi)^d\rho t)^{k}}{k!} \int  a( x- tq,q) \prod_{m=1}^k  
	 \Psi_{\alpha_m}^\gamma(q,q,\infty;V)   \, d\mu_0(x,q).
	\end{aligned}
\end{equation*}
This concludes the proof.
\end{proof}
Next we want to establish the term wise convergence for the terms, where we are not just acting with the free propagator. This will be done in a number of steps. In order to limit notation we will define the following setting.
\begin{setting}\label{setting_conv_twosided}
Assume we are in setting of Definition~\ref{functions_for_exp_def} with the notation introduced in Notation~\ref{notation_op_main_lim}. Let  $\{\varphi_\hbar\}_{\hbar\in I}$ be a uniform semiclassical family in $\mathcal{H}^{5d+5}_\hbar(\R^d)$ with Wigner measure $\mu_0$. Let $k\in\N$, $\alpha \in \N^k$, $r \in\N$ and ${\tilde{\alpha}} \in \N^r$. Then for any $a \in \mathcal{S}(\R^{2d})$ consider the expression
\begin{equation*}
	\begin{aligned}
	 \langle \OpW(a)\mathcal{I}_\infty^\gamma(k,\alpha;\hbar)\varphi_\hbar,\mathcal{I}_\infty^\gamma(r,{\tilde{\alpha}};\hbar)\varphi_\hbar \rangle 
	= \sum_{n=0}^{\min(k,r)} \sum_{\sigma^1 \in \mathcal{A}(k,n)} \sum_{\sigma^2\in\mathcal{A}(r,n)} \sum_{\kappa\in\mathcal{S}_n} \mathcal{J}(a,\varphi_\hbar,n,\sigma^1,\sigma^2,\kappa,\hbar), 
		\end{aligned}
\end{equation*}
	where $\mathcal{J}(a,\varphi_\hbar,n,\sigma^1,\sigma^2,\tau,\hbar)$ is defined by
\begin{equation*}
	\begin{aligned}
	\MoveEqLeft  \mathcal{J}(a,\varphi_\hbar,n,\sigma^1,\sigma^2,\kappa,\hbar) = \sum_{(\boldsymbol{x},\boldsymbol{\tilde{x}})\in \mathcal{X}_{\neq}^{k+r}}\prod_{i=1}^n \rho^{-n}\delta(x_{\sigma_i^1}- \tilde{x}_{\sigma_{\kappa(i)}^2})   \int_{[0,t]_{\leq}^k} \int_{[0,t]_{\leq}^r} 
	\\
	 &\times \langle \OpW(a) \prod_{m=1}^k \tilde{\Theta}_{\alpha_m}^\gamma(s_{{m-1}},{s}_{m},x_m;V,\hbar) \varphi_\hbar,\prod_{m=1}^r \tilde{\Theta}_{{\tilde{\alpha}}_m}^\gamma(s_{{m-1}},{s}_{m},x_m;V,\hbar) \varphi_\hbar \rangle  \, d\boldsymbol{\tilde{s}}d\boldsymbol{s},
	\end{aligned}
\end{equation*}
where $\tilde{\Theta}$ is the operators from Notation~\ref{notation_op_main_lim}.
Finally for $\varepsilon\geq0$ we define $\mathcal{J}_\varepsilon(a,\varphi_\hbar,n,\sigma^1,\sigma^2,\kappa,\hbar)$ by
\begin{equation*}
	\begin{aligned}
	\MoveEqLeft  \mathcal{J}_\varepsilon(a,\varphi_\hbar,n,\sigma^1,\sigma^2,\kappa,\hbar)=\sum_{(\boldsymbol{x},\boldsymbol{\tilde{x}})\in \mathcal{X}_{\neq}^{k+r}} \prod_{i=1}^n \delta(x_{\sigma_i^1}- \tilde{x}_{\sigma_{\kappa(i)}^2})   \int_{[0,t]_{\leq}^k} \int_{[0,t]_{\leq}^r} \prod_{i=1}^n e^{-(\varepsilon\hbar^{-1}(\tilde{s}_{\sigma^2_{\kappa(i)}}-s_{\sigma^1_{i}}))^2}
	\\
	 &\times \langle \OpW(a) \prod_{m=1}^k \tilde{\Theta}_{\alpha_m}^\gamma(s_{{m-1}},{s}_{m},x_m;V,\hbar) \varphi_\hbar,\prod_{m=1}^r \tilde{\Theta}_{{\tilde{\alpha}}_m}^\gamma(s_{{m-1}},{s}_{m},x_m;V,\hbar) \varphi_\hbar \rangle  \, d\boldsymbol{\tilde{s}}d\boldsymbol{s}.
	\end{aligned}
\end{equation*}
\end{setting}
\begin{lemma}\label{lemma_gaus_reg_2}
Assume we are in Setting~\ref{setting_conv_twosided} and assume that $\kappa$ is different from the identity. Then we have that 
\begin{equation*}
	\begin{aligned}
	|\Aver{  \mathcal{J}(a,\varphi_\hbar,n,\sigma^1,\sigma^2,\kappa,\hbar)}| \leq C_a  \frac{\hbar\rho (\rho t)^{k+r-n-1}}{(k-1)!(r-n)!} (C  \norm{\hat{V}}_{1,\infty,5d+5})^{|\alpha|+|{\tilde{\alpha}}|} |\log(\tfrac{\hbar}{t})|^{n+3} \norm{\varphi_\hbar}^{2}_{\mathcal{H}_\hbar^{2d+2}(\R^d)},
	\end{aligned}
\end{equation*}
where $C_a$ depends on the symbol $a$ and $C$ only depend on the dimension.
\end{lemma}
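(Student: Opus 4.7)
The plan is to adapt the $\kappa\neq\mathrm{id}$ argument from Lemma~\ref{expansion_aver_bound_Mainterm} to the two-sided inner product with $\OpW(a)$ in the middle. The key point is that the presence of a genuine crossing (i.e.\ $\kappa\neq\mathrm{id}$) will yield the extra factor $\hbar$ via Lemma~\ref{LE:est_res_combined}, while the Weyl symbol $a\in\mathcal{S}(\R^{2d})$ and the regularised kernels $\Psi^\gamma_{\alpha_m}(\cdot,\cdot,\infty;V)$ (which, unlike $\Psi^\gamma_{\alpha_m}(\cdot,\cdot,\hbar^{-1}(s_{m-1}-s_m);V)$, satisfy the uniform bounds of Lemma~\ref{est_der_Psi_gamma}) take the place of the usual stationary phase arguments in the internal scattering variables.

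First I would pass to momentum representation, using that the kernel of $\OpW(a)$ is $(2\pi\hbar)^{-d}\mathcal{F}_\hbar[a(\cdot,\tfrac{p+q}{2})](p-q)$, and write out $\mathcal{J}(a,\varphi_\hbar,n,\sigma^1,\sigma^2,\kappa,\hbar)$ as a multi-dimensional integral. After carrying out the Poisson expectation via Lemma~\ref{LE:Exp_ran_phases} the delta-function factor $\Lambda_n(\boldsymbol{p},\boldsymbol{q},\sigma^1,\sigma^2,\kappa)$ appears. Let $i^{*}$ denote the smallest index with $\kappa(i^{*})\neq i^{*}$; as in the corresponding part of the proof of Lemma~\ref{expansion_aver_bound_Mainterm}, one inserts an auxiliary function $f(\boldsymbol{s},\boldsymbol{\tilde{s}})$ that does not depend on $s_{\sigma^1_{i^{*}}}$ or the $\tilde{s}_m$ for $m\in\sigma^2$, introduces the extra time variables $s_0,\tilde{s}_0$, writes the two time constraints as Fourier transforms of $1$, and inserts the regulator $e^{(\hbar^{-1}t-\cdots)\zeta}$ with $\zeta=(\max(1,\hbar^{-1}t))^{-1}$ to pass to the $\nu$-representation. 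The remaining integrals in the internal scattering variables are trivial here because Lemma~\ref{est_der_Psi_gamma} ensures that the $\Psi^\gamma_{\alpha_m}(\cdot,\cdot,\infty;V)$ together with all needed derivatives are Schwartz-type bounded.

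Next I would evaluate the integrals in $s_0,s_{\sigma^1_{i^{*}}},\tilde{s}_{\sigma^2_i}$ ($i=0,\dots,n$) and in those $\boldsymbol{p},\boldsymbol{q}$ whose indices do not belong to $\sigma^2$, which by $\Lambda_n$ reduces the problem to a bounded-kernel integral in $(\nu,\tilde\nu)$ and in the surviving $q$-variables, multiplied by the pairing-independent resolvent factors $|\tfrac{1}{2}q_0^2\pm\nu\pm i\zeta|^{-1}$, $|\tfrac12 q_m^2-\tilde\nu-i\zeta|^{-1}$ and the crucial crossing factor $|\tfrac{1}{2}(q_{\kappa(i^{*})}+q_{i^{*}-1}-q_{\kappa(i^{*})-1})^2+\nu+i\zeta|^{-1}$. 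Applying Lemma~\ref{LE:est_res_combined} at the crossing (with $q_{\kappa(i^{*})}$ playing the role of ``$p$'' and $q_{\kappa(i^{*})-1}-q_{i^{*}-1}$ the role of ``$q$'') and then Lemma~\ref{LE:resolvent_int_est} repeatedly on all remaining $q$-integrals yields a factor $|\log(\zeta)|^{n+3}\lesssim|\log(\hbar/t)|^{n+3}$. The $\OpW(a)$ kernel contributes a factor $(2\pi\hbar)^{-d}$ and, via the Fourier transform of $a\in\mathcal{S}(\R^{2d})$, the decay in the connecting momenta needed to absorb $\langle q_0\rangle^{4d+4}|\hat\varphi_\hbar(q_0/\hbar)|^2$ into $\hbar^d\|\varphi_\hbar\|^2_{\mathcal{H}_\hbar^{2d+2}}$, producing the constant $C_a$ which depends on $\sup_{|\beta|\leq N}\|\langle\cdot\rangle^N\partial^\beta a\|_{L^1\cap L^\infty}$ for a fixed $N=N(d)$.

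The remaining time integrals against $f(\boldsymbol{s},\boldsymbol{\tilde{s}})$ contribute $(\hbar^{-1}t)^{k+r-n-1}/((k-1)!(r-n)!)$. Collecting all factors, using the prefactor $(\rho\hbar(2\pi)^d)^{k+r-n}$ coming out of Lemma~\ref{LE:Exp_ran_phases}, absorbing the constant $e^{2\hbar^{-1}t\zeta}\leq C$ independent of $\hbar,t$, and using our hypothesis $\|\hat V\|_{1,\infty,5d+5}$ to control the derivatives of the $\Psi^\gamma$-factors that arise from integration by parts in the ``large time'' internal variables, produces the stated bound. The main obstacle I expect is the careful bookkeeping needed to show that the combination of the Fourier transform of $a$ with the resolvent estimates does not cost additional logarithms or powers of $\hbar$ beyond those already present in Lemma~\ref{expansion_aver_bound_Mainterm}; once one checks that the $\OpW(a)$ kernel is essentially absorbed by the smoothness of $a$ and the initial-data derivatives, the remaining combinatorics are identical to the one-sided case.
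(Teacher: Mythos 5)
Your plan is correct and follows essentially the same route as the paper's proof: momentum representation plus Lemma~\ref{LE:Exp_ran_phases}, the crossing index $i^{*}$ with the auxiliary function $f(\boldsymbol{s},\boldsymbol{\tilde{s}})$, the extra variables $s_0,\tilde{s}_0$ and the $\zeta$-regularised $\nu$-representation, evaluation of the additional time integral permitted by the crossing resolvent (which, combined with the prefactor $(\rho\hbar(2\pi)^d)^{k+r-n}$ and the count $(\hbar^{-1}t)^{k+r-n-1}/((k-1)!(r-n)!)$, is what actually produces the single factor of $\hbar$, rather than Lemma~\ref{LE:est_res_combined} itself), integration by parts in the symbol variable to gain decay in $\hbar^{-1}(p_0-q_0)$ giving $C_a$ and $\hbar^{2d}\norm{\varphi_\hbar}^2_{\mathcal{H}^{2d+2}_\hbar}$, and the $\Psi^\gamma$ products controlled through $\norm{\hat V}_{1,\infty,5d+5}$. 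The only cosmetic difference is that the paper closes the remaining momentum integrals with Lemma~\ref{LE:int_posistion} and Lemma~\ref{LE:resolvent_int_est} instead of Lemma~\ref{LE:est_res_combined}, which does not change the $|\log(\tfrac{\hbar}{t})|^{n+3}$ count.
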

\begin{proof}
From the definition of the number $\mathcal{J}(a,\varphi_\hbar,n,\sigma^1,\sigma^2,\kappa,\hbar)$ and by applying the version of Lemma~\ref{LE:Exp_ran_phases} mentioned in Remark~\ref{RE:LE:Exp_ran_phases} we get that 
\begin{equation*}
	\begin{aligned}
	\MoveEqLeft \Aver{ \mathcal{J}(a,\varphi_\hbar,n,\sigma^1,\sigma^2,\kappa,\hbar)} 
	=\frac{((2\pi)^d\rho)^{k+r-n}}{(2\pi\hbar)^{3d}\hbar^{n}}  \int_{[0,t]_{\leq}^k}\int_{[0,t]_{\leq}^r}  \int e^{ i \hbar^{-1}  \langle  x-y,q_{n+1} \rangle}  e^{ i  \hbar^{-1} \langle  y,p_{n} \rangle} e^{- i \hbar^{-1}  \langle  x,q_{n} \rangle}  
	\\
	\times&    e^{ -i \hbar^{-1}  \langle  y_0,p_{0} \rangle} e^{ i  \hbar^{-1} \langle x_0,q_{0} \rangle} 
	 \prod_{i=1}^n \delta(p_{i}- q_{\kappa(i)} - p_{0}+ q_{0} - \tilde{l}^{\kappa}_{i}(\boldsymbol{q})) 
	 \Psi_{\alpha_{\sigma^1_{i}}}^\gamma(p_{i},p_{i-1})  e^{i  s_{\sigma^1_{i}} \frac{1}{2}\hbar^{-1} (p_{i}^2-p_{i-1}^2)} 
	\\
	\times&  \overline{ \Psi_{{\tilde{\alpha}}_{\sigma^2_{i}}}^\gamma(q_{i},q_{i-1}) } e^{-i  \tilde{s}_{\sigma^2_{i}} \frac{1}{2}\hbar^{-1} (q_{i}^2-q_{i-1}^2)}
	 \prod_{i=1}^{n+1}  \prod_{m=\sigma^1_{i-1}+1}^{\sigma^1_{i}-1} \Psi_{\alpha_m}^\gamma(p_{i-1},p_{i-1})  \prod_{m=\sigma^2_{i-1}+1}^{\sigma^2_{i}-1} \overline{ \Psi_{{\tilde{\alpha}}_m}^\gamma(q_{i-1},q_{i-1}) }
	 \\
	 \times& 
	 a(\tfrac{x+y}{2},q_{n+1})   e^{i  t \frac{1}{2}\hbar^{-1} p_0^2} \varphi_\hbar(y_0) e^{-i  t \frac{1}{2}\hbar^{-1} q_0^2} \overline{\varphi_\hbar(x_0)}  \,d\boldsymbol{x} d\boldsymbol{y}d\boldsymbol{p}  d\boldsymbol{q} d\boldsymbol{\tilde{s}} d\boldsymbol{s},
	\end{aligned}
\end{equation*}
  where we have used the notation $\tilde{l}^{\kappa}_{i}(\boldsymbol{q})= \sum_{j=1}^{i} ( q_{{\kappa(j-1)}}-q_{{\kappa(j)-1}} )$, and
the convention $\kappa(0)=0$. We are here not explicit writing the dependence on $\infty$ and $V$ for the $\Psi$-functions. We have also evaluated some of the integrals in the position variables and momentum variables and done a relabelling of some variables. We now do the change of variables 
\begin{equation*}
	\begin{aligned}
	s_k \mapsto s_k \quad\text{and}\quad s_m\mapsto s_m-s_{m+1} \quad\text{for all $m\in\{1,,\dots,k-1\}$}
	\\
	\tilde{s}_r \mapsto \tilde{s}_r \quad\text{and}\quad \tilde{s}_m\mapsto \tilde{s}_m-\tilde{s}_{m+1} \quad\text{for all $m\in\{1,,\dots,r-1\}$}.
	\end{aligned}
\end{equation*}
Furthermore, we also make the change of variables $(x,y)\mapsto (x-y,\frac{1}{2}(x+y))$. This gives us
\begin{equation*}
	\begin{aligned}
	\MoveEqLeft \Aver{ \mathcal{J}(a,\varphi_\hbar,n,\sigma^1,\sigma^2,\kappa,\hbar)} 
	=\frac{((2\pi)^d\rho)^{k+r-n}}{(2\pi\hbar)^{2d}\hbar^{n}}  \int_{\R_{+}^k}\int_{\R_{+}^r}  \int \boldsymbol{1}_{[0,t]}( \boldsymbol{s}_{1,k}^{+})  \boldsymbol{1}_{[0,t]}( \boldsymbol{\tilde{s}}_{1,r}^{+})   e^{ i  \hbar^{-1} \langle  y,p_{n} - q_n \rangle} 
	\\
	\times&    e^{ -i \hbar^{-1}  \langle  y_0,p_{0} \rangle} e^{ i  \hbar^{-1} \langle x_0,q_{0} \rangle} 
	 \prod_{i=1}^n  \delta(p_{i}- q_{\kappa(i)} - p_{0}+ q_{0} - \tilde{l}^{\kappa}_{i}(\boldsymbol{q})) 
	 \Psi_{\alpha_{\sigma^1_{i}}}^\gamma(p_{i},p_{i-1})  e^{i  \boldsymbol{s}_{\sigma^1_{i},\sigma^1_{i+1}-1}^{+} \frac{1}{2}\hbar^{-1}p_{i}^2} 
	\\
	\times&  \overline{ \Psi_{{\tilde{\alpha}}_{\sigma^2_{i}}}^\gamma(q_{i},q_{i-1}) } e^{-i  \boldsymbol{\tilde{s}}_{\sigma^2_{i},\sigma^2_{i+1}-1}^{+} \frac{1}{2}\hbar^{-1} q_{i}^2}
	 \prod_{i=1}^{n+1}  \prod_{m=\sigma^1_{i-1}+1}^{\sigma^1_{i}-1} \Psi_{\alpha_m}^\gamma(p_{i-1},p_{i-1})   \prod_{m=\sigma^2_{i-1}+1}^{\sigma^2_{i}-1} \overline{ \Psi_{{\tilde{\alpha}}_m}^\gamma(q_{i-1},q_{i-1}) }
	 \\
	 \times& 
	 a(y,\tfrac{p_n +q_n}{2})   e^{i  (t- \boldsymbol{s}_{1,k}^{+}) \frac{1}{2}\hbar^{-1} p_0^2} \varphi_\hbar(y_0) e^{-i  (t- \boldsymbol{\tilde{s}}_{1,r}^{+})\frac{1}{2}\hbar^{-1} q_0^2} \overline{\varphi_\hbar(x_0)}  \,d\boldsymbol{x} d\boldsymbol{y}d\boldsymbol{p}  d\boldsymbol{q} d\boldsymbol{\tilde{s}} d\boldsymbol{s},
	\end{aligned}
\end{equation*}
where we after the change of variables have evaluated the integrals in $x$ and $q_{n+1}$.
We now evaluate the integrals in $x_0$, $y_0$ and all integrals in all $p$'s except $p_0$. This gives us
\begin{equation*}
	\begin{aligned}
	\MoveEqLeft \Aver{ \mathcal{J}(a,\varphi_\hbar,n,\sigma^1,\sigma^2,\kappa,\hbar)} 
	=\frac{((2\pi)^d\hbar\rho)^{k+r-n}}{(2\pi\hbar)^{2d}}  \int_{\R_{+}^k}\int_{\R_{+}^r}  \int \boldsymbol{1}_{[0,\hbar^{-1}t]}( \boldsymbol{s}_{1,k}^{+})  \boldsymbol{1}_{[0,\hbar^{-1}t]}( \boldsymbol{\tilde{s}}_{1,r}^{+})   e^{ i  \hbar^{-1} \langle  y,p_{0} - q_0 \rangle} 
	\\
	\times& 
	 e^{i  (\hbar^{-1}t- \boldsymbol{s}_{1,k}^{+}) \frac{1}{2} p_0^2}  e^{-i  (\hbar^{-1}t- \boldsymbol{\tilde{s}}_{1,r}^{+})\frac{1}{2} q_0^2}   \mathcal{G}(p_0,\boldsymbol{q},\sigma^1,\sigma^2) \prod_{i=1}^n  e^{i  \boldsymbol{s}_{\sigma^1_{i},\sigma^1_{i+1}-1}^{+} \frac{1}{2}(q_{\kappa(i)} + p_{0}- q_{0} + \tilde{l}^{\kappa}_{i}(\boldsymbol{q}))^2}  
	 e^{-i  \boldsymbol{\tilde{s}}_{\sigma^2_{i},\sigma^2_{i+1}-1}^{+} \frac{1}{2} q_{i}^2}  
	 \\
	 \times& 
	 a(y,\tfrac{p_0 -q_0}{2}+q_n)     \hat{\varphi}_\hbar(\tfrac{p_0}{\hbar}) \overline{\hat{\varphi}_\hbar(\tfrac{q_0}{\hbar})}  \, dydp_0  d\boldsymbol{q} d\boldsymbol{\tilde{s}} d\boldsymbol{s},
	\end{aligned}
\end{equation*}
where we have done the change of variables $s_m\mapsto\hbar^{-1}s_m$ and $\tilde{s}_m\mapsto\hbar^{-1}\tilde{s}_m$ for all $m$, we have introduced the notation
\begin{equation*}
	\begin{aligned}
	\MoveEqLeft \mathcal{G}(p_0,\boldsymbol{q},\sigma^1,\sigma^2) 
	= 
	 \prod_{i=1}^n  \Psi_{\alpha_{\sigma^1_{i}}}^\gamma(q_{\kappa(i)} + p_{0}- q_{0} + \tilde{l}^{\kappa}_{i}(\boldsymbol{q}), q_{\kappa(i)-1} + p_{0}- q_{0} + \tilde{l}^{\kappa}_{i}(\boldsymbol{q}))  
	\overline{ \Psi_{{\tilde{\alpha}}_{\sigma^2_{i}}}^\gamma(q_{i},q_{i-1}) }    
	 \\
	 \times&
	   \prod_{i=1}^{n+1}  \prod_{m=\sigma^1_{i-1}+1}^{\sigma^1_{i}-1} \Psi_{\alpha_m}^\gamma(q_{\kappa(i)-1} + p_{0}- q_{0} + \tilde{l}^{\kappa}_{i}(\boldsymbol{q}),q_{\kappa(i)-1} + p_{0}- q_{0} + \tilde{l}^{\kappa}_{i}(\boldsymbol{q}))    \prod_{m=\sigma^2_{i-1}+1}^{\sigma^2_{i}-1} \overline{ \Psi_{{\tilde{\alpha}}_m}^\gamma(q_{i-1},q_{i-1}) },
	\end{aligned}
\end{equation*}
and used that
\begin{equation*}
    q_{\kappa(n)} + \tilde{l}^{\kappa}_{n}(\boldsymbol{q}) = q_n \quad\text{and}\quad  q_{\kappa(i-1)}  + \tilde{l}^{\kappa}_{i-1}(\boldsymbol{q}) = q_{\kappa(i)-1}  + l^{\kappa}_{i}(\boldsymbol{q}).
    \end{equation*}
As in previous proofs we will let $i^{*}$ denote the smallest index such that $\kappa(i^{*})\neq i^{*}$ and we define the function $f(\boldsymbol{s},\boldsymbol{\tilde{s}})$ to be  
\begin{equation*}
	f(\boldsymbol{s},\boldsymbol{\tilde{s}}) = \boldsymbol{1}_{[0,\hbar^{-1}t]}\big( \sum_{i=1,i\neq i^{*}}^{k} s_i  \big) \boldsymbol{1}_{[0,\hbar^{-1}t]}\big( \sum_{i=1,i \notin \sigma}^{r} \tilde{s}_i  \big).
\end{equation*}
Using this function and introducing the two variables $s_0$ and $\tilde{s}_0$ we get that
\begin{equation*}
	\begin{aligned}
	\MoveEqLeft |\Aver{ \mathcal{J}(a,\varphi_\hbar,n,\sigma^1,\sigma^2,\kappa,\hbar)} |
	=\frac{((2\pi)^d\hbar\rho)^{k+r-n}}{(2\pi\hbar)^{2d}}\Big|  \int_{\R_{+}^k}\int_{\R_{+}^r}  \int \delta(\tfrac{t}{\hbar}- \boldsymbol{s}_{0,k}^{+})   \delta(\tfrac{t}{\hbar}-\boldsymbol{\tilde{s}}_{0,r}^{+}) f(\boldsymbol{s},\boldsymbol{\tilde{s}})
	\\
	\times& 
	 e^{ i  \hbar^{-1} \langle  y,p_{0} - q_0 \rangle} e^{i  s_0 \frac{1}{2} p_0^2}  e^{-i  \tilde{s}_0\frac{1}{2} q_0^2}    \mathcal{G}(p_0,\boldsymbol{q},\sigma^1,\sigma^2) \prod_{i=1}^n  e^{i  \boldsymbol{s}_{\sigma^1_{i},\sigma^1_{i+1}-1}^{+} \frac{1}{2}(q_{\kappa(i)} + p_{0}- q_{0} + \tilde{l}^{\kappa}_{i}(\boldsymbol{q}))^2}  
	 e^{-i  \boldsymbol{\tilde{s}}_{\sigma^2_{i},\sigma^2_{i+1}-1}^{+} \frac{1}{2} q_{i}^2}  
	 \\
	 \times& 
	 \langle \nabla_y \rangle^{6d+6} a(y,\tfrac{p_0 -q_0}{2}+q_n)    \langle \hbar^{-1}( p_{0} - q_0) \rangle^{-6d-6}      \hat{\varphi}_\hbar(\tfrac{p_0}{\hbar}) \overline{\hat{\varphi}_\hbar(\tfrac{q_0}{\hbar})}  \,dydp_0  d\boldsymbol{q} d\boldsymbol{\tilde{s}} d\boldsymbol{s}\Big|,
	\end{aligned}
\end{equation*}
where we have used integration by parts in $y$ to obtain the factor $\langle \hbar^{-1}( p_{0} - q_0) \rangle^{-6d-6}$. We again use the function $\zeta$ as in the previous proofs and write the delta functions as Fourier transforms of the constant function $1$. We then obtain that
\begin{equation*}
	\begin{aligned}
	\MoveEqLeft |\Aver{ \mathcal{J}(a,\varphi_\hbar,n,\sigma^1,\sigma^2,\kappa,\hbar)} |
	\leq \frac{((2\pi)^d\hbar\rho)^{k+r-n}}{(2\pi)^2(2\pi\hbar)^{2d}}\Big|  \int_{\R_{+}^k}\int_{\R_{+}^r}  \int  f(\boldsymbol{s},\boldsymbol{\tilde{s}})  e^{ i  \hbar^{-1} \langle  y,p_{0} - q_0 \rangle}  \mathcal{G}(p_0,\boldsymbol{q},\sigma^1,\sigma^2)
	\\
	\times& 
	 e^{i  s_0 (\frac{1}{2} p_0^2+\nu +i\zeta)}  e^{-i  \tilde{s}_0(\frac{1}{2} q_0^2 -\tilde{\nu} - i\zeta)}    \prod_{i=1}^n  e^{i  \boldsymbol{s}_{\sigma^1_{i},\sigma^1_{i+1}-1}^{+} (\frac{1}{2}(q_{\kappa(i)} + p_{0}- q_{0} + \tilde{l}^{\kappa}_{i}(\boldsymbol{q}))^2+\nu +i\zeta)}  
	 e^{-i  \boldsymbol{\tilde{s}}_{\sigma^2_{i},\sigma^2_{i+1}-1}^{+}( \frac{1}{2} q_{i}^2 -\tilde{\nu} - i\zeta)}  
	 \\
	 \times& 
	 \langle \nabla_y \rangle^{6d+6} a(y,\tfrac{p_0 -q_0}{2}+q_n)    \langle \hbar^{-1}( p_{0} - q_0) \rangle^{-6d-6}      \hat{\varphi}_\hbar(\tfrac{p_0}{\hbar}) \overline{\hat{\varphi}_\hbar(\tfrac{q_0}{\hbar})}  \,dydp_0 d\nu d\tilde{\nu}  d\boldsymbol{q} d\boldsymbol{\tilde{s}} d\boldsymbol{s}\Big|,
	\end{aligned}
\end{equation*}
Next we evaluate all integrals in $s$ and $\tilde{s}$ that $f$ does not depend on. We then get the estimate
\begin{equation*}
	\begin{aligned}
	\MoveEqLeft |\Aver{ \mathcal{J}(a,\varphi_\hbar,n,\sigma^1,\sigma^2,\kappa,\hbar)} |
	\leq \frac{((2\pi)^d\hbar\rho)^{k+r-n}}{(2\pi)^2(2\pi\hbar)^{2d}}  \int_{\R_{+}^{k+r-n-1}} \int  f(\boldsymbol{s},\boldsymbol{\tilde{s}})   \frac{|\mathcal{G}(p_0,\boldsymbol{q},\sigma^1,\sigma^2)| }{  \langle \hbar^{-1}( p_{0} - q_0) \rangle^{6d+6}}
	\\
	\times& 
	\frac{1}{| \frac{1}{2} p_0^2+\nu +i\zeta|} \frac{1}{\frac{1}{2} q_0^2 -\tilde{\nu} - i\zeta}  \frac{1}{|\frac{1}{2}(q_{\kappa(i^{*})} + p_{0}- q_{0} + q_{i^{*}-1} -q_{\kappa(i^{*})-1})^2+\nu +i\zeta|}  
	 \prod_{i=1}^n  \frac{1}{|(\frac{1}{2} q_{i}^2 -\tilde{\nu} - i\zeta)|}
	 \\
	 \times& 
	| \langle \nabla_y \rangle^{6d+6} a(y,\tfrac{p_0 -q_0}{2}+q_n) |      | \hat{\varphi}_\hbar(\tfrac{p_0}{\hbar})| |\hat{\varphi}_\hbar(\tfrac{q_0}{\hbar})|  \,dydp_0 d\nu d\tilde{\nu}  d\boldsymbol{q} d\boldsymbol{\tilde{s}} d\boldsymbol{s}.
	\end{aligned}
\end{equation*}
From here we will mostly use the standard tricks used in previous proofs. In this case we use the estimates 
\begin{equation*}
	\begin{aligned}
	\MoveEqLeft \frac{ \langle q_{\kappa(i^{*})}\rangle^{d+1}  \langle q_{\kappa(i^{*})-1}\rangle^{d+1} \langle q_{\kappa(i^{*})+1}\rangle^{d+1} \langle q_{\kappa(i^{*})} - q_{\kappa(i^{*})-1} +p_{0} - q_0 + q_{i^{*}-1} \rangle^{d+1} }
	{
	 \prod_{m=1}^{\kappa(i^{*})-2} \langle q_m-q_{m-1}\rangle^{-d-1} \prod_{m=\kappa(i^{*})+2}^{n} \langle q_m-q_{m-1}\rangle^{-d-1}
	 }
	\\
	&\leq C\langle q_{0}\rangle^{4d+4} \langle p_{0}-q_0 \rangle^{d+1} \langle q_{\kappa(i^{*})}- q_{\kappa(i^{*})-1} \rangle^{d+1} \prod_{m=1}^{n}  \langle q_m-q_{m-1}\rangle^{5d+5}  
	\end{aligned}
\end{equation*}
and 
  \begin{equation*}
	\begin{aligned}
	\MoveEqLeft  \sup_{\boldsymbol{q},p_0}  \big|  \langle q_{\kappa(i^{*})}- q_{\kappa(i^{*})-1} \rangle^{d+1} \prod_{m=1}^{n}  \langle q_m-q_{m-1}\rangle^{5d+5}  \mathcal{G}(p_0,\boldsymbol{q},\sigma^1,\sigma^2)
	 \big|
	\leq  C^{|\alpha|+|{\tilde{\alpha}}|}  \norm{\hat{V}}_{1,\infty,5d+5}^{|\alpha|+|{\tilde{\alpha}}|}. 
	\end{aligned}
\end{equation*}
Moreover we have that
\begin{equation*}
	\int_{\R_{+}^{k+r-n-1}} f(\boldsymbol{s},\boldsymbol{\tilde{s}})  d\boldsymbol{s}_{1,k-1}   d\boldsymbol{\tilde{s}}_{1,r-n}  \leq \frac{t^{k+r-n-1}}{\hbar^{k+r-n-1}(k-1)!(r-n)!}.
\end{equation*}
With these estimates we obtain that
\begin{equation*}
	\begin{aligned}
	\MoveEqLeft |\Aver{ \mathcal{J}(a,\varphi_\hbar,n,\sigma^1,\sigma^2,\kappa,\hbar)} |
	\leq \frac{\hbar\rho (\rho t)^{k+r-n-1}}{ (2\pi\hbar)^{2d} (k-1)!(r-n)!} C^{|\alpha|+|{\tilde{\alpha}}|}  \norm{\hat{V}}_{1,\infty,5d+5}^{|\alpha|+|{\tilde{\alpha}}|} \int  \frac{\langle q_{0}\rangle^{4d+4} \langle p_{0} - q_0 \rangle}{  \langle \hbar^{-1}( p_{0} - q_0) \rangle^{6d+6}}
	\\
	\times& \frac{1}{ \langle\nu\rangle | \frac{1}{2} p_0^2+\nu +i\zeta|} \frac{1}{\langle\tilde{\nu}\rangle|\frac{1}{2} q_0^2 -\tilde{\nu} - i\zeta|} \prod_{i=\kappa(i^{*})-1}^{\kappa(i^{*})+1} \frac{1}{  \langle q_i\rangle^{d+1} |(\frac{1}{2} q_{i}^2 -\tilde{\nu} - i\zeta)|}    \prod_{i=1}^{\kappa(i^{*})-2}  \frac{\langle q_i-q_{i-1}\rangle^{-d-1}}{ |(\frac{1}{2} q_{i}^2 -\tilde{\nu} - i\zeta)|}
	\\
	\times& \frac{ \langle\nu\rangle\langle\tilde{\nu}\rangle  \langle q_{\kappa(i^{*})} - q_{\kappa(i^{*})-1} +p_{0} - q_0 + q_{i^{*}-1} \rangle^{-d-1} }{|\frac{1}{2}(q_{\kappa(i^{*})} + p_{0}- q_{0} + q_{i^{*}-1} -q_{\kappa(i^{*})-1})^2+\nu +i\zeta|} | \langle \nabla_y \rangle^{6d+6} a(y,\tfrac{p_0 -q_0}{2}+q_n) |   \prod_{i=\kappa(i^{*})+2}^{n}\frac{\langle q_i-q_{i-1}\rangle^{-d-1}}{ |(\frac{1}{2} q_{i}^2 -\tilde{\nu} - i\zeta)|} 
	\\
	 \times& 
	   ( | \hat{\varphi}_\hbar(\tfrac{p_0}{\hbar})|^2 + |\hat{\varphi}_\hbar(\tfrac{q_0}{\hbar})|^2)  \,dydp_0 d\nu d\tilde{\nu}  d\boldsymbol{q} d\boldsymbol{\tilde{s}} d\boldsymbol{s},
	\end{aligned}
\end{equation*}
where we also have used that $  | \hat{\varphi}_\hbar(\tfrac{p_0}{\hbar})\hat{\varphi}_\hbar(\tfrac{q_0}{\hbar})|\leq | \hat{\varphi}_\hbar(\tfrac{p_0}{\hbar})|^2 + |\hat{\varphi}_\hbar(\tfrac{q_0}{\hbar})|^2$.
By applying Lemma~\ref{LE:int_posistion} and Lemma~\ref{LE:resolvent_int_est} we obtain that
\begin{equation*}
	\begin{aligned}
	\MoveEqLeft |\Aver{ \mathcal{J}(a,\varphi_\hbar,n,\sigma^1,\sigma^2,\kappa,\hbar)} |
	\leq \frac{\hbar\rho (\rho t)^{k+r-n-1}}{ (2\pi\hbar)^{2d} (k-1)!(r-n)!} C^{|\alpha|+|{\tilde{\alpha}}|}  \norm{\hat{V}}_{1,\infty,5d+5}^{|\alpha|+|{\tilde{\alpha}}|} |\log(\tfrac{\hbar}{t})|^{n+3}
	\\
	 \times& 
	\sup_{q_n\in\R^d}  \int   | \langle \nabla_y \rangle^{6d+6} a(y,q_n) |    \,dy \int  \frac{\langle q_{0}\rangle^{4d+4} \langle p_{0} - q_0 \rangle}{  \langle \hbar^{-1}( p_{0} - q_0) \rangle^{6d+6}}  ( | \hat{\varphi}_\hbar(\tfrac{p_0}{\hbar})|^2 + |\hat{\varphi}_\hbar(\tfrac{q_0}{\hbar})|^2)  \,dp_0 dq_0,
	\end{aligned}
\end{equation*}
where we have absorbed all constants only depending on dimension into $C^{|\alpha|+|{\tilde{\alpha}}|}$. For the remaining integrals we have that the integral over the symbol $a$ is a finite constant by assumption. For the other two integrals we have that
\begin{equation*}
	\begin{aligned}
	\MoveEqLeft 
	 \int  \frac{\langle q_{0}\rangle^{4d+4} \langle p_{0} - q_0 \rangle}{  \langle \hbar^{-1}( p_{0} - q_0) \rangle^{6d+6}}  ( | \hat{\varphi}_\hbar(\tfrac{p_0}{\hbar})|^2 + |\hat{\varphi}_\hbar(\tfrac{q_0}{\hbar})|^2)  \,dp_0 dq_0 
	 \\
	 \leq{}&   \int  \frac{ \langle p_{0} - q_0 \rangle^{5d+5}}{  \langle \hbar^{-1}( p_{0} - q_0) \rangle^{6d+6}}   \langle p_{0}\rangle^{4d+4} | \hat{\varphi}_\hbar(\tfrac{p_0}{\hbar})|^2  \,dp_0 dq_0
	 +
	  \int  \frac{ \langle p_{0} - q_0 \rangle}{  \langle \hbar^{-1}( p_{0} - q_0) \rangle^{6d+6}} \langle q_{0}\rangle^{4d+4} |\hat{\varphi}_\hbar(\tfrac{q_0}{\hbar})|^2  \,dp_0 dq_0
	  \\
	  \leq {}& \hbar^{2d} C \norm{\varphi_\hbar}^{2}_{\mathcal{H}_\hbar^{2d+2}(\R^d)}.
	\end{aligned}
\end{equation*}
Combining these estimates we get that
\begin{equation*}
	\begin{aligned}
	|\Aver{  \mathcal{J}(a,\varphi_\hbar,n,\sigma^1,\sigma^2,\kappa,\hbar)}| \leq  C_a \frac{\hbar\rho (\rho t)^{k+r-n-1}}{(k-1)!(r-n)!} (C  \norm{\hat{V}}_{1,\infty,5d+5})^{|\alpha|+|{\tilde{\alpha}}|} |\log(\tfrac{\hbar}{t})|^{n+3} \norm{\varphi_\hbar}^{2}_{\mathcal{H}_\hbar^{2d+2}(\R^d)},
	\end{aligned}
\end{equation*}
where $C_a$ only depend on the symbol $a$ and $C$ only on the dimension. This concludes the proof.
\end{proof}
 \begin{lemma}\label{lemma_gaus_reg}
Assume we are in Setting~\ref{setting_conv_twosided} and let $\delta>0$. Then there exists $\varepsilon_0>0$ independent of $\hbar$ such that
\begin{equation*}
	\begin{aligned}
	\sup_{\hbar\in I}|\Aver{ \mathcal{J}(a,\varphi_\hbar,n,\sigma^1,\sigma^2,\mathrm{id},\hbar)-  \mathcal{J}_\varepsilon(a,\varphi_\hbar,n,\sigma^1,\sigma^2,\mathrm{id},\hbar)}| \leq  C \frac{ n t^{r+k-n}}{k!(r-n)!}  (C\norm{\hat{V}}_{1,\infty,2d+2})^{|\alpha|+|{\tilde{\alpha}}|} \delta, 
	\end{aligned}
\end{equation*}
for all $\varepsilon \in (0,\varepsilon_0]$, where the constant depend on $\gamma$.
\end{lemma}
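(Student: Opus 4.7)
My plan is to mirror the direct ($\kappa=\mathrm{id}$) portion of the proof of Lemma~\ref{expansion_aver_bound_Mainterm}, but with the extra Gaussian cutoff factor handled in the same spirit as the regularisation argument of Lemma~\ref{regularise_expansion} and the time-integral estimate of Lemma~\ref{LE:gamma_reg_time_int}. First, I write
\begin{equation*}
1-\prod_{i=1}^n e^{-(\varepsilon\hbar^{-1}(\tilde s_{\sigma^2_i}-s_{\sigma^1_i}))^2}
\ \leq\ \sum_{i=1}^n\bigl(1- e^{-(\varepsilon\hbar^{-1}(\tilde s_{\sigma^2_i}-s_{\sigma^1_i}))^2}\bigr),
\end{equation*}
which follows from the telescoping identity \eqref{def_of_prod} and the bound $0\le e^{-u^2}\le 1$. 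This reduces the estimate to bounding, for each fixed $i_0\in\{1,\dots,n\}$, the quantity $\mathcal{J}(a,\varphi_\hbar,n,\sigma^1,\sigma^2,\mathrm{id},\hbar)$ with the single extra factor $1-e^{-(\varepsilon\hbar^{-1}(\tilde s_{\sigma^2_{i_0}}-s_{\sigma^1_{i_0}}))^2}$ inserted into the integrand.

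For each such fixed $i_0$, I follow verbatim the ladder case of the proof of Lemma~\ref{expansion_aver_bound_Mainterm}: apply Lemma~\ref{LE:Exp_ran_phases} with $\kappa=\mathrm{id}$ to evaluate the Poisson expectation and collapse the momentum delta functions, evaluate the inner product against $\OpW(a)$ after the standard change of variables $(x,y)\mapsto(x-y,\tfrac{x+y}{2})$, and partition the $(\boldsymbol t,\boldsymbol{\tilde t})$-integrals by the size of each coordinate, using integration by parts on the large pieces as in \eqref{EQ:exp_aver_b_Mainterm_3}--\eqref{EQ:exp_aver_b_Mainterm_4}. All bounds on the resulting $\boldsymbol p$-, $\boldsymbol\eta$- and $\boldsymbol\xi$-integrals, and on the symbol $a$ and the family $\{\varphi_\hbar\}$, are carried out exactly as before and produce the combinatorial factor $C_a^{\,} (C\norm{\hat V}_{1,\infty,2d+2})^{|\alpha|+|\tilde\alpha|}$.

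The only genuinely new step is the $(\boldsymbol s,\boldsymbol{\tilde s})$-integration, which in place of \eqref{EQ:exp_aver_b_Mainterm_5} now reads
\begin{equation*}
\int_{\R_+^{2k}} \boldsymbol{1}_{[0,t]_{\le}^k}(\boldsymbol s)\,\boldsymbol{1}_{[0,t]_{\le}^r}(\boldsymbol{\tilde s})\,\frac{1-e^{-(\varepsilon\hbar^{-1}(\tilde s_{\sigma^2_{i_0}}-s_{\sigma^1_{i_0}}))^2}}{\prod_{i=1}^n \max(1,\hbar^{-1}|s_{\sigma^1_i}-\tilde s_{\sigma^2_i}|)^{d/2}}\,d\boldsymbol s\,d\boldsymbol{\tilde s}.
\end{equation*}
After evaluating all the trivial simplex integrals and changing variables $u=\hbar^{-1}(\tilde s_{\sigma^2_{i_0}}-s_{\sigma^1_{i_0}})$, the $i_0$-factor contributes
\begin{equation*}
\hbar\int_\R \frac{1-e^{-\varepsilon^2 u^2}}{\max(1,|u|)^{d/2}}\,du,
\end{equation*}
which I split at $|u|=1/\varepsilon$: on $|u|\le 1/\varepsilon$ use $1-e^{-\varepsilon^2 u^2}\le \varepsilon^2 u^2$ to bound the contribution by $C\varepsilon^{(d-2)/2\wedge 1/2}$ (for $d=3$, by $C\varepsilon^{1/2}$), and on $|u|\ge 1/\varepsilon$ use the trivial bound $1$ together with the decay $|u|^{-d/2}$ to obtain $C\varepsilon^{(d-2)/2}$. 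The remaining $n-1$ factors are estimated by $C$ each, as in \eqref{EQ:exp_aver_b_Mainterm_5}, and the leftover simplex integrations give $t^{k+r-n}/(k!(r-n)!)$.

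Collecting everything and summing the $n$ terms coming from telescoping yields the claimed bound, with $\varepsilon_0$ chosen so that $C\varepsilon_0^{1/2}\le\delta$; the dependence on $\gamma$ enters only through the uniform bounds on $\Psi^\gamma_m$ from Lemma~\ref{psi_m_gamma_est} and Lemma~\ref{est_der_Psi_gamma}. The only genuinely delicate point is the $\varepsilon$-decay of the regularised time integral, and it is handled exactly as in the proof of Lemma~\ref{LE:gamma_reg_time_int}; there is no new obstacle beyond that.
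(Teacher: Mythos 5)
There is a genuine gap at the one step you yourself single out as new, namely the structure of the $(\boldsymbol s,\boldsymbol{\tilde s})$-integral. When you run the ladder argument (Lemma~\ref{app_quadratic_integral_tech_est} applied in the momentum variables $q_1,\dots,q_n$), the phase attached to $q_i$ carries the coefficient $\hbar^{-1}\big[(\tilde s_{\sigma^2_i}-s_{\sigma^1_i})-(\tilde s_{\sigma^2_{i+1}}-s_{\sigma^1_{i+1}})\big]$ for $i<n$ and $\hbar^{-1}(\tilde s_{\sigma^2_n}-s_{\sigma^1_n})$ for $i=n$ (this is also visible in \eqref{EQ:exp_aver_b_Mainterm_5} and in \eqref{lemma_gaus_reg_2.1} of the paper's proof). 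So the stationary-phase decay is in the \emph{last} gap and in \emph{consecutive differences} of gaps, not in each gap $|s_{\sigma^1_i}-\tilde s_{\sigma^2_i}|$ separately, as your displayed integrand claims. Consequently, after your change of variables $u=\hbar^{-1}(\tilde s_{\sigma^2_{i_0}}-s_{\sigma^1_{i_0}})$ the factor $1-e^{-\varepsilon^2u^2}$ does \emph{not} sit next to a decaying weight $\max(1,|u|)^{-d/2}$ in that single variable; the remaining factors still depend on $u$ (through the differences $u-u_{i_0\pm1}$) and provide no decay in $u$ alone, so the claimed factorisation into a one-dimensional integral times $n-1$ bounded factors fails as written. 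Your one-dimensional estimate $\int_\R(1-e^{-\varepsilon^2u^2})\max(1,|u|)^{-3/2}du\le C\varepsilon^{1/2}$ is fine, but it is being applied to an integral that does not arise.

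The lemma only asserts qualitative smallness, and the paper's route avoids the problem: keep the full complement $1-\prod_{i=1}^n e^{-(\varepsilon u_i)^2}$ (no telescoping), change variables to the gaps $u_i=\hbar^{-1}(\tilde s_{\sigma^2_i}-s_{\sigma^1_i})$ so that the time integral becomes, up to the simplex factor $t^{k+r-n}/(k!(r-n)!)$, the $n$-fold integral of $(1-\prod_i e^{-(\varepsilon u_i)^2})\,\max(1,|u_n|)^{-d/2}\prod_{i<n}\max(1,|u_i-u_{i+1}|)^{-d/2}$ as in \eqref{lemma_gaus_reg_2.3}, and conclude by dominated convergence \eqref{lemma_gaus_reg_2.4}, since the chain weight is integrable over $\R^n$. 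If you insist on your telescoped, quantitative version, it can be repaired, but it needs more than you wrote: pass to the difference variables $v_n=u_n$, $v_i=u_i-u_{i+1}$, write $u_{i_0}=\sum_{j\ge i_0}v_j$, and use $1-e^{-\varepsilon^2 u_{i_0}^2}\le\min(1,\varepsilon|u_{i_0}|)\le\sum_{j\ge i_0}\min(1,\varepsilon|v_j|)$ before applying the one-dimensional estimate factorwise; this yields a rate $O(n^2\varepsilon^{1/2})$ for $d=3$, at the price of an extra combinatorial factor. The rest of your outline (Campbell/Lemma~\ref{LE:Exp_ran_phases}, the $(x,y)$ change of variables, the treatment of the internal $\boldsymbol t,\boldsymbol{\tilde t}$ integrals using $\gamma>0$, and the origin of the $\gamma$-dependence) matches the paper and is unproblematic.
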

\begin{proof}
The proof will be very similar to that of Lemma~\ref{expansion_aver_bound_Mainterm}. By definition,
\begin{equation*}
	\begin{aligned}
	\MoveEqLeft \mathcal{J}(a,\varphi_\hbar,n,\sigma^1,\sigma^2,\mathrm{id},\hbar)-  \mathcal{J}_\varepsilon(a,\varphi_\hbar,n,\sigma^1,\sigma^2,\mathrm{id},\hbar)
	= \sum_{(\boldsymbol{x},\boldsymbol{\tilde{x}})\in \mathcal{X}_{\neq}^{k+r}}  \frac{ \prod_{i=1}^n \delta(x_{\sigma_i^1}- \tilde{x}_{\sigma_{i}^2}) }{\rho^n (2\pi\hbar)^{3d}\hbar^{k+r}}   \int_{[0,t]_{\leq}^k}\int_{[0,t]_{\leq}^r}  
	\\
	\times & \int  \Big( 1- \prod_{i=1}^n e^{-(\varepsilon\hbar^{-1}(\tilde{s}_{\sigma^2_{i}}-s_{\sigma^1_{i}}))^2} \Big)    e^{ i \hbar^{-1}  \langle  x-y,q_{r+1} \rangle} a(\tfrac{x+y}{2},q_{r+1}) e^{ i  \hbar^{-1} \langle  y,p_{k} \rangle}  e^{- i \hbar^{-1}  \langle  x,q_{r} \rangle} 
	e^{ -i \hbar^{-1}  \langle  y_0 - t\frac{1}{2}p_0,p_{0} \rangle} 
	\\
	\times&  \varphi_\hbar(y_0) e^{ i  \hbar^{-1} \langle  x_0 - t\frac{1}{2}q_0,q_{0} \rangle}   \overline{\varphi_\hbar(x_0)}
	\prod_{m=1}^k  \Psi_{\alpha_m}^\gamma(p_{m},p_{m-1},\infty;V) e^{ -i   \langle  \hbar^{-1/d}x_m,p_{m}-p_{m-1} \rangle} e^{i  s_{m} \frac{1}{2}\hbar^{-1} (p_{m}^2-p_{m-1}^2)}  
	\\
	\times &  
	\prod_{m=1}^r 
	\overline{ \Psi_{\beta_m}^\gamma(q_{m},q_{m-1},\infty;V) } e^{ i   \langle  \hbar^{-1/d}\tilde{x}_m,q_{m}-q_{m-1} \rangle}   e^{-i  \tilde{s}_{m} \frac{1}{2}\hbar^{-1} (q_{m}^2-q_{m-1}^2)} 
	     \,dx dy dx_0 dy_0 d\boldsymbol{p} d\boldsymbol{q}   d\boldsymbol{\tilde{s}} d\boldsymbol{s}.
	\end{aligned}
\end{equation*}
We then make the change of variables $(x,y)\mapsto (x-y,\frac{1}{2}(x+y))$ and applying the version of Lemma~\ref{LE:Exp_ran_phases} mentioned in Remark~\ref{RE:LE:Exp_ran_phases}. With this we obtain that
\begin{equation*}
	\begin{aligned}
	\MoveEqLeft \Aver{\mathcal{J}(a,\varphi_\hbar,n,\sigma^1,\sigma^2,\mathrm{id},\hbar)-  \mathcal{J}_\varepsilon(a,\varphi_\hbar,n,\sigma^1,\sigma^2,\mathrm{id},\hbar)}
	= \frac{((2\pi)^d\rho)^{k+r-n}}{(2\pi\hbar)^{3d}\hbar^{n}}     \int_{[0,t]_{\leq}^k}\int_{[0,t]_{\leq}^r}  
	\\
	\times & \int  \Big( 1- \prod_{i=1}^n e^{-(\varepsilon\hbar^{-1}(\tilde{s}_{\sigma^2_{i}}-s_{\sigma^1_{i}}))^2} \Big)    e^{ i \hbar^{-1}  \langle  x,q_{n+1} \rangle} a(y,q_{n+1}) e^{- i \hbar^{-1}  \langle  y+\frac{1}{2}x,q_n \rangle}     e^{ i  \hbar^{-1} \langle  y-\frac{1}{2}x,{p}_n \rangle}
	e^{ i \hbar^{-1}    t\frac{1}{2}p_0^2} 
	\\
	\times&  \hat{\varphi}_\hbar(p_0) e^{ -i  \hbar^{-1}   t\frac{1}{2}q_0^2}   \overline{\hat{\varphi}_\hbar(q_0)}
	 \prod_{i=1}^n \Big\{ \delta(p_{i}- q_{i} - p_{0}+ q_{0}) 
	 \Psi_{\alpha_{\sigma^1_{i}}}^\gamma(p_{i},p_{i-1})  e^{i  s_{\sigma^1_{i}} \frac{1}{2}\hbar^{-1} (p_{i}^2-p_{i-1}^2)}  \overline{ \Psi_{{\tilde{\alpha}}_{\sigma^2_{i}}}^\gamma(q_{i},q_{i-1}) }
	\\
	\times&  e^{-i  \tilde{s}_{\sigma^2_{i}} \frac{1}{2}\hbar^{-1} (q_{i}^2-q_{i-1}^2)}\Big\}
	 \prod_{i=1}^{n+1}  \prod_{m=\sigma^1_{i-1}+1}^{\sigma^1_{i}-1} \Psi_{\alpha_m}^\gamma(p_{i-1},p_{i-1})  \prod_{m=\sigma^2_{i-1}+1}^{\sigma^2_{i}-1} \overline{ \Psi_{{\tilde{\alpha}}_m}^\gamma(q_{i-1},q_{i-1}) }
	     \,dx dy dx_0 dy_0 d\boldsymbol{p} d\boldsymbol{q}   d\boldsymbol{\tilde{s}} d\boldsymbol{s},
	\end{aligned}
\end{equation*}
where we have also evaluated the integrals in $x_0$ and $y_0$.  We now evaluate the integrals in $p_1,\dots,p_n,q_{n+1}$ and $x$. This gives us that
\begin{equation*}
	\begin{aligned}
	\MoveEqLeft \Aver{\mathcal{J}(a,\varphi_\hbar,n,\sigma^1,\sigma^2,\mathrm{id},\hbar)-  \mathcal{J}_\varepsilon(a,\varphi_\hbar,n,\sigma^1,\sigma^2,\mathrm{id},\hbar)}
	= \frac{((2\pi)^d\rho)^{k+r-n}}{(2\pi\hbar)^{2d}\hbar^{n}}     \int_{[0,t]_{\leq}^k}\int_{[0,t]_{\leq}^r}   \int  a(y,\tfrac{p_0-q_0}{2}+q_n)
	\\
	\times &  \Big( 1- \prod_{i=1}^n e^{-(\varepsilon\hbar^{-1}(\tilde{s}_{\sigma^2_{i}}-s_{\sigma^1_{i}}))^2} \Big)           
	 \prod_{i=1}^n \Big\{ 
	 e^{i  s_{\sigma^1_{i}} \frac{1}{2}\hbar^{-1} ((p_0-q_0+q_i)^2-(p_0-q_0+q_{i-1})^2)} 
	 \Psi_{\alpha_{\sigma^1_{i}}}^\gamma(p_0-q_0+q_i,p_0-q_0+q_{i-1})   
	\\
	\times&  e^{-i  \tilde{s}_{\sigma^2_{i}} \frac{1}{2}\hbar^{-1} (q_{i}^2-q_{i-1}^2)}  \overline{\Psi_{{\tilde{\alpha}}_{\sigma^2_{i}}}^\gamma(q_{i},q_{i-1}) } \Big\}
	 \prod_{i=1}^{n+1} \Big\{ \prod_{m=\sigma^1_{i-1}+1}^{\sigma^1_{i}-1} \Psi_{\alpha_m}^\gamma(q_{i-1}+p_0-q_0,q_{i-1}+p_0-q_0)  
	 \\
	 \times& \prod_{m=\sigma^2_{i-1}+1}^{\sigma^2_{i}-1} \overline{ \Psi_{{\tilde{\alpha}}_m}^\gamma(q_{i-1},q_{i-1}) }\Big\}
	   e^{ i  \hbar^{-1} \langle  y,p_0-q_0 \rangle}
	e^{ i \hbar^{-1}    t\frac{1}{2}p_0^2} \hat{\varphi}_\hbar(p_0) e^{ -i  \hbar^{-1}   t\frac{1}{2}q_0^2}\overline{\hat{\varphi}_\hbar(q_0)}
	     \,dx dy dx_0 dy_0 d\boldsymbol{p} d\boldsymbol{q}   d\boldsymbol{\tilde{s}} d\boldsymbol{s}.
	\end{aligned}
\end{equation*}
Recalling the expression for the $\Psi$ functions from \eqref{obs_con_form_psi} we can rewrite these functions as follows
 \begin{equation*}
	\begin{aligned}
	 \Psi_{m}^\gamma(p_m,p_0,\infty ; V)
	&=    \int_{\R_{+}^{m-1}}  \int  
  \hat{\mathcal{V}}_{m}(p_m,p_0,\boldsymbol{\eta}_{1,m-1})   
  	e^{- \frac{1}{2} \gamma\boldsymbol{t}_{1,m-1}^{+}}
	    \prod_{i=1}^{m-1} e^{i \frac{1}{2} t_{i}(\eta_i^2-p_0^2)} \, d\boldsymbol{\eta}_{1,m-1}  d\boldsymbol{t}
	    \\
	    &= \int_{\R_{+}^{m-1}}  \int  
  \tilde{\mathcal{V}}_{m}(p_m-p_0,\boldsymbol{\eta}_{1,m-1})   
	    \prod_{i=1}^{m-1} e^{i \frac{1}{2} t_{i}(\eta_i^2+i\gamma)}  e^{i t_{i}\langle \eta_i,p_0\rangle} \, d\boldsymbol{\eta}_{1,m-1}  d\boldsymbol{t},
	\end{aligned}
\end{equation*}
where the functions  $\tilde{\mathcal{V}}_{m}(p_m-p_0,\boldsymbol{\eta}_{1,m-1})$ is defined by
\begin{equation*}
	\tilde{\mathcal{V}}_{m}(p_{m}-p_0,\boldsymbol{\eta}_{1,m-1}) = \frac{1}{(2\pi)^{dm}}  \begin{cases}
	   \hat{V}(p_{m}-p_{0}) & \text{if $m=1$,} 
	 \\
	  \hat{V}(p_{m}-p_{0}-\eta_{m-1}) \hat{V}(\eta_{1} ) \prod_{i=2}^{m-1}  \hat{V}(\eta_{i}-\eta_{i-1}) & \text{if $m>1$.} 
	  \end{cases}
\end{equation*}
Using this expression we get that
\begin{equation*}
	\begin{aligned}
	\MoveEqLeft \Aver{\mathcal{J}(a,\varphi_\hbar,n,\sigma^1,\sigma^2,\mathrm{id},\hbar)-  \mathcal{J}_\varepsilon(a,\varphi_\hbar,n,\sigma^1,\sigma^2,\mathrm{id},\hbar)}
	= \frac{((2\pi)^d\rho)^{k+r-n}}{(2\pi\hbar)^{2d}\hbar^{n}}     \int_{[0,t]_{\leq}^k}\int_{[0,t]_{\leq}^r}  \int_{\R^{a_k+\tilde{a}_r}_{+}} 
	\\
	\times &\int   \Big( 1- \prod_{i=1}^n e^{-(\varepsilon\hbar^{-1}(\tilde{s}_{\sigma^2_{i}}-s_{\sigma^1_{i}}))^2} \Big)           
	 \prod_{i=1}^n e^{-i  (\tilde{s}_{\sigma^2_{i}}-s_{\sigma^1_{i}}) \frac{1}{2}\hbar^{-1} (q_{i}^2-q_{i-1}^2)} e^{i\langle q_i, l_i(\hbar,\boldsymbol{s},\boldsymbol{\tilde{s}},q_0,p_0,\boldsymbol{t},\boldsymbol{\tilde{t}},\boldsymbol{\eta},\boldsymbol{\xi})\rangle}
	  \prod_{i=1}^{a_k} e^{i \frac{1}{2} t_{i}(\eta_i^2+i\gamma)} 
	  \\
	  \times&\prod_{i=1}^{\tilde{a}_r} e^{-i \frac{1}{2} t_{i}(\xi_i^2-i\gamma)} 
	 \prod_{i=1}^n 
	 \tilde{\mathcal{V}}_{\alpha_{\sigma^1_{i}}}(q_i-q_{i-1},\boldsymbol{\eta})   
	 \overline{ \tilde{\mathcal{V}}_{{\tilde{\alpha}}_{\sigma^2_{i}}}(q_{i}-q_{i-1},\boldsymbol{\xi}) }
	 \prod_{i=1}^{n+1}  \prod_{m=\sigma^1_{i-1}+1}^{\sigma^1_{i}-1}  \tilde{\mathcal{V}}_{\alpha_m}(0,\boldsymbol{\eta})  
	  \prod_{m=\sigma^2_{i-1}+1}^{\sigma^2_{i}-1} \overline{  \tilde{\mathcal{V}}_{{\tilde{\alpha}}_m}(0,\boldsymbol{\xi}) }
	  \\
	  \times&
	e^{ i  \hbar^{-1} \langle  y,p_0-q_0 \rangle} a(y,\tfrac{p_0-q_0}{2}+q_n)
	e^{ i \hbar^{-1}    t\frac{1}{2}p_0^2} \hat{\varphi}_\hbar(p_0) e^{ -i  \hbar^{-1}   t\frac{1}{2}q_0^2}\overline{\hat{\varphi}_\hbar(q_0)}
	     \,dx dy dx_0 dy_0 dp_0 d\boldsymbol{q}d\boldsymbol{t}d\boldsymbol{\tilde{t}}   d\boldsymbol{\tilde{s}} d\boldsymbol{s},
	\end{aligned}
\end{equation*}
where  the functions $l_i(\hbar,\boldsymbol{s},\boldsymbol{\tilde{s}},q_0,p_0,\boldsymbol{t},\boldsymbol{\tilde{t}},\boldsymbol{\eta},\boldsymbol{\xi})$ are linear functions. As in previous proofs we will now use Lemma~\ref{app_quadratic_integral_tech_est} for the variables $q_1,\dots,q_n$. Moreover, we use integration by parts in the variable  $y$ to get the factor $\sqrt{1+|\hbar^{-1}(q_0-p_0)|^2}^{-d-1}$. We then obtain the estimate
\begin{equation}\label{lemma_gaus_reg_2.1}
	\begin{aligned}
	\MoveEqLeft |\Aver{ \mathcal{J}(a,\varphi_\hbar,n,\sigma^1,\sigma^2,\mathrm{id},\hbar)-  \mathcal{J}_\varepsilon(a,\varphi_\hbar,n,\sigma^1,\sigma^2,\mathrm{id},\hbar)}|\leq   \frac{C\rho^{k+r-n}}{\hbar^{2d}}  \norm{\hat{V}}_{1,\infty,2d+2}^{|\alpha|+|\tilde{\alpha}|}  \int_{\R^{a_k}_{+}}   e^{-\frac{1}{2} \boldsymbol{t}_{1,a_k}^{+}\gamma}\,d\boldsymbol{t}
	\\
	\times& \sup_{|\beta|\leq d+1}\int\frac{ | \partial_y^{\beta}a(y,q)  \hat{\varphi}_\hbar(\tfrac{p_0}{\hbar})   {\hat{\varphi}_\hbar(\tfrac{q_0}{\hbar})}|}{ (1+|\hbar^{-1}(q_0-p_0)|^2)^{\frac{d+1}{2}} } dydp_0dq_0dq \int_{\R^{\tilde{a}_r}_{+}} e^{- \frac{1}{2} \boldsymbol{\tilde{t}}_{1,\tilde{a}_r}^{+}\gamma} \,d\boldsymbol{\tilde{t}}   \int_{[0,t]_{\leq}^r} \int_{[0,t]_{\leq}^k}  \tfrac{1}{\hbar \max(1,\hbar^{-1} |  \tilde{s}_{\sigma_{n}^2} -s_{\sigma_n^1} |)^\frac{d}{2}}
	\\
	\times&  \Big( 1- \prod_{i=1}^n e^{-(\varepsilon\hbar^{-1}(\tilde{s}_{\sigma^2_{\kappa(i)}}-s_{\sigma^1_{i}}))^2} \Big)   \prod_{i=1}^{n-1} \tfrac{1}{\hbar \max(1,\hbar^{-1} |  \tilde{s}_{\sigma_{i}^2} -s_{\sigma_i^1} -( \tilde{s}_{\sigma_{i+1}^2}-s_{\sigma_{i+1}^1}) |)^\frac{d}{2}} d\boldsymbol{s}d\tilde{\boldsymbol{s}},
	\end{aligned}
\end{equation}
 Firstly we observe that using the inequality $2| \hat{\varphi}_\hbar(\tfrac{p_0}{\hbar})   {\hat{\varphi}_\hbar(\tfrac{q_0}{\hbar})}| \leq | \hat{\varphi}_\hbar(\tfrac{p_0}{\hbar}) |^2+ | {\hat{\varphi}_\hbar(\tfrac{q_0}{\hbar})}|^2$ we get the estimate
\begin{equation}\label{lemma_gaus_reg_2.2}
	\begin{aligned}
	 \int\frac{ | \partial_y^{\beta}a(y,q)  \hat{\varphi}_\hbar(\tfrac{p_0}{\hbar})   {\hat{\varphi}_\hbar(\tfrac{q_0}{\hbar})}|}{ (1+|\hbar^{-1}(q_0-p_0)|^2)^{\frac{d+1}{2}} } dydp_0dq_0dq
	\leq C \hbar^{2d} \norm{\varphi_\hbar}_{L^2(\R^d)}^2 \norm{\partial_x^{\beta}a(x,p)}_{L^1(\R^d\times\R^d)}.
	\end{aligned}
\end{equation}
Since $\gamma>0$ we get that
\begin{equation}\label{lemma_gaus_reg_2.22}
	\begin{aligned}
	\int_{\R^{a_k}_{+}}   e^{-\frac{1}{2} \boldsymbol{t}_{1,a_k}^{+}\gamma}\,d\boldsymbol{t}
 \int_{\R^{\tilde{a}_r}_{+}} e^{- \frac{1}{2} \boldsymbol{\tilde{t}}_{1,\tilde{a}_r}^{+}\gamma} \,d\boldsymbol{\tilde{t}}  \leq C_\gamma.
 	\end{aligned}
\end{equation}
Preforming the change of variables $\tilde{s}_{\sigma^2_{i}}\mapsto \hbar^{-1}(\tilde{s}_{\sigma^2_{i}}-s_{\sigma^1_{i}})$ for all $i$ we get that 
\begin{equation}\label{lemma_gaus_reg_2.3}
	\begin{aligned}
	\MoveEqLeft   \int_{[0,t]_{\leq}^r} \int_{[0,t]_{\leq}^k}  \tfrac{ 1- \prod_{i=1}^n e^{-(\varepsilon\hbar^{-1}(\tilde{s}_{\sigma^2_{\kappa(i)}}-s_{\sigma^1_{i}}))^2}}{\hbar \max(1,\hbar^{-1} |  \tilde{s}_{\sigma_{n}^2} -s_{\sigma_n^1} |)^\frac{d}{2}}   \prod_{i=1}^{n-1} \tfrac{1}{\hbar \max(1,\hbar^{-1} |  \tilde{s}_{\sigma_{i}^2} -s_{\sigma_i^1} -( \tilde{s}_{\sigma_{i+1}^2}-s_{\sigma_{i+1}^1}) |)^\frac{d}{2}} d\boldsymbol{s}d\tilde{\boldsymbol{s}}
	\\
	&=\frac{t^{r+k-n}}{k!(r-n)!}  \int_{\R^n} \frac{ 1- \prod_{i=1}^n e^{-(\varepsilon s_i)^2}}{ \max(1, | s_n |)^\frac{d}{2}}   \prod_{i=1}^{n-1} \frac{1}{\max(1, |  s_i-s_{i+1} |)^\frac{d}{2}} d\boldsymbol{s}.
	\end{aligned}
\end{equation}
Since we have that
\begin{equation}
	 \int_{\R^n} \frac{ 1}{ \max(1, | s_n |)^\frac{d}{2}}   \prod_{i=1}^{n-1} \frac{1}{\max(1, |  s_i-s_{i+1} |)^\frac{d}{2}} d\boldsymbol{s} <\infty .
\end{equation}
It follows by dominated convergence that we can find $\varepsilon_0$ such that
\begin{equation}\label{lemma_gaus_reg_2.4}
	\begin{aligned}
	 \int_{\R^n} \frac{ 1- \prod_{i=1}^n e^{-(\varepsilon s_i)^2}}{ \max(1, | s_n |)^\frac{d}{2}}   \prod_{i=1}^{n-1} \frac{1}{\max(1, |  s_i-s_{i+1} |)^\frac{d}{2}} d\boldsymbol{s} \leq C \delta
	\end{aligned}
\end{equation}
for all $\varepsilon\in(0,\varepsilon_0]$. Combining the estimates in \eqref{lemma_gaus_reg_2.1}, \eqref{lemma_gaus_reg_2.2}, \eqref{lemma_gaus_reg_2.22} , \eqref{lemma_gaus_reg_2.3} and \eqref{lemma_gaus_reg_2.4} we arrive at the estimate
\begin{equation*}
	\begin{aligned}
	|\Aver{ \mathcal{J}(a,\varphi_\hbar,n,\sigma^1,\sigma^2,\mathrm{id},\hbar)-  \mathcal{J}_\varepsilon(a,\varphi_\hbar,n,\sigma^1,\sigma^2,\mathrm{id},\hbar)}| \leq C \frac{ n t^{r+k-n}}{k!(r-n)!}  (C\norm{\hat{V}}_{1,\infty,2d+2})^{|\alpha|+|\tilde{\alpha}|} \delta  
	\end{aligned}
\end{equation*}
for all $\varepsilon\in(0,\varepsilon_0]$, where the constant $C$ is independent of $\hbar$ but depend on $\gamma$.
\end{proof}
 In the introduction we presented the Weyl quantisation of a symbol. We will in the following lemma also need the ``$0$-quantisation'' of a symbol. For $a\in \mathcal{S}(\R^{2d})$ this quantisation is given by    
\begin{equation*}
	 \mathrm{Op}_{0,\hbar}(a)\psi(x) = \frac{1}{(2\pi\hbar)^d} \int_{\R^{2d}} e^{i\hbar^{-1}\langle x-y,p \rangle} a(y,p) \psi(y) \, dydp 
\end{equation*}
with $\psi\in\mathcal{S}(\R^d)$. The two types of quantisation are related, in particular we have that
\begin{equation}\label{obs_con_quantisation}
	\norm{\OpW(a)- \mathrm{Op}_{0,\hbar}(a)}_{\mathrm{op}} =\mathcal{O}(\hbar),
\end{equation}
 for all $a\in\mathcal{S}(\R^{2d})$; for more details see \cite{MR897108,MR2952218}. This in particular implies that if we have convergence of our semiclassical measures for the Weyl-quantisation we also have it for the $0$-quantisation, and vice versa. This we have already used in the proof of Lemma~\ref{L.con_term_onesided_1}.
\begin{lemma}\label{lemma_gaus_reg_3}
Assume we are in Setting~\ref{setting_conv_twosided} and assume that $\kappa$ is the identity. Then we have that 
\begin{equation}\label{composit_sym_epsilon}
	\begin{aligned}
	\Aver{ \mathcal{J}_\varepsilon(a,\varphi_\hbar,n,\sigma^1,\sigma^2,\kappa,\hbar)} = \langle \mathrm{Op}_{0,\hbar}(a\circ\Phi_{\gamma,\varepsilon}^t) \varphi_\hbar, \varphi_\hbar \rangle  + \mathcal{O}_\varepsilon(\sqrt{\hbar}),
	\end{aligned}
\end{equation}
where the symbol $a\circ\Phi_{\gamma,\varepsilon}^t$ is Schwartz class and given by
\begin{equation*}
	\begin{aligned}
	a\circ\Phi_{\gamma,\varepsilon}^t(x,q_0)={}& \frac{((2\pi)^d\rho)^{k+r-n} \pi^{\frac{n}{2}}}{\varepsilon^n}   \int_{[0,t]_{\leq}^n}  \int_{\R^{nd}}    e^{ i   \hbar^{-1}\langle  x_0 -y_0,q_{0} \rangle} 
	  \prod_{i=1}^n   e^{ \frac{1}{4}\varepsilon^{-2} \frac{1}{2}(q_i^2-q_{i-1}^2)}   
	\\
	&\times 
	    \prod_{i=1}^n \Psi_{\alpha_{\sigma^1_{i}}}^\gamma(q_{i} ,q_{i-1})   \overline{ \Psi_{{\tilde{\alpha}}_{\sigma^2_{i}}}^\gamma(q_{i},q_{i-1}) }  \prod_{i=1}^{n+1}   \prod_{m=\sigma^2_{i-1}+1}^{\sigma^2_{i}-1} \overline{ \Psi_{{\tilde{\alpha}}_m}^\gamma(q_{i-1},q_{i-1}) }
	 \\
	&\times \prod_{m=\sigma^1_{i-1}+1}^{\sigma^1_{i}-1} \Psi_{\alpha_m}^\gamma(q_{i-1} ,q_{i-1} )    \prod_{i=1}^{n+1} \frac{(s_{i-1}-s_i)^{\sigma_i^1-\sigma_{i-1}^1-1}}{(\sigma_i^1-\sigma_{i-1}^1-1)!}  \frac{(s_{i-1}-s_i )^{\sigma_i^2-\sigma_{i-1}^2-1}}{(\sigma_i^2-\sigma_{i-1}^2-1)!}
	\\
	&\times a(x+ y_0 - tq_0 -\sum_{i=1}^n s_{i}(q_{i} - q_{i-1}) , q_n)     \,  d\boldsymbol{q}_{1,n} d\boldsymbol{s}_{n,1}.
	\end{aligned}
\end{equation*}
\end{lemma}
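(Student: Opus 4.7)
The plan is to proceed analogously to the proof of Lemma~\ref{L.con_term_onesided_1}, but working on both sides of the inner product and exploiting the Gaussian regulator to reduce the double time integrals to effective single-time integrals over a simplex. First I would substitute the explicit kernels of $\tilde{\Theta}_{\alpha_m}^\gamma$ and $\tilde{\Theta}_{{\tilde{\alpha}}_m}^\gamma$ into $\mathcal{J}_\varepsilon$, take the expectation using Lemma~\ref{LE:Exp_ran_phases} in the form of Remark~\ref{RE:LE:Exp_ran_phases} with $\kappa = \mathrm{id}$, and perform the change of variables $(x,y) \mapsto (x-y,\tfrac{1}{2}(x+y))$ exactly as in \eqref{con_term_1.2.0}. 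After integrating out the delta functions from the $\Lambda_n$ factor together with the integrals in $q_{n+1}$, $x_0$, $y_0$ and all momenta other than $p_0, q_0$ and $q_1,\dots,q_n$, one is left with an expression in which the only dependence on each paired difference $\tau_i = \tilde{s}_{\sigma^2_i} - s_{\sigma^1_i}$ is through the Gaussian $e^{-(\varepsilon\hbar^{-1}\tau_i)^2}$ and the phase $e^{-i\hbar^{-1}\tau_i\tfrac{1}{2}(q_i^2 - q_{i-1}^2)}$.

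Next I would integrate the $\tau_i$ variables against these Gaussian factors. Because the Gaussian has width $\hbar/\varepsilon$ and is centred at $0$ while the actual domain is essentially $[-t,t]$, extending the integration to all of $\R$ produces an error which is exponentially small in $\hbar^{-1}$. The resulting Gaussian integrals contribute the prefactor $(\sqrt{\pi}\hbar/\varepsilon)^n$ — the $\hbar^n$ cancelling the $\hbar^{-n}$ coming from the kernels of $\tilde\Theta^\gamma$ combined with the $(\rho\hbar)^{k+r-n}$ of Lemma~\ref{LE:Exp_ran_phases}, leaving the $\pi^{n/2}/\varepsilon^n$ of the statement — together with the momentum weights exhibited in the formula for $a\circ\Phi_{\gamma,\varepsilon}^t$. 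The remaining integrations over $s_m$ with $m\notin\sigma^1$ (respectively $\tilde s_m$ with $m\notin\sigma^2$) decouple across the blocks $[s_{\sigma^1_i}, s_{\sigma^1_{i-1}}]$ and evaluate to the polynomial factors $(s_{i-1}-s_i)^{\sigma^1_i-\sigma^1_{i-1}-1}/(\sigma^1_i-\sigma^1_{i-1}-1)!$ (and the analogous ones for $\sigma^2$) present in the statement.

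To identify the resulting expression with $\langle \mathrm{Op}_{0,\hbar}(a\circ\Phi_{\gamma,\varepsilon}^t)\varphi_\hbar,\varphi_\hbar\rangle$, I would perform a first-order Taylor expansion of $a$ in its second argument around $q_n$, analogous to \eqref{con_term_1.2.1}--\eqref{con_term_1.2.3}. The zeroth-order term, after an inverse Fourier transform in $y$ produces the oscillatory factor $e^{i\hbar^{-1}\langle x_0 - y_0, q_0\rangle}$ characteristic of the $0$-quantisation, is exactly of the form claimed. The first-order remainder generates a factor of order $\hbar(p_0-q_0)$; using integration by parts in $y$ combined with the $\mathcal{H}^{5d+5}_\hbar$ uniform bound on $\{\varphi_\hbar\}$ and the $\varepsilon$-dependent bounds on derivatives of $\Psi^\gamma$ from Lemma~\ref{est_der_Psi_gamma}, this remainder is bounded by $C_{\gamma,\varepsilon}\sqrt{\hbar}$.

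The main obstacle will be verifying that $a\circ\Phi_{\gamma,\varepsilon}^t$ is genuinely a Schwartz symbol in $(x,q_0)$, with bounds that are uniform enough for $\mathrm{Op}_{0,\hbar}$ to make sense and for the Taylor remainder estimate to close. This requires the derivative bounds on $\Psi^\gamma_{\alpha_m}$ provided by Lemma~\ref{est_der_Psi_gamma} (crucially using $\gamma>0$ so that those bounds hold uniformly), together with the exponential weights $e^{-(q_i^2-q_{i-1}^2)^2/(16\varepsilon^2)}$ emerging from the Gaussian integrations, which supply the rapid decay in the momentum differences needed to integrate against derivatives of $a$. Bookkeeping these ingredients yields the identification of $a\circ\Phi_{\gamma,\varepsilon}^t$ and the $\mathcal{O}_\varepsilon(\sqrt{\hbar})$ error.
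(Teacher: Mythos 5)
Your overall route matches the paper's: take the expectation with Lemma~\ref{LE:Exp_ran_phases} for $\kappa=\mathrm{id}$, change variables as in Lemma~\ref{L.con_term_onesided_1}, integrate out the unpaired times to produce the factorial factors, Taylor-expand in the dual momentum, and read off the $0$-quantisation of the composed symbol. The genuine gap is in your treatment of the paired time variables $\tau_i=\tilde{s}_{\sigma^2_i}-s_{\sigma^1_i}$: you claim that extending the $\tau_i$-integration from its actual domain to all of $\R$ costs only an error exponentially small in $\hbar^{-1}$, because the Gaussian has width $\hbar/\varepsilon$ while "the domain is essentially $[-t,t]$". This is false. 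The admissible range of $\tau_i$ is determined by the simplex constraints on $\boldsymbol{s}$ and $\boldsymbol{\tilde{s}}$, and for $s_{\sigma^1_i}$ within distance of order $\hbar/\varepsilon$ of the relevant ordering boundary the admissible interval is cut off at a point comparable to (or inside) the Gaussian's width, so the discarded Gaussian mass is of order one there. After integrating over the $\boldsymbol{s}$-configurations in this boundary layer and accounting for the $\hbar^{-n}$ prefactor from the kernels, the resulting error is only polynomially small in $\hbar$; indeed this boundary effect is exactly where the $\mathcal{O}_\varepsilon(\sqrt{\hbar})$ in the statement comes from. The paper handles it by dropping the $\tilde{s}$-ordering indicator $\boldsymbol{1}_{[0,t]_{\leq}^n}(\tilde{s}_n,\dots,\tilde{s}_1)$ and bounding the difference explicitly: inserting the factor $\sqrt{\boldsymbol{\tilde{s}}^{+}}/\sqrt{t-\boldsymbol{s}^{+}}\geq 1$ on the offending region and using Gaussian moments yields a bound of order $\sqrt{\hbar/\varepsilon^{n+1}}$ (see \eqref{gaus_reg_est_3.1}--\eqref{gaus_reg_est_3.2}); only after this can one extend the $\tilde{s}$-integrals to the full line and evaluate them as Fourier transforms of $e^{-(\varepsilon\tilde{s}_i)^2}$, giving the $\pi^{n/2}/\varepsilon^{n}$ prefactor and the momentum weights.

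A symptom of the same misplacement is your error bookkeeping: you attribute the $\sqrt{\hbar}$ to the Taylor remainder, bounding it by $C_{\gamma,\varepsilon}\sqrt{\hbar}$, whereas the Taylor step (done exactly as in \eqref{con_term_1.2.1}--\eqref{con_term_1.2.3}, using Lemma~\ref{est_der_Psi_gamma} with $\gamma>0$) costs only $\mathcal{O}(\hbar)$. If your exponential-smallness claim were correct, the lemma would hold with error $\mathcal{O}(\hbar)$, which is stronger than what is stated -- a sign that the boundary step cannot be dismissed. Relatedly, when you let the unpaired $\tilde{s}_m$-integrations "evaluate to the polynomial factors" in $(s_{i-1}-s_i)$, you are implicitly replacing the block lengths $\tilde{s}_{i-1}-\tilde{s}_i$ by $s_{i-1}-s_i$; this substitution is legitimate only after the change of variables $\tilde{s}_i\mapsto\hbar^{-1}(\tilde{s}_i-s_i)$ and the same boundary control, and should be stated and estimated rather than assumed.
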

We observe that from this lemma and \eqref{obs_con_quantisation} we get that
\begin{equation}
	\lim_{\hbar\rightarrow0}\Aver{ \mathcal{J}_\varepsilon(a,\varphi_\hbar,n,\sigma^1,\sigma^2,\kappa,\hbar)}  = \int_{\R^{2d}} a\circ\Phi^t_{\gamma,\varepsilon}(y,q_0) \,d\mu_0(y,q_0).
\end{equation}
\begin{proof}
From the definition of the number $\mathcal{J}_\varepsilon(a,\varphi_\hbar,n,\sigma^1,\sigma^2,\kappa,\hbar)$ and by applying Campbell's theorem we get that 
\begin{equation}
	\begin{aligned}
	\MoveEqLeft \Aver{ \mathcal{J}_\varepsilon(a,\varphi_\hbar,n,\sigma^1,\sigma^2,\kappa,\hbar)} 
	=\frac{((2\pi)^d\rho)^{k+r-n}}{(2\pi\hbar)^{3d}\hbar^{2n}}  \int_{[0,t]_{\leq}^k}\int_{[0,t]_{\leq}^r}  \int e^{ i \hbar^{-1}  \langle  x-y,q_{n+1} \rangle} \prod_{i=1}^n e^{-(\varepsilon\hbar^{-1}(\tilde{s}_{\sigma^2_{i}}-s_{\sigma^1_{i}}))^2}    e^{ i  \hbar^{-1} \langle  y,p_{n} \rangle} 
	\\
	&\times  e^{- i \hbar^{-1}  \langle  x,q_{n} \rangle}   e^{ -i \hbar^{-1}  \langle  y_0,p_{0} \rangle} e^{ i  \hbar^{-1} \langle x_0,q_{0} \rangle} 
	 \prod_{i=1}^n \Big\{ e^{ -i   \langle  \hbar^{-1/d}x_{i},p_{i}-p_{i-1} -  q_{i}+q_{i-1}  \rangle} 
	  \Psi_{\alpha_{\sigma^1_{i}}}^\gamma(p_{i},p_{i-1})  e^{i  s_{\sigma^1_{i}} \frac{1}{2}\hbar^{-1} (p_{i}^2-p_{i-1}^2)} 
	  \\
	  &\times \overline{ \Psi_{\beta_{\sigma^2_{i}}}^\gamma(q_{i},q_{i-1}) } e^{-i  \tilde{s}_{\sigma^2_{i}} \frac{1}{2}\hbar^{-1} (q_{i}^2-q_{i-1}^2)}\Big\} \prod_{i=1}^{n+1}  \prod_{m=\sigma^1_{i-1}+1}^{\sigma^1_{i}-1} \Psi_{\alpha_m}^\gamma(p_{i-1},p_{i-1})  \prod_{m=\sigma^2_{i-1}+1}^{\sigma^2_{i}-1} \overline{ \Psi_{\beta_m}^\gamma(q_{i-1},q_{i-1}) }
	\\
	&\times a(\tfrac{x+y}{2},q_{n+1})   e^{i  t \frac{1}{2}\hbar^{-1} p_0^2} \varphi_\hbar(y_0) e^{-i  t \frac{1}{2}\hbar^{-1} q_0^2} \overline{\varphi_\hbar(x_0)}  \, dy d\boldsymbol{x} dy_0 d\boldsymbol{p}  d\boldsymbol{q} d\boldsymbol{\tilde{s}} d\boldsymbol{s},
	\end{aligned}
\end{equation}
where we are using the convention $\sigma^1_{n+1}=k+1$ and $\sigma^2_{\kappa(n+1)}=r+1$. We are here not explicit writing the dependence on $\infty$ and $V$ for the $\Psi$-functions. We have also evaluated some of the integrals in the position variables and momentum variables and furthermore a relabelling of some variables. To see how the matching of the remaining momentum variables is, it will be convenient to do the change of variables $x_1\mapsto \hbar^{-\frac{1}{d}}x_1$ and $x_i\mapsto \hbar^{-\frac{1}{d}}(x_i - x_{i-1})$ for $i  \in \{2,\dots,n\}$. Furthermore we also make the change of variables $(x,y)\mapsto (x-y,\frac{1}{2}(x+y))$. This gives us
\begin{equation}
	\begin{aligned}
	\MoveEqLeft  \Aver{ \mathcal{J}_\varepsilon(a,\varphi_\hbar,n,\sigma^1,\sigma^2,\kappa,\hbar)} 
	=  \frac{((2\pi)^d\rho)^{k+r-n}}{(2\pi\hbar)^{3d}\hbar^n}  \int_{[0,t]_{\leq}^k}\int_{[0,t]_{\leq}^r}  \int\prod_{i=1}^n e^{-(\varepsilon\hbar^{-1}(\tilde{s}_{\sigma^2_{i}}-s_{\sigma^1_{i}}))^2}   e^{ i  \hbar^{-1} \langle  x,q_{n+1} - \frac{1}{2}(p_n +q_n) \rangle}
	\\
	&\times e^{ i \hbar^{-1}  \langle  y,p_{n} - q_n \rangle}   e^{ -i \hbar^{-1}  \langle  y_0,p_{0} \rangle} e^{ i   \hbar^{-1}\langle  x_0,q_{0} \rangle} 
	 \prod_{i=1}^n  \Big\{e^{ -i   \langle  x_{i},p_{i-1} -q_{i-1}-p_n+ q_{n}   \rangle} 
	 \Psi_{\alpha_{\sigma^1_{i}}}^\gamma(p_{i},p_{i-1})  e^{i  s_{\sigma^1_{i}} \frac{1}{2}\hbar^{-1} (p_{i}^2-p_{i-1}^2)} 
	 \\
	 &\times \overline{ \Psi_{\beta_{\sigma^2_{i}}}^\gamma(q_{i},q_{i-1}) } e^{-i  \tilde{s}_{\sigma^2_{i}} \frac{1}{2}\hbar^{-1} (q_{i}^2-q_{i-1}^2)}\Big\}
	  \prod_{i=1}^{n+1}  \prod_{m=\sigma^1_{i-1}+1}^{\sigma^1_{i}-1} \Psi_{\alpha_m}^\gamma(p_{i-1},p_{i-1})  \prod_{m=\sigma^2_{i-1}+1}^{\sigma^2_{i}-1} \overline{ \Psi_{\beta_m}^\gamma(q_{i-1},q_{i-1}) }
	\\
	&\times a(y,q_{n+1})   e^{i  t \frac{1}{2}\hbar^{-1} p_0^2} \varphi_\hbar(y_0) e^{-i  t \frac{1}{2}\hbar^{-1} q_0^2} \overline{\varphi_\hbar(x_0)}  \, dy d\boldsymbol{x}dy_0 d\boldsymbol{p} d\boldsymbol{q} d\boldsymbol{\tilde{s}} d\boldsymbol{s}.
	\end{aligned}
\end{equation}
We now evaluate all integrals in all $x$'es except $x_0$, in all $p$'s except $p_n$ and $q_{n+1}$. This gives us
\begin{equation}
	\begin{aligned}
	\MoveEqLeft  \Aver{ \mathcal{J}_\varepsilon(a,\varphi_\hbar,n,\sigma^1,\sigma^2,\kappa,\hbar)}
	= \frac{((2\pi)^d\rho)^{k+r-n}}{(2\pi\hbar)^{2d}\hbar^n}  \int_{[0,t]_{\leq}^k}\int_{[0,t]_{\leq}^r}  \int  e^{ i \hbar^{-1}  \langle  y-y_0,p_{n} - q_n \rangle} e^{ i   \hbar^{-1}\langle  x_0 -y_0,q_{0} \rangle}  
	\\
	\times&  
	 \prod_{i=1}^n \Big\{ e^{-(\varepsilon\hbar^{-1}(\tilde{s}_{\sigma^2_{i}}-s_{\sigma^1_{i}}))^2}  \Psi_{\alpha_{\sigma^1_{i}}}^\gamma(q_{i}+p_n- q_{n} ,q_{i-1}+p_n- q_{n} )
	  e^{i  s_{\sigma^1_{i}} \frac{1}{2}\hbar^{-1} ((q_{i}+p_n- q_{n})^2-(q_{i-1}+p_n- q_{n} )^2)} 
	\\
	\times &  \overline{ \Psi_{\beta_{\sigma^2_{i}}}^\gamma(q_{i},q_{i-1}) } e^{-i  \tilde{s}_{\sigma^2_{i}} \frac{1}{2}\hbar^{-1} (q_{i}^2-q_{i-1}^2)} \Big\} e^{i  t \frac{1}{2}\hbar^{-1} (p_n+q_0-q_n)^2}   a(y, \tfrac{1}{2}(p_n +q_n))   \varphi_\hbar(y_0) e^{-i  t \frac{1}{2}\hbar^{-1} q_0^2} \overline{\varphi_\hbar(x_0)}  
	\\
	\times & \prod_{i=1}^{n+1}  \prod_{m=\sigma^2_{i-1}+1}^{\sigma^2_{i}-1} \overline{ \Psi_{\beta_m}^\gamma(q_{i-1},q_{i-1}) }  \prod_{m=\sigma^1_{i-1}+1}^{\sigma^1_{i}-1} \Psi_{\alpha_m}^\gamma(q_{i-1}+p_n- q_{n} ,q_{i-1}+p_n- q_{n})   \,dx_0 dy  dy_0 dp_{n} d\boldsymbol{q} d\boldsymbol{\tilde{s}}d\boldsymbol{s} .
	\end{aligned}
\end{equation}
By rewriting the phase functions and making the change of variables $p_n\mapsto p_n-q_{n}$ and $y\mapsto y- y_0 +\tfrac{1}{2}t(p_{n}+q_{n} - q_n)$  we get the  expression
\begin{equation}
	\begin{aligned}
	\MoveEqLeft  \Aver{ \mathcal{J}_\varepsilon(a,\varphi_\hbar,n,\sigma^1,\sigma^2,\kappa,\hbar)}
	= \frac{((2\pi)^d\rho)^{k+r-n}}{(2\pi\hbar)^{2d}\hbar^n}  \int_{[0,t]_{\leq}^k}\int_{[0,t]_{\leq}^r}  \int e^{ i \hbar^{-1}  \langle  y +tq_0 +\sum_{i=1}^n s_{\sigma^1_{i}}(q_{i} - q_{i-1}) ,p_{n}+q_{n} - q_n \rangle}   
	\\
	&\times  \prod_{i=1}^n e^{-(\varepsilon\hbar^{-1}(\tilde{s}_{\sigma^2_{i}}-s_{\sigma^1_{i}}))^2}   
	e^{-i ( \tilde{s}_{\sigma^2_{i}}-s_{\sigma^1_{i}}) \frac{1}{2}\hbar^{-1} (q_{i}^2-q_{i-1}^2)}   \Psi_{\alpha_{\sigma^1_{i}}}^\gamma(q_{i}+p_n ,q_{i-1}+p_n )  \overline{ \Psi_{\beta_{\sigma^2_{i}}}^\gamma(q_{i},q_{i-1}) } 
	\\
	&\times 
	 e^{ i   \hbar^{-1}\langle  x_0 -y_0,q_{0} \rangle}\prod_{i=1}^{n+1}   \prod_{m=\sigma^2_{i-1}+1}^{\sigma^2_{i}-1} \overline{ \Psi_{\beta_m}^\gamma(q_{i-1},q_{i-1}) }
	 \prod_{m=\sigma^1_{i-1}+1}^{\sigma^1_{i}-1} \Psi_{\alpha_m}^\gamma(q_{i-1}+p_n,q_{i-1}+p_n ) 
	\\
	&\times   a(y+ y_0 -\tfrac{1}{2}t(p_{n}+q_{n} - q_n), \tfrac{1}{2}(p_n + q_{n} +q_n))   \varphi_\hbar(y_0)  \overline{\varphi_\hbar(x_0)}   \,dx_0 dy  dy_0 dp_{n}  d\boldsymbol{q} d\boldsymbol{\tilde{s}} d\boldsymbol{s},
	\end{aligned}
\end{equation}
We will just consider the integrals in the time variables. Our integrals have the  form
\begin{equation}
	\begin{aligned}
	\MoveEqLeft   \int_{[0,t]_{\leq}^k}\int_{[0,t]_{\leq}^r} f(s_{\sigma_1^1},\dots,s_{\sigma_n^1}) g(\tilde{s}_{\sigma_2^1},\dots,{s}_{\sigma_n^2})d\boldsymbol{\tilde{s}} d\boldsymbol{s}
	\\
	={}&  \int_{\R^{2n}} \boldsymbol{1}_{[0,t]_{\leq}^n}(s_n,\dots,s_1) \boldsymbol{1}_{[0,t]_{\leq}^n}(\tilde{s}_n,\dots,\tilde{s}_1) f(s_{1},\dots,s_{n}) g(\tilde{s}_{1},\dots,{s}_{n}) 
	\\
	&\times \prod_{i=1}^{n+1} \frac{(s_{i-1}-s_i)^{\sigma_i^1-\sigma_{i-1}^1-1}}{(\sigma_i^1-\sigma_{i-1}^1-1)!}  \frac{(\tilde{s}_{i-1}-\tilde{s}_i)^{\sigma_i^2-\sigma_{i-1}^2-1}}{(\sigma_i^2-\sigma_{i-1}^2-1)!}\,d\boldsymbol{\tilde{s}} d\boldsymbol{s},
	\end{aligned}
\end{equation}
We then have the expression
\begin{equation}
	\begin{aligned}
	\MoveEqLeft  \Aver{ \mathcal{J}_\varepsilon(a,\varphi_\hbar,n,\sigma^1,\sigma^2,\kappa,\hbar)}
	= \frac{((2\pi)^d\rho)^{k+r-n}}{(2\pi\hbar)^{2d}\hbar^n}  \int_{\R^{2n}}  \int  e^{ i \hbar^{-1}  \langle  y +tq_0 +\sum_{i=1}^n s_{i}(q_{i} - q_{i-1}) ,p_{n} \rangle} e^{ i   \hbar^{-1}\langle  x_0 -y_0,q_{0} \rangle} 
	\\
	&\times   \prod_{i=1}^n e^{-(\varepsilon\hbar^{-1}(\tilde{s}_{i}-s_{i}))^2}   e^{-i  \hbar^{-1}(\tilde{s}_{i}-s_{i}) \frac{1}{2}(q_i^2-q_{i-1}^2)}   \Psi_{\alpha_{\sigma^1_{i}}}^\gamma(q_{i}+p_n ,q_{i-1}+p_n)   \overline{ \Psi_{{\tilde{\alpha}}_{\sigma^2_{i}}}^\gamma(q_{i},q_{i-1}) }  
	\\
	&\times 
	     \prod_{i=1}^{n+1}   \prod_{m=\sigma^2_{i-1}+1}^{\sigma^2_{i}-1} \overline{ \Psi_{{\tilde{\alpha}}_m}^\gamma(q_{i-1},q_{i-1}) }
	\prod_{m=\sigma^1_{i-1}+1}^{\sigma^1_{i}-1} \Psi_{\alpha_m}^\gamma(q_{i-1}+p_n ,q_{i-1}+p_n ) 
	 \\
	&\times \boldsymbol{1}_{[0,t]_{\leq}^n}(\tilde{s}_n,\dots,\tilde{s}_1 )  \boldsymbol{1}_{[0,t]_{\leq}^n}(s_n,\dots,s_1)    \prod_{i=1}^{n+1} \frac{(s_{i-1}-s_i)^{\sigma_i^1-\sigma_{i-1}^1-1}}{(\sigma_i^1-\sigma_{i-1}^1-1)!}  \frac{(\ \tilde{s}_{i-1}- \tilde{s}_i )^{\sigma_i^2-\sigma_{i-1}^2-1}}{(\sigma_i^2-\sigma_{i-1}^2-1)!}
	\\
	&\times a(y+ y_0 -\tfrac{1}{2}tp_{n}, \tfrac{1}{2}p_n  +q_n)  \varphi_\hbar(y_0)  \overline{\varphi_\hbar(x_0)}    \,dx_0 dy  dy_0 dp_{n}  d\boldsymbol{q} d\boldsymbol{s}d\boldsymbol{\tilde{s}}.
	\end{aligned}
\end{equation}
Here we make the change of variables $p_n\mapsto \hbar^{-1}p_n$. As in the proof of Lemma~\ref{L.con_term_onesided_1} we preform a first order Taylor expansion of the function
\begin{equation}
	\begin{aligned}
	p_n \mapsto  \prod_{i=1}^n & \Psi_{\alpha_{\sigma^1_{i}}}^\gamma(q_{i}+\hbar p_n ,q_{i-1}+\hbar p_n)   \prod_{i=1}^{n+1}  \prod_{m=\sigma^1_{i-1}+1}^{\sigma^1_{i}-1} \Psi_{\alpha_m}^\gamma(q_{i-1}+\hbar p_n ,q_{i-1}+\hbar p_n ) 
	\\
	&\times a(y+ y_0 -\tfrac{1}{2}t\hbar p_{n}, \tfrac{1}{2}\hbar p_n  +q_n). 
	\end{aligned}
\end{equation}
This will, with a similar argument to that in Lemma~\ref{L.con_term_onesided_1}, produce an error of order $\hbar$. Hence 
\begin{equation}\label{gaus_reg_est_3.0}
	\begin{aligned}
	\MoveEqLeft  \Aver{ \mathcal{J}_\varepsilon(a,\varphi_\hbar,n,\sigma^1,\sigma^2,\kappa,\hbar)}
	= \frac{((2\pi)^d\rho)^{k+r-n}}{(2\pi\hbar)^{d}\hbar^n}  \int_{\R^{2n}}  \int_{\R^{(n+3)d}}    e^{ i   \hbar^{-1}\langle  x_0 -y_0,q_{0} \rangle} 
	  \prod_{i=1}^n \Big\{ e^{-(\varepsilon\hbar^{-1}(\tilde{s}_{i}-s_{i}))^2}   
	  \\
	  &\times e^{-i  \hbar^{-1}(\tilde{s}_{i}-s_{i}) \frac{1}{2}(q_i^2-q_{i-1}^2)}   
	 \Psi_{\alpha_{\sigma^1_{i}}}^\gamma(q_{i} ,q_{i-1})   \overline{ \Psi_{{\tilde{\alpha}}_{\sigma^2_{i}}}^\gamma(q_{i},q_{i-1}) } \Big\}
	 \boldsymbol{1}_{[0,t]_{\leq}^n}(\tilde{s}_n,\dots,\tilde{s}_1 )  \boldsymbol{1}_{[0,t]_{\leq}^n}(s_n,\dots,s_1)  
	 \\
	 &\times \prod_{i=1}^{n+1} \frac{(s_{i-1}-s_i)^{\sigma_i^1-\sigma_{i-1}^1-1}}{(\sigma_i^1-\sigma_{i-1}^1-1)!}  \frac{(\ \tilde{s}_{i-1}- \tilde{s}_i )^{\sigma_i^2-\sigma_{i-1}^2-1}}{(\sigma_i^2-\sigma_{i-1}^2-1)!}  \prod_{m=\sigma^2_{i-1}+1}^{\sigma^2_{i}-1} \overline{ \Psi_{{\tilde{\alpha}}_m}^\gamma(q_{i-1},q_{i-1}) }\prod_{m=\sigma^1_{i-1}+1}^{\sigma^1_{i}-1} \Psi_{\alpha_m}^\gamma(q_{i-1} ,q_{i-1} ) 
	\\
	&\times  a( y_0 - tq_0 -\sum_{i=1}^n s_{i}(q_{i} - q_{i-1}) , q_n)  \varphi_\hbar(y_0)  \overline{\varphi_\hbar(x_0)}    \,dx_0   dy_0  d\boldsymbol{q} d\boldsymbol{s}d\boldsymbol{\tilde{s}} + \mathcal{O}(\hbar).
	\end{aligned}
\end{equation}
Before we proceed we will remove the function $ \boldsymbol{1}_{[0,t]_{\leq}^n}(\tilde{s}_n,\dots,\tilde{s}_1 ) $ from our expression.  To this end we will by $\tilde{ \mathcal{J}}_\varepsilon(a,\varphi,n,\sigma^1,\sigma^2,\kappa,\hbar)$ denote the expression with this function removed. By the change of variables $\tilde{s}_n\mapsto \tilde{s}_n $, $\tilde{s}_i\mapsto \tilde{s}_i - \tilde{s}_{i-1} $ for all $i\in\{1,\dots,n-1\}$ and $s_n\mapsto s_n $, $s_i\mapsto s_i - s_{i-1} $ for all $i\in\{1,\dots,n-1\}$ and afterwards  arguing as in previous lemmas we obtain the bound
\begin{equation}\label{gaus_reg_est_3.1}
	\begin{aligned}
	\MoveEqLeft | \Aver{ \mathcal{J}_\varepsilon(a,\varphi_\hbar,n,\sigma^1,\sigma^2,\kappa,\hbar) - \tilde{ \mathcal{J}}_\varepsilon(a,\varphi_\hbar,n,\sigma^1,\sigma^2,\kappa,\hbar)}|
	\\
	\leq{}& C\hbar^{-n}  \int_{\R^{2n}}  \boldsymbol{1}_{[t,\infty)}(\boldsymbol{\tilde{s}}_{1,n}^+ )  \boldsymbol{1}_{[0,t]}(\boldsymbol{s}_{1,n}^{+})   \prod_{i=1}^n e^{-(\varepsilon\hbar^{-1}(\boldsymbol{\tilde{s}}_{i,n}^+-\boldsymbol{s}_{i,n}^+))^2}  d\boldsymbol{s}d\boldsymbol{\tilde{s}} + \mathcal{O}(\hbar),
	\end{aligned}
\end{equation}
where the constant $C$ is independent of $\hbar$. We now change variables $\tilde{s}_i \mapsto \hbar^{-1}(\tilde{s}_i-s_i) $ for all $i$, and note that
\begin{equation}\label{gaus_reg_est_3.2}
	\begin{aligned}
	\MoveEqLeft   
	  \int_{\R^{2n}}  \boldsymbol{1}_{[t,\infty)}(\hbar \boldsymbol{\tilde{s}}_{1,n}^+ +\boldsymbol{s}_{1,n}^{+} )  \boldsymbol{1}_{[0,t]}(\boldsymbol{s}_{1,n}^{+})   \prod_{i=1}^n e^{-(\varepsilon\boldsymbol{\tilde{s}}_{i,n}^+)^2}  d\boldsymbol{s}_{1,n}d\boldsymbol{\tilde{s}}_{1,n} 
	 \\
	 &\leq \sqrt{\hbar}  \int_{\R^{2n}}  \boldsymbol{1}_{[t,\infty)}(\hbar \boldsymbol{\tilde{s}}_{1,n}^+ +\boldsymbol{s}_{1,n}^{+} )  \boldsymbol{1}_{[0,t]}(\boldsymbol{s}_{1,n}^{+}) \frac{\sqrt{ \boldsymbol{\tilde{s}}_{1,n}^+ }}{\sqrt{t-\boldsymbol{s}_{1,n}^{+}}}  \prod_{i=1}^n e^{-(\varepsilon\boldsymbol{\tilde{s}}_{i,n}^+)^2}  d\boldsymbol{s}_{1,n}d\boldsymbol{\tilde{s}}_{1,n} 
	 \\
	  &\leq \sqrt{\frac{\hbar}{\varepsilon^{n+1}}} C  \int_{[0,t]^{n}}  \boldsymbol{1}_{[0,t]_{\leq}^n}(s_n,\dots,s_2)   \frac{1}{\sqrt{t-s_1}}  d\boldsymbol{s}_{1,n}
	  \leq \sqrt{\frac{\hbar}{\varepsilon^{n+1}}} C.
	\end{aligned}
\end{equation}
Combining \eqref{gaus_reg_est_3.0}, \eqref{gaus_reg_est_3.1} and \eqref{gaus_reg_est_3.2} and the change of variables $\tilde{s}_i\mapsto\hbar^{-1}( \tilde{s}_i - s_i)$ for all $i\in\{1,\dots,n\}$ we obtain that
\begin{equation} \label{gaus_reg_est_3.3}
	\begin{aligned}
	\MoveEqLeft  \Aver{ \mathcal{J}_\varepsilon(a,\varphi_\hbar,n,\sigma^1,\sigma^2,\kappa,\hbar)}
	= \frac{((2\pi)^d\rho)^{k+r-n} \pi^{\frac{n}{2}}}{(2\pi\hbar)^{d} \varepsilon^n}  \int_{\R^{n}}  \int    e^{ i   \hbar^{-1}\langle  x_0 -y_0,q_{0} \rangle} 
	  \prod_{i=1}^n \Big\{  e^{ \frac{1}{4}\varepsilon^{-2} \frac{1}{2}(q_i^2-q_{i-1}^2)}   
	 \Psi_{\alpha_{\sigma^1_{i}}}^\gamma(q_{i} ,q_{i-1})  
	 \\
	 &\times  \overline{ \Psi_{{\tilde{\alpha}}_{\sigma^2_{i}}}^\gamma(q_{i},q_{i-1}) } \Big\}  \boldsymbol{1}_{[0,t]_{\leq}^n}(s_n,\dots,s_1)    \prod_{i=1}^{n+1} \Big\{ \frac{(s_{i-1}-s_i)^{\sigma_i^1-\sigma_{i-1}^1-1}}{(\sigma_i^1-\sigma_{i-1}^1-1)!}  \frac{(s_{i-1}-s_i )^{\sigma_i^2-\sigma_{i-1}^2-1}}{(\sigma_i^2-\sigma_{i-1}^2-1)!}   
	 \\
	&\times  \prod_{m=\sigma^2_{i-1}+1}^{\sigma^2_{i}-1} \overline{ \Psi_{{\tilde{\alpha}}_m}^\gamma(q_{i-1},q_{i-1}) }
	\prod_{m=\sigma^1_{i-1}+1}^{\sigma^1_{i}-1} \Psi_{\alpha_m}^\gamma(q_{i-1} ,q_{i-1} ) \Big\}
	 a( y_0 - tq_0 -\sum_{i=1}^n s_{i}(q_{i} - q_{i-1}) , q_n)  
	 \\
	 &\times \varphi_\hbar(y_0)  \overline{\varphi_\hbar(x_0)}    \,dx_0   dy_0  d\boldsymbol{q} d\boldsymbol{s} + \mathcal{O}_\varepsilon(\sqrt{\hbar}),
	\end{aligned}
\end{equation}
where we have also evaluated the integrals in $\tilde{s}$  (these are Fourier transforms of the  functions $e^{-(\varepsilon\tilde{s}_{i})^2} $). We observe that the above expression can be written in the form
\begin{equation}
	\begin{aligned}
	\Aver{ \mathcal{J}_\varepsilon(a,\varphi_\hbar,n,\sigma^1,\sigma^2,\kappa,\hbar)} = \langle \mathrm{Op}_{0,\hbar}(a\circ\Phi_{\gamma,\varepsilon}^t) \varphi_\hbar, \varphi_\hbar \rangle  + \mathcal{O}_\varepsilon(\sqrt{\hbar}),
	\end{aligned}
\end{equation}
where the symbol $a\circ\Phi_{\gamma,\varepsilon}^t(x,q_0)$ can be seen directly in \eqref{gaus_reg_est_3.3}. Moreover it follows from Lemma~\ref{est_der_Psi_gamma} that this symbol is Schwartz class.
\end{proof}
\begin{lemma}\label{LE:con_term_twosided}
Assume we are in the same setting as in Definition~\ref{functions_for_exp_def} and let $\{\varphi_\hbar\}_{\hbar\in I}$ be a uniform semiclassical family in $\mathcal{H}^{d+5}_\hbar(\R^d)$ with Wigner measure $\mu_0$. Moreover, let 
\begin{equation*}
	k_0 = \frac{|\log(\hbar)|}{10 \log(|\log(\hbar)|)}.
\end{equation*} 
Then for any $a \in \mathcal{S}(\R^{2d})$ we have that
\begin{equation*}
	\begin{aligned}
	\MoveEqLeft \lim_{\hbar\rightarrow0} \sum_{k,r=1}^{k_0} \sum_{\alpha\in\N^k} \sum_{{\tilde{\alpha}}\in\N^r} (i\lambda)^{|\alpha|}(-i\lambda)^{|{\tilde{\alpha}}|} \Aver{  \langle \OpW(a)\mathcal{I}_\infty^\gamma(k,\alpha;\hbar)\varphi_\hbar,\mathcal{I}_\infty^\gamma(r,{\tilde{\alpha}};\hbar)\varphi_\hbar \rangle 
}
	\\
	&=  \sum_{k,r=1}^{\infty} \sum_{\alpha\in\N^k} \sum_{{\tilde{\alpha}}\in\N^r} \sum_{n=0}^{\min(k,r)} \sum_{\sigma^1 \in\mathcal{A}(k,n)}  \sum_{\sigma^2\in\mathcal{A}(r,n)} (i\lambda)^{|\alpha|}(-i\lambda)^{|{\tilde{\alpha}}|} \int a\circ\Phi^t_{\gamma,0}(y,q_0) \,d\mu_0(y,q_0),
		\end{aligned}
\end{equation*}
where the symbol $a\circ\Phi^t_{\gamma,0}(y,q_0)$ is given by
\begin{equation}\label{composit_sym}
	\begin{aligned}
	 \MoveEqLeft a\circ\Phi^t_{\gamma,0}(y,q_0)
	 =  ((2\pi)^d\rho)^{(k+r-n)} (2\pi)^n  \int_{[0,t]_{\leq}^n}  \int   
	a(y- tq_0 -\sum_{i=1}^{n}s_{i}  (q_{i}-q_{i-1}) ,q_{n}) \prod_{i=1}^{n} \delta( \tfrac{1}{2} (q_{i}^2-q_{i-1}^2))  
	\\
	&\times  \prod_{i=1}^{n}    \Psi_{\alpha_{\sigma_i^1}}^\gamma(q_{i} ,q_{i-1},\infty;V )  \overline{ \Psi_{{\tilde{\alpha}}_{{\sigma_{i}^2}}}^\gamma(q_{i},q_{i-1} ,\infty;V)  } 
	   \prod_{i=1}^{n+1}\prod_{m=\sigma^2_{i-1}+1}^{\sigma^2_{i}-1} \overline{ \Psi_{{\tilde{\alpha}}_m}^\gamma(q_{i-1},q_{i-1},\infty;V) }
	\\
	&\times  
	       \prod_{i=1}^{n+1}\prod_{m=\sigma^1_{i-1}+1}^{\sigma^1_{i}-1} \Psi_{\alpha_m}^\gamma(q_{i-1} ,q_{i-1},\infty;V ) 
	 \prod_{i=1}^{n+1} \frac{(s_{i-1}-s_i)^{\sigma_i^1-\sigma_{i-1}^1+\sigma_i^2-\sigma_{i-1}^2-2}}{(\sigma_i^1-\sigma_{i-1}^1-1)!(\sigma_i^2-\sigma_{i-1}^2-1)!}\,  d\boldsymbol{q}_{1,n}d\boldsymbol{s}_{n,1}.
	\end{aligned}
\end{equation}
\end{lemma}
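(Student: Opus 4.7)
The plan is to combine the decomposition from Setting~\ref{setting_conv_twosided} with the three-step regularisation carried out in Lemmas~\ref{lemma_gaus_reg_2}--\ref{lemma_gaus_reg_3}, and then to pass to the limit $\varepsilon\to0$ after $\hbar\to0$ by a standard $3\delta$-argument. First, for each fixed $(k,r,\alpha,{\tilde{\alpha}})$ in the range of summation, I apply the decomposition in Setting~\ref{setting_conv_twosided} to write
\begin{equation*}
\Aver{\langle \OpW(a)\mathcal{I}_\infty^\gamma(k,\alpha;\hbar)\varphi_\hbar,\mathcal{I}_\infty^\gamma(r,{\tilde{\alpha}};\hbar)\varphi_\hbar\rangle}
=\sum_{n,\sigma^1,\sigma^2,\kappa}\Aver{\mathcal{J}(a,\varphi_\hbar,n,\sigma^1,\sigma^2,\kappa,\hbar)}.
\end{equation*}
The contributions with $\kappa\neq\mathrm{id}$ are disposed of directly by Lemma~\ref{lemma_gaus_reg_2}: the bound there carries a factor $\hbar|\log(\hbar/t)|^{n+3}$, and summing over $n\le\min(k,r)$, over the $\binom{k}{n}\binom{r}{n}n!$ choices of $(\sigma^1,\sigma^2,\kappa)$, and over $k,r\le k_0$ with the factor $1/((k-1)!(r-n)!)$ and $\lambda^{|\alpha|+|{\tilde{\alpha}}|}$ gives an overall bound of order $\hbar\,k_0^{\,c}|\log\hbar|^{k_0+c}$, which vanishes with the chosen $k_0=|\log\hbar|/(10\log|\log\hbar|)$.

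For the direct terms $\kappa=\mathrm{id}$, I introduce Gaussian regularisation in the time-matching deltas. Given $\delta>0$, Lemma~\ref{lemma_gaus_reg} supplies $\varepsilon_0>0$ (independent of $\hbar$) such that
\begin{equation*}
\sup_{\hbar\in I}\bigl|\Aver{\mathcal{J}(\cdots,\mathrm{id},\hbar)}-\Aver{\mathcal{J}_\varepsilon(\cdots,\mathrm{id},\hbar)}\bigr|
\le C\,\frac{n\,t^{k+r-n}}{k!(r-n)!}\bigl(C\|\hat V\|_{1,\infty,2d+2}\bigr)^{|\alpha|+|{\tilde{\alpha}}|}\delta
\end{equation*}
for all $\varepsilon\in(0,\varepsilon_0]$. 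The combinatorial prefactor is summable in $k,r,n,\sigma^1,\sigma^2,\alpha,{\tilde{\alpha}}$ by the assumption \eqref{assump_1} on $\lambda$, so the total error is $O(\delta)$ uniformly in $\hbar$ and in the $k_0$-truncation. For any fixed such $\varepsilon$, Lemma~\ref{lemma_gaus_reg_3} identifies the semiclassical limit:
\begin{equation*}
\lim_{\hbar\to0}\Aver{\mathcal{J}_\varepsilon(\cdots,\mathrm{id},\hbar)}
=\int_{\R^{2d}}a\circ\Phi^t_{\gamma,\varepsilon}(y,q_0)\,d\mu_0(y,q_0),
\end{equation*}
using that $a\circ\Phi^t_{\gamma,\varepsilon}$ is Schwartz by Lemma~\ref{est_der_Psi_gamma} and invoking \eqref{obs_con_quantisation} to pass between Weyl and $0$-quantisations.

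It remains to pass $\varepsilon\to0$ in $a\circ\Phi^t_{\gamma,\varepsilon}$ and identify the result as $a\circ\Phi^t_{\gamma,0}$ given in \eqref{composit_sym}. The only $\varepsilon$-dependent factor in \eqref{composit_sym_epsilon} is the product $\varepsilon^{-n}\pi^{n/2}\prod_{i=1}^{n} e^{-\frac{1}{4}\varepsilon^{-2}(\frac{1}{2}(q_i^2-q_{i-1}^2))^2}$ (the Fourier transform of the inserted Gaussians), which is a nascent delta sequence converging as tempered distributions in the variables $(q_1,\dots,q_n)$ to $(2\pi)^n\prod_{i=1}^n\delta(\tfrac12(q_i^2-q_{i-1}^2))$. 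Against the Schwartz test function formed by the rest of the integrand (product of $a$, $\Psi^\gamma_m$ factors and polynomials in $s$) this convergence is pointwise in $(x,q_0)$, uniform on compacts, and dominated by a Schwartz function of $(x,q_0)$ via Lemma~\ref{est_der_Psi_gamma}, so dominated convergence integrates it against $\mu_0$. Finally, to justify replacing $\sum_{k,r=1}^{k_0}$ by $\sum_{k,r=1}^\infty$ in the limit, I use the uniform bounds from Lemma~\ref{LE:unf_norm_bound_without_rec} (or its $\gamma$-regularised analogue following from Lemmas~\ref{regularise_expansion}--\ref{regularise_expansion_in_time}) together with Cauchy--Schwarz to show that the tail $\sum_{k\vee r>k_0}$ is summable with a tail that tends to $0$; combined with the fact that each term in the RHS is finite and the series converges absolutely (again by \eqref{assump_1} and Lemma~\ref{est_der_Psi_gamma}), this lets me exchange the truncation with the limit.

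The main technical obstacle is controlling the tail in $(k,r)$: one needs uniform-in-$\hbar$ operator-type bounds on $\mathcal{I}_\infty^\gamma(k,\alpha;\hbar)$ strong enough to ensure absolute summability of the full double sum against the Schwartz symbol $a$. This is delicate because the number of pairing configurations $(\sigma^1,\sigma^2,\kappa)$ grows combinatorially, and we have already used up the factorial gain $1/k!$ for the direct terms to beat the factor $\binom{k}{n}\binom{r}{n}n!$. The resolution is that the $\lambda$-smallness in \eqref{assump_1} dominates the combinatorics of $\alpha,{\tilde{\alpha}}$, while the remaining structural factors $t^{k+r-n}/(k-1)!(r-n)!$ (or their analogues after the $\varepsilon$-regularisation) produce a convergent series uniformly in $\hbar$.
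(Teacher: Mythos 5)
Your proposal is correct and follows essentially the same route as the paper's own proof: the decomposition of Setting~\ref{setting_conv_twosided}, disposal of the $\kappa\neq\mathrm{id}$ terms via Lemma~\ref{lemma_gaus_reg_2} and the choice of $k_0$, the $3\delta$-argument for the direct terms combining Lemma~\ref{lemma_gaus_reg}, Lemma~\ref{lemma_gaus_reg_3} and the nascent-delta limit $\varepsilon\to0$ in the symbol, together with a uniform-in-$\hbar$ summable majorant to exchange the $k_0$-truncated sums with the limit. The only minor deviations — producing the summable majorant via Cauchy--Schwarz and $L^2$-bounds on $\mathcal{I}_\infty^\gamma$ instead of the paper's direct estimate on $\mathbb{E}\big[\mathcal{J}(a,\varphi_\hbar,n,\sigma^1,\sigma^2,\mathrm{id},\hbar)\big]$, and your writing of the Gaussian factor as $\varepsilon^{-n}\pi^{n/2}\prod_{i}e^{-\frac14\varepsilon^{-2}(\frac12(q_i^2-q_{i-1}^2))^2}$ (which is indeed the correct Fourier transform used in the paper's argument) — are immaterial.
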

\begin{proof}
As in Setting~\ref{setting_conv_twosided} we have for all $k$, $r$, $\alpha\in\N^k$ and ${\tilde{\alpha}}\in\N^r$ that
\begin{equation*}
	\begin{aligned}
	 \langle \OpW(a)\mathcal{I}_\infty^\gamma(k,\alpha;\hbar)\varphi_\hbar,\mathcal{I}_\infty^\gamma(r,{\tilde{\alpha}};\hbar)\varphi_\hbar \rangle 
	= \sum_{n=0}^{\min(k,r)} \sum_{\sigma^1 \in \mathcal{A}(k,n)} \sum_{\sigma^2\in\mathcal{A}(r,n)} \sum_{\kappa\in\mathcal{S}_n} \mathcal{J}(a,\varphi_\hbar,n,\sigma^1,\sigma^2,\kappa,\hbar), 
		\end{aligned}
\end{equation*}
where $\mathcal{J}(a,\varphi_\hbar,n,\sigma^1,\sigma^2,\tau,\hbar)$ is defined in Setting~\ref{setting_conv_twosided}. This gives us that
\begin{equation}\label{EQ:con_term_twosided_1}
	\begin{aligned}
	 \MoveEqLeft \sum_{k,r=1}^{k_0} \sum_{\alpha\in\N^k} \sum_{{\tilde{\alpha}}\in\N^r} (i\lambda)^{|\alpha|}(-i\lambda)^{|{\tilde{\alpha}}|} \Aver{  \langle \OpW(a)\mathcal{I}_\infty^\gamma(k,\alpha;\hbar)\varphi_\hbar,\mathcal{I}_\infty^\gamma(r,{\tilde{\alpha}};\hbar)\varphi_\hbar \rangle }
	 \\
	 ={}& \sum_{k,r=1}^{k_0} \sum_{\alpha\in\N^k} \sum_{{\tilde{\alpha}}\in\N^r} \sum_{n=0}^{\min(k,r)} \sum_{\sigma^1 \in \mathcal{A}(k,n)} \sum_{\sigma^2\in\mathcal{A}(r,n)} (i\lambda)^{|\alpha|}(-i\lambda)^{|{\tilde{\alpha}}|} \Aver{ \mathcal{J}(a,\varphi_\hbar,n,\sigma^1,\sigma^2,\mathrm{id},\hbar)}
	 \\
	 &+ \sum_{k,r=1}^{k_0} \sum_{\alpha\in\N^k} \sum_{{\tilde{\alpha}}\in\N^r}  \sum_{n=0}^{\min(k,r)} \sum_{\sigma^1 \in \mathcal{A}(k,n)} \sum_{\sigma^2\in\mathcal{A}(r,n)} \sum_{\kappa\in\mathcal{S}_n\setminus\{\mathrm{id}\}} (i\lambda)^{|\alpha|}(-i\lambda)^{|{\tilde{\alpha}}|} \Aver{ \mathcal{J}(a,\varphi_\hbar,n,\sigma^1,\sigma^2,\kappa,\hbar)}.
	\end{aligned}
\end{equation}
Using Lemma~\ref{lemma_gaus_reg_2}
we have that 
\begin{equation*}
	\begin{aligned}
	 \MoveEqLeft \Big|  \sum_{k,r=1}^{k_0} \sum_{\alpha\in\N^k} \sum_{{\tilde{\alpha}}\in\N^r}  \sum_{n=0}^{\min(k,r)} \sum_{\sigma^1 \in \mathcal{A}(k,n)} \sum_{\sigma^2\in\mathcal{A}(r,n)} \sum_{\kappa\in\mathcal{S}_n\setminus\{\mathrm{id}\}} (i\lambda)^{|\alpha|}(-i\lambda)^{|{\tilde{\alpha}}|} \Aver{ \mathcal{J}(a,\varphi_\hbar,n,\sigma^1,\sigma^2,\kappa,\hbar)}\Big|
	 \\
	 \leq{}&  \sum_{k,r=1}^{k_0} \sum_{\alpha\in\N^k} \sum_{{\tilde{\alpha}}\in\N^r}  \sum_{n=0}^{\min(k,r)} 
	C_a 2^{k+r} \frac{\hbar\rho (\rho t)^{k+r-n-1}n!}{(k-1)!(r-n)!} (\lambda C  \norm{\hat{V}}_{1,\infty,5d+5})^{|\alpha|+|{\tilde{\alpha}}|} |\log(\tfrac{\hbar}{t})|^{n+3} \norm{\varphi_\hbar}^{2}_{\mathcal{H}_\hbar^{2d+2}(\R^d)},
	\end{aligned}
\end{equation*}
where $C_a$ depends on the symbol $a$ and $C$ only depend on the dimension. We have moreover used that $\mathcal{S}_n$ contains $n!$ elements and that the number of elements in $\mathcal{A}(k,n)$ is bounded by $2^k$ and the number of elements in $\mathcal{A}(r,n)$ is bounded by $2^r$. From arguing as  in the previous proofs we obtain the bound
\begin{equation}\label{EQ:con_term_twosided_2}
	\begin{aligned}
	 \MoveEqLeft \Big|  \sum_{k,r=1}^{k_0} \sum_{\alpha\in\N^k} \sum_{{\tilde{\alpha}}\in\N^r}  \sum_{n=0}^{\min(k,r)} \sum_{\sigma^1 \in \mathcal{A}(k,n)} \sum_{\sigma^2\in\mathcal{A}(r,n)} \sum_{\kappa\in\mathcal{S}_n\setminus\{\mathrm{id}\}} (i\lambda)^{|\alpha|}(-i\lambda)^{|{\tilde{\alpha}}|} \Aver{ \mathcal{J}(a,\varphi_\hbar,n,\sigma^1,\sigma^2,\kappa,\hbar)}\Big|
	 \\
	 \leq{}& C  C^{k_0} k_0^3 \hbar| \log(\tfrac{\hbar}{t})|^{k_0+3}  \sup_{\hbar\in I} \norm{\varphi_\hbar}^{2}_{\mathcal{H}_\hbar^{2d+2}(\R^d)}.
	\end{aligned}
\end{equation}
Our assumptions on $k_0$ and the estimate in \eqref{EQ:con_term_twosided_2} implies that
\begin{equation}\label{EQ:con_term_twosided_3}
	 \lim_{\hbar\rightarrow 0} \Big|  \sum_{k,r=1}^{k_0} \sum_{\alpha\in\N^k} \sum_{{\tilde{\alpha}}\in\N^r}  \sum_{n=0}^{\min(k,r)} \sum_{\sigma^1 \in \mathcal{A}(k,n)} \sum_{\sigma^2\in\mathcal{A}(r,n)} \sum_{\kappa\in\mathcal{S}_n\setminus\{\mathrm{id}\}} (i\lambda)^{|\alpha|}(-i\lambda)^{|{\tilde{\alpha}}|} \Aver{ \mathcal{J}(a,\varphi_\hbar,n,\sigma^1,\sigma^2,\kappa,\hbar)}\Big|
	=0.
\end{equation}
From arguing as in the proof of Lemma~\ref{lemma_gaus_reg} we can obtain the bound
\begin{equation}
	|\Aver{ \mathcal{J}(a,\varphi_\hbar,n,\sigma^1,\sigma^2,\mathrm{id},\hbar)}| \leq  C \frac{ t^{r+k-n}}{k!(r-n)!}  (C\norm{\hat{V}}_{1,\infty,2d+2})^{|\alpha|+|{\tilde{\alpha}}|}  \norm{\varphi_\hbar}^2_{L^2(\R^d)}.
\end{equation}
Applying this estimate gives us that
\begin{equation}\label{EQ:con_term_twosided_4}
	\begin{aligned}
	 \MoveEqLeft \sup_{\hbar \in I} \sum_{k,r=1}^{k_0} \sum_{\alpha\in\N^k} \sum_{{\tilde{\alpha}}\in\N^r} \sum_{n=0}^{\min(k,r)} \sum_{\sigma^1 \in \mathcal{A}(k,n)} \sum_{\sigma^2\in\mathcal{A}(r,n)} | (i\lambda)^{|\alpha|}(-i\lambda)^{|{\tilde{\alpha}}|} \Aver{ \mathcal{J}(a,\varphi_\hbar,n,\sigma^1,\sigma^2,\mathrm{id},\hbar)}|
	 \\
	 &\leq \sup_{\hbar \in I}  \norm{\varphi_\hbar}^2_{L^2(\R^d)}   \sum_{n=0}^{k_0} \sum_{k,r=n}^{k_0} C^{k+r} \frac{ t^{r+k+n}}{k!(r-n)!}    \leq \sup_{\hbar \in I}  \norm{\varphi_\hbar}^2_{L^2(\R^d)}   \sum_{n=0}^{\infty} \sum_{k,r=0}^{\infty} \frac{ C^{r+k+n}}{k!n!r!}     <\infty.
	\end{aligned}
\end{equation}
From this estimate we have that the sums are absolutely convergent. Hence using dominated convergence we get from \eqref{EQ:con_term_twosided_1}, \eqref{EQ:con_term_twosided_3} and \eqref{EQ:con_term_twosided_4} that
\begin{equation}\label{EQ:con_term_twosided_5}
	\begin{aligned}
	 \MoveEqLeft \lim_{\hbar\rightarrow 0} \sum_{k,r=1}^{k_0} \sum_{\alpha\in\N^k} \sum_{{\tilde{\alpha}}\in\N^r} (i\lambda)^{|\alpha|}(-i\lambda)^{|{\tilde{\alpha}}|} \Aver{  \langle \OpW(a)\mathcal{I}_\infty^\gamma(k,\alpha;\hbar)\varphi_\hbar,\mathcal{I}_\infty^\gamma(r,{\tilde{\alpha}};\hbar)\varphi_\hbar \rangle }
	 \\
	 ={}& \sum_{k,r=1}^{\infty} \sum_{\alpha\in\N^k} \sum_{{\tilde{\alpha}}\in\N^r} \sum_{n=0}^{\min(k,r)} \sum_{\sigma^1 \in \mathcal{A}(k,n)} \sum_{\sigma^2\in\mathcal{A}(r,n)} (i\lambda)^{|\alpha|}(-i\lambda)^{|{\tilde{\alpha}}|} \lim_{\hbar\rightarrow 0}  \Aver{ \mathcal{J}(a,\varphi_\hbar,n,\sigma^1,\sigma^2,\mathrm{id},\hbar)}.
	\end{aligned}
\end{equation}
 Let $\delta>0$ then from Lemma~\ref{lemma_gaus_reg} we have that there exsist $\varepsilon_0$ such that for all $\varepsilon<\varepsilon_0$ we have that
 \begin{equation}\label{EQ:con_term_twosided_6}
	\begin{aligned}
	\sup_{\hbar\in I}|\Aver{ \mathcal{J}(a,\varphi_\hbar,n,\sigma^1,\sigma^2,\mathrm{id},\hbar)-  \mathcal{J}_\varepsilon(a,\varphi_\hbar,n,\sigma^1,\sigma^2,\mathrm{id},\hbar)}| \leq  C\delta. 
	\end{aligned}
\end{equation}
Furthermore, from Lemma~\ref{lemma_gaus_reg_3} we get that 
\begin{equation}\label{EQ:con_term_twosided_7}
	\begin{aligned}
	\Aver{ \mathcal{J}_\varepsilon(a,\varphi_\hbar,n,\sigma^1,\sigma^2,\kappa,\hbar)} = \langle \mathrm{Op}_{0,\hbar}(a\circ\Phi_{\gamma,\varepsilon}^t) \varphi_\hbar, \varphi_\hbar \rangle  + C_\varepsilon\sqrt{\hbar},
	\end{aligned}
\end{equation}
where the constant $C_\varepsilon$ is independent of $\hbar$ but will increase polynomial as $\varepsilon$ goes to zero. Moreover from Lemma~\ref{lemma_gaus_reg_3} we also have that
\begin{equation}\label{EQ:con_term_twosided_8}
	\begin{aligned}
	\lim_{\hbar\rightarrow0}|\langle \mathrm{Op}_{0,\hbar}(a\circ\Phi_{\gamma,\varepsilon}^t) \varphi_\hbar, \varphi_\hbar \rangle - \int a\circ\Phi^t_{\gamma,\varepsilon}(y,q_0) \,d\mu_0(y,q_0)| =0.
	\end{aligned}
\end{equation}  
We now compare the integrals of the symbols $a\circ\Phi^t_{\gamma,\varepsilon}$ and $a\circ\Phi^t_{\gamma,0}$, defined in \eqref{composit_sym_epsilon} and \eqref{composit_sym} respectively, against the measure $\mu_0$. Here we see that 
\begin{equation*}
	\begin{aligned}
	\MoveEqLeft \int\big( a\circ\Phi^t_{\gamma,\varepsilon}(y,q_0) - a\circ\Phi^t_{\gamma,0}(y,q_0) \big) \,d\mu_0(y,q_0)
	=  ((2\pi)^d\rho)^{(k+r-n)} (2\pi)^n   
	 \\
	 &\times  \int \int_{[0,t]_{\leq}^n}  \int   a(y- tq_0 -\sum_{i=1}^{n}s_{i}  (q_{i}-q_{i-1}) ,q_{n})  
	 \prod_{i=1}^{n} \Big\{   \Psi_{\alpha_{\sigma_i^1}}^\gamma(q_{i} ,q_{i-1},\infty;V )  \overline{ \Psi_{{\tilde{\alpha}}_{{\sigma_{i}^2}}}^\gamma(q_{i},q_{i-1} ,\infty;V)  } \Big\} 
	\\
	&\times \Big\{ \frac{1}{(4\pi\varepsilon^2)^{\frac{n}{2}}}  \prod_{i=1}^n   e^{ \frac{1}{4}\varepsilon^{-2} \frac{1}{2}(q_i^2-q_{i-1}^2)}- \prod_{i=1}^{n} \delta( \tfrac{1}{2} (q_{i}^2-q_{i-1}^2))  \Big\} 
	   \prod_{i=1}^{n+1}\Big\{\prod_{m=\sigma^2_{i-1}+1}^{\sigma^2_{i}-1} \overline{ \Psi_{{\tilde{\alpha}}_m}^\gamma(q_{i-1},q_{i-1},\infty;V) } 
	\\
	&\times 
	\prod_{m=\sigma^1_{i-1}+1}^{\sigma^1_{i}-1} \Psi_{\alpha_m}^\gamma(q_{i-1} ,q_{i-1},\infty;V ) \Big\}
	 \prod_{i=1}^{n+1} \frac{(s_{i-1}-s_i)^{\sigma_i^1-\sigma_{i-1}^1+\sigma_i^2-\sigma_{i-1}^2-2}}{(\sigma_i^1-\sigma_{i-1}^1-1)!(\sigma_i^2-\sigma_{i-1}^2-1)!}\,  d\boldsymbol{q}d\boldsymbol{s} d\mu_0(y,q_0).
	\end{aligned}
\end{equation*}
We know that in a distributional sense we have that
\begin{equation*}
	 \frac{1}{(4\pi\varepsilon^2)^{\frac{n}{2}}}  \prod_{i=1}^n   e^{ \frac{1}{4}\varepsilon^{-2} \frac{1}{2}(q_i^2-q_{i-1}^2)}- \prod_{i=1}^{n} \delta( \tfrac{1}{2} (q_{i}^2-q_{i-1}^2)) \rightarrow 0 \quad\text{as $\varepsilon \rightarrow 0$}.
\end{equation*}
In particular this implies that
\begin{equation}\label{EQ:con_term_twosided_9}
	\begin{aligned}
	\lim_{\varepsilon\rightarrow0}\big| \int\big( a\circ\Phi^t_{\gamma,\varepsilon}(y,q_0) - a\circ\Phi^t_{\gamma,0}(y,q_0) \big) \,d\mu_0(y,q_0)\big|=0.
	\end{aligned}
\end{equation}
To apply all these estimates we note that
\begin{equation}\label{EQ:con_term_twosided_10}
	\begin{aligned}
		\MoveEqLeft \big| \Aver{ \mathcal{J}(a,\varphi_\hbar,n,\sigma^1,\sigma^2,\mathrm{id},\hbar)} - \int a\circ\Phi^t_{\gamma,0}(y,q_0) \,d\mu_0(y,q_0)\big|
		\\
		\leq {}&   |\Aver{ \mathcal{J}(a,\varphi_\hbar,n,\sigma^1,\sigma^2,\mathrm{id},\hbar)-  \mathcal{J}_\varepsilon(a,\varphi_\hbar,n,\sigma^1,\sigma^2,\mathrm{id},\hbar)}| 
		\\
		&+ | \Aver{ \mathcal{J}_\varepsilon(a,\varphi_\hbar,n,\sigma^1,\sigma^2,\kappa,\hbar)} - \langle \mathrm{Op}_{0,\hbar}(a\circ\Phi_{\gamma,\varepsilon}^t) \varphi_\hbar, \varphi_\hbar \rangle |
		\\
		&+|\langle \mathrm{Op}_{0,\hbar}(a\circ\Phi_{\gamma,\varepsilon}^t) \varphi_\hbar, \varphi_\hbar \rangle - \int a\circ\Phi^t_{\gamma,\varepsilon}(y,q_0) \,d\mu_0(y,q_0)| 
		\\
		&+ \big| \int \big( a\circ\Phi^t_{\gamma,\varepsilon}(y,q_0) - a\circ\Phi^t_{\gamma,0}(y,q_0) \big) \,d\mu_0(y,q_0)\big|.
	\end{aligned}
\end{equation}
By first letting $\hbar\rightarrow0$ and the $\varepsilon\rightarrow0$ it follows from \eqref{EQ:con_term_twosided_6}, \eqref{EQ:con_term_twosided_7}, \eqref{EQ:con_term_twosided_8} \eqref{EQ:con_term_twosided_9} and \eqref{EQ:con_term_twosided_10} that 
\begin{equation}\label{EQ:con_term_twosided_11}
	\begin{aligned}
		\lim_{\hbar\rightarrow0} \big| \Aver{ \mathcal{J}(a,\varphi_\hbar,n,\sigma^1,\sigma^2,\mathrm{id},\hbar)} - \int a\circ\Phi^t_{\gamma,0}(y,q_0) \,d\mu_0(y,q_0)\big| =0.
	\end{aligned}
\end{equation}
Lastly by \eqref{EQ:con_term_twosided_5} and \eqref{EQ:con_term_twosided_11} the desired result follows and this concludes the proof.
\end{proof}
\begin{lemma}\label{LE:main_term_reg_main_thm}
Assume we are in setting of Definition~\ref{functions_for_exp_def}. Let  $\{\varphi_\hbar\}_{\hbar\in I}$ be a uniform semiclassical family in $\mathcal{H}^{5d+5}_\hbar(\R^d)$ with Wigner measure $\mu_0$. 
Let
\begin{equation*}
	k_0 = \frac{|\log(\hbar)|}{10 \log(|\log(\hbar)|)}.
\end{equation*} 
 Then, for $\gamma>0$, 
\begin{equation*}
	 \lim_{\hbar\rightarrow 0} \sum_{k,r=0}^{k_0} \mathbb{E}\Big[ \sum_{\boldsymbol{x}\in\mathcal{X}_{\neq}^{k}}\sum_{\boldsymbol{\tilde{x}}\in\mathcal{X}_{\neq}^{r}}  \langle \OpW(a)  \mathcal{I}_{0,0}^\gamma(k,\boldsymbol{x},t;\hbar)\varphi_\hbar ,  \mathcal{I}_{0,0}^\gamma(r,\boldsymbol{\tilde{x}},t;\hbar)\varphi_\hbar \rangle\Big]
 = \int f^\gamma(t,y,p) \, d\mu_0(y,q_0),
\end{equation*}
where we have used the convention $\mathcal{I}_{0,0}(0,\boldsymbol{\tilde{x}},t;\hbar)=U_{\hbar,0}(-t) $ and the notation that
\begin{equation*}
	\begin{aligned}
	f^\gamma(t,y,q_0) ={}&   \sum_{n=0}^\infty  \frac{((2\pi)^{d+1}\rho)^{n}}{(2\pi)^{d}}    \int_{[0,t]_{\leq}^n}  \int   
	    a(y- tq_0 -\sum_{i=1}^{n}s_{i}  (q_{i}-q_{i-1}) ,q_{n})  |\hat{\varphi}_1( q_0)|^2
	    \\
	    &\times   \prod_{i=1}^{n} \delta( \tfrac{1}{2} (q_{i}^2-q_{i-1}^2))| \hat{T}^\gamma(q_i,q_{i-1})|^2
	  \prod_{i=1}^{n+1}  e^{-(2\pi)^{d}\rho(s_{i-1}-s_i) 2\im( \hat{T}^\gamma(q,q))}  \,  d\boldsymbol{q}d\boldsymbol{s}. 
	\end{aligned}
\end{equation*}
\end{lemma}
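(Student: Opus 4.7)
The strategy is three-fold: regularize, identify term-wise limits, and resum. As a preliminary step, I would apply Lemma~\ref{regularise_expansion_in_time} in each slot of the inner product (using Cauchy--Schwarz together with the uniform norm bound of Lemma~\ref{LE:unf_norm_bound_without_rec}) to replace every $\mathcal{I}_{0,0}^\gamma(k,\boldsymbol{x},t;\hbar)$ by $\mathcal{I}_\infty^\gamma(k,\boldsymbol{x},t;\hbar)$, with an error that vanishes as $\hbar\to 0$. I would then split the double sum over $(k,r)\in\{0,\dots,k_0\}^2$ into four pieces: $(0,0)$, $(k\geq 1,0)$, $(0,r\geq 1)$, and $(k,r\geq 1)$.

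Each of the four pieces is handled by an earlier lemma. The $(0,0)$ piece is $\langle\OpW(a)U_{\hbar,0}(-t)\varphi_\hbar,U_{\hbar,0}(-t)\varphi_\hbar\rangle\to\int a(y-tq_0,q_0)\,d\mu_0(y,q_0)$, producing the $n=0$ no-scattering contribution. The two one-sided pieces are covered by Lemma~\ref{L.con_term_onesided_1} and its adjoint: after summing over $\alpha\in\N^k$ via the Born series of Remark~\ref{born_series_convergence_remark} (which converges absolutely by Assumption~\eqref{assump_1}), each gives factors $((2\pi)^d\rho t)^k(i\hat T^\gamma(q_0,q_0))^k/k!$ and $((2\pi)^d\rho t)^r(-i\overline{\hat T^\gamma(q_0,q_0)})^r/r!$ respectively, multiplying $\int a(y-tq_0,q_0)\,d\mu_0$. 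The two-sided piece $(k,r\geq 1)$ is the content of Lemma~\ref{LE:con_term_twosided}, giving in the limit $\int a\circ\Phi^t_{\gamma,0}\,d\mu_0$ summed over $n\geq 0$, $\sigma^1\in\mathcal{A}(k,n)$, $\sigma^2\in\mathcal{A}(r,n)$, $\alpha\in\N^k$, $\tilde\alpha\in\N^r$. In every case, interchange of the $\hbar\to 0$ limit and the $(k,r)$-sums is justified by absolute summability uniformly in $\hbar$, coming from the geometric bounds in the proofs of the underlying lemmas together with Assumption~\eqref{assump_1}.

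The central combinatorial step is to reorganize the two-sided sum. Set $a_i=\sigma_i^1-\sigma_{i-1}^1-1$ and $b_i=\sigma_i^2-\sigma_{i-1}^2-1$ for $i=1,\dots,n+1$, and reindex by $n\geq 0$, momenta $q_1,\dots,q_n$, times $0\leq s_n\leq\dots\leq s_1\leq t$, genuine-collision multiplicities $\alpha_{\sigma_i^1},\tilde\alpha_{\sigma_i^2}\geq 1$, and nonnegative gap sizes $a_i,b_i$ (with internal multiplicities all $\geq 1$). The Born series identities
\begin{equation*}
\sum_{\alpha\geq 1}(i\lambda)^\alpha\Psi_\alpha^\gamma(q,q',\infty;V)=i\hat T^\gamma(q,q'),\qquad\sum_{\tilde\alpha\geq 1}(-i\lambda)^{\tilde\alpha}\overline{\Psi_{\tilde\alpha}^\gamma(q,q',\infty;V)}=-i\overline{\hat T^\gamma(q,q')}
\end{equation*}
collapse the genuine-collision products at each $q_i$ into $|\hat T^\gamma(q_i,q_{i-1})|^2$, while the gap sums
\begin{equation*}
\sum_{a_i,b_i\geq 0}\frac{\bigl((2\pi)^d\rho(s_{i-1}-s_i)\bigr)^{a_i+b_i}}{a_i!\,b_i!}\bigl(i\hat T^\gamma(q_{i-1},q_{i-1})\bigr)^{a_i}\bigl(-i\overline{\hat T^\gamma(q_{i-1},q_{i-1})}\bigr)^{b_i}=e^{-2(2\pi)^d\rho(s_{i-1}-s_i)\im\hat T^\gamma(q_{i-1},q_{i-1})}
\end{equation*}
produce the free-flight decay factors. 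The prefactor matches via $((2\pi)^d\rho)^{k+r-n}(2\pi)^n=((2\pi)^{d+1}\rho)^n\prod_i((2\pi)^d\rho)^{a_i+b_i}$.

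The main obstacle is the final bookkeeping at the two endpoint gaps ($i=1$ and $i=n+1$), where the two-sided contribution alone gives only $(e^A-1)(e^B-1)$-type factors (since we require $k,r\geq 1$ there). Combining with the one-sided and $(0,0)$ contributions via the elementary identity $1+(e^A-1)+(e^B-1)+(e^A-1)(e^B-1)=e^{A+B}$ with $A=i(2\pi)^d\rho(s_{i-1}-s_i)\hat T^\gamma(q_{i-1},q_{i-1})$ and $B=-i(2\pi)^d\rho(s_{i-1}-s_i)\overline{\hat T^\gamma(q_{i-1},q_{i-1})}$ reconstructs the full exponential factor there as well. One must carefully verify the matching of indexing conventions ($\sigma_0=0$, $\sigma_{n+1}=k+1$ on the left and $r+1$ on the right) and confirm that the absolute summability is uniform enough to justify exchanging the $\hbar\to 0$ limit with the resummation; once this is done, the closed form $f^\gamma(t,y,q_0)$ emerges directly.
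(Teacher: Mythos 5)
Your overall route is essentially the paper's: regularise via Lemma~\ref{regularise_expansion_in_time}, split the double sum into the $(0,0)$, one-sided and two-sided pieces, take term-wise limits with Lemma~\ref{L.con_term_onesided_1} and Lemma~\ref{LE:con_term_twosided} (with the interchange of limit and sums justified by the uniform absolute summability), and then resum using the Born series of Remark~\ref{born_series_convergence_remark} together with the gap exponentiation and the optical theorem. In substance this reproduces the paper's proof, including the correct identity $\sum_{a,b\ge0}\frac{A^aB^b}{a!\,b!}=e^{A+B}$ and the prefactor bookkeeping $((2\pi)^d\rho)^{k+r-n}(2\pi)^n=((2\pi)^{d+1}\rho)^n\prod_i((2\pi)^d\rho)^{a_i+b_i}$.

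The one place where your description goes wrong is the claimed ``main obstacle'' at the endpoint gaps $i=1$ and $i=n+1$ for general $n$. For $n\ge1$ the restriction $k,r\ge1$ is vacuous (it is already forced by $k,r\ge n$ through the $n$ genuine collisions) and does \emph{not} restrict any gap: summing over $k\ge n$ and $\sigma^1\in\mathcal{A}(k,n)$ is exactly summing each of the $n+1$ gap sizes $\sigma^1_i-\sigma^1_{i-1}-1$ independently over $\{0,1,2,\dots\}$, so the two-sided sum alone already produces the \emph{full} exponentials $e^{i(2\pi)^d\rho(s_{i-1}-s_i)\hat T^\gamma(q_{i-1},q_{i-1})}$ and its conjugate at every gap, endpoints included; this is precisely \eqref{p_con_main_reg_5} in the paper. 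Moreover the one-sided and $(0,0)$ pieces cannot be grafted onto $n\ge1$ gaps at all, since they carry no pairing structure (no factors $\delta(\tfrac12(q_i^2-q_{i-1}^2))|\hat T^\gamma(q_i,q_{i-1})|^2$): by Lemma~\ref{L.con_term_onesided_1} they contribute only forward-scattering ladders at fixed momentum, i.e.\ only to the $n=0$ term. The combination identity $1+(e^A-1)+(e^B-1)+(e^A-1)(e^B-1)=e^{A+B}$ is therefore needed only once, for the $n=0$ contribution with $A=i(2\pi)^d\rho t\,\hat T^\gamma(q_0,q_0)$ and $B=-i(2\pi)^d\rho t\,\overline{\hat T^\gamma(q_0,q_0)}$, which is how the paper assembles the first sum in \eqref{p_con_main_reg_4}. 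If you literally carried out your plan -- treating the $n\ge1$ endpoint sums as $(e^A-1)(e^B-1)$ and ``completing'' them with one-sided terms -- you would be both undercounting the two-sided series and double-assigning the one-sided contributions; the cure is simply to note that no completion is needed for $n\ge1$ and that the one-sided/$(0,0)$ pieces belong entirely to $n=0$.
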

\begin{proof}
From Lemma~\ref{regularise_expansion_in_time} we have that
\begin{equation*}
\lim_{\hbar\rightarrow 0}\sum_{k=1}^{k_0}     \mathbb{E} \bigg[\Big\lVert \sum_{\boldsymbol{x}\in\mathcal{X}^{k}_{\neq}}  \mathcal{I}_{0,0}^\gamma(k,\boldsymbol{x},t;\hbar) \varphi-\mathcal{I}_{\infty}^\gamma(k,\boldsymbol{x},t;\hbar)\varphi \Big\rVert_{L^2(\R^d)}^2	\bigg]^\frac{1}{2}  = 0.
\end{equation*}
This gives us that
\begin{equation}\label{EQ:main_term_reg_main_thm_1}
	\begin{aligned}
	 \lim_{\hbar\rightarrow 0} \sum_{k,r=0}^{k_0} \mathbb{E}\Big[ \sum_{\boldsymbol{x}\in\mathcal{X}_{\neq}^{k}}\sum_{\boldsymbol{\tilde{x}}\in\mathcal{X}_{\neq}^{r}}  \langle& \OpW(a)  \mathcal{I}_{0,0}^\gamma(k,\boldsymbol{x},t;\hbar)\varphi_\hbar ,  \mathcal{I}_{0,0}^\gamma(r,\boldsymbol{\tilde{x}},t;\hbar)\varphi_\hbar \rangle 
	 \\
	 &-  \langle \OpW(a)  \mathcal{I}_{\infty}^\gamma(k,\boldsymbol{x},t;\hbar)\varphi_\hbar ,  \mathcal{I}_{\infty}^\gamma(r,\boldsymbol{\tilde{x}},t;\hbar)\varphi_\hbar \rangle\Big]
 =0.
 	\end{aligned}
\end{equation}
Using the definition of the operators and the notation introduced in Notation~\ref{notation_op_main_lim} we have that
\begin{equation*}
	\begin{aligned}
	\MoveEqLeft  \sum_{k,r=0}^{k_0} \mathbb{E}\Big[ \sum_{\boldsymbol{x}\in\mathcal{X}_{\neq}^{k}}\sum_{\boldsymbol{\tilde{x}}\in\mathcal{X}_{\neq}^{r}}  \langle \OpW(a)  \mathcal{I}_{\infty}^\gamma(k,\boldsymbol{x},t;\hbar)\varphi_\hbar ,  \mathcal{I}_{\infty}^\gamma(r,\boldsymbol{\tilde{x}},t;\hbar)\varphi_\hbar \rangle\Big]
	 \\
	={}&\sum_{k,r=1}^{k_0} \sum_{\alpha\in\N^k}  \sum_{{\tilde{\alpha}}\in\N^r}     (-i\lambda)^{|\alpha|}    (i\lambda)^{|{\tilde{\alpha}}|} \Aver{ \langle \OpW(a) \mathcal{I}_{\infty}^\gamma(k,\alpha;\hbar)\varphi_\hbar , \mathcal{I}_\infty^\gamma(r,{\tilde{\alpha}};\hbar)\varphi_\hbar  \rangle} .
	\end{aligned}
\end{equation*}
By arguing as in the proof of Lemma~\ref{LE:con_term_twosided} we see that the sum is absolute convergent for all $\gamma\geq0$ uniformly in $\hbar$. Hence we obtain that
\begin{equation}\label{EQ:main_term_reg_main_thm_2}
	\begin{aligned}
	\MoveEqLeft  \lim_{\hbar\rightarrow0} \sum_{k,r=0}^{k_0} \mathbb{E}\Big[ \sum_{\boldsymbol{x}\in\mathcal{X}_{\neq}^{k}}\sum_{\boldsymbol{\tilde{x}}\in\mathcal{X}_{\neq}^{r}}  \langle \OpW(a)  \mathcal{I}_{\infty}^\gamma(k,\boldsymbol{x},t;\hbar)\varphi_\hbar ,  \mathcal{I}_{\infty}^\gamma(r,\boldsymbol{\tilde{x}},t;\hbar)\varphi_\hbar \rangle\Big]
	 \\
	={}&\sum_{k,r=0}^{\infty} \sum_{\alpha\in\N^k}  \sum_{{\tilde{\alpha}}\in\N^r}     (-i\lambda)^{|\alpha|}    (i\lambda)^{|{\tilde{\alpha}}|} \lim_{\hbar\rightarrow0} \Aver{ \langle \OpW(a) \mathcal{I}_{\infty}^\gamma(k,\alpha;\hbar)\varphi_\hbar , \mathcal{I}_\infty^\gamma(r,{\tilde{\alpha}};\hbar)\varphi_\hbar  \rangle} .
	\end{aligned}
\end{equation}
We now just need to find the term-wise limits. For the case $k=r=0$ we have, by Egorov's theorem, that
\begin{equation*}
	\begin{aligned}
	\langle \OpW(a) U_{\hbar,0}(-t)\varphi_\hbar , U_{\hbar,0}(-t)\varphi_\hbar  \rangle 
	= \frac{1}{(2\pi \hbar)^d} \int e^{i\hbar^{-1}\langle x-y, q \rangle} a(\tfrac{x+y}{2}-tq,q) \varphi_\hbar(y) \overline{\varphi_\hbar(x)} \, dy  dq dx,
	\end{aligned}
\end{equation*}
and hence
\begin{equation}\label{p_con_main_reg_1}
	\begin{aligned}
	\MoveEqLeft\lim_{\hbar\rightarrow 0}  \langle \OpW(a) U_{\hbar,0}(-t)\varphi_\hbar , U_{\hbar,0}(-t)\varphi_\hbar  \rangle =
	 \int  a(y-tq,q)  \,d\mu_0(y,q).
	\end{aligned}
\end{equation}
Lemma~\ref{L.con_term_onesided_1} gives us that
\begin{equation}\label{p_con_main_reg_2}
	\begin{aligned}
	\MoveEqLeft \sum_{k=1}^\infty \sum_{\alpha\in\N^k}     (i\lambda)^{|\alpha|}  \lim_{\hbar\rightarrow0} \Aver{ \langle \OpW(a) \mathcal{I}_\infty^\gamma(k,\alpha;\hbar)\varphi_\hbar , U_{\hbar,0}(-t)\varphi_\hbar  \rangle }
	\\
	&=  \sum_{k=1}^\infty  \sum_{\alpha\in\N^k}  (i\lambda)^{|\alpha|}    \frac{((2\pi)^d\rho t)^{k}}{k!} \int  a( y- tq,q) \prod_{m=1}^k  
	 \Psi_{\alpha_m}^\gamma(q,q,\infty;V)   \, d\mu_0(y,q)
	 \\
	 &=  \sum_{k=1}^\infty     \frac{(i (2\pi)^d\rho)^{k} t^k}{k!} \int a( y- tq,q) (\hat{T}^\gamma(q,q))^k    \, d\mu_0(y,q),
	\end{aligned}
\end{equation}
where we have used the identity \eqref{born_series}, which is
\begin{equation}\label{born_series_2}
	\hat{T}^\gamma(p,p_0) =-i  \sum_{n=1}^\infty (i\lambda)^{n} \Psi_{n}^\gamma(p,p_0,\infty ;V),
\end{equation}
from Remark~\ref{born_series_convergence_remark}. Analogously we get that
\begin{equation}\label{p_con_main_reg_3}
	\begin{aligned}
	\MoveEqLeft   \sum_{k=1}^\infty \sum_{\alpha\in\N^k} \lim_{\hbar\rightarrow0}     (-i\lambda)^{|\alpha|} \Aver{ \langle \OpW(a) U_{\hbar,0}(-t)\varphi_\hbar, \mathcal{I}_\infty^\gamma(k,\alpha;\hbar)\varphi_\hbar   \rangle}
	\\
	 &=  \sum_{k=1}^\infty     \frac{(-i (2\pi)^d\rho)^{k} t^k}{k!} \int  a( y- tq,q) (\overline{\hat{T}^\gamma(q,q)})^k      \, d\mu_0(y,q).
	\end{aligned}
\end{equation}
For the final term we get from Lemma~\ref{LE:con_term_twosided} that
\begin{equation}\label{split_sums_final}
	\begin{aligned}
	\MoveEqLeft \sum_{k=1}^\infty\sum_{\alpha\in\N^k}  \sum_{r=1}^\infty \sum_{{\tilde{\alpha}}\in\N^r}     (-i\lambda)^{|\alpha|}    (i\lambda)^{|{\tilde{\alpha}}|}\lim_{\hbar\rightarrow0}\Aver{ \langle \OpW(a) \mathcal{I}_\infty^\gamma(k,\alpha;\hbar)\varphi_\hbar , \mathcal{I}_\infty^\gamma(r,{\tilde{\alpha}};\hbar)\varphi_\hbar  \rangle} 
	\\
	={}& \sum_{k,r=1}^\infty   \sum_{\alpha\in\N^k}  \sum_{{\tilde{\alpha}}\in\N^r}   \sum_{n=0}^{\min(k,r)} \sum_{\sigma^1 \in\mathcal{A}(k,n)}  \sum_{\sigma^2\in\mathcal{A}(r,n)}    (-i\lambda)^{|\alpha|}    (i\lambda)^{|{\tilde{\alpha}}|}  \int a\circ\Phi^t_{\gamma,0}(y,q_0) \,d\mu_0(y,q_0),
		\end{aligned}
\end{equation}
where the symbol $a\circ\Phi^t_{\gamma,0}(y,q_0)$ is given by
\begin{equation*}
	\begin{aligned}
	 a\circ\Phi^t_{\gamma,0}(y,q_0)
	 ={}&  ((2\pi)^d\rho)^{(k+r-n)} (2\pi)^n  \int_{[0,t]_{\leq}^n}  \int   
	a(y- tq_0 -\sum_{i=1}^{n}s_{i}  (q_{i}-q_{i-1}) ,q_{n}) \prod_{i=1}^{n} \delta( \tfrac{1}{2} (q_{i}^2-q_{i-1}^2))  
	\\
	&\times  \prod_{i=1}^{n}    \Psi_{\alpha_{\sigma_i^1}}^\gamma(q_{i} ,q_{i-1},\infty;V )  \overline{ \Psi_{{\tilde{\alpha}}_{{\sigma_{i}^2}}}^\gamma(q_{i},q_{i-1} ,\infty;V)  } 
	   \prod_{i=1}^{n+1}\prod_{m=\sigma^2_{i-1}+1}^{\sigma^2_{i}-1} \overline{ \Psi_{{\tilde{\alpha}}_m}^\gamma(q_{i-1},q_{i-1},\infty;V) }
	\\
	&\times  
	       \prod_{i=1}^{n+1}\prod_{m=\sigma^1_{i-1}+1}^{\sigma^1_{i}-1} \Psi_{\alpha_m}^\gamma(q_{i-1} ,q_{i-1},\infty;V ) 
	 \prod_{i=1}^{n+1} \frac{(s_{i-1}-s_i)^{\sigma_i^1-\sigma_{i-1}^1+\sigma_i^2-\sigma_{i-1}^2-2}}{(\sigma_i^1-\sigma_{i-1}^1-1)!(\sigma_i^2-\sigma_{i-1}^2-1)!}\,  d\boldsymbol{q}d\boldsymbol{s}.
	 \end{aligned}
\end{equation*}
We split the sums in \eqref{split_sums_final} into terms with $n=0$ and the rest.  More precisely we write 
\begin{equation*}
	\begin{aligned}
 	 \sum_{k,r=1}^\infty   \sum_{\alpha\in\N^k}  \sum_{{\tilde{\alpha}}\in\N^r}   \sum_{n=0}^{\min(k,r)} \sum_{\sigma^1 \in\mathcal{A}(k,n)}  \sum_{\sigma^2\in\mathcal{A}(r,n)}
	= \sum_{k,r=1}^\infty    \sum_{\alpha\in\N^k} \sum_{{\tilde{\alpha}}\in\N^r}   +  \sum_{n=1}^\infty \sum_{k,r=n}^\infty     \sum_{\alpha\in\N^k} \sum_{{\tilde{\alpha}}\in\N^r}     \sum_{\sigma^1 \in\mathcal{A}(k,n)}  \sum_{\sigma^2\in\mathcal{A}(r,n)}.
	\end{aligned}
\end{equation*}
Using this splitting and reordering of the sums and the identity for the Born series \eqref{born_series_2} we obtain from \eqref{p_con_main_reg_1}, \eqref{p_con_main_reg_2}, \eqref{p_con_main_reg_3} and \eqref{split_sums_final}  that 
\begin{equation}\label{p_con_main_reg_4}
	\begin{aligned}
	\MoveEqLeft \sum_{k,r=0}^{\infty} \sum_{\alpha\in\N^k}  \sum_{{\tilde{\alpha}}\in\N^r}     (-i\lambda)^{|\alpha|}    (i\lambda)^{|{\tilde{\alpha}}|} \lim_{\hbar\rightarrow0} \Aver{ \langle \OpW(a) \mathcal{I}_{\infty}^\gamma(k,\alpha;\hbar)\varphi_\hbar , \mathcal{I}_\infty^\gamma(r,{\tilde{\alpha}};\hbar)\varphi_\hbar  \rangle}
	\\
	={}& \sum_{k,r=0}^\infty       i^k    (-i)^r  \frac{((2\pi)^d\rho t)^{(k+r)}}{k!r!}  \int   a(y- tq  ,q) (\hat{T}^\gamma(q,q))^k  (\overline{\hat{T}^\gamma(q,q)})^r    \,  d\mu_0(y,q)
	 \\
	 &+ \sum_{n=1}^\infty \sum_{k,r=n}^\infty    \sum_{\sigma^1 \in\mathcal{A}(k,n)}  \sum_{\sigma^2\in\mathcal{A}(r,n)} ((2\pi)^d\rho)^{(k+r-n)} (2\pi)^{n} 
	\int  \int_{[0,t]_{\leq}^n}  \int    \prod_{i=1}^{n} \delta( \tfrac{1}{2} (q_{i}^2-q_{i-1}^2))
	\\
	 &\times 
	 \prod_{i=1}^{n+1} \frac{(i(s_{i-1}-s_i)\hat{T}^\gamma(q,q))^{\sigma_i^1-\sigma_{i-1}^1-1}  (-i(s_{i-1}-s_i)\overline{\hat{T}^\gamma(q,q)})^{\sigma_i^2-\sigma_{i-1}^2-1}}{(\sigma_i^1-\sigma_{i-1}^1-1)!(\sigma_i^2-\sigma_{i-1}^2-1)!} 
	 \\
	&\times  
	    a(y- tq_0 -\sum_{i=1}^{n}s_{i}  (q_{i}-q_{i-1}) ,q_{n})   \hat{\varphi}_1(q_0) \overline{\hat{\varphi}_2(q_0)}   \prod_{i=1}^{n} |\hat{T}^\gamma(q_i,q_{i-1})|^2  \,  d\boldsymbol{q}d\boldsymbol{s} d\mu_0(y,q_0).
	\end{aligned}
\end{equation}
If we just consider the sums over  $k$, $r$, $\sigma^1$ and $\sigma^2$ we see that
\begin{equation}\label{p_con_main_reg_5}
	\begin{aligned}
	\MoveEqLeft  \sum_{k=n}^\infty  \sum_{r=n}^\infty    \sum_{\sigma^1 \in\mathcal{A}(k,n)}  \sum_{\sigma^2\in\mathcal{A}(r,n)} ((2\pi)^d\rho)^{(k+r-n)} 
	\\
	&\times \prod_{i=1}^{n+1} \frac{(i(s_{i-1}-s_i)\hat{T}^\gamma(q,q))^{\sigma_i^1-\sigma_{i-1}^1-1}  (-i(s_{i-1}-s_i)\overline{\hat{T}^\gamma(q,q)})^{\sigma_i^2-\sigma_{i-1}^2-1}}{(\sigma_i^1-\sigma_{i-1}^1-1)!(\sigma_i^2-\sigma_{i-1}^2-1)!}  
	\\
	={}&((2\pi)^d\rho)^{n}  \sum_{k=n}^\infty     \sum_{\sigma^1 \in\mathcal{A}(k,n)}   \prod_{i=1}^{n+1} \frac{(i (2\pi)^d\rho (s_{i-1}-s_i)\hat{T}^\gamma(q,q))^{\sigma_i^1-\sigma_{i-1}^1-1} }{(\sigma_i^1-\sigma_{i-1}^1-1)!} 
	\\
	&\times  \sum_{r=n}^\infty    \sum_{\sigma^2\in\mathcal{A}(r,n)}    \prod_{i=1}^{n+1}   \frac{  (-i (2\pi)^d\rho (s_{i-1}-s_i)\overline{\hat{T}^\gamma(q,q)})^{\sigma_i^2-\sigma_{i-1}^2-1}}{(\sigma_i^2-\sigma_{i-1}^2-1)!}  
	\\
	={}&((2\pi)^d\rho)^{n}  \prod_{i=1}^{n+1}   \sum_{k_i=0}^\infty    \frac{(i (2\pi)^d\rho (s_{i-1}-s_i)\hat{T}^\gamma(q,q))^{k_i} }{k_i!}  \sum_{r_i=0}^\infty     \frac{  (-i (2\pi)^d\rho (s_{i-1}-s_i)\overline{\hat{T}^\gamma(q,q)})^{r_i}}{r_i!} 
	\\
	={}&((2\pi)^d\rho)^{n}  \prod_{i=1}^{n+1}   e^{(2\pi)^{d}\rho(s_{i-1}-s_i) (i \hat{T}^\gamma(q,q) - i\overline{\hat{T}^\gamma(q,q)})} = ((2\pi)^d\rho)^{n}  \prod_{i=1}^{n+1}  e^{-(2\pi)^{d}\rho(s_{i-1}-s_i) 2\im( \hat{T}^\gamma(q,q))}, 
	\end{aligned}
\end{equation}
where we in the last equality have used the ``optical theorem'' \eqref{EQ:Optical_theorem}. Combining \eqref{EQ:main_term_reg_main_thm_1}, \eqref{EQ:main_term_reg_main_thm_2},  \eqref{p_con_main_reg_4} and \eqref{p_con_main_reg_5}  we obtain the desired result and this concludes the proof.
\end{proof}
\section{Proof of Theorem~\ref{main_thm2}}\label{sec:proof_main}
We are considering the  inner product
\begin{equation}
	\langle U_{\hbar,\lambda}(t)\OpW(a) U_{\hbar,\lambda}(-t)  \varphi_\hbar, \varphi_\hbar\rangle=\langle \OpW(a) U_{\hbar,\lambda}(-t)  \varphi_\hbar, U_{\hbar,\lambda}(-t)\varphi_\hbar\rangle,
\end{equation}
where $\{\varphi_\hbar\}_{\hbar\in I}$ is a uniform semiclassical family in $\mathcal{H}_\hbar^{5d+5}(\R^d)$ with Wigner measure $\mu_0$. From  Proposition~\ref{duhamel_expansion_lemma} we have for any $\tau_0\in\N$ and $k_0\in\N$ that
\begin{equation}
	 U_{\hbar,\lambda}(-t) 
	= U_{\hbar,0}(-t) +  \sum_{i=0}^2 \sum_{j=\min(1,i)}^i   \sum_{k=k_{ij}}^{k_0} \sum_{\iota\in\mathcal{Q}_{k,i,j}}   \sum_{\boldsymbol{x}\in\mathcal{X}_{\neq}^{k-i}}      \mathcal{I}_{i,j}(k,\boldsymbol{x},t;\hbar) + \mathcal{R}(\tau_0,k_0:\hbar),
\end{equation}
where $k_{00}=1$, $k_{11}=3$, $k_{21}=5$, $k_{22}=4$. The operator $ \mathcal{R}(\tau_0,k_0:\hbar) $ is defined in the proposition and
the operators $\mathcal{I}_{i,j}(k,\boldsymbol{x},\alpha,t;\hbar) $ are defined in Definition~\ref{functions_for_exp_def} and \ref{functions_for_exp_rec_def}. By applying Proposition~\ref{duhamel_expansion_lemma} twice with
\begin{equation}\label{EQ:def_k_tau_main_proof}
	k_0 = \frac{|\log(\hbar)|}{10 \log(|\log(\hbar)|)} \qquad\text{and}\qquad \tau_0 = \hbar^{-\frac{1}{3}}.
\end{equation}
we get the expansion
\begin{equation}\label{main_evo_expansion}
	\begin{aligned}
	 \MoveEqLeft \langle \OpW(a) U_{\hbar,\lambda}(-t) \varphi_\hbar, U_{\hbar,\lambda}(-t)\varphi_\hbar\rangle
	\\
	&=  \mathcal{M}(k_0,\hbar) 
	 +\mathcal{E}_1(k_0;\hbar)+ \mathcal{E}_2(k_0,\tau_0;\hbar) + \mathcal{E}_3(k_0;\hbar) + \mathcal{E}_4(k_0,\tau_0;\hbar).
	\end{aligned}
\end{equation}
where the main term $\mathcal{M}(k_0,\hbar)$ is given by
\begin{equation}\label{main_term_expansion}
	\begin{aligned}
	\mathcal{M}(k_0,\hbar)
	={}&  \sum_{k,r=1}^{k_0}  \sum_{\boldsymbol{x}\in\mathcal{X}_{\neq}^{k}}\sum_{\boldsymbol{\tilde{x}}\in\mathcal{X}_{\neq}^{r}}  \langle \OpW(a)  \mathcal{I}_{0,0}(k,\boldsymbol{x},t;\hbar)\varphi_\hbar ,  \mathcal{I}_{0,0}(r,\boldsymbol{\tilde{x}},t;\hbar)\varphi_\hbar \rangle,
	\end{aligned}
\end{equation}
where we have used the convention $\mathcal{I}_{0,0}(0,\boldsymbol{\tilde{x}},t;\hbar)=U_{\hbar,0}(-t) $. The error terms are given by
\begin{equation}\label{error_main_thm_1}
	\begin{aligned}
	\mathcal{E}_1(k_0;\hbar)=  \sum_{i=1}^2 \sum_{j=\min(1,i)}^i   \sum_{r=k_{ij}}^{k_0} \sum_{\iota\in\mathcal{Q}_{k,i,j}}   \langle  \OpW(a) U_{\hbar,\lambda}(-t) \varphi_\hbar, \sum_{\boldsymbol{x}\in\mathcal{X}_{\neq}^{k-i}}   \mathcal{I}_{i,j}(r,\boldsymbol{x},t;\hbar) \varphi_\hbar\rangle,
	\end{aligned}
\end{equation}
\begin{equation}\label{error_main_thm_2}
	\begin{aligned}
	  \mathcal{E}_2(k_0,\tau_0;\hbar)= \langle \OpW(a) U_{\hbar,\lambda}(-t) \varphi_\hbar,  \mathcal{R}(\tau_0,k_0:\hbar)\varphi_\hbar\rangle,
	\end{aligned}
\end{equation}
\begin{equation}\label{error_main_thm_3}
	\begin{aligned}
	\mathcal{E}_3(k_0;\hbar)
	  ={}& \sum_{i=1}^2 \sum_{j=\min(1,i)}^i   \sum_{k=k_{ij}}^{k_0}    \sum_{r=0}^{k_0} \sum_{\iota\in\mathcal{Q}_{k,i,j}}   \sum_{\boldsymbol{x}\in\mathcal{X}_{\neq}^{k-i}}  \sum_{\boldsymbol{\tilde{x}}\in\mathcal{X}_{\neq}^{r}}   \langle\OpW(a)    \mathcal{I}_{i,j}(k,\boldsymbol{x},t;\hbar) \varphi_\hbar ,   \mathcal{I}_{0,0}(r,\boldsymbol{\tilde{x}},t;\hbar)\varphi_\hbar   \rangle, 
	\end{aligned}
\end{equation}
and
\begin{equation}\label{error_main_thm_4}
	\begin{aligned}
	\mathcal{E}_4(k_0,\tau_0;\hbar)
	  ={}&  \sum_{r=0}^{k_0}  \sum_{\boldsymbol{\tilde{x}}\in\mathcal{X}_{\neq}^{r}}   \langle\OpW(a)   \mathcal{R}(\tau_0,k_0:\hbar) \varphi_\hbar ,   \mathcal{I}_{0,0}(r,\boldsymbol{\tilde{x}},t;\hbar)\varphi_\hbar   \rangle. 
	\end{aligned}
\end{equation}
We will consider the asymptotics for each term separately. We start with  $\mathcal{E}_1(k_0;\hbar)$. 
By using the definition of  $\mathcal{E}_1(k_0;\hbar)$, linearity of the integral and Cauchy-Schwarz for the inner product in $L^2(\R^d)$ we get the  estimate
\begin{equation}\label{EQ:error_main_thm_1.1}
	\begin{aligned}
	 \Aver{| \mathcal{E}_1(k_0;\hbar)|} \leq {}& \norm{\OpW(a)}_{\mathrm{op}} \norm{\varphi_\hbar}_{L^2(\R^d)}  \sum_{i=1}^2 \sum_{j=\min(1,i)}^i   \sum_{r=k_{ij}}^{k_0} \sum_{\iota\in\mathcal{Q}_{k,i,j}}  \mathbb{E}\Big[\big\lVert  \sum_{\boldsymbol{x}\in\mathcal{X}_{\neq}^{k-i}}  \mathcal{I}_{i,j}(k,\boldsymbol{x},\iota,t;\hbar) \varphi_\hbar\big\rVert_{L^2(\R^d)} \Big]
	 \\
	 \leq {}& \norm{\OpW(a)}_{\mathrm{op}} \norm{\varphi_\hbar}_{L^2(\R^d)}  \sum_{i=1}^2 \sum_{j=\min(1,i)}^i   \sum_{r=k_{ij}}^{k_0} \sum_{\iota\in\mathcal{Q}_{k,i,j}}  \mathbb{E}\Big[\big\lVert  \sum_{\boldsymbol{x}\in\mathcal{X}_{\neq}^{k-i}}  \mathcal{I}_{i,j}(k,\boldsymbol{x},\iota,t;\hbar) \varphi_\hbar\big\rVert_{L^2(\R^d)}^2 \Big]^{\frac{1}{2}},
	\end{aligned}
\end{equation}
where we in the last inequality have used Jensen's inequality. From Lemma~\ref{LE:norm_conv_with_op_1_rec} we get that
\begin{equation}\label{EQ:error_main_thm_1.2}
	\lim_{\hbar\rightarrow0} \sum_{r=k_{ij}}^{k_0} \sum_{\iota\in\mathcal{Q}_{k,i,j}}  \mathbb{E}\Big[\big\lVert  \sum_{\boldsymbol{x}\in\mathcal{X}_{\neq}^{k-i}}  \mathcal{I}_{i,j}(k,\boldsymbol{x},\iota,t;\hbar) \varphi_\hbar\big\rVert_{L^2(\R^d)}^2 \Big]^{\frac{1}{2}} =0.
\end{equation}
for all $(i,j)\in\{(1,1),(2,1),(2,3)\}$. Since we have that both $\norm{\OpW(a)}_{\mathrm{op}}$ and  $\norm{\varphi_\hbar}_{L^2(\R^d)} $ are uniformly bounded for $\hbar\in I$ it follows by combining \eqref{EQ:error_main_thm_1.1} and \eqref{EQ:error_main_thm_1.2} that
\begin{equation}\label{EQ:error_main_thm_1.3}
	\lim_{\hbar\rightarrow0} \Aver{|\mathcal{E}_1(k_0;\hbar) |} = 0.
\end{equation}
For the second term we get from the definition of  $\mathcal{E}_2(k_0,\tau_0;\hbar)$, linearity of the integral, Cauchy-Schwarz for the inner product in $L^2(\R^d)$ and Jensen's inequality we get the  estimate
\begin{equation}\label{EQ:error_main_thm_1.4}
	 \Aver{| \mathcal{E}_2(k_0,\tau_0;\hbar)|} \leq \norm{\OpW(a)}_{\mathrm{op}} \norm{\varphi_\hbar}_{L^2(\R^d)}  \mathbb{E}\Big[\big\lVert \mathcal{R}(\tau_0,k_0:\hbar)\varphi_\hbar\big\rVert_{L^2(\R^d)}^2 \Big]^{\frac12}.
\end{equation}
Lemma~\ref{LE:Convergence_of_duhamel_expansion} gives us that 
\begin{equation}\label{EQ:error_main_thm_1.5}
	 \lim_{\hbar\rightarrow0}  \mathbb{E}\Big[\big\lVert \mathcal{R}(\tau_0,k_0:\hbar)\varphi_\hbar\big\rVert_{L^2(\R^d)}^2 \Big]^{\frac12} =0.
\end{equation}
Hence  from combining \eqref{EQ:error_main_thm_1.4} and \eqref{EQ:error_main_thm_1.5} it follows that
\begin{equation}\label{EQ:error_main_thm_1.6}
	 \lim_{\hbar\rightarrow0}\Aver{| \mathcal{E}_2(k_0,\tau_0;\hbar)|} =0,
\end{equation}
 since $\norm{\OpW(a)}_{\mathrm{op}}$ and  $\norm{\varphi_\hbar}_{L^2(\R^d)} $ are uniformly bounded for $\hbar\in I$. To obtain that
\begin{equation}\label{EQ:error_main_thm_1.7}
	\lim_{\hbar\rightarrow0} \Aver{|\mathcal{E}_3(k_0;\hbar) |} + \Aver{|\mathcal{E}_4(k_0,\tau_0;\hbar) |} = 0,
\end{equation}
we can argue as above. Except when we use our assumption to say that $\norm{\varphi_\hbar}_{L^2(\R^d)} $ is uniformly bounded for $\hbar\in I$ we now need to use Lemma~\ref{LE:unf_norm_bound_without_rec} to get the uniform bound
\begin{equation*}
	\begin{aligned}
	\sup_{\hbar\in I}   \sum_{r=0}^{k_0}     \mathbb{E}\Big[\big\lVert  \sum_{\boldsymbol{x}\in\mathcal{X}_{\neq}^{r}}  \mathcal{I}_{0,0}(r,\boldsymbol{x},\iota,t;\hbar) \varphi_\hbar\big\rVert_{L^2(\R^d)}^2 \Big]^{\frac{1}{2}} \leq C,
	\end{aligned}
\end{equation*}
where we use our assumptions on $k_0$. We now turn to the main term $\mathcal{M}(k_0,\hbar)$.  We will in the following let $\mathcal{M}^\gamma(k_0,\hbar)$ be defined as in \eqref{main_term_expansion} but with $\mathcal{I}_{0,0}(k,\boldsymbol{x},t;\hbar)$ and $\mathcal{I}_{0,0}(r,\boldsymbol{\tilde{x}},t;\hbar)$ replaced by $\mathcal{I}_{0,0}^\gamma(k,\boldsymbol{x},t;\hbar)$ and $\mathcal{I}_{0,0}^\gamma(r,\boldsymbol{\tilde{x}},t;\hbar)$ respectively. Then by Lemma~\ref{regularise_expansion} it follows that
\begin{equation}\label{EQ:error_main_thm_1.8}
	\lim_{\gamma\rightarrow0}\sup_{\hbar\in I}\Aver{|\mathcal{M}(\infty,\hbar) -\mathcal{M}^\gamma(\infty,\hbar)| }  =0.
\end{equation}
Moreover for $\gamma\geq0$ we let
\begin{equation*}
	\begin{aligned}
	f^\gamma(t,y,q_0) ={}&   \sum_{n=0}^\infty  \frac{((2\pi)^{d+1}\rho)^{n}}{(2\pi)^{d}}    \int_{[0,t]_{\leq}^n}  \int   
	    a(y- tq_0 -\sum_{i=1}^{n}s_{i}  (q_{i}-q_{i-1}) ,q_{n})  |\hat{\varphi}_1( q_0)|^2
	    \\
	    &\times   \prod_{i=1}^{n} \delta( \tfrac{1}{2} (q_{i}^2-q_{i-1}^2))| \hat{T}^\gamma(q_i,q_{i-1})|^2
	  \prod_{i=1}^{n+1}  e^{-(2\pi)^{d}\rho(s_{i-1}-s_i) 2\im( \hat{T}^\gamma(q,q))}  \,  d\boldsymbol{q}d\boldsymbol{s} 
	\end{aligned}
\end{equation*}
and we will use the convention $f(t,y,q_0)=f^0(t,y,q_0)$. Note that by definition we have that 
\begin{equation}\label{EQ:error_main_thm_1.9} 
	\lim_{\gamma\rightarrow0}f^\gamma(t,y,q_0)=f(t,y,q_0).
\end{equation} 
Then using \eqref{main_evo_expansion} we get that
\begin{equation}\label{main_est_proof_main}
	\begin{aligned}
	 \MoveEqLeft | \mathbb{E}	 [\langle A_\hbar(t) \varphi_{\hbar}, \varphi_{\hbar} \rangle -  \int f(t,x,p) \, d\mu_0(x,p) ] |
	\\
	\leq {}&\Aver{|\mathcal{M}(\infty,\hbar) -\mathcal{M}^\gamma(\infty,\hbar)| } + \Aver{|\mathcal{E}_1(k_0;\hbar) |} + \Aver{|\mathcal{E}_2(k_0,\tau_0;\hbar) |} + \Aver{|\mathcal{E}_3(k_0;\hbar) |} + \Aver{|\mathcal{E}_4(k_0,\tau_0;\hbar) |}
	\\
	&+ \Aver{|\mathcal{M}^\gamma(\infty,\hbar) - \int f^\gamma(t,x,p) \, d\mu_0(x,p) |} + |  \int f^\gamma(t,x,p) \, d\mu_0(x,p) -  \int f(t,x,p) \, d\mu_0(x,p)  |.
	\end{aligned}
\end{equation}
\Cref{EQ:error_main_thm_1.3,EQ:error_main_thm_1.6,EQ:error_main_thm_1.7,EQ:error_main_thm_1.8,EQ:error_main_thm_1.9,LE:main_term_reg_main_thm} gives us that by first taking $\hbar$ to zero and then taking $\gamma$ to zero the bound in \eqref{main_est_proof_main} will go to zero. This implies  the desired convergence and concludes the proof.  
\section{Proof of Theorem~\ref{Main_thm_gene}}\label{sec:proof_main_gen}
Before we give a proof of Theorem~\ref{Main_thm_gene} we will make the following observations regarding the solution to the linear Boltzmann equation when the collision kernel is as in Theorem~\ref{main_thm2}.
\begin{observation}\label{Sec13_obs}
Firstly recall that from Remark~\ref{born_series_convergence_remark} we have that
\begin{equation}
	\hat{T}(p,p_0) =-i  \sum_{n=1}^\infty (i\lambda)^{n} \Psi_{n}(p,p_0,\infty ; V).
\end{equation}
Applying Lemma~\ref{psi_m_gamma_est} and under the condition $\lambda C_d  \norm{ \hat{V}}_{1,\infty,2n+2}<1$ we have that
\begin{equation}\label{Sec13_obs_est_1}
	\begin{aligned}
	\sup_{p\in\R^d} \int_{\R^d} |\hat{T}(p,p_0) | \, dp_0 \leq \sup_{p\in\R^d}   \sum_{n=1}^\infty \lambda^{n}  \int_{\R^d}| \Psi_{n}(p,p_0,\infty ; V)|  \, dp_0 \leq  \sum_{n=1}^\infty  (\lambda C_d  \norm{ \hat{V}}_{1,\infty,2n+2})^n = C
	\\
	\sup_{p,p_0\in\R^d}  |\hat{T}(p,p_0) |  \leq \sup_{p,p_0 \in\R^d}   \sum_{n=1}^\infty \lambda^{n} | \Psi_{n}(p,p_0,\infty ; V)|  \leq  \sum_{n=1}^\infty  (\lambda C_d  \norm{ \hat{V}}_{1,\infty,2n+2})^n = C,
	\end{aligned}
\end{equation}
where $C$ is some constant. Secondly recall that the solution of the linear Boltzmann equation \eqref{intial_val_pro_boltzmanneqq} can be expressed as the collision series
 \begin{equation}\label{Sec13_obs_est_2}
 	 f(t,x,p) = \sum_{n=0}^\infty  f^{(n)}(t,x,p),
 \end{equation}
where
\begin{equation*}
	\begin{aligned}
	f^{(n)}(t,x,q_0)
	=   \int_{[0,t]_{\leq}^n}  \int_{\R^{nd}}   & \prod_{i=1}^{n} \Sigma(q_i,q_{i-1})  
	  \prod_{i=1}^{n+1}  e^{-(s_{i-1}-s_i) \Sigma_{\text{\rm tot}}(q)}  
	  \\  
	  &\times a(x- tq_0 -\sum_{i=1}^{n}s_{i}  (q_{i}-q_{i-1}) ,q_{n}) \,  d\boldsymbol{q}_{1,n}d\boldsymbol{s}_{n,1},
	  \end{aligned}
\end{equation*}
where we assume the initial data $a$ is Schwartz class and  
\begin{equation*}
	\Sigma(p,q) = (2\pi)^{d+1} \rho\,  \delta( \tfrac{1}{2} p^2-\tfrac{1}{2}q^2)| \hat{T}(p,q)|^2 .
\end{equation*}
Applying the two estimates from \eqref{Sec13_obs_est_1} and using that $ \Sigma_{\text{\rm tot}}(q)$ is positive for all $q$ it follows that
\begin{equation*}
	\begin{aligned}
	|f^{(n)}(t,x,q_0)| \leq \frac{(tC^2)^n}{n!} \sup_{x\in\R^d} \int |a(x,q_n)| \,dq_n. 
	  \end{aligned}
\end{equation*}
Combining this estimate with \eqref{Sec13_obs_est_2} we obtain that
 \begin{equation}
 	\sup_{x,p\in\R^d}| f(t,x,p) |\leq C 
 \end{equation}
for some constant $C$. Moreover we also have that $f(t,x,p)$ is continuous in $x$ and $p$.
\end{observation}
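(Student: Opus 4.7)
The proof proposal naturally splits into three tasks matching the three assertions packed into the observation: the two uniform bounds on $\hat T$, the uniform boundedness of $f$, and the continuity of $f$.

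My first step is to establish the $\hat T$ estimates directly from the Born series representation
\[
\hat T(p,p_0) = -i\sum_{n=1}^\infty (i\lambda)^n \Psi_n(p,p_0,\infty;V)
\]
recalled in Remark~\ref{born_series_convergence_remark}. Lemma~\ref{psi_m_gamma_est} applied at $\gamma=0$ (its bounds are uniform in $\gamma\ge 0$) gives
\[
\sup_{p\in\R^d}\int|\Psi_n(p,p_0,\infty;V)|\,dp_0 \le \tfrac{C_d^{n-1}}{(2\pi)^d}\,\|\hat V\|_{1,\infty,2d+2}^{\,n}, \qquad \sup_{p,p_0}|\Psi_n| \le \tfrac{C_d^{n-1}}{(2\pi)^d}\,\|\hat V\|_{1,\infty,2d+2}^{\,n}.
\]
Since $2d+2\le 5d+5$, the coupling assumption \eqref{assump_1} gives $\lambda C_d\|\hat V\|_{1,\infty,2d+2}<1/2$, so the two geometric series converge and yield the two claimed uniform bounds on $\hat T$ with an absolute constant $C$.

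The second step is the pointwise bound on the collision series. From the optical theorem \eqref{EQ:Optical_theorem} I observe
\[
\Sigma_{\text{tot}}(q)=\int_{\R^d}\Sigma(p,q)\,dp=(2\pi)^{d+1}\rho\!\int \delta(\tfrac12 p^2-\tfrac12 q^2)|\hat T(p,q)|^2\,dp=2(2\pi)^d\rho\,\im\hat T(q,q)\le C,
\]
uniformly in $q$, by the sup bound on $\hat T$ from Step~1. Working in the explicit formula for $f^{(n)}$, I bound each exponential $e^{-(s_{i-1}-s_i)\Sigma_{\text{tot}}}\le 1$, which produces a simplex integral $\int_{[0,t]_\le^n}d\boldsymbol{s}=t^n/n!$. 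Then I integrate the momentum variables one at a time starting from $q_n$: using $|a(\cdot,q_n)|\le\|a\|_\infty$, the $q_n$ integral is bounded by $\Sigma_{\text{tot}}(q_{n-1})\|a\|_\infty\le C\|a\|_\infty$, and each subsequent integral over $q_{n-1},\ldots,q_1$ is bounded by the same uniform constant. This gives $|f^{(n)}(t,x,q_0)|\le (tC)^n/n!\cdot C_a$ for some constant $C_a$ depending on $a$, and summing in $n$ yields $\sup_{t\in[0,T],x,p}|f(t,x,p)|\le C_{T,a}$.

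The last step is continuity. Each integrand in $f^{(n)}(t,x,q_0)$ is continuous in $(x,q_0)$ because $a\in\mathcal{S}(\R^{2d})$, and the previous bounds (with $\|a\|_\infty$ replaced by a Schwartz-type envelope of $a$ that dominates $|a|$ uniformly) provide an integrable majorant independent of $(x,q_0)$. Dominated convergence then gives continuity of each $f^{(n)}$, and the uniform summability of $\sum_n(tC)^n/n!$ upgrades this to continuity of $f$.

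I expect the only mildly subtle point to be the iterated momentum estimate in Step~2: one must integrate in the order $q_n,q_{n-1},\ldots,q_1$ so that at each stage the $q_i$-integral is of the form $\int\Sigma(q_i,q_{i-1})\,dq_i\le C$ controlled by $\Sigma_{\text{tot}}$, since integrating in the opposite order $q_1\to q_n$ would require a bound on $\int\Sigma(q_i,q_{i-1})\,dq_{i-1}$, which is not directly supplied by the optical theorem. Once this ordering is fixed the remaining work is routine.
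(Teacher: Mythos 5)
Your proposal is correct and follows essentially the same route as the paper: the Born series together with Lemma~\ref{psi_m_gamma_est} and the coupling assumption for the two bounds on $\hat{T}$, then bounding the exponentials by one, the time simplex by $t^n/n!$, and the momentum integrals collision by collision. The only cosmetic difference is in the bookkeeping of the momentum integrals — you make the optical theorem explicit to get $\int \Sigma(q_i,q_{i-1})\,dq_i=\Sigma_{\text{tot}}(q_{i-1})\leq C$ and end with $\norm{a}_{\infty}$, whereas the paper invokes both estimates of \eqref{Sec13_obs_est_1} directly and records the bound with $\sup_x\int|a(x,q)|\,dq$ and a factor $C^2$ per collision; both give the same conclusion, and your observation about the order of integration ($q_n$ down to $q_1$) is exactly the right point to be careful about.
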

\begin{proof}[Proof of Theorem~\ref{Main_thm_gene}] Since Theorem~\ref{main_thm1} and Theorem~\ref{main_thm2} are equivalent we will only prove that the assertion of Theorem~\ref{main_thm2} is still valid. Hence  
let $A_\hbar(t)= U_{\hbar,\lambda}(t)\, A_\hbar\, U_{\hbar,\lambda}(-t)$ be the Heisenberg evolution of $A_\hbar=\OpW(a)$, where $a\in\mathcal{S}(\R^{2d})$. Then by applying Cauchy-Schwartz we get that
  \begin{equation*}
  	\begin{aligned}
	| \langle A_\hbar(t) \varphi_{\hbar}^n,  \varphi_{\hbar}^n\rangle - \langle A_\hbar(t) \varphi_{\hbar},  \varphi_{\hbar}\rangle | &=  | \langle A_\hbar(t) (\varphi_{\hbar}^n-\varphi_\hbar),  \varphi_{\hbar}^n\rangle - \langle A_\hbar(t) \varphi_{\hbar},  \varphi_\hbar^n-\varphi_{\hbar}\rangle |
	\\
	&\leq \norm{\OpW(a))}_{\mathrm{op}} \norm{\varphi_{\hbar}^n-\varphi_\hbar}_{L^2(\R^d)} ( \norm{\varphi_{\hbar}^n}_{L^2(\R^d)} +  \norm{\varphi_\hbar}_{L^2(\R^d)}).
	\end{aligned}
  \end{equation*}
By assumption this inequality implies that
  \begin{equation}\label{EQ_gene_1}
  	\begin{aligned}
	\lim_{n\rightarrow0}\sup_{\hbar\in I}| \langle A_\hbar(t) \varphi_{\hbar}^n,  \varphi_{\hbar}^n\rangle - \langle A_\hbar(t) \varphi_{\hbar},  \varphi_{\hbar}\rangle | = 0.
	\end{aligned}
  \end{equation}
We have that for each $n$ the families $\{\varphi_\hbar^n\}_{\hbar\in I}$ will satisfies the assumptions of Theorem~\ref{main_thm2}. Hence we get  the convergence
  \begin{equation}\label{EQ_gene_2}
 \mathbb{E}	\langle A_\hbar(t) \varphi_{\hbar}^n,  \varphi_{\hbar}^n\rangle \to  \int f(t,x,p) \, d\mu_{0}^n(x,p) \qquad\text{as $\hbar\rightarrow0$}.
  \end{equation}
  We note that
    \begin{equation}\label{EQ_gene_3}
    \begin{aligned}
	\MoveEqLeft | \mathbb{E}	\langle A_\hbar(t) \varphi_{\hbar},  \varphi_{\hbar}\rangle - \int f(t,x,p) \, d\mu_{0}(x,p)|
	\\
	\leq {}& | \mathbb{E}[	\langle A_\hbar(t) \varphi_{\hbar},  \varphi_{\hbar}\rangle - \langle A_\hbar(t) \varphi_{\hbar}^n,  \varphi_{\hbar}^n\rangle ] | + |\mathbb{E}	\langle A_\hbar(t) \varphi_{\hbar}^n,  \varphi_{\hbar}^n\rangle- \int f(t,x,p) \, d\mu_{0}^n(x,p)|  
	\\
	&+ |  \int f(t,x,p) \, d\mu_{0}^n(x,p) -  \int f(t,x,p) \, d\mu_{0}(x,p)|.
 \end{aligned}
   \end{equation}
 It follows form \eqref{EQ_gene_1} that the first term in this upper bound can be made small and that the second term can be made small follows from \eqref{EQ_gene_2}. Finally since the sequence $\{\mu_0^n\}$ converges weakly to $\mu_0$ and $f(t,x,p)$ is a bounded continuous function in $x$ and $p$ as observed in Observation~\ref{Sec13_obs} it follows that the third term in the upper bound in \eqref{EQ_gene_3} can be made small. Hence we have obtained the desired convergence and this ends the proof.
\end{proof}

We end this section by giving two examples of well approximated states. These are two particular cases of  Lagrangian states. These two types of states are
\begin{equation}
\varphi_{1,\hbar}(x) = w(x)e^{i\hbar^{-1}\langle x,p_0\rangle} \qquad\text{and}\qquad \varphi_{2,\hbar}(x)=\hbar^{-d/2} w(\hbar^{-1} (x-x_0)),
\end{equation}
where we will only assume that $w\in L^2(\R^d)$. The associated Wigner measures are
\begin{equation}
\mu_{0}^1(x,p) = |w(x)|^2  \delta(p-p_0) \,dx\, dp \qquad\text{and}\qquad \mu_{0}^2(x,p)=\delta(x-x_0)  |\hat w(p)|^2 \,dx\, dp.
\end{equation}
To find the approximating states we let  for all $\varepsilon>0$ $\psi_\varepsilon$ be a standard mollifier. That is $\psi_\varepsilon(x)=\varepsilon^{-d} \psi(\varepsilon^{-1}x)$, where $\psi\in C^\infty_0(\R^d)$, positive with support contained in the unit ball and integrating to 1. We set
\begin{equation*}
	w_\varepsilon(x) = \psi_\varepsilon* w(x) = \int  \psi_\varepsilon(x-y) w(y)\,dy.
\end{equation*}
It is standard results that $w_\varepsilon\in C^\infty(\R^d)$ and that
\begin{equation}
	\norm{w_\varepsilon}_{L^2(\R^d)} \leq C \norm{w}_{L^2(\R^d)} \qquad\text{and}\qquad \lim_{\varepsilon\rightarrow0}\norm{w_\varepsilon-w}_{L^2(\R^d)} =0.
\end{equation}
Moreover for each fixed $\varepsilon>0$ we have that $w_\varepsilon\in \mathcal{H}^{5d+5}_1(\R^d)$. This implies that for each fixed $\varepsilon$ we will have that the Lagrangian states
\begin{equation}
\varphi_{1,\hbar}^\varepsilon(x) = w_\varepsilon(x)e^{i\hbar^{-1}\langle x,p_0\rangle} \qquad\text{and}\qquad \varphi_{2,\hbar}^\varepsilon(x)=\hbar^{-d/2} w_\varepsilon(\hbar^{-1} (x-x_0))
\end{equation}
will satisfies the assumptions of Theorem~\ref{main_thm2}. Moreover, we have in particular that
\begin{equation}
	\lim_{n\rightarrow \infty} \sup_{\hbar\in (0,1]} \norm{\varphi_\hbar-\varphi_\hbar^n}_{L^2(\R^d)} = 0,
\end{equation}
where we have chosen $\varepsilon=\frac{1}{n}$. What remains is to verify that the measures will converge weakly.
The two associated Wigner measures $\mu_{0,\varepsilon}^1$ and $\mu_{0,\varepsilon}^2$ are given by
\begin{equation}
\mu_{0,\varepsilon}^1(x,p) = |w_\varepsilon(x)|^2  \delta(p-p_0) \,dx\, dp \qquad\text{and}\qquad \mu_{0,\varepsilon}^2(x,p)=\delta(x-x_0)  |\hat w_\varepsilon(p)|^2 \,dx\, dp.
\end{equation}
With this note that for any continuous and bounded function $f$ we get that
  \begin{equation}
  	\begin{aligned}
 	\MoveEqLeft  \big| \int f(x,p) \, d\mu_{0,\varepsilon}^1(x,p) -  \int f(x,p) \, d\mu_{0}^1(x,p) \big| 
  	\\
  	&\leq  \big| \int f(x,p_0)(| w_\varepsilon(x)|^2 - |w(x)|^2 ) \, dx  \big| \leq C (\norm{w_\varepsilon}_{L^2(\R^d)}^2 - \norm{w}_{L^2(\R^d)}^2).
  	\end{aligned}
  \end{equation}
This inequality implies that $\{\mu_{0,\varepsilon}^1\}$ converges weakly to $\mu_{0}^1$. Before we consider the measure $\mu_{0,\varepsilon}^2(x,p)$ we note that
\begin{equation}
	\hat w_\varepsilon(p) = \widehat{\psi_\varepsilon * w} (p) = \hat{\psi}(\varepsilon p) \hat{w}(p) \to  \hat{w}(p) \quad\text{as $\varepsilon\rightarrow0$},
\end{equation}
where the convergence result follows since $\psi$ integrates to one. With this in mind we see that
  \begin{equation}
  	\begin{aligned}
 	\MoveEqLeft  \big| \int f(t,x,p) \, d\mu_{0,\varepsilon}^2(x,p) -  \int f(t,x,p) \, d\mu_{0}^2(x,p) \big| 
  	\\
  	&\leq  \big| \int f(t,x_0,p)(| \hat{w}_\varepsilon(p)|^2 - |\hat{w}(p)|^2 ) \, dp  \big| \leq C \int |(\hat{\psi}(\varepsilon x)|^2-1) | \hat{w}(p)|^2\, dp. 
  	\end{aligned}
  \end{equation}
Since we have now constructed these two approximating sequences it follows form Theorem~\ref{Main_thm_gene} that the statement of Theorem~\ref{main_thm2} is indeed valid for states of the form 
\begin{equation}
\varphi_{1,\hbar}(x) = w(x)e^{i\hbar^{-1}\langle x,p_0\rangle} \qquad\text{and}\qquad \varphi_{2,\hbar}(x)=\hbar^{-d/2} w(\hbar^{-1} (x-x_0)),
\end{equation}
where we only assume $w\in L^2(\R^d)$.
\appendix
\section{Proofs of integral estimates}
\begin{proof}[Proof of Lemma~\ref{LE:int_posistion}]
Recall that we have $y\in\R^d$ given and want to establish the estimate
 \begin{equation}
 	\int_{\R^d}  \frac{1}{\langle x \rangle^{d+1} |x-y|} \, dx \leq \frac{C}{\langle y \rangle^{\frac{1}{2}}}.
 \end{equation}
We split into the two cases if $|y|>4$ or $|y|\leq4$. For the latter case we have that 
  \begin{equation}\label{EQ:int_posistion_1}
 	\int_{\R^d}  \frac{1}{\langle x \rangle^{d+1} |x-y|} \, dx \leq C. 
 \end{equation}
This estimate is true for a suitable choice of constant since the integral is finite for any value of $y$. Hence we are left with the case where $|y|>4$ for this case we split the integral in the following way
  \begin{equation}\label{EQ:int_posistion_2}
 	\int_{\R^d}  \frac{1}{\langle x \rangle^{d+1} |x-y|} \, dx = \int_{B(y,\sqrt{|y|})}  \frac{1}{\langle x \rangle^{d+1} |x-y|} \, dx  + \int_{ B(y,\sqrt{|y|})^c}  \frac{1}{\langle x \rangle^{d+1} |x-y|} \, dx.
 \end{equation}
 For the second integral in the right hand side of \eqref{EQ:int_posistion_2} we have that
\begin{equation}\label{EQ:int_posistion_3}
 	\int_{ B(y,\sqrt{|y|})^c}  \frac{1}{\langle x \rangle^{d+1} |x-y|} \, dx \leq \frac{1}{\sqrt{|y|}} \int_{ B(y,\sqrt{|y|})^c}  \frac{1}{\langle x \rangle^{d+1}} \,dx \leq \frac{C}{\sqrt{|y|}}.
 \end{equation}
 For the first integral of the left hand side of \eqref{EQ:int_posistion_2} we have by switching to spherical coordinates that
   \begin{equation}\label{EQ:int_posistion_4}
   	\begin{aligned}
 	 \int_{B(y,\sqrt{y})}  \frac{1}{\langle x \rangle^{d+1} |x-y|} \, dx&  =  \int_{B(0,\sqrt{y})}  \frac{1}{\langle x+y \rangle^{d+1} |x|} 
	 \\
	 &= \int_{0}^{\sqrt{y}} \int_{-1}^1 \frac{r^{d-2} (1-z^2)^{d-3}}{(1+r^2 +|y|^2 +2|y|rz )^{\frac{d+1}{2}} } \, dz dr.
	 \end{aligned}
 \end{equation}
 We observe that for any positive $r$ and $z\in[-1,1]$ we have that $r^2 +|y|^2 +2|y|rz\geq (|y|-r)^2$. Moreover, on the domain of integration and since $|y|>4$ we have that $|y|-r\geq \sqrt{|y|}$. Using this we obtain that
    \begin{equation}\label{EQ:int_posistion_5}
   	\begin{aligned}
 	 \int_{0}^{\sqrt{y}} \int_{-1}^1 \frac{r^{d-2} (1-z^2)^{d-3}}{(1+r^2 +|y|^2 +2|y|rz )^{\frac{d+1}{2}} } \, dz dr &\leq 2  \int_{0}^{\sqrt{y}} \frac{r^{d-2} }{(1+|y| )^{\frac{d+1}{2}} } \, dr 
	 \\
	 &\leq 2 \frac{|y|^{\frac{d}{2}} }{(1+|y| )^{\frac{d+1}{2}} } \leq  C \frac{1 }{(1+|y| )^{\frac{1}{2}} }
	 \end{aligned}
 \end{equation}
 From combining \eqref{EQ:int_posistion_1}, \eqref{EQ:int_posistion_2}, \eqref{EQ:int_posistion_3} and \eqref{EQ:int_posistion_5} we obtain the desired result.
\end{proof}
\begin{proof}[Proof of estimate \eqref{LE:resolvent_int_est_EQ_1}]
Recall that $\zeta\in(0,\frac{1}{2})$ and we want to prove
 \begin{equation}\label{Pf:resolvent_int_est_EQ_1}
 	\sup_{p\in\R^d}\int_\R  \frac{1}{\langle \nu \rangle |\frac{1}{2} p^2+\nu +i\zeta|} \, d\nu \leq C| \log(\zeta)|.
 \end{equation}
We observe that
\begin{equation}
	\begin{aligned}
	\int_\R  \frac{1}{\langle \nu \rangle |\frac{1}{2} p^2+\nu +i\zeta|} \, d\nu &= \int_{\R}  \frac{1}{\sqrt{1+ \nu^2} \sqrt{(\frac{1}{2} p^2+\nu)^2 +\zeta^2}} \, d\nu 
	\\
	&\leq 2\int_{0}^\infty  \frac{1}{\sqrt{1+ \nu^2} \sqrt{\nu^2 +\zeta^2}} \, d\nu 
	\\
	&\leq C \int_{0}^\infty  \frac{1}{(1+ \nu) (\nu +\zeta)} \, d\nu  
	\\
	&= 2C \int_0^\infty \frac{1}{\nu+\zeta}-\frac{1}{\nu+1} \,d\nu
	\\
	&= 2C [\log(\nu+\zeta) - \log(\nu+1)]_{\nu=0}^{\infty} = 2C |\log(\zeta)|. 
	\end{aligned}
\end{equation}	
This gives us the estimate in \eqref{Pf:resolvent_int_est_EQ_1}. 
\end{proof}
\begin{proof}[Proof of equation \eqref{LE:resolvent_int_est_EQ_2}]
Recall that  $\zeta\in(0,\frac{1}{2})$ and we want to prove the estimate
  \begin{equation}\label{Pf:resolvent_int_est_EQ_2.1}
 	\sup_{\nu\in\R}\int_{\R^d}  \frac{\langle \nu\rangle }{\langle p \rangle^{d+1} |\frac{1}{2} p^2+\nu +i\zeta|} \, dp \leq C| \log(\zeta)|.
 \end{equation}
We first do a change of variables to spherical coordinates this gives us
 \begin{equation}
 	\int_{\R^d}  \frac{\langle \nu\rangle }{\langle p \rangle^{d+1} |\frac{1}{2} p^2+\nu +i\zeta|} \, dp = C \int_0^\infty  \frac{\sqrt{ 1+\nu^2} r^{d-1} }{(1+r^2)^{\frac{d+1}{2}} \sqrt{(\frac{1}{2} r^2+\nu)^2 +\zeta^2}} \, dr,
 \end{equation}
where $C$ only depend on the dimension. Set $u=r^2$ by a change of variables we get
 \begin{equation}
 	\begin{aligned}
 	 \int_0^\infty  \frac{\sqrt{ 1+\nu^2} r^{d-1} }{(1+r^2)^{\frac{d+1}{2}} \sqrt{(\frac{1}{2} r^2+\nu)^2 +\zeta^2}} \, dr &= C  	 \int_0^\infty  \frac{\sqrt{ 1+\nu^2} u^{\frac{d-2}{2}} }{(1+u)^{\frac{d+1}{2}} \sqrt{(\frac{1}{2} u+\nu)^2 +\zeta^2}} \, du 
	 \\
	 &\leq C  	 \int_0^\infty  \frac{1+|\nu|  }{(1+u)^{\frac{3}{2}} \sqrt{(\frac{1}{2} u+\nu)^2 +\zeta^2}} \, du
	 \end{aligned}
 \end{equation}
Using the inequality $\sqrt{a^2+b^2}\geq \frac{1}{\sqrt{2}}(|a| +|b|)$ we get that
 \begin{equation}
 	\begin{aligned}
  	 \int_0^\infty  \frac{1+|\nu|  }{(1+u)^{\frac{3}{2}} \sqrt{(\frac{1}{2} u+\nu)^2 +\zeta^2}} \, du \leq C \int_0^\infty  \frac{1+|\nu|  }{(1+u)^{\frac{3}{2}}( |\frac{1}{2} u+\nu| +\zeta)} \, du
	 \end{aligned}
 \end{equation}
 We now divide into the two cases $\nu\geq0$ and $\nu<0$. For the first case we have that
  \begin{equation}
 	\begin{aligned}
  	 \int_0^\infty  \frac{1+|\nu|  }{(1+u)^{\frac{3}{2}}( |\frac{1}{2} u+\nu| +\zeta)} \, du &= \int_0^\infty  \frac{1+\nu  }{(1+u)^{\frac{3}{2}}( \frac{1}{2} u+\nu +\zeta)} \, du
	 \\
	 &=\int_0^\infty  \frac{1 }{(1+u)^{\frac{3}{2}}( \frac{1}{2} u+\nu +\zeta)} \, du + \int_0^\infty  \frac{\nu  }{(1+u)^{\frac{3}{2}}( \frac{1}{2} u+\nu +\zeta)} \, du
	 \\
	 &\leq \int_0^\infty  \frac{1 }{(1+u)( \frac{1}{2} u +\zeta)} \, du + \int_0^\infty  \frac{1  }{(1+u)^{\frac{3}{2}}} \, du
	 \\
	 &\leq C |\log(\zeta)|+C
	 \end{aligned}
 \end{equation}
 In the case $\nu<0$ we will assume $\nu>0$ and write $-\nu$ in the equation. That is we consider the integral
   \begin{equation}
 	\begin{aligned}
  	\int_0^\infty  \frac{1+\nu  }{(1+u)^{\frac{3}{2}}( |\frac{1}{2} u-\nu| +\zeta)} \, du
	 \end{aligned}
 \end{equation}
 First we consider the case where $\nu\in(0,1]$. For this case we have that
    \begin{equation}
 	\begin{aligned}
  	\MoveEqLeft \int_0^\infty  \frac{1+\nu  }{(1+u)^{\frac{3}{2}}( |\frac{1}{2} u-\nu| +\zeta)} \, du 
	\\
	&\leq \int_{2\nu}^\infty  \frac{2 }{(1+u)^{\frac{3}{2}}( \frac{1}{2} u-\nu +\zeta)} \, du +  \int_0^{2\nu}  \frac{2 }{(1+u)^{\frac{3}{2}}( \nu-\frac{1}{2} u +\zeta)} \, du
	\\
	&= \int_{0}^\infty  \frac{2 }{(1+u+2\nu)^{\frac{3}{2}}( \frac{1}{2} u +\zeta)} \, du +  \int_0^{2\nu}  \frac{2 }{(1+2\nu-u)^{\frac{3}{2}}( \frac{1}{2} u +\zeta)} \, du
	\\
	&\leq  \int_{0}^\infty  \frac{2 }{(1+u)( \frac{1}{2} u +\zeta)} \, du +  \int_0^{2}  \frac{2 }{ \frac{1}{2} u +\zeta} \, du \leq C |\log(\zeta)|+C
	 \end{aligned}
 \end{equation}
 Lastly we assume $\nu>1$. Here we do the change of variables $u\mapsto \nu^{-1}u$. This gives us
    \begin{equation}
 	\begin{aligned}
  	\int_0^\infty  \frac{1+\nu  }{(1+u)^{\frac{3}{2}}( |\frac{1}{2} u-\nu| +\zeta)} \, du = \frac{1+\nu}{\nu^{3/2}} \int_0^\infty  \frac{1  }{(\frac{1}{\nu}+u)^{\frac{3}{2}}( |\frac{1}{2} u-1| +\frac{\zeta}{\nu})} \, du 
	 \end{aligned}
 \end{equation}
 To estimate this integral we split the integral up in 4 parts and estimate each separately. We have that 
     \begin{equation}
 	\begin{aligned}
  	\MoveEqLeft  \frac{1+\nu}{\nu^{3/2}}  \int_0^1  \frac{1  }{(\frac{1}{\nu}+u)^{\frac{3}{2}}( |\frac{1}{2} u-1| +\frac{\zeta}{\nu})} \, du 
	\\
	&\leq \frac{1+\nu}{\nu^{3/2}} \frac{1}{\frac{1}{2}+\frac{\zeta}{\nu}} \int_{0}^1  \frac{1  }{(\frac{1}{\nu}+u)^{\frac{3}{2}}} \,du = \frac{4(1+\nu)}{(\nu+2\zeta)\sqrt{\nu}} \big[ \sqrt{\nu} -\frac{1}{\sqrt{\nu^{-1}+1}} \big] \leq \frac{8(1+\nu)}{\nu+2\zeta} \leq \frac{16}{1+2\zeta} 
	 \end{aligned}
 \end{equation}
      \begin{equation}
 	\begin{aligned}
  	\MoveEqLeft   \frac{1+\nu}{\nu^{3/2}} \int_1^2  \frac{1  }{(\frac{1}{\nu}+u)^{\frac{3}{2}}( |\frac{1}{2} u-1| +\frac{\zeta}{\nu})} \, du
	 \leq  \frac{1+\nu}{\nu^{3/2}}  \frac{1  }{(\frac{1}{\nu}+1)^{\frac{3}{2}}}  \int_1^2  \frac{1  }{ 1- \frac{1}{2} u +\frac{\zeta}{\nu}} \, du 
	 \\
	 &= \frac{2  }{(\nu+1)^{\frac{1}{2}}}  \big[ \log(\tfrac{1}{2}+\frac{\zeta}{\nu}) -\log(\frac{\zeta}{\nu}) \big] \leq C |\log(\zeta)|+C
	 \end{aligned}
 \end{equation}
       \begin{equation}
 	\begin{aligned}
  	 \MoveEqLeft \frac{1+\nu}{\nu^{3/2}} \int_2^3  \frac{1  }{(\frac{1}{\nu}+u)^{\frac{3}{2}}( |\frac{1}{2} u-1| +\frac{\zeta}{\nu})} \, du
	 \leq  \frac{1+\nu}{\nu^{3/2}} \frac{1  }{(\frac{1}{\nu}+2)^{\frac{3}{2}}}  \int_2^3  \frac{1  }{  \frac{1}{2} u -1+\frac{\zeta}{\nu}} \, du 
	 \\
	 &= \frac{2  }{(2\nu+1)^{\frac{1}{2}}}  \big[ \log(\tfrac{1}{2}+\frac{\zeta}{\nu}) -\log(\frac{\zeta}{\nu}) \big]   \leq C |\log(\zeta)|+C
	 \end{aligned}
 \end{equation}
        \begin{equation}
 	\begin{aligned}
  	 \MoveEqLeft  \frac{1+\nu}{\nu^{3/2}} \int_3^\infty  \frac{1  }{(\frac{1}{\nu}+u)^{\frac{3}{2}}( |\frac{1}{2} u-1| +\frac{\zeta}{\nu})} \, du
	 \\
	 & \leq \frac{1}{\frac{1}{2}+\frac{\zeta}{\nu}} \int_{3}^\infty  \frac{1  }{(\frac{1}{\nu}+u)^{\frac{3}{2}}} \,du = \frac{4(1+\nu)}{(\nu+2\zeta)\sqrt{\nu}}  \frac{1}{\sqrt{\nu^{-1}+3}} \leq \frac{4(1+\nu)}{(\nu+2\zeta)} \leq \frac{8}{(1+2\zeta)}
	 \end{aligned}
 \end{equation}
 Combining all of these estimates we get that \eqref{Pf:resolvent_int_est_EQ_2.1} is valid.
\end{proof}
\begin{proof}[Proof of estimate \eqref{LE:resolvent_int_est_EQ_3}]
 Again we have that  $\zeta\in(0,\frac{1}{2})$. Here we consider the integral
  \begin{equation}
	\sup_{q\in\R^d,\nu\in\R} \int_{\R^d}  \frac{1}{| p- q |}  \frac{\langle \nu\rangle}{ \langle p \rangle^{d+1}|\frac{1}{2} p^2+\nu+i\zeta|}
	\, dp 
\end{equation}
 We start by changing to spherical coordinates, where we measure the angular variable against the fixed vector $q$. This gives us the following expression ofr the integral.
   \begin{equation}
   \begin{aligned}
	\MoveEqLeft \int_{\R^d}  \frac{1}{| p- q |}  \frac{\langle \nu\rangle}{ \langle p \rangle^{d+1}|\frac{1}{2} p^2+\nu+i\zeta|}
	\, dp 
	\\
	&= c \int_0^\infty \int_{-1}^1  \frac{1}{\sqrt{r^2 - 2r|q|\theta +|q|^2}}  \frac{\langle \nu\rangle r^{d-1}}{ (1+r^2)^{\frac{d+1}{2}} |\frac{1}{2} r^2+\nu+i\zeta|} \,d\theta dr
	\\
	&=  c \int_0^\infty \frac{ \langle \nu\rangle r^{d-1}}{ (1+r^2)^{\frac{d+1}{2}} |\frac{1}{2} r^2+\nu+i\zeta|} \int_{-1}^1  \frac{1}{\sqrt{r^2 - 2r|q|\theta +|q|^2}}   \,d\theta dr
	\end{aligned}
\end{equation}
 Evaluating the $\theta$ integral yields
   \begin{equation}
   \begin{aligned}
	\int_{-1}^1  \frac{1}{\sqrt{r^2 - 2r|q|\theta +|q|^2}}   \,d\theta  &= -\frac{1}{r |q| } [\sqrt{r^2 - 2r|q|\theta +|q|^2}]_{\theta=-1}^{\theta=1} 
	= \frac{r + |q|-|r-|q|| }{r |q| }. 
	\end{aligned}
\end{equation}
 Using this we split the integral over $r$ into two parts according to if $r\geq|q|$ or $r<|q|$. This gives us that
    \begin{equation}
   \begin{aligned}
	\MoveEqLeft \int_{\R^d}  \frac{1}{| p- q |}  \frac{\langle \nu\rangle}{ \langle p \rangle^{d+1}|\frac{1}{2} p^2+\nu+i\zeta|}
	\, dp 
	\\
	&\leq c \int_0^{|q|} \frac{ \langle \nu\rangle r^{d-1}}{ |q| (1+r^2)^{\frac{d+1}{2}} |\frac{1}{2} r^2+\nu+i\zeta|} dr+  c \int_{|q|}^\infty \frac{\langle \nu\rangle r^{d-2}}{ (1+r^2)^{\frac{d+1}{2}} |\frac{1}{2} r^2+\nu+i\zeta|}dr
	\\
	\\
	&\leq  c \int_{0}^\infty \frac{\langle \nu\rangle r^{d-2}}{ (1+r^2)^{\frac{d+1}{2}} |\frac{1}{2} r^2+\nu+i\zeta|}dr \leq C|\log(\zeta)|,
	\end{aligned}
\end{equation}
 where the last inequality is obtained similar to the estimates obtained in the two previous proofs. This gives us the desired estimate.
\end{proof}
\begin{proof}[Proof of estimate \eqref{LE:resolvent_int_est_EQ_4}]
We now consider the fourth estimate of Lemma~\ref{LE:resolvent_int_est}. That is the estimate
  \begin{equation}\label{Pf:resolvent_int_est_EQ_4}
 	\sup_{q\in\R^d,\nu\in\R}\int_{\R^d}  \frac{1 }{\langle p-q \rangle^{d+1} |\frac{1}{2} p^2+\nu +i\zeta|} \, dp \leq C| \log(\zeta)|,
 \end{equation}
where again $\zeta\in(0,\frac{1}{2})$. For the case where $|q|=0$ we get the estimate from \eqref{LE:resolvent_int_est_EQ_2} so we may assume $|q|>0$. We will consider the three cases $\nu\geq0$ $\nu\in(0,-1]$ and $\nu<-1$. For $\nu\geq0$ we have that
   \begin{equation}
   \begin{aligned}
 	\MoveEqLeft \int_{\R^d}  \frac{1}{\langle p-q \rangle^{d+1} |\frac{1}{2} p^2+\nu +i\zeta|} \, dp 
	\\
	&\leq \int_{\{|p|\leq1\}}  \frac{1}{ |\frac{1}{2} p^2 +i\zeta|} \, dp +   \frac{1 }{\frac{1}{2} +\zeta}  \int_{\{|p|>1\}}  \frac{1 }{\langle p-q \rangle^{d+1}} \, dp
	\\
	&\leq  C \int_{0}^1  \frac{r^{\frac{d-2}{2}}}{ \frac{1}{2} r +\zeta} \, dr +   \frac{C }{\frac{1}{2} +\zeta} \leq C [\log(\tfrac{1}{2} + \zeta) - \log( \zeta)] +   \frac{C }{\frac{1}{2} +\zeta}   \leq C|\log( \zeta)| +C 
	\end{aligned}
 \end{equation}
 For the case when $\nu\in(0,-1]$ we will write $-\nu$  in he equation and think of $\nu$ being in $(0,1]$. We have here that 
   \begin{equation}
   \begin{aligned}
 	\MoveEqLeft \int_{\R^d}  \frac{1}{\langle p-q \rangle^{d+1} |\frac{1}{2} p^2-\nu +i\zeta|} \, dp 
	\\
	&\leq \int_{\{|p|\leq2\}}  \frac{1}{ |\frac{1}{2} p^2 -\nu +i\zeta|} \, dp +   \frac{1 }{1 +\zeta}  \int_{\{|p|>2\}}  \frac{1 }{\langle p-q \rangle^{d+1}} \, dp
	\\
	&\leq  C \int_{0}^{4}  \frac{r^{\frac{d-2}{2}}}{ |\frac{1}{2} r - \nu| +\zeta} \, dr +   C \leq  C \int_{0}^{2\nu}  \frac{1}{ \nu- \frac{1}{2} r  +\zeta} \, dr +  C \int_{2\nu}^{4}  \frac{1}{ \frac{1}{2} r - \nu +\zeta} \, dr +   C 
	\\
	&= C[\log(\nu+\zeta) - \log(\zeta) + \log(2-\nu+\zeta) - \log(\zeta) ] +C \leq C|\log( \zeta)| +C. 
	\end{aligned}
 \end{equation}
 What remains is the case where $\nu<-1$. So We are considering the integral
 \begin{equation}
   \begin{aligned}
 	\int_{\R^d}  \frac{1}{\langle p-q \rangle^{d+1} |\frac{1}{2} p^2-\nu +i\zeta|} \, dp,
	\end{aligned}
 \end{equation}
 where we again think of $\nu>1$. We will here consider the two cases where either $\nu\leq |q|^4+2$ or $\nu> |q|^4+2$. We start with the second case and we let $\tilde{\nu}$ be the number $\tilde{\nu}=\nu- |q|^4-2$. So when $\nu$ is large is it either because $\tilde{\nu}$ is large or $|q|$. We define the set  
 $A_\nu = \{p\in\R^d \, |\, \sqrt{2\nu-2} \leq |p| \leq \sqrt{2\nu+2} \}$. With this set we have that
     \begin{equation}
   \begin{aligned}
 	\MoveEqLeft \int_{\R^d}  \frac{1}{\langle p-q \rangle^{d+1} |\frac{1}{2} p^2-\nu +i\zeta|} \, dp
	\\
	&= \int_{A_\nu}  \frac{1}{\langle p-q \rangle^{d+1} |\frac{1}{2} p^2-\nu +i\zeta|} \, dp+\int_{A_\nu^{c}}  \frac{1}{\langle p-q \rangle^{d+1} |\frac{1}{2} p^2-\nu +i\zeta|} \, dp.
	\end{aligned}
 \end{equation}
 For the last integral we have that  
 \begin{equation}
   \begin{aligned}
 	 \int_{A_\nu^{c}}  \frac{1}{\langle p-q \rangle^{d+1} |\frac{1}{2} p^2-\nu +i\zeta|} \, dp \leq \frac{C}{1+\zeta}  \int_{A_\nu^{c}}  \frac{1}{\langle p-q \rangle^{d+1}} \leq \frac{C}{1+\zeta}.
	\end{aligned}
 \end{equation}
  For the second integral we note that for $p\in A_\nu$ we have that 
  \begin{equation}
  	|p-q| \geq \min_{p\in A_\nu}(|p|) -|q| = \sqrt{2\nu-2} - |q| = \sqrt{2\tilde{\nu}+2|q|^4+2} -|q|>0
  \end{equation}
  With this observation we get that
   \begin{equation}
   \begin{aligned}
 	\MoveEqLeft  \int_{A_\nu}  \frac{1}{\langle p-q \rangle^{d+1} |\frac{1}{2} p^2-\nu +i\zeta|} \, dp \leq \frac{1}{\langle  \sqrt{2\tilde{\nu}+2|q|^4+2} -|q| \rangle^{d+1}}  \int_{A_\nu}  \frac{1}{ |\frac{1}{2} p^2-\nu | +\zeta} \, dp
	 \\ 
	 &  =\frac{C}{\langle  \sqrt{2\tilde{\nu}+2|q|^4+2} -|q| \rangle^{d+1}}  \int_{\sqrt{2\nu-2}}^{\sqrt{2\nu+2}}  \frac{r^{d-1}}{ |\frac{1}{2} r^2-\nu | +\zeta} \, dr
	 \\ 
	 &  =\frac{C}{\langle  \sqrt{2\tilde{\nu}+2|q|^4+2} -|q| \rangle^{d+1}}  \int_{\nu-1}^{\nu+1}  \frac{r^{\frac{d-2}{2}}}{ | r-\nu | +\zeta} \, dr
	 \\
	 &\leq \frac{C (\tilde{\nu} +|q|^4 +2)^{\frac{d-2}{2}}}{\langle  \sqrt{2\tilde{\nu}+2|q|^4+2} -|q| \rangle^{d+1}} \Big[  \int_{\nu-1}^{\nu}  \frac{1}{ \nu -r +\zeta} \, dr +  \int_{\nu}^{\nu+1}  \frac{1}{  r-\nu  +\zeta} \, dr\Big] 
	 \\
	 &\leq C \Big[ \log(1+\zeta) -\log(\zeta) \Big]  \leq C|\log(\zeta)| +C
	\end{aligned}
 \end{equation}
  What remains is the case where $\nu\leq |q|^4+2$. Here we will further divide into the two cases $|q|\leq1$ and $|q|>1$. For the first case we divide the integral into one part over the ball of radius $5$ centred at zero and the rest. Then each of these integrals is estimated as above. In the case $|q|>1$ we start with the change of variables $p\mapsto p-q$. This gives us
   \begin{equation}
   \begin{aligned}
 	\MoveEqLeft \int_{\R^d}  \frac{1}{\langle p-q \rangle^{d+1} |\frac{1}{2} p^2-\nu +i\zeta|} \, dp = \int_{\R^d}  \frac{1}{\langle p \rangle^{d+1} |\frac{1}{2} (p+q)^2-\nu +i\zeta|} \, dp
	\\
	& = C  \int_{0}^\infty \int_{-1}^{1}  \frac{r^{d-1}}{\langle r \rangle^{d+1} |\frac{1}{2} r^2+ \frac{1}{2} |q|^2 + r|q|\theta  -\nu +i\zeta|} \, d\theta dr,
	\end{aligned}
 \end{equation}
  where we in the last equality have change to spherical coordinates, where the angular variables is measured against the fixed vector $q$. We now do the change of variables $r\mapsto r^2$  and $\theta\mapsto \sqrt{r}|q|\theta$  This gives us
     \begin{equation}
   \begin{aligned}
 	\MoveEqLeft    \int_{0}^\infty \int_{-1}^{1}  \frac{r^{d-1}}{\langle r \rangle^{d+1} |\frac{1}{2} r^2+ \frac{1}{2} |q|^2 + r|q|\theta  -\nu +i\zeta|} \, d\theta dr
	\\
	&= C  \int_{0}^\infty \int_{-\sqrt{r}|q|}^{\sqrt{r}|q|}  \frac{r^{\frac{d-3}{2}}}{|q|\langle \sqrt{r} \rangle^{d+1} |\frac{1}{2} r+ \frac{1}{2} |q|^2 + \theta  -\nu +i\zeta|} \, d\theta dr
	\\
	&\leq C  \int_{0}^\infty \int_{-\sqrt{r}|q|}^{\sqrt{r}|q|}  \frac{1}{|q|(1+r)^2 |\frac{1}{2} r+ \frac{1}{2} |q|^2 + \theta  -\nu +i\zeta|} \, d\theta dr
	\end{aligned}
 \end{equation}
  Next we do the change of variables $\theta \mapsto \frac{1}{2} r+ \frac{1}{2} |q|^2 + \theta  -\nu$. With this change of variables the bounds on our integral in $\theta$ becomes $\frac{1}{2} r+ \frac{1}{2} |q|^2 + \sqrt{r}|q|  -\nu$ and $\frac{1}{2} r+ \frac{1}{2} |q|^2 - \sqrt{r}|q|  -\nu$. Let $g(r) = \frac{1}{2} r+ \frac{1}{2} |q|^2 + \sqrt{r}|q|  + |q|^4+2$. Then we have that
  \begin{equation}
  	\frac{1}{2} r+ \frac{1}{2} |q|^2 + \sqrt{r}|q|  -\nu \leq g(r) \quad\text{and}\quad  \frac{1}{2} r+ \frac{1}{2} |q|^2 - \sqrt{r}|q|  -\nu > - g(r).
  \end{equation}
  Hence we get that
       \begin{equation}
   \begin{aligned}
 	\MoveEqLeft     \int_{0}^\infty \int_{-\sqrt{r}|q|}^{\sqrt{r}|q|}  \frac{1}{|q|(1+r)^2 |\frac{1}{2} r+ \frac{1}{2} |q|^2 + \theta  -\nu +i\zeta|} \, d\theta dr 
	\\
	&\leq  C \int_{0}^\infty  \frac{1}{|q|(1+r)^2}  \int_{-g(r)}^{g(r)}  \frac{1}{ | \theta| + \zeta} \, d\theta dr =  C \int_{0}^\infty  \frac{1}{|q|(1+r)^2}  \int_{0}^{g(r)}  \frac{1}{ \theta + \zeta} \, d\theta dr  
	\\
	& = C \int_{0}^\infty  \frac{1}{|q|(1+r)^2}  \big[ \log(g(r)+\zeta) - \log(\zeta) \big] dr 
	\\
	&\leq C|\log(\zeta)| +C \frac{\log(g(r)+\zeta)}{|q| \sqrt{(1+r)}}  \int_{0}^\infty  \frac{1}{(1+r)^{\frac{3}{2}}}  dr  \leq C|\log(\zeta)| +C,
	\end{aligned}
 \end{equation}
  where we have used the definition of $g(r)$ to conclude the the fraction is bounded recalling that $|q|>1$. Combing the estimates from above we get the desired conclusion. This ends the proof.
  \end{proof}
\begin{proof}[Proof of Lemma~\ref{LE:est_res_combined}]
Recall that $\zeta\in(0,1/2)$ and we want to prove the following estimate
 \begin{equation}\label{Pf:est_res_combined_eq_1}
 	\sup_{\nu,\tilde{\nu}\in\R}\int_{\R^d}   \frac{\langle \nu\rangle \langle \tilde{\nu} \rangle }{\langle p \rangle^{d+1} \langle p-q \rangle^{d+1} |\frac{1}{2} p^2+\nu +i\zeta|  |\frac{1}{2} (p-q)^2+\tilde{\nu} +i\zeta|} \, dp \leq \frac{ C\log(\zeta)^2}{|q| + \zeta},
 \end{equation}
First we assume that $|q|>0$. Then we do a change of variables to spherical coordinates, where the angular momentum is measure against the fixed vector $q$. This gives us
 \begin{equation}
 	\begin{aligned}
 	\MoveEqLeft \int_{\R^d}   \frac{\langle \nu\rangle \langle \tilde{\nu} \rangle }{\langle p \rangle^{d+1} \langle p-q \rangle^{d+1} |\frac{1}{2} p^2+\nu +i\zeta|  |\frac{1}{2} (p-q)^2+\tilde{\nu} +i\zeta|} \, dp
	\\
	={}& \int_{0}^\infty \int_{-1}^1  \frac{r^{d-1}\langle \nu\rangle \langle \tilde{\nu} \rangle }{\langle r \rangle^{d+1} \langle \sqrt{r^2 + |q|^2 - 2r|q|\theta} \rangle^{d+1} |\frac{1}{2} r^2+\nu +i\zeta|  |\frac{1}{2} (r^2 + |q|^2 - 2r|q|\theta)+\tilde{\nu} +i\zeta|} \, d\theta dr
	\\
	\leq{}& \frac{1}{|q|} \int_{0}^\infty \frac{r^{d-2}\langle \nu\rangle }{\langle r \rangle^{d+1}  |\frac{1}{2} r^2+\nu +i\zeta| dr  }  \int_{\R}  \frac{ \langle \tilde{\nu} \rangle }{ \langle \sqrt{\theta} \rangle^{d+1}|\theta+\tilde{\nu} +i\zeta|} \, d\theta,
	\end{aligned}
 \end{equation}
where we in the last inequality first have extended the integration domain to the whole line for $\theta$ and then done the change of variables $\theta\mapsto \frac{1}{2}r^2 + \frac{1}{2}|q|^2 - r|q|\theta$. Applying Lemma~\ref{LE:resolvent_int_est} we get that
 \begin{equation}
 	\begin{aligned}
 	\int_{\R^d}   \frac{\langle \nu\rangle \langle \tilde{\nu} \rangle }{\langle p \rangle^{d+1} \langle p-q \rangle^{d+1} |\frac{1}{2} p^2+\nu +i\zeta|  |\frac{1}{2} (p-q)^2+\tilde{\nu} +i\zeta|} \, dp \leq \frac{C\log(\zeta)^2}{|q|}.
	\end{aligned}
 \end{equation}
For any $q\in\R^d$ we have that
 \begin{equation}
 	\begin{aligned}
 	\MoveEqLeft \int_{\R^d}   \frac{\langle \nu\rangle \langle \tilde{\nu} \rangle }{\langle p \rangle^{d+1} \langle p-q \rangle^{d+1} |\frac{1}{2} p^2+\nu +i\zeta|  |\frac{1}{2} (p-q)^2+\tilde{\nu} +i\zeta|} \, dp 
	\\
	&\leq \frac{C}{\zeta}  \int_{\R^d}   \frac{\langle \nu\rangle  }{\langle p \rangle^{d+1} |\frac{1}{2} p^2+\nu +i\zeta|  } \, dp  \leq  \frac{C |\log(\zeta)| }{\zeta},
	\end{aligned}
 \end{equation}
where we again have used Lemma~\ref{LE:resolvent_int_est}. Combining the two estimates we obtain the estimate in \eqref{Pf:est_res_combined_eq_1}. 
\end{proof}

 \bibliographystyle{plain}
\bibliography{Bib.bib}

\end{document}